\newcommand{\Z}{\ensuremath{\mathbb{Z}}}
\newcommand{\ra}{\ensuremath{\rightarrow}}
\newcommand{\tw}{\ensuremath{\mathbf{tw}}}
\newcommand{\pw}{\ensuremath{\mathbf{pw}}}
\newcommand{\eps}{\varepsilon}
\newcommand{\indeg}{{\rm indeg}}
\newcommand{\outdeg}{{\rm outdeg}}
\newcommand{\insubs}{{\rm insubs}}
\newcommand{\outsubs}{{\rm outsubs}}
\newcommand{\subs}{{\rm subs}}
\newcommand{\cut}{{\rm cut}}
\newcommand{\bag}[1]{B_{#1}}
\newcommand{\subbags}[1]{V_{#1}}
\newcommand{\subedges}[1]{E_{#1}}
\newcommand{\subarcs}[1]{E_{#1}}
\newcommand{\weight}{\omega}
\newcommand{\anchor}{v_1}
\newcommand{\rootv}{r}
\newcommand{\targetW}{W}
\newcommand{\targetk}{k}
\newcommand{\iterW}{w}
\newcommand{\iterk}{i}
\newcommand{\maxw}{N}
\newcommand{\treedecomp}{\mathbb{T}}
\newcommand{\univ}{U}
\newcommand{\sols}{\mathcal{S}}
\newcommand{\cand}{\mathcal{R}}
\newcommand{\objs}{\mathcal{C}}
\newcommand{\conncomp}{\mathtt{cc}}
\newcommand{\countproc}{\mathtt{CountC}}
\newcommand{\compnomark}{\mathtt{cc}}
\newcommand{\mustset}{S}
\renewcommand{\subset}{\subseteq}
\newcommand{\defproblemu}[3]{
  \vspace{1mm}
\noindent\fbox{
  \begin{minipage}{\textwidth}
  #1 \\
  {\bf{Input:}} #2  \\
  {\bf{Question:}} #3
  \end{minipage}
  }
  \vspace{1mm}
}
\newcommand{\defparproblemu}[4]{
  \vspace{1mm}
\noindent\fbox{
  \begin{minipage}{\textwidth}
  \begin{tabular*}{\textwidth}{@{\extracolsep{\fill}}lr} #1 & {\bf{Parameter:}} #3 \\ \end{tabular*}
  {\bf{Input:}} #2  \\
  {\bf{Question:}} #4
  \end{minipage}
  }
  \vspace{1mm}
}
\newcommand{\zero}{{\mathbf{0}}}
\newcommand{\zeron}{{\mathbf{0}_N}}
\newcommand{\zeroy}{{\mathbf{0}_Y}}
\newcommand{\zerol}{{\mathbf{0}_L}}
\newcommand{\zeror}{{\mathbf{0}_R}}
\newcommand{\zerozero}{{\mathbf{0}_0}}
\newcommand{\zeroone}{{\mathbf{0}_1}}
\newcommand{\zerotwo}{{\mathbf{0}_2}}
\newcommand{\zeroj}{{\mathbf{0}_j}}
\newcommand{\one}{{\mathbf{1}}}
\newcommand{\two}{{\mathbf{2}}}
\newcommand{\oneone}{{\mathbf{1}_1}}
\newcommand{\onetwo}{{\mathbf{1}_2}}
\newcommand{\onej}{{\mathbf{1}_j}}
\newcommand{\zz}{{\mathbf{00}}}
\newcommand{\zzone}{{\mathbf{00}_1}}
\newcommand{\zztwo}{{\mathbf{00}_2}}
\newcommand{\zoone}{{\mathbf{01}_1}}
\newcommand{\zotwo}{{\mathbf{01}_2}}
\newcommand{\zoj}{{\mathbf{01}_j}}
\newcommand{\ozone}{{\mathbf{10}_1}}
\newcommand{\oztwo}{{\mathbf{10}_2}}
\newcommand{\ozj}{{\mathbf{10}_j}}
\newcommand{\oo}{{\mathbf{11}}}
\newcommand{\ooone}{{\mathbf{11}_1}}
\newcommand{\ootwo}{{\mathbf{11}_2}}
\newcommand{\kkhittingsetname}{$k \times k$ {\sc{Permutation Hitting Set}}\xspace}
\newcommand{\kknphittingsetname}{$k \times k$ {\sc{Hitting Set}}\xspace}
\newcommand{\vertexcover}{{\sc Vertex Cover}\xspace}
\newcommand{\indset}{{\sc Independent Set}\xspace}
\newcommand{\domset}{{\sc Dominating Set}\xspace}
\newcommand{\cutwidth}{{\sc Cutwidth}\xspace}
\newcommand{\minmaxmatching}{{\sc Minimum Maximal Matching}\xspace}
\newcommand{\hampath}{{\sc Hamiltonian Path}\xspace}
\newcommand{\hamcycle}{{\sc Hamiltonian Cycle}\xspace}
\newcommand{\longestpath}{{\sc Longest Path}\xspace}
\newcommand{\longestcycle}{{\sc Longest Cycle}\xspace}
\newcommand{\gmtsp}{{\sc Graph Metric Travelling Salesman Problem}\xspace}
\newcommand{\tsp}{{\sc Travelling Salesman Problem}\xspace}
\newcommand{\cvertexcover}{{\sc Connected Vertex Cover}\xspace}
\newcommand{\constrainedcvertexcover}{{\sc Constrained Connected Vertex Cover}\xspace}
\newcommand{\cdomset}{{\sc Connected Dominating Set}\xspace}
\newcommand{\constrainedcdomset}{{\sc Constrained Connected Dominating Set}\xspace}
\newcommand{\steinertree}{{\sc Steiner Tree}\xspace}
\newcommand{\fvs}{{\sc Feedback Vertex Set}\xspace}
\newcommand{\constrainedfvs}{{\sc Constrained Feedback Vertex Set}\xspace}
\newcommand{\cfvs}{{\sc Connected Feedback Vertex Set}\xspace}
\newcommand{\constrainedcfvs}{{\sc Constrained Connected Feedback Vertex Set}\xspace}
\newcommand{\oct}{{\sc Odd Cycle Transversal}\xspace}
\newcommand{\coct}{{\sc Connected Odd Cycle Transversal}\xspace}
\newcommand{\constrainedcoct}{{\sc Constrained Connected Odd Cycle Transversal}\xspace}
\newcommand{\maxccdsname}{{\sc{Maximally Disconnected Dominating Set}}\xspace}
\newcommand{\mincyclecovername}{{\sc{Min Cycle Cover}}\xspace}
\newcommand{\maxcyclecovername}{{\sc{Max Cycle Cover}}\xspace}
\newcommand{\cyclepackingname}{{\sc{Cycle Packing}}\xspace}
\newcommand{\partialcycles}{{\sc{Partial Cycle Cover}}\xspace}
\newcommand{\dirpartialcycles}{{\sc{Directed Partial Cycle Cover}}\xspace}
\newcommand{\minleaf}{{\sc{Minimum Leaf Tree}}\xspace}
\newcommand{\exactleaf}{{\sc{Exact $k$-leaf Spanning Tree}}\xspace}
\newcommand{\maxleaf}{{\sc{Maximum Leaf Tree}}\xspace}
\newcommand{\maxoutbranching}{{\sc{Maximum Leaf Outbranching}}\xspace}
\newcommand{\minoutbranching}{{\sc{Minimum Leaf Outbranching}}\xspace}
\newcommand{\exactoutbranching}{{\sc{Exact $k$-Leaf Outbranching}}\xspace}
\newcommand{\maxspantree}{{\sc{Maximum Full Degree Spanning Tree}}\xspace}
\newcommand{\exactspantree}{{\sc{Exact Full Degree Spanning Tree}}\xspace}
\newcommand{\sat}{{\sc{SAT}}\xspace}
\newtheorem{theorem}{Theorem}[section]
\newtheorem{lemma}[theorem]{Lemma}
\newtheorem{definition}[theorem]{Definition}
\newtheorem{corollary}[theorem]{Corollary}
\newtheorem{remark}[theorem]{Remark}
\newtheorem{proposition}[theorem]{Proposition}
\begin{document}

  \date{}

  \author{Marek Cygan\thanks{Institute of Informatics, University of Warsaw, Poland, \texttt{cygan@mimuw.edu.pl}}
					\and
					Jesper Nederlof\thanks{Department of Informatics, University of Bergen, Norway, \texttt{Jesper.Nederlof@ii.uib.no}}
					\and
		      Marcin Pilipczuk\thanks{Institute of Informatics, University of Warsaw, Poland, \texttt{malcin@mimuw.edu.pl}}
		      \and
		      Micha\l{} Pilipczuk\thanks{Faculty of Mathematics, Informatics and Mechanics, University of Warsaw, Poland, \texttt{michal.pilipczuk@students.mimuw.edu.pl}}
		      \and
					Johan van Rooij\thanks{Department of Information and Computing Sciences, Utrecht University, The Netherlands, \texttt{jmmrooij@cs.uu.nl}}
					\and
		      Jakub Onufry Wojtaszczyk\thanks{Google Inc., Cracow, Poland, \texttt{onufry@google.com}}
  }

  \title{Solving connectivity problems parameterized by treewidth in single exponential time}

\maketitle
\begin{abstract}
	For the vast majority of local graph problems standard dynamic programming
	techniques give $c^{\tw} |V|^{O(1)}$ algorithms, where $\tw$ is
	the treewidth of the input graph.
	On the other hand, for problems with a global requirement (usually
	connectivity) the best--known algorithms were naive dynamic programming
	schemes running in $\tw^{O(\tw)} |V|^{O(1)}$ time.
	
	We breach this gap by introducing a technique we dubbed Cut\&Count 
	that allows to produce
	$c^{\tw} |V|^{O(1)}$ Monte Carlo algorithms for most connectivity-type
	problems, including \hampath, \fvs and \cdomset, 
  consequently answering the question raised by Lokshtanov, Marx and Saurabh [SODA'11]
  in a surprising way.
	We also show that (under reasonable complexity assumptions) the gap
	cannot be breached for some problems for which Cut\&Count does not
	work, like \cyclepackingname{}.

	The constant $c$ we obtain is in all cases small (at most $4$ for
	undirected problems and at most $6$ for directed ones), and in several
	cases we are able to show that improving those constants
  would cause the Strong Exponential Time Hypothesis to fail.

	Our results have numerous consequences in various fields, like FPT
	algorithms, exact and approximate algorithms on planar and $H$-minor-free
	graphs and algorithms on graphs of bounded degree.
	In all these fields we are able to improve the best-known results for some
	problems.
\end{abstract}

\section{Introduction and notation}

The notion of {\em treewidth}, introduced in 1984 by Robertson and Seymour
\cite{rs:minors3},
has in many cases proved to be a good measure of the intrinsic difficulty
of various NP-hard problems on graphs, and a useful tool for attacking those
problems. 
Many of them can be efficiently solved through dynamic programming
if we assume the input graph to have bounded treewidth.
For example, an expository algorithm to solve \vertexcover and
\indset running in time $4^{\tw(G)}|V|^{O(1)}$ is described in the
algorithms textbook by Kleinberg and Tardos \cite{kleinberg-tardos}, while
the book of Niedermeier \cite{niedermeier:book} on fixed-parameter
algorithms presents an algorithm with running time $2^{\tw(G)}|V|^{O(1)}$.

The interest in algorithms for graphs of bounded treewidth stems from their
utility: 
such algorithms are used as sub-routines in a variety of settings.
Amongst them prominent are approximation algorithms 
\cite{bidimensionality, eppstein} and parametrized algorithms 
\cite{subexponential-bound-genus} for a vast number of
problems on planar, bounded-genus and $H$-minor-free graphs, 
including \vertexcover, \domset and \indset;
there are applications for parametrized algorithms in general graphs
\cite{enumerate-and-expand, cutwidth-linear} for problems like \cvertexcover
and \cutwidth;
and exact algorithms \cite{Fomin06ontwo, rooij:inclusion/exclusion} such as
\minmaxmatching and \domset.

In many cases, where the problem to be solved is ``local'' (loosely speaking
this means that the property of the object to be found can be verified by
checking separately the neighbourhood of each vertex) matching upper and
lower bounds for the runtime of the optimal solution are known. 
For instance
for the aforementioned $2^{\tw(G)}|V|^{O(1)}$ algorithm for \vertexcover
there is a matching lower bound --- unless the Strong Exponential Time
Hypothesis fails, there is no algorithm for \vertexcover running quicker
than $(2-\eps)^{\tw(G)}$ for any $\eps > 0$.

On the other hand, when the problem involves some sort of a ``global''
constraint --- e.g., connectivity --- the best known algorithms usually 
have a runtime on the order of $\tw(G)^{O(\tw(G))} |V|^{O(1)}$.
In these cases the typical dynamic program has to keep track of all the
ways in which the solution can traverse the corresponding separator of the
tree decomposition, that is $\Omega(l^l)$ on the size $l$ of the separator,
and therefore of treewidth.
This obviously implies weaker results in the applications mentioned above.
This problem was observed, for instance, by F. Dorn, F. Fomin and D.
Thilikos \cite{subexponential-bound-genus, Catalan} and by H. Bodlaender
et al. in \cite{sphere-cut}.
The question whether the known $2^{O(\tw(G) \log \tw(G))} |V|^{O(1)}$
parametrized algorithms for \hampath, \cvertexcover and
\cdomset are optimal was asked by D. Lokshtanov, D. Marx and
S. Saurabh \cite{marx:superexp}.


To explain why the $2^{O(\tw(G) \log \tw(G))}$ dynamic programming 
algorithms for connectivity problems were thought to be optimal we recall
the concept of a Myhill-Nerode style equivalence class.
In this case two partial solutions of a subtree of the tree decomposition
are said to be equivalent if they are consistent with the same set of
partial solutions on the remainder of the tree decomposition 
\cite{hopcroft-ullman}.
The states of the naive dynamic program reflect Myhill-Nerode style
equivalence classes, and it seemed to be necessary for any algorithm to 
memoize the information about each class, as it could be needed during 
further computation (see for example the recent work by Lokshtanov et al. 
\cite{treewidth-lower, marx:superexp}).
From this point of view the results of this paper come as a significant
surprise.

\subsection{Our results} \label{ssec:results}
In this paper we introduce a technique we dubbed ``Cut\&Count'' that allows
us to deal with connectivity-type problems through randomization.
For most problems involving a global constraint our technique gives a
randomized algorithm with runtime $c^{\tw(G)} |V|^{O(1)}$.
In particular we are able to give such algorithms for the three problems
mentioned in \cite{marx:superexp}, as well as for all the other sample 
problems mentioned in \cite{Catalan}: \longestpath, \longestcycle, \fvs, 
\hamcycle and \gmtsp.
Moreover, both the constant $c$ and the exponent in $|V|^{O(1)}$ is in all 
cases well defined and small.

The randomization we mention comes from the usage of the Isolation
Lemma \cite{isolation}.
This gives us Monte Carlo algorithms with a one-sided error. The formal
statement of a typical result is as follows:
\begin{theorem}
There exists a randomized algorithm, which given a graph $G$ with
$n$ vertices, a tree decomposition of $G$ of width $t$ and a number $k$
in $3^t n^{O(1)}$ time either states that there exists a connected
vertex cover of size at most $k$ in $G$, or that it could not verify this
hypothesis. If there indeed exists such a cover, the algorithm will return
``unable to verify'' with probability at most $1\slash 2$.
\end{theorem}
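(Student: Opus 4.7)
The plan is to prove this theorem by instantiating the Cut\&Count technique on the \cvertexcover problem. The idea is to express the existence of a connected vertex cover of size at most $k$ as the parity (modulo $2$) of an auxiliary count that can be computed by a dynamic program with only $3$ states per bag vertex, and to use the Isolation Lemma to prevent that parity from collapsing to zero.

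First I would set up the Cut part. Guess a vertex $v^\star$ that must lie in the solution (iterating over the $n$ choices). Call a pair $(X,(X_L,X_R))$ \emph{consistent} if $X \subseteq V$ is a vertex cover of size exactly $k'$ for some $k' \le k$, $(X_L,X_R)$ is a partition of $X$ with $v^\star \in X_L$, and there is no edge of $G[X]$ between $X_L$ and $X_R$. For a fixed candidate $X$ containing $v^\star$, consistent cuts correspond exactly to 2-colorings of the connected components of $G[X]$ in which the component of $v^\star$ is colored $L$; hence their number is $2^{c-1}$, where $c$ is the number of components of $G[X]$. This is odd if and only if $c=1$, i.e.\ $X$ is a connected vertex cover. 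So the parity of the total number of consistent pairs equals the parity of the number of connected vertex covers containing $v^\star$ of the required size.

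Next comes the Count part, which has to survive mod-$2$ cancellation. I would apply the Isolation Lemma by drawing a weight $\omega(v)\in\{1,\dots,N\}$ uniformly at random for each vertex, with $N = 2n$. With probability at least $1/2$, among all connected vertex covers of size at most $k$ containing $v^\star$, there is a unique one of minimum weight. Summing the count of consistent pairs separately for each target size $k' \in \{0,\dots,k\}$ and each target total weight $W \in \{0,\dots,nN\}$, if a connected vertex cover exists then for at least one choice of $(k',W,v^\star)$ the parity is odd; conversely, if the parity is odd for some $(k',W,v^\star)$ then a connected vertex cover of size at most $k$ exists.

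The core step is the dynamic program over a nice tree decomposition of width $t$ that computes, for each bag, the number of consistent partial pairs modulo $2$. For each bag $B$, each vertex $v \in B$ has three possible states: $v \notin X$, $v \in X_L$, or $v \in X_R$; this gives $3^{|B|} \le 3^{t+1}$ states. For each state we maintain a table indexed additionally by current $|X|$ (at most $k+1$ values) and current accumulated weight (at most $nN+1$ values), storing the count of partial consistent pairs modulo $2$. At an introduce-vertex node we add the vertex in one of the three states, rejecting states that would place an edge between $X_L$ and $X_R$ or leave an edge uncovered among bag vertices. At a forget-vertex node we sum the counts modulo $2$ over the three possible states of the forgotten vertex, but reject the ``forget'' of a vertex that has uncovered incident edges within the forgotten subgraph (this is enforced locally by the introduce-edge step if a nice decomposition with introduce-edge nodes is used). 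At a join node we convolve the two children tables over size and weight modulo $2$, keeping states equal on the common bag. Finally at the root we read off the parity of pairs for each $(k',W)$; we enforce $v^\star \in X_L$ by seeding $v^\star$'s state when it is introduced.

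The main obstacle is ensuring correctness of the parity-based DP: one must be careful that the forget operation really sums over the three states (and not over something that double-counts cuts), and that the introduce-edge operation cleanly enforces both the ``vertex cover'' and the ``no $X_L$-$X_R$ edge'' constraints locally, since the Cut\&Count identity relies on the exact combinatorial interpretation of the counted objects. Once this is in place, each DP transition runs in time $O(k \cdot nN) \cdot 3^{O(1)}$ per state, the total work is $3^t \cdot k \cdot nN \cdot n^{O(1)} = 3^t n^{O(1)}$, and combining the one-sided error of the Isolation Lemma with the exact parity computation yields the claimed Monte Carlo algorithm with error probability at most $1/2$.
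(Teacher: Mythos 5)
Your proposal is correct and follows essentially the same Cut\&Count argument that the paper uses for Connected Vertex Cover: the same anchor vertex trick, the same $2^{\mathrm{cc}-1}$ cancellation lemma, the same Isolation Lemma slicing by accumulated weight, and the same three-state dynamic program over the tree decomposition. The only superficial differences (iterating over all $n$ anchors rather than the two endpoints of a single edge, and ranging the size over $k' \le k$ rather than fixing it) do not affect correctness or the claimed $3^t n^{O(1)}$ running time.
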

We shall denote such an algorithm, which either confirms the
existence of the object we are asking about or returns 
``unable to verify'', and returns ``unable to verify'' with probability no 
larger than $1\slash 2$ if the answer is positive, an algorithm with 
{\em false negatives}.

We see similar results for a plethora of other global problems.
As the exact value of $c$ in the $c^{\tw(G)}$ expression is often important,
we gather here the results we obtain:
\begin{theorem}\label{main}
There exist Monte-Carlo algorithms that given a tree decomposition of the
(underlying undirected graph of the) input graph of width $t$ solve the 
following problems:
\begin{enumerate}
\item \steinertree in $3^t |V|^{O(1)}$ time.
\item \fvs in $3^t |V|^{O(1)}$ time.
\item \cvertexcover in $3^t |V|^{O(1)}$ time.
\item \cdomset in $4^t |V|^{O(1)}$ time.
\item \cfvs in $4^t |V|^{O(1)}$ time.
\item \coct in $4^t |V|^{O(1)}$ time.
\item \mincyclecovername in $6^t |V|^{O(1)}$ time for directed graphs
and in $4^t |V|^{O(1)}$ time for undirected graphs.
\item Directed \longestpath in $6^t |V|^{O(1)}$ time and undirected
\longestpath in $4^t |V|^{O(1)}$ time. This, in particular, gives the same
times for solving \hampath for the directed and undirected cases,
respectively.
\item Directed \longestcycle in $6^t |V|^{O(1)}$ time and undirected
\longestcycle in $4^t |V|^{O(1)}$ time. This, in particular, gives the same
times for solving \hamcycle for the directed and undirected cases,
respecitvely.
\item \exactleaf and, in particular, \minleaf and \maxleaf in
$4^t |V|^{O(1)}$ time for undirected graphs.
\item \exactoutbranching and, in particular \minoutbranching and 
\maxoutbranching in $6^t |V|^{O(1)}$ time for directed graphs.
\item \maxspantree in $4^t |V|^{O(1)}$ time.
\item \gmtsp in $4^t |V|^{O(1)}$ time for undirected graphs.
\end{enumerate}
The algorithms cannot give false positives and may give false negatives with probability at most $1/2$.
\end{theorem}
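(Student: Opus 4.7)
The plan is to build one unified ``Cut\&Count'' framework and then, for each of the thirteen problems, instantiate it by specifying what plays the role of candidate solution, cut, and weight. The core observation is that the hard part of any connectivity problem on a tree decomposition is tracking, at each separator bag, how the partial solution partitions into connected pieces, and it is precisely the Bell number of partitions of a bag that produces the $\tw(G)^{O(\tw(G))}$ blow-up in the naive dynamic program. My first step is to avoid this by counting, instead of solutions $X$ themselves, pairs $(X,(X_L,X_R))$ where $(X_L,X_R)$ is a \emph{consistent cut} of $X$: an ordered partition of $X$ into two parts with no edge of $G[X]$ crossing, and with some designated anchor vertex forced into $X_L$. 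If $X$ has exactly $k$ connected components then the number of such cuts is $2^{k-1}$, which is odd exactly when $X$ is connected. Hence, working modulo $2$, the parity of the number of pairs equals the parity of the number of connected candidate solutions. This kills the disconnected candidates ``for free'' and replaces a partition state by a far cheaper ``left/right/outside'' state on the bag.

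The second ingredient is the Isolation Lemma of Mulmuley, Vazirani and Vazirani. I would draw a uniformly random weight $\omega(v)\in\{1,\ldots,2|V|\}$ per vertex (or per edge or arc, depending on the problem). With probability at least $1/2$, among all connected candidate solutions of any fixed size at most the target $k$, the minimum-$\omega$ one is unique. Grouping the pairs $(X,(X_L,X_R))$ by the weight $\omega(X)$ and by any other relevant statistic (size, number of leaves, number of arcs), the parity count at each weight $w$ is, modulo $2$, the indicator of the existence of this isolated connected solution. Scanning $w$ from $1$ up to the trivial upper bound $2|V|\cdot|V|$ and returning ``yes'' at any nonzero parity gives a decision procedure with no false positives and failure probability at most $1/2$ on ``yes'' instances.

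The third and most technically demanding step is the dynamic program that actually computes these parities modulo $2$ on a nice tree decomposition. For each bag I would maintain a table indexed by a function assigning to each bag vertex a marker from a small alphabet: for \steinertree, \fvs or \cvertexcover the three markers $\{\text{outside},\ \text{in } X_L,\ \text{in } X_R\}$ suffice, giving the $3^{\tw(G)}|V|^{O(1)}$ bound; for \cdomset, \cfvs or \coct one has to enrich markers with domination status, a parity flag, or ``already-covered'' information, which pushes the base to $4$; for the directed problems such as directed \longestpath or \mincyclecovername on digraphs the markers must further distinguish the type of arc endpoint used at each bag vertex, producing base $6$. Join, introduce-vertex, introduce-edge and forget updates are all local, and the ``no edge crosses $(X_L,X_R)$'' condition as well as the problem-specific constraints (acyclicity for \fvs recast as $|V(F)|-|E(F)|=\#\text{components}$, bipartiteness for \coct via a $2$-coloring marker, degree prescriptions for leaf/outbranching problems) can be enforced at the moment the relevant object is introduced.

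The main obstacle I expect lies entirely in step three, replicated thirteen times: for each individual problem I must design the right notion of candidate solution and the right set of markers so that (a) odd parity of consistent cuts is equivalent to the global connectivity/acyclicity requirement, (b) the remaining local constraints are expressible by a constant-size automaton on the bag, and (c) the alphabet size stays at $3$, $4$ or $6$ as claimed. The common scheme -- define candidates, define consistent cuts, prove the parity-equals-connectivity lemma, run a marker-indexed DP, and wrap with the Isolation Lemma -- is the same in every case; only the bookkeeping differs, and it is this case-by-case bookkeeping, rather than any new conceptual idea beyond Cut\&Count itself, that constitutes the bulk of the proof.
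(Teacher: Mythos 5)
Your proposal correctly captures the skeleton of Cut\&Count exactly as the paper uses it: relaxing to candidate solutions, pairing each candidate $X$ with consistent cuts $(X_L,X_R)$ of $G[X]$ with an anchor forced to the left, observing that there are $2^{\conncomp(G[X])-1}$ such cuts (odd iff $G[X]$ is connected), using the Isolation Lemma to get a well-defined parity at some weight level, and running a $3$/$4$/$6$-ary dynamic program over a nice tree decomposition. That much is the same approach.

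There are, however, two concrete gaps that prevent the proposal from actually yielding all thirteen algorithms. The first is the \emph{marker technique}. Several problems in the list — \fvs, \cfvs, \mincyclecovername (both variants), \longestpath and \longestcycle via the partial-cycle-cover reformulation — do not ask for a single connected component but for a \emph{bounded number} of components. Counting cuts with a single anchor gives $2^{\conncomp(G[X])-1}$, which is odd precisely when $\conncomp(G[X])=1$, so the parity argument as you have stated it only certifies $\conncomp=1$, not $\conncomp\le k$. You allude to the forest characterization $|V(F)|-|E(F)|=\#\text{components}$ for \fvs, but this equality cannot be ``enforced at the moment the relevant object is introduced'': the number of components is exactly the global quantity the DP cannot see. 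What the paper adds is a genuinely new ingredient: the candidate solution is enriched to a pair $(X,M)$ where $M$ is a set of marked vertices (or edges) with $|M|$ controlled by an accumulator, and in the paired cuts one requires $M\subseteq X_L$ rather than a single anchor. The number of such cuts is then $2^{c'}$ where $c'$ is the number of components of $G[X]$ disjoint from $M$; this is odd iff every component contains a marker, which together with $|M|=k$ implies $\conncomp(G[X])\le k$. Decisions about whether to mark a vertex are made at the forget bag. Without this idea your scheme does not go through for roughly half of the problems listed.

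The second gap is the statement that ``Join, introduce-vertex, introduce-edge and forget updates are all local.'' For \steinertree and \cvertexcover this is true because the three-letter marking on a bag vertex must match exactly in the two children. But for \cdomset the two ``outside $X$'' markers (dominated versus not-yet-dominated) combine by a logical OR across the two children, so the naive join costs $16^t$, not $4^t$; for directed cycle covers the six-letter alphabet has a nontrivial degree-addition table, making the naive join $36^t$. The paper gets the claimed bases of $4$ and $6$ by running fast covering products, Walsh--Hadamard/$\mathbb{Z}_4$ transforms, and generalized subset convolution at join bags. These transform tricks are not bookkeeping; without them your exponents would not come out to the claimed $3$, $4$, $6$.
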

For a number of these results we have matching lower bounds, such as the
following one:
\begin{theorem} Unless the Strong Exponential Time Hypothesis is false,
there do not exist a constant $\eps>0$ and an algorithm that given an instance $(G=(V,E),T,k)$
together 
with a path decomposition of the graph $G$ of width $p$ solves the \steinertree problem in $(3-\eps)^p |V|^{O(1)}$ time.
\end{theorem}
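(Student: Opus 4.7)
The plan is to reduce from CNF-SAT. Given $\varepsilon > 0$, from any CNF-SAT instance $\phi$ on $n$ variables and $m$ clauses I would construct a Steiner Tree instance $(G, T, k)$ together with a path decomposition of $G$ of width $p \le n/\log_2 3 + o(n)$ such that $\phi$ is satisfiable if and only if $G$ admits a Steiner tree of cost at most $k$. A hypothetical $(3-\varepsilon)^p |V|^{O(1)}$ algorithm for Steiner Tree would then solve $\phi$ in time $2^{n \log_2(3-\varepsilon)/\log_2 3 + o(n)} = 2^{n(1-\delta)}$ for some $\delta = \delta(\varepsilon) > 0$, contradicting SETH.

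The central idea is to pack Boolean assignments into base-$3$ patterns in a pathwidth-efficient way. Fix a large constant $l = l(\varepsilon)$ and group the variables into $r = \lceil n/n_0\rceil$ blocks of size $n_0 := \lfloor l \log_2 3 \rfloor$, fixing for each block an injection from $\{0,1\}^{n_0}$ into $\{0,1,2\}^l$. For each block I would build a ``variable path'' consisting of $m$ columns, each column containing $l$ vertices together with a constant-size propagation sub-gadget, engineered so that every Steiner tree of cost at most $k$ (a) locally selects one of $3^l$ trit patterns in each column, and (b) carries the same pattern through consecutive columns, thereby committing the whole path to a single Boolean assignment of its block. For each clause $C_j$ I would attach, at column $j$, a clause gadget: a small sub-graph together with a terminal that can only be connected to the rest of the Steiner tree via a pattern encoding an assignment satisfying some literal of $C_j$. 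Tuning $k$ just tight, a solution of cost at most $k$ exists if and only if some global assignment satisfies every clause simultaneously.

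The path decomposition sweeps left-to-right through the columns: at column $j$ the bag contains the $j$-th column of each of the $r$ parallel variable paths together with the current clause gadget and $O(1)$ overhead, giving pathwidth $p \le rl + O(l) \le n/\log_2 3 + O(n/l) + O(l)$. Choosing $l = l(\varepsilon)$ large enough makes the slack $\eta := p - n/\log_2 3$ satisfy $\eta \log(3-\varepsilon) < (n - n\log_3(3-\varepsilon))\log 2 - \eta \log 3$, so that $(3-\varepsilon)^p$ still falls strictly below $2^{n(1-\delta)}$ for some $\delta = \delta(\varepsilon) > 0$, yielding the desired contradiction with SETH.

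The main obstacle is engineering the column and propagation gadgets for Steiner Tree specifically. Each vertex in a column must have exactly three ``useful'' local states, matching the Cut\&Count states (not in the tree; in the tree on ``side~$0$''; in the tree on ``side~$1$''), and the propagation sub-gadget must force the tree to transmit the trit pattern from one column to the next using only connectivity and the cost budget, with $O(1)$ auxiliary vertices so as not to inflate the pathwidth. Simultaneously the clause gadget must selectively admit exactly those patterns corresponding to satisfying literals, without creating cheap reroutings that bypass the enforced consistency. Designing these gadgets so that the intended $3^l$ patterns per column are the only options a minimum-weight Steiner tree can afford, and ruling out unintended solutions by careful terminal placement and tight budget accounting, is the delicate technical core of the proof.
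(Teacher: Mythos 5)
You take a genuinely different route from the paper, and the route you sketch has a real gap at its technical core. The paper does \emph{not} build a direct CNF-SAT $\to$ \steinertree{} construction. Instead, it first establishes the $(3-\eps)^p$ lower bound for \cvertexcover{} by a direct SAT reduction (that is where all the column/propagation/clause-gadget engineering lives, and it uses triangles $\mathcal{H}_{\iF,\iv,\iB}$ from which any vertex cover must take two of three vertices as the natural per-vertex $3$-way choice), and then transfers it to \steinertree{} by a two-line observation: subdivide every edge with a terminal and attach a pendant terminal forcing $r \in X$, and any path decomposition of width $p$ yields one of width $p+1$ for the subdivided graph. The student's plan thus attempts to re-engineer from scratch exactly the gadgetry that the paper deliberately factors through an intermediate problem in order to avoid.

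The concrete gap is that the gadgets themselves are never constructed, and the state-space intuition offered for them does not stand on its own. You write that each vertex has three ``useful'' states matching the Cut\&Count colours (not in the tree; in the tree on side $0$; in the tree on side $1$). But a Steiner tree has no intrinsic ``side'' for its vertices --- the cut sides are an internal bookkeeping device of the counting algorithm and are not a feature of the solution object. So it is not at all clear that a column vertex in a hypothetical gadget genuinely has a $3$-way choice that the tree must commit to and propagate; without a mechanism analogous to the paper's triangle gadget (which forces a $\binom{3}{2}$-type choice in a vertex cover), the claim that ``the intended $3^l$ patterns per column are the only options a minimum-weight Steiner tree can afford'' has no support. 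You acknowledge this yourself (``designing these gadgets \ldots is the delicate technical core of the proof''), which is precisely what is missing. The pathwidth bookkeeping and the asymptotic tuning of $l$ are fine and essentially match the paper's, but the reduction itself is not established, so the argument as written does not prove the theorem.
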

We have such matching lower bounds for the following problems: 
\cvertexcover, \cdomset, \cfvs, \coct, \fvs, \steinertree and \exactleaf.
We feel that the first four results are of
particular interest here and should be compared to the algorithms and
lower bounds for the analogous problems without the connectivity
requirement.
For instance in the case of \cvertexcover the results show that the increase
in running time to $3^{\tw(G)} n^{O(1)}$ from the 
$2^{\tw(G)} n^{O(1)}$ algorithm of \cite{niedermeier:book} for 
\vertexcover is not an
artifact of the Cut\&Count technique, but rather an intrinsic characteristic
of the problem.
We see a similar increase of the base constant by one for the other three
mentioned problems.

We have found Cut\&Count to fail for two maximization problems:
\cyclepackingname and \maxcyclecovername. 
We believe this is an example of a more general phenomenon --- problems
that ask to maximize (instead of minimizing) the number of connected
components in the solution seem more difficult to solve than the
problems of minimization (including problems where we demand that the
solution forms a single connected component).
As evidence we present lower bounds for the time complexity of solutions
to such problems, proving that $c^{\tw(G)}$ solutions
of these problems are unlikely:
\begin{theorem}
Unless the Exponential Time Hypothesis is false, there does not exist
a $2^{o(p \log p)}|V|^{O(1)}$ algorithm for solving \cyclepackingname
or \maxcyclecovername.
The parameter $p$ denotes the width of a given path decomposition of the input graph.
\end{theorem}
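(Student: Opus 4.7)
The plan is to reduce from a problem known to require $2^{\Omega(k \log k)}$ time under ETH, for instance $k \times k$ \kknphittingsetname{} or its permutation variant from the Lokshtanov, Marx and Saurabh framework. Given such an instance with universe $[k] \times [k]$ and a family of $m$ sets, we aim to build a graph $G$ and a path decomposition of $G$ of width $p = O(k)$ such that the optimum value of \cyclepackingname{} (respectively \maxcyclecovername{}) on $G$ encodes whether the source instance admits a hitting set that is a permutation. A hypothetical $2^{o(p \log p)}|V|^{O(1)}$ algorithm would then solve the source problem in $2^{o(k \log k)}(k+m)^{O(1)}$ time, contradicting ETH.

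The natural construction places all of $G$ along a path. Conceptually, we introduce $k$ parallel \emph{column wires} that traverse the whole graph, and insert $k$ consecutive \emph{row gadgets} $G_1,\ldots,G_k$; additionally, for each set $S$ in the family we plant a small \emph{verification gadget} at the appropriate position along the path. Inside each row gadget $G_r$ the intended behaviour is that the solver ``activates'' exactly one column wire $c_r$ (representing the choice of the element in row $r$, column $c_r$); activation flips a local state carried along the wire. A set $S$ is then covered iff its verification gadget receives at least one activated wire, and the gadget is designed so that only in this case can it contribute one extra short cycle to the packing (respectively, close up into a valid cycle cover component). A careful accounting then shows that the optimum attains a specified threshold value iff every set is hit and the $k$ selections form a permutation.

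For the pathwidth bound we lay the decomposition out along the sequence of gadgets: at any slice a bag contains the $2k$ ``cut'' vertices of the column wires plus the $O(1)$ internal vertices of the currently visited row or verification gadget, giving width $O(k)$. The \maxcyclecovername{} variant additionally requires filler cycles attached to all vertices that are not naturally used by the intended solution, engineered so that they can always be completed into the cover and contribute a fixed number of components regardless of the row choices; this localises the optimisation to the verification gadgets.

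The main obstacle is precisely this accounting: because we maximise the number of components rather than minimise a global quantity, the solver is free to introduce spurious short cycles wherever any sub-structure can be short-circuited, so the verification and row gadgets must be rigid enough that no ``cheating'' packing which skips the hitting-set condition can meet the threshold. Concretely, the row gadgets must prevent the solver from activating zero or more than one wire without incurring a strictly larger loss elsewhere, and the wires must not admit independent closing into cycles between gadgets. Designing such gadgets while keeping each slice of the path decomposition of size $O(k)$, so that the $O(k)$ wires are the only information carried between consecutive gadgets, is the heart of the proof.
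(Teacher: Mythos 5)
Your high-level plan --- reduce from the $k\times k$ hitting-set family of problems via a grid of wires and gadgets, then carry a width-$O(k)$ path decomposition along the sequence of gadgets --- is indeed the general shape of the argument used in the paper. But the proposal stops exactly where the proof becomes non-trivial, and you acknowledge as much in your final paragraph. The missing content is not a detail; it is the rigidity mechanism, and without it the reduction does not go through.

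Concretely, three things are needed and none of them appears in your sketch. First, you need a way to \emph{upper bound} how many vertices of a designated set a packing can use, rather than a way to reward using certain vertices. The paper does this with an $r$-in-many gadget: attach $|X|-r$ new vertices, each joined to every vertex of $X$ by a \emph{double} edge, producing short $2$-cycles; a swapping argument (Lemma \ref{lem:r-in-many:rev}) shows that an optimal packing may be assumed to consume all of these short cycles, after which at most $r$ vertices of $X$ remain available to long cycles. This converts the maximization into a budget constraint, which is precisely the kind of rigidity you say you need but do not supply. Second, you need a global structural reason why the remaining long cycles are forced to trace out the intended row/column choices. The paper isolates a vertex set $Z$ such that the rest of the graph is a forest, so every cycle that is not a gadget short cycle must contain at least two vertices of $Z$; combined with $|Z|=2|\mathcal{C}|$, every such cycle contains \emph{exactly} two, which pins down the cycle routing and is what lets one read off a permutation $f$ and show consecutive rows agree. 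Your sketch of row gadgets that ``activate exactly one column wire'' has no such counting argument behind it, and the difficulty you flag (the solver packing spurious short cycles or skipping activations) is exactly what this counting argument rules out. Third, the reduction must come from the \emph{permutation} variant \kkhittingsetname, not plain \kknphittingsetname: the paper's argument that $f$ is a bijection comes from the $p^R$/$q^R$ structure and is essential, and the paper separately notes that Lokshtanov et al.'s lower bound extends to the permutation variant. Finally, the statement covers both the directed and undirected settings of both problems; the paper handles these via two short auxiliary reductions (undirected~$\to$~directed \cyclepackingname{} using a directed $1$-in-many gadget, and \cyclepackingname{}~$\to$~\maxcyclecovername{} via padding each vertex with a private triangle), which your proposal gestures at for \maxcyclecovername{} but does not carry out, and does not mention at all for the directed case.

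In short: the proposal identifies the right source problem and the right pathwidth layout, but the gadget constructions and the counting argument that force the encoded assignment to be well defined and consistent across columns --- the parts you explicitly flag as the ``heart of the proof'' --- are absent, so as written this is an outline rather than a proof.
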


To further verify this intuition, we investigated an artificial problem
(the \maxccdsname), in which we ask for a dominating set with the largest
possible number of connected components, and indeed we found a similar
phenomenon:
\begin{theorem}
Unless the Exponential Time Hypothesis is false, there does not exist
a $2^{o(p \log p)}|V|^{O(1)}$ algorithm for solving \maxccdsname.
The parameter $p$ denotes the width of a given path decomposition of the input graph.
\end{theorem}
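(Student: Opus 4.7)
The plan is to reduce from \kkhittingsetname, which by Lokshtanov, Marx and Saurabh~\cite{marx:superexp} admits no $2^{o(k\log k)}$ time algorithm unless ETH fails. An instance consists of a family $H \subseteq [k]\times[k]$ of forbidden cells, and one asks for a permutation $\pi\colon [k]\to[k]$ with $(i,\pi(i))\notin H$ for all $i$. From such an instance I would build, in polynomial time, a graph $G$ together with an integer $t$ such that $G$ admits a dominating set with at least $t$ connected components if and only if the \kkhittingsetname instance is a yes-instance, and such that $G$ also comes with a path decomposition of width $O(k)$. The composition immediately yields the claim.

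The construction has $k$ \emph{row gadgets} $R_1,\dots,R_k$ and $k$ \emph{column connectors} $C_1,\dots,C_k$. Gadget $R_i$ contains selector vertices $v_{i,1},\dots,v_{i,k}$ together with a cluster of private pendants arranged so that any dominating set must pick at least one $v_{i,j}$, and is penalized (by forcing an unavoidable component merger with a neighbouring piece of $G$) if it picks more than one. The column connector $C_j$ is a thin path passing near every $v_{i,j}$; it is attached so that (a)~$C_j$ must be picked up as a single new component whenever no selector from column $j$ is chosen, and (b)~if two different rows select column $j$ then their local components become fused through $C_j$, costing exactly one component. Finally, every forbidden cell $(i,j)\in H$ is realised by a small gadget at $v_{i,j}$ that, whenever $v_{i,j}$ is chosen, forces one additional merger of the same magnitude. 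The target $t$ is set to the ``clean'' value achievable only when every row selects a unique, unforbidden column, i.e.\ exactly when the rows collectively realise a permutation avoiding $H$.

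For the pathwidth bound I would lay $G$ out column by column, producing a path decomposition whose $j$-th bag contains the active part of $C_j$, the $k$ selector vertices $v_{1,j},\dots,v_{k,j}$ of column $j$, and a constant-size interface into the private structure of each row gadget. Each bag then has size $O(k)$, so $\pw(G)=O(k)$. Consequently a hypothetical $2^{o(p\log p)}|V|^{O(1)}$ algorithm for \maxccdsname would yield a $2^{o(k\log k)}$ algorithm for \kkhittingsetname, contradicting ETH.

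The main obstacle I expect lies in the component-merging analysis. Since the objective \emph{rewards} disconnection, one must rule out that an adversarial solver manufactures extra components by over-including vertices or by exploiting the column connectors in unintended ways. Concretely, every defect (an extra selector in some row, a reused column, or a chosen forbidden cell) must cost at least one component, and these costs must not be recoverable elsewhere in the graph. This requires engineering the pendants and the $C_j$-attachments so that any ``extra'' vertex included in the dominating set either attaches to an existing component without splitting it or, dually, forces a merge that cancels the gain. Making this accounting tight while simultaneously keeping each column connector's footprint in a single bag of the path decomposition is the delicate part of the argument.
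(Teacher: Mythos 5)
There is a genuine gap, and it starts at the very first step: you have misstated the source problem. The \kkhittingsetname problem is \emph{not} ``given a set $H\subseteq[k]\times[k]$ of forbidden cells, find a permutation $\pi$ with $(i,\pi(i))\notin H$''. That problem is equivalent to bipartite perfect matching (rows on one side, columns on the other, edge $(i,j)$ iff $(i,j)\notin H$) and is solvable in polynomial time, so a reduction from it yields no lower bound. The actual problem, as used in this paper, is: given a family of sets $S_1,\dots,S_m\subseteq[k]\times[k]$ (each with at most one element per row), decide whether there is a permutation set $S$ (one per row, one per column) with $S\cap S_s\neq\emptyset$ for \emph{every} $s$. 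The difficulty lives entirely in the $m$ hitting constraints, and your sketched construction — row gadgets, column connectors, a small penalty gadget at each forbidden cell — has no mechanism at all for encoding ``$S$ intersects $S_s$'' for each of the $m$ sets. Without that, the reduction cannot be made sound regardless of how carefully you engineer the pendants and column connectors.

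For comparison, the paper actually reduces from the \emph{non}-permutation problem \kknphittingsetname{} and uses a different, simpler architecture. It builds a bipartite frame of forced vertices $p_1^L,\dots,p_k^L$ and $p_1^R,\dots,p_k^R$ (force gadgets make them mandatory in any small dominating set), and then, for each constraint set $A$ in the family $\mathcal{S}=\{S_1,\dots,S_m\}\cup\{\{i\}\times[k]:i\in[k]\}$, a set of ``selection'' vertices $X^A=\{x_{i,j}^A:(i,j)\in A\}$ each adjacent to $p_i^L$ and $p_j^R$, with a one-in-many gadget forcing at least one vertex of $X^A$ into the dominating set. Setting the size budget to $\ell=3k+m$ makes it tight: exactly one $x$-vertex per constraint set is picked. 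Requiring $\geq k$ components then forces each $p_j^R$ to live in its own component, which in turn forces every chosen $x_{i,j}^A$ to have $j=f(i)$ for a single function $f$ — i.e., the choices must be mutually consistent, and the sets $\{i\}\times[k]\in\mathcal{S}$ ensure $f$ is total. This frame-plus-selection encoding, where the hitting constraints appear directly as families of selection vertices, is what your sketch is missing; the column connectors and forbidden-cell penalties in your proposal cannot substitute for it.
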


A reader interested in just the basic workings of the Cut\&Count technique 
will likely find the descriptions and the intuition needed in the first 
three sections of this work.
The rest of the rather formidable volume is devoted to refined applications,
in some cases needed to obtained the optimal constants, as well as arguments
showing the aforementioned optimality, and can --- in a sense --- be
considered ``advanced material''.

\subsection{Previous work}\label{ssec:previous}
The Cut\&Count technique has two main ingredients.
The first is an algebraic approach, where we assure that objects we are
not interested in are counted an even number of times, and then do the
calculations in $\Z_2$ or in a field of characteristic $2$, which causes
them to disappear.
This line of reasoning goes back to Tutte~\cite{tutte}, and
was recently used by Bj{\"o}rklund~\cite{bjorklund-focs} and Bj{\"o}rklund et. 
al~\cite{bjorklund-arxiv}.

The second is the idea of defining the connectivity requirement through
cuts, which is frequently used in approximation algorithms via linear
programming relaxations.
In particular cut based constraints were used in the Held and Karp
relaxation for the \tsp problem from 1970~\cite{held-karp-relaxation, held-karp-relaxation2}
and appear up to now in the best known approximation algorithms,
for example in the recent algorithm for the \steinertree problem by Byrka et al.~\cite{byrka}.
To the best of our knowledge the idea of defining problems through
cuts was never used in the exact and parameterized settings.

A number of papers circumvent the problems stemming from the lack of
singly exponential algorithms parametrized by treewidth for
connectivity--type problems.
For instance in the case of parametrized algorithms,
sphere cuts \cite{subexponential-bound-genus, sphere-cut} (for planar and
bounded genus graphs) and Catalan structures \cite{Catalan} (for 
$H$-minor-free graphs) were used to obtain $2^{O(\sqrt{k})}|V|^{O(1)}$ algorithms 
for a number of problems with connectivity requirements.
To the best of our knowledge, however, no attempt to attack the problem
directly was published before;
indeed the non-existence of $2^{o(\tw(G) \log \tw(G))}$ algorithms was 
deemed to be more likely.

\subsection{Consequences of the Cut\&Count technique}
\label{ssec:consequences}
As alredy mentioned, algorithms for graphs with a bounded treewidth have
a number of applications in various branches of algorithmics.
Thus, it is not a surprise that the results obtained by our technique give
a large number of corollaries.
To keep the volume of this paper manageable, we do not explore all possible
applications, but only give sample applications in various directions.

We would like to emphasize that the strength of the Cut\&Count technique
shows not only in the quality of the results obtained in various fields,
which are frequently better than the previously best known ones, achieved
through a plethora of techniques and approaches, but also in the ease in
which new strong results can be obtained.

\subsubsection{Consequences for FPT algorithms}
\label{sssec:fpt}
Let us recall the definition of the \fvs problem:

\defparproblemu{\fvs}{An undirected graph $G$ and an integer $k$}{$k$}{
	Is it possible to remove $k$ vertices from $G$ so that the remaining
	vertices induce a forest?}

This problem is on Karp's original list of 21 NP-complete problems
\cite{np-karp}.
It has also been extensively studied from the parametrized 
complexity point of view.
Let us recall that in the fixed-parameter setting (FPT) the problem comes
with a parameter $k$, and we are looking for a solution with time complexity
$f(k) n^{O(1)}$, where $n$ is the input size and $f$ is some function
(usually exponential in $k$).
Thus, we seek to move the intractability of the problem from the input size
to the parameter.

There is a long sequence of FPT algorithms for \fvs 
\cite{fvs:4krand, fvs1, fvs:5k, fvs7, fvs2, fvs3, guo:fvs, fvs6, fvs4,
fvs5}.
The best --- so far --- result in this series is the 
$3.83^k kn^2$ result of Cao, Chen and Liu
\cite{fvs:3.83k}. Our technique gives an
improvement of their result:
\begin{theorem}\label{fvs_theorem}
There exists a Monte-Carlo algorithm solving the \fvs problem in a graph 
with $n$ vertices in $3^k n^{O(1)}$ time and polynomial space.
The algorithm cannot give false positives and may give false negatives with probability at most $1/2$.
\end{theorem}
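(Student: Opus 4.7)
The plan is to combine iterative compression with a dedicated Cut\&Count subroutine for the \emph{disjoint} variant of \fvs, exploiting the fact that after iterative compression the complement of the running feedback vertex set is always a forest.

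I would process the vertices of $G$ in an arbitrary order $v_1,\ldots,v_n$ and maintain, for every induced subgraph $G_i=G[\{v_1,\ldots,v_i\}]$, a feedback vertex set $W_i$ of size at most $k$. Whenever adding $v_{i+1}$ pushes the running set $W'=W_i\cup\{v_{i+1}\}$ to size $k+1$, a compression step is invoked: for every subset $X\subseteq W'$, representing a guess for the intersection $S\cap W'$ of a hypothetical size-$k$ solution $S$, we look for $S'\subseteq V\setminus W'$ with $|S'|\leq k-|X|$ such that $G_{i+1}-(X\cup S')$ is a forest. Because $G_{i+1}\setminus W'$ is already a forest, this reduces to a \emph{Disjoint \fvs} instance: given a graph together with a known FVS of size $k-|X|+1$ whose complement is a forest, find a disjoint FVS of size at most $k-|X|$.

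Assuming a polynomial-space Monte-Carlo algorithm that solves Disjoint \fvs in $2^{k'} n^{O(1)}$ time for budget $k'$, the total cost of one compression step telescopes as
\[
\sum_{i=0}^{k+1}\binom{k+1}{i}\,2^{k-i}\,n^{O(1)}\;=\;\tfrac{1}{2}\cdot 3^{k+1}\,n^{O(1)}\;=\;3^k\,n^{O(1)},
\]
and multiplying by the $n$ outer iterations of iterative compression and by a constant number of Monte-Carlo repetitions for probability amplification leaves the total running time in the desired $3^k n^{O(1)}$ regime. Polynomial space is preserved because all of the above are nested loops: at any moment on the call stack there is at most one active invocation of the Disjoint \fvs subroutine.

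The main obstacle, and the place where the Cut\&Count paradigm really has to work, is the Disjoint \fvs subroutine itself, which must run in $2^{k'} n^{O(1)}$ time \emph{and} use only polynomial space --- the $3^t$ algorithm of Theorem~\ref{main} uses space exponential in $t$ and so cannot be applied directly. I would adapt Cut\&Count as follows: sample random weights on $V\setminus W'$ via the Isolation Lemma and, for each target weight $w^\ast$, test the parity of the number of pairs $(S',C)$ where $S'$ is a candidate solution of weight $w^\ast$ and $C$ is a consistent cut marker of $G_{i+1}-(X\cup S')$. Because the complement of the given FVS in $G_{i+1}-X$ is a forest, this parity sum factorises along the tree edges and can be evaluated by a recursive traversal that enumerates only the at most $2^{k'+1}$ cut assignments on the FVS ``portal'' vertices. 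Verifying that the non-solution contributions cancel modulo $2$, and that the factorised evaluation simultaneously respects polynomial space and hits the $2^{k'} n^{O(1)}$ time bound, is the technically most delicate part of the argument.
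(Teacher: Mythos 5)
Your high-level strategy coincides with the paper's: iterative compression yields at each step a feedback vertex set $B$ of size $k+1$ whose complement is a forest, and the exponential cost is controlled by a Cut\&Count computation whose exponential factor lives on $B$, re-organised so that each fixed ``core'' state on $B$ costs only polynomial time and space. The bookkeeping differs: you first branch over the $2^{k+1}$ guesses $X=S\cap B$ and then invoke a bespoke \emph{Disjoint FVS} Cut\&Count in time $2^{k-|X|}n^{O(1)}$, whereas the paper simply builds the tree decomposition of width $k+2$ from $B$ and its forest complement, runs the already-proven $3^t n^{O(1)}$ Cut\&Count FVS algorithm of Theorem~\ref{thm:app:fvs}, and obtains polynomial space by iterating externally over the $3^{k+1}$ evaluations $\overline{s}\colon B\to\{\mathbf 0,\mathbf 1_1,\mathbf 1_2\}$ and computing the DP tables restricted to $s|_B=\overline{s}$ in polynomial time per evaluation. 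Your decomposition $3^{k+1}=\sum_i\binom{k+1}{i}2^{k+1-i}$ is exactly the same enumeration (first which of $B$ is deleted, then the cut side of the rest), so in that sense the two are equivalent; but the paper's route buys you something concrete: the delicate part you flag at the end --- designing and proving correct a separate Cut\&Count for Disjoint FVS --- disappears entirely, because the generic $3^t$ algorithm is reused as a black box and only its \emph{evaluation order} is changed.

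There is one genuine gap in your write-up: you say you ``maintain a feedback vertex set $W_i$ of size at most $k$'' across the iterative-compression loop, but your compression subroutine is a parity test and therefore only decides existence; it does not hand you $W_{i+1}$, which you need to continue. The paper resolves this by proving the algorithm for the constrained variant \constrainedfvs (where a prescribed set $\mustset$ must be part of the deletion set) in Theorem~\ref{thm:app:fvs}, and then constructing $W_{i+1}$ greedily: for each vertex $v$ in turn, test whether some solution contains $K\cup\{v\}$ and add $v$ to $K$ if so, using $n$ decision calls per compression step together with $n$-fold repetition to keep the overall false-negative probability below $1/2$. Your proposal needs an analogous self-reducibility step (either on the Disjoint FVS instances or via the constrained variant), and without it the iterative compression cannot advance past a single step.
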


We give similar improvements for \cvertexcover (from the $2.4882^k n^{O(1)}$
of \cite{cvc-br} to $2^kn^{O(1)}$) and \cfvs (from the $46.2^k n^{O(1)}$ of
\cite{DBLP:conf/walcom/MisraPRSS10} to $3^kn^{O(1)}$).

\subsubsection{Parametrized algorithms for $H$-minor-free graphs}

A large branch of applications of algorithms parametrized with treewidth
is the bidimensionality theory, used to find subexponential algorithms
for various problems in $H$-minor-free graphs.
In this theory we use the celebrated minor theorem of Robertson and
Seymour \cite{graph-minor-theorem}, which ensures that any 
$H$-minor-free graph either has treewidth bounded by $C\sqrt{k}$,
or a $2\sqrt{k} \times 2\sqrt{k}$ lattice as a minor.
In the latter case we are assumed to be able to answer the problem in
question (for instance a $2\sqrt{k} \times 2\sqrt{k}$ lattice as a minor
guarantees that the graph does not have a \vertexcover or \cvertexcover
smaller than $k$).
Thus, we are left with solving the problem with the assumption of bounded
treewidth.
In the case of, for instance, \vertexcover a standard dynamic algorithm
suffices, thus giving us a $2^{O(\sqrt{k})}$ algorithm to check whether
a graph has a vertex cover no larger than $k$.
In the case of \cvertexcover, however, the standard dynamic algorithm gives
a $2^{O(\sqrt{k}\log k)}$ complexity --- thus, we lose a logarithmic factor
in the exponent.

There were a number of attempts to deal with this problem, taking into
account the structure of the graph, and using it to deduce some
properties of the tree decomposition under consideration.
The latest and most efficient of those approaches is due to Dorn, Fomin
and Thilikos \cite{Catalan}, and exploits the so called Catalan structures.
The approach deals with most of the problems mentioned in our paper, and is
probably applicable to the remaining ones.
Thus, the gain here is not in improving the running times (though
our approach does improve the constants hidden in the big-$O$ notation
these are rarely considered to be important in the bidimensionality theory),
but rather in simplifying the proof --- instead of delving into
the combinatorial structure of each particular problem, we are back to 
a simple framework of applying the Robertson-Seymour theorem and then
following up with a dynamic program on the obtained tree decomposition.

The situation is more complicated in the case of problems on directed
graphs.
A full equivalent of the bidimensionality theory is not developed for such
problems, and only a few problems have subexponential parametrized
algorithms available \cite{fastFAST, beyond-bidimensionality}.
One of the approaches is to mimic the bidimensionality approach, which again
leads to solving a problem on a graph of bounded treewidth --- such an
approach is taken by Dorn et al. in \cite{beyond-bidimensionality} for
\maxoutbranching to obtain a $2^{O(\sqrt{k} \log k)}$ algorithm.
In this case, a straightforward substitution of our $6^{\tw(G)} |V|^{O(1)}$
algorithm for the dynamic algorithm used by Dorn et al. will give the
following improvement:
\begin{theorem}
There exists a Monte-Carlo algorithm solving the $k$-\maxoutbranching 
problem in $2^{O(\sqrt{k})}|V|^{O(1)}$ time for directed graphs
for which the underlying undirected graph excludes a fixed graph $H$ as a 
minor.
The algorithm cannot give false positives and 
may give false negatives with probability at most $1/2$.
\end{theorem}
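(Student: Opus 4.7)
The plan is to follow the bidimensionality recipe already used by Dorn et al.~\cite{beyond-bidimensionality} for \maxoutbranching, substituting our Cut\&Count-based $6^{\tw(G)}|V|^{O(1)}$ algorithm (Theorem~\ref{main}, item 11) in place of their $\tw(G)^{O(\tw(G))}|V|^{O(1)}$ dynamic program. In the regime $\tw(G)=O(\sqrt{k})$ this single substitution is precisely what turns $2^{O(\sqrt{k}\log k)}$ into $2^{O(\sqrt{k})}$, while inheriting the Monte-Carlo error guarantees of the treewidth algorithm.

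Concretely, I would proceed through the standard win/win dichotomy. Given the input $(G,k)$ with $U(G)$ (the underlying undirected graph of $G$) excluding a fixed graph $H$ as a minor, fix a constant $c=c(H)$ and test, via known $H$-minor-free bidimensionality machinery, whether $U(G)$ contains a $c\sqrt{k}\times c\sqrt{k}$ grid as a minor. If yes, a combinatorial lemma of Dorn et al.\ shows that $G$ admits an outbranching with at least $k$ leaves, so the algorithm may immediately answer YES. If no, then by the excluded-grid theorem for $H$-minor-free graphs the treewidth of $U(G)$ is $O_H(\sqrt{k})$, and a tree decomposition of width $O(\sqrt{k})$ can be produced in polynomial time using a constant-factor treewidth approximation for $H$-minor-free graphs.

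On this tree decomposition, I would invoke the Monte-Carlo algorithm from Theorem~\ref{main}(11) with parameter $t=O(\sqrt{k})$, obtaining total time $6^{O(\sqrt{k})}|V|^{O(1)}=2^{O(\sqrt{k})}|V|^{O(1)}$. Since the treewidth algorithm never produces false positives and errs on YES-instances with probability at most $1/2$, the composite algorithm inherits exactly the same one-sided error profile, and independent repetition can drive the error further down without leaving the $2^{O(\sqrt{k})}|V|^{O(1)}$ time envelope.

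The main obstacle is not a new mathematical difficulty but bookkeeping around the win/win argument: one must verify that the grid size the combinatorial lemma needs in order to certify a $k$-leaf outbranching of the \emph{directed} graph $G$ from a large grid minor of the \emph{undirected} skeleton $U(G)$ scales as $\Theta(\sqrt{k})$, matching precisely the treewidth bound obtained on the other side of the dichotomy. I would import this lemma verbatim from \cite{beyond-bidimensionality}, confirm its quantitative form, and then the rest of the argument reduces to plugging in the new treewidth algorithm.
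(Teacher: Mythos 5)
Your proposal is correct and matches the paper's approach exactly: the paper proves this theorem by observing that one can substitute the $6^{\tw(G)}|V|^{O(1)}$ Cut\&Count algorithm for \exactoutbranching (Theorem~\ref{main}, item~11) into the Dorn et al.\ win/win framework of \cite{beyond-bidimensionality}, which is precisely the substitution you carry out. The extra care you take in confirming that the grid-minor threshold and the treewidth bound both scale as $\Theta(\sqrt{k})$ is a sound but routine check that the paper leaves implicit.
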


\subsubsection{Consequences for Exact Algorithms for graphs of bounded
degree}
Another application of our methods can be found in the field of solving
problems with a global constraint in graphs of bounded degree.
The problems that have been studied in this setting are mostly local in
nature (such as \vertexcover, see, e.g., \cite{BottomUp});
however global problems such as the \tsp and \hamcycle have also received
considerable attention \cite{bjorklund-tsp-bound, eppstein-tsp-cubic,
gebauer-tsp-bound, iwama-tsp-cubic}.

Here the starting point is the following theorem by Fomin et al.
\cite{Fomin06ontwo}:
\begin{theorem}[Fomin, Gaspers, Saurabh, Stepanov]
For any $\eps > 0$ there exists an integer $n_\eps$ such that for any
graph $G$ with $n > n_\eps$ vertices,
$$\pw(G) \leq \frac{1}{6} n_3 + \frac{1}{3} n_4 + \frac{13}{30} n_5 +
n_{\geq 6} + \eps n,$$
where $n_i$ is the number of vertices of degree $i$ in $G$ for any 
$i \in \{3,\ldots,5\}$ and $n_{\geq 6}$ is the number of vertices of degree
at least $6$.
\end{theorem}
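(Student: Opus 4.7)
The plan is to establish the bound by constructing a path decomposition explicitly, with width bounded by a weighted sum over the vertices. Assign each vertex $v$ a weight $\omega(v) \in \{0, 1/6, 1/3, 13/30, 1\}$ depending on whether $\deg(v) \leq 2$, $=3$, $=4$, $=5$, or $\geq 6$, so that the right-hand side of the claimed inequality reads $\sum_{v} \omega(v) + \eps n$. Via the standard identity $\pw(G) = \mathbf{vs}(G)$ relating pathwidth to the vertex separation number, it suffices to exhibit a linear ordering $\pi$ of $V(G)$ with $\mathbf{vs}(\pi) \leq \sum_v \omega(v) + \eps n$.

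I would then apply a branch-and-reduce strategy on $G$, guided by $\omega$ viewed as a potential. A family of local reduction rules addresses vertices of degree at most $5$ with a favourable local structure: for such a configuration, one removes a small set $X$ together with its closed neighbourhood and appends $X$ at the start of $\pi$, charging the resulting bag against the weights of the removed vertices. The fractional coefficients $1/6$ and $1/3$ match what one gets from the cubic-graph bisection argument of Fomin and H\o{}ie and its $4$-regular analogue, while the coefficient $1$ for degree $\geq 6$ is the trivial charge that uses each such vertex as a single "slot" in the decomposition. When no reduction rule applies, the remaining graph has a very constrained structure, and one argues directly that it admits a balanced separator whose weighted cost is absorbed by the potential of the vertices in the separator.

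The main obstacle is justifying the coefficient $13/30$ for degree-$5$ vertices. Unlike $1/6$ and $1/3$, this value does not arise from a single clean bisection-type argument; rather, it is the optimum of a small linear program over the possible configurations of the neighbourhood of a degree-$5$ vertex (accounting for the degrees of its neighbours, which may themselves be $3$, $4$, $5$, or $\geq 6$). To handle this, one enumerates the relevant local patterns and, in each pattern, verifies the charge accounting using the already-established weights of the neighbours as credit. The additive $\eps n$ slack absorbs the lower-order discrepancies at the terminal levels of the recursion, where reduction rules cease to apply and a brute-force decomposition must be invoked on a sublinear residue; for $n > n_\eps$ with $n_\eps = n_\eps(\eps)$ sufficiently large, these boundary effects are dominated by $\eps n$, and the inequality follows.
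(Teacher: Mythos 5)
The paper you're commenting on does not prove this statement; it is quoted verbatim, with attribution, from Fomin, Gaspers, Saurabh and Stepanov (the citation \texttt{[Fomin06ontwo]}), and the only added observation in the paper is the remark just below the theorem that the proof is constructive. There is therefore no internal proof to compare your proposal against, and any assessment has to be on the proposal's own terms against the source result.

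At that level, your sketch tracks the correct \emph{architecture}: assign a degree-dependent weight to each vertex, view the right-hand side as a potential, use $\pw(G) = \mathbf{vs}(G)$, run a branch-and-reduce that deletes local configurations and charges bags against the released potential, and fall back to a bisection-type argument (Fomin--H{\o}ie, ultimately resting on Monien--Preis) with an $\eps n$ slack absorbing boundary effects. That is indeed how the source argument is organised. But as written the charging step does not close, and not only at the fringes: if you delete a single degree-$5$ vertex $v$ all of whose five neighbours also have degree $5$, the potential decreases by $\omega_5 + 5(\omega_5 - \omega_4) = \tfrac{13}{30} + 5\cdot\tfrac{1}{10} = \tfrac{28}{30} < 1$, strictly less than the one bag-slot you would spend by placing $v$ in every bag. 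The same shortfall appears for degree $6$ if one uses the sharper coefficient $23/45$ from the original theorem rather than the trivialised $1$ that this paper quotes. So ``remove a vertex, pay one, recover at least one'' is not the mechanism; the actual argument needs either a cleverer linear layout (so the marginal cost of placing $v$ is less than $1$ because only part of $N(v)$ is still live) or reduction rules that delete structured subsets and extract strictly more potential than their cost, and your appeal to ``a small linear program over local patterns'' names the goal without supplying either the rules or the verification that they suffice. Until that part is filled in, the $13/30$ constant — which you correctly flag as the crux — is asserted rather than derived.
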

This theorem is constructive, and the corresponding path decompostion (and,
consequently, tree decomposition) can be found in polynomial time.

Combining this theorem with our results gives algorithms running in faster
than $2^n$ for graphs of maximum degree $3$, $4$ and (in the case of the
$3^{\tw(G)}$ and $4^{\tw(G)}$ algorithms) $5$, as follows:

\begin{corollary}
There exist randomized algorithms that solve the following problems:
\begin{itemize}
\item \steinertree, \fvs and \cvertexcover in $O(1.201^n)$ time for cubic 
graphs, $O(1.443^n)$ time for graphs of maximum degree $4$ and 
$O(1.61^n)$ time for graphs of maximum degree $5$;
\item \cdomset, \coct, \cfvs, \exactleaf, \maxspantree and
\gmtsp as well as undirected versions of \mincyclecovername,
\longestpath and \longestcycle in $O(1.26^n)$ time for cubic graphs,
$O(1.588^n)$ for graphs of maximum degree $4$ and $O(1.824^n)$ for 
graphs of maximum degree $5$;
\item Directed versions of \mincyclecovername, \longestpath and
\longestcycle, as well as for \exactoutbranching, in $O(1.349^n)$ for cubic 
graphs and in $O(1.818^n)$ for graphs of maximum degree $4$.
\end{itemize}
All the aforementioned algorithms are Monte Carlo algorithms with false
negatives.
\end{corollary}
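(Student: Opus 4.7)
The plan is to combine the Fomin--Gaspers--Saurabh--Stepanov theorem (stated just above the corollary) with the algorithms from Theorem~\ref{main}. For a graph $G$ of maximum degree $\Delta \leq 5$, the Fomin et al.\ theorem constructively yields, for any $\eps > 0$, a path decomposition of $G$ of width at most $p_\Delta \cdot n + \eps n$, where $p_3 = 1/6$, $p_4 = 1/3$ and $p_5 = 13/30$; this path decomposition is a tree decomposition of the same width, and for directed problems we apply the theorem to the underlying undirected graph (which is what Theorem~\ref{main} requires). Plugging this decomposition into the appropriate algorithm of Theorem~\ref{main}, whose running time is $c^{\tw(G)} |V|^{O(1)}$ with $c \in \{3, 4, 6\}$, yields total time
\[
c^{p_\Delta n + \eps n} n^{O(1)} = \bigl(c^{p_\Delta + \eps}\bigr)^{n} \cdot n^{O(1)}.
\]

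The three bullets of the corollary then correspond to the three values of $c$. I would handle them case by case: for problems solvable in $3^{\tw(G)} |V|^{O(1)}$ (Steiner Tree, FVS, Connected Vertex Cover), I obtain running times of the form $3^{p_\Delta n + \eps n}$, and I would check that $3^{1/6} < 1.201$, $3^{1/3} < 1.443$, and $3^{13/30} < 1.61$, so picking $\eps$ small enough absorbs the slack for all sufficiently large $n$ (the finitely many $n \leq n_\eps$ are handled by brute force in constant time). The analogous verifications for $c = 4$ give $4^{1/6} < 1.26$, $4^{1/3} < 1.588$, $4^{13/30} < 1.824$, and for $c = 6$ give $6^{1/6} < 1.349$, $6^{1/3} < 1.818$; the corollary does not list a max-degree-$5$ bound for $c = 6$ because $6^{13/30} > 2$, which is no longer interesting.

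All the algorithms inherit the Monte Carlo, one-sided error guarantee from Theorem~\ref{main}: since we only transform the input (computing a path decomposition deterministically in polynomial time) and then invoke a randomized algorithm that cannot produce false positives, the composite algorithm has exactly the same error profile. There is essentially no main obstacle here: the proof is a direct substitution, and the only thing to verify is the numerical inequalities $c^{p_\Delta} < \alpha_{c,\Delta}$ for each claimed constant $\alpha_{c,\Delta}$, together with the trivial observation that an additive $\eps n$ in the exponent can be absorbed into the claimed base constant for small enough~$\eps$.
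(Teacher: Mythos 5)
Your proposal is correct and coincides with what the paper intends: apply the constructive Fomin--Gaspers--Saurabh--Stepanov pathwidth bound (a path decomposition is in particular a tree decomposition, and for directed problems one uses the underlying undirected graph), feed it to the $c^{\tw}$ algorithms of Theorem~\ref{main} with $c \in \{3,4,6\}$, and verify the numerical inequalities $c^{p_\Delta} < \alpha_{c,\Delta}$ for $p_3 = 1/6$, $p_4 = 1/3$, $p_5 = 13/30$, choosing $\eps$ small enough to absorb both the $\eps n$ slack and the polynomial factor. Your numerics check out (e.g.\ $3^{1/6} \approx 1.2009$, $4^{13/30} \approx 1.8233$, $6^{1/3} \approx 1.8172$) and your observation that $6^{13/30} > 2$ explains the omitted degree-$5$ bound in the third bullet; this is exactly the paper's reasoning.
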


The \tsp in its full generality does not fall under the Cut\&Count
regime;
however for graphs of degree four the $O(1.588^n)$ algorithm obtained for 
\gmtsp (which can easily be extended to the case of the \tsp with 
polynomially bounded weights)
is significantly faster than the best known algorithm for the case of 
unbounded weights of Gebauer \cite{gebauer-tsp-bound}, which runs in 
$1.733^n$ time.
For the case of degree $5$ (where we give an $O(1.824^n)$ algorithm for the
\tsp with polynomially bounded weights) the best known result in the general
weight case is the $(2-\eps)^n$ algorithm of Bj{\"o}rklund et al.
\cite{bjorklund-tsp-bound}.
It is worth noticing that in the case of cubic graphs we automatically obtain an algorithm
for \gmtsp running in $2^{n/3+\eps n} n^{O(1)}$ time, which coincides with the
time complexity of the algorithm for \tsp of Eppstein \cite{eppstein-tsp-cubic}.
The currently fastest algorithm for \tsp in cubic graphs is due to Iwama and Nakashima \cite{iwama-tsp-cubic}. 

\subsubsection{Consequences for exact algorithms on planar graphs}
Here we begin with a consequence of the work of Fomin and Thilikos
\cite{twbound}:
\begin{proposition}
For any planar graph $G$, $\tw(G) + 1\leq \frac{3}{2}\sqrt{4.5 n} \leq
3.183 \sqrt{n}$.
Moreover a tree decomposition of such width can be found in polynomial time.
\end{proposition}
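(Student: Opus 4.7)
The plan is to reduce the claim to a bound on branchwidth. Using the classical Robertson--Seymour inequality $\tw(G)+1 \leq \tfrac{3}{2}\,\mathrm{bw}(G)$, valid for any graph with at least one edge, it suffices to show two things: (a) every planar graph $G$ on $n$ vertices satisfies $\mathrm{bw}(G) \leq \sqrt{4.5\,n}$, and (b) a branch decomposition attaining this bound can be produced in polynomial time. Combining (a) with the inequality immediately yields $\tw(G)+1 \leq \tfrac{3}{2}\sqrt{4.5\,n} \leq 3.183\sqrt{n}$, and composing (b) with a standard conversion gives the tree decomposition.

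For part (a), I would fix a planar embedding of $G$ and work in the language of sphere-cut decompositions: branch decompositions whose middle sets arise as \emph{nooses}, i.e.\ simple closed curves on the sphere intersecting $G$ only at vertices. The recursive core is a planar separator lemma of Lipton--Tarjan type, but sharpened to the sphere-cut setting: one proves the existence of a noose of length at most $\sqrt{4.5\,n}$ such that each of its two sides contains at most $\tfrac{2}{3}n$ vertices. The constant $4.5$ comes from an optimization in which Euler's formula relates the numbers of vertices, edges, and faces enclosed by the noose; balancing the weights of the two sides against the separator length yields the stated constant. Iterating the construction in both regions produces a sphere-cut decomposition whose maximum middle set has size at most $\sqrt{4.5\,n}$.

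For part (b), I would appeal to the Seymour--Thomas polynomial-time algorithm for computing an optimal branch decomposition of a planar graph; the existence argument of part (a) certifies that its output has width at most $\sqrt{4.5\,n}$. Converting a branch decomposition of width $k$ into a tree decomposition of width at most $\lceil 3k/2\rceil - 1$ is a standard constructive operation (realising the Robertson--Seymour inequality algorithmically), completing the polynomial-time construction.

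The main obstacle is the constant-level analysis in part (a): a blunt application of the planar separator theorem gives only $O(\sqrt{n})$ with a considerably worse constant. Extracting $\sqrt{4.5}$ requires the sphere-cut viewpoint rather than a vertex-separator one, together with the topological accounting via Euler's formula that exploits the simultaneous contribution of vertices and faces to the enclosed region; this is the technical heart of the Fomin--Thilikos argument and is the step most sensitive to optimization.
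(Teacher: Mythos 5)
The paper does not prove this proposition at all: it is stated as an imported result, with the citation \cite{twbound} (Fomin and Thilikos) doing all the work. Your sketch is a faithful reconstruction of what that reference actually proves — the branchwidth bound $\mathrm{bw}(G)\leq\sqrt{4.5n}$ via sphere-cut separators and Euler-formula accounting, followed by the Robertson--Seymour conversion $\tw(G)+1\leq\lfloor\frac{3}{2}\mathrm{bw}(G)\rfloor$ and the Seymour--Thomas ratcatcher for constructivity — so at the level of detail given it is correct and consistent with the paper's source. One small inaccuracy worth flagging: you state the branchwidth-to-treewidth inequality as valid ``for any graph with at least one edge,'' but the correct hypothesis is $\mathrm{bw}(G)\geq 2$; for instance $K_2$ has $\mathrm{bw}=0$ and $\tw=1$, so the inequality fails there. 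The low-branchwidth cases are trivial to handle directly, so this does not affect the conclusion, but the stated hypothesis should be corrected.
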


Using this we immediately obtain $c^{\sqrt{n}}$ algorithms for solving
problems with a global constraint on planar graphs with good constants.
For instance for the \hamcycle problem on planar graphs we obtain the
following result:
\begin{corollary}
There exists a randomized algorithm with false negatives
solving \hamcycle on planar graphs in $O(4^{3.183\sqrt{n}}) = 
O(2^{6.366\sqrt{n}})$ time.
\end{corollary}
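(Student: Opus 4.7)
The plan is a direct composition of the planar treewidth bound with the \longestcycle case of Theorem~\ref{main}. First, I apply the Fomin--Thilikos proposition to the input planar graph $G$: in polynomial time it produces a tree decomposition $\treedecomp$ of $G$ of width $t \leq 3.183\sqrt{n} - 1$. Second, I observe that \hamcycle is exactly the special case of undirected \longestcycle in which the required cycle length equals $|V(G)|$, so item~9 of Theorem~\ref{main} supplies a Monte Carlo algorithm with false negatives that, given $G$ together with a width-$t$ tree decomposition, decides \hamcycle in $4^t |V|^{O(1)}$ time and with error probability at most $1/2$.

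Feeding $\treedecomp$ into that algorithm yields total running time
\[
4^{3.183\sqrt{n}} \cdot n^{O(1)} \;=\; O\bigl(4^{3.183\sqrt{n}}\bigr) \;=\; O\bigl(2^{6.366\sqrt{n}}\bigr),
\]
since the polynomial overhead from both the decomposition step and the $|V|^{O(1)}$ factor is dominated by any $2^{\eps\sqrt{n}}$ slack absorbed in the big-$O$. The one-sided error inherits verbatim from Theorem~\ref{main}, as the preprocessing step that builds $\treedecomp$ is deterministic and introduces no additional failure probability.

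There is essentially no obstacle: the corollary is obtained by plugging the width bound into the treewidth-parameterized algorithm, and no further combinatorial argument about planarity is needed beyond what the proposition already encapsulates. The only minor bookkeeping is rounding the non-integer width to an integer (handled by the Fomin--Thilikos construction itself) and the trivial reduction from \hamcycle to asking \longestcycle for a cycle of length $|V|$.
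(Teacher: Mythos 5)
Your proposal is correct and matches the paper's (implicit) argument exactly: apply the Fomin--Thilikos planar treewidth bound to get a width-$O(\sqrt n)$ tree decomposition, then run the $4^{t}\,|V|^{O(1)}$ undirected \longestcycle{} algorithm from Theorem~\ref{main} with target length $|V|$, inheriting one-sided error since the decomposition step is deterministic. Your remark about absorbing the polynomial factor is right because $\tfrac{3}{2}\sqrt{4.5}\approx 3.182 < 3.183$, so the exponent has a genuine $\Omega(\sqrt{n})$ slack that swallows $n^{O(1)}$.
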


To the best of our knowledge the best algorithm known so far was the
$O(2^{6.903\sqrt{n}})$ of Bodlaender et al. \cite{sphere-cut}.

Similarly, we obtain an $O(2^{6.366\sqrt{n}})$ algorithm for \longestcycle
on planar graphs (compare to the $O(2^{7.223\sqrt{n}})$ of 
\cite{sphere-cut}), and --- as in the previous subsections --- well-behaved
$c^{\sqrt{n}}$ algorithms for all mentioned problems.

\subsection{Organization of the paper}
In the introduction we present the contents of the paper.
Subsection \ref{ssec:results} states the main results obtained by us.
After analyzing the connections to previous works in Subsection 
\ref{ssec:previous} we turn to giving sample consequences in Subsection 
\ref{ssec:consequences}.
We finish the introduction by giving this outline.

Section \ref{sec:preliminaries} is devoted to presenting the background
material for our algorithms.
In particular in Subsection \ref{ssec:treewidth} we recall the notion of
treewidth and dynamic programming on tree decompositions, while in
Subsection \ref{ssec:isolation} we introduce the Isolation Lemma.

In Section \ref{sec:illustration} we present the Cut\&Count technique on
two examples: the \steinertree problem and the {\sc Directed} \mincyclecovername problem.
We go into all the details, as we aim to present not only the algorithms
and proofs, but also the intuition behind them.
We use those intuitions in Section \ref{sec:applications}, where we
give sketches of algorithms for all the other problems mentioned in
Theorem \ref{main}.

In Section \ref{sec:negatives:intro} we move to lower bounds.
In Subsection \ref{sec:eth:intro} we present evidence that problems in
which we maximize the number of connected components are unlikely to have
$c^{\tw(G)} |V|^{O(1)}$ algorithms.
In Subsection \ref{sec:seth:intro} we provide arguments that 
several of the algorithms provided in Section \ref{sec:illustration}
have time complexity which is hard to improve.
We finish the paper with a number of conclusions and open problems in
Section \ref{sec:conclusions}

As the reader might have already noticed, there is a quite a large amount of
material covered in this paper.
To keep it readable a considerable number of proofs and analyses was
postponed to the appendix.
Thus, in Appendix \ref{sec:app:alg} we give detailed descriptions of all the
algorithms sketched in Section \ref{sec:applications}.
In particular in Part \ref{ssec:fsc} we describe all the variants of the
Fast Subset Convolution technique we use to decrease the constants in
our algorithms.
Appendix \ref{sec:param-improv} is devoted to the proofs of the results
announced in Subsection \ref{sssec:fpt} --- Theorem \ref{fvs_theorem} and
its analogues for \cvertexcover and \cfvs.
In Appendix \ref{sec:negatives:eth} we turn to the proofs of
the lower bounds stated in Subsection \ref{sec:eth:intro}, while
Appendix \ref{sec:negatives:seth} gives proofs for tight lower
bounds stated in Subsection \ref{sec:seth:intro}.

\section{Preliminaries and notation}\label{sec:preliminaries}

\subsection{Notation}\label{ssec:notation}
\label{subsec:not}
Let $G = (V,E)$ be a graph (possibly directed).
By $V(G)$ and $E(G)$ we denote the sets of vertices and edges of $G$,
respectively.
For a vertex set $X \subset V(G)$ by $G[X]$ we denote the subgraph induced
by $X$.
For an edge set $X \subset E$, we take $V(X)$ to denote the set of the
endpoints of the edges of $X$, and by $G[X]$ --- the subgraph $(V, X)$.
Note that in the graph $G[X]$ for an edge set $X$ the set of vertices remains the same as in the graph $G$.

For an undirected graph $G = (V,E)$, the {\em open neighbourhood} of a 
vertex $v$,  denoted $N(v)$, stands for $\{u \in V : uv \in E\}$, while the 
{\em closed} neighbourhood $N[v]$ is $N(v) \cup \{v\}$.
Similarly, for a set $X \subset V(G)$ by $N[X]$ we mean
$\bigcup_{v \in X} N[v]$ and by $N(x)$ we mean $N[X] \setminus X$.

By a {\em cut} of a set $X\subset V$ we mean a pair $(X_1,X_2)$, with 
$X_1 \cap X_2 = \emptyset$, $X_1 \cup X_2 = X$.
We refer to $X_1$ and $X_2$ as to the (left and right) {\em sides} of the cut.

We denote the degree of a vertex $v$ by $\deg(v)$. $\deg_H(v)$ denotes the 
degree of $v$ in the subgraph $H$. 
For $X \subseteq V$ or $X \subseteq E$, $\deg_X(v)$ is a short for
$\deg_{G[X]}(v)$. 
If $G$ is a directed graph and $X \subseteq V$ or $X \subseteq E$, we 
denote the in- and
out-degree of $v$ in $G[X]$ by $\indeg_{G[X]}(v)$ and $\outdeg_{G[X]}(v)$ 
respectively. 

For an edge $e = uv$ by subdividing it $s$ times (for $s > 0$) we mean the 
following operation: (1) remove the edge $e$, (2) add $s$ vertices $\{x_{e,1}, \ldots, x_{e,s}\}$, (3) add edges $\{ux_{e,1}, x_{e,1}x_{e,2}, \ldots, x_{e,k-1}x_{e,s}, x_{e,s}v\}$.

In a directed graph $G$ by weakly connected components we mean the 
connected components of the underlying undirected graph.
For a (directed) graph $G$, we let $\mathtt{cc}(G)$ denote the number of 
(weakly) connected components of $G$.

For two bags $x,y$ of a rooted tree we say that $y$ is a descendant
of $x$ if it is possible to reach $x$ when starting at $y$ and going
only up the tree.
In particular $x$ is its own descendant.

We denote the symmetric difference of two sets $A$ and $B$ by 
$A \triangle B$.
For two integers $a,b$ we use $a \equiv b$ to indicate that $a$ is 
even if and only if $b$ is even.
We use Iverson's bracket notation: if $p$ is a predicate we let $[p]$ be 
$1$ if $p$ if true and $0$ otherwise.
If $\omega : U \rightarrow \{1,\ldots,N\}$, we shorthand 
$\omega(S)=\sum_{e \in S}\omega(e)$ for $S \subseteq U$. 

For a function $s$ by $s[v \to \alpha]$ we denote the function 
$s \setminus \{(v,s(v))\} \cup \{(v,\alpha)\}$.
Note that this definition works regardless of whether $s(v)$ is already 
defined or not.

\subsection{Treewidth and pathwidth}\label{ssec:treewidth}


\subsubsection{Tree Decompositions}
\label{sec:td}

\begin{definition}[Tree Decomposition, \cite{rs:minors3}]
A \emph{tree decomposition} of a (undirected or directed) graph~$G$ is a tree~$\treedecomp$ in which each 
vertex~$x \in \treedecomp$ has an assigned set of vertices~$B_x \subseteq V$ 
(called a \emph{bag}) such that $\bigcup_{x \in \treedecomp} B_x = V$ with the 
following properties:
\begin{itemize}
\item for any $uv \in E$, there exists an~$x \in \treedecomp$ such that 
$u,v \in B_x$.
\item if $v \in B_x$ and $v \in B_y$, then $v \in B_z$ for all $z$ on 
the path from $x$ to $y$ in $\treedecomp$.
\end{itemize}
\end{definition}

The \emph{treewidth}~$tw(\treedecomp)$ of a tree decomposition~$\treedecomp$ is the size of 
the largest bag of $\treedecomp$ minus one, and the treewidth of a graph $G$ is the 
minimum treewidth over all possible tree decompositions of~$G$.


Dynamic programming algorithms on tree decompositions are often presented 
on nice tree decompositions which were introduced by Kloks~\cite{Kloks94}. 
We refer to the tree decomposition definition given by Kloks as to a {\em standard nice tree decomposition}.

\begin{definition}
A \emph{standard nice tree decomposition} is a tree decomposition where:
\begin{itemize}
\item every bag has at most two children,
\item if a bag $x$ has two children $l,r$, then $B_x = B_l = B_r$,
\item if a bag $x$ has one child $y$, then either $|B_x|=|B_y|+1$ and 
$B_y \subseteq B_x$ or $|B_x|+1=|B_y|$ and $B_x \subseteq B_y$.
\end{itemize}
\end{definition}

We present a slightly different definition of a nice tree decomposition.

\begin{definition}[Nice Tree Decomposition] \label{def:nicetreedecomp}
A \emph{nice tree decomposition} is a tree decomposition with one special 
bag $z$ called the \emph{root} with $B_z = \emptyset$ and in which each 
bag is one of the following types:
\begin{itemize}
\item \textbf{Leaf bag}: a leaf $x$ of $\treedecomp$ with $B_x = \emptyset$.
\item \textbf{Introduce vertex bag}: an internal vertex~$x$ of $\treedecomp$ 
with one child vertex~$y$ for which $B_x = B_y \cup \{v\}$ 
for some $v \notin B_y$. 
This bag is said to \emph{introduce} $v$.
\item \textbf{Introduce edge bag}: an internal vertex~$x$ of $\treedecomp$ labeled 
with an edge $uv \in E$ with one child bag~$y$ for which 
$u,v \in B_x = B_y$. 
This bag is said to \emph{introduce} $uv$.
\item \textbf{Forget bag}: an internal vertex~$x$ of $\treedecomp$ with one child 
bag~$y$ for which $B_x = B_y \setminus \{v\}$ for some $v \in B_y$. 
This bag is said to \emph{forget} $v$.
\item \textbf{Join bag}: an internal vertex $x$ with two child vertices 
$l$ and $r$ with $B_x = B_r = B_l$.
\end{itemize}
We additionally require that every edge in $E$ is introduced exactly once. 
\end{definition}
We note that this definition is slightly different than usual.
In our definition we have the extra requirements that bags 
associated with the leafs and the root are empty. 
Moreover, we added the introduce edge bags.

Given a tree decomposition, a standard nice tree decomposition 
of equal width can be found in polynomial time~\cite{Kloks94} and in the 
same running time, it can easily be modified to meet our extra requirements,
as follows:
add a series of forget bags to the old root, 
and add a series of introduce vertex bags below old leaf bags that are nonempty;
Finally, for every edge $uv \in E$ add an introduce edge bag above the 
first bag with respect to the in-order traversal of $\treedecomp$ that contains $u$ 
and $v$.

By fixing the root of $\treedecomp$, we associate with each bag $x$ 
in a tree decomposition $\treedecomp$ a vertex set $V_x \subseteq V$ where a vertex 
$v$ belongs to $V_x$ if and only if there is a bag $y$
which is a descendant of $x$ in $\treedecomp$ with $v \in B_y$
(recall that $x$ is its own descendant).
We also associate with each bag $x$ of $\treedecomp$ a subgraph of $G$ as follows:

\[ G_x = \Big{(}V_x, E_x = \{e |  \textrm{$e$ is introduced in a descendant of $x$ } \}\Big{)} \]

For an overview of tree decompositions and dynamic programming on tree 
decompositions see~\cite{BodlaenderK08,HicksKK05}.

\subsubsection{Path Decompositions}
A {\em path decomposition} is a tree decomposition that is a path.
The pathwidth of a graph is the minimum width of all path decompositions.
Path decompositions can, similarly as above, be transformed into nice
path decompositions, these obviously contain no join bags.

\subsection{Isolation lemma}\label{ssec:isolation}
An ingredient of our algorithms is the Isolation Lemma:

\begin{definition}
A function $\omega:U \rightarrow \mathbb{Z}$ \emph{isolates} a set family 
$\mathcal{F} \subseteq 2^U$ if there is a unique $S' \in \mathcal{F}$ with 
$\omega(S')= \min_{S \in \mathcal{S}}\omega(S)$. 
\end{definition}

Recall that for $X \subseteq U$, $\omega(X)$ denotes 
$\sum_{u \in X}\omega(u)$.

\begin{lemma}[Isolation Lemma, \cite{isolation}]
\label{lem:iso}
Let $\mathcal{F} \subseteq 2^U$ be a set family over a universe $U$ with 
$|\mathcal{F}|>0$.
For each $u \in U$, choose a weight $\omega(u) \in \{1,2,\ldots,N\}$ 
uniformly and independently at random.
Then
\[
	\mathtt{prob}[\omega \textnormal{ isolates } \mathcal{F}] \geq 1 - \frac{|U|}{N}
\]  
\end{lemma}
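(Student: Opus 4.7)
The plan is to reduce the isolation event to a union bound over elements of $U$. Say that an element $u \in U$ is \emph{ambiguous} (with respect to $\omega$) if there exist two sets $S_1, S_2 \in \mathcal{F}$, both of minimum weight, with $u \in S_1 \triangle S_2$. I would first observe that $\omega$ fails to isolate $\mathcal{F}$ if and only if some $u \in U$ is ambiguous: if two distinct minimum-weight sets exist they must differ on at least one element, and conversely, any ambiguous $u$ witnesses two distinct minimum-weight sets. Thus it suffices to bound $\mathtt{prob}[u \textrm{ is ambiguous}]$ for each $u$ individually and apply a union bound.

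Next I would fix an arbitrary $u \in U$ and condition on the values of $\omega(u')$ for all $u' \neq u$. Under this conditioning, define
\[
\alpha_u = \min_{S \in \mathcal{F},\, u \notin S} \omega(S), \qquad
\beta_u = \min_{S \in \mathcal{F},\, u \in S} \sum_{u' \in S \setminus \{u\}} \omega(u'),
\]
with the usual convention that an empty minimum equals $+\infty$ (handling the edge cases when every set contains $u$ or no set contains $u$, in which case $u$ can never be ambiguous). Then the minimum weight over sets containing $u$ is exactly $\beta_u + \omega(u)$, while the minimum over sets avoiding $u$ is $\alpha_u$, and $u$ is ambiguous precisely when both are equal to the global minimum, i.e.\ when $\omega(u) = \alpha_u - \beta_u$.

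The crucial point is that $\alpha_u - \beta_u$ is determined by the weights other than $\omega(u)$. Since $\omega(u)$ is drawn uniformly from $\{1, \ldots, N\}$ independently of those weights, the conditional probability that $\omega(u)$ hits this single value is at most $1/N$. Averaging over the conditioning gives $\mathtt{prob}[u \textrm{ is ambiguous}] \leq 1/N$, and a union bound over all $u \in U$ yields $\mathtt{prob}[\omega \textrm{ does not isolate } \mathcal{F}] \leq |U|/N$, which is equivalent to the claim.

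The only real subtlety, rather than an obstacle, is the characterization step connecting non-isolation to the existence of an ambiguous element, and the careful treatment of the corner cases (so that $\alpha_u$ or $\beta_u$ being undefined does not hurt the argument). Once these are in place, the probabilistic step is immediate from the independence and uniformity of $\omega(u)$.
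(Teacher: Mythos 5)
Your proof is correct and is essentially the standard Mulmuley--Vazirani--Vazirani argument for the Isolation Lemma; the paper itself does not give a proof and simply cites the original reference, so there is no in-paper argument to compare against. The three ingredients — the characterization of non-isolation via an ``ambiguous'' element, the observation that for each fixed $u$ the threshold $\alpha_u - \beta_u$ is a function of the other weights only, and the union bound over $u \in U$ — are exactly what is needed, and you handle the degenerate cases (all sets containing $u$, or none) cleanly by letting the corresponding minimum be $+\infty$.
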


It is worth mentioning that in \cite{DBLP:journals/siamcomp/ChariRS95}, 
a lemma using less random bits is shown: 
If $|\mathcal{F}| \leq Z$, then a scheme using 
$O(\log |U| + \log Z)$ random bits to obtain a polynomially 
bounded (in unary) weight function that isolates any set system with high
probability is presented.

The Isolation Lemma allows us to count objects modulo $2$,
since with a large probability it reduces a possibly large number of 
solutions to some problem to a unique one (with an additional weight
constraint imposed). 
This lemma has found many applications \cite{isolation}. 

An alternative method to a similar end is obtained by using 
Polynomial Identity Testing \cite{demillo,schwartz,zippel} over a field of 
characteristic two.
This second method has been already used in the field of exact and 
parameterized algorithms 
\cite{bjorklund-focs,bjorklund-arxiv,DBLP:conf/icalp/Koutis08,DBLP:conf/icalp/KoutisW09,DBLP:journals/ipl/Williams09}. 
The two methods do not have differ much in their consequences:
Both use the same number of random bits (the most randomness efficient 
algorithm are provided in 
\cite{Agrawal:2003:PIT:792538.792540,DBLP:journals/siamcomp/ChariRS95}). 
The challenge of giving a full derandomization seems to be equally 
difficult for both methods 
\cite{DBLP:conf/approx/ArvindM08,DBLP:journals/cc/KabanetsI04}. 
The usage of the Isolation Lemma gives greater polynomial overheads,
however we choose to use it because it requires less preliminary 
knowledge.

\section{Cut\&Count: Illustration of the technique}
\label{sec:illustration}

In this section we present the Cut\&Count technique by 
demonstrating how it applies to the
\steinertree and {\sc Directed} \mincyclecovername problems.
We go through all the details in an expository manner, 
as we aim not only to show the solutions to these particular problems, but
also to show the general workings.




The Cut\&Count technique applies to problems with certain connectivity 
requirements. 
Let $\sols \subseteq 2^\univ$ be a set of solutions; we aim to 
decide whether it is empty. 
Conceptually, Cut\&Count can naturally be split in two parts: 
\begin{itemize}
	\item \textbf{The Cut part}: Relax the connectivity requirement by 
considering the set $\cand \supseteq \sols$ of possibly 
connected candidate solutions. 
Furthermore, consider the set $\objs$ of pairs $(X,C)$ 
where $X \in \cand$ and $C$ is a consistent cut 
(to be defined later) of $X$.
\item \textbf{The Count part}: Compute $|\objs|$ modulo $2$ using a 
sub-procedure. 
Non-connected candidate solutions $X \in \cand \setminus \sols$
cancel since they are consistent with an even number of cuts. 
Connected candidates $x \in \sols$ remain.
\end{itemize}

Note that we need the number of solutions to be odd in order to make the 
counting part work. 
For this we use the Isolation Lemma (Lemma \ref{lem:iso}): 
We introduce uniformly and independently chosen weights $\weight(v)$ 
for every $v \in \univ$ and compute $|\objs_\targetW|$ modulo 2 for every $\targetW$, 
where $\objs_\targetW= \{ (X,C) \in \objs | \weight(X) = \targetW \}$. 
The general setup can thus be summarized as in Algorithm 
\ref{alg:cutandcount}: 

\begin{algorithm}                      
\caption{$\mathtt{cutandcount}(\univ,\treedecomp,\countproc)$}          
\label{alg:cutandcount}                           
\begin{algorithmic}[1]                    
\REQUIRE $\mathtt{cutandcount}(\univ,\treedecomp,\countproc)$
\ENSURE Set $\univ$; nice tree decomposition $\treedecomp$; Procedure $\countproc$ accepting a $\weight: \univ \rightarrow \{1,\ldots,N\}$, $\targetW \in \mathbb{Z}$ and $\treedecomp$.
\FOR{every $v \in \univ$}
	\STATE Choose $\weight(v) \in \{1,\ldots, 2|\univ|\}$ uniformly at random.
\ENDFOR
	\FOR{every $0 \leq \targetW \leq 2|\univ|^2$}
		 \STATE \textbf{if} $\countproc(\weight,\targetW,\treedecomp) \equiv 1$ \textbf{then} \textbf{return yes}
	\ENDFOR
\STATE \textbf{return no} 
\end{algorithmic}
\end{algorithm}

The following corollary that we use throughout the paper
follows from Lemma \ref{lem:iso} by setting 
$\mathcal{F} = \sols$ and $N=2|\univ|$:
\begin{corollary}
\label{cor:cutandcount}
Let $\sols \subseteq 2^{\univ}$ and 
$\objs \subseteq 2^{\univ \times (V \times V)}$. 
Suppose that for every $\targetW \in \Z$:
	\begin{enumerate} 
		\item $|\{(X,C) \in \objs | \weight(X) = \targetW\}| \equiv |\{X \in \sols | \weight(X) = \targetW\}|.$
		\item $\countproc(\weight,\targetW,\treedecomp) \equiv |\{(X,C) \in \objs | \weight(X) = \targetW\}|$ 
	\end{enumerate}
Then Algorithm \ref{alg:cutandcount} returns \textbf{no} if 
$\sols$ is empty and \textbf{yes} with probability at least 
$\frac{1}{2}$ otherwise.
\end{corollary}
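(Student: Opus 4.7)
The plan is to split the proof into two parts matching the two conclusions: soundness (when $\sols = \emptyset$) and success probability (when $\sols \neq \emptyset$). The first part is a direct syllogism from the two hypotheses, while the second part reduces to a clean application of the Isolation Lemma (Lemma \ref{lem:iso}) together with a routine check that the witnessing weight value lies in the range swept by the outer loop of Algorithm \ref{alg:cutandcount}.

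For soundness, I would fix any $\targetW \in \Z$ and observe that $|\{X \in \sols \mid \weight(X) = \targetW\}| = 0$. Assumption (1) then gives $|\{(X,C) \in \objs \mid \weight(X) = \targetW\}| \equiv 0 \pmod 2$, and assumption (2) transfers this to $\countproc(\weight,\targetW,\treedecomp) \equiv 0 \pmod 2$. Since this holds for every $\targetW$, no iteration of the loop triggers the \textbf{yes}-branch, so the algorithm returns \textbf{no}.

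For the success probability when $\sols \neq \emptyset$, I would apply Lemma \ref{lem:iso} to the family $\mathcal{F} := \sols \subseteq 2^{\univ}$ with $N := 2|\univ|$, which is exactly the range from which the weights are drawn in line~2 of Algorithm \ref{alg:cutandcount}. The lemma guarantees that with probability at least $1 - |\univ|/(2|\univ|) = 1/2$ the weight function $\weight$ isolates $\sols$, i.e., there exists a unique $X^* \in \sols$ minimizing $\weight$. Setting $\targetW^* := \weight(X^*)$, the uniqueness of the minimizer implies that at weight level $\targetW^*$ there is exactly one element of $\sols$ (any other element at that weight would also minimize $\weight$, contradicting uniqueness). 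Thus $|\{X \in \sols \mid \weight(X) = \targetW^*\}| = 1$, which is odd; assumption (1) followed by assumption (2) then yields $\countproc(\weight,\targetW^*,\treedecomp) \equiv 1 \pmod 2$.

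The only remaining verification is that $\targetW^*$ is actually enumerated by the outer loop, i.e., $0 \leq \targetW^* \leq 2|\univ|^2$. Since $\weight(v) \in \{1,\ldots,2|\univ|\}$ for each $v$ and $|X^*| \leq |\univ|$, the upper bound $\weight(X^*) \leq 2|\univ|^2$ is immediate, and $\targetW^* \geq 0$ is trivial. Hence in the isolating case the algorithm reaches $\targetW = \targetW^*$, detects an odd value, and returns \textbf{yes}. I do not anticipate a real obstacle here: the entire argument is essentially bookkeeping, with the only mildly delicate point being the observation that isolation of the \emph{minimum} indeed leaves precisely one survivor at the isolating weight level $\targetW^*$.
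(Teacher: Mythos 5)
Your proof is correct and follows exactly the route the paper intends: the paper dispenses with the corollary in one sentence by invoking the Isolation Lemma with $\mathcal{F} = \sols$ and $N = 2|\univ|$, and your argument simply unpacks that invocation (soundness by the two congruences, completeness via a unique minimizer yielding an odd count at the isolating weight level, plus the range check $0 \leq \targetW^* \leq 2|\univ|^2$).
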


When applying the technique, both the Cut and the Count part are 
non-trivial: 
In the Cut part one has to find the proper relaxation of the solution set,
and in the Count part one has to show that the number of non-solutions is 
even for each $W$ and provide an algorithm $\countproc$. 
Usually, as we will see in the expositions of the 
applications, the count part requires more explanation. 
In the next two subsections, we illustrate both parts by giving two 
specific applications.

\subsection{Steiner Tree}
\label{ssec:steiner}
\label{sec:steinertree}

\defproblemu{\steinertree}
{An undirected graph $G = (V,E)$, a set of terminals $T \subseteq V$ and an integer $k$.}
{Is there a set $X \subseteq V$ of cardinality $k$ such that $T \subseteq X$ and $G[X]$ is connected?}

\noindent {\bf{The Cut part.}}
Let us first consider the Cut part of the Cut\&Count technique, 
and start by defining the objects we are going to count.
Suppose we are given a 
weight function $\weight: V \rightarrow \{1,\ldots,N\}$. 
For any integer $\targetW$, let $\cand_\targetW$ be the set of all such 
subsets $X$ of $V$ that $T \subseteq X$, $\weight(X)=\targetW$ 
and $|X|=k$. 
Also, define $\sols_\targetW = \{ X \in \cand_\targetW \ |\ G[X] 
\text{ is connected}\}$. 
The set $\bigcup_\targetW \sols_\targetW$ is our set of solutions --- if for any $\targetW$
this set is nonempty, our problem has a positive answer.
The set $\cand_\targetW$ is the set of candidate solutions, where we
relax the connectivity requirement.
In this easy application the only requirement that remains is that the set
of terminals is contained in the candidate solution.

\begin{definition}
A cut $(V_1,V_2)$ of an undirected graph $G=(V,E)$ is \emph{consistent} if 
$u \in V_1$ and $v \in V_2$ implies $uv \notin E$. 
A {\em consistently cut subgraph} of $G$ is a pair $(X,(X_1,X_2))$ 
such that $(X_1,X_2)$ is a consistent cut of $G[X]$.

Similarly for a directed graph $D=(V,A)$ a cut $(V_1,V_2)$ is consistent
if $(V_1,V_2)$ is a consistent cut in the underlying undirected graph.
A consistently cut subgraph of $D$ is a pair $(X,(X_1,X_2))$ 
such that $(X_1,X_2)$ is a consistent cut of the underlying undirected graph of $D[X]$.
\end{definition}

Let $\anchor$ be an arbitrary terminal.
Define $\objs_\targetW$ to be the set of all consistently cut subgraphs 
$(X,(X_1,X_2))$ such that $X \in \cand_\targetW$ and $\anchor \in X_1$.



Before we proceed with the Count part, let us state 
the following easy combinatorial identity:

\begin{lemma}
\label{lem:evencancel}
Let $G=(V,E)$ be a graph and let $X$ be a subset of vertices such that 
$\anchor \in X \subseteq V$. 
The number of consistently cut subgraphs $(X,(X_1,X_2))$ such that 
$\anchor \in X_1$ is equal to $2^{\mathtt{cc}(G[X])-1}$.
\end{lemma}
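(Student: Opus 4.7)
The plan is to observe that the consistency condition on the cut $(X_1,X_2)$ of $G[X]$ forces the partition to respect connected components of $G[X]$, and then count the number of such component-level partitions in which the anchor lies on the prescribed side.

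First I would unpack the definition: a cut $(X_1,X_2)$ of $G[X]$ is consistent exactly when no edge of $G[X]$ has one endpoint in $X_1$ and the other in $X_2$. From this I would derive the key structural observation: if $C$ is a connected component of $G[X]$, then $C$ must lie entirely inside $X_1$ or entirely inside $X_2$, since otherwise some edge along a path inside $C$ would cross the cut, contradicting consistency. Conversely, any assignment of each connected component of $G[X]$ to one of the two sides yields a consistent cut, because edges only exist within components.

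Next I would count. Let $c = \mathtt{cc}(G[X])$. The argument above gives a bijection between consistent cuts $(X_1, X_2)$ of $G[X]$ and functions from the set of connected components of $G[X]$ to $\{1,2\}$; thus the total number of consistent cuts is exactly $2^{c}$. To enforce $\anchor \in X_1$, I would note that $\anchor \in X$ lies in a unique connected component $C_0$ of $G[X]$, and the constraint $\anchor \in X_1$ is equivalent to forcing $C_0$ to be assigned to side $1$. This removes one binary choice, leaving $2^{c-1}$ consistently cut subgraphs with $\anchor \in X_1$, which is the desired count.

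There is no real obstacle here; the only thing to be a bit careful about is the bijection direction, i.e., that every component-to-side assignment does produce a valid consistent cut (not just that every consistent cut induces such an assignment). This is immediate once one notes that edges of $G[X]$ never connect vertices of distinct components. The lemma then follows directly.
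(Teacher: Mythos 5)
Your proof is correct and follows essentially the same argument as the paper: both observe that a consistent cut must assign each connected component of $G[X]$ wholly to one side, that any such assignment is consistent, and that fixing the component containing $\anchor$ to the left side leaves $2^{\mathtt{cc}(G[X])-1}$ free choices. You are slightly more explicit about the bijection and its converse direction, but there is no substantive difference.
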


\begin{proof}
By definition, we know for every consistently cut subgraph 
$(X,(X_1,X_2))$ and connected component $C$ of $G[X]$ that either 
$C \subseteq X_1$ or $C \subseteq X_2$. 
For the connected component containing $\anchor$, the choice is fixed, 
and for all $\mathtt{cc}(G[X])-1$ other connected components we are free
to choose a side of a cut, which gives $2^{\mathtt{cc}(G[X])-1}$ possibilities 
leading to different consistently cut subgraphs.
\end{proof}

\noindent {\bf{The Count part.}} 
For the Count part, the following lemma shows that the first condition 
of Corollary \ref{cor:cutandcount} is indeed met:
\begin{lemma}
\label{lem:steincancel}
Let $G,\weight, \objs_\targetW$ and $\sols_\targetW$ be as defined above. 
Then for every $\targetW$, $|\sols_\targetW| \equiv |\objs_\targetW|$.
\end{lemma}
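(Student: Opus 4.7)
The plan is to derive the congruence directly from the combinatorial identity in Lemma~\ref{lem:evencancel} by grouping the pairs in $\objs_\targetW$ according to their first coordinate $X$.

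First I would observe that since $\anchor$ is a terminal, the condition $T \subseteq X$ in the definition of $\cand_\targetW$ forces $\anchor \in X$ for every $X \in \cand_\targetW$. Hence for every such $X$ the hypothesis of Lemma~\ref{lem:evencancel} is satisfied, and the number of consistently cut subgraphs $(X,(X_1,X_2))$ with $\anchor \in X_1$ equals $2^{\mathtt{cc}(G[X])-1}$. Summing over $X$ gives
\[
|\objs_\targetW| \;=\; \sum_{X \in \cand_\targetW} 2^{\mathtt{cc}(G[X])-1}.
\]

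Next I would split the sum according to whether $G[X]$ is connected. If $X \in \sols_\targetW$, then $\mathtt{cc}(G[X])=1$ and the summand is $1$. If $X \in \cand_\targetW \setminus \sols_\targetW$, then $\mathtt{cc}(G[X]) \geq 2$ and the summand $2^{\mathtt{cc}(G[X])-1}$ is even. Reducing modulo $2$ yields
\[
|\objs_\targetW| \;\equiv\; \sum_{X \in \sols_\targetW} 1 \;=\; |\sols_\targetW| \pmod{2},
\]
which is exactly the claim.

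There is no real obstacle in this particular step: the whole argument is a clean application of Lemma~\ref{lem:evencancel}, with the only thing to double-check being that the anchor vertex $\anchor$ indeed lies in $X$ for every candidate, which is guaranteed by the requirement $T \subseteq X$ in the definition of $\cand_\targetW$. The harder work is deferred to the Count part proper, where one must design $\countproc$ so that condition~(2) of Corollary~\ref{cor:cutandcount} holds; that is where the dynamic programming on $\treedecomp$ enters and where the $3^t$ bound will come from by storing, for each vertex of a bag, one of the three states ``not in $X$'', ``in $X_1$'', or ``in $X_2$''.
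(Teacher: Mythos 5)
Your proof is correct and follows exactly the paper's argument: apply Lemma~\ref{lem:evencancel} to each $X \in \cand_\targetW$, sum to get $|\objs_\targetW| = \sum_{X \in \cand_\targetW} 2^{\mathtt{cc}(G[X])-1}$, and reduce modulo~$2$. You merely spell out the (correct) observation that $\anchor \in X$ follows from $T \subseteq X$, which the paper leaves implicit.
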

\begin{proof}
Let us fix $\targetW$ and omit the subscripts accordingly. 
By Lemma \ref{lem:evencancel}, we know that 
$|\objs|=\sum_{X \in \cand}2^{\mathtt{cc}(G[X])-1}$. 
Thus $|\objs| \equiv \big|\{X \in \cand| \mathtt{cc}(G[X])=1\}
\big| = |\sols|$.
\end{proof}
Now the only missing ingredient left is the sub-procedure $\countproc$.
This sub-procedure, which counts the cardinality of $\objs_\targetW$
modulo 2, is a standard application of dynamic programming:
\begin{lemma}
\label{lem:stein}
Given $G=(V,E)$, $T \subseteq V$, an integer $k$, 
$\weight: V \rightarrow \{1,\ldots,N\}$ and a nice tree decomposition $\treedecomp$, 
there exists an algorithm that can determine $|\objs_\targetW|$ modulo $2$ 
for every $0 \leq \targetW \leq kN$ in $3^tN^2 |V|^{O(1)}$ time.
\end{lemma}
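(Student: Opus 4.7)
The plan is to run a standard dynamic program on the nice tree decomposition $\treedecomp$ that for every bag $x$ tabulates, modulo $2$, the number of partial consistently cut subgraphs of $G_x$ compatible with a prescribed signature on $B_x$. Concretely, for every bag $x$ I maintain a table
\[ A_x[s,i,w]\in\mathbb{Z}_2,\qquad s:B_x\to\{\mathbf{0},\mathbf{1},\mathbf{2}\},\quad 0\le i\le k,\quad 0\le w\le kN, \]
where $A_x[s,i,w]$ counts modulo $2$ the pairs $(X,(X_1,X_2))$ with $X\subseteq V_x$ such that (i) every edge of $G_x$ introduced so far is consistently cut, (ii) $X\cap B_x=s^{-1}(\{\mathbf{1},\mathbf{2}\})$ with $X_1\cap B_x=s^{-1}(\mathbf{1})$ and $X_2\cap B_x=s^{-1}(\mathbf{2})$, (iii) $|X|=i$ and $\weight(X)=w$, and (iv) every terminal of $T\cap V_x$ lies in $X$ while the anchor $\anchor$, once it has been introduced, has $s(\anchor)=\mathbf{1}$.

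Transitions follow the bag types of Definition~\ref{def:nicetreedecomp}. A leaf bag has $A_x[\emptyset,0,0]=1$. An introduce vertex bag adding $v$ branches over $s(v)\in\{\mathbf{0},\mathbf{1},\mathbf{2}\}$ and copies $A_y$, updating $i$ and $w$ when $s(v)\neq\mathbf{0}$, and forbidding $s(v)=\mathbf{0}$ when $v\in T$ (and $s(v)\neq\mathbf{1}$ when $v=\anchor$). An introduce edge bag for $uv$ keeps $A_y$ but zeroes out entries with $\{s(u),s(v)\}=\{\mathbf{1},\mathbf{2}\}$, thereby enforcing consistency. A forget bag for $v$ sums $A_y$ over the three (or two, if $v\in T$) possible values of $s(v)$. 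A join bag combines its children by the convolution
\[ A_x[s,i,w]=\sum_{i_1+i_2=i+|s^{-1}(\{\mathbf{1},\mathbf{2}\})|}\,\sum_{w_1+w_2=w+\weight(s^{-1}(\{\mathbf{1},\mathbf{2}\}))} A_l[s,i_1,w_1]\cdot A_r[s,i_2,w_2]\pmod 2, \]
where the shifts compensate for the fact that vertices of $B_x$ are contained in both $V_l$ and $V_r$ and would otherwise be counted twice in both $|X|$ and $\weight(X)$. Finally, $|\objs_W|\bmod 2$ is read off as $A_{r}[\emptyset,k,W]$ at the root $r$, where $B_r=\emptyset$.

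For the running time, each bag stores $3^t\cdot(k+1)\cdot(kN+1)=3^t\,|V|^{O(1)}\,N$ entries. All transitions except the join are performed in constant time per entry. At a join the convolution over $(i_1,w_1)$ costs $O(k\cdot kN)=O(|V|^2N)$ per entry, so the cost per bag is at most $3^t\cdot N\cdot|V|^{O(1)}\cdot N\cdot |V|^{O(1)}=3^tN^2|V|^{O(1)}$. Summing over the $|V|^{O(1)}$ bags of the nice tree decomposition yields the claimed bound, and iterating over all $W\le kN$ is absorbed into the table dimension itself.

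The main point requiring care, and the only genuine obstacle, is the join bag: because $B_x=B_l=B_r$ is shared, the partial objects reported by the two children each include the same contribution of $s^{-1}(\{\mathbf{1},\mathbf{2}\})$ to $|X\cap B_x|$ and to $\weight(X\cap B_x)$, which must be corrected by the shifts in the convolution indices above. Since no new edges are introduced at a join bag, the consistency of the cut encoded by $s$ is automatically preserved when gluing, and correctness of the entire scheme then follows by a routine induction on $\treedecomp$ verifying that the invariant for $A_x$ is maintained by each bag type.
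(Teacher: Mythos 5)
Your proof is correct and follows essentially the same route as the paper: the same three-state table indexed by bag colouring, accumulated size and accumulated weight, with the same leaf/introduce-vertex/introduce-edge/forget/join transitions (including the index shift at join bags to avoid double-counting shared bag vertices), and the same $3^tN^2|V|^{O(1)}$ accounting. The only cosmetic difference is your optional restriction to two forget-bag summands when $v\in T$, which is harmless but unnecessary since the introduce bag already zeroes the $s(v)=\mathbf{0}$ entries for terminals.
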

\begin{proof}
We use dynamic programming, but we first need some preliminary definitions.
Recall that for a bag $x \in \treedecomp$ we denoted by $V_x$ the set of vertices 
of all descendants of $x$, while by $G_x$ we denoted the graph composed 
of vertices $V_x$ and the edges $E_x$ introduced by the descendants of $x$.
We now define ``partial solutions'':
For every bag $x \in \treedecomp$, integers $0 \leq \iterk \leq k$, $0 \leq \iterW \leq kN$ 
and $s \in \{\zero,\oneone,\onetwo\}^{\bag{x}}$ define 
\begin{align*}
\cand_x(\iterk,\iterW) &= \Big{\{} X \subseteq V_x 
\big{|} \; 
  (T \cap V_x) \subseteq X \ \wedge\ 
  |X| = \iterk \ \wedge\ \weight(X)= \iterW \Big{\}}\\
	\objs_x(\iterk,\iterW) &= \Big{\{}(X, (X_1,X_2)) \big{|} \; 
	X \in \cand_x(\iterk,\iterW)\ \wedge\ (X, (X_1,X_2)) \text{ is a consistently 
	cut subgraph of } G_x \\
    &\qquad \wedge\ (\anchor \in \subbags{x} \Rightarrow \anchor \in X_1) \Big{\}} \\
	A_x(\iterk,\iterW,s) &= \Big{|}\Big{\{} (X, (X_1,X_2)) \in \objs_x(\iterk,\iterW) 
	\big{|}\; 
	\big(s(v) = \onej \Rightarrow v \in X_j\big)\	\wedge \
	\big(s(v) = \zero \Rightarrow v \notin X\big) \Big{\}}\Big{|}
\end{align*}
The intuition behind these definitions is as follows: the set
$\cand_x(\iterk,\iterW)$ contains all sets $X \subset V_x$ that could
potentially be extended to a candidate solution from $\cand$,
subject to an additional restriction that the cardinality and weight
of the partial solution are equal to $\iterk$ and $\iterW$, respectively.
Similarly, $\objs_x(\iterk,\iterW)$ contains consistently cut subgraphs,
which could potentially be extended to elements of $\objs$, again
with the cardinality and weight restrictions.
The number $A_x(\iterk,\iterW,s)$ counts those elements of $\objs_x(\iterk,\iterW)$
which additionally behave on vertices of $\bag{x}$ in a fashion prescribed by
the sequence $s$.
$\zero, \oneone$ and $\onetwo$ (we refer to them as colours) describe 
the position of any particular vertex with respect to a set $X$ with a
consistent cut $(X_1,X_2)$ of $G[X]$ --- the vertex can either be outside
$X$, in $X_1$ or in $X_2$.
In particular note that
$$\sum_{s \in \{\zero, \oneone, \onetwo\}^{\bag{x}}} A_x(\iterk,\iterW,s)
= |\objs_x(\iterk,\iterW)|$$ --- the various choices of $s$ describe all possible
intersections of an element of $\objs$ with $\bag{x}$.
Observe that since we are interested in values $|\objs_\targetW|$ modulo $2$
it suffices to compute values $A_\rootv(k,\targetW,\emptyset)$ for all $\targetW$
(recall that $\rootv$ is the root of the tree decomposition),
  because $|\objs_\targetW|=|\objs_\rootv(k,\targetW)|$.

We now give the recurrence for $A_x(\iterk,\iterW,s)$ which is used by the 
dynamic programming algorithm. 
In order to simplify the notation, let $v$ denote the vertex introduced and 
contained in an introduce bag, 
and let $y,z$ denote the left and right children of $x$ in $\treedecomp$, if present.
\begin{itemize}
\item \textbf{Leaf bag $x$}:
	\[ A_x(0,0,\emptyset) = 1 \]
	All other values of $A_x(\iterk,\iterW,s)$ are zeroes.
\item \textbf{Introduce vertex $v$ bag $x$}:
	\begin{eqnarray*}
     A_x(\iterk,\iterW,s[v \to \zero]) & = & [v \not\in T]A_y(\iterk,\iterW,s)  \\
	   A_x(\iterk,\iterW,s[v \to \oneone]) & = & A_y(\iterk-1,\iterW-\weight(v),s) \\
	   A_x(\iterk,\iterW,s[v \to \onetwo]) & = & [v \not= \anchor]A_y(\iterk-1,\iterW-\weight(v),s)
  \end{eqnarray*}
	For the first case note that by definition $v$ can not be coloured 
$\zero$ if it is a terminal.
For the other cases, the accumulators have to be updated and we have to make
sure we do not put $s(\anchor)=\mathbf{1}_2$.

\item \textbf{Introduce edge $uv$ bag $x$}:
	\[ A_x(\iterk,\iterW,s) = [s(u) = \zero\ \vee\ s(v) = \zero\ \vee\ s(u) = s(v)]A_y(\iterk,\iterW,s)  \]
   Here we filter table entries inconsistent with the edge $(u,v)$, 
	 i.e., table entries where the endpoints are coloured $\oneone$ and 
	 $\onetwo$.
\item \textbf{Forget vertex $v$ bag $x$}:
	\[ A_x(\iterk,\iterW,s) = \sum_{\alpha \in \{\zero,\oneone,\onetwo\}} A_x(\iterk,\iterW,s[v \to \alpha]) \]
	In the child bag the vertex $v$ can have three states so we sum 
	over all of them.
\item \textbf{Join bag}:
	\[ A_x(\iterk,\iterW,s) = \sum_{\iterk_1 + \iterk_2 = \iterk + |s^{-1}(\{\oneone, \onetwo\})|} \ \  \sum_{\iterW_1 + \iterW_2 = \iterW + \weight(s^{-1}(\{\oneone, \onetwo\}))} A_y(\iterk_1,\iterW_1,s) A_z(\iterk_2,\iterW_2,s) \]
	The only valid combinations to achieve the colouring $s$ is to have the 
	same colouring in both children. 
	Since vertices coloured $\onej$ in $B_x$ are accounted for in the
	accumulated weights of both of the children, 
	we add their contribution to the accumulators.
\end{itemize}
It is easy to see that the Lemma can now be obtained by combining the 
above recurrence with dynamic programming.
Note that as we perform all calculations modulo $2$, we take only constant
time to perform any arithmetic operation.
\end{proof}

We conclude this section with the following theorem.

\begin{theorem}\label{thm:st-main}
 There exists a Monte-Carlo algorithm that given a tree decomposition of width $t$
 solves \steinertree{} in $3^t |V|^{O(1)}$ time.
 The algorithm cannot give false positives and may give false negatives with probability at most $1/2$.
\end{theorem}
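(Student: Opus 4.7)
The plan is to assemble Theorem \ref{thm:st-main} as a direct synthesis of the machinery already developed, invoking the Cut\&Count framework captured by Algorithm \ref{alg:cutandcount} and Corollary \ref{cor:cutandcount}. First I would preprocess the input tree decomposition into a nice tree decomposition of the same width $t$ in polynomial time, as described after Definition \ref{def:nicetreedecomp}. Then I would instantiate the framework with universe $\univ = V$ and counting procedure $\countproc$ equal to the dynamic program from Lemma \ref{lem:stein}; the weights $\weight(v)$ are drawn uniformly and independently from $\{1,\ldots,2|V|\}$, so the outer loop iterates over $\targetW \in \{0, 1, \ldots, 2|V|^2\}$ and returns \textbf{yes} as soon as some $\targetW$ yields an odd count.

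Next I would verify the two preconditions of Corollary \ref{cor:cutandcount}. The first condition, namely $|\{(X,C) \in \objs \mid \weight(X) = \targetW\}| \equiv |\{X \in \sols \mid \weight(X) = \targetW\}| \pmod 2$, is exactly the content of Lemma \ref{lem:steincancel}; it in turn rests on the parity cancellation of Lemma \ref{lem:evencancel}, which ensures that disconnected candidates in $\cand_\targetW \setminus \sols_\targetW$ admit an even number of consistent cuts (with the anchor $\anchor$ fixed on the left side) and therefore contribute $0 \pmod 2$, while connected candidates each contribute exactly $1$. The second condition, $\countproc(\weight,\targetW,\treedecomp) \equiv |\{(X,C) \in \objs \mid \weight(X) = \targetW\}| \pmod 2$, is delivered by the dynamic programming recurrence of Lemma \ref{lem:stein}, instantiated at the root $\rootv$ with parameters $(k, \targetW, \emptyset)$.

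With both conditions in place, Corollary \ref{cor:cutandcount} immediately yields the probabilistic guarantee: no false positives, and probability at least $1/2$ of a correct \textbf{yes} answer when $\sols$ is nonempty, which matches the one-sided error claim. For the running time I would multiply the $O(|V|^2)$ iterations of the outer loop by the cost $3^t N^2 |V|^{O(1)}$ of a single call to $\countproc$ from Lemma \ref{lem:stein} with $N = 2|V|$; the $N^2$ and $|V|^2$ factors collapse into the $|V|^{O(1)}$ overhead, producing the claimed $3^t |V|^{O(1)}$ bound. No step presents any substantive obstacle, since all the combinatorial content (the parity cancellation argument) and all the algorithmic content (the three-state dynamic program with states $\{\zero, \oneone, \onetwo\}$ per bag vertex) have already been carried out in Lemmas \ref{lem:evencancel}, \ref{lem:steincancel} and \ref{lem:stein}; the proof is a bookkeeping composition of these results with Corollary \ref{cor:cutandcount}.
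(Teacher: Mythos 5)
Your proposal is correct and follows essentially the same route as the paper's proof: instantiate Algorithm \ref{alg:cutandcount} with $\univ = V$ and $\countproc$ from Lemma \ref{lem:stein}, verify the two conditions of Corollary \ref{cor:cutandcount} via Lemmas \ref{lem:steincancel} and \ref{lem:stein}, and read off the time bound. You merely spell out the bookkeeping that the paper's one-paragraph proof leaves implicit.
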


\begin{proof}
Run Algorithm \ref{alg:cutandcount} by setting $\univ=V$, 
and $\countproc$ to be the algorithm implied by Lemma \ref{lem:stein}. 
The correctness follows from Corollary \ref{cor:cutandcount} by setting  
$\sols= \bigcup_\targetW \sols_\targetW$ and $\objs= \bigcup_\targetW \objs_\targetW$
and Lemma \ref{lem:steincancel}. 
It is easy to see that the timebound follows from Lemma \ref{lem:stein}.
\end{proof}

\subsection{Directed Cycle Cover}\label{sec:ill:dcc}
\label{ssec:dcc}

\defproblemu{{\sc Directed} \mincyclecovername}
{A directed graph $D = (V,A)$, an integer $k$.}
{Can the vertices of $D$ be covered with at most $k$ vertex disjoint directed cycles?}

This problem is significantly different from the one 
considered in the previous section since the aim is to maximize 
connectivity in a more flexible way: 
in the previous section the solution induced one connected component, 
while it may induce at most $k$ weakly connected components in the context of the 
current section. 
Note that with the Cut\&Count technique as introduced above, 
the solutions we are looking for cancel modulo $2$. 

We introduce a concept called {\em markers}.
A set of solutions contains pairs $(X,M)$, where $X \subseteq A$
is a cycle cover and $M \subseteq V, |M|=k$ is a set of {\em marked}
vertices such that each cycle in $X$ contains at least one marked vertex.
Observe that since $|M|=k$ this ensures that in the set of solutions
in each pair $(X,M)$ the cycle cover $X$ contains at most $k$ cycles.
Note that two different sets of marked vertices of a single
cycle cover are considered to be two \emph{different solutions}.
For this reason we assign random weights \emph{both to the arcs and vertices of $D$}. 
When we relax the requirement that in the pair $(X,M)$ each cycle in $X$
contains at least one vertex from $M$ we obtain a set of candidate solutions.
The objects we count are pairs consisting of $(i)$ a pair $(X,M)$, where $X \subseteq A$ is a cycle cover and $M \subseteq V$ is a set of $k$ markers,
$(ii)$ a cut consistent with $D[X]$, where all the marked vertices $M$
are on the left side of the cut. 
We will see that candidate solutions that contain a cycle without any
marked vertex cancel modulo $2$.
Formal definition follows.

\noindent {\bf{The Cut part.}}
As said before, we assume that we are given a weight function 
$\weight: A \cup V \rightarrow \{1,\ldots,N\}$, where $N = 2|\univ| = 2(|A|+|V|)$.

\begin{definition}
For an integer $\targetW$ we define:
\begin{enumerate}
\item $\cand_\targetW$ to be the family of candidate solutions,
  that is $\cand_\targetW$ is the family of all pairs $(X,M)$,
  such that $X \subseteq A$ is a cycle cover, i.e., 
$\outdeg_X(v)=\indeg_X(v)=1$ for every vertex $v \in V$;
  $M \subseteq V$, $|M|=k$ and $\weight(X \cup M)=W$;
\item $\sols_\targetW$ to be the family of solutions,
  that is $\sols_\targetW$ is the family of all pairs $(X,M)$,
  where $(X,M) \in \cand_\targetW$ and
  every cycle in $X$ contains at least one vertex from the set $M$;
\item $\objs_\targetW$ as all pairs $((X,M),(V_1,V_2))$ such that:
 \[ (X,M) \in \cand_\targetW, \quad \quad 
    (V_1,V_2) \textrm{ is a consistent cut of } D[X], \quad 
		\textrm{  and } \quad M \subseteq V_1. \]
\end{enumerate}
\end{definition}
Observe that the graph $D$ admits a cycle cover with at most $k$ cycles
if and only if there exists $\targetW$ such that $\sols_\targetW$
is nonempty.

\noindent {\bf{The Count part.}} 
We proceed to the Count part by showing that candidate solutions that 
contain an unmarked cycle cancel modulo $2$.

\begin{lemma}
\label{lem:dircyccanc}
Let $D,\weight, \objs_\targetW$ and $\sols_\targetW$ 
be defined as above. 
Then for every $\targetW$, $|\sols_\targetW| \equiv |\objs_\targetW|$.
\end{lemma}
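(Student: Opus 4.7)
The plan is to mimic the argument used in Lemma \ref{lem:evencancel}, but now working one level higher: instead of summing over vertex sets $X$, I will sum over candidate pairs $(X,M) \in \cand_\targetW$, and the role of connected components of $G[X]$ will be played by the cycles of the cycle cover $X$.

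First I would fix an arbitrary candidate $(X,M) \in \cand_\targetW$ and count the number of consistent cuts $(V_1,V_2)$ of $D[X]$ with $M \subseteq V_1$. Since $X$ is a cycle cover, every vertex has in-degree and out-degree $1$ in $D[X]$, so the weakly connected components of $D[X]$ are exactly the cycles of $X$. By the definition of a consistent cut (applied to the underlying undirected graph), every such component must lie entirely in $V_1$ or entirely in $V_2$. The constraint $M \subseteq V_1$ forces every cycle containing at least one marker to be placed in $V_1$, whereas each cycle disjoint from $M$ may independently be assigned to either side. Writing $u(X,M)$ for the number of cycles of $X$ containing no marker, there are therefore exactly $2^{u(X,M)}$ valid cuts.

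Summing over all candidates yields
\[
|\objs_\targetW| \;=\; \sum_{(X,M) \in \cand_\targetW} 2^{u(X,M)}.
\]
Reducing modulo $2$, every term with $u(X,M) \geq 1$ vanishes, and only those $(X,M)$ with $u(X,M)=0$ contribute. By definition, the condition $u(X,M)=0$ is exactly the requirement that every cycle of $X$ contains a marker, i.e.\ $(X,M) \in \sols_\targetW$. Hence $|\objs_\targetW| \equiv |\sols_\targetW| \pmod 2$.

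There is essentially no serious obstacle here: the only point that requires a line of justification is the identification of the weakly connected components of $D[X]$ with the cycles of $X$, which is immediate from the cycle-cover property $\indeg_X(v)=\outdeg_X(v)=1$. The rest of the argument is a direct transcription of the pairing-by-cuts cancellation already used in Lemma \ref{lem:evencancel}, with markers playing the role of the anchor $\anchor$ that pins the connected component containing it to the left side.
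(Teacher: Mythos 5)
Your proof is correct and follows essentially the same route as the paper's: fix a candidate $(X,M)$, observe that the weakly connected components of $D[X]$ are precisely the cycles of $X$, note that a consistent cut with $M \subseteq V_1$ pins every marked cycle to $V_1$ while leaving each unmarked cycle free, giving $2^{u(X,M)}$ cuts, and then reduce modulo $2$. The paper phrases the freedom of unmarked cycles via a symmetric-difference swap and names your $u(X,M)$ as $\compnomark(M,X)$, but the argument is the same.
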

\begin{proof}
For subsets $M \subseteq V$ and $X \subseteq A$, let $\compnomark (M,X)$
denote the number of weakly connected components of $D[X]$ not containing any 
vertex of $M$. Then
\[
	|\objs_\targetW|= \sum_{(X,M) \in \cand_\targetW} \; 2^{\compnomark (M,X)}.
\]
To see this, note that for any $((X,M),(V_1,V_2)) \in \objs_\targetW$ 
and any vertex set $C$ of a cycle from $X$ such that 
$M \cap C = \emptyset$, we have $((X,M),(V_1 \triangle C ,V_2 \triangle C)) 
\in \objs_\targetW$ --- we can move all the vertices of $C$ to the other
side of the cut, also obtaining a consistent cut. 
Thus, for any set of choices of a side of the cut for every cycle
not containing a marker, 
there is an object in $\objs_\targetW$. 
Hence (analogously to Lemma \ref{lem:evencancel}) for any $W$ and $(M,X) \in \cand_\targetW$ there are $2^{\compnomark (M,X)}$ cuts $(V_1,V_2)$ such that 
$((X,M),(V_1,V_2)) \in \objs_\targetW$
and the lemma follows, because:
$$|\objs_\targetW| \equiv
|\{((X,M),(V_1,V_2)) \in \objs_\targetW : \compnomark (M,X) = 0 \}| 
= |\sols_\targetW|.$$ 
\end{proof}

\begin{lemma}
\label{lem:dircycdp}
Given $D=(V,A)$, an integer $k$, a weight function 
$\weight: A \cup V \rightarrow \{1,\ldots,N\}$ and a nice tree decomposition $\treedecomp$, 
there is an algorithm that can determine $|\objs_\targetW|$ modulo 2 for 
every $0 \leq \targetW \leq (k+|V|)N$ in $6^tN^2|V|^{O(1)}$ time.
\end{lemma}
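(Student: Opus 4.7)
The plan is to adapt the dynamic-programming template of Lemma~\ref{lem:stein} to the directed cycle-cover setting with markers. For each bag $x$ of the nice tree decomposition $\treedecomp$ I maintain a table $A_x(j,w,s)$ storing, modulo $2$, the number of partial objects $((X,M),(V_1,V_2))$ restricted to $G_x$ that use exactly $j$ markers, accumulate weight $w$, and induce on $\bag{x}$ a prescribed interface coloring $s \in \Sigma^{\bag{x}}$, where $\Sigma$ is a carefully designed six-element alphabet. The answer $|\objs_\targetW| \bmod 2$ is then read off as $A_{\rootv}(k,\targetW,\emptyset)$ at the root $\rootv$.

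The alphabet $\Sigma$ must, for each vertex $v \in \bag{x}$, encode (i) the side of the consistent cut on which $v$ lies and (ii) the partial in/out-degree of $v$ in $X$, so that the introduce-arc step can reject any new arc that would violate $\indeg_X(v)=\outdeg_X(v)=1$. Combining the two-element side label with three progress values --- \emph{no arcs fixed yet}, \emph{only the in-arc fixed}, and \emph{only the out-arc fixed} --- yields the six states; the fourth, ``both arcs fixed'' status is not carried inside $\Sigma$ and is handled at the forget-vertex step, by augmenting the tree decomposition so that a vertex is forgotten immediately after its second incident arc is introduced (or equivalently by treating the completed status as a virtual state summed out when the vertex is forgotten). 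Markedness is held as a separate DP dimension rather than part of $\Sigma$: when a vertex is introduced with a side-$1$ state we branch on whether to mark it, adding $\weight(v)$ to $w$ and incrementing $j$ in the marked branch, which faithfully enforces $M \subseteq V_1$.

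With $\Sigma$ in place the five bag-type recurrences are direct analogues of those of Lemma~\ref{lem:stein}: the leaf bag initializes $A_x(0,0,\emptyset)=1$; the introduce-vertex bag branches over side and mark/no-mark; the introduce-arc bag checks cut consistency between the endpoints of the arc and advances their progress labels, rejecting transitions that would raise an in- or out-degree past $1$; the forget-vertex bag keeps only colorings in which both arcs of $v$ have been placed; and the join bag performs the convolution
\[
A_x(j,w,s) \;=\; \sum_{j_1+j_2=j+\delta_j(s)}\ \sum_{w_1+w_2=w+\delta_w(s)} A_y(j_1,w_1,s)\,A_z(j_2,w_2,s),
\]
with $\delta_j(s),\delta_w(s)$ the usual corrections for bookkeeping on bag vertices counted on both sides. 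The dominant cost is the join bag at $6^t \cdot N^2 \cdot |V|^{O(1)}$, matching the claimed running time. Correctness follows by a routine induction on $\treedecomp$; the principal delicacy is the design of $\Sigma$ and the precise introduce-arc/join transitions so that the alphabet is simultaneously rich enough to forbid every infeasible partial cycle cover (rejecting any arc that would double an endpoint's in- or out-degree) and closed under the additive combination of partial in/out-degrees at a join bag, so that no state blow-up beyond six per bag vertex occurs. Once these transition rules are fixed, the bound $6^t N^2 |V|^{O(1)}$ is immediate from the table size and the per-bag convolution cost.
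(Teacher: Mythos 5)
Your six-state alphabet does not work as described. You pair a side label $\{1,2\}$ with three progress values (no arcs, only the in-arc, only the out-arc), and leave the degree-$(1,1)$ status out of $\Sigma$, proposing to compensate by restructuring the decomposition so that $v$ is forgotten immediately after its second $X$-arc is introduced, or by summing out a ``virtual'' done state. Neither device is available. The placement of the forget-$v$ bag is a structural feature of the nice tree decomposition, fixed once and for all, whereas which two incident arcs of $v$ end up in $X$ is a choice made inside the DP and varies across table entries; no single placement of forget bags can coincide with ``right after the second $X$-arc of $v$'' for all entries simultaneously. The ``sum it out at the forget bag'' variant fails for the same reason: between the introduce-arc bag that raises $v$ to degree $(1,1)$ and the forget-$v$ bag there can be arbitrarily many other bags (including joins), each of which needs a well-defined value $s(v)$. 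The problem resurfaces at join bags: if $v$ is (side $j$, in-arc) in one child and (side $j$, out-arc) in the other, the combined degree is $(1,1)$ and again lies outside $\Sigma$. The paper's six states are distributed differently: one state $\zz$ for degree $(0,0)$ and one state $\oo$ for degree $(1,1)$, neither of which carries a side (for a degree-$0$ vertex the side is not yet committed; for a degree-$2$ vertex it will never be consulted again), plus two states each (side $1$ or $2$) for degrees $(0,1)$ and $(1,0)$, giving $1+2+2+1=6$.

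Your marker mechanism also has a gap. You decide whether $v \in M$ when $v$ is introduced with a side-$1$ color, bumping the marker count and weight in the marked branch. But in a nice tree decomposition a vertex can be introduced in several introduce-vertex bags (one in each branch hanging below a join bag whose bag contains $v$), while it is forgotten exactly once. Since markedness lives only in the scalar accumulators $j$ and $w$ and not in $s$, the join bag cannot detect or correct the case where $v$ is marked in both children (double-counted) or in only one (inconsistent). And once the side label is dropped from the $\oo$ state as the paper does, the marking decision cannot be deferred to the forget bag either, because $v$'s side is no longer available there. The paper's detailed algorithm therefore marks \emph{arcs} rather than vertices, placing the marking decision at the introduce-arc bag, which occurs exactly once per arc. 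Finally, with six states per vertex the naive join costs $36^t$; the claimed $6^t$ bound relies on the specific state set being encodable into a small additive structure so that the join is a generalized subset convolution, a property of the paper's $\Sigma$ (with $\zz$ acting as a neutral element and $\zoone+\ozone=\oo$, etc.) that your proposal asserts but does not verify for your alphabet.
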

\begin{proof}[Proof sketch] We briefly sketch a 
$64^tN^2|V|^{O(1)}$ time algorithm, whereas the $6^tN^2|V|^{O(1)}$ time algorithm can be found in Appendix~\ref{sec:dirmincyc}.

Let $s \in \{\zzone, \zztwo, \zoone, \zotwo, \ozone, \oztwo, \ooone, 
\ootwo\}^{\bag{x}}$, and for a bag $x \in \treedecomp$ and integers 
$\iterk,\iterW$ let $A_x(\iterk,\iterW,s)$ be
the number of pairs $((X,M),(X_1,X_2))$ such that:
\begin{itemize}
  \item $X \subseteq E_x$, i.e., $X$ is a subset of the set of arcs 
	introduced by $x$ and its descendants,
	\item for every $v \in \bag{x}$ and every $\mathbf{i},\mathbf{o} \in 
	\{0,1\}$ we have $s(v)=\mathbf{io}_j \Rightarrow \indeg_X(v) = \mathbf{i} 
	\ \wedge \ \outdeg_X(v)=\textbf{o} \ \wedge v \ \in X_j$,
	\item for every $v \in V_x \setminus \bag{x}$ we have $\indeg_X(v) = \outdeg_X(v) = 1$.
	\item $(X_1,X_2)$ is a consistent cut of the graph $(V_x,X)$,
  \item $M \subseteq (X_1 \setminus \bag{x})$, $\weight(X) + \weight(M)=\iterW$ and $|M|=\iterk$.
\end{itemize}

To obtain all values $|\objs_\targetW| \mod 2$ it is enough to compute
$A_\rootv(k,\targetW,\emptyset)$ modulo two for all values of $\targetW$,
since $|\objs_\targetW| \equiv A_\rootv(k, \targetW, \emptyset)$.

Note that in the colouring we do not store the information whether a vertex
is a marker or not. This is due to the following observation:
Since the tree decomposition $\treedecomp$ is rooted in an empty bag,
for each each vertex $v \in V$ there exists exactly one bag of the tree 
decomposition which forgets $v$.
Hence if in the forget $v$ bag we have $s(v)=\mathbf{11}_1$
we have an option of making $v$ a marker and updating the accumulator
$\iterk$.

The running time $64^tN^2 |V|^{O(1)}$ can be obtained by using standard 
dynamic programming by using $A_x(\iterk,\iterW,s)$ as the table.
The $64 = 8^2$ comes from the join bags --- the naive way to calculate 
the values of $A_x(\iterk,\iterW,s)$ for a join bag $x$ would be to
iterate over all pairs choices of $(\iterk,\iterW,s)$ for the child bags,
and there are $|V|^{O(1)} N^2 64^t$ such choices to consider.
In order to obtain the claimed time complexity we need to reduce the 
number of states per vertex to six and to handle the join bags more 
efficiently.

To achieve these goals for vertices with colours $\zz$ and $\oo$ we do not 
specify the side of the cut and we use variants of the fast subset 
convolution~\cite{bhkk:fourier-moebius} algorithm.
The details can be found in Appendix \ref{sec:dirmincyc}.
\end{proof}

\begin{theorem}\label{thm:dcc-main}
 There exists a Monte-Carlo algorithm that given a tree decomposition of width $t$
 solves {\sc Directed} \mincyclecovername{} in $6^t |V|^{O(1)}$ time.
 The algorithm cannot give false positives and may give false negatives with probability at most $1/2$.
\end{theorem}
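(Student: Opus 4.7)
The plan is to obtain Theorem \ref{thm:dcc-main} as a direct instantiation of the Cut\&Count meta-framework (Algorithm \ref{alg:cutandcount} and Corollary \ref{cor:cutandcount}), using exactly the two lemmas just proved in this subsection as the building blocks. All the genuinely new content—the marker trick, the parity argument, and the dynamic program with six colours per vertex—already lives in Lemmas \ref{lem:dircyccanc} and \ref{lem:dircycdp}; what remains is bookkeeping.

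First I would invoke Algorithm \ref{alg:cutandcount} with universe $\univ = A \cup V$ and counting subroutine $\countproc$ supplied by Lemma \ref{lem:dircycdp}. Since the universe contains both arcs and vertices, the weight function $\weight$ randomises the cycle covers and the marker sets simultaneously, exactly as required by the Cut part in Section \ref{sec:ill:dcc}. Setting $\sols = \bigcup_\targetW \sols_\targetW$ and $\objs = \bigcup_\targetW \objs_\targetW$, condition (1) of Corollary \ref{cor:cutandcount} follows directly from Lemma \ref{lem:dircyccanc}, while condition (2) follows from Lemma \ref{lem:dircycdp}. Hence the algorithm returns \textbf{yes} with probability at least $1/2$ whenever $\sols$ is non-empty, and always \textbf{no} otherwise.

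Second, I would verify that the reduction from the original problem to ``$\sols \neq \emptyset$'' is sound. Here the one small point to check is that we want at most $k$ cycles while $|M|=k$ is fixed: if $D$ admits a cycle cover $X$ with $j \le k$ cycles, pick one vertex per cycle and then $k-j$ additional vertices arbitrarily to form a marker set $M$ of size exactly $k$ witnessing some $(X,M) \in \sols_\targetW$; conversely, any $(X,M) \in \sols_\targetW$ certifies that $X$ is a cycle cover with at most $|M| = k$ cycles. Thus Algorithm \ref{alg:cutandcount} decides the instance correctly with the required one-sided error.

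Finally, for the running time I would simply multiply the $6^t N^2 |V|^{O(1)}$ bound of Lemma \ref{lem:dircycdp} by the number $O(|\univ|^2)=|V|^{O(1)}$ of iterations of the outer loop in Algorithm \ref{alg:cutandcount}, noting that $N = 2|\univ|$ is polynomial in $|V|$; all polynomial factors collapse into $|V|^{O(1)}$, giving the claimed $6^t |V|^{O(1)}$ total time. No step is a real obstacle—the substantive work is already done in the two preceding lemmas; the proof is essentially the observation that their hypotheses line up exactly with those of Corollary \ref{cor:cutandcount}.
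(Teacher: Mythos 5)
Your proof is correct and follows exactly the paper's own argument: invoke Algorithm \ref{alg:cutandcount} with $\univ = A \cup V$ and the $\countproc$ from Lemma \ref{lem:dircycdp}, then obtain correctness from Corollary \ref{cor:cutandcount} via Lemma \ref{lem:dircyccanc}. The extra paragraph verifying that $\sols \ne \emptyset$ iff $D$ has a cycle cover with at most $k$ cycles is a welcome expansion of a fact the paper merely asserts at the end of the Cut part, but it is the same route, not a different one.
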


\begin{proof}
Run algorithm Algorithm \ref{alg:cutandcount} by setting $\univ= A \cup V$ 
and $\countproc$ to be the algorithm implied by Lemma 
\ref{lem:dircycdp}. 
The correctness follows from Corollary \ref{cor:cutandcount} by setting 
$\sols= \bigcup_\targetW \sols_\targetW$ and $\objs= \bigcup_\targetW \objs_\targetW$
and Lemma \ref{lem:dircyccanc}. 
It is easy to see that the timebound follows from Lemma \ref{lem:dircycdp}.
\end{proof}

\newcommand{\weifun}{{\omega}}
\newcommand{\tarwei}{{W}}
\newcommand{\tarnum}{{k}}
\newcommand{\cansol}{{\mathcal{R}}}
\newcommand{\consol}{{\mathcal{S}}}
\newcommand{\cutpar}{{\mathcal{C}}}
\newcommand{\maxwei}{{N}}
\newcommand{\solset}{{X}}
\newcommand{\marset}{{M}}
\newcommand{\solcut}{{C}}

\section{Applications of the technique for other problems}
\label{sec:applications}

We now proceed to sketch $|V|^{O(1)}c^{\tw(G)}$ algorithms for
other problems mentioned in the introduction. For the sake of brevity we
present only quick sketches here: for each problem we define
what set of solution candidates do we consider; which candidate-cut pairs
we count; if necessary we argue
why the non-connected candidates are counted an even number of times,
while the connected candidates are counted only once; and
finally we describe what states do we consider for a given bag and
a given weight-sum in the dynamic programming subroutine. We also
briefly mention what techniques do we use to compute the values of the
dynamic programming in the join bags, as this is the most non-trivial
bag to compute efficiently.

For full descriptions of the aforementioned algorithms we refer the reader
to Appendix \ref{sec:app:alg}.

\subsection{\fvs}\label{sec:main:fvs}

\newcommand{\forest}{\mathbf{F}}
\newcommand{\marker}{\mathbf{M}}

\defproblemu{\fvs}
{An undirected graph $G = (V,E)$ and an integer $k$.}
{Does there exist a set $Y \subset V$ of cardinality $k$ so that 
$G[V\setminus Y]$ is a forest?}

Here defining the families $\cansol$ and $\consol$ is somewhat more tricky,
as there is no explicit connectivity requirement in the problem to begin
with.
We proceed by choosing the (presumed) forest left after removing the
candidate solution and using the following simple lemma:

\begin{lemma}
A graph with $n$ vertices and $m$ edges is a forest iff it has at most 
$n-m$ connected components.
\end{lemma}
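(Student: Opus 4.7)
The plan is to reduce the statement to the classical fact that any connected graph on $n_i$ vertices has at least $n_i - 1$ edges, with equality precisely when the graph is a tree. This turns the lemma into a straightforward bookkeeping argument over the connected components.

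First I would decompose $G$ into its connected components $C_1, \ldots, C_c$, writing $n_i = |V(C_i)|$ and $m_i = |E(C_i)|$, so that $n = \sum_{i=1}^{c} n_i$ and $m = \sum_{i=1}^{c} m_i$. Applying the component-wise bound gives $m_i \geq n_i - 1$ for each $i$, and summing yields $m \geq n - c$, that is $c \geq n - m$. Moreover, the summed inequality is an equality if and only if each component-wise inequality is, which happens exactly when every $C_i$ is a tree, i.e., when $G$ is a forest.

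Both directions of the lemma then follow immediately. If $G$ is a forest, the equality case gives $c = n - m$, so in particular $c \leq n - m$. Conversely, if $c \leq n - m$, then combining this with the unconditional bound $c \geq n - m$ forces $c = n - m$, which by the equality characterization makes each component a tree, so $G$ is a forest. There is essentially no obstacle here: the only ingredient worth citing carefully is the standard edge-count lower bound for connected graphs, which itself admits a one-line induction on the number of edges (each added edge either merges two distinct components, decrementing the component count by one, or closes a cycle without changing it).
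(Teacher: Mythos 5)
Your proof is correct, but it is organized differently from the paper's. The paper argues incrementally: start from the edgeless graph $G_0 = (V,\emptyset)$ with $n$ components, add the $m$ edges one at a time, and observe that each addition decreases the component count by one or leaves it unchanged (the latter exactly when the edge closes a cycle). Hence the final count is always at least $n-m$, with equality precisely when no edge ever closes a cycle, i.e.\ when $G$ is a forest. You instead decompose $G$ by connected components, invoke the classical bound $m_i \ge n_i - 1$ for each connected component $C_i$ with equality iff $C_i$ is a tree, and sum. The two arguments are equivalent at bottom --- indeed the one-line induction you mention for the component-wise bound is essentially the paper's edge-addition argument in disguise --- but yours is more modular (it factors the work through a quotable classical lemma), while the paper's is more self-contained and avoids needing to first establish or cite the per-component bound. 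Either is a perfectly acceptable proof of the lemma.
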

The simple proof is given in the appendix.

Thus, we ensure that a solution is a forest by counting the vertices
and edges, and ensuring the number of connected components is bounded
from above by using markers, as in Section \ref{sec:ill:dcc}.

Note that here we want to keep track of two distinct vertex sets --- $M$ and
$X$. 
We thus set $\univ = V \times \{\forest,\marker\}$, i.e.,
for each vertex $v \in V$ we choose two weights:
$\weight((v, \forest))$ and $\weight((v,\marker))$,
one for $v \in X$, and the second one for $v \in M$ (the details of this
are presented in the appendix).

The family of solution candidates is defined (as in the \mincyclecovername 
case, where we also used markers)
as the family of pairs $(\solset,\marset)$ such that 
$\marset \subseteq \solset \subseteq V$.
We use accumulators to keep track of the size of $\solset$, the number of 
edges in $G[\solset]$,
the size of $\marset$ and the total weight of
$(\solset \times \{\forest\}) \cup (\marset\times \{\marker\})$.
The solution is a solution candidate $(\solset,\marset)$ with the additional properties
that $G[\solset]$ is a forest and each connected component of $G[\solset]$ contains at least
one vertex from $\marset$.

With a pair $(\solset,\marset)$ we associate consistent cuts of
$G[\solset]$ with all vertices from $\marset$ being on the left side of the
cut.

In this case a pair $(\solset, \marset)$ is consistent with exactly one 
cut iff every connected component of $G[X]$ contains at least one marker;
on the other hand if there are $\compnomark$ components containing no 
markers, we have $2^\compnomark$ consistent cuts.

For each bag we keep states with the following parameters:
\begin{itemize}
\item the number of vertices chosen to be in $\solset$ ($|V|+1$ possiblities);
\item the number of already introduced edges in $G[\solset]$ 
($|V|$ possibilities,
we can discard solution candidates with at least $|V|$ edges in $G[\solset]$);
\item the number of markers already chosen ($|V|+1$ possibilities);
\item the sum of the appropriate weights of vertices already in $\solset$ 
and already in $\marset$;
\item for each vertex in the bag one of the three possible colours: either
it is not in $\solset$, or it is in $\solset$ and on the left, or in
$\solset$ and on the right ($3^{\tw(G)}$ possibilities in total).
\end{itemize}
Note that (as in Section \ref{sec:ill:dcc}) we do not remember whether
a vertex is a marker, instead simply making the choice in the appropriate
forget bag.

Our algorithm answers ``yes'' if for some integer $m$ and for some
weight sum $\tarwei$ there is an odd number of consistent pairs with 
$|V| - \tarnum$ vertices, $m$ edges, $|V| - \tarnum - m$ markers.
The join bag is trivial in this case --- the choices for each vertex have
to match exactly.

The details of the algorithm are given in Appendix \ref{sec:app:fvs}.

\subsection{\cvertexcover}

\defproblemu{\cvertexcover}
{An undirected graph $G = (V,E)$ and an integer $\tarnum$.}
{Does there exist such a connected set $\solset \subset V$ of cardinality 
$\tarnum$ that each edge is incident to at least one vertex from $\solset$?}

We choose one vertex $\anchor$ which we assume to be in the
solution (we can check just two choices, using two endpoints of some edge).
As we choose a vertex set, we randomly select a weight function
$\weifun : V \ra \{1,\ldots,\maxwei\}$.
The family of solution candidates $\cansol_\tarwei$ consists of
vertex covers of size $\tarnum$ with $\weifun(\solset) = \tarwei$ and containing $\anchor$.
The family of solutions $\consol_\tarwei$ contains
elements of $\cansol_\tarwei$ that induce a connected subgraph.
As in the \steinertree problem, we take $\cutpar_\tarwei$
to be the family of pairs $(\solset, \solcut)$, with $\solset \in
\cansol_\tarwei$ and $\solcut$ being a consistent cut of
$G[\solset]$, with $\anchor$ on the left side of the cut
(to break the symmetry).

In the dynamic programming subroutine, for each bag we keep states with
the following parameters:
\begin{itemize}
\item the number of vertices already chosen to be in $\solset$ ($\tarnum+1$ 
possibilities);
\item the sum of the weights of those vertices;
\item for each vertex one of the three states: in $\solset$ and on the left,
in $\solset$ and on the right, or not in $\solset$
($3^{\tw(G)}$ possibilities in total).
\end{itemize}
The vertex cover condition is checked in the introduce edge bags.
We need no tricks in the join bag, the states of all the vertices need
to match.

The details of the algorithm are given in Appendix \ref{sec:app:cx}.

\subsection{\cdomset}
\defproblemu{\cdomset}
{An undirected graph $G = (V,E)$ and an integer $\tarnum$.}
{Does there exist such a connected set $\solset \subset V$ of cardinality at
most $\tarnum$ that $N[\solset] = V$?}

The reasoning here matches the one for the \cvertexcover problem almost exactly.
We fix some vertex $\anchor$, which we require to be a part of the solution
(to obtain a general algorithm we iterate over all possible choices of 
$\anchor$).
As we choose a vertex set, we randomly select a weight function
$\weifun : V \ra \{1,\ldots,\maxwei\}$.
The family $\cansol_\tarwei$ is the family of dominating sets
$\solset$ in $G$ containing $\anchor$
of size $\tarnum$ and weight $\tarwei$,
while $\consol_\tarwei$ is the family of those sets $\solset \in
\cansol_\tarwei$ for which $G[\solset]$ is connected.
The family $\cutpar_\tarwei$ is defined as previously.

For each bag we keep states with the following parameters:
\begin{itemize}
\item the number of vertices already chosen to be in $\solset$ ($\tarnum+1$
possibilities);
\item the sum of the weights of all vertices already in $\solset$ 
($\maxwei\tarnum+1$ possibilities);
\item for each vertex in the bag one of the four possible states: in 
$\solset$ and on the left, in $X$ and on the right, 
not in $\solset$ and adjacent to a vertex already in $X$, or not in
$\solset$ and not adjacent to a vertex
already in $X$ ($4^{\tw(G)}$ possibilities in total).
\end{itemize}

In the join bag the states of the vertices in $X$ have to match, while
the states of the vertices not in $X$ are joined using a standard
Fast Subset Convolution procedure.
The details are given in Appendix \ref{sec:app:cx}.
\subsection{\coct}

\defproblemu{\coct}
{An undirected graph $G = (V,E)$ and an integer $\tarnum$.}
{Does there exist such a connected set $\solset \subset V$ of cardinality 
$\tarnum$ that $G[V \setminus \solset]$ is bipartite?}

The reasoning here matches the one for the previous two problems almost exactly.
We choose one vertex $\anchor$ which we assume to be in the solution.
The only catch is that the naive dynamic programming that solves \oct{} partitions
the vertex set into three (not two) sets: the odd cycle transversal and the bipartion of the
resulting bipartite graph. Thus
$\cansol_\tarwei$ is the family of such partitions of $V$,
  i.e., for one odd cycle transversal $\solset$,
  different bipartitions of $G[V \setminus \solset]$
  result in different solution candidates.
The integer $\tarnum$ given in the input represents the size of the odd cycle transversal,
    whereas the index $\tarwei$ represents the weight of the partition.
As each vertex is in one of three sets in an element of $\cansol_\tarwei$,
we need to generate two weights per vertex, so that each element of $\cansol_\tarwei$
corresponds to a different subset of the domain of the weight function $\weight$.
As previously, $\consol_\tarwei$ are the elements of $\cansol_\tarwei$ where
the odd cycle transversal is connected, and in $\cutpar_\tarwei$ we pair up
candidate solutions with cuts consistent with the odd cycle transversal where $\anchor$
is on a fixed side of the cut.

For each bag we keep states with the following parameters:
\begin{itemize}
\item the number of vertices already chosen to be in $\solset$ ($\tarnum+1$ 
possibilities);
\item the weight of the partition;
\item for each vertex one of the four states: in $\solset$ and on the left,
in $\solset$ and on the right, or in one of two colour classes of $G[V \setminus \solset]$
($4^{\tw(G)}$ possibilities in total).
\end{itemize}
The condition whether the chosen colour classes are correct is checked in the introduce edge bags.
We need no tricks in the join bag, the states of all the vertices need
to match.
The details are given in Appendix \ref{sec:app:cx}.
\subsection{\cfvs}

\defproblemu{\cfvs}
{An undirected graph $G = (V,E)$ and an integer $k$.}
{Does there exist a set $Y \subset V$ of cardinality $k$ so that $G[Y]$ is connected and
$G[V\setminus Y]$ is a forest?}

Here we use the same approach as in the previous three algorithms:
we take an algorithm for \fvs{} and add cuts consistent with the solution.
However, now the base algorithm is not that easy,
 because it is the one described in Section~\ref{sec:main:fvs}, already using the Cut\&Count technique.
We thus need to apply Cut\&Count twice here, but --- as we will see ---
there are no significant difficulties.

As before, we fix one vertex $\anchor$ to be included in the connected feedback vertex set $Y$.

As in the \fvs{} algorithm, we ensure that a solution induces a forest by counting vertices
and edges, and ensuring the number of connected components is bounded
from above by using markers, as in Section~\ref{sec:ill:dcc}.

As we want to keep track of two distinct vertex sets --- $M$ and
$X$, we set $\univ = V \times \{\forest,\marker\}$, i.e.,
for each vertex $v \in V$ choose two weights
$\weight((v, \forest))$ and $\weight((v,\marker))$
one for $v \in X$, and the second one for $v \in M$.

The family of solution candidates is defined just as in the \fvs{} case:
as the family of pairs $(\solset,\marset)$ such that $\marset \subseteq \solset \subseteq V$, with 
accumulators keeping track of the size of $\solset$, the number of edges in $G[\solset]$,
the size of $\marset$ and the total weight of
$\solset \times \{\forest\} \cup \marset\times \{\marker\}$.
The solution is a solution candidate $(\solset,\marset)$ with the additional properties
that $G[\solset]$ is a forest, each connected component of $G[\solset]$ contains at least
one vertex from $\marset$ and, additionally to the \fvs{} case, that $G[V \setminus \solset]$ is connected.

With a pair $(\solset,\marset)$ we associate two consistent cuts:
one of $G[\solset]$ with all vertices from $\marset$ being on the left side of the
cut, and one of $G[V \setminus \solset]$, where $\anchor$ is on the left side
of the cut.

In this case a pair $(\solset, \marset)$ is consistent with exactly one 
cut of $G[X]$ iff every connected component of $G[X]$ contains at least one marker;
on the other hand if there are $c$ components containing no markers,
we have $2^c$ consistent cuts.
Moreover, a pair $(\solset, \marset)$ is consistent with $2^{\conncomp(G[V \setminus \solset])-1}$
cuts of $G[V \setminus \solset]$. Thus, a pair $(\solset,\marset)$ is counted only once
if it is a solution, and an even number of times otherwise.

The dynamic programming proceeds exactly as in the \fvs case.
The details are given in Appendix \ref{sec:app:cx}.



\subsection{{\sc{Undirected}} \mincyclecovername}

We use similar approach as for the directed case in Section~\ref{ssec:dcc}.
However in the undirected case the number of states is decreased
because instead of keeping track of both indegree and
outdegree of a vertex we handle only a single degree. 
Similarly as for the directed case we do not store the side of the cut
for vertices of degree zero and two, which gives exactly $4$ states per vertex
and leads to $4^{\tw(G)} |V|^{O(1)}$ time complexity.

In the join bag we need to use a variant of the Fast Fourier Transform.
Details can be found in Appendix~\ref{sec:dirmincyc}.

\subsection{{\sc(Directed)} \longestcycle and {\sc{(Directed)}} \longestpath}

\defproblemu{{\sc{(Directed)}} \longestcycle}
{An undirected graph $G=(V,E)$ (or a directed graph $D = (V,A)$) and an integer $k$.}
{Does there exist a (directed) simple cycle of length $k$ in $G$ ($D$)?}

\defproblemu{{\sc{(Directed)}} \longestpath}
{An undirected graph $G=(V,E)$ (or a directed graph $D = (V,A)$) and an integer $k$.}
{Does there exist a (directed) simple path of length $k$ in $G$ ($D$)?}

Obviously an algorithm for \longestcycle implies an algorithm of the same time
complexity for the \hamcycle problem.
Moreover in Appendix~\ref{sec:dirmincyc} we show that the \longestpath problem
both in the directed and undirected case may be reduced to
the appropriate variant of the \longestcycle problem.

Observe that for the {\sc{(Directed)}} \longestcycle problem
we can mimic the algorithm for the {\sc{(Directed)}} \mincyclecovername
problem.
It is even easier because we are looking for a connected object
which means that we do not have to use markers.
The only difference between {\sc{(Directed)}} \longestcycle problem
and {\sc{(Directed)}} \mincyclecovername
is that in the {\sc{(Directed)}} \longestcycle problem
we need to count the number of chosen edges, since
we allow vertices of degree zero.

Details can be found in Appendix~\ref{sec:dirmincyc}.

\subsection{\exactleaf and \exactoutbranching}

\defproblemu{\exactleaf}
{An undirected graph $G = (V,E)$ and an integer $\tarnum$.}
{Does there exists a spanning tree of $G$ with exactly $\tarnum$ leaves?}

\defproblemu{\exactoutbranching}
{A directed graph $D = (V,A)$ and an integer $\tarnum$, 
and a root $r \in V$.}
{Does there exist a spanning tree of $D$ with all edges directed away from
the root with exactly $\tarnum$ leaves?}

The above problems generalize the following problems: \maxleaf,
\minleaf, \maxoutbranching and \minoutbranching,
which ask for the number of leaves to be at least $\tarnum$ 
or at most $\tarnum$.

In this subsection we only sketch a natural $8^{\tw(G)}|V|^{O(1)}$
solution to the \exactoutbranching problem 
since it generalizes the \exactleaf problem (simply direct
each edge in both directions and add a root $r$ with a single outgoing arc).
This can be improved to $4^{\tw(G)}|V|^{O(1)}$
for \exactleaf and to $6^{\tw(G)}|V|^{O(1)}$ for
\exactoutbranching using a less intuitive definition of solution 
candidates together with a binomial transform for join bags,
which is described in Appendix~\ref{sec:app:exact-k-stuff}.

As we choose an arc set, we randomly select a weight function
$\weifun : A \ra \{1,\ldots,\maxwei\}$.
The family of solution candidates $\cansol_\tarwei$ consists of
sets of exactly $|V|-1$ arcs of total weight $\tarwei$,
such that exactly $\tarnum$ vertices have no outgoing arc,
each vertex except the root has exactly one incoming arc
and the root $r$ has no incoming arc.
$\consol_\tarwei$ are the solution candidates 
$\solset \in \cansol_\tarwei$ such that 
the underlying undirected graph of $D[\solset]$ is connected,
which is equivalent to $X$ being an outbranching.
We take $\cutpar_\tarwei$ to be the family 
of pairs $(\solset, \solcut)$, with $\solset \in
\cansol_\tarwei$ and $\solcut$ being a consistent cut of
the underlying undirected graph of $D[\solset]$.

For each bag we keep states with the following parameters:
\begin{itemize}
\item the number of already chosen arcs,
\item the sum of the weights of those arcs,
\item the number of already forgotten vertices with no outgoing arcs,
\item for each vertex one of eight states, denoting
the side of the cut (two possibilities), the indegree ($0$ or $1$, two possibilities)
  and whether we have already chosen some outgoing arc from this vertex (two possibilities).
\end{itemize}

For the merging of states we would have to use the Fast
Subset Convolution algorithm (details in Appendix \ref{sec:app:exact-k-stuff}).

\subsection{\maxspantree}
\defproblemu{\maxspantree}
{An undirected graph $G = (V,E)$ and an integer $\tarnum$.}
{Does there exist a spanning tree $T$ of $G$ for which there are at least
$\tarnum$ vertices satisfying $\deg_G(v) = \deg_T(v)$?}

A solution is any set $\solset \subset E$ with the following 
properties:
\begin{itemize}
\item $|\solset| = |V|-1$;
\item there are exactly $\tarnum$ vertices $v$ for which 
$\deg_G(v) = \deg_{G[X]}(v)$;
\item $G[\solset]$ is connected 
(as we have $|V|-1$ edges, this is equivalent to $G[\solset]$ being a tree).
\end{itemize}
We define solution candidates and the pairs to count as usual,
i.e., we count consistent cuts of $G[\solset]$.
Note that we cut the whole set $V$ (instead of only cutting the vertices incident to 
some edge of $\solset$ --- as we want to assure that all the vertices
are connected, and not only $G[V(\solset)]$).

In each bag we parametrize the states as follows:
\begin{itemize}
\item the number of edges already chosen to be in $\solset$ 
($V$ possibilities);
\item the sum of their weights ($\maxwei|V|$ possibilities);
\item the number of already forgotten vertices satisfying $\deg_G(v) = 
\deg_{G[\solset]}(v)$ ($\tarnum+1$ possibilities);
\item for each vertex in the bag, we remember on which side of the cut it
is, and whether there was an introduced edge incident to this vertex which
was not chosen to be included in $\solset$ ($4^{\tw(G)}$ possibilities in total).
\end{itemize}
In the join bag we use the standard Fast Subset Convolution algorithm. For
a precise description, see Appendix \ref{sec:app:mfdst}.

\subsection{\gmtsp}
\defproblemu{\gmtsp}
{An undirected graph $G = (V,E)$ and an integer $\tarnum$.}
{Does there exist a closed walk (possibly repeating edges and vertices) of length
at most $\tarnum$ that visits each vertex of the graph at least once?}

Note that the existence of such a cycle is equivalent to the existence
of a multisubset $\solset$ of edges, which is Eulerian 
(that is $(V,\solset)$ is connected and the degree of each vertex is even).
In particular this means each edge can be assumed to occur in 
$\solset$ at most twice (otherwise we could remove two copies of this edge).

For each edge we have to decide whether we choose it twice, once or not at
all.
To avoid dependency problems, we assign two different independent
weights to an edge, and add the first to the some if the edge is taken
once, and the second if it is taken twice.
Note that we also needed two types of weights in the \fvs problem
where have weights for chosen vertices and for markers.

A solution candidate is a multiset of edges with each edge taken at most
twice, the total cardinality not exceeding $\tarnum$, and each vertex having
an even degree.
The family of solutions consists of such solution candidates $\solset$
that $G[\solset]$ is connected.
We choose the cuts we count with an edge as usual, i.e.,
we count consistent cuts of $G[\solset]$.
As usual we pick one vertex to
be always on the left side of the cut.

In each bag, in addition to the number of edges already chosen and the
sum of their weights, we keep for each vertex the side of the cut and
the parity of the number of already chosen edges incident to this vertex
($4^{\tw(G)}$ possibilites in total).
The values in the states of the join bag are calculated using the
Hadamard transform (details in Appendix \ref{sec:app:gmtsp}).

\section{Lower bounds}\label{sec:negatives:intro}

In this section we describe a bunch of negative results
   concerning the possible time complexities for algorithms for connectivity problems
   parameterized by treewidth or pathwidth.
Our goal is to complement our positive results by showing that
in some situations the known algorithms (including ours) probably cannot be
further improved.

First, let us introduce the complexity assumptions made in this section.
Let $c_k$ be the infimum of the set of the positive reals $c$ that satisfy the following condition:
there exists an algorithm that solves $k$-\sat{} in time $O(2^{cn})$, where $n$ denotes
the number of variables in the input formula. 
The Exponential Time Hypothesis (ETH for short) asserts that $c_3 > 0$,
whereas the Strong Exponential Time Hypothesis (SETH) asserts
  that $\lim_{k \to \infty} c_k = 1$. It is well known that SETH implies ETH \cite{ip:seth}.

The lower bounds presented below are of two different types.
In Section \ref{sec:eth:intro} we discuss several problems
that, assuming ETH, do not admit an algorithm
running in time $2^{o(p \log p)} n^{O(1)}$, where $p$ denotes the pathwidth of the input graph.
In Section \ref{sec:seth:intro} we state that, assuming SETH,
 the base of the exponent in our algorithms for \cvertexcover, \cdomset, \cfvs, \coct,
 \fvs, \steinertree and \exactleaf cannot be improved further.
 All proofs are postponed to Appendices \ref{sec:negatives:eth} and \ref{sec:negatives:seth}
 respectively.

\subsection{Lower bounds assuming ETH}\label{sec:eth:intro}

In Section \ref{sec:applications} we have shown that a lot of well-known
algorithms running in $2^{O(t)} n^{O(1)}$ time can be turned into algorithms
that keep track of the connectivity issues, with only small loss in the base of the exponent.
The problems solved in that manner include \cvertexcover{}, \cdomset{}, \cfvs{} and \coct{}.
Note that using the markers technique introduced in Section \ref{sec:ill:dcc}
we can solve similarly the following artificial generalizations:
given a graph $G$ and an integer $r$,
what is the minimum size of a vertex cover (dominating set, feedback vertex set, odd cycle transversal)
that induces {\bf{at most}} $r$ connected components?

We provide evidence that problems in which we would ask 
to maximize (instead of minimizing)
the number of connected components are harder: they probably
do not admit algorithms running in time $2^{o(p \log p)} n^{O(1)}$, where
$p$ denotes the pathwidth of the input graph.
More precisely, we show that assuming ETH there do not exist
algorithms for \cyclepackingname{}, \maxcyclecovername{} and \maxccdsname{}
running in time $2^{o(p \log p)} n^{O(1)}$.

Let us start with formal problem definitions. The first two problems have undirected and directed
versions.

\defproblemu{\cyclepackingname{}}{A (directed or undirected) graph $G = (V,E)$ and an integer $\ell$}{Does $G$ contain $\ell$ vertex-disjoint cycles?}

\defproblemu{\maxcyclecovername{}}{A (directed or undirected) graph $G = (V,E)$ and an integer $\ell$}{Does $G$ contain a set of at least $\ell$ vertex-disjoint cycles such that each vertex of $G$ is on exactly one cycle?}

The third problem is an artificial problem defined by us that should be compared to \cdomset{}.

\defproblemu{\maxccdsname{}}{An undirected graph $G=(V,E)$ and integers $\ell$ and $r$.}{Does $G$ contain a dominating set of size at most $\ell$ that induces {\bf{at least}} $r$ connected components?}

We prove the following theorem:
\begin{theorem}\label{thm:negative-main}
Assuming ETH, there is no $2^{o(p \log p)} n^{O(1)}$ time algorithm for \cyclepackingname{}, \maxcyclecovername{} (both in the directed and undirected setting) nor for \maxccdsname{}. 
The parameter $p$ denotes the width of a given path decomposition of the input graph.
\end{theorem}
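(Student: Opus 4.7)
The plan is to prove all three lower bounds uniformly by a reduction from a source problem with an $n^{\Omega(k)}$ ETH-based lower bound, namely the \kkhittingsetname{} problem in the style of Lokshtanov, Marx and Saurabh~\cite{marx:superexp}. For each target problem, I would exhibit a polynomial-time reduction that turns an instance of \kkhittingsetname{} over a universe of size $n$ into an equivalent instance of the target problem whose underlying graph has $n^{O(1)}$ vertices and admits a path decomposition of width $O(k)$. Composing this with a hypothetical $2^{o(p \log p)} n^{O(1)}$-time algorithm for the target would yield a $2^{o(k \log k)}$-time algorithm for \kkhittingsetname{}, contradicting ETH.

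For \cyclepackingname{}, I would lay the construction out as $k$ parallel ``tracks'', one per row of the source instance. Each track is a sequence of selection gadgets where a vertex-disjoint cycle is forced to commit to exactly one of $n$ column values, with the committed value being readable from which of $n$ parallel edges the cycle uses inside a gadget of constant size. For each set $S$ in the \kkhittingsetname{} instance I would add a small verification gadget crossing the relevant tracks, containing one additional short cycle that can be packed iff at least one of those tracks' committed columns lies in $S$. The demanded cycle count $\ell$ is then $k$ plus the number of set gadgets. A linear path decomposition following the track layout carries $O(1)$ portal vertices per track across every bag, giving pathwidth $O(k)$.

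The same skeleton adapts to \maxcyclecovername{} and its directed analogue: in the cover setting I would attach trivial local cycles (e.g. triangles) to all auxiliary vertices so that the selection-cycle packing extends uniquely to a cycle cover, while the directed version is obtained by choosing an orientation of every gadget cycle, which preserves pathwidth and inherits the same ETH lower bound for directed \kkhittingsetname{}. For \maxccdsname{}, the construction is different in flavour: I would build $k$ row-gadgets, each offering $n$ candidate vertices, of which exactly one must be chosen into the dominating set, each such choice contributing one isolated component; the set gadgets would be designed to be dominated (and hence not enlarge $\ell$) iff at least one of the selected columns hits them, and, crucially, never merge two selection components. Setting $r = k$ and $\ell$ equal to $k$ plus the total dominating cost of the set gadgets then captures the \kkhittingsetname{} question exactly.

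The main technical obstacle in all three constructions is enforcing simultaneously that (i) each selection gadget admits exactly the $n$ intended committed behaviours and no parasitic alternatives, (ii) the set-verification gadgets propagate the hitting condition without breaking the constant-per-track bag size, and (iii) in the \maxccdsname{} case, the component count is controlled to be exactly the number of correctly made selections. The key trick is to encode an $n$-valued choice using only $O(1)$ vertices live in any bag by routing the choice through $n$ parallel edges (or pendants) attached to the same two portal vertices of the track, rather than $n$ separate portal vertices; then the reduction size is $n^{O(1)}$, the pathwidth bound of $O(k)$ follows immediately from the linear track layout, and correctness reduces to a routine case analysis on each gadget.
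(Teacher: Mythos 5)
Your high-level framework — reduce from a $k\times k$ hitting-set problem in the style of Lokshtanov, Marx and Saurabh, aiming for pathwidth $O(k)$ and reduction size $\mathrm{poly}(n,m)$ — is indeed the framework the paper follows. But the sketch underestimates the hard parts, and two of the three constructions as you describe them have genuine gaps.

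First, for \cyclepackingname{} the central difficulty is not building selection and verification gadgets locally, but controlling globally which cycles can exist. A cycle in your layout is not confined to one track or one gadget: nothing a priori prevents a single long cycle from weaving through two tracks' selection gadgets and a verification gadget simultaneously, which can witness the target count $\ell$ without encoding any hitting set. Your remark that ``correctness reduces to a routine case analysis on each gadget'' misses this: the case analysis needed is on all of $G$, not per gadget. The paper's construction resolves this with a non-local argument: there is a small vertex set $Z$ (roughly $2k + 2\sum_s |S_s|$ vertices) such that $G$ minus $Z$ is a forest, so every cycle must use at least two $Z$-vertices, and counting then shows that an optimal family uses exactly two per cycle and covers all of $Z$. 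That budget argument is what pins down the behaviour of every cycle; your ``parallel-edges'' trick does not supply it, and also does not explain why the committed column is the same across a whole track (each gadget would make its own independent choice). The paper also uses the \emph{permutation} variant of the hitting-set problem here and exploits the fact that the pairing row$\to$column arising from the cycles is automatically a bijection — something your sketch does not use but which is load-bearing for the correctness proof.

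Second, ``the directed version is obtained by choosing an orientation of every gadget cycle'' is not a correct reduction, and ``directed \kkhittingsetname{}'' is not a thing (the hitting-set problem is not a graph problem). If you simply replace undirected edges by pairs of arcs, you create length-2 directed cycles everywhere, which blows up the packing number; if you instead pick one fixed orientation per intended gadget cycle, you have to prove that no unintended directed cycles exist in the oriented graph, which is false in general for the undirected constructions you would build. The paper instead reduces undirected \cyclepackingname{} to directed \cyclepackingname{} by a local edge gadget: each undirected edge $uv$ becomes a directed 4-cycle $(u,x_e^{uv},v,x_e^{vu})$ plus two short 2-cycles attached to a fresh vertex $z_e$, and the target is raised by $|E|$. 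One then argues that an optimal directed packing uses exactly one short cycle per edge, leaving a packing that corresponds bijectively to an undirected one. This is the part that makes the directed case go through; it cannot be replaced by ``choose an orientation.''

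Third, for \maxccdsname{} your intuition is closer, but the mechanism ``each such choice contributing one isolated component'' is not how the paper makes the component count carry information, and it is unlikely to survive scrutiny because the chosen vertices still need to dominate their set gadgets, which creates adjacencies that can merge components in uncontrolled ways. The paper instead forces $2k$ hub vertices $p_i^L, p_j^R$ into every small dominating set and funnels the remaining budget into choosing one subdivision vertex $x^A_{i,j}$ per set $A$; the component count is $k$ precisely when all chosen subdivision vertices are consistent, i.e., realize the same row-to-column function, and then the sets $S_s$ are hit because their gadgets are traversed consistently. The proof that at least $k$ components forces a valid hitting set is a counting argument tied to the $2k$ hubs and the $|\mathcal{S}|$ one-in-many gadgets, not to isolation of components.

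Finally, one smaller point: you describe the source problem as having an ``$n^{\Omega(k)}$'' lower bound; what is actually used is the $2^{o(k\log k)}\,n^{O(1)}$ lower bound under ETH, which is what the $2^{o(p\log p)}$ statement must be matched against. This does not change the plan but the phrasing should be corrected.

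In short: the reduction target and the pathwidth goal are right, and the idea of attaching short local cycles to pass from \cyclepackingname{} to \maxcyclecovername{} is essentially the paper's (attach a triangle-like pendant $(a_v,b_v,c_v)$ to every vertex $v$ and increase $\ell$ by $|V|$). But the \cyclepackingname{} construction needs a global cycle-budget argument that your sketch does not supply, the undirected-to-directed step needs a real gadget rather than an orientation choice, and the \maxccdsname{} component-count mechanism needs to be reworked. These are not presentational omissions; without them the proposal does not yet constitute a proof.
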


The proofs go along the framework introduced by Lokshtanov et al. \cite{marx:superexp}.
We start our reduction from the \kknphittingsetname{} problem.
By $[k]$ we denote $\{1,2,\ldots,k\}$. In the set $[k] \times [k]$ {\em{a row}} is a set $\{i\} \times [k]$ and {\em{a column}} is a set $[k] \times \{i\}$ (for some $i \in [k]$).

\defproblemu{\kknphittingsetname}{A family of sets $S_1, S_2 \ldots S_m \subseteq [k] \times [k]$,
  such that each set contains at most one element from each row of $[k] \times [k]$.}{
    Is there a set $S$ containing exactly one element from each row
      such that $S \cap S_i \neq \emptyset$ for any $1 \leq i \leq m$?}

We also consider a permutation version of this problem, \kkhittingsetname{}, where
the solution $S$ is also required to contain exactly one vertex from each column.

\begin{theorem}[\cite{marx:superexp}, Theorem 2.4]\label{thm:marx-hs}
Assuming ETH, there is no $2^{o(k \log k)} n^{O(1)}$ time algorithm for \kknphittingsetname{} nor for \kkhittingsetname{}
\end{theorem}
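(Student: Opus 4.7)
The plan is to reduce 3-SAT to both hitting-set variants by a coordinate-compression trick: pack $\log k$ Boolean variables into each of $k$ rows. An $n$-variable, $m$-clause 3-CNF formula thereby yields an instance of size $\mathrm{poly}(n,m)$ over the universe $[k] \times [k]$ with $k = \Theta(n/\log n)$, so $k \log k = \Theta(n)$. A hypothetical $2^{o(k \log k)} N^{O(1)}$ algorithm would then decide 3-SAT in time $2^{o(n)}$, contradicting ETH.

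Concretely, for \kknphittingsetname{} I would partition the variables of the given 3-CNF formula $\varphi$ into $k$ groups $V_1, \ldots, V_k$ of size at most $\lceil \log k \rceil$ and identify the $k$ columns of row $i$ with the partial truth assignments of $V_i$ (padding as necessary). A selection of one element per row is then exactly a full truth assignment of $\varphi$. For each clause $C_j$ and each literal $\ell \in C_j$, introduce the set $A_{j,\ell} \subseteq \{i(\ell)\} \times [k]$ of partial assignments to $V_{i(\ell)}$ that satisfy $\ell$, where $i(\ell)$ is the group of the variable of $\ell$; each such set lies in a single row, so it trivially obeys the ``at most one element per row'' restriction. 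Using a small per-clause gadget that guesses which of the clause's literals is its distinguished satisfier (producing $O(m)$ sets in total), one arranges that all sets in the family are simultaneously hittable iff $\varphi$ is satisfiable.

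For \kkhittingsetname{} the additional column-permutation constraint can be enforced by a bijective relabelling of the columns of row $i$ as ordered pairs (row index, partial assignment), which makes any row-consistent selection automatically column-distinct; alternatively, $O(k^2)$ two-element ``conflict'' sets can be added to forbid repeated columns. The main obstacle in the whole plan is reconciling the logical disjunction ``at least one literal per clause is satisfied'' with the structural restriction ``at most one element per row'' in the hitting-set definition, because the naive per-clause set straddles three rows. The resolution is precisely the literal-indexed decomposition combined with the per-clause guessing gadget sketched above, which is the combinatorial heart of the reduction. Once it is in place, $N = \mathrm{poly}(n, m)$ and $k = \Theta(n/\log n)$, so the ETH lower bound for both variants follows immediately from $k \log k = \Theta(n)$.
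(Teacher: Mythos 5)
Note first that the paper supplies no proof of this statement; it is cited verbatim from Lokshtanov, Marx and Saurabh \cite{marx:superexp}, and the only original remark is that the argument there covers the permutation variant as well. Your attempt at a self-contained reduction from 3-SAT is therefore a different route, but it misreads the structural restriction built into \kknphittingsetname{}. The side condition is that each input set $S_i$ contains \emph{at most one element from each row}, i.e.\ $|S_i \cap (\{r\}\times[k])| \le 1$ for every $r\in[k]$. You claim that $A_{j,\ell}\subseteq\{i(\ell)\}\times[k]$ ``trivially obeys'' this because it lies inside a single row, but that is exactly the situation the restriction forbids: $A_{j,\ell}$ consists of roughly half the partial assignments to group $i(\ell)$, hence has about $k/2$ elements all sitting in row $i(\ell)$, not at most one. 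This is not cosmetic; the one-per-row restriction is what makes each constraint $S\cap S_i\ne\emptyset$ read ``the solution agrees with $S_i$'s uniquely nominated column in at least one of the rows where $S_i$ is defined,'' and dropping it changes the problem entirely.

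There is a second, related gap. The ``per-clause gadget that guesses the distinguished satisfier'' is left as a black box, and the obvious realizations do not work: a hitting-set instance requires \emph{all} sets to be hit, so several sets per clause enforce a conjunction rather than the disjunction ``some literal of $C_j$ is satisfied''; and a single legal set (one element per row) pins down a single designated partial assignment in each group it touches, which is far stronger than ``the chosen assignment satisfies $\ell$.'' The proof in \cite{marx:superexp} avoids a direct 3-SAT encoding for precisely this reason and instead proceeds through a chain of intermediate $k\times k$ clique/independent-set-type problems whose constraints respect the one-per-row structure natively; a self-contained proof would need to reproduce that chain rather than encode clauses directly.
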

Note that in \cite{marx:superexp} the statement of the above theorem only includes \kknphittingsetname{}.
However, the proof in \cite{marx:superexp} works for the permutation variant as well without any modifications.

All proofs can be found in Section \ref{sec:negatives:eth}.
We first prove the bound for \maxccdsname{} by a simple reduction from \kknphittingsetname{}.
Next we provide a more involved reduction from 
\kkhittingsetname{} to undirected \cyclepackingname{}.
Finally, using rather elementary gadgets, we reduce undirected 
\cyclepackingname{} to the directed case and to both cases of 
\maxcyclecovername{}.

\subsection{Lower bounds assuming SETH}\label{sec:seth:intro}

Following the framework introduced by Lokshtanov et al. 
\cite{treewidth-lower}, we prove that an improvement in the base of the 
exponent in a number of our algorithms
would contradict SETH. Formally, we prove the following theorem.

\begin{theorem}\label{thm:lower-seth-main}
Unless the Strong Exponential Time Hypothesis is false,
there do not exist a constant $\eps>0$ and an algorithm that given an 
instance $(G=(V,E),k)$ or $(G=(V,E),T,k)$
together with a path decomposition of the graph $G$ of width $p$ 
solves one of the following problems:
\begin{enumerate}
\item \cvertexcover in $(3-\eps)^p |V|^{O(1)}$ time,
\item \cdomset in $(4-\eps)^p |V|^{O(1)}$ time,
\item \cfvs in $(4-\eps)^p |V|^{O(1)}$ time,
\item \coct in $(4-\eps)^p |V|^{O(1)}$ time,
\item \fvs in $(3-\eps)^p |V|^{O(1)}$ time,
\item \steinertree in $(3-\eps)^p |V|^{O(1)}$ time,
\item \exactleaf in $(4-\eps)^p |V|^{O(1)}$ time.
\end{enumerate}
\end{theorem}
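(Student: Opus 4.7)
The plan is to adapt the framework of Lokshtanov, Marx, and Saurabh for SETH-based tight lower bounds on pathwidth-parameterized problems, giving a separate reduction for each of the seven problems. The common starting point is the following scheme: given a \sat{} instance $\varphi$ with $n$ variables and $m$ clauses, we construct, for a target base $c \in \{3,4\}$, an equivalent instance of the problem in question whose graph admits a path decomposition of width $p = n/\log_2 c + O(\log n)$. A hypothetical $(c-\eps)^p |V|^{O(1)}$-time algorithm would then solve \sat{} in time $((c-\eps)/c)^n \cdot 2^n \cdot n^{O(1)} = (2-\delta)^n n^{O(1)}$ for some $\delta = \delta(\eps) > 0$, contradicting SETH.

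For every problem the key device is a \emph{state encoding gadget} that mimics the $c$ states per bag vertex used in the matching upper-bound algorithm from Section~\ref{sec:applications}. Variables are partitioned into blocks of $\log_2 c$ bits, and each block is represented by a single ``column'' gadget whose DP state corresponds to one of the $c$ possible assignments to the block. Long path-like gadgets propagate each column's state along the entire path decomposition so that the state is simultaneously visible to every clause gadget. Clause gadgets are attached at appropriate positions along the path decomposition and verify that at least one literal is satisfied, contributing only $O(\log n)$ to the pathwidth.

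The gadgetry is tailored per problem. For \cvertexcover, \steinertree, and \fvs{} (base $3$), the three states correspond to the three Cut\&Count-type roles: out of the solution, in the solution on the left side of the cut, in the solution on the right side. For \cdomset, \cfvs, and \coct{} (base $4$), a fourth state encodes an auxiliary role such as ``dominated but not in the solution,'' ``in the forest-complement cut,'' or the two bipartition classes of the non-OCT part. For \exactleaf{} (base $4$), the four states combine the role of a vertex in the spanning tree (leaf vs.\ internal) with a cut side used to control connectivity.

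The principal obstacle is twofold. First, the state-propagation gadgets for the connectivity-enhanced problems must simultaneously (i) admit exactly $c$ DP-distinguishable states, (ii) be connectable across the clause gadgets without accidentally creating shortcuts that would collapse two states, and (iii) play well with the Cut\&Count degree of freedom so that the lower bound matches the upper bound without any slack. Second, the clause gadgets for problems such as \fvs, \cfvs{}, and \exactleaf{} must be constructed so that attaching them neither creates unintended cycles, disconnects the would-be solution, nor perturbs the leaf count; accordingly the bulk of each proof is a combinatorial verification that (a) a satisfying assignment of $\varphi$ yields a solution of the prescribed size, (b) any solution of the prescribed size induces a satisfying assignment, and (c) the constructed graph has pathwidth $n/\log_2 c + O(\log n)$, with the decomposition exhibited explicitly. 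The remaining bookkeeping, carried out in Appendix~\ref{sec:negatives:seth}, then yields Theorem~\ref{thm:lower-seth-main}.
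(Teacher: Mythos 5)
Your overall scheme matches the paper's strategy at a high level --- a Lokshtanov--Marx--Saurabh-style reduction from \sat{} to each problem, packing variable blocks into column gadgets propagated along a path decomposition, with clause gadgets verifying satisfaction --- but there are a few substantive issues with the details you provide.

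First, the packing of variables into gadgets for the base-$3$ problems is not as simple as ``blocks of $\log_2 c$ bits, each represented by a single column gadget.'' Since $\log_2 3$ is irrational, you cannot partition $n$ boolean variables into integer-size blocks so that each block is encoded by a single ternary gadget with no slack. The paper instead fixes an $\eps$-dependent constant $\blocksize$, groups variables into blocks of $\blockvars = \lfloor \log 3^\blocksize \rfloor$ bits, and encodes each block by a \emph{tuple} of $\blocksize$ ternary gadgets (a sequence $S \in \{1,2,3\}^\blocksize$), with pathwidth $\blocksize n' + O(3^\blocksize)$. As $\blocksize \to \infty$ the ratio $\blocksize n' / n$ tends to $1/\log_2 3$, and the choice of $\blocksize$ is exactly what turns the $\eps$ in the hypothetical $(3-\eps)^p$ algorithm into a $\delta > 0$ savings for \sat. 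Your plan omits this $\eps$-dependence and would lose the lower bound for \cvertexcover, \fvs, and \steinertree. (For base $4$, blocks of $2$ bits do work directly with small constant overhead, as you say.)

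Second, the paper does not give seven independent constructions. For \steinertree{} it observes that subdividing every edge of a \cvertexcover{} instance with a terminal gives an equivalent \steinertree{} instance of essentially the same pathwidth, and for \exactleaf{} it uses that \exactleaf{} is at least as hard as \maxleaf{}, which is equivalent to \cdomset. Your proposal to build direct gadgets for these two problems is not wrong in principle, but it would be substantially more work and is unnecessary.

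Third, the intuition that the lower-bound gadget states literally mirror the Cut\&Count roles (``out of solution / in solution on left / in solution on right'') is not how the actual constructions are built. For example, the \fvs{} gadget encodes a $3$-way choice of which vertex on a length-$3$ path segment is deleted; the \cdomset{} gadget encodes a $2\times 2$ choice via pairs $v^1/v^2$ and $h^1/h^2$; there is no explicit cut-side bookkeeping in the lower-bound graph. The correspondence to the upper bound is that both use $c$ pairwise DP-distinguishable states per bag position, but the semantics of those states in the reduction is problem-specific and must be designed so that clause gadgets can test satisfaction without collapsing states --- a significant gadget-engineering task that your proposal acknowledges in the abstract but describes incorrectly in its illustrative instances.
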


Note that \vertexcover (without a connectivity requirement) admits a 
$2^t |V|^{O(1)}$ algorithm whereas \domset, \fvs and \oct admit 
$3^t |V|^{O(1)}$ algorithms and those algorithms
are optimal (assuming SETH) \cite{treewidth-lower}.
To use the Cut\&Count technique for the connected versions of these problems
we need to increase the base of the exponent by one to keep the side of 
the cut for vertices
in the solution.
Theorem \ref{thm:lower-seth-main} shows that this is not an artifact of
the Cut\&Count technique, but rather an intrinsic characteristic of these 
problems.

In Appendix \ref{sec:negatives:seth} we provide reductions in spirit of \cite{treewidth-lower}
that prove the first five lower bounds in Theorem \ref{thm:lower-seth-main}.
The result for \exactleaf is immediate from the result for \cdomset, as \cdomset
is equivalent to \maxleaf \cite{maxleaf-cds}. The result for \steinertree follows from the simple observation
that a \cvertexcover instance can be turned into a \steinertree instance by subdividing
each edge with a terminal.

\section{Concluding remarks}
\label{sec:conclusions}

The main consequence of the Cut\&Count technique as presented in this work
could (informally) be stated as the following rule of thumb: 

\newtheorem*{roth}{Rule of thumb}
\begin{roth}
Suppose we are given a graph $G=(V,E)$, a tree decomposition $\treedecomp$ 
of $G$, and implicitly a set family $\mathcal{F}$ of subgraphs of $G$.
Moreover, suppose that for a bag $x \in \treedecomp$ the behaviour of a 
partial subgraph $S \cap G_x$ depends only on a (small) interface $I(S,x)$ 
with bag $V_x$. 
Let $\theta = \max_{x\in \treedecomp} |\{I(S,x): S \in \mathcal{F}\}|$.
Then we can compute $\min_{S \in \mathcal{F}} \conncomp(S)$ in 
$(\theta |V|)^{O(1)}$ time.
\end{roth}
To clarify, note that a standard dynamic programming for determining 
whether $\mathcal{F}$ is empty or not runs in $(\theta |V|)^{O(1)}$ time, 
and usually even in $\theta |V|^{O(1)}$ time.
In fact, the dominant term $\theta^{O(1)}$ in the claimed running time is a
rather cruel upper bound for the Cut\&Count technique as well,
as for many problems we can do a lot better.
If $\theta=c^t |V|^{O(1)}$ with $t$ being the treewidth of $\treedecomp$,
then for many instances of the above we have shown solutions running in
$\theta=(c+c')^t |V|^{O(1)}$ time, where $c'$ is, intuitively, the number 
of states affected by the cuts of the Cut\&Count technique.
Moreover, many problems cannot be solved faster unless the Strong
Exponential Time Hypothesis fails.
We have chosen not to pursue a general theorem in the above spirit, as the
techniques required to get optimal constants seem varied and depend on the
particular problem.

We have also shown that several problems in which one aims to maximize the
number of connected components are not solvable in 
$2^{o(p \log p)}|V|^{O(1)}$ unless the Exponential Time Hypothesis fails. 
Hence, assuming the Exponential Time Hypothesis, there is a marked 
difference between the minimization and maximization of the number of 
connected components in this context.

Finally, we leave the reader with some interesting open questions:
\begin{itemize}
	\item Can Cut\&Count be derandomized?
	For example, can \cvertexcover be solved deterministically in 
	$c^{t}|V|^{O(1)}$ on graphs of treewidth $t$ for some constant $c$?
	\item Since general derandomization seems hard, we ask whether it is 
	possible to derandomize the presented FPT algorithms parameterized
  by the solution size for \fvs, \cvertexcover or \cfvs? Note that the
	tree decomposition considered in these algorithms is of a very specific
	type, which could potentially make this problem easier than the previous
	one.
	\item Do there exist algorithms running in time $c^{t}|V|^{O(1)}$ on graphs 
	of treewidth $t$ that solve counting or weighted variants? For example 
	can the number of Hamiltonian paths be determined, or the Traveling 
	Salesman Problem solved in $c^{t}|V|^{O(1)}$ on graphs of treewidth $t$?
	\item Can exact exponential time algorithms be improved using Cut\&Count 
	(for example for \cdomset , \steinertree and \fvs)?
  \item All our algorithms for directed graphs run in time $6^t |V|^{O(1)}$.
  Can the constant $6$ be improved? Or maybe it is optimal (again, assuming SETH)?
\end{itemize}

\subsubsection*{Acknowledgements}
The second author would like to thank Geevarghese Philip for some useful discussions in a very early stage.

\bibliographystyle{plain}
\bibliography{cut}

\appendix
\newpage

\newcommand{\targetn}{A}
\newcommand{\targetm}{B}
\newcommand{\targetcc}{C}
\newcommand{\itern}{a}
\newcommand{\iterm}{b}
\newcommand{\itercc}{c}

\section{Details of all algorithms}\label{sec:app:alg}

In this section we describe in detail all the algorithms sketched in 
Section \ref{sec:applications}.
In all algorithms we assume that we are given a tree decomposition of the
input graph $G$ of width $t$. 
The algorithms all start with constructing a nice tree decomposition, as
in Definition \ref{def:nicetreedecomp}.

\subsection{Fast subset convolution}
\label{ssec:fsc}
\newcommand{\fsc}{\ast}
\newcommand{\fsccov}{\ast_c}
\newcommand{\fscpack}{\ast_p}
\newcommand{\fscxor}{\ast_x}
\newcommand{\fscC}[1]{\fsc^{#1}}
\newcommand{\fscZ}[1]{\fscxor^{#1}}
\newcommand{\fscfour}{\fscZ{4}}
\newcommand{\fscconst}{p}

We first recall the fast subset convolution (and its variants) 
\cite{bhkk:fourier-moebius}, as it is often needed to handle join bags 
efficiently.
We follow notation from \cite{bhkk:fourier-moebius}.
Let $f,g: 2^B \to R$ for some finite set $B$ and ring $R$.
In all our applications the ring $R$ is $\Z_2$, 
thus the ring operations take constant time.

\begin{definition}
The {\em{subset convolution}}, {\em{covering product}}, {\em{packing product}}
  of $f$ and $g$ are defined as functions $f \fsc g, f \fsccov g, f\fscpack g:2^B \to R$ as follows:
\begin{align*}
\left(f \fsc g\right)(T) &= \sum_{T_1,T_2 \subseteq T} [T_1 \cup T_2 = T][T_1 \cap T_2 = \emptyset] f(T_1)g(T_2), \\
\left(f \fsccov g\right)(T) &= \sum_{T_1,T_2 \subseteq T} [T_1 \cup T_2 = T] f(T_1)g(T_2), \\
\left(f \fscpack g\right)(T) &= \sum_{T_1,T_2 \subseteq T} [T_1 \cap T_2 = \emptyset] f(T_1)g(T_2), \\
\end{align*}
\end{definition}

By computing a function $h: 2^B \to R$ we mean determining $h(T)$ for every 
$T \subseteq B$. 
Bj{\"o}rklund et al. \cite{bhkk:fourier-moebius} proved that all three functions
defined above can be computed efficiently.
\begin{theorem}[\cite{bhkk:fourier-moebius}]\label{thm:fsc}
The subset convolution, covering product and packing product of two given functions 
can be computed in $2^{|B|} |B|^{O(1)}$ ring operations.
\end{theorem}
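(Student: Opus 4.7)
\noindent
The plan is to reduce all three products to pointwise products in a transformed domain, using the standard Möbius machinery over the subset lattice $2^B$.

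First, I would recall the zeta transform $\hat{f}(T) = \sum_{S \subseteq T} f(S)$ and note that it can be computed in $2^{|B|}|B|$ ring operations by Yates-style dynamic programming: fix an ordering of $B$ and, for each element in turn, update $f(T) \mathrel{+}= f(T \setminus \{b\})$ whenever $b \in T$. The Möbius inversion formula recovers $f$ from $\hat f$ in the same time. With these two primitives in hand I would dispatch the \textbf{covering product} first, since it is the cleanest: a direct manipulation shows $\widehat{f \fsccov g}(T) = \hat f(T)\,\hat g(T)$, because summing the defining expression of $f \fsccov g$ over all subsets of $T$ puts no constraint on $S_1,S_2$ other than $S_1,S_2 \subseteq T$. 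Thus computing $f \fsccov g$ amounts to two forward transforms, $2^{|B|}$ pointwise multiplications, and one inverse transform.

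Next I would handle the \textbf{subset convolution} by the ``ranking by cardinality'' trick. Split each function by size: $f_i(S) = f(S)[|S|=i]$ and similarly for $g$. The key observation is that if $T_1 \cup T_2 = T$ and $|T_1|+|T_2| = |T|$, then automatically $T_1 \cap T_2 = \emptyset$. Consequently $(f \fsc g)(T) = \sum_{i+j=|T|} (f_i \fsccov g_j)(T)$. There are $O(|B|^2)$ covering products to compute, but we can amortize: transform each $f_i$ and $g_j$ once (total cost $2^{|B|}|B|^2$), then for each pair $(i,j)$ form the pointwise product and accumulate into a ``graded'' transformed table, and finally invert and read off the entry at the correct rank $k=|T|$ for each $T$. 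This yields a total of $2^{|B|}|B|^{O(1)}$ ring operations.

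Finally, for the \textbf{packing product} I would observe that the constraint ``$T_1,T_2 \subseteq T$ and $T_1 \cap T_2 = \emptyset$'' is equivalent to ``$T_1 \subseteq T$ and $T_2 \subseteq T \setminus T_1$'', so a direct reduction to a covering product works: define $g'(S) = \sum_{S' \subseteq S} g(S') = \hat g(S)$; then $(f \fscpack g)(T) = \sum_{T_1 \subseteq T} f(T_1)\,g'(T \setminus T_1)$, which is a standard disjoint-union convolution in disguise and can be evaluated by the subset-convolution algorithm above, or even more directly by zeta-transforming $f$ on the lower set lattice and $g$ on the upper set lattice and multiplying pointwise. Either way the cost stays within $2^{|B|}|B|^{O(1)}$.

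The main technical obstacle is the subset convolution step: the naive composition of the covering-product algorithm over all rank pairs would blow up to $2^{|B|}|B|^{3}$ or worse, so some care is required to show that only one forward and one inverse transform per rank is needed, with the pointwise accumulation over ranks done in the transformed domain. Once this bookkeeping is organized the three bounds fall out uniformly, and since all our applications work over $R=\mathbb{Z}_2$ every ring operation is $O(1)$, giving the stated time.
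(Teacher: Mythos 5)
The paper cites this theorem directly from \cite{bhkk:fourier-moebius} and reproduces no proof of its own, so there is no internal argument to compare against; your sketch is essentially the standard one from that reference and is correct in its main lines. The covering product piece is right: $\widehat{f \fsccov g} = \hat f \cdot \hat g$ under the zeta transform, which (with its M\"obius inverse) is a $2^{|B|}|B|$ Yates-style dynamic program. The subset-convolution piece is also right: cardinality grading converts $f \fsc g$ into the rank-indexed sum $\sum_{i+j=|T|}(f_i \fsccov g_j)(T)$ (since $T_1\cup T_2 = T$ together with $|T_1|+|T_2| = |T|$ forces disjointness), and accumulating the $O(|B|^2)$ pointwise products in the transformed domain before a single inversion per rank keeps the total at $2^{|B|}|B|^{O(1)}$. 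Your packing-product reduction to subset convolution is correct as well: $\sum_{T_2 \subseteq T\setminus T_1} g(T_2) = \hat g(T\setminus T_1)$, so $f \fscpack g = f \fsc \hat g$.

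The one misstep is the closing aside that the packing product falls out ``even more directly'' by multiplying a down-set zeta transform of $f$ with an up-set zeta transform of $g$: that pointwise product equals $\sum_{A \subseteq T \subseteq B} f(A)\,g(B)$, which is neither the packing product nor any simple transform of it, so that shortcut does not work as stated. Since you offer it only as an alternative and your primary reduction is sound, this does not damage the sketch. If you want a slicker packing-product argument, observe instead that $f \fscpack g = \widehat{f \fsc g}$: summing the subset convolution over all $S \subseteq T$ enumerates exactly the disjoint pairs $(T_1,T_2)$ with $T_1 \cup T_2 \subseteq T$, so one subset convolution followed by one zeta transform suffices.
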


The following generalization of the subset convolution can be found in 
\cite{nasz-icalp-tcs,johan-tw}.
\begin{definition}
Let $\fscconst \geq 2$ be an integer constant and let $B$ be a finite set.
For $t_1,t_2,t \in \{0,1,\ldots,\fscconst-1\}^B$ we say that $t_1+t_2=t$
iff $t_1(b) + t_2(b) = t(b)$ for all $b \in B$.
For functions $f,g: \{0,1,\ldots,\fscconst-1\}^B \to R$ define
$$(f \fscC{\fscconst} g)(t) = \sum_{t_1+t_2=t} f(t_1)g(t_2).$$
\end{definition}
Note that here the addition {\bf is not} evaluated in $\Z_p$ but in $\Z$.
\begin{theorem}[Generalized Subset Convolution \cite{nasz-icalp-tcs,johan-tw}]\label{thm:gfsc}
The generalized subset convolution can be computed in $\fscconst^{|B|} |B|^{O(1)}$
ring operations.
\end{theorem}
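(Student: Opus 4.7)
The plan is to generalise the ranked Möbius-based fast subset convolution of Björklund et al.~\cite{bhkk:fourier-moebius} from the Boolean lattice $2^B$ to the product of chains $\{0,\ldots,\fscconst-1\}^B$. The first step is to introduce a \emph{generalised zeta transform} $\zeta$ acting on $f:\{0,\ldots,\fscconst-1\}^B\to R$ by $(\zeta f)(t)=\sum_{t'\le t}f(t')$, with $\le$ componentwise, along with its Möbius inverse $\zeta^{-1}$. Applying Yates' algorithm to the tensor product of $|B|$ one-dimensional chain transforms computes each of $\zeta$ and $\zeta^{-1}$ in $O(\fscconst^{|B|}\cdot|B|)$ ring operations by processing one coordinate at a time.

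Next I would rank the inputs by the total weight $w(t)=\sum_b t(b)\in\{0,\ldots,(\fscconst-1)|B|\}$, splitting $f$ and $g$ into $O(\fscconst|B|)$ layers $f_w,g_w$ supported on vectors of weight exactly $w$. Forming the ranked pointwise products $\hat h_w(t)=\sum_{w_1+w_2=w}(\zeta f_{w_1})(t)\cdot(\zeta g_{w_2})(t)$ and applying $\zeta^{-1}$ layer by layer yields, by a direct generalisation of the Möbius-inversion calculation of~\cite{bhkk:fourier-moebius}, a function whose value at rank $w(t)$ counts the pairs $(t_1,t_2)$ with $\max(t_1,t_2)=t$ and $w(t_1)+w(t_2)=w(t)$; a short combinatorial check shows this is equivalent to $t_1+t_2=t$ together with $\min(t_1,t_2)=0$ componentwise. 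In the Boolean case $\fscconst=2$ every additive decomposition is automatically disjoint and this already coincides with the standard subset convolution $\fsc$.

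The main obstacle for $\fscconst\ge 3$ is that additive pairs with $\min(t_1,t_2)>0$ on some coordinate are valid contributions to $\fscC{\fscconst}$ yet are missed by the disjoint-cover interpretation above. My plan is to decompose every additive pair canonically as $t_1=m+a$, $t_2=m+b$ with $m=\min(t_1,t_2)$ and $(a,b)$ componentwise disjoint, and to fold the common overlap $m$ into an enlarged layered scheme following~\cite{nasz-icalp-tcs,johan-tw}, so that all contributions are recovered through a single sequence of generalised zeta/Möbius applications rather than one per choice of $m$. Combining the $O(\fscconst|B|)$-many zeta and Möbius transforms at cost $O(\fscconst^{|B|}\cdot|B|)$ each with the $O(\fscconst^{|B|}\cdot|B|^{O(1)})$ cost of the ranked pointwise multiplication, and absorbing all $\fscconst^{O(1)}$ factors into the $|B|^{O(1)}$ term since $\fscconst$ is a constant, yields the claimed $\fscconst^{|B|}|B|^{O(1)}$ bound.
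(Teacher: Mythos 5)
You correctly set up the generalised zeta transform on $\{0,\ldots,\fscconst-1\}^B$ and correctly observe that the Bj{\"o}rklund-style ranked zeta/M\"obius pipeline, after evaluating at rank $w(t)=\sum_b t(b)$, yields exactly the contributions of pairs with $t_1+t_2=t$ and $\min(t_1,t_2)=0$ componentwise; your short combinatorial check of the equivalence ``$\max(t_1,t_2)=t$ and $w(t_1)+w(t_2)=w(t)$ $\iff$ $t_1+t_2=t$ and $\min(t_1,t_2)=0$'' is sound, and this already suffices for $\fscconst=2$. You also correctly identify the obstruction for $\fscconst\ge 3$: pairs with $\min(t_1,t_2)>0$ on some coordinate contribute to $f\fscC{\fscconst}g$ but are invisible to this disjoint-cover machinery.

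The gap is the step that is supposed to recover those remaining contributions. You propose to write $t_1=m+a$, $t_2=m+b$ with $m=\min(t_1,t_2)$ and ``fold the common overlap $m$ into an enlarged layered scheme \ldots\ so that all contributions are recovered through a single sequence of generalised zeta/M\"obius applications rather than one per choice of $m$,'' but you give no mechanism for this folding, and I do not see one. Taken literally, for each fixed $m$ the inner sum is a ranked disjoint convolution of the \emph{shifted} functions $f(m+\cdot)$ and $g(m+\cdot)$; there are up to $\fscconst^{|B|}$ choices of $m$, so iterating over them costs $\fscconst^{2|B|}$. The obvious batching idea---attach a formal variable to track $w(m)$ and sum over $m$ inside the transform---fails because the \emph{same} $m$ must appear inside both $f$ and $g$, and a rank variable equates only $w(m_1)+w(m_2)$, not $m_1=m_2$. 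This is exactly the crux of Theorem~\ref{thm:gfsc} and it is left unproved. For context, the paper itself does not prove this statement; it cites \cite{nasz-icalp-tcs,johan-tw} for the $R=\Z$ case and only adds the remark that one may compute over $\Z$ and reduce modulo~$2$ at the end, so there is no in-paper argument to compare against---but the missing step in your proposal is a genuine hole independent of that, and I am not convinced the min/overlap decomposition is the route the cited works actually take.
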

Note that in \cite{nasz-icalp-tcs} only the case $R=\Z$ is considered.
However, in our applications ($R=\Z_2$) we can perform calculations
in $\Z$ and at the end take all computed values modulo $2$ within the claimed timebound.

We need also the following variant of subset convolution.
\begin{definition}
Let $\fscconst \geq 2$ be an integer constant and let $B$ be a finite set.
For $t_1,t_2,t \in \Z_\fscconst^B$ we say that $t_1+t_2=t$
if $t_1(b) + t_2(b) = t(b)$ (in $\Z_\fscconst$) for all $b \in B$.
For functions $f,g: \Z_\fscconst^B \to R$ define the {\em{$\Z_\fscconst$ product}} as
$$(f \fscZ{\fscconst} g)(t) = \sum_{t_1+t_2=t} f(t_1)g(t_2).$$
In particular, for $\fscconst=2$ we identify elements
$\Z_2^B$ with subsets of $B$ and define for $f,g: 2^B \to R$ the {\em{xor product}} as:
$$f \fscxor g(T) = \sum_{T_1,T_2 \subseteq B} [T_1 \triangle T_2 = T] f(T_1)g(T_2).$$
\end{definition}
The xor product can be computed in time $2^{|B|}|B|^{O(1)}$ using the well-known Walsh-Hadamard transform.
However, in our applications we need also the case $\fscconst=4$, thus we provide
a proof for both cases below.
We do not state here any general theorem for arbitrary $\fscconst$,
as in that case we would need to use fractional complex numbers during the 
computations, which could lead to rounding problems.

\begin{theorem}\label{thm:fscxor}
Let $R = \Z$ or $R = \Z_q$ for some constant $q$.
For $\fscconst=2$ and $\fscconst=4$ the $\Z_\fscconst$ product of two given functions
$f,g:\Z_\fscconst^B \to R$ can be computed in time $\fscconst^{|B|} |B|^{O(1)}$.
\end{theorem}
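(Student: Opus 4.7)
The plan is to apply Fourier analysis over the abelian group $(\Z_p)^B$ to diagonalise the $\Z_p$-product. Recall that for any finite abelian group $G$, the DFT converts group convolution into pointwise multiplication in the transform domain, and both forward and inverse transforms on $G^B$ admit a Yates-type evaluation whose cost is $O(|G|^{|B|+1}|B|) = p^{|B|}|B|^{O(1)}$ when $|G|=p$ is constant. The characters of $\Z_p$ are powers of a primitive $p$-th root of unity $\omega$; for $p=2$ this is $\omega=-1$, recovering the classical Walsh--Hadamard transform, whereas for $p=4$ I need $\omega = i$, which I would accommodate by working in the quadratic extension $R[i] = R[x]/(x^2+1)$, storing each element as a pair $(u,v)\in R^2$ so that every ring operation still costs $O(1)$ (for constant $q$) or $|B|^{O(1)}$ (for $R=\Z$).

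With this setup, the algorithm is the textbook three-step procedure: compute $\hat f$ and $\hat g$ via Yates' algorithm, form the pointwise product $\hat f \cdot \hat g$, and apply the inverse DFT to obtain $f \fscZ{p} g$. Everything but one step is routine: the inverse transform ends with a multiplication by $p^{-|B|}$, and $p$ need not be invertible in $R$ (the case $R=\Z_2$ with $p=2$ used in Cut\&Count being the prime example).

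My uniform fix for non-invertibility is a lift-and-reduce trick: if $R=\Z_q$, I regard every value of $f$ and $g$ as an integer in $\{0,\ldots,q-1\}$, run the whole computation over $\Z$ (or $\Z[i]$ in the $p=4$ case), and reduce modulo $q$ at the very end. Correctness holds because $f \fscZ{p} g$ is a polynomial in the inputs with integer coefficients, so the $\Z$-valued result agrees with the $\Z_q$-valued one after reduction. The exactness of the final division by $p^{|B|}$ is guaranteed because the inverse DFT applied to a genuine transform image returns an integer-valued (resp.\ Gaussian-integer-valued) function on the nose: expanding the inverse DFT of $\hat f \cdot \hat g$ one sees that each output coefficient is exactly $p^{|B|}$ times the corresponding convolution value, which is an integer by construction, so the division is exact and yields no fractions.

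The main remaining technical obstacle is to check that intermediate integers stay polynomially bounded in bit-length. A direct telescoping of the forward DFT, pointwise multiplication, inverse DFT, and final division bounds every magnitude by $q^{O(1)} p^{O(|B|)}$, hence $O(|B|)$ bits for constant $p,q$. Each arithmetic step therefore costs $|B|^{O(1)}$, and combining this with the $p^{|B|}|B|^{O(1)}$ ring-operation count of Yates' algorithm gives the claimed overall running time of $p^{|B|}|B|^{O(1)}$ for $p\in\{2,4\}$.
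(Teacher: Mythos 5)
Your proposal is correct and follows essentially the same approach as the paper: Fourier analysis over $\Z_p^B$ with a Yates-type evaluation, working in $\Z[i]$ for $p=4$, lifting from $\Z_q$ to $\Z$ to handle non-invertibility of $p$, and using exactness of the division by $p^{|B|}$. The only (cosmetic) difference is that you invoke the textbook inverse DFT, whereas the paper applies the forward transform twice and uses the identity $\widehat{\hat f \cdot \hat g}\,(t) = p^{|B|}(f \fscZ{p} g)(-t)$ to avoid implementing a separate inverse; these are the same computation up to a sign in the exponent.
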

\begin{proof}
We will use a simplified version of the Fourier transform. Let us assume $R = \Z$, as otherwise we do calculations in $\Z$ and in the end we take all values
modulo $q$. We use values of order at most $q^{O(1)}\fscconst^{O(|B|)}$, thus all arithmetical operations
in $\Z$ take polynomial time in $|B|$ and $\log q$.

Let us introduce some definitions.
Consider the ring $\Z[i] = \{a+bi: a,b \in \Z\} \subseteq \mathbb{C}$, where the addition and multiplication
operators are inherited from the complex field $\mathbb{C}$.
For $s,t \in \Z_\fscconst^B$ define $s \cdot t$ as $\sum_{b \in B} s(b)t(b) \in \Z$.
Let $\varepsilon$ be the degree-$\fscconst$ root of $1$ in $\Z[i]$, i.e.,
$\varepsilon = -1$ if $\fscconst=2$ and $\varepsilon = i$ if $\fscconst=4$.
Note that $\varepsilon^\fscconst = 1$ in $\Z[i]$.
We somewhat abuse the notation and for $c \in \Z_\fscconst$ use the notation
$\varepsilon^c$
(taking it to mean $\varepsilon^{c'}$, where $c'$ is any integer congruent
to $c$ modulo $p$).
For $f : \Z_\fscconst^B \to \Z$ define $\hat{f}:\Z_\fscconst^B \to \Z[i]$ as
  follows
$$\hat{f}(s) = \sum_{t \in \Z_\fscconst^B} f(t) \varepsilon^{s \cdot t}.$$

We first claim that $\hat{f}$ can be computed in $\fscconst^{|B|} |B|^{O(1)}$ time
using an adjusted Yates algorithm \cite{yates-algorithm}.
We may assume $B = \{1,2,\ldots,|B|\}$ and for $1 \leq b \leq |B|$ we define
$$f_b(s_1,s_2,\ldots,s_b,t_{b+1},\ldots,t_{|B|}) =
\sum_{t_1,t_2,\ldots,t_b \in \Z_\fscconst} f(t_1,t_2,\ldots,t_{|B|}) \varepsilon^{\sum_{\beta=1}^b s_\beta t_\beta}.$$
Furthermore we set $f_0 = f$. 
Note that $f_{|B|} = \hat{f}$ and for $0 \leq b < |B|$
$$f_{b+1}(s_1,s_2,\ldots,s_{b+1},t_{b+2},\ldots,t_{|B|}) = \sum_{t_{b+1} \in \Z_\fscconst} f_b(s_1,s_2,\ldots,s_b,t_{b+1},\ldots,t_{|B|}) \varepsilon^{s_{b+1} t_{b+1}}.$$
Thus, computing all $f_b$ for $0 \leq b \leq |B|$ takes $\fscconst^{|B|} |B|^{O(1)}$ time
and the claim is proven.

Let denote by $f \cdot g$ the pointwise multiplication of functions, i.e. $(f\cdot g)(t)=f(t)\cdot g(t)$. Observe that
\begin{align*}
\left(\widehat{\hat{f} \cdot \hat{g}}\right)(t) &= \sum_{s \in \Z_\fscconst^B} (\hat{f} \cdot \hat{g})(s) \varepsilon^{s \cdot t} \\
    &= \sum_{s \in \Z_\fscconst^B} \left(\sum_{t_1 \in \Z_\fscconst^B} f(t_1) \varepsilon^{t_1 \cdot s}\right) 
                                   \left(\sum_{t_2 \in \Z_\fscconst^B} g(t_2) \varepsilon^{t_2 \cdot s}\right)
                                  \varepsilon^{s \cdot t} \\
    &= \sum_{t_1,t_2 \in \Z_\fscconst^B} f(t_1)g(t_2) \sum_{s \in \Z_\fscconst^B} \varepsilon^{s\cdot (t_1+t_2+t)} \\
    &= \sum_{t_1,t_2 \in \Z_\fscconst^B} f(t_1)g(t_2) \prod_{b \in B} \left(\sum_{s(b) \in \Z_\fscconst} \left(\varepsilon^{t_1(b)+t_2(b)+t(b)}\right)^{s(b)}\right) \\
	 &\left\{ \text{using that } \sum_{s(b) \in \Z_\fscconst} (\epsilon^{\tau})^{s(b)} = \fscconst \text{ if } \tau=0 \text{ and otherwise } \frac{1-(\epsilon^\tau)^\fscconst}{1-\epsilon^\tau}=0 \right\} \\
    &= \sum_{t_1,t_2 \in \Z_\fscconst^B} f(t_1)g(t_2) \prod_{b \in B} \left(\fscconst\cdot[t_1(b)+t_2(b)+t(b) = 0]\right) \\
    &= \sum_{t_1,t_2 \in \Z_\fscconst^B} f(t_1)g(t_2) \fscconst^{|B|} [t_1+t_2+t = 0] \\
    &= \fscconst^{|B|}\left(f \fscZ{\fscconst} g\right)(-t)
\end{align*}
As $\widehat{\hat{f} \cdot \hat{g}}$ can be computed in time $\fscconst^{|B|} |B|^{O(1)}$,
the theorem follows.
\end{proof}

\subsection{\fvs}\label{sec:app:fvs}

In this section we show an algorithm for a more general version of the \fvs problem,
where we are additionally given a set of vertices that have to belong to the solution.

\defproblemu{\constrainedfvs}
{An undirected graph $G = (V,E)$, a subset $\mustset \subseteq V$ and an integer $k$.}
{Does there exist a set $Y \subset V$ of cardinality $k$ such
  that $\mustset \subseteq Y$ and $G[V\setminus Y]$ is a forest?}

This constrained version of the problem is useful when we want to obtain
not only binary output, but also in case of a positive answer a set $Y$.

Here defining a solution candidate with a relaxed connectivity condition to 
work with our technique is somewhat more tricky, as there is no explicit 
connectivity requirement in the problem to begin with.
We proceed by choosing the (presumed) forest left after removing the
candidate solution and using the following simple lemma:

\begin{lemma}
\label{lem:fvssimple}
A graph $G=(V,E)$ with $n$ vertices and $m$ edges is a forest iff it has at most 
$n-m$ connected components.
\end{lemma}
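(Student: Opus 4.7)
The plan is to prove both implications using the elementary count that a tree on $n_i$ vertices has $n_i-1$ edges, so that one direction is a straightforward calculation and the other direction follows from comparing $G$ with a spanning forest.

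For the forward direction, suppose $G$ is a forest, and let $C_1, \ldots, C_c$ be its connected components with $|V(C_i)| = n_i$. Each $C_i$ is a tree, so it has exactly $n_i - 1$ edges. Summing, $m = \sum_{i=1}^{c} (n_i - 1) = n - c$, which gives $c = n - m$ and in particular $c \le n - m$.

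For the reverse direction, first establish the general lower bound $c \ge n - m$ valid for any graph. To see this, take any spanning forest $F$ of $G$ (for instance obtained by greedily adding edges of $G$ that do not create a cycle); then $F$ has the same vertex set and the same set of connected components as $G$, so by the forward direction applied to $F$ we have $|E(F)| = n - c$, and since $F \subseteq G$ we get $m \ge n - c$. Combining this with the hypothesis $c \le n - m$ forces equality $c = n - m$, hence $|E(F)| = n - c = m = |E(G)|$, so $F = G$ and $G$ is a forest.

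The main obstacle is essentially nil, as this is a textbook fact; the only thing to be careful about is making the biconditional crisp by treating the general inequality $c \ge n-m$ (which holds for every graph) as the pivot, so that the assumption $c \le n-m$ in the reverse direction squeezes the count to equality and forces the greedy spanning forest to coincide with $G$.
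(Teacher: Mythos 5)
Your proof is correct, and it rests on the same elementary counting fact as the paper's, but the presentation differs slightly. The paper gives a single unified incremental argument: starting from the edgeless graph on $V$ (which has $n$ components), add the $m$ edges of $G$ one at a time; each addition decreases the component count by at most one, with equality exactly when the edge joins two distinct components, so the final count is at least $n-m$, and it equals $n-m$ iff no added edge ever closed a cycle, i.e., iff $G$ is a forest. You split the biconditional into two directions: the forward one by summing $n_i-1$ over the tree components, and the reverse one by invoking a greedy spanning forest $F$ to establish the universal bound $c\ge n-m$ and then squeezing to $F=G$. Both routes isolate the same pivot inequality $c\ge n-m$; the paper's version is a touch more economical because one edge-addition process yields both the inequality and its equality case, whereas your version trades that for the familiar spanning-forest picture, which some readers may find more concrete.
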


\begin{proof}
Let $E=\{e_1,\ldots,e_m\}$.
Consider a graph $G_0=(V,\emptyset)$ with the same set of vertices and an empty set of edges.
We add edges from the set $E$ to the graph $G_0$ one by one.
Observe that $G$ is a forest iff after adding each edge from $E$ to the graph $G_0$
the number of connected components of $G_0$ decreases.
Since initially $G_0$ has $n$ connected connected the lemma follows.
\end{proof}

\begin{theorem}\label{thm:app:fvs}
 There exists a Monte-Carlo algorithm that given a tree decomposition
 of width $t$ solves the \constrainedfvs problem in $3^t |V|^{O(1)}$ time.
 The algorithm cannot give false positives and may give false negatives with probability at most $1/2$.
\end{theorem}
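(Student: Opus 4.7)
The plan is to obtain the algorithm for \constrainedfvs{} by a straightforward extension of the algorithm for \fvs{} sketched in Section~\ref{sec:main:fvs}: the only place where $\mustset$ enters is as a hard constraint forbidding vertices of $\mustset$ from lying in the chosen ``forest'' $X = V \setminus Y$.

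First, I would set $\univ = V \times \{\forest, \marker\}$, draw weights $\weight: \univ \to \{1,\ldots,N\}$ with $N = 2|\univ|$, and define the family of candidate solutions $\cansol_\tarwei$ to be the set of pairs $(\solset, \marset)$ with $\marset \subseteq \solset \subseteq V \setminus \mustset$, $|\solset| = |V|-k$, $\weight\bigl((\solset \times \{\forest\}) \cup (\marset \times \{\marker\})\bigr) = \tarwei$, and with auxiliary accumulators recording the number $m$ of edges of $G[\solset]$ and the number $|\marset| = |V|-k-m$ of markers. The set $\consol_\tarwei$ of true solutions consists of those candidates for which $G[\solset]$ is a forest and every connected component of $G[\solset]$ contains at least one marker. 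By Lemma~\ref{lem:fvssimple} a graph on $|V|-k$ vertices with $m$ edges is a forest iff it has at most $|V|-k-m$ connected components, so the choice of $|\marset| = |V|-k-m$ markers together with the ``one marker per component'' condition forces equality and hence the forest property. Note that $\solset \cap \mustset = \emptyset$ is equivalent to $\mustset \subseteq Y = V\setminus \solset$, so candidates of $\cansol_\tarwei$ are exactly the ones respecting the constraint.

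Next I would perform the Cut part by defining $\cutpar_\tarwei$ as the set of triples $((\solset,\marset),(V_1,V_2))$ with $(\solset,\marset) \in \cansol_\tarwei$, $(V_1,V_2)$ a consistent cut of $G[\solset]$, and $\marset \subseteq V_1$. Exactly as in the proof of Lemma~\ref{lem:dircyccanc}, for each fixed $(\solset,\marset)$ the components of $G[\solset]$ that contain at least one marker are forced entirely into $V_1$, while each of the $\compnomark(\marset,\solset)$ marker-free components may be placed on either side. Hence each candidate contributes $2^{\compnomark(\marset,\solset)}$ to $|\cutpar_\tarwei|$, so $|\cutpar_\tarwei| \equiv |\consol_\tarwei| \pmod{2}$, verifying the first condition of Corollary~\ref{cor:cutandcount}.

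For the Count part I would write a standard dynamic program on a nice tree decomposition. For each bag $x$ the table is indexed by: the number of vertices of $V_x$ already placed in $\solset$ (at most $|V|-k+1$ values), the number of edges of $G[\solset]$ introduced in descendants of $x$ (at most $|V|$ values, since candidates with more edges are discarded by Lemma~\ref{lem:fvssimple}), the number of markers chosen so far, the accumulated weight (at most $N(|V|-k)+1$ values), and a coloring $s \in \{\zero,\oneone,\onetwo\}^{\bag{x}}$ indicating whether each bag vertex is outside $\solset$ or in $\solset$ on the left / right of the cut. The transitions for introduce-edge, forget, and join bags are identical to those of Section~\ref{sec:steinertree}; in particular the join bag requires matching colorings on both children and convolves the numeric accumulators, so no subset-convolution machinery is required. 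The only two deviations from the \fvs{} sketch are: (i) in an introduce-vertex bag introducing $v \in \mustset$, we disallow the colors $\oneone$ and $\onetwo$, i.e., only the $\zero$ extension of the child table is propagated; (ii) as in Section~\ref{sec:ill:dcc}, the ``mark $v$ or not'' decision is deferred to the forget-vertex bag, where a vertex currently colored $\oneone$ may optionally be declared a marker, incrementing the marker count and weight accordingly. This guarantees that every vertex has the marker decision made exactly once.

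The truly nontrivial part is the parity cancellation argument in the previous paragraph, but since it is a verbatim copy of the one already carried out for \mincyclecovername{} in Lemma~\ref{lem:dircyccanc}, no new obstacle arises. Plugging the subroutine into Algorithm~\ref{alg:cutandcount} with $\univ = V \times \{\forest, \marker\}$ and invoking Corollary~\ref{cor:cutandcount} yields the Monte-Carlo guarantee. The running time is $3^t \cdot |V|^{O(1)}$ because the table contains $3^t$ colorings and $|V|^{O(1)}$ numeric/weight states per bag, and every transition is polynomial; the algorithm has no false positives and, conditional on a positive instance, returns ``yes'' with probability at least $1/2$ by the Isolation Lemma.
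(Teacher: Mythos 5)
Your proposal is correct and follows essentially the same route as the paper's proof: universe $V \times \{\forest,\marker\}$, candidates with $\marset \subseteq \solset \subseteq V \setminus \mustset$, markers cancelling marker-free components via consistent cuts, marker decisions made in forget bags, and a $3^t |V|^{O(1)}$ table indexed by vertex/edge/marker/weight accumulators plus a $\{\zero,\oneone,\onetwo\}$-coloring. The only (cosmetic) difference is that you pin $|\marset| = |V|-k-m$ up front, whereas the paper carries the marker count as a free parameter $\targetcc$ and fixes $\targetcc = n-k-\targetm$ at the end; both are equivalent.
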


\begin{proof}
Now we can use the Cut\&Count technique.
As the universe we take the set $\univ = V \times \{\forest,\marker\}$,
where $V \times \{\forest\}$ is used to assign weights to vertices
from the chosen forest and $V \times \{\marker\}$ for markers.
As usual we assume that we are given a weight function $\weight: \univ \rightarrow \{1,...,N\}$, where $N = 2|\univ| = 4|V|$.

\noindent {\bf{The Cut part.}}
For integers $\targetn, \targetm, \targetcc, \targetW$ we define:
  \begin{enumerate}
  \item $\cand^{\targetn, \targetm, \targetcc}_\targetW$ to be the family 
  of solution candidates: marked subgraphs excluding $\mustset$ of size and weight prescribed by super-/sub-scripts,
  i.e., $\cand^{\targetn, \targetm, \targetcc}_\targetW$ is the family of pairs $(X,M)$,
  where $X \subseteq V \setminus \mustset$, $|X| = \targetn$, $G[X]$ contains exactly $\targetm$ edges,
  $M \subseteq X$, $|M| = \targetcc$ and $\weight(X \times \{\forest\}) + \weight(M \times \{\marker\}) = \targetW$;
  \item $\sols^{\targetn, \targetm, \targetcc}_\targetW$ to be the set of solutions:
  the family of pairs $(X, M)$, where $(X,M) \in \cand^{\targetn, \targetm, \targetcc}_{\targetW}$ and $G[X]$ is a forest containing at least one marker from the set $M$ in each connected component;
  \item $\objs_{\targetW}^{\targetn, \targetm, \targetcc}$ 
  to be the family of pairs $((X, M), (X_1,X_2))$,
  where $(X,M) \in \cand^{\targetn, \targetm, \targetcc}_\targetW$, 
  $M \subseteq X_1$, and $(X_1,X_2)$ is a consistent cut of $G[X]$.
  \end{enumerate}
Observe that by Lemma~\ref{lem:fvssimple} the graph $G$ admits 
a feedback vertex set of size $k$ containing $\mustset$ if and only if there exist 
integers $\targetm$,$\targetW$ such that the set 
$\sols^{n-k,\targetm,n-k-\targetm}_\targetW$ is nonempty.

\noindent {\bf{The Count part.}} Similarly as in the case of \mincyclecovername
(analogously to Lemma~\ref{lem:dircyccanc}) note that for any 
$\targetn$,$\targetm$,$\targetcc$,$\targetW$,$(X,M) \in \cand^{\targetn, \targetm, \targetcc}_\targetW$,
there are $2^{\mathtt{cc}(M,G[X])}$ cuts $(X_1,X_2)$ such that
$((X,M),(X_1,X_2)) \in \objs_{\targetW}^{\targetn, \targetm, \targetcc}$,
where by $\mathtt{cc}(M,G[X])$ we denote the number of connected
components of $G[X]$ which do not contain any marker from the set $M$.
Hence by Lemma~\ref{lem:fvssimple}
for every $\targetn$,$\targetm$,$\targetcc$,$\targetW$
satisfying $\targetcc \le \targetn-\targetm$
we have
$|\sols^{\targetn, \targetm, \targetcc}_\targetW| \equiv
 |\objs_\targetW^{\targetn, \targetm, \targetcc}|$. 

Now we describe a procedure
$\countproc(\weight,\targetn,\targetm,\targetcc,\targetW,\treedecomp)$ that,
given a nice tree decomposition $\treedecomp$, weight function $\weight$
and integers $\targetn$,$\targetm$,$\targetcc$,$\targetW$,
computes $|\objs_\targetW^{\targetn, \targetm, \targetcc}|$
modulo $2$ using dynamic programming.

We follow the notation from the \steinertree{} example (see Lemma \ref{lem:stein}).
For every bag $x \in \treedecomp$ of the tree decomposition,
integers $0 \leq \itern \leq |V|$, $0 \leq \iterm < |V|$, $0 \leq \itercc \leq |V|$, 
         $0 \leq \iterW \leq 2\maxw|V|$
and $s \in \{\zero, \oneone, \onetwo\}^{\bag{x}}$ 
(called the colouring) define
\begin{align*}
  \cand_x(\itern, \iterm, \itercc, \iterW) & = \Big{\{} (X,M) \; \big{|}\; X \subseteq \subbags{x} \setminus \mustset \ \wedge\
    |X|=\itern \ \wedge\ |\subedges{x} \cap E(G[X])| = \iterm \\
        & \qquad\ \wedge\ M \subseteq X \setminus \bag{x}\ \wedge\ |M| = \itercc\ \wedge\ \weight(X \times \{\forest\})+\weight(M \times \{\marker\})=\iterW \Big{\}} \\
  \objs_x(\itern, \iterm, \itercc, \iterW) & = \Big{\{} ((X,M),(X_1,X_2)) \big{|}\; 
    (X,M) \in \cand_x(\itern, \iterm, \itercc, \iterW)\\
    & \qquad\ \wedge\ M \subseteq X_1 \wedge (X, (X_1,X_2)) \textrm{ is a consistently cut subgraph of }G_x \Big{\}} \\
    A_x(\itern, \iterm, \itercc, \iterW, s) & = \Big{|} \Big{\{} ((X,M),(X_1,X_2)) 
    \in \objs_x(\itern, \iterm, \itercc, \iterW) \big{|} \\
    &\qquad (s(v) = \onej \Rightarrow v \in X_j)\ \wedge \
   (s(v) = \zero \Rightarrow v \not\in X) \Big{\}} \Big{|}
\end{align*}
Note that we assume $\iterm < |V|$ because otherwise an induced subgraph
containing $\iterm$ edges is definitely not a forest.

Similarly as in the case of \steinertree, $s(v) = \zero$ means 
$v \notin X$, whereas $s(v) = \onej$ corresponds to $v \in X_j$.
The accumulators $\itern$,$\iterm$,$\itercc$ and $\iterW$ keep track of the number of 
vertices and edges in the subgraph induced by vertices from $X$,
number of markers already used and the sum of weights of chosen vertices and markers.
Hence $A_x(\itern,\iterm,\itercc,\iterW,s)$ is the number of pairs from $\objs_x(\itern,\iterm,\itercc,\iterW)$ 
with a fixed interface on vertices from $\bag{x}$.
Note that we ensure that no vertex from $B_x$ is yet marked, because we decide
whether to mark a vertex or not in its forget bag.
Recall that the tree decomposition is rooted in an empty bag
hence for every vertex there exists exactly one forget bag forgetting it.

The algorithm computes $A_x(\itern, \iterm, \itercc, \iterW,s)$ for all bags $x \in\treedecomp$ in a bottom-up fashion
for all reasonable values of $\itern$, $\iterm$, $\itercc$, $\iterW$ and $s$.
We now give the recurrence for $A_x(\itern, \iterm, \itercc, \iterW, s)$ that
is used by the dynamic programming
algorithm.
In order to simplify notation let $v$ the vertex introduced and
contained in an introduce bag, $uv$ the edge introduced in an
introduce edge bag,
and let $y,z$ stand for the left and right child of $x$ in $\treedecomp$ if present.
\begin{itemize}
\item \textbf{Leaf bag}:
\[ A_x(0,0,0,0,\emptyset) = 1 \]
\item \textbf{Introduce vertex bag}:
	\begin{align*}
     A_x(\itern, \iterm, \itercc, \iterW, s \cup \{(v, \zero)\}) & = A_y(\itern, \iterm, \itercc, \iterW, s)  \\
	   A_x(\itern, \iterm, \itercc, \iterW, s \cup \{(v, \onej)\}) & = [v \not \in \mustset] A_y(\itern-1,\iterm, \itercc, \iterW-\omega((v,\forest)), s)  
\end{align*}
\item \textbf{Introduce edge bag}:
\[
     A_x(\itern, \iterm, \itercc, \iterW, s)  = [s(u) = \zero \vee s(v) = \zero \vee s(u) = s(v)] A_y(\itern, \iterm-[s(u) = s(v) \not= \zero], \itercc, \iterW, s) 
\]
 Here we remove table entries not consistent with the edge $uv$,
 and update the accumulator $\iterm$ storing the number of edges
 in the induced subgraph.
\item \textbf{Forget bag}:
\[ A_x(\itern, \iterm, \itercc, \iterW, s) =A_x(\itern, \iterm, \itercc-1, \iterW-\weight((v,\marker)), s[ v \to \oneone] \}) + \sum_{\alpha \in \{\zero, \oneone, \onetwo\}} A_x(\itern, \iterm, \itercc, \iterW, s[v \to \alpha] \} )) \]
If the vertex $v$ was in $X_1$ then we can mark it and update the acumulator $\itercc$.
If we do not mark the vertex $v$ then it can have any of the three states with no additional requirements imposed.
\item \textbf{Join bag}:
\begin{align*}
A_x(\itern, \iterm, \itercc,\iterW, s) &= \sum_{\itern_1 + \itern_2 = \itern + |s^{-1}(\{\mathbf{1_1},\mathbf{1_2}\})|} \ \ \sum_{\iterm_1 + \iterm_2 = \iterm}\ \ \sum_{\itercc_1 + \itercc_2 = \itercc} \\
                      &\qquad \sum_{\iterW_1 + \iterW_2 = \iterW + \weight(s^{-1}(\{\mathbf{1_1},\mathbf{1_2}\}) \times \{\forest\} )}  A_y(\itern_1,\iterm_1,\itercc_1,\iterW_1,s) A_z(\itern_2,\iterm_2,\itercc_2,\iterW_2,s)
\end{align*}
The only valid combinations to achieve the colouring $s$ is to have the 
same colouring in both children. 
Since vertices coloured $\onej$ in $\bag{x}$ are accounted for in both 
tables of the children, we add their contribution to the 
accumulators $\itern$ and $\iterW$.
\end{itemize}

Since $|\objs_\targetW^{\targetn,\targetm,\targetcc}| = A_\rootv(\targetn, \targetm, \targetcc, \targetW, \emptyset)$
the above recurrence leads to a dynamic programming algorithm that
computes the parity of $|\objs_\targetW^{\targetn,\targetm,\targetcc}|$ for all reasonable values of $\targetW,\targetn,\targetm,\targetcc$
in $3^t |V|^{O(1)}$ time.
Consequently we finish the proof of Theorem~\ref{thm:app:fvs}.
\end{proof}

\subsection{Non-connectivity problems with an additional connectivity requirement}\label{sec:app:cx}

In this section we give details on algorithms for problems that are defined 
as standard ``local'' problems with an additional constraint that the 
solution needs to induce a connected subgraph. 
Problems described here are \cvertexcover{}, \cdomset{},
\coct{} and \cfvs{}, but the approach here can be easily carried over 
to similar problems.

Let us start with a short informal description. 
Solving a problem {\sc{Connected X}},
we simply run the easy and well-known algorithm for {\sc{X}} 
(or, in the case of \cfvs{}, we run the algorithm for \fvs{} from Section 
\ref{sec:app:fvs}), but we additionally keep a cut consistent with the 
solution, i.e., we count the number of solution-cut pairs.
Similarly as in the case of \steinertree{}, a solution
to the problem {\sc{X}} that induces $c$ connected components is consistent 
with $2^{c-1}$ cuts, thus all the disconnected solutions cancel out 
modulo $2$.

Similarly as in Section~\ref{sec:app:fvs} we solve more general
versions of problems where additionally as a part of the input
we are given a set $\mustset \subseteq V$ which contains vertices
that must belong to a solution.

\begin{remark}\label{rem:alg:guess-vertex}
  In the algorithms we assume that the set $\mustset \subseteq V$ is nonempty,
  so we can choose one fixed vertex $\anchor \in \mustset$
  that needs to be included in a fixed side of all considered cuts
  (cf. algorithm for \steinertree in Section \ref{sec:steinertree}).
  To solve the problem where $\mustset = \emptyset$, 
  we simply iterate over all possible choices of $\anchor \in V$
  and put $\mustset = \{\anchor\}$.
  Note that this does not increase the probability
  that the (Monte-Carlo) algorithm gives a wrong answer.
  Our algorithms can only give false negatives, 
	so in the case of a YES-instance
  we only need a single run, in which a solution can be found, to give a correct answer.
\end{remark}

Let us now proceed with the formal arguments.
For each problem, we start with a problem definition
and a formal statement of a result.

\subsubsection{\cvertexcover{}}

\defproblemu{\constrainedcvertexcover}
{An undirected graph $G = (V,E)$, a subset $\mustset \subseteq V$ and an integer $k$}
{Does there exist such a subset $X \subseteq V$ of cardinality $k$
  that $\mustset \subseteq X$, $G[X]$ is connected
  and each edge $e \in E$ is incident with at least one vertex from $X$?}

\begin{theorem}\label{thm:app:cvc}
 There exists a Monte-Carlo algorithm that given a tree decomposition of width $t$
 solves \constrainedcvertexcover{} in $3^t |V|^{O(1)}$ time.
 The algorithm cannot give false positives and may give false negatives with probability at most $1/2$.
\end{theorem}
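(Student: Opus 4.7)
The plan is to follow the Cut\&Count template established for \steinertree{} in Lemma \ref{lem:stein} and Section \ref{sec:steinertree} essentially verbatim, since the only global requirement is connectivity of $G[X]$ and the vertex cover constraint is purely local and hence can be enforced at introduce edge bags. By Remark \ref{rem:alg:guess-vertex} we may assume $\mustset \neq \emptyset$ and fix an anchor $\anchor \in \mustset$. Take $\univ = V$ and draw a random weight function $\weight : V \to \{1,\ldots,2|V|\}$ as in Algorithm \ref{alg:cutandcount}.

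For each integer $\targetW$ define $\cand_\targetW$ to be the set of all $X \subseteq V$ with $\mustset \subseteq X$, $|X| = k$, $\weight(X) = \targetW$, and $X$ a vertex cover of $G$. Let $\sols_\targetW = \{X \in \cand_\targetW : G[X]\text{ is connected}\}$, and let $\objs_\targetW$ be the set of pairs $(X,(X_1,X_2))$ such that $X \in \cand_\targetW$, $(X_1,X_2)$ is a consistent cut of $G[X]$, and $\anchor \in X_1$. Since $\anchor \in \mustset \subseteq X$, Lemma \ref{lem:evencancel} applies and gives $|\objs_\targetW| = \sum_{X \in \cand_\targetW} 2^{\conncomp(G[X])-1} \equiv |\sols_\targetW| \pmod 2$, so the first hypothesis of Corollary \ref{cor:cutandcount} is satisfied.

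For the counting subroutine I would run a nice-tree-decomposition dynamic program with table entries $A_x(\itern,\iterW,s)$ indexed by a bag $x$, integers $0 \leq \itern \leq k$ and $0 \leq \iterW \leq 2k|V|$, and a colouring $s \in \{\zero,\oneone,\onetwo\}^{B_x}$, where $A_x(\itern,\iterW,s)$ counts pairs $(X,(X_1,X_2))$ with $X \subseteq V_x$, $\mustset \cap V_x \subseteq X$, $|X| = \itern$, $\weight(X) = \iterW$, $(X_1,X_2)$ a consistent cut of $G_x[X]$, $\anchor \in X_1$ whenever $\anchor \in V_x$, and $s$ encoding the membership/side of every vertex in $B_x$ (with $\zero$ for ``not in $X$''). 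The recurrences are identical to those of Lemma \ref{lem:stein} with three modifications: in an introduce-vertex bag we allow $s(v) = \zero$ only if $v \notin \mustset$; there is no cardinality target beyond $k$ and no edge counter; and in an introduce-edge bag for $uv$ we multiply by $[\, s(u) \neq \zero \vee s(v) \neq \zero \,] \cdot [\, s(u) = \zero \vee s(v) = \zero \vee s(u) = s(v)\,]$, the first factor enforcing the vertex cover condition and the second the consistency of the cut. The join bag is the trivial one in which the colourings of $B_x$ in both children must agree and the accumulators are added with a correction for the double-counted vertices of $B_x$, so no Fast Subset Convolution is needed.

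The time bound is immediate: there are $3^{|B_x|} \leq 3^{t+1}$ colourings per bag, $(k+1)(2k|V|+1) = |V|^{O(1)}$ values of $(\itern,\iterW)$, all arithmetic is done modulo $2$, and the only nontrivial aggregation (the convolution in the join bag) still costs $3^t \cdot |V|^{O(1)}$ because colours on $B_x$ must match pointwise. Feeding the resulting counting procedure into Algorithm \ref{alg:cutandcount} and applying Corollary \ref{cor:cutandcount} to $\sols = \bigcup_\targetW \sols_\targetW$ and $\objs = \bigcup_\targetW \objs_\targetW$ yields the claimed Monte-Carlo algorithm. There is no real obstacle here: Lemma \ref{lem:evencancel}, the Isolation Lemma, and the \steinertree{} dynamic program already do all the conceptual work, and the only new element is the local vertex-cover filter at introduce-edge bags.
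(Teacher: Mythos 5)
Your proposal is correct and follows essentially the same route as the paper's own proof: same universe $\univ = V$, same reduction to a fixed anchor $\anchor \in \mustset$ via the remark, same definitions of $\cand_\targetW$, $\sols_\targetW$, $\objs_\targetW$, the same invocation of Lemma \ref{lem:evencancel}, and the same three-state DP with the vertex-cover/cut-consistency filter at introduce-edge bags (your product of two indicators is logically equivalent to the paper's single disjunction) and a trivial join. The paper makes one remark you omit — that one could instead reduce to \steinertree{} by subdividing edges with terminals, but the direct DP is preferable because it is reused in the FPT algorithm of Appendix \ref{sec:param-improv} — though this does not affect correctness.
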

There exists an easy proof of Theorem~\ref{thm:app:cvc} by a reduction 
to the \steinertree{} problem --- we subdivide all edges of the graph $G$
using terminals and add pendant terminals to $S$.
Such a transformation does not change the treewidth of the graph by more 
than one.
Nonetheless we prove the theorem by a direct application of the 
Cut\&Count technique,
in a similar manner as for the \steinertree{} problem in 
Section~\ref{sec:steinertree}.
Our motivation for choosing the second approach is that we
need it to develop an algorithm for \cvertexcover{} parameterized by the 
solution size in Appendix~\ref{sec:param-improv}
which relies on the algorithm we describe in the proof.
\begin{proof}

%
%
%
  We use the Cut\&Count technique.
  As the universe for Algorithm~\ref{alg:cutandcount} we take the vertex set $\univ = V$.
  Recall that we generate a random weight function 
	$\weight:\univ \to \{1,2,\ldots,\maxw\}$, 
	taking $\maxw = 2|\univ| = 2|V|$.
  By Remark \ref{rem:alg:guess-vertex} we may assume that $\mustset \neq \emptyset$
  and we may choose one fixed vertex $\anchor \in \mustset$.

  \noindent {\bf{The Cut part.}} For an integer $\targetW$ we define:
  \begin{enumerate}
  \item $\cand_\targetW$ to be the family of solution candidates
  of size $\targetk$ and weight $\targetW$:
  $\cand_\targetW$ is the family of sets $X \subseteq V$
  such that $\mustset \subseteq X$, $|X|=\targetk$,
  $\weight(X) = \targetW$ and $X$ is a vertex cover of $G$;
  \item $\sols_\targetW$ to be the family of solutions of size $\targetk$
  and weight $\targetW$, that is sets $X \in \cand_\targetW$
  such that $G[X]$ is connected;
  \item $\objs_\targetW$ to be the family of pairs $(X,(X_1,X_2))$,
  where $X \in \cand_\targetW$, $\anchor \in X_1$
  and $(X_1,X_2)$ is a consistent cut of $G[X]$.
  \end{enumerate}

  \noindent {\bf{The Count part.}} Similarly as in the case of \steinertree
	we note that
  by Lemma~\ref{lem:evencancel} for each $X \in \cand_\targetW$
  there exist $2^{\conncomp(G[X])-1}$
  consistent cuts of $G[X]$, 
	thus for any $\targetW$ we have
  $|\sols_\targetW| \equiv |\objs_\targetW|$.

  To finish the proof we need to describe a procedure
  $\countproc(\weight,\targetW,\treedecomp)$ that,
  given a nice tree decomposition $\treedecomp$, weight function $\weight$
  and an integer $\targetW$,
  computes $|\objs_\targetW|$ modulo $2$.

  As usual we use dynamic programming. 
	We follow the notation
  from the \steinertree{} example (see Lemma \ref{lem:stein}).
  For every bag $x \in \treedecomp$ of the tree decomposition,
  integers $0 \leq \iterk \leq |V|$, $0 \leq \iterW \leq \maxw|V|$
  and $s \in \{\zero, \oneone, \onetwo\}^{\bag{x}}$ 
	(called the colouring) define
  \begin{align*}
    \cand_x(\iterk,\iterW) &= \Big\{X \subseteq \subbags{x} \; \big{|}\; 
			(\mustset \cap \subbags{x}) \subseteq X\ \wedge\ |X|=\iterk\ \wedge\ \weight(X)=\iterW \
      \wedge\ X\textrm{ is a vertex cover of }G_x\Big\} \\
    \objs_x(\iterk,\iterW) &= \Big\{(X,(X_1,X_2)) \big{|}\; 
			X \in \cand_x(\iterk,\iterW)\ \wedge\ (X, (X_1,X_2)) 
			\textrm{ is a consistently cut subgraph of }G_x \\
        &\qquad\ \wedge\ (\anchor \in V_x \Rightarrow \anchor \in X_1) \Big\} \\
    A_x(\iterk,\iterW,s) &= \Big| \Big\{(X,(X_1,X_2)) 
			\in \objs_x(\iterk,\iterW) \big{|}\; 
			(s(v) = \onej \Rightarrow v \in X_j)\ \wedge \
			(s(v) = \zero \Rightarrow v \not\in X) \Big\} \Big|
  \end{align*}
  Similarly as in the case of \steinertree, $s(v) = \zero$ means 
	$v \notin X$, whereas $s(v) = \onej$ corresponds to $v \in X_j$.
  The accumulators $\iterk$ and $\iterW$ keep track of the number of 
	vertices in the solution
  and their weights, respectively.
  Hence $A_x(\iterk,\iterW,s)$ is the number of pairs from $\objs$ 
	of candidate solutions and consident cuts on $G_x$,
  with fixed size, weight and interface on vertices from $\bag{x}$.

  The algorithm computes $A_x(\iterk,\iterW,s)$ for all bags $x \in T$ in a bottom-up fashion
  for all reasonable values of $\iterk$, $\iterW$ and $s$.
  We now give the recurrence for $A_x(\iterk,\iterW,s)$ that is used by the dynamic programming
  algorithm.
  In order to simplify notation denote by $v$ the vertex introduced and
  contained in an introduce bag, by $uv$ the edge introduced in an
  introduce edge bag,
  and let $y,z$ be the left and right child of $x$ in $\treedecomp$ if present.
\begin{itemize}
\item \textbf{Leaf bag}:
	\[ A_x(0,0,\emptyset) = 1 \]
\item \textbf{Introduce vertex bag}:
	\begin{eqnarray*}
	 A_x(\iterk,\iterW,s[v \to \zero]) & =  &[v \not \in \mustset]A_y(\iterk,\iterW,s) \\
     A_x(\iterk,\iterW,s[v \to \oneone]) & = & A_y(\iterk-1,\iterW-\weight(v),s)  \\
     A_x(\iterk,\iterW,s[v \to \onetwo]) & = & [v \not= \anchor]A_y(\iterk-1,\iterW-\weight(v),s) 
	\end{eqnarray*}
  We take care of the restrictions imposed by the conditions $(\mustset \cap \subbags{x}) \subseteq X$ and $\anchor \in X_1$.
\item \textbf{Introduce edge bag}:
	\[ A_x(\iterk,\iterW,s) = [s(u) = s(v) \neq \mathbf{0}\ \vee\ (s(u) = \mathbf{0} \wedge s(v) \neq \zero)\ \vee\ (s(u) \neq \mathbf{0} \wedge s(v) = \mathbf{0})] A_y(\iterk,\iterW,s) \]
   Here we remove table entries not consistent with the edge $uv$, i.e., 
	 table entries where the endpoints are colored $\oneone$ and $\onetwo$ 
	 (thus creating an inconsistent cut) or $\zero$ and $\zero$ 
	 (thus leaving an edge that is not covered).
\item \textbf{Forget bag}:
	\[ A_x(\iterk,\iterW,s) = \sum_{\alpha \in \{\zero, \oneone, \onetwo\}} A_y(\iterk,\iterW,s[v \to \alpha]) \]
	In the child bag the vertex $v$ can have three states, and no additional
	requirements are imposed, so we sum over all the three states.
\item \textbf{Join bag}:
	\[ A_x(\iterk,\iterW,s) = \sum_{\iterk_1 + \iterk_2 = \iterk + |s^{-1}(\{\mathbf{1_1},\mathbf{1_2}\})|} \ \  \sum_{\iterW_1 + \iterW_2 = \iterW + \weight(s^{-1}(\{\mathbf{1_1},\mathbf{1_2}\}))} A_y(\iterk_1,\iterW_1,s) A_z(\iterk_2,\iterW_2,s) \]
	The only valid combination to achieve the colouring $s$ is to have the 
	same colouring in both children. 
	Since vertices coloured $\onej$ in $\bag{x}$ are accounted for in both 
	tables of the children, we add their contribution to the 
	accumulators.
\end{itemize}
It is easy to see that the above recurrence leads to a dynamic programming 
algorithm that
computes the parity of $|\sols_\targetW|$ for all values of $\targetW$ in $3^t |V|^{O(1)}$ time,
since $|\objs_\targetW|=A_\rootv(k,\targetW,\emptyset)$ and $|\sols_\targetW| \equiv |\objs_\targetW|$.
Moreover, as we count the parities and not the numbers $A_x$ themselves, 
all arithmetical operations can be done in constant time.
Thus, the proof of Theorem~\ref{thm:app:cvc} is finished. 
\end{proof}

\subsubsection{\cdomset{}}

\defproblemu{\constrainedcdomset}
{An undirected graph $G = (V,E)$, a subset $\mustset \subseteq V$ and an integer $k$.}
{Does there exist such a connected set $\mustset \subseteq X \subseteq V$ of cardinality at
most $k$ that $N[X] = V$?}

\begin{theorem}\label{thm:app:cdomset}
 There exists a Monte-Carlo algorithm that given a tree decomposition of width $t$
 solves \constrainedcdomset{} in $4^t |V|^{O(1)}$ time.
 The algorithm cannot give false positives and may give false negatives with probability at most $1/2$.
\end{theorem}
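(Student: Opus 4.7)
My plan is to mirror the proof of Theorem \ref{thm:app:cvc} for \constrainedcvertexcover, making the necessary adjustments for the domination requirement. As in Remark \ref{rem:alg:guess-vertex}, I first reduce to the case $\mustset \neq \emptyset$ by iterating over possible choices of a ``pinned'' vertex, and then fix $\anchor \in \mustset$. I set $\univ = V$, draw weights $\weight : V \to \{1,\ldots,2|V|\}$, and for each target size $\iterk \leq k$ and each target weight $\targetW$ define:

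\begin{itemize}
\item $\cand_\targetW^\iterk$: sets $X \subseteq V$ with $\mustset \subseteq X$, $|X|=\iterk$, $\weight(X)=\targetW$, and $N[X]=V$;
\item $\sols_\targetW^\iterk$: sets $X \in \cand_\targetW^\iterk$ with $G[X]$ connected;
\item $\objs_\targetW^\iterk$: pairs $(X,(X_1,X_2))$ where $X \in \cand_\targetW^\iterk$, $(X_1,X_2)$ is a consistent cut of $G[X]$, and $\anchor \in X_1$.
\end{itemize}

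By Lemma \ref{lem:evencancel}, for each fixed $X \in \cand_\targetW^\iterk$ there are $2^{\conncomp(G[X])-1}$ consistent cuts with $\anchor \in X_1$, so $|\sols_\targetW^\iterk| \equiv |\objs_\targetW^\iterk| \pmod{2}$. Plugging the resulting count into Algorithm \ref{alg:cutandcount} (iterating over all $\iterk \leq k$ and all $\targetW$) and invoking Corollary \ref{cor:cutandcount} then gives the correct Monte-Carlo guarantee, provided I can compute $|\objs_\targetW^\iterk|$ modulo $2$ in $4^t|V|^{O(1)}$ time.

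For the Count part I would run a dynamic program over a nice tree decomposition with a table $A_x(\iterk,\iterW,s)$ indexed by a colouring $s \in \{\zeron,\zeroy,\oneone,\onetwo\}^{\bag{x}}$, where the four colours of $v$ stand for: ``$v \notin X$ and $v$ is not yet dominated by any vertex in $X \cap \subbags{x}$'' ($\zeron$), ``$v \notin X$ and is already dominated'' ($\zeroy$), ``$v \in X_1$'' ($\oneone$), ``$v \in X_2$'' ($\onetwo$). Introduce-vertex, introduce-edge and forget bags are routine; the only subtle points are that (i) an introduce-edge bag with one endpoint in $X$ may upgrade the other endpoint's colour from $\zeron$ to $\zeroy$, and (ii) a forget bag is only allowed to forget $v$ when its colour is not $\zeron$, and when $v \in \mustset$ the colour must be $\oneone$ or $\onetwo$. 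This yields $4^t |V|^{O(1)}$ time for all bags except the join bags.

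The main obstacle is the join bag, since a naive combination would require matching $s$ between the children and so cost $16^t$: for a vertex with colour $\zeroy$ in $\bag{x}$ it suffices for it to be dominated in at least one child, but it may have colour $\zeron$ in the other. I would split $s$ into its ``$X$-part'' $s|_{\{\oneone,\onetwo\}}$, which must match exactly in both children, and its ``non-$X$-part'' on $U = s^{-1}(\{\zeron,\zeroy\})$; identifying $\{\zeron,\zeroy\}^U$ with $2^U$ via ``$\zeroy$ means \emph{dominated}'', combining the two children corresponds exactly to a covering product over the ground set $U$. Applying Theorem \ref{thm:fsc} for each choice of the $X$-part and each pair $(\iterk_1,\iterW_1)$, $(\iterk_2,\iterW_2)$ gives a join in time $3^t \cdot 2^{|U|} |\bag{x}|^{O(1)} |V|^{O(1)} = 4^t |V|^{O(1)}$ (since $\binom{|\bag{x}|}{|U|} 2^{|U|} 2^{|\bag{x}|-|U|}$ summed over splits is $4^{|\bag{x}|}$). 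Combined with the introductory bags this yields the required $4^t|V|^{O(1)}$ bound, proving Theorem \ref{thm:app:cdomset}.
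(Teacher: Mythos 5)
Your proposal matches the paper's proof essentially line for line: same universe, same role of $\anchor$ and Remark \ref{rem:alg:guess-vertex}, same four-colour state space $\{\zeron,\zeroy,\oneone,\onetwo\}$, same application of Lemma \ref{lem:evencancel}, and the same covering-product trick at join bags with the identical $\sum_{\hat{s}} 2^{|\hat{s}^{-1}(\zero)|} = 4^{|\bag{x}|}$ accounting. The only cosmetic deviation is that you enforce $\mustset \subseteq X$ at forget bags rather than at introduce-vertex bags, which is an equivalent choice.
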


It is known that \cdomset{} is equivalent to \maxleaf{} \cite{maxleaf-cds},
hence the algorithm for \exactleaf{} can be used to solve \cdomset{}. 
However, here the Cut\&Count application is significantly easier
and more straightforward than in the \exactleaf{} algorithm presented later.
Thus we include the algorithm for \cdomset{} below.

\begin{proof}[Proof of Theorem \ref{thm:app:cdomset}]
  We use the Cut\&Count technique.
  As the universe for Algorithm~\ref{alg:cutandcount} we take the vertex set $\univ = V$.
  Recall that we generate a random weight function 
	$\weight:\univ \to \{1,2,\ldots,\maxw\}$, 
	taking $\maxw = 2|\univ| = 2|V|$.
  By Remark \ref{rem:alg:guess-vertex} we may assume that $\mustset \neq \emptyset$
  and we may choose one fixed vertex $\anchor \in \mustset$.

  \noindent {\bf{The Cut part.}} For an integer $\targetW$ we define:
  \begin{enumerate}
  \item $\cand_\targetW$ to be the family of solution candidates
  of size $\targetk$ and weight $\targetW$:
  $\cand_\targetW$ is the family of sets $X \subseteq V$
  such that $\mustset \subseteq X$, $|X|=\targetk$,
  $\weight(X) = \targetW$ and $N[X] = V$.
  \item $\sols_\targetW$ to be the family of solutions of size $\targetk$
  and weight $\targetW$, that is sets $X \in \cand_\targetW$
  such that $G[X]$ is connected;
  \item $\objs_\targetW$ to be the family of pairs $(X,(X_1,X_2))$,
  where $X \in \cand_\targetW$,
  and $(X_1,X_2)$ is a consistent cut of $G[X]$.
  \end{enumerate}

  \noindent {\bf{The Count part.}} As before
	we note that by Lemma~\ref{lem:evencancel} for each $X \in \cand_\targetW$
  there exist $2^{\conncomp(G[X])-1}$ consistent cuts of $G[X]$, 
	thus for any $\targetW$ we have
	$|\sols_\targetW| \equiv |\objs_\targetW|$. What remains is to describe 
	a procedure $\countproc(\weight,\targetW,\treedecomp)$ that,
  given a nice tree decomposition $\treedecomp$, weight function $\weight$
  and an integer $\targetW$,
  computes $|\objs_\targetW|$ modulo $2$.

  As usual we use dynamic programming. 
	We follow the notation
  from the \steinertree{} example (see Lemma \ref{lem:stein}).
  For every bag $x \in \treedecomp$ of the tree decomposition,
  integers $0 \leq \iterk \leq |V|$, $0 \leq \iterW \leq \maxw|V|$
  and $s \in \{\zeron, \zeroy, \oneone, \onetwo\}^{\bag{x}}$ 
	(called the colouring) define
  \begin{align*}
    \cand_x(\iterk,\iterW) &= \Big\{X \subseteq \subbags{x} \; \big{|}\; 
			(\mustset \cap \subbags{x}) \subseteq X\ \wedge\ |X|=\iterk\ \wedge\ \weight(X)=\iterW \
      \wedge\ N_{G_x}[X] = V_x \setminus s^{-1}(\zeron)  \Big\} \\
    \objs_x(\iterk,\iterW) &= \Big\{(X,(X_1,X_2)) \big{|}\; 
			X \in \cand_x(\iterk,\iterW)\ \wedge\ (X, (X_1,X_2)) 
			\textrm{ is a consistently cut subgraph of }G_x \\
      &\qquad\ \wedge\ (\anchor \in V_x \Rightarrow \anchor \in X_1) \Big\} \\
    A_x(\iterk,\iterW,s) &= \Big| \Big\{(X,(X_1,X_2)) 
			\in \objs_x(\iterk,\iterW) \big{|}\;
			(s(v) = \onej \Rightarrow v \in X_j) \\
      &\qquad\ \wedge \
			(s(v) = \zeroy \Rightarrow v \in N_{G_x}(X))\ \wedge\
      (s(v) = \zeron \Rightarrow v \not\in N_{G_x}[X] \Big\} \Big|
      \end{align*}
  Here $s(v) = \zeroy$ means $v \notin X$ and $v$ is dominated by $X$
  in $G_x$, $s(v) = \zeron$ means $v \notin X$ and $v$ is not dominated
  by $X$ in $G_x$,
	whereas $s(v) = \onej$ corresponds to $v \in X_j$.
  The accumulators $\iterk$ and $\iterW$ keep track of the number of 
	vertices in the solution
  and their weights, respectively.
  Hence $A_x(\iterk,\iterW,s)$ is the number of pairs from $\objs$ 
	of candidate solutions and consistent cuts on $G_x$,
  with fixed size, weight and interface on vertices from $\bag{x}$.

  The algorithm computes $A_x(\iterk,\iterW,s)$ for all bags $x \in T$ in a bottom-up fashion
  for all reasonable values of $\iterk$, $\iterW$ and $s$.
  We now give the recurrence for $A_x(\iterk,\iterW,s)$ that is used by the dynamic programming
  algorithm.
  As usual, $v$ denotes the vertex introduced and
  contained in an introduce bag, $uv$ the edge introduced in an
  introduce edge bag,
  while $y$ and $z$ denote the left and right child of $x$ in $\treedecomp$,
	if present.
\begin{itemize}
\item \textbf{Leaf bag}:
	\[ A_x(0,0,\emptyset) = 1 \]
\item \textbf{Introduce vertex bag}:
	\begin{align*}
	 A_x(\iterk,\iterW,s[v \to \zeron]) & =  [v \not \in \mustset]A_y(\iterk,\iterW,s) \\
	 A_x(\iterk,\iterW,s[v \to \zeroy]) & =   0\\
     A_x(\iterk,\iterW,s[v \to \oneone]) & =  A_y(\iterk-1,\iterW-\weight(v),s)  \\
     A_x(\iterk,\iterW,s[v\to \onetwo]) & =  [v \not= \anchor]A_y(\iterk-1,\iterW-\weight(v),s) 
	\end{align*}
  We take care of restrictions imposed by conditions $(\mustset \cap \subbags{x}) \subseteq X$ and $\anchor \in X_1$.
	Note that at the moment of introducing $v$ there are no edges incident to
	$v$ in $G_x$, thus $v$ cannot dominated, but not chosen.
\item \textbf{Introduce edge bag}:
  \begin{align*}
  A_x(\iterk,\iterW,s) &= A_y(\iterk,\iterW,s) & \textrm{if }s(v),s(u) \in \{\zeron,\zeroy\} \\
  A_x(\iterk,\iterW,s) &= [s(v)=s(u)]A_y(\iterk,\iterW,s) & \textrm{if }s(v),s(u) \in \{\oneone,\onetwo\} \\
  A_x(\iterk,\iterW,s) &= 0& \textrm{if } s(v)=\onej \wedge s(u) = \zeron \\
  A_x(\iterk,\iterW,s) &= 0& \textrm{if } s(v) = \zeron \wedge s(u) = \onej \\
  A_x(\iterk,\iterW,s) &= A_y(\iterk,\iterW,s) + A_y(\iterk,\iterW,s[u \to \zeron]) & \textrm{if }s(v) = \onej \wedge s(u) = \zeroy \\
  A_x(\iterk,\iterW,s) &= A_y(\iterk,\iterW,s) + A_y(\iterk,\iterW,s[v \to \zeron]) & \textrm{if }s(v) = \zeroy \wedge s(u) = \onej \\
  \end{align*}
  Here we perform two operations.
	First, we filter out entries creating an inconsistent cut,
  i.e., ones in which the endpoints are coloured $\oneone$ and $\onetwo$.
  Second, if one of the endpoints becomes
  dominated in $G_x$, its state could be changed from $\zeron$ to $\zeroy$
\item \textbf{Forget bag}:
	\[ A_x(\iterk,\iterW,s) = \sum_{\alpha \in \{\zeroy, \oneone, \onetwo\}} A_y(\iterk,\iterW,s[v \to \alpha]) \]
	In the child bag the vertex $v$ can have four states, but the state where $v$ is not dominated
  ($s(v) = \zeron$) is forbidden (we will have no more chances to dominate
	this vertex, but all vertices need to be dominated).
	Thus we sum over the three remaining states.
\item \textbf{Join bag}:
  For a colouring $s \in \{\zeron, \zeroy, \oneone, \onetwo\}^{\bag{x}}$
  we define its precolouring $\hat{s} \in \{\zero, \oneone, \onetwo\}^{\bag{x}}$
  as
  \begin{align*}
  \hat{s}(v) &= s(v) &\textrm{if }s(v) \in \{\oneone, \onetwo\} \\
  \hat{s}(v) &= \zero &\textrm{if }s(v) \in \{\zeroy, \zeron\}
  \end{align*}
   For a precolouring $\hat{s}$ (or a colouring $s$) and set
  $T \subseteq \hat{s}^{-1}(\zero)$ we define a colouring $s[T]$ as 
  \begin{align*}
  s[T](v) &= \hat{s}(v)  & \textrm{if }\hat{s}(v) \in \{\oneone, \onetwo\} \\
  s[T](v) &= \zeroy & \textrm{if }v \in T \\
  s[T](v) &= \zeron & \textrm{if }v \in \hat{s}^{-1}(\zero) \setminus T
  \end{align*}
  We can now write a recursion formula for join bags.
  \begin{align*}
	A_x(\iterk,\iterW,s) &= \sum_{\iterk_1 + \iterk_2 = \iterk + |s^{-1}(\{\mathbf{1_1},\mathbf{1_2}\})|} \ \  \sum_{\iterW_1 + \iterW_2 = \iterW + \weight(s^{-1}(\{\mathbf{1_1},\mathbf{1_2}\}))} \\
              &\qquad \sum_{T_1,T_2 \subseteq s^{-1}(\{\zeron,\zeroy\})}
                [T_1 \cup T_2 = s^{-1}(\zeroy)]
                A_y(\iterk_1,\iterW_1,s[T_1]) A_z(\iterk_2,\iterW_2,s[T_2])
  \end{align*}
  To achieve the colouring $s$, the precolourings of children have to be the same.
  Moreover, the sets of vertices coloured $\zeroy$ in children have
  to sum up to $s^{-1}(\zeroy)$.
	Since vertices coloured $\onej$ in $\bag{x}$ are accounted for both 
	tables of the children, we add their contribution to the 
	accumulators.

  To compute the recursion formula
  efficiently we need to use the fast evaluation of the covering product.
  For accumulators $\iterk,\iterW$ and a precolouring $\hat{s}$
  we define the following functions on subsets of $\hat{s}^{-1}(\zero)$:
  \begin{align*}
  f^{\iterk,\iterW,\hat{s}}(T) &= A_y(\iterk,\iterW,s[T]), \\
  g^{\iterk,\iterW,\hat{s}}(T) &= A_z(\iterk,\iterW,s[T]).
  \end{align*}
  Now note that
	$$A_x(\iterk,\iterW,s) = \sum_{\iterk_1 + \iterk_2 = \iterk + |s^{-1}(\{\mathbf{1_1},\mathbf{1_2}\})|} \ \  \sum_{\iterW_1 + \iterW_2 = \iterW + \weight(s^{-1}(\{\mathbf{1_1},\mathbf{1_2}\}))}
         (f^{\iterk_1,\iterW_1,\hat{s}} \fsccov g^{\iterk_2,\iterW_2,\hat{s}})(s^{-1}(\zeroy)).$$
  By Theorem \ref{thm:fsc},
  for fixed accumulators $\iterk_1,\iterW_1,\iterk_2,\iterW_2$ and a precolouring $\hat{s}$
  the term
     $$(f^{\iterk_1,\iterW_1,\hat{s}} \fsccov g^{\iterk_2,\iterW_2,\hat{s}})(s^{-1}(\zeroy))$$
   can be computed in time $2^{|\hat{s}^{-1}(\zero)|} |\hat{s}^{-1}(\zero)|^{O(1)}$
   at once for all colourings $s$ with precolouring $\hat{s}$.
   Thus, the total time consumed by the evaluation of $A_x$ is bounded by
   $$|V|^{O(1)} \sum_{\hat{s} \in \{\zero,\oneone,\onetwo\}^{\bag{x}}} 2^{|\hat{s}^{-1}(\zero)|}
    = 4^{|\bag{x}|} |V|^{O(1)}.$$
\end{itemize}
It is easy to see that the above recurrence leads to a dynamic programming 
algorithm that
computes the parity of $|\sols_\targetW|$ for all values of $\targetW$ in $4^t |V|^{O(1)}$ time,
since $|\objs_\targetW|=A_\rootv(k,\targetW,\emptyset)$ and $|\sols_\targetW| \equiv |\objs_\targetW|$.
Moreover, as we count the parities and not the numbers $A_x$ themselves, 
all arithmetical operations (in particular, the ring operations in the convolutions
    used in join bags) can be done in constant time.
Thus, the proof of Theorem~\ref{thm:app:cdomset} is finished. 
\end{proof}

\subsubsection{\coct{}}

\defproblemu{\constrainedcoct}
{An undirected graph $G = (V,E)$, a subset $\mustset \subseteq V$ and an integer $k$}
{Does there exist a subset $X \subseteq V$ of cardinality $k$,
  such that $\mustset \subseteq X$, $G[X]$ is connected
  and $G[V \setminus X]$ is bipartite?}

\newcommand{\weightX}{\mathbf{X}}
\newcommand{\weightL}{\mathbf{L}}

\begin{theorem}\label{thm:app:coct}
 There exists a Monte-Carlo algorithm that given a tree decomposition of width $t$
 solves \constrainedcoct{} in $4^t |V|^{O(1)}$ time.
 The algorithm cannot give false positives and may give false negatives with probability at most $1/2$.
\end{theorem}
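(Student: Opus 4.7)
The plan is to apply the Cut\&Count technique following exactly the same blueprint as the proofs of Theorems \ref{thm:app:cvc} and \ref{thm:app:cdomset}. By Remark \ref{rem:alg:guess-vertex} I may assume $\mustset$ is nonempty and fix an anchor $\anchor \in \mustset$ that will always be placed on the left side of every consistent cut. Since the standard dynamic program for \oct{} distinguishes three types of vertices (in the transversal, in colour class $L$, or in colour class $R$), a single random weight per vertex would not give an injective encoding of candidate solutions into subsets of a universe. I therefore take the universe to be $\univ = V \times \{\weightX, \weightL\}$ with $|\univ| = 2|V|$, sampling weights from $\{1,\ldots,\maxw\}$ for $\maxw = 2|\univ|$, and encode a partition $(X,L,R)$ of $V$ by placing $(v,\weightX) \in S$ iff $v \in X$ and $(v,\weightL) \in S$ iff $v \in L$. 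This encoding is injective, so Corollary \ref{cor:cutandcount} applies.

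For an integer $\targetW$ I define $\cand_\targetW$ as the family of triples $(X,L,R)$ partitioning $V$ with $\mustset \subseteq X$, $|X| = k$, both $L$ and $R$ independent in $G$, and total encoded weight $\targetW$; $\sols_\targetW$ as the subfamily of those $(X,L,R) \in \cand_\targetW$ for which $G[X]$ is connected; and $\objs_\targetW$ as the set of pairs $((X,L,R),(X_1,X_2))$ with $(X,L,R) \in \cand_\targetW$, $\anchor \in X_1$, and $(X_1,X_2)$ a consistent cut of $G[X]$. Lemma \ref{lem:evencancel} immediately yields $|\objs_\targetW| \equiv |\sols_\targetW|$, since each candidate contributes exactly $2^{\conncomp(G[X])-1}$ cut pairs, which is odd precisely when $G[X]$ is connected.

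The count sub-procedure is a standard bottom-up dynamic program on the nice tree decomposition. For each bag $x$, accumulators $0 \le \iterk \le |V|$ and $0 \le \iterW \le \maxw|V|$, and a colouring $s : \bag{x} \to \{\zeroone,\zerotwo,\oneone,\onetwo\}$ (where $\zeroone,\zerotwo$ mean $v$ lies in colour class $L$ or $R$ of the bipartition, and $\oneone,\onetwo$ mean $v \in X$ on the respective side of the cut), I maintain a table $A_x(\iterk,\iterW,s)$ counting partial elements of $\objs$ that realise this interface. The recurrences mirror those for \cvertexcover{}: an introduce vertex bag opens four options and filters according to the $\mustset$ and $\anchor$ constraints; an introduce edge bag $uv$ discards colourings with $\{s(u),s(v)\}=\{\oneone,\onetwo\}$ (inconsistent cut) or with $s(u)=s(v) \in \{\zeroone,\zerotwo\}$ (edge inside a colour class, violating bipartiteness), and is the identity otherwise; a forget bag sums over the four states of the forgotten vertex; and, pleasantly, a join bag simply requires the colourings of both children to coincide, with the usual correction of $\iterk$ and $\iterW$ for vertices coloured $\oneone$ or $\onetwo$. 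No convolution trick is needed, because the 4-state alphabet is already the natural ``product'' of the 2-state bipartition and the 2-state cut, and the feasibility constraints factor vertex-wise.

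Each bag therefore has at most $4^{|\bag{x}|} \cdot |V|^{O(1)}$ meaningful table entries, all arithmetic is performed modulo $2$ in constant time, and summing over the tree gives the $4^t |V|^{O(1)}$ bound. There is no real obstacle beyond careful bookkeeping: the only mildly non-routine point is the three-way encoding of the partition into $V \times \{\weightX,\weightL\}$, which is what lets the Isolation Lemma isolate a single candidate with probability at least $1/2$; after that observation, the dynamic-programming recurrences are routine to verify.
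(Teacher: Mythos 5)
Your proposal is correct and follows essentially the same approach as the paper's proof of Theorem~\ref{thm:app:coct}: the same universe $V \times \{\weightX,\weightL\}$ with two weights per vertex to injectively encode the three-way partition, the same four-state colouring combining bipartition side with cut side, the same cancellation argument via Lemma~\ref{lem:evencancel}, and the same observation that join bags need no convolution because states must simply match. The only cosmetic difference is that you phrase candidates as triples $(X,L,R)$ partitioning $V$ whereas the paper uses pairs $(X,L)$ with $R$ implicit, and your colour labels $\zeroone,\zerotwo$ correspond to the paper's $\zerol,\zeror$.
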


\begin{proof}
  We use the Cut\&Count technique.
  As the universe for Algorithm~\ref{alg:cutandcount} we take
  $\univ = V \times \{\weightX,\weightL\}$, i.e., for each vertex $v\in V$
  we generate two weights $\weight((v,\weightX))$ and $\weight((v,\weightL))$.
  Recall that we generate a random weight function 
	$\weight:\univ \to \{1,2,\ldots,\maxw\}$, 
	taking $\maxw = 2|\univ| = 4|V|$.
  By Remark \ref{rem:alg:guess-vertex} we may assume that $\mustset \neq \emptyset$
  and we may choose one fixed vertex $\anchor \in \mustset$.

  \noindent {\bf{The Cut part.}} To make use of the well-known algorithm for \oct{} parameterized
  by treewidth, we need to define a solution not only as a set $X$,
  but we need to add a proof that $G[V \setminus X]$ is bipartite (i.e.,
  a partition of $V \setminus X$ into two independent sets).
  Formally, for an integer $\targetW$ we define:
  \begin{enumerate}
  \item $\cand_\targetW$ to be the family of pairs $(X,L)$,
  where $|X|=\targetk$, $\weight(X \times \{\weightX\} \cup L \times\{\weightL\}) = \targetW$,
  $S \subseteq X$, $X \cap L = \emptyset$
  and $L$ and $V \setminus (X \cup L)$ are independent sets in $G$;
  \item $\sols_\targetW$ to be the family of pairs $(X,L) \in \cand_\targetW$
  such that $G[X]$ is connected;
  \item $\objs_\targetW$ to be the family of pairs $((X,L),(X_1,X_2))$,
  where $(X,L) \in \cand_\targetW$, $\anchor \in X_1$
  and $(X_1,X_2)$ is a consistent cut of $G[X]$.
  \end{enumerate}
  Note that for a single set $X \subseteq V$ there may exist many proofs $L$
  that $G[V \setminus X]$ is bipartite. We consider all pairs $(X,L)$
  as {\em{different}} solutions and solution candidates. To compute weight, each pair $(X,L)$
  is represented as $X \times \{\weightX\} \cup L \times \{\weightL\} \subseteq \univ$, thus
  each pair $(X,L)$ corresponds to a different subset of the weight domain $\univ$.

  \noindent {\bf{The Count part.}} Similarly as in the case of \steinertree
	we note that
  by Lemma~\ref{lem:evencancel} for each $(X,L) \in \cand_\targetW$
  there exist $2^{\conncomp(G[X])-1}$
  consistent cuts of $G[X]$, 
	thus for any $\targetW$ we have
  $|\sols_\targetW| \equiv |\objs_\targetW|$.

  To finish the proof we need to describe a procedure
  $\countproc(\weight,\targetW,\treedecomp)$ that,
  given a nice tree decomposition $\treedecomp$, weight function $\weight$
  and an integer $\targetW$,
  computes $|\objs_\targetW|$ modulo $2$.

  As usual we use dynamic programming. 
	We follow the notation
  from the \steinertree{} example (see Lemma \ref{lem:stein}).
  For every bag $x \in \treedecomp$ of the tree decomposition,
  integers $0 \leq \iterk \leq |V|$, $0 \leq \iterW \leq \maxw|V|$
  and $s \in \{\zerol, \zeror, \oneone, \onetwo\}^{\bag{x}}$ 
	(called the colouring) define
  \begin{align*}
    \cand_x(\iterk,\iterW) &= \Big{\{}(X,L)\; \big{|}\; 
       X,L \subseteq \subbags{x}\ \wedge\ X \cap L = \emptyset\ \wedge\
			(\mustset \cap \subbags{x}) \subseteq X\ \wedge\ 
      |X|=\iterk \\
      &\qquad\ \wedge\ \weight(X \times \{\weightX\} \cup L \times \{\weightL\})=\iterW 
      \ \wedge\ L\textrm{ and }V_x \setminus (X \cup L)\textrm{ are independent sets in }G_x\Big{\}} \\
    \objs_x(\iterk,\iterW) &= \Big\{((X,L),(X_1,X_2)) \big{|}\; 
      (X,L) \in \cand_x(\iterk,\iterW)\ \wedge\ (X, (X_1,X_2)) 
			\textrm{ is a consistently cut subgraph of }G_x\\
      &\qquad\wedge\ (\anchor \in V_x \Rightarrow \anchor \in X_1) \Big\} \\
    A_x(\iterk,\iterW,s) &= \Big| \Big\{((X,L),(X_1,X_2)) 
			\in \objs_x(\iterk,\iterW) \big{|}\; 
			(s(v) = \onej \Rightarrow v \in X_j) \\
        &\qquad\ \wedge\ 
			(s(v) = \zerol \Rightarrow v \in L)\ \wedge\
      (s(v) = \zeror \Rightarrow v \not\in X \cup L)\Big\} \Big|
  \end{align*}
  Here we plan $L$ and $V\setminus (X\cup L)$ to be a bipartition of
	$G[V\setminus X]$; $s(v) = \zerol$ and $\zeror$ mean $v$ is on the left
	or right side of this bipartition, respectively, while $s(v) = \onej$
	means $v$ is in the odd cycle transversal, and on the appropriate side
	of the cut.
  The accumulators $\iterk$ and $\iterW$ keep track of the number of 
	vertices in $X$
  and the weight of the pair $(X,L)$, respectively.
  Hence $A_x(\iterk,\iterW,s)$ is the number of pairs from $\objs$ 
	of candidate solutions and consistent cuts on $G_x$,
  with fixed size, weight and interface on vertices from $\bag{x}$.

  The algorithm computes $A_x(\iterk,\iterW,s)$ for all bags $x \in T$ in a bottom-up fashion
  for all reasonable values of $\iterk$, $\iterW$ and $s$.
  We now give the recurrence for $A_x(\iterk,\iterW,s)$ that is used by the dynamic programming
  algorithm.
  As always let $v$ stand for the vertex introduced and
  contained in an introduce bag, $uv$ for the edge introduced in an
  introduce edge bag,
  and $y,z$ for the left and right child of $x$ in $\treedecomp$ if present.
\begin{itemize}
\item \textbf{Leaf bag}:
	\[ A_x(0,0,\emptyset) = 1 \]
\item \textbf{Introduce vertex bag}:
	\begin{align*}
	 A_x(\iterk,\iterW,s[v \to \zerol]) & = [v \not \in \mustset]A_y(\iterk,\iterW-\weight((v,\weightL)),s) \\
	 A_x(\iterk,\iterW,s[v\to\zeror]) & = [v \not \in \mustset]A_y(\iterk,\iterW,s) \\
     A_x(\iterk,\iterW,s]v\to\oneone]) & = A_y(\iterk-1,\iterW-\weight((v,\weightX)),s)  \\
     A_x(\iterk,\iterW,s[v\to\onetwo]) & = [v \not= \anchor]A_y(\iterk-1,\iterW-\weight((v,\weightX)),s) 
	\end{align*}
  We take care of restrictions imposed by conditions $(\mustset \cap \subbags{x}) \subseteq X$ and $\anchor \in X_1$.
\item \textbf{Introduce edge bag}:
  \begin{align*}
  A_x(\iterk,\iterW,s) &= 0 &\textrm{if }\{s(u),s(v)\} = \{\oneone,\onetwo\}\\
  A_x(\iterk,\iterW,s) &= 0 &\textrm{if }s(u) = s(v) \in \{\zerol,\zeror\}\\
  A_x(\iterk,\iterW,s) &= A_y(\iterk,\iterW,s) &\textrm{otherwise}
  \end{align*}
   Here we remove table entries not consistent with the edge $uv$, i.e., 
	 table entries where the endpoints are coloured $\oneone$ and $\onetwo$ 
	 (thus creating an inconsistent cut) or both coloured $\zerol$ or both
   coloured $\zeror$ (thus introducing an edge in $G_x[L]$ or $G_x[V_x \setminus (X \cup L)]$).
\item \textbf{Forget bag}:
	\[ A_x(\iterk,\iterW,s) = \sum_{\alpha \in \{\zerol, \zeror, \oneone, \onetwo\}} A_y(\iterk,\iterW,s[v \to\alpha]) \]
	In the child bag the vertex $v$ can have four states, and no additional
	requirements are imposed, so we sum over all the four states.
\item \textbf{Join bag}:
	\[ A_x(\iterk,\iterW,s) = \sum_{\iterk_1 + \iterk_2 = \iterk + |s^{-1}(\{\mathbf{1_1},\mathbf{1_2}\})|} \ \  \sum_{\iterW_1 + \iterW_2 = \iterW + \weight(s^{-1}(\{\mathbf{1_1},\mathbf{1_2}\}) \times \{\weightX\} \cup s^{-1}(\zerol) \times \{\weightL\})} A_y(\iterk_1,\iterW_1,s) A_z(\iterk_2,\iterW_2,s) \]
	The only valid combinations to achieve the colouring $s$ is to have the 
	same colouring in both children. 
	Since vertices coloured $\onej$ and $\zerol$ in $\bag{x}$ are accounted for in both 
	tables of the children, we add their contribution to the 
	accumulators.
\end{itemize}
It is easy to see that the above recurrence leads to a dynamic programming 
algorithm that
computes the parity of $|\sols_\targetW|$ for all values of $\targetW$ in $4^t |V|^{O(1)}$ time,
since $|\objs_\targetW|=A_\rootv(k,\targetW,\emptyset)$ and $|\sols_\targetW| \equiv |\objs_\targetW|$.
Moreover, as we count the parities and not the numbers $A_x$ themselves, 
all arithmetical operations can be done in constant time.
Thus, the proof of Theorem~\ref{thm:app:coct} is finished. 
\end{proof}

\subsubsection{\cfvs{}}

\defproblemu{\constrainedcfvs}
{An undirected graph $G = (V,E)$, a subset $\mustset \subseteq V$ and an integer $k$.}
{Does there exist a set $Y \subset V$ of cardinality $k$ such
  that $\mustset \subseteq Y$, $G[Y]$ is connected and $G[V\setminus Y]$ is a forest?}

\begin{theorem}\label{thm:app:cfvs}
 There exists a Monte-Carlo algorithm that given a tree decomposition
 of width $t$ solves the \constrainedcfvs problem in $4^t |V|^{O(1)}$ time.
 The algorithm cannot give false positives and may give false negatives with probability at most $1/2$.
\end{theorem}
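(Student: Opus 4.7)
The plan is to extend the \constrainedfvs{} algorithm from Theorem~\ref{thm:app:fvs} with a second layer of the Cut\&Count technique to enforce connectivity of $G[V \setminus X]$, where $X$ plays the role of the complement of the feedback vertex set. By Remark~\ref{rem:alg:guess-vertex} we may assume $\mustset \neq \emptyset$ and fix some $\anchor \in \mustset$. The universe stays $\univ = V \times \{\forest,\marker\}$ as in the \constrainedfvs{} case, so the Isolation Lemma application is unchanged.

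First I would reuse the candidate family $\cand^{\targetn,\targetm,\targetcc}_\targetW$ from the \fvs{} algorithm: marked subgraphs $(X,M)$ with $M \subseteq X \subseteq V \setminus \mustset$, tracking the size of $X$, the number of edges in $G[X]$, the number of markers, and the total weight. The set $\sols^{\targetn,\targetm,\targetcc}_\targetW$ will additionally require $G[V \setminus X]$ to be connected, which together with $\mustset \cap X = \emptyset$ encodes that $V \setminus X$ is a connected feedback vertex set containing $\mustset$. For $\objs^{\targetn,\targetm,\targetcc}_\targetW$ I would pair each candidate with \emph{two} cuts: a consistent cut $(X_1,X_2)$ of $G[X]$ with $M \subseteq X_1$ (as in \fvs{}) and a consistent cut $(Y_1,Y_2)$ of $G[V \setminus X]$ with $\anchor \in Y_1$ (as in \constrainedcvertexcover{}).

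The parity argument then factorises. For a fixed candidate $(X,M)$, analogously to Lemma~\ref{lem:dircyccanc} the number of admissible first cuts is $2^{\mathtt{cc}(M,G[X])}$, and by Lemma~\ref{lem:evencancel} applied to the graph $G[V \setminus X]$ with anchor $\anchor$ the number of admissible second cuts is $2^{\conncomp(G[V \setminus X])-1}$. Hence $(X,M)$ contributes an odd count to $|\objs^{\targetn,\targetm,\targetcc}_\targetW|$ exactly when $\mathtt{cc}(M,G[X]) = 0$ \emph{and} $\conncomp(G[V \setminus X]) = 1$; combined with Lemma~\ref{lem:fvssimple} and the restriction $\targetcc \le \targetn - \targetm$ on the accumulators, this is precisely the condition for being a solution. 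Therefore $|\sols^{\targetn,\targetm,\targetcc}_\targetW| \equiv |\objs^{\targetn,\targetm,\targetcc}_\targetW|$ modulo two, and it suffices by Corollary~\ref{cor:cutandcount} to compute the right hand side.

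The main obstacle is the dynamic programming, which must now track both cuts simultaneously. Each vertex in a bag takes one of four colours in $\{\oneone,\onetwo,\zeroone,\zerotwo\}$, where $\onej$ means the vertex lies in $X$ on side $j$ of the first cut and $\zeroj$ means it lies in $V \setminus X$ on side $j$ of the second cut; this gives $4^t$ colourings per bag. The recurrence is a straightforward blending of the two parent algorithms: introduce vertex bags enumerate the four colour choices while forbidding $\oneone,\onetwo$ for vertices of $\mustset$ and forbidding $\zerotwo$ for $\anchor$; introduce edge bags filter against both cuts (reject $\oneone$--$\onetwo$ and $\zeroone$--$\zerotwo$ pairs) and increment the edge accumulator $\iterm$ exactly when both endpoints are in $\{\oneone,\onetwo\}$; forget bags replay the \fvs{} rule, offering the extra option of marking a forgotten $\oneone$-vertex (updating $\itercc$ and $\iterW$) and otherwise summing over all four states; join bags require matching colours on the two children, with accumulator correction only for the vertices in $\bag{x}$ that are coloured in $\{\oneone,\onetwo\}$ and hence counted twice. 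Since no colour class has a subset-convolution character, each bag is processed in $4^t |V|^{O(1)}$ time, and the overall running time of $4^t |V|^{O(1)}$ follows.
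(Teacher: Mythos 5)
Your proposal matches the paper's proof of Theorem~\ref{thm:app:cfvs} essentially step for step: the same universe $V \times \{\forest,\marker\}$, the same candidate/solution/object families with a pair of consistent cuts (one of $G[X]$ anchored by markers, one of $G[V\setminus X]$ anchored by $\anchor$), the same factorised parity argument giving a contribution of $2^{\conncomp(G[V\setminus X])-1+\mathtt{cc}(M,G[X])}$ per candidate, and the same four-state dynamic programme $\{\zeroone,\zerotwo,\oneone,\onetwo\}$ with matching join-bag colourings and accumulator corrections. This is the paper's argument, correctly reproduced.
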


\begin{proof}
We use the Cut\&Count technique.
The idea is as in the previous algorithms in this subsection: we use the dynamic programming
for \fvs, additionally keeping a cut consistent with the solution $Y$. However, in the previous
subsections the base dynamic programming algorithms were the easy, naive ones. Here we need to
use the Cut\&Count based algorithm from Section \ref{sec:app:fvs}. Thus, we attach to a solution
candidate two cuts: one of $G[Y]$, and second of $G[V \setminus Y]$.

As a universe we take the set $\univ = V \times \{\forest,\marker\}$,
where $V \times \{\forest\}$ is used to assign weights to vertices
from the chosen forest $G[V \setminus Y]$ and $V \times \{\marker\}$ for markers.
As usual we assume that we are given a weight function $\weight: \univ \rightarrow \{1,...,N\}$, where $N = 2|\univ| = 4|V|$.
By Remark \ref{rem:alg:guess-vertex} we assume $\mustset \neq \emptyset$ and we fix
one vertex $\anchor \in \mustset$.

\noindent {\bf{The Cut part.}}
For integers $\targetn, \targetm, \targetcc, \targetW$ we define:
  \begin{enumerate}
  \item $\cand^{\targetn, \targetm, \targetcc}_\targetW$ to be the family 
  of solution candidates, that is a marked subgraphs excluding $\mustset$ of size and weight prescribed by super-/sub-scripts,
  i.e., $\cand^{\targetn, \targetm, \targetcc}_\targetW$ is the family of pairs $(X,M)$,
  where $X \subseteq V \setminus \mustset$, $|X| = \targetn$, $G[X]$ contains exactly $\targetm$ edges,
  $M \subseteq X$, $|M| = \targetcc$ and $\weight(X \times \{\forest\}) + \weight(M \times \{\marker\}) = \targetW$;
  \item $\sols^{\targetn, \targetm, \targetcc}_\targetW$ to be the set of solutions,
  that is the family of pairs $(X, M)$, where $(X,M) \in \cand^{\targetn, \targetm, \targetcc}_{\targetW}$,
  where $G[X]$ is a forest containing at least one marker from the set $M$ in each connected component
  and $G[V \setminus X]$ is connected;
  \item $\objs_{\targetW}^{\targetn, \targetm, \targetcc}$ 
  to be the family of triples $((X, M), (X_1,X_2), (Y_1,Y_2))$,
  where $(X,M) \in \cand^{\targetn, \targetm, \targetcc}_\targetW$, 
  $M \subseteq X_1$, $(X_1,X_2)$ is a consistent cut of $G[X]$, $\anchor \in Y_1$
  and $(Y_1,Y_2)$ is a consistent cut of $G[V \setminus X]$.
  \end{enumerate}
Observe that by Lemma~\ref{lem:fvssimple} the graph $G$ admits 
a connected feedback vertex set of size $k$ containing $\mustset$ if and only if there exist 
integers $\targetm$,$\targetW$ such that the set 
$\sols^{n-k,\targetm,n-k-\targetm}_\targetW$ is nonempty.

\noindent {\bf{The Count part.}}
Similarly as in the case of \steinertree we note that
by Lemma~\ref{lem:evencancel} for any
$\targetn$,$\targetm$,$\targetcc$,$\targetW$,$(X,M) \in \cand^{\targetn, \targetm, \targetcc}_\targetW$,
  there exist $2^{\conncomp(G[V \setminus X])-1}$
  cuts $(Y_1,Y_2)$ that are consistent cuts of $G[V \setminus X]$ and $\anchor \in Y_1$.
Moreover, similarly as in the case of \mincyclecovername
(analogously to Lemma~\ref{lem:dircyccanc}) note that
there are $2^{\mathtt{cc}(M,G[X])}$ cuts $(X_1,X_2)$ that are consistent with $G[X]$
and $M \subseteq X_1$,
where by $\mathtt{cc}(M,G[X])$ we denote the number of connected
components of $G[X]$ which do not contain any marker from the set $M$.
Thus for any
$\targetn$,$\targetm$,$\targetcc$,$\targetW$,$(X,M) \in \cand^{\targetn, \targetm, \targetcc}_\targetW$,
there are $2^{\conncomp(G[V \setminus X])-1 + \mathtt{cc}(M,G[X])}$
triples
$((X,M),(X_1,X_2),(Y_1,Y_2)) \in \objs_{\targetW}^{\targetn, \targetm, \targetcc}$.
Hence by Lemma~\ref{lem:fvssimple}
for every $\targetn$,$\targetm$,$\targetcc$,$\targetW$
satisfying $\targetcc \le \targetn-\targetm$
we have
$|\sols^{\targetn, \targetm, \targetcc}_\targetW| \equiv
 |\objs_\targetW^{\targetn, \targetm, \targetcc}|$. 

Now we describe a procedure
$\countproc(\weight,\targetn,\targetm,\targetcc,\targetW,\treedecomp)$ that,
given a nice tree decomposition $\treedecomp$, weight function $\weight$
and integers $\targetn$,$\targetm$,$\targetcc$,$\targetW$,
computes $|\objs_\targetW^{\targetn, \targetm, \targetcc}|$
modulo $2$ using dynamic programming.

We follow the notation from the \steinertree{} example (see Lemma \ref{lem:stein}).
For every bag $x \in \treedecomp$ of the tree decomposition,
integers $0 \leq \itern \leq |V|$, $0 \leq \iterm < |V|$, $0 \leq \itercc \leq |V|$, 
         $0 \leq \iterW \leq 2\maxw|V|$
and $s \in \{\zeroone, \zerotwo, \oneone, \onetwo\}^{\bag{x}}$ 
(called the colouring) define
\begin{align*}
  \cand_x(\itern, \iterm, \itercc, \iterW) & = \Big{\{} (X,M) \; \big{|}\; X \subseteq \subbags{x} \setminus \mustset\ \wedge\
    |X|=\itern\ \wedge\ |\subedges{x} \cap E(G[X])| = \iterm \\
        & \qquad \wedge\ M \subseteq X \setminus \bag{x}\ \wedge\ |M| = \itercc\ \wedge\ \weight(X \times \{\forest\})+\weight(M \times \{\marker\})=\iterW \Big{\}} \\
  \objs_x(\itern, \iterm, \itercc, \iterW) & = \Big{\{} ((X,M),(X_1,X_2),(Y_1,Y_2)) \big{|}\; 
    (X,M) \in \cand_x(\itern, \iterm, \itercc, \iterW) \\
    & \qquad \wedge\ M \subseteq X_1\ \wedge\ (X, (X_1,X_2)) \textrm{ is a consistently cut subgraph of }G_x \\
    & \qquad \wedge\ (\anchor \in \subbags{x} \Rightarrow \anchor \in Y_1)\ \wedge\ (\subbags{x} \setminus X, (Y_1,Y_2)) \textrm{ is a consistently cut subgraph of }G_x\Big{\}} \\
    A_x(\itern, \iterm, \itercc, \iterW, s) & = \Big{|} \Big{\{} ((X,M),(X_1,X_2),(Y_1,Y_2)) 
    \in \objs_x(\itern, \iterm, \itercc, \iterW) \big{|} \\ 
    & \qquad (s(v) = \onej \Rightarrow v \in X_j)\ \wedge\ 
   (s(v) = \zeroj v \in Y_j) \Big{\}} \Big{|}
\end{align*}
Note that we assume $\iterm < |V|$ because otherwise an induced subgraph
containing $\iterm$ edges is definitely not a forest.

Similarly as in the case of \steinertree, $s(v) = \zeroj$ means 
$v \in Y_j$, whereas $s(v) = \onej$ corresponds to $v \in X_j$.
The accumulators $\itern$,$\iterm$,$\itercc$ and $\iterW$ keep track of the number of 
vertices and edges in the subgraph induced by vertices from $X$,
number of markers already used and the sum of weights of chosen vertices and markers.
Hence $A_x(\itern,\iterm,\itercc,\iterW,s)$ is the number of triples from $\objs_x(\itern,\iterm,\itercc,\iterW)$ 
with a fixed interface on vertices from $\bag{x}$.
Note that we ensure that no vertex from $B_x$ is yet marked, because we decide
whether to mark a vertex or not in its forget bag.

The algorithm computes $A_x(\itern, \iterm, \itercc, \iterW,s)$ for all bags $x \in\treedecomp$ in a bottom-up fashion
for all reasonable values of $\itern$, $\iterm$, $\itercc$, $\iterW$ and $s$.
We now give the recurrence for $A_x(\itern, \iterm, \itercc, \iterW, s)$ that
is used by the dynamic programming
algorithm.
As in the previous sections by $v$ we denote the vertex introduced and
contained in an introduce bag, by $uv$ the edge introduced in an
introduce edge bag,
and by $y,z$ for the left and right child of $x$ in $\treedecomp$ if present.
\begin{itemize}
\item \textbf{Leaf bag}:
\[ A_x(0,0,0,0,\emptyset) = 1 \]
\item \textbf{Introduce vertex bag}:
  \begin{align*}
   A_x(\itern, \iterm, \itercc, \iterW, s[v\to\zeroone]) & = A_y(\itern, \iterm, \itercc, \iterW, s)  \\
   A_x(\itern, \iterm, \itercc, \iterW, s[v\to\zerotwo]) & = [v \neq \anchor]A_y(\itern, \iterm, \itercc, \iterW, s)  \\
   A_x(\itern, \iterm, \itercc, \iterW, s[v\to\onej]) & = [v \not \in \mustset] A_y(\itern-1,\iterm, \itercc, \iterW-\omega((v,\forest)), s)  
  \end{align*}
  Here we take care of the constraints $S \cap X = \emptyset$ and $\anchor \in Y_1$.
\item \textbf{Introduce edge bag}:
  \begin{align*}
    A_x(\itern, \iterm, \itercc, \iterW, s)  &= 0 &\textrm{if }\{s(v),s(u)\} = \{\zeroone,\zerotwo\} \\
    A_x(\itern, \iterm, \itercc, \iterW, s)  &= 0 &\textrm{if }\{s(v),s(u)\} = \{\oneone,\onetwo\} \\
    A_x(\itern, \iterm, \itercc, \iterW, s)  &= A_y(\itern, \iterm-1, \itercc, \iterW, s) &\textrm{if }s(v)=s(u) \in \{\oneone,\onetwo\} \\
    A_x(\itern, \iterm, \itercc, \iterW, s)  &= A_y(\itern, \iterm, \itercc, \iterW, s) &\textrm{otherwise} \\
  \end{align*}
 Here we remove table entries not consistent with the edge $uv$ (i.e., creating an inconsistent cut, either $(X_1,X_2)$ or $(Y_1,Y_2)$),
 and update the accumulator $\iterm$ storing the number of edges in $G[X]$.
\item \textbf{Forget bag}:
\[ A_x(\itern, \iterm, \itercc, \iterW, s) =A_y(\itern, \iterm, \itercc-1, \iterW-\weight((v,\marker)), s[v \to \oneone]) + \sum_{\alpha \in \{\zeroone, \zerotwo, \oneone, \onetwo\}} A_y(\itern, \iterm, \itercc, \iterW, s[v \to \alpha])) \]
If the vertex $v$ was in $X_1$ then we can mark it and update the accumulator $\itercc$.
If we do not mark the vertex $v$ then it can have any of the four states with no additional requirements imposed.
\item \textbf{Join bag}:
\begin{align*}
A_x(\iterk,\iterW,s) &= \sum_{\itern_1 + \itern_2 = \itern + |s^{-1}(\{\mathbf{1_1},\mathbf{1_2}\})|} \ \ \sum_{\iterm_1 + \iterm_2 = \iterm}\ \ \sum_{\itercc_1 + \itercc_2 = \itercc} \\
                      &\qquad \sum_{\iterW_1 + \iterW_2 = \iterW + \weight(s^{-1}(\{\mathbf{1_1},\mathbf{1_2}\}) \times \{\forest\} )}  A_y(\itern_1,\iterm_1,\itercc_1,\iterW_1,s) A_z(\itern_2,\iterm_2,\itercc_2,\iterW_2,s)
\end{align*}
The only valid combinations to achieve the colouring $s$ is to have the 
same colouring in both children. 
Since vertices coloured $\onej$ in $\bag{x}$ are accounted for in both 
tables of the children, we add their contribution to the 
accumulators $\itern$ and $\iterW$.
\end{itemize}

Since $|\objs_\targetW^{\targetn,\targetm,\targetcc}| = A_\rootv(\targetn, \targetm, \targetcc, \targetW, \emptyset)$
the above recurrence leads to a dynamic programming algorithm that
computes the parity of $|\objs_\targetW^{\targetn,\targetm,\targetcc}|$ for all reasonable values of $\targetW,\targetn,\targetm,\targetcc$
in $4^t n^{O(1)}$ time.
Consequently we finish the proof of Theorem~\ref{thm:app:cfvs}.
\end{proof}

\subsection{Longest Cycles, Paths and Cycle Covers}
\label{sec:dirmincyc}
\label{det:cirlongestcyc}

In this section we consider the following three problems, both in the directed and undirected setting.

\defproblemu{{\sc{(Directed)}} \mincyclecovername}
{An undirected graph $G=(V,E)$ (or a directed graph $D = (V,A)$) and an integer $k$.}
{Can the vertices of $G$ ($D$) be covered with at most $k$ vertex disjoint (directed) cycles?}

\defproblemu{{\sc{(Directed)}} \longestcycle}
{An undirected graph $G=(V,E)$ (or a directed graph $D = (V,A)$) and an integer $k$.}
{Does there exist a (directed) simple cycle of length $k$ in $G$ ($D$)?}

\defproblemu{{\sc{(Directed)}} \longestpath}
{An undirected graph $G=(V,E)$ (or a directed graph $D = (V,A)$) and an integer $k$.}
{Does there exist a (directed) simple path of length $k$ in $G$ ($D$)?}

We capture all three problems in the following artificial one.

\defproblemu{{\sc{(Directed)}} \partialcycles}
{An undirected graph $G=(V,E)$ (or a directed graph $D = (V,A)$) and integers $k$ and $\ell$.}
{Does there exist a family of at most $k$ vertex disjoint (directed) cycles in $G$ ($D$)
  that cover exactly $\ell$ vertices?}

Note that for $k=1$ the above problem becomes \longestcycle, whereas for $\ell = |V|$
it becomes \mincyclecovername. The \longestpath problem can be easily reduced
to \longestcycle, both in the directed and undirected setting.
  Given {\sc{(Directed)}} \longestpath instance $(G,k)$ ($(D,k)$),
   we guess the endpoints $s$ and $t$ of the path in question,
   attach to the graph path of length $|V|+1$ from $t$ to $s$
   and ask for a cycle of length $|V|+1+k$.
   Moreover, given a tree decomposition $\treedecomp$ of $G$ ($D$),
   a tree decomposition for the modified graph can be easily constructed
   by adding $s$ and $t$ to every bag and by covering the attached path
   by a sequence of additional bags of size $3$. The width of the new decomposition
   is larger by a constant than the width of $\treedecomp$.

We now show how to solve \partialcycles using the Cut\&Count technique, in time
$4^t |V|^{O(1)}$ in the undirected case and in time $6^t |V|^{O(1)}$ in the
directed case.

\subsubsection{The undirected case}

\begin{theorem}\label{thm:app:undircycles}
 There exists a Monte-Carlo algorithm that given a tree decomposition of width $t$
 solves \partialcycles{} in $4^t |V|^{O(1)}$ time.
 The algorithm cannot give false positives and may give false negatives with probability at most $1/2$.
\end{theorem}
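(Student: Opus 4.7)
The plan is to mimic the proof of Theorem \ref{thm:dcc-main} for directed \mincyclecovername{} from Section \ref{sec:ill:dcc}, exploiting the fact that in the undirected setting we do not need to separately track the in- and out-degree of each vertex, which shrinks the per-vertex state space of the dynamic program from six to four.

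First I would set up the Cut part. Let $\univ = E \cup V$ and pick a random weight function $\weight:\univ \to \{1,\ldots,N\}$ with $N = 2|\univ|$. For an integer $\targetW$ define three families: $\cand_\targetW$ is the family of pairs $(X,M)$ with $X \subseteq E$ such that $\deg_X(v) \in \{0,2\}$ for every $v \in V$, exactly $\ell$ vertices satisfy $\deg_X(v) = 2$, $M \subseteq V(X)$ with $|M| = k$, and $\weight(X \cup M) = \targetW$; $\sols_\targetW$ is the subfamily of those $(X,M)$ in which every cycle of $(V,X)$ contains at least one element of $M$; and $\objs_\targetW$ is the set of triples $((X,M),(V_1,V_2))$ where $(X,M) \in \cand_\targetW$, $(V_1,V_2)$ is a consistent cut of $G[X]$, and $M \subseteq V_1$. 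Since $|M| = k$, any element of $\sols_\targetW$ contains at most $k$ cycles and covers exactly $\ell$ vertices, so such pairs witness a feasible \partialcycles instance. The Count part is then a carbon copy of Lemma \ref{lem:dircyccanc}: for each fixed $(X,M) \in \cand_\targetW$ every marker-containing cycle is pinned to $V_1$, while every marker-free cycle may flip sides freely, giving $2^{\conncomp(M, G[X])}$ consistent cuts; reducing modulo two, only the solutions survive, so $|\sols_\targetW| \equiv |\objs_\targetW| \pmod 2$.

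It then suffices to exhibit a subroutine that computes $|\objs_\targetW| \bmod 2$ for every $\targetW$ in $4^t |V|^{O(1)}$ time; plugging it into Algorithm \ref{alg:cutandcount} with $\univ = E \cup V$ and invoking Corollary \ref{cor:cutandcount} finishes the proof. I would use a dynamic program over a nice tree decomposition with per-vertex states $\{\zero, \oneone, \onetwo, \two\}$: $\zero$ marks a vertex still of degree zero in the edges committed so far, $\onej$ marks a current degree-one vertex on side $j$ of the cut, and $\two$ marks a vertex that has already reached its final degree two, with the side information implicitly summed over. Three integer accumulators are carried alongside the colouring: the number of vertices already promoted to state $\two$ (compared against $\ell$ at the root), the number of markers chosen so far (compared against $k$), and the running weight $\weight(X \cup M)$. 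Following the pattern from Section \ref{sec:ill:dcc}, the choice of whether to mark a vertex is postponed to the moment the vertex is forgotten. An introduce-edge bag for edge $uv$ either skips the edge or lifts both endpoints' degrees by one, enforcing side consistency and rejecting any degree overflow past two; a forget bag requires the forgotten vertex to be in state $\zero$ or $\two$, with the latter additionally offering the option to mark the vertex at the cost of one unit of the marker accumulator and an extra $\weight(v)$ in the weight accumulator.

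The only non-trivial bag is the join, where a straightforward implementation would cost $16^t$ per bag. The per-vertex combination rule is that degrees add (capped at two) and sides must agree: admissible left/right state pairs are $(\zero,\zero) \to \zero$, $(\zero,\onej) \to \onej$ together with $(\onej,\zero) \to \onej$, $(\zero,\two) \to \two$ together with $(\two,\zero) \to \two$, and $(\onej,\onej) \to \two$ for each $j \in \{1,2\}$; all other pairs are forbidden. The main obstacle is to carry out this combination in $4^t$ ring operations per join. I would adapt the machinery of Appendix \ref{sec:dirmincyc} used for the directed variant: lift the table values from $\Z_2$ to $\Z$, apply a Walsh--Hadamard transform to the two-dimensional $\{\oneone,\onetwo\}$ coordinate of every vertex so that the diagonal rule $A_L(\oneone)A_R(\oneone) + A_L(\onetwo)A_R(\onetwo)$ becomes pointwise multiplication up to a uniform factor of two (always an even integer in $\Z$, hence safely halvable), and combine the $\{\zero,\two\}$ coordinate by the truncated one-dimensional addition product that fits into the framework of Theorem \ref{thm:fsc}. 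This yields the desired $4^t |V|^{O(1)}$ join-bag cost; reducing modulo two at the root recovers all the parities $|\objs_\targetW| \bmod 2$ within the same time bound and completes the proof.
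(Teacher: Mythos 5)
Your overall framing is right — the paper indeed adapts the directed cycle-cover machinery, drops the in/out distinction, and collapses degree-0 and degree-2 vertices to side-oblivious states to get $4^t$ — but the marker mechanism you propose has a genuine flaw that the paper explicitly designs around.

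You place markers on \emph{vertices} and decide marking in the forget bag, "following the pattern from Section~\ref{sec:ill:dcc}." That pattern works there only because the sketched $64^t$ DP tracks the cut side of every vertex, including degree-$(1,1)$ ones. Your optimized state set $\{\zero,\oneone,\onetwo,\two\}$ deliberately discards the side for a vertex once it reaches state $\two$. But markers must be pinned to one fixed side of the cut ($M\subseteq V_1$) for the cancellation argument of Lemma~\ref{lem:dircyccanc} to go through: marker-containing cycles must be forced onto side~1, so that only marker-free cycles contribute a free factor of $2$. At forget time, a vertex in state $\two$ could be on either side and your DP has no way to know which. If you simply allow marking regardless of side, you are counting all consistent cuts for each $(X,M)$, i.e.\ $2^{\conncomp(G[X'])}$ with $G[X']=(V(X),X)$, which is always even for $\ell>0$; then $|\objs_\targetW|\equiv 0$ for every $\targetW$, and the algorithm answers NO unconditionally. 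If instead you insist on the constraint $M\subseteq V_1$, the DP table cannot enforce it. Either way the Count step breaks. Moving the marking to the moment a vertex transitions from $\onej$ to $\two$ does not cleanly rescue this either, because that transition can occur at a join bag (when a vertex is degree one in both subtrees) and not only at an introduce-edge bag, which entangles the marking choice with the join recurrence you are trying to keep at $4^t$.

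The paper's fix is to mark \emph{edges}, not vertices, and to place the universe at $E\times\{\weightX,\marker\}$. Each edge is introduced exactly once, and at its introduce-edge bag the side of the edge is fully determined by the state of its endpoints before and after insertion, so marker eligibility (side~1) can be checked there. The paper spends a paragraph in the proof of Theorem~\ref{thm:app:undircycles} explaining exactly this pitfall and why edge markers are cleaner. As a minor secondary point, the paper's join is handled via the $\Z_4$ product of Theorem~\ref{thm:fscxor}, with the map $\phi(\zero)=0,\phi(\oneone)=1,\phi(\two)=2,\phi(\onetwo)=3$ and a degree-sum filter $\rho$ to reject the forbidden $\oneone+\onetwo$ combination, rather than a separate Walsh--Hadamard transform on a two-state subcoordinate plus a truncated degree convolution as you propose; your variant may well be made to work, but it would need the same degree-sum filtering to kill the $1+3\equiv 0\ (\bmod\ 4)$ spurious combination, which you do not mention.
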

\begin{proof}
We use the Cut\&Count technique. To count the number of cycles
we use markers. 
However, in this application it is more convenient to take as markers
edges instead of vertices.
The objects we count are subsets of edges,
together with sets of marked edges, thus
we take $\univ = E \times \{\weightX, \marker\}$.
As usual, we assume we are given a weight function $\weight: \univ \to \{1,2,\ldots,N\}$,
where $N = 2|\univ| = 4|E|$.
We also assume $\targetk \leq \ell$.

\noindent {\bf{The Cut part.}} For an integer $\targetW$ we define:
\begin{enumerate}
\item $\cand_\targetW$ to be the family of pairs $(X,M)$, where
$M \subseteq X \subseteq E$, $|X| = \ell$, $|M| = \targetk$,
  $\weight(X \times \{\weightX\} \cup M \times \{\marker\}) = \targetW$ and each vertex $v \in V(X)$
  has degree $2$ in $G[X]$.
\item $\sols_\targetW$ to be the family of pairs $(X,M) \in \cand_\targetW$,
  such that each connected component of $G[X]$ is either an isolated vertex
  or contains an edge from $M$.
\item $\objs_\targetW$ to be the family of pairs $((X,M),(X_1,X_2))$, where
$(X,M) \in \cand_\targetW$ and $(X_1,X_2)$ is a consistent cut of the graph $(V(X), X)$
with $V(M) \subseteq X_1$.
\end{enumerate}
Note that if $|X| = \ell$ and each vertex in $V(X)$ has degree two, then $|V(X)| = \ell$.
Thus if $(X,M) \in \cand_\targetW$ then $X$ is a set of vertex disjoint cycles 
covering exactly $\ell$ vertices of $G$. 
If $(X,M) \in \sols_\targetW$, then the number
of cycles is bounded by $|M| = \targetk$, and if we have an $X$ with at most
$k$ cycles, we can find an $M$ so that $(X,M) \in \sols_\targetW$
for $\targetW = \weight(X \times \{\weightX\} \cup M \times \{\marker\})$ by
taking at least one edge from each cycle.
Thus, we need to check if $\sols_\targetW \neq \emptyset$ for some $\targetW$.

\noindent {\bf{The Count part.}}
Let $((X,M),(X_1,X_2)) \in \objs_\targetW$. Let $\conncomp(X,M)$ denote the number of connected components of $G[X]$
that are not isolated vertices and do not contain an edge from $M$.
If $C \subseteq X$ is the set of edges of such a connected
component of $G[X]$, then $((X,M),(X_1 \triangle V(C),X_2 \triangle V(C))) \in \objs_\targetW$,
i.e., the connected component $C$ can be on either side of the cut $(X_1,X_2)$.
Thus there are $2^{\conncomp(M,X)}$ elements in $\objs_\targetW$ that correspond to
any pair $(X,M) \in \cand_\targetW$, and we infer that
$|\sols_\targetW| \equiv |\objs_\targetW|$.

To finish the proof we need to describe a procedure
$\countproc(\weight,\targetW,\treedecomp)$ that,
given a nice tree decomposition $\treedecomp$, weight function $\weight$
and and an integer $\targetW$,
computes $|\objs_\targetW|$ modulo $2$.

As usual we use dynamic programming. 
We follow the notation
from the \steinertree{} example (see Lemma \ref{lem:stein}).
Let $\Sigma = \{\zero,\oneone,\onetwo,\two\}$.
For every bag $x \in \treedecomp$ of the tree decomposition,
  integers $0 \leq \iterk,b \leq |V|$, $0 \leq \iterW \leq 2\maxw|V|$
  and $s \in \Sigma^{\bag{x}}$ 
	(called the colouring) define
  \begin{align*}
    \cand_x(\iterk,b,\iterW) &= \Big\{(X,M) \; \big{|}\;
      M \subseteq X \subseteq \subedges{x}\ \wedge\ |M| = \iterk\ \wedge\ |X| = b\ \wedge\ \weight(X \times \{\weightX\} \cup M \times \{\marker\}) = \iterW \\
        &\qquad \wedge\ (\forall_{v \in V(X) \setminus\bag{x}} \deg_{G[X]}(v) = 2)\ \wedge\ (\forall_{v \in \bag{x}} \deg_{G[X]}(v) \leq 2) \Big\} \\
    \objs_x(\iterk,b,\iterW) &= \Big\{((X,M),(X_1,X_2)) \big{|}\; 
      (X,M) \in \cand_x(\iterk,b,\iterW)\ \wedge\ V(M) \subseteq X_1\\
        &\qquad\ \wedge\ (X_1,X_2) \textrm{ is a consistent cut of the graph }(V(X),X) \Big\} \\
    A_x(\iterk,b,\iterW,s) &= \Big| \Big\{((X,M),(X_1,X_2)) \in \objs_x(\iterk,b,\iterW) \big{|}\;
      (s(v) = \zero \Rightarrow \deg_{G[X]}(v) = 0) \\
             &\qquad\ \wedge\ (s(v) = \onej \Rightarrow (\deg_{G[X]}(v) = 1 \wedge v \in X_j))
           \  \wedge\ (s(v) = \two \Rightarrow \deg_{G[X]}(v) = 2) \Big\} \Big|
  \end{align*}
  The value of $s(v)$ denotes the degree of $v$ in $G[X]$ and,
  in case of degree one, $s(v)$ also stores information
  about the side of the cut $v$ belongs to.
  We note that we do not need to store the side of the cut for $v$
  if its degree is $0$ and $2$, since it is not yet or no more needed.
	This is a somewhat non-trivial trick --- the natural implementation of
	dynamic programming would use $6$ states for each vertex.
	For vertices of degree 0 this is necessary --- we do not want to count
	isolated vertices as separate connected components, so we do not want
	to have a side of the cut defined for them.
	For vertices of degree 2 the situation is more tricky.
	They are cut (that is, each such vertex is on some side of the cut in
	each counted object in $\objs_x(\iterk,b,\iterW)$), but the information
	about the side of the cut will not be needed --- we have a guarantee that
	no new edges will be added to that vertex (as 2 is the maxmimum degree).
	Note that the fact that we did remember the side of the cut previously
	ensures that when we have a path in the currently constructed solution,
	both endpoints of the path are remembered to be on the same side of the
	cut, even though we no more remember sides for the internal vertices of
	the path.
  The accumulators $\iterk$, $b$ and $\iterW$ keep track of the size of $M$,
  the size of $X$ and the weight of $(X,M)$, respectively.

	Let us spend a moment discussing the choice we made to mark edges (as
	opposed to vertices).
	If we marked vertices, as we did in the previous problems, we would have
	a problem as to when to decide that a given vertex is a marker.
	The natural moment --- in the forget bag --- is unapplicable in this case,
	as (to save on space and time) we do not remember the side of the cut,
	so we do not know whether we can mark a vertex (remember, the whole point
	of marking vertices is to break the symmetry between the sides of the cut,
	so we have to mark only vertices that are on the left).
	The same problem applies to the introduce bag, moreover a vertex is
	introduced more than once, so we could mark it more than once (which would
	cause problems with the application of the Isolation Lemma).
	The best choice would be to mark it when we introduce an edge incident to
	it, but still we could mark it twice, if the introduce edge bags happen
	in two different branches of the tree.
	This can be circumvented by upgrading the nice tree decomposition
	definition, but the way we have chosen --- to mark edges --- is easier and
	cleaner.
	For edges we know that each edge is introduced exactly once, so we have
	a natural place to mark the edge and assure it is marked and counted
	exactly once.

  The algorithm computes $A_x(\iterk,b,\iterW,s)$ for all bags $x \in T$ in a bottom-up fashion
  for all reasonable values of $\iterk$, $b$, $\iterW$ and $s$.
  We now give the recurrence for $A_x(\iterk,b,\iterW,s)$ that is used by the dynamic programming
  algorithm.
  In order to simplify notation denote by $v$ the vertex introduced and
  contained in an introduce bag, by $uv$ the edge introduced in an
  introduce edge bag,
  and by $y,z$ the left and right child of $x$ in $\treedecomp$ if present.
\begin{itemize}
\item \textbf{Leaf bag}:
  \[ A_x(0,0,0,\emptyset) = 1 \]
\item \textbf{Introduce vertex bag}:
  $$A_x(\iterk,b,\iterW,s[v \to\zero]) = A_y(\iterk,b,\iterW,s)$$
  The new vertex has degree zero and we do not impose any other constraints.
\item \textbf{Introduce edge bag}:
For the sake of simplicity of the recurrence formula
let us define a function $\subs: \Sigma \rightarrow 2^\Sigma$.

\begin{center}
  \begin{tabular}{|c | c | c | c | c |}
    \hline
      & $\zero$ & $\oneone$ & $\onetwo$ & $\two$ \\
    \hline
      $\subs$ & $\emptyset$ & $\{\zero\}$ & $\{\zero\}$ & $\{\oneone,\onetwo\}$ \\
    \hline
  \end{tabular}
\end{center}
Intuitively, for a given state $\alpha \in \Sigma$ the value $\subs(\alpha)$
is the set of possible states a vertex can have before adding an incident edge.

We can now write the recurrence for the introduce edge bag.
\begin{align*}
  A_x(\iterk,b,\iterW,s) &= A_y(\iterk,b,\iterW, s) 
                       + \sum_{\alpha_u \in \subs(s(u))}\ \ \sum_{\alpha_v \in \subs(s(v))}\ \ \sum_{j \in \{1,2\}}\\
                      &\qquad[(\alpha_u = \onej  \vee s(u) = \onej) \wedge (\alpha_v = \onej  \vee s(v) = \onej)]\\
                      &\qquad\quad \bigg(A_y(\iterk,b-1,\iterW-\weight((uv,\weightX)),s[u\to\alpha_u,v\to\alpha_v]) \\
                      &\qquad\qquad + [j=1] A_y(\iterk-1,b-1,\iterW-\weight((uv,\weightX)) - \weight((uv,\marker)),s[u\to\alpha_u,v\to\alpha_v])\bigg)
\end{align*}
To see that all cases are handled correctly, first notice that we can always choose not to use the introduced edge.
Observe that in order to add the edge $uv$ by the definition of $\subs$
we need to have $\alpha_u \in \subs(s(u))$ and $\alpha_v \in \subs(s(v))$.
We use the integer $j$ to iterate over two sides of the cut the edge $uv$ can
be contained in.
Finally we check whether $j=1$ before we make $uv$ a marker.
\item \textbf{Forget bag}:
  $$A_x(\iterk,b,\iterW,s) = A_y(\iterk,b,\iterW,s[v \to \two]) + A_y(\iterk,b,\iterW,s[v\to\zero])$$
  The forgotten vertex must have degree two or zero in $G[X]$.
\item \textbf{Join bag}:
  For colourings $s_1,s_2,s \in \{\zero,\oneone,\onetwo,\two\}^{\bag{x}}$ we say that $s_1+s_2 = s$ if for each $v \in \bag{x}$ at least one of the following holds:
  \begin{align*}
  s_1(v) = \zero &\ \wedge\ s(v) = s_2(v) \\
  s_2(v) = \zero &\ \wedge\ s(v) = s_1(v) \\
  s_1(v) = s_2(v) = \onej &\ \wedge\ s(v) = \two
  \end{align*}
  We can now write the recurrence for the join bags.
  \begin{align*}
  A_x(\iterk,b,\iterW,s) &= \sum_{\iterk_1 + \iterk_2 = \iterk} \ \ \sum_{b_1+b_2=b} \ \ \sum_{\iterW_1 + \iterW_2 = \iterW} \ \ \sum_{s_1 + s_2 = s} A_y(\iterk_1,b_1,\iterW_1,s_1)A_z(\iterk_2,b_2,\iterW_2,s_2)
  \end{align*}
  The accumulators in the children bags need to sum up to the accumulators in the parent bag.
  Also the degrees need to sum up and the sides of the cut need to match, which is ensured by the constraint $s_1+s_2=s$.

  A straightforward computation of the above recurrence leads to $16^t |V|^{O(1)}$ time.
  We now show how to use the $\Z_4$ product to obtain a better time complexity.

  Let $\phi:\{\zero,\oneone,\onetwo,\two\} \to \Z_4$ be defined as
  \begin{align*}
  \phi(\zero) &= 0 & \phi(\oneone) &= 1 &
  \phi(\onetwo) &= 3 & \phi(\two)  &= 2
  \end{align*}
  Let $\phi:\{\zero,\oneone,\onetwo,\two\}^{\bag{x}} \to \Z_4^{\bag{x}}$ be obtained by extending $\phi$ in the natural way.
  Note that $\phi$ is a bijection.
  Define $\rho:\{\zero,\oneone,\onetwo,\two\} \to \Z$ as
  \begin{align*}
  \rho(\zero) &= 0 & \rho(\oneone) &= 1 &
  \phi(\onetwo) &= 1 & \phi(\two)  &= 2
  \end{align*}
  and let $\rho(s) = \sum_{v \in \bag{x}} \rho(s(v))$ for colouring $s$,
  i.e., $\rho(s)$ is the sum of degrees of all vertices in $\bag{x}$.
  Let
  \begin{align*}
  f^{\iterk,b,\iterW}_m(\phi(s)) &= [\rho(s)=m]A_y(\iterk,b,\iterW,s) \\
  g^{\iterk,b,\iterW}_m(\phi(s)) &= [\rho(s)=m]A_z(\iterk,b,\iterW,s) \\
  h^{\iterk,b,\iterW}_m(\phi(s)) &= \sum_{\iterk_1 + \iterk_2 = \iterk} \ \ \sum_{b_1+b_2=b} \ \ \sum_{\iterW_1 + \iterW_2 = \iterW} \ \ \sum_{m_1+m_2=m} (f^{\iterk_1,b_1,\iterW_1}_{m_1} \fscZ{4} g^{\iterk_2,b_2,\iterW_2}_{m_2})(\phi(s))
  \end{align*}
  We claim that
  $$A_x(\iterk,b,\iterW,s) = h^{\iterk,b,\iterW}_{\rho(s)}(\phi(s)).$$
  First notice that the values of accumulators are divided among the children,
and that no vertex or edge is accounted for twice by the definition of $A_x$. Hence, it suffices to prove that values in the expansion of $h^{\iterk,b,\iterW}_{\rho(s)}(\phi(s))$ corresponding to a choice of $s_1,s_2$ and possibly having a contribution to $A_x(\iterk,b,\iterW,s)$ are exactly those, for which $s_1+s_2=s$ holds. To see this, first note that
  $$\rho(\phi^{-1}(\phi(s_1) + \phi(s_2))) \leq \rho(s_1) + \rho(s_2).$$
  Observe that the above inequality is an equality iff $s_1+s_2=s$. Thus when counting $h^{\iterk,b,\iterW}_m(\phi(s))$ we sum non-zero values only for such $s_1,s_2$ where $s=s_1+s_2$. As the addition operator on colourings corresponds to the addition operator in $\Z_4$, the claim follows.

   By Theorem \ref{thm:fscxor}, the function $h^{\iterk,b,\iterW}_m$ can be computed in $4^t |V|^{O(1)}$ time and the time bound for the join bags follows.
\end{itemize}
It is easy to see that the above recurrence leads to a dynamic programming 
algorithm that
computes the parity of $|\sols_\targetW|$ for all values of $\targetW$ in $4^t |V|^{O(1)}$ time,
since $|\objs_\targetW|=A_\rootv(k,\ell,\targetW,\emptyset)$ and $|\sols_\targetW| \equiv |\objs_\targetW|$.
Moreover, as we count the parities and not the numbers $A_x$ themselves, 
all arithmetical operations can be done in constant time.
Thus, the proof of Theorem~\ref{thm:app:undircycles} is finished. 
\end{proof}

\subsubsection{The directed case}

\begin{theorem}\label{thm:app:dircycles}
 There exists a Monte-Carlo algorithm that given a tree decomposition of width $t$
 solves \dirpartialcycles{} in $6^t |V|^{O(1)}$ time.
 The algorithm cannot give false positives and may give false negatives with probability at most $1/2$.
\end{theorem}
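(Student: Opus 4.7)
The plan is to mirror the proof of Theorem \ref{thm:app:undircycles}, adapting every ingredient to the directed setting along the lines sketched for {\sc Directed} \mincyclecovername{} in Lemma \ref{lem:dircycdp}. As before, we mark arcs rather than vertices, setting $\univ = A \times \{\weightX, \marker\}$ and choosing a random weight function $\weight: \univ \to \{1, \ldots, N\}$ with $N = 2|\univ| = 4|A|$. For the Cut part, $\cand_\targetW$ consists of pairs $(X, M)$ with $M \subseteq X \subseteq A$, $|X| = \ell$, $|M| = k$, $\weight(X \times \{\weightX\} \cup M \times \{\marker\}) = \targetW$, and $\indeg_{D[X]}(v) = \outdeg_{D[X]}(v) \leq 1$ for every $v$, with equality to $1$ for every $v \in V(X)$. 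Such an $X$ is precisely a disjoint union of directed cycles covering exactly $\ell$ vertices. The set $\sols_\targetW$ consists of $(X, M) \in \cand_\targetW$ where every cycle in $X$ contains at least one marked arc (hence at most $k$ cycles), and $\objs_\targetW$ consists of triples $((X, M), (X_1, X_2))$ with $(X, M) \in \cand_\targetW$ and $(X_1, X_2)$ a consistent cut of the underlying undirected graph of $(V(X), X)$ satisfying $V(M) \subseteq X_1$.

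The Count part is identical in structure to the undirected case and to Lemma \ref{lem:dircyccanc}: a directed cycle containing no marked arc can be independently flipped to either side of the cut, so each $(X, M) \in \cand_\targetW$ contributes $2^{\conncomp(X, M)}$ elements to $\objs_\targetW$, where $\conncomp(X, M)$ counts the cycles of $D[X]$ containing no arc from $M$. Thus $|\sols_\targetW| \equiv |\objs_\targetW|$ modulo $2$, and Algorithm \ref{alg:cutandcount} reduces the problem to computing $|\objs_\targetW|$ modulo $2$ via a dynamic programming procedure $\countproc$.

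For the dynamic programming I use the six-state encoding introduced in Lemma \ref{lem:dircycdp}: for a vertex $v \in \bag{x}$ the colour $s(v)$ is drawn from $\{\zz, \zoone, \zotwo, \ozone, \oztwo, \oo\}$, recording the current in-degree, out-degree, and (only when exactly one of them equals $1$) the side of the cut containing $v$. As in the undirected case, the sides are suppressed for $\zz$ and $\oo$: for $\zz$ the vertex has yet to be touched by any arc, while for $\oo$ the vertex can receive no further incident arcs, so the side becomes irrelevant. Accumulators track $|M|$, $|X|$ and $\weight$. The recurrences for leaf, introduce vertex, introduce arc and forget bags are routine adaptations of those in the proof of Theorem \ref{thm:app:undircycles}, where an introduce arc bag for an arc $uv$ increments $\outdeg$ at $u$ and $\indeg$ at $v$ (rather than incrementing a single degree at each endpoint), and optionally marks the arc if it is inserted with both endpoints on side $1$.

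The main obstacle, exactly as in the undirected case, is the efficient evaluation of the join bag, whose naive implementation takes $36^t |V|^{O(1)}$ time. Here I adopt the same strategy as in the $6^t$ algorithm for {\sc Directed} \mincyclecovername{} in Section \ref{sec:ill:dcc}, whose details reside in this very appendix: split the bag into its ``sideless'' coordinates (vertices coloured $\zz$ or $\oo$, which are merged by a covering product in the spirit of Theorem \ref{thm:fsc}) and its ``sided'' coordinates (vertices coloured $\zoj$ or $\ozj$, which are merged by a $\Z_4$-product in the spirit of Theorem \ref{thm:fscxor} applied to the indegree-and-side coordinate and, analogously, the outdegree-and-side coordinate). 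The identities $\zz + \zz = \zz$, $\zz + \zoj = \zoj$, $\zz + \ozj = \ozj$, $\zz + \oo = \oo$, $\zoj + \ozj = \oo$ exhaust the legal combinations and are read off from these transforms by the same $\rho$-trick used in the undirected proof; all other sums are automatically ruled out because they correspond to vertices of total degree exceeding $1$ on one side, which violates the constraints on $\cand_\targetW$. Summing the cost of the various sub-bag groupings in a join bag yields the claimed $6^t |V|^{O(1)}$ bound, and combining this with Corollary \ref{cor:cutandcount} completes the proof of Theorem \ref{thm:app:dircycles}.
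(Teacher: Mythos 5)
Your set-up — the universe $\univ = A \times \{\weightX,\marker\}$, the definitions of $\cand_\targetW,\sols_\targetW,\objs_\targetW$, the cancellation argument, and the six-state alphabet $\{\zz,\zoone,\zotwo,\ozone,\oztwo,\oo\}$ with accumulators for $|M|$, $|X|$, $\weight$ — all match the paper exactly, and your list of legal join combinations ($\zz$ is the identity; $\zoone+\ozone=\zotwo+\oztwo=\oo$; everything else illegal) is correct. The gap is in the one step that actually earns the $6^t$ bound, namely the fast evaluation of join bags.

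You describe the join as ``split the bag into sideless coordinates merged by a covering product and sided coordinates merged by a $\Z_4$-product'', attributing this to the proof you are reproducing. That is not what the paper does, and, more importantly, such a split does not yield a well-posed algorithm: the sideless/sided partition is not preserved under joining. Your own legal combination $\zoone+\ozone=\oo$ merges two \emph{sided} children into a \emph{sideless} parent, and conversely $\zz+\zoj=\zoj$ sends a sideless child coordinate to a sided parent one, so there is no way to process the two sub-bags independently and multiply the results. The paper instead encodes all six states into a single integer alphabet via $\phi(\zz)=0,\phi(\zoone)=1,\phi(\zotwo)=2,\phi(\oztwo)=3,\phi(\ozone)=4,\phi(\oo)=5$ and applies the generalized subset convolution $\fscC{6}$ of Theorem~\ref{thm:gfsc} (integer addition with overflow forbidden), supplemented by the rank filter $\rho$ (total degree) to eliminate spurious in-range collisions such as $\phi(\zoone)+\phi(\zotwo)=\phi(\oztwo)$; illegal pairs are removed either because the $\phi$-sum exceeds $5$ or because $\rho$ increases strictly. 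It is also not the $\Z_4$-product of Theorem~\ref{thm:fscxor} — that transform, which wraps around modulo~$4$, is used only in the \emph{undirected} join ($4^t$) and does not match the directed join table. Replacing your split-and-combine description with a single $\fscC{6}$ over $\{0,\dots,5\}^{\bag{x}}$ plus the $\rho$-trick closes the gap and gives the claimed $6^t|V|^{O(1)}$ bound.
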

\begin{proof}
We use the Cut\&Count technique. To count the number of cycles
we use markers. As in the undirected case, in this application it is more convenient to take as markers
arcs instead of vertices. The objects we count are subsets of arcs,
together with sets of marked arcs, thus
we take $\univ = A \times \{\weightX, \marker\}$.
As usual, we assume we are given a weight function $\weight: \univ \to \{1,2,\ldots,N\}$,
where $N = 2|\univ| = 4|A|$.
We also assume $\targetk \leq \ell$.

\noindent {\bf{The Cut part.}} For an integer $\targetW$ we define:
\begin{enumerate}
\item $\cand_\targetW$ to be the family of pairs $(X,M)$, where
$M \subseteq X \subseteq A$, $|X| = \ell$, $|M| = \targetk$,
  $\weight(X \times \{\weightX\} \cup M \times \{\marker\}) = \targetW$ and each vertex $v \in V(X)$
  has indegree and outdegree $1$ in $G[X]$.
\item $\sols_\targetW$ to be the family of pairs $(X,M) \in \cand_\targetW$,
  such that each connected component of $G[X]$ is either an isolated vertex
  or contains an arc from $M$.
  \item $\objs_\targetW$ to be the family of pairs $((X,M),(X_1,X_2))$, where
$(X,M) \in \cand_\targetW$ and $(X_1,X_2)$ is a consistent cut of the graph $(V(X), X)$
with $V(M) \subseteq X_1$.
\end{enumerate}
Note that if $|X| = \ell$ and each vertex in $V(X)$ has indegree and outdegree one, then $|V(X)| = \ell$.
Thus similarly as before we need to check if $\sols_\targetW \neq \emptyset$ for some $\targetW$.

\noindent {\bf{The Count part.}}
Let $((X,M),(X_1,X_2)) \in \objs_\targetW$. Let $\conncomp(X,M)$ denote the number of weakly\footnote{We stress this for clarity: in $G[X]$ weakly connected components are always strongly connected components due to the requirements imposed on $X$.} connected components of $G[X]$
that are not isolated vertices and do not contain an arc from $M$.
If $C \subseteq X$ is the set of arcs of such a weakly connected
component of $G[X]$, then $((X,M),(X_1 \triangle V(C),X_2 \triangle V(C))) \in \objs_\targetW$,
i.e., the weakly connected component $C$ can be on either side of the cut $(X_1,X_2)$.
Thus there are $2^{\conncomp(M,X)}$ elements in $\objs_\targetW$ that correspond to
any pair $(X,M) \in \cand_\targetW$, and we infer that
$|\sols_\targetW| \equiv |\objs_\targetW|$.

To finish the proof we need to describe a procedure
$\countproc(\weight,\targetW,\treedecomp)$ that,
given a nice tree decomposition $\treedecomp$, weight funtion $\weight$
and an integer $\targetW$,
computes $|\objs_\targetW|$ modulo $2$.

As usual we use dynamic programming. 
We follow the notation
from the \steinertree{} example (see Lemma \ref{lem:stein}).
Let $\Sigma = \{\zz,\zoone,\zotwo,\ozone,\oztwo,\oo\}$.
For every bag $x \in \treedecomp$ of the tree decomposition,
  integers $0 \leq \iterk,b \leq |V|$, $0 \leq \iterW \leq 2\maxw|V|$
  and $s \in \Sigma^{\bag{x}}$ 
	(called the colouring) define
  \begin{align*}
    \cand_x(\iterk,b,\iterW) &= \Big\{(X,M) \; \big{|}\;
      M \subseteq X \subseteq \subarcs{x}\ \wedge\ |M| = \iterk\ \wedge\ |X| = b\ \wedge\ \weight(X \times \{\weightX\} \cup M \times \{\marker\}) = \iterW \\
        &\qquad \wedge\ (\forall_{v \in V(X) \setminus \bag{x}} \indeg_{G[X]}(v) = \outdeg_{G[X]}(v) = 1)\ \wedge\ (\forall_{v \in \bag{x}} \indeg_{G[X]}(v),\outdeg_{G[X]}(v) \leq 1) \Big\} \\
    \objs_x(\iterk,b,\iterW) &= \Big\{((X,M),(X_1,X_2)) \big{|}\; 
      (X,M) \in \cand_x(\iterk,b,\iterW)\ \wedge\ V(M) \subseteq X_1\\
        &\qquad \wedge\ (X_1,X_2) \textrm{ is a consistent cut of the graph} (V(X), X) \Big\} \\
    A_x(\iterk,b,\iterW,s) &= \Big| \Big\{((X,M),(X_1,X_2)) \in \objs_x(\iterk,b,\iterW) \big{|}\; 
      (s(v) = \mathbf{io}_j \Rightarrow v \in X_j) \\
    &\qquad\ \wedge\
      ((s(v) = \mathbf{io} \vee s(v) = \mathbf{io}_j) \Rightarrow (\indeg_{G[X]}(v) = \mathbf{i} \wedge \outdeg_{G[X]}(v) = \mathbf{o})) \Big\} \Big| \\
  \end{align*}
  The value of $s(v)$ contains an information about
  the indegree and outdegree of $v$ and, in case 
  when the degree of $v$ is one, $s(v)$ also stores information
  about the side of the cut $v$ belongs to.
  We note that we do not need to store the side of the cut for $v$
  if its degree is $0$ and $2$, since it is not yet or no more needed.
  The accumulators $\iterk$, $b$ and $\iterW$ keep track of the size of $M$,
  the size of $X$ and the weight of $(X,M)$, respectively.

  The algorithm computes $A_x(\iterk,b,\iterW,s)$ for all bags $x \in T$ in a bottom-up fashion
  for all reasonable values of $\iterk$, $b$, $\iterW$ and $s$.
  We now give the recurrence for $A_x(\iterk,b,\iterW,s)$ that is used by the dynamic programming
  algorithm.
  In order to simplify notation let $v$ be the vertex introduced and
  contained in an introduce bag, $(u,v)$ the arc introduced in an
  introduce edge (arc) bag,
  and $y,z$ the left and right child of $x$ in $\treedecomp$ if present.

\begin{itemize}
\item \textbf{Leaf bag}:
  \[ A_x(0,0,0,\emptyset) = 1 \]
\item \textbf{Introduce vertex bag}:
  \[ A_x(\iterk,b,\iterW,s[v \to \zz])  =  A_y(\iterk,b,\iterW,s) \]
     The new vertex has indegree and outdegree zero.
\item \textbf{Introduce edge (arc) bag}:
For the sake of simplicity of the recurrence formula
let us define functions $\insubs,\outsubs: \Sigma \rightarrow 2^\Sigma$.

\begin{center}
  \begin{tabular}{|c | c | c | c | c | c | c |}
    \hline
      & $\zz$ & $\zoone$ & $\zotwo$ & $\ozone$ & $\oztwo$ & $\oo$ \\
    \hline
      $\insubs$ & $\emptyset$ & $\emptyset$ & $\emptyset$ & $\{\zz\}$ & $\{\zz\}$ & $\{\zoone, \zotwo\}$ \\
    \hline
      $\outsubs$ & $\emptyset$ & $\{\zz\}$ & $\{\zz\}$ & $\emptyset$ & $\emptyset$ & $\{\ozone, \oztwo\}$ \\
    \hline
  \end{tabular}
\end{center}
Intuitively, for a given state $\alpha \in \Sigma$ the values $\insubs(\alpha)$ and $\outsubs(\alpha)$ 
are the sets of possible states a vertex can have before adding an incoming and respectively outgoing arc.

We can now write the recurrence for the introduce arc bag.
\begin{align*}
  A_x(\iterk,b,\iterW,s) &= A_y(\iterk,b,\iterW, s) 
                       + \sum_{\alpha_u \in \outsubs(s(u))}\ \ \sum_{\alpha_v \in \insubs(s(v))}\ \ \sum_{j \in \{1,2\}}\\
                      &\qquad[(\alpha_u = \ozj  \vee s(u) = \zoj) \wedge (\alpha_v = \zoj  \vee s(v) = \ozj)]\\
                      &\qquad\quad \bigg(A_y(\iterk,b-1,\iterW-\weight(((u,v),\weightX)),s[u\to\alpha_u,v\to\alpha_v]) \\
                      &\qquad\qquad + [j=1] A_y(\iterk-1,b-1,\iterW-\weight((u,v),\weightX)-\weight((u,v),\marker),s[u\to\alpha_u,v\to\alpha_v])\bigg)
\end{align*}
To see that all cases are handled correctly, first notice that we can always choose not to use the introduced arc. 
Observe that in order to add the arc $(u,v)$ by the definition of $\insubs$ and $\outsubs$
we need to have $\alpha_u \in \outsubs(s(u))$ and $\alpha_v \in \insubs(s(v))$.
We use the integer $j$ to iterate over two sides of the cut the arc $(u,v)$ can
be contained in.
Finally we check whether $j=1$ before we make $(u,v)$ a marker.
\item \textbf{Forget vertex $v$ bag $x$}:
  $$A_x(\iterk,b,\iterW,s) = A_y(\iterk,b,\iterW,s[v \to \oo]) + A_y(\iterk,b,\iterW,s[v\to\zz])$$
  The forgotten vertex must have degree zero or two.
\item \textbf{Join bag}:
We have two children $y$ and $z$. Figure \ref{fig:jointablesmdcc} shows how two individual states of a vertex in $y$ and $z$ combine to a state of $x$. XX indicates that two states do not combine. The correctness of the table is easy to check.
\begin{figure}[htb]
	\begin{center}

	\begin{tabular}{c||c|c|c|c|c|c|} 
		  & $\zz$ & $\zoone$ & $\zotwo$ & $\oztwo$ & $\ozone$ & $\oo$\\
\hline
$\zz$   & $\zz$ & $\zoone$ & $\zotwo$ & $\oztwo$ & $\ozone$ & $\oo$ \\
$\zoone$ & $\zoone$ & $\mathbf{XX}$ & $\mathbf{XX}$ & $\mathbf{XX}$ & $\oo$ & $\mathbf{XX}$ \\
$\zotwo$ & $\zotwo$ & $\mathbf{XX}$ & $\mathbf{XX}$ & $\oo$ & $\mathbf{XX}$ & $\mathbf{XX}$ \\
$\oztwo$ & $\oztwo$ & $\mathbf{XX}$ & $\oo$ & $\mathbf{XX}$ & $\mathbf{XX}$ & $\mathbf{XX}$ \\
$\ozone$ & $\ozone$ & $\oo$ & $\mathbf{XX}$ & $\mathbf{XX}$ & $\mathbf{XX}$ & $\mathbf{XX}$ \\
$\oo$ & $\oo$ & $\mathbf{XX}$ & $\mathbf{XX}$ & $\mathbf{XX}$ & $\mathbf{XX}$ & $\mathbf{XX}$ 
	\end{tabular}

	\end{center}
	\caption{The join table of \dirpartialcycles where it is indicated which states combine to which other states.}
	\label{fig:jointablesmdcc}
\end{figure}

For colourings $s_1,s_2,s \in \Sigma^{\bag{x}}$ we say that $s_1+s_2=s$ if for each vertex $v \in \bag{x}$ the values
of $s_1(v)$ and $s_2(v)$ combine into $s(v)$ as in Figure \ref{fig:jointablesmdcc}.
We can now write the recurrence formula for join bags.
\begin{align*}
A_x(\iterk,b,\iterW,s) &= \sum_{\iterk_1 + \iterk_2 = \iterk}\ \ \sum_{b_1+b_2=b} \ \ \sum_{\iterW_1 + \iterW_2 = \iterW} \ \ \sum_{s_1 + s_2 = s}
                         A_y(\iterk_1,b_1,\iterW_1,s_1)A_z(\iterk_2,b_2,\iterW_2,s_2) 
\end{align*}
A straightforward computation of the above formula leads to $36^t |V|^{O(1)}$ time complexity. We now show how to use
Generalized Subset Convolution to obtain a better time bound.

Let $\phi,\rho: \Sigma \to \{0,1,2,3,4,5\}$ where
\begin{align*}
\phi(\zz) &= 0 & \phi(\zoone) &= 1 & \phi(\zotwo) &= 2 & \phi(\oztwo) &= 3 & \phi(\ozone) &= 4 & \phi(\oo) &= 5 \\
\rho(\zz) &= 0 & \rho(\zoone) &= 1 & \rho(\zotwo) &= 1 & \rho(\oztwo) &= 1 & \rho(\ozone) &= 1 & \rho(\oo) &= 2 
\end{align*}
Let $\phi: \Sigma^{\bag{x}} \rightarrow \{0,1,2,3,4,5\}^{\bag{x}}$ be obtained by extending $\phi$ in the natural way. 
Define $\rho: \Sigma^{\bag{x}} \rightarrow \Z$ as $\rho(s)=\sum_{e \in \bag{x}}\rho(e)$.
Hence $\rho$ reflects the total number of $\one$'s in a state $s$, i.e., the sum of all degrees of vertices in $\bag{x}$. Then, define
\begin{align*}
	f^{\iterk,b,\iterW}_{m}(\phi(s)) &= [\rho(s) = m] A_y(\iterk,b,\iterW, s)\\
	g^{\iterk,b,\iterW}_{m}(\phi(s)) &= [\rho(s) = m] A_z(\iterk,b,\iterW, s)\\
	h^{\iterk,b,\iterW}_{m}(\phi(s)) &=  \sum_{\iterk_1 + \iterk_2 = \iterk}\ \ \sum_{b_1+b_2=b} \ \ \sum_{\iterW_1 + \iterW_2 = \iterW} \ \ \sum_{m_1 + m_2 = m} (f^{\iterk_1,b_1,\iterW_1}_{m_1} \fscC{6} g^{\iterk_2,b_2,\iterW_2}_{m_2})(\phi(s))
\end{align*}
We claim that
\[ A_x(\iterk,b,\iterW,s) = h^{\iterk,b,\iterW}_{\rho(s)}(\phi(s)) \]
To see this, first notice that the values of accumulators are divided among the children,
and that no vertex or edge is accounted for twice by the definition of $A_x$. 
Hence, it suffices to prove that exactly all combinations of table entries from 
$A_y$ and $A_z$ that combine to state $s$ acccording to Table \ref{fig:jointablesmdcc} 
contribute to $A_x(\iterk,b,\iterW,s)$. 
Notice that if $\alpha,\beta \in \Sigma$ and 
$\gamma = \phi^{-1}(\phi(\alpha) + \phi(\beta))$, then 
$\rho(\gamma) \leq \rho(\alpha)+\rho(\beta)$. 
This implies that the only pairs that contribute to 
$h^{\iterk,b,\iterW}_{m}(\phi(s))$ are the pairs not leading to crosses in Table 
\ref{fig:jointablesmdcc} since for the other pairs we have 
$\rho(\gamma) < \rho(\alpha)+\rho(\beta)$. 
Finally notice that for every such pair we have that 
$\gamma$ is the correct state, and hence correctness follows.

Finally we obtain that, by Theorem \ref{thm:gfsc}, the values $A_x(\iterk,b,\iterW,s)$ for a join bag $x$ can be computed in time $6^t |V|^{O(1)}$.
\end{itemize}
It is easy to see that the above recurrence leads to a dynamic programming 
algorithm that
computes the parity of $|\sols_\targetW|$ for all values of $\targetW$ in $6^t |V|^{O(1)}$ time,
since $|\objs_\targetW|=A_\rootv(k,\ell,\targetW,\emptyset)$ and $|\sols_\targetW| \equiv |\objs_\targetW|$.
Moreover, as we count the parities and not the numbers $A_x$ themselves, 
all arithmetical operations can be done in constant time.
Thus, the proof of Theorem~\ref{thm:app:dircycles} is finished. 
\end{proof}

\newcommand{\objsx}{\overline{\objs}}
\newcommand{\objsxk}{\ell}
\newcommand{\iterxk}{\ell}

\subsection{Spanning trees with a prescribed number of leaves}\label{sec:app:exact-k-stuff}

In this section we provide algorithms that solve
\exactleaf{} and \exactoutbranching{} in time $4^t n^{O(1)}$ and $6^t n^{O(1)}$, respectively.
The algorithms are very similar and use the same tricks, thus we gather them together in this subsection.

Both algorithms use almost the same Cut part that is very natural for the considered problems.
However, a quite straightforward realization of the accompanying Count part would lead
to running times $6^t n^{O(1)}$ and $8^t n^{O(1)}$, respectively.
To obtain better time bounds we need to count objects in a more
ingenious way.

\subsubsection{\exactleaf}\label{sec:app:exact-k-leaf}

\defproblemu{\exactleaf}
{An undirected graph $G = (V,E)$ and an integer $k$.}
{Does there exists a spanning tree of $G$ with exactly $k$ leaves?}

\begin{theorem}\label{thm:app:exactleaf}
 There exists a Monte-Carlo algorithm that given a tree decomposition of width $t$
 solves \exactleaf{} in $4^t |V|^{O(1)}$ time.
 The algorithm cannot give false positives and may give false negatives with probability at most $1/2$.
\end{theorem}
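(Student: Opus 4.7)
The plan is to apply Cut\&Count not directly to the count of spanning trees with exactly $k$ leaves, but to a binomially transformed auxiliary quantity, and then recover the desired count modulo $2$ by inversion. The detour is forced by the DP budget: a naive DP that tracks both the side of the cut and the current degree of each vertex in $\{0,1,{\geq}2\}$ would need five states per vertex, whereas we can afford only four.

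Write $t_\ell^W$ for the number of spanning trees of weight $W$ with exactly $\ell$ leaves, and for each $j\geq 0$ let $f(j,W)$ count the pairs $(T,L)$ where $T$ is a spanning tree of weight $W$ and $L\subseteq\text{leaves}(T)$ with $|L|=j$. Then $f(j,W)=\sum_{\ell\geq j}\binom{\ell}{j}t_\ell^W$, and M\"obius inversion gives $t_k^W=\sum_{j\geq k}(-1)^{j-k}\binom{j}{k}f(j,W)$. Modulo $2$ this simplifies to $t_k^W\equiv\sum_{j\geq k}\binom{j}{k}f(j,W)\pmod 2$. I would assign random edge weights via the Isolation Lemma to the family of spanning trees with $k$ leaves so that, with probability at least $1/2$, some $W$ witnesses $t_k^W=1$; the algorithm iterates over all $W$ and accepts as soon as it finds one with $t_k^W$ odd.

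To compute $f(j,W)\bmod 2$ I set up Cut\&Count as follows. The candidate family $\cand_W^j$ consists of pairs $(X,L)$ with $X\subseteq E$, $|X|=|V|-1$, $L\subseteq V$, $|L|=j$, every $v\in L$ satisfying $\deg_X(v)=1$, and total weight $W$ (using a two-type weight function over $E$ and $V$ as in the \fvs{} algorithm of Section \ref{sec:app:fvs}). The solution family $\sols_W^j$ consists of the candidates for which $G[X]$ is connected; such an $X$ necessarily spans $V$, so $(X,L)$ is a spanning tree together with a subset of $j$ of its leaves, and $|\sols_W^j|=f(j,W)$. Fixing an anchor $\anchor\in V$, we let $\objs_W^j$ consist of triples $((X,L),(X_1,X_2))$ in which $(X_1,X_2)$ is a consistent cut of $G[X]$ with $\anchor\in X_1$. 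Lemma \ref{lem:evencancel} then gives $|\objs_W^j|\equiv|\sols_W^j|\pmod 2$, since every non-connected candidate (counting any isolated vertices) contributes $2^{\conncomp(G[X])-1}$ consistent cuts, an even number.

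The heart of the argument is a $4^t|V|^{O(1)}$ DP computing $|\objs_W^j|\bmod 2$. To cut one state from the naive five-state DP, I would give a less intuitive definition of the candidate set in which each vertex is pre-labelled with an intended type in $\{\mathrm{leaf},\mathrm{internal}\}$: intended leaves must end with $\deg_X(v)=1$ and intended internals with $\deg_X(v)\geq 2$. The per-vertex DP state in a bag is then the pair (side of the cut, intended type), exactly four possibilities; membership in $L$ is decided at forget time via the markers trick of Section \ref{sec:ill:dcc}, since an intended-leaf vertex that has attained degree $1$ may be freely added to $L$ or not. The introduce-vertex, introduce-edge, and forget bags are routine, but the join bag is the main obstacle: combining two partial solutions requires reasoning about the distribution of the incident edges of a shared vertex between the two subtrees, information that the four-state encoding hides. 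I plan to resolve this by a binomial transform over the intended-type coordinate --- in the same spirit as the Fast Subset Convolution used in Section \ref{ssec:dcc} --- so that the join factorises after a linear change of basis and can be evaluated in $4^{|B|}|V|^{O(1)}$ time. Verifying that this transform recovers exactly the intended count modulo $2$, notwithstanding the relaxed candidate definition, is the technical step I expect to require the most care.
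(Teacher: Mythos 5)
Your high-level strategy — computing an auxiliary quantity $f(j,W)$ by Cut\&Count and recovering $t_k^W$ by binomial inversion over $\Z_2$ — is exactly the paper's approach, and your inversion identity is correct. The gap is in the DP. Your per-vertex state (side of cut, intended type) carries no degree information, so you cannot enforce either constraint: at forget time you cannot distinguish an intended leaf of current degree $0$ from one of degree $1$, nor verify that an intended internal has degree $\geq 2$; a naive repair that also tracks a degree bit reintroduces the fifth state you were trying to eliminate. Worse, the requirement ``intended internal $\Rightarrow \deg_X(v)\geq 2$'' means every non-$L$ vertex must be a non-leaf; a spanning tree with $\ell>j$ leaves then admits no compatible labelling, so your DP would compute $t_j^W$ directly rather than $f(j,W)=\sum_{\ell\geq j}\binom{\ell}{j}t_\ell^W$, making the inversion vacuous while leaving the state count at five.

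The paper uses a different asymmetry: relax the degree constraint on non-$R$ vertices entirely (they may have any degree, including $1$), constrain only that each $v\in R$ has degree exactly one with its unique neighbour outside $R$, and take the cut over $G(V\setminus R,X)$ rather than $G[X]$, so that $R$-vertices carry no side. The four states are $\oneone,\onetwo$ (cut side, only for $v\notin R$) and $\zerozero,\zeroone$ (current degree, only for $v\in R$) --- each vertex needs either the cut side or the degree bit, never both. The join bag enforces the degree invariant for $R$-vertices by fast subset convolution over the $\zeroone$-set (a vertex has degree one iff exactly one child gives it degree one); this per-join convolution is a distinct mechanism from the final binomial inversion, which is a single $|V|\times|V|$ upper-triangular solve performed once at the end. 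Two further points your sketch misses: the weight function must live on $E$ only (your two-type $E\cup V$ weighting would let $f(j,W)$ aggregate over trees of different edge-weight, breaking the identity $f(j,W)=\sum_\ell\binom{\ell}{j}t_\ell^W$), and because $\anchor\notin R$ is forced by the cut you must also require $\deg_{G[X]}(\anchor)\geq 2$ and iterate over the choice of $\anchor\in V$, since otherwise trees in which $\anchor$ is a leaf receive the wrong binomial coefficient.
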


\begin{proof}
We assume that $G$ is connected, as otherwise we can safely answer NO.
We also assume $|V| \geq 3$, and therefore any spanning tree of $G$ contains some internal bags (i.e., bags of degree
at least $2$). Using similar arguments as in Remark \ref{rem:alg:guess-vertex},
we may assume that we are given vertex $\anchor \in V$ that is required to be
an internal bag of the spanning tree in question. Thus, we can look
for spanning trees of $G$ that are rooted in the given vertex $\anchor$.

We use the Cut\&Count technique.
Our solutions and solution candidates are subsets of edges, thus we
take $\univ = E$ and generate random weight function $\weight:\univ \to \{1,2,\ldots,\maxw\}$,
where $\maxw = 2|\univ| = 2|E|$.

\noindent {\bf{The Cut part.}} For integers $\targetW$ and $\targetk$ we define:
\begin{enumerate}
\item $\cand_\targetW^\targetk$ to be the family of sets $X \subseteq E$, such that
$\weight(X) = \targetW$, $|X| = |V|-1$, $G[X]$ contains exactly $\targetk$ vertices of degree one,
and the degree of $\anchor$ in $G[X]$ is at least $2$.
\item $\sols_\targetW^\targetk$ to be the family of sets $X \in \cand_\targetW^\targetk$,
  such that $G[X]$ is connected;
\item $\objs_\targetW^\targetk$ to be the family of pairs $(X,(X_1,X_2))$, where
$X \in \cand_\targetW^\targetk$ and $(X_1,X_2)$ is a consistent cut of $G[X]$ with $\anchor \in X_1$.
\end{enumerate}
The condition that for $X \in \sols_\targetW^\targetk$ the graph $G[X]$ is connected, together
with $|X| = |V|-1$ gives us that each $X \in \sols_\targetW^\targetk$ induces a spanning tree.
Thus, $\sols_\targetW^\targetk$ is indeed a family of spanning trees of exactly $k$ leaves
with root $\anchor$.

Note that, unlike in other algorithms, we use the superscript $\targetk$ in the definitions. 
To achieve claimed the running time we need to do computations for many values of $\targetk$.

\noindent {\bf{The Count part.}}
To use Algorithm \ref{alg:cutandcount} we need to formally prove that for any $\targetW$ and $\targetk$ we have $|\sols_\targetW^\targetk| \equiv |\objs_\targetW^\targetk|$.
Similarly as in the case of \steinertree
	we note that
  by Lemma~\ref{lem:evencancel} for each $X \in \cand_\targetW^\targetk$
  there exist $2^{\conncomp(G[X])-1}$
  consistent cuts of $G[X]$, and the claim follows.

To finish the proof we need to show how to compute $|\objs_\targetW^\targetk|$ modulo
$2$ in time $4^t n^{O(1)}$.
A straightforward dynamic programming algorithm would lead to a $6^t n^{O(1)}$ time complexity
(we encourage the reader to sketch this algorithm to see why the steps
introduced below are needed),
thus we need to be a bit more ingenious here.

Let us define the set $\objsx_\targetW^\objsxk$ to be a family of triples $(X,R,(Y_1,Y_2))$
such that
\begin{enumerate}
\item $X \in \bigcup_{\targetk=0}^{|V|-1} \cand_\targetW^\targetk$,
(i.e., we do not impose any constraint on the number of vertices of degree one in $G[X]$),
\item $R \subseteq V \setminus \{\anchor\}$ and $|R| = \objsxk$,
\item Each vertex $v \in R$ has degree one in $G[X]$ and the unique neighbour of $v$ in $G[X]$
is not an element of $R$ (i.e., $G[X]$ does not contain a connected component that consists
    of two vertices from $R$ connected by an edge).
\item Let $G(V\setminus R, X)$ denote the graph with the vertex set $V
\setminus R$ and the edge set consisting of those edges of $X$ that have
both endpoints in $V\setminus R$. Then we require that $(Y_1,Y_2)$ is a consistent cut of $G(V \setminus R,X)$ with $\anchor \in Y_1$.
\end{enumerate}
Informally speaking, there are two differences between $\objs_\targetW^\targetk$ and $\objsx_\targetW^\objsxk$.
First, instead of requiring a prescribed number of vertices of degree one,
we distinguish a fixed number of vertices that have to be of degree one, and we do not care about the degrees of the other vertices.
Second, we consider only consistent cuts of $G(V \setminus R, X)$, not of 
whole $G[X]$.

We first note that the second difference is somewhat illusory.
Let $X \in \cand_\targetW^\targetk$ and $R \subseteq V \setminus \{\anchor\}$
be as in the definition of $\objsx_\targetW^\objsxk$, i.e., $|R| = \objsxk$, each $v \in R$ is of degree one in $G[X]$ and its unique neighbour in $G[X]$
is not in $R$. If $(X,(X_1,X_2)) \in \objs_\targetW^\targetk$, then the cut $(Y_1,Y_2)$ where $Y_j = X_j \setminus R$ is consistent with $G(V \setminus R, X)$
and $(X,R,(Y_1,Y_2)) \in \objsx_\targetW^\objsxk$. In the other direction, observe that if $(X,R,(Y_1,Y_2)) \in \objsx_\targetW^\objsxk$, then
there exists exactly one cut $(X_1,X_2)$ of $G[X]$, such that $Y_1 \subseteq X_1$ and $Y_2 \subseteq X_2$, namely $X_j = Y_j \cup (N_{G[X]}(Y_j) \cap R)$.
Thus, in the analysis that follows we can assume that $(Y_1,Y_2)$ is in fact a consistent cut of $G[X]$ with $\anchor \in Y_1$.

Let $(X,R,(Y_1,Y_2)) \in \objsx_\targetW^\objsxk$. As there exists $2^{\conncomp(G[X])-1}$
consistent cuts of $G[X]$, for fixed $X$ and $R$ we have
$(X,R,C) \in \objsx_\targetW^\objsxk$ for exactly $2^{\conncomp(G[X])-1}$ cuts $C$.
Thus, if $X \notin \bigcup_{\targetk=0}^{|V|-1} \sols_\targetW^\targetk$,
all elements of $\objsx_\targetW^\objsxk$ with $X$ cancel out modulo $2$.

Otherwise, if $G[X]$ induces a spanning tree of $G$ with root $\anchor$, 
  there exists exactly one cut $(V,\emptyset)$ consistent with $G[X]$, such that
  $\anchor$ is in the first set in the cut.
  Moreover, note that there are no two adjacent vertices of degree one in $G[X]$ (as $|V|\geq 3$).
  Thus, if $X \in \sols_\targetW^\targetk$, then
there exist $\binom{\targetk}{\objsxk}$ choices of the set $R$ and one choice of a cut $C=(V,\emptyset)$
such that $(X,R,C) \in \objsx_\targetW^\objsxk$ (we use the convention that
$\binom{\targetk}{\objsxk} = 0$ if $\targetk < \objsxk$).

Summing up, we obtain that in $\Z_2$
$$|\objsx_\targetW^\objsxk| \equiv
\sum_{\targetk=\objsxk}^{|V|-1} \binom{\targetk}{\objsxk} |\sols_\targetW^\targetk| \equiv
\sum_{\targetk=\objsxk}^{|V|-1} \binom{\targetk}{\objsxk} |\objs_\targetW^\targetk|.$$
Note that, operating over the field $\Z_2$,
we have obtained a linear operator that transforms a vector
$(|\objs_\targetW^\targetk|)_{\targetk=0}^{|V|-1}$ into a vector
$(|\objsx_\targetW^\objsxk|)_{\objsxk=0}^{|V|-1}$. Moreover, the matrix of this operator can be computed in polynomial time and is upper triangular
with ones on the diagonal. Thus, this operator can be easily inverted, and we can compute (in $\Z_2$)
all values $(|\objs_\targetW^\targetk|)_{\targetk=0}^{|V|-1}$
knowing all values 
$(|\objsx_\targetW^\objsxk|)_{\objsxk=0}^{|V|-1}$.

To finish the proof we need to describe a procedure
$\overline{\countproc}(\weight,\targetW,\objsxk,\treedecomp)$ that,
given a nice tree decomposition $\treedecomp$, weight funtion $\weight$
and integers $\targetW$ and $\objsxk$
computes $|\objsx_\targetW^\objsxk|$ modulo $2$.
Now we can use dynamic programming on the tree decomposition.

Recall that in the definition of $\objsx_\targetW^\objsxk$ the cut $(Y_1,Y_2)$
was a consistent cut of only $G(V \setminus R, X)$. We make use of this fact
to reduce the size of the table in the dynamic programming, as we do not need to remember
side of the cut for vertices in $R$.


	We follow the notation
  from the \steinertree{} example (see Lemma \ref{lem:stein}).
  For every bag $x \in \treedecomp$ of the tree decomposition,
  integers $0 \leq \iterxk \leq |V|$, $0 \leq \iterW \leq \maxw|E|$, $0 \leq m,d < |V|$,
  and $s \in \{\oneone,\onetwo,\zerozero,\zeroone\}^{\bag{x}}$
  (called the colouring) define
  \begin{align*}
    \cand_x(\iterxk,\iterW,m,d) &= \Big\{(X,R) \; \big{|}\;
      X \subseteq \subedges{x}\ \wedge\ R \subseteq \subbags{x}\ \wedge\ |X|=m\ \wedge\ |R| = \iterxk\ \wedge\
      \weight(X)=\iterW \\
     &\qquad \wedge\ \deg_{G[X]}(\anchor) = d 
     \ \wedge\ (v \in R \setminus \bag{x} \Rightarrow \deg_{G[X]}(v) = 1)
     \ \wedge\ (v \in R \cap \bag{x} \Rightarrow \deg_{G[X]}(v)  \leq 1)
        \Big\}\\
    \objs_x(\iterxk,\iterW,m,d) &= \Big\{(X,R,(Y_1,Y_2)) \big{|}\; 
      (X,R) \in \cand_x(\iterxk,\iterW,m,d)\ \wedge\ (\anchor \in \subbags{x} \Rightarrow \anchor \in Y_1)\\
        &\qquad \wedge\ (Y_1,Y_2)
      \textrm{ is a consistent cut of }G(\subbags{x} \setminus R, X)
			\Big\} \\
    A_x(\iterxk,\iterW,m,d,s) &= \Big| \Big\{(X,R,(Y_1,Y_2)) \in \objs_x(\iterxk,\iterW,m,d)
      \big{|}\; \\
      &\qquad (s(v) = \onej \Rightarrow v \in Y_j)\ \wedge\
      (s(v) = \zeroj \Rightarrow (v \in R \wedge \deg_{G[X]}(v) = j))
         \Big\} \Big|
  \end{align*}
  Here $s(v) = \zeroj$ denotes that $v \in R$ and $\deg_{G[X]}(v) = j$,
  whereas $s(v) = \onej$ denotes that $v \in Y_j$ (and thus $v \notin R$).
  The accumulators $\iterxk$, $m$ and $\iterW$ keep track of the size of $R$,
  size of $X$ and the weight of $X$, respectively.
  The accumulator $d$ keeps track of the degree of $\anchor$ in $G[X]$,
  since we need to ensure that in the end it is at least $2$.
  Hence $A_x(\iterxk,\iterW,m,d,s)$ reflects the number of
  partial objects from $\objsx$ 
  with fixed sizes of $R$, $X$, weight of $X$, 
  degree of $\anchor$ and interface on vertices from $\bag{x}$.

  The algorithm computes $A_x(\iterxk,\iterW,m,d,s)$
  for all bags $x \in T$ in a bottom-up fashion
  for all reasonable values of $\iterxk$, $\iterW$, $m$, $d$ and the colouring $s$.
  We now give the recurrence for $A_x(\iterxk,\iterW,m,d,s)$
  that is used by the dynamic programming
  algorithm.
  In order to simplify notation we denote by $v$ the vertex introduced and
  contained in an introduce bag, by $uv$ the edge introduced in an
  introduce edge bag,
  and by $y,z$ the left and right child of $x$ in $\treedecomp$ if present.
\begin{itemize}
\item \textbf{Leaf bag}:
  \[ A_x(0,0,0,0,\emptyset) = 1 \]
\item \textbf{Introduce vertex bag}:
  \begin{align*}
    A_x(\iterxk,\iterW,m,d,s[v\to\zerozero]) &= [v \neq \anchor]A_y(\iterxk-1,\iterW,m,d,s) \\
    A_x(\iterxk,\iterW,m,d,s[v\to\zeroone]) &= 0 \\
    A_x(\iterxk,\iterW,m,d,s[v\to\oneone]) &= A_y(\iterxk,\iterW,m,d,s) \\
    A_x(\iterxk,\iterW,m,d,s[v\to\onetwo]) &= [v \neq \anchor]A_y(\iterxk,\iterW,m,d,s)
  \end{align*}
  If the new vertex is in $R$, it has degree zero and cannot be equal to $\anchor$.
  Otherwise, we need to ensure that we do not put $\anchor$ into $Y_2$.
\item \textbf{Introduce edge bag}:
\begin{align*}
 A_x(\iterxk,\iterW,m,d,s) &= A_y(\iterxk,\iterW,m,d,s) &\\
                              &\quad + A_y(\iterxk,\iterW-\weight(uv),m-1,d-[\anchor=u \vee \anchor=v],s) &\textrm{if }s(u)=s(v)=\onej \\
 A_x(\iterxk,\iterW,m,d,s) &= A_y(\iterxk,\iterW,m,d,s) &\\
 &\quad + A_y(\iterxk,\iterW-\weight(uv),m-1,d-[\anchor=u],s[v\to \zerozero]) &\textrm{if }s(v) = \zeroone \wedge s(u) = \onej \\
 A_x(\iterxk,\iterW,m,d,s) &= A_y(\iterxk,\iterW,m,d,s) &\\
 &\quad + A_y(\iterxk,\iterW-\weight(uv),m-1,d-[\anchor=v],s[u\to \zerozero]) &\textrm{if }s(v) = \onej \wedge s(u) = \zeroone \\
 A_x(\iterxk,\iterW,m,d,s) &= A_y(\iterxk,\iterW,m,d,s) &\textrm{otherwise}
  \end{align*}
  Here we consider adding $uv$ to $X$. This is possible in two cases.
  First, if $u,v \notin R$ and $s(u)=s(v)$. Second, if exactly one of $u$ and $v$ is in $R$
  (recall that we forbid edges connecting two vertices in $R$). In the second case we need
  to update the degree of the vertex in $R$. In both cases we need to update the degree of $\anchor$,
  if needed.
\item \textbf{Forget bag}:
\begin{align*}
  A_x(\iterxk,\iterW,m,d,s) &= [d \geq 2]A_y(\iterxk,\iterW,d,s[v\to\oneone]) &\textrm{if }v = \anchor\\
  A_x(\iterxk,\iterW,m,d,s) &= \sum_{\alpha \in \{\zeroone,\oneone,\onetwo\}} A_y(\iterxk,\iterW,m,d,s[v\to\alpha])&\textrm{otherwise}
  \end{align*}
  If we forget $v=\anchor$, we require that its degree is at least two and $v \in Y_1$.
  Otherwise, we require only that if $v \in R$ then $\deg_{G[X]}(v) = 1$.

\item \textbf{Join bag}:
  We proceed similarly as in the case of join bags in the \cdomset{} problem.
  For a colouring $s \in \{\zerozero, \zeroone, \oneone, \onetwo\}^{\bag{x}}$
  we define its precolouring $\hat{s} \in \{\zero, \oneone, \onetwo\}^{\bag{x}}$
  as
  \begin{align*}
  \hat{s}(v) &= s(v) &\textrm{if }s(v) \in \{\oneone, \onetwo\} \\
  \hat{s}(v) &= \zero &\textrm{if }s(v) \in \{\zerozero, \zeroone\}
  \end{align*}
   For a precolouring $\hat{s}$ (or a colouring $s$) and a set
  $T \subseteq \hat{s}^{-1}(\zero)$
  we define the colouring $s[T]$
  \begin{align*}
  s[T](v) &= \hat{s}(v)  & \textrm{if }\hat{s}(v) \in \{\oneone, \onetwo\} \\
  s[T](v) &= \zeroone & \textrm{if }v \in T \\
  s[T](v) &= \zerozero & \textrm{if }v \in \hat{s}^{-1}(\zero) \setminus T
  \end{align*}
  We can now write a recursion formula for the join bags.
  \begin{align*}
	A_x(\iterxk,\iterW,m,d,s) &= \sum_{\iterxk_1 + \iterxk_2 = \iterxk + |s^{-1}(\{\zerozero,\zeroone\})|} \ \  \sum_{\iterW_1 + \iterW_2 = \iterW} \ \ \sum_{m_1+m_2 = m} \ \ \sum_{d_1+d_2=d}\ \ 
  \sum_{T_1,T_2 \subseteq s^{-1}(\{\zerozero,\zeroone\})} \\
    &\qquad\qquad
                [T_1 \cup T_2 = s^{-1}(\zeroone)][T_1 \cap T_2 = \emptyset]
                A_y(\iterxk_1,\iterW_1,m_1,d_1,s[T_1]) A_z(\iterxk_2,\iterW_2,m_2,d_2,s[T_2])
  \end{align*}
  To achieve the colouring $s$, the precolourings of the children have to be the same.
  Moreover, a vertex $v \in R$ has degree one only if it has degree one in exactly one
  of the children bags. Thus the sets of vertices coloured $\zeroone$ in children have
  to be disjoint and sum up to $s^{-1}(\zeroone)$.
	Since vertices coloured $\zeroj$ in $\bag{x}$ are accounted for in both 
	tables of the children, we add their contribution to the 
	accumulator $\iterxk$.

  To compute the recursion formula
  efficiently we need to use the fast subset convolution.
  For accumulators $\iterxk,\iterW,m,d$ and a precolouring $\hat{s}$
  we define the following functions on subsets of $\hat{s}^{-1}(\zero)$:
  \begin{align*}
  f^{\iterxk,\iterW,m,d,\hat{s}}(T) &= A_y(\iterxk,\iterW,m,d,s[T]), \\
  g^{\iterxk,\iterW,m,d,\hat{s}}(T) &= A_z(\iterxk,\iterW,m,d,s[T]).
  \end{align*}
  Now note that
	\begin{align*}
  A_x(\iterxk,\iterW,m,d,s) &= \sum_{\iterxk_1 + \iterxk_2 = \iterxk + |s^{-1}(\{\zerozero,\zeroone\})|} \ \  \sum_{\iterW_1 + \iterW_2 = \iterW} \ \ \sum_{m_1+m_2 = m} \ \ \sum_{d_1+d_2=d} \\
                               &\qquad\qquad 
         (f^{\iterxk_1,\iterW_1,m_1,d_1,\hat{s}} \fsc g^{\iterxk_2,\iterW_2,m_2,d_2,\hat{s}})(s^{-1}(\zeroone)).
  \end{align*}
  By Theorem \ref{thm:fsc},
  for fixed accumulators $\iterxk_1,\iterW_1,m_1,d_1,\iterxk_2,\iterW_2,m_2,d_2$ and a precolouring $\hat{s}$
  the term
     $$(f^{\iterxk_1,\iterW_1,m_1,d_1,\hat{s}} \fsc g^{\iterxk_2,\iterW_2,m_2,d_2,\hat{s}})(s^{-1}(\zeroone))$$
   can be computed in time $2^{|\hat{s}^{-1}(\zero)|} |\hat{s}^{-1}(\zero)|^{O(1)}$
   at once for all colourings $s$ with precolouring $\hat{s}$.
   Thus, the total time consumed by the evaluation of $A_x$ is bounded by
   $$|V|^{O(1)} \sum_{\hat{s} \in \{\zero,\oneone,\onetwo\}^{\bag{x}}} 2^{|\hat{s}^{-1}(\zero)|}
    = 4^{|\bag{x}|} |V|^{O(1)}.$$
\end{itemize}
It is easy to see that the above recurrence leads to a dynamic programming 
algorithm that
computes the parity of $|\objsx_\targetW^\objsxk|$ for all values of $\targetW$ and $\objsxk$
in $4^t |V|^{O(1)}$ time,
since $|\objsx_\targetW^\objsxk|=\sum_{d\geq 2}A_\rootv(\objsxk,\targetW,|V|-1,d,\emptyset)$.
Moreover, as we count the parities and not the numbers $A_x$ themselves, 
all arithmetical operations (in particular the ring operations
    in the fast subset convolution) can be done in constant time.
As discussed before, knowing in $\Z_2$ the values of $|\objsx_\targetW^\objsxk|$ for $0 \leq \objsxk \leq |V|-1$ we can compute all values of $|\objs_\targetW^\targetk|$ and $|\sols_\targetW^\targetk|$ modulo $2$.
Thus, the proof of Theorem~\ref{thm:app:exactleaf} is finished. 
\end{proof}

\subsubsection{\exactoutbranching}

\defproblemu{\exactoutbranching}
{A directed graph $D = (V,A)$ and an integer $\tarnum$, and a root $r \in V$.}
{Does there exist a spanning tree of $D$ with all edges directed away from
the root with exactly $\tarnum$ leaves?}

\begin{theorem}\label{thm:app:exactoutbranching}
 There exists a Monte-Carlo algorithm that given a tree decomposition of width $t$
 solves \exactoutbranching{} in $6^t |V|^{O(1)}$ time.
 The algorithm cannot give false positives and may give false negatives with probability at most $1/2$.
\end{theorem}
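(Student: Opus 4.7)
The plan is to mimic the structure of the proof of Theorem \ref{thm:app:exactleaf} (the \exactleaf{} algorithm), adapting the ingredients to the directed setting. First, I would set up the Cut\&Count framework with universe $\univ = A$, draw a random weight $\weight : A \to \{1,\ldots,N\}$ with $N = 2|A|$, and for integers $\targetk,\targetW$ define $\cand_\targetW^\targetk$ as the family of subsets $X \subseteq A$ of cardinality $|V|-1$ and weight $\targetW$ such that $\indeg_X(r) = 0$, $\indeg_X(v) = 1$ for $v \neq r$, and exactly $\targetk$ vertices have $\outdeg_X(v) = 0$. Let $\sols_\targetW^\targetk \subseteq \cand_\targetW^\targetk$ be the subfamily where the underlying undirected graph of $D[X]$ is connected, which is equivalent to $X$ being an outbranching rooted at $r$. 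As usual, $\objs_\targetW^\targetk$ consists of pairs $(X,(X_1,X_2))$ where $X \in \cand_\targetW^\targetk$, $r \in X_1$, and $(X_1,X_2)$ is a consistent cut of the underlying undirected graph. By Lemma \ref{lem:evencancel}, every $X \in \cand_\targetW^\targetk$ contributes $2^{\conncomp(D[X])-1}$ elements to $\objs_\targetW^\targetk$, so $|\sols_\targetW^\targetk| \equiv |\objs_\targetW^\targetk|$ and it suffices to compute the right-hand side modulo $2$.

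A naive dynamic program realising this count would store, for each vertex in the current bag, its indegree ($0$ or $1$), whether its outdegree is $0$ or positive, and which side of the cut it lies on — giving $8$ states per vertex and a running time of $8^t |V|^{O(1)}$. To bring this down to $6^t |V|^{O(1)}$, I would borrow the marked-leaves trick from Appendix \ref{sec:app:exact-k-leaf}. Namely, define $\objsx_\targetW^\objsxk$ as the family of triples $(X,R,(Y_1,Y_2))$ where $X$ satisfies the degree constraints (but with no restriction on the number of leaves), $R \subseteq V\setminus\{r\}$ has size $\objsxk$ and consists of vertices with $\outdeg_X(v) = 0$, and $(Y_1,Y_2)$ is a consistent cut of the undirected graph induced by the arcs of $X$ whose endpoints both lie outside $R$, with $r \in Y_1$. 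Since any $v \in R$ has indegree $1$ in $X$ and its unique in-neighbour $u$ satisfies $\outdeg_X(u) \geq 1$, the in-neighbour of any $v \in R$ is automatically not in $R$, and the side of $v$ in the full cut of $D[X]$ is forced by the side of $u$. Thus, as in the \exactleaf{} proof, every $X \in \sols_\targetW^\targetk$ contributes $\binom{\targetk}{\objsxk}$ triples to $\objsx_\targetW^\objsxk$ (and disconnected candidates cancel in the same way), yielding the upper-triangular identity
\[
|\objsx_\targetW^\objsxk| \equiv \sum_{\targetk \geq \objsxk} \binom{\targetk}{\objsxk} |\objs_\targetW^\targetk| \pmod 2,
\]
which can be inverted in polynomial time to recover all parities $|\objs_\targetW^\targetk|$.

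It then remains to compute each $|\objsx_\targetW^\objsxk|$ modulo $2$ by bottom-up dynamic programming on a nice tree decomposition. The key payoff of the marked-leaves reformulation is that for $v \in R$ we no longer need to remember its side of the cut: only its current indegree matters ($0$ or $1$). For $v \notin R$ we need the side of the cut (two values) and the current indegree ($0$ or $1$). This gives a colour alphabet of size $2 + 2\cdot 2 = 6$, so the table $A_x(\objsxk,\iterW,d,s)$ (with $d$ an accumulator tracking, say, the number of arcs introduced, and $\objsxk$ the size of $R$ restricted to forgotten vertices) has at most $6^{|\bag{x}|} |V|^{O(1)}$ entries. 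The leaf, introduce vertex, introduce arc, and forget bag recurrences are straightforward adaptations of those in Appendix~\ref{sec:app:exact-k-leaf}: an introduce-arc $uv$ bag either leaves the arc out or adds it (updating the indegree of $v$ and, if $v \in R$, checking that $v$ has no outgoing arc recorded); a forget bag fixes a vertex with the correct indegree and, if it is marked, requires its indegree to equal $1$.

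The main obstacle, as usual, will be the join bag. There, the vertices outside $R$ must have matching colours in the two children (their cut sides are fixed), but for a vertex $v \in R$ the indegree is $1$ in the parent iff exactly one child has recorded indegree $1$ at $v$ (since $R$-vertices are incident to a single arc). This is precisely the disjoint-union constraint handled by fast subset convolution: after splitting a colouring $s$ into its $R$-part and its non-$R$-part, and after defining, for each fixed precolouring $\hat{s}$ on the non-$R$-vertices and each fixed indegree pattern over $R$-vertices with current indegree $0$, the functions $f^{\bullet,\hat{s}}, g^{\bullet,\hat{s}}$ over subsets of $\hat{s}^{-1}(\text{``$R$ at indegree $0$''})$, we can compute the join by a single covering/subset convolution per tuple of accumulators and per precolouring $\hat{s}$, in total time $\sum_{\hat{s}} 2^{|\hat{s}^{-1}(R_0)|} |V|^{O(1)} = 6^t |V|^{O(1)}$ by Theorem \ref{thm:fsc}. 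Combining this DP with the binomial inversion and the Cut\&Count scheme of Algorithm \ref{alg:cutandcount} yields the claimed Monte Carlo algorithm.
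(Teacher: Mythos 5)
Your overall plan is the same as the paper's: set up Cut\&Count with universe $A$, relax the leaf-count constraint by marking a set $R$ of outdegree-zero vertices, observe that the degree constraint makes $R$-vertices into pendants of $G(V\setminus R,X)$ so their cut side is forced, establish the binomial-inversion identity between $|\objsx_\targetW^\objsxk|$ and $|\objs_\targetW^\targetk|$, and do a DP with a $6$-element alphabet. Your $6$-state breakdown ($R$: indegree $\in\{0,1\}$; non-$R$: cut side $\times$ indegree $\in\{0,1\}^2$) is exactly the paper's split into $s\in\{\zero,\oneone,\onetwo\}^{\bag{x}}$ and $s_{\mathrm{in}}\in\{\zero,\one\}^{\bag{x}}$. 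The one cosmetic deviation — you exclude $r$ from $R$ while the paper allows $r\in R$ — is harmless: any candidate with $\outdeg_X(r)=0$ is disconnected and cancels modulo $2$, so the binomial identity is unaffected.

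The gap is in the join bag, and it is not cosmetic. You write that at a join ``the vertices outside $R$ must have matching colours in the two children (their cut sides are fixed), but for a vertex $v\in R$ the indegree is $1$ in the parent iff exactly one child has recorded indegree $1$,'' and you then set up a subset convolution only over the indegree bits of $R$-vertices, with a claimed cost $\sum_{\hat{s}}2^{|\hat{s}^{-1}(R_0)|}$. This is wrong. In \exactoutbranching{} the indegree-one constraint is a per-vertex constraint on \emph{every} non-root vertex, not only on $R$-vertices. Since every arc is introduced in exactly one subtree of a nice tree decomposition, the indegree of a non-$R$ vertex $v$ in the parent is the \emph{disjoint sum} of its indegrees in the two children; demanding equality of the indegree component of the colour would impose $d_1(v)=d_2(v)=d(v)$, which forces $d(v)=0$ for every non-$R$ vertex in the bag. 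That destroys correctness. The correct join (as in the paper) fixes the $3^t$ possible $R$-membership/cut-side patterns $s$ and then performs a disjoint-union subset convolution over the $\{0,1\}$-valued indegree pattern $s_{\mathrm{in}}$ \emph{of the entire bag}, giving $3^t\cdot 2^t=6^t$ up to polynomial factors. Your convolution restricted to the $R$-part both enforces the wrong constraint on non-$R$ vertices and would, if anything, price out at $\sum_{\hat{s}}2^{|\hat{s}^{-1}(R)|}=(1\cdot 2+2\cdot 1)^t=4^t$, not $6^t$ — a further sign that the join is misconceived. Here is precisely where \exactoutbranching{} diverges from \exactleaf{}, which you were mimicking: in \exactleaf{} the per-vertex degree constraint lives only on $R$-vertices and the non-$R$ colour is purely a cut side, so the convolution over $R$-vertices' degree bits suffices; here the per-vertex degree constraint (indegree exactly one) lives on every non-root vertex, so the convolution must range over all indegree bits.
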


\begin{proof}
We use the Cut\&Count technique in a very similar manner as for the \exactleaf problem.
Our solutions and solution candidates are subsets of arcs, thus we
take $\univ = A$ and generate random weight function $\weight:\univ \to \{1,2,\ldots,\maxw\}$,
where $\maxw = 2|\univ| = 2|A|$. We set $\anchor = r$.

\noindent {\bf{The Cut part.}} For integers $\targetW$ and $\targetk$ we define:
\begin{enumerate}
\item $\cand_\targetW^\targetk$ to be the family of sets $X \subseteq A$, such that
$\weight(X) = \targetW$, $\indeg_{G[X]}(\anchor) = 0$ and $\indeg_{G[X]}(v) = 1$ if $v \neq \anchor$,
  and $G[X]$ contains exactly $\targetk$ vertices of outdegree zero.
\item $\sols_\targetW^\targetk$ to be the family of sets $X \in \cand_\targetW^\targetk$,
  such that $G[X]$ is weakly connected;
\item $\objs_\targetW^\targetk$ to be the family of pairs $(X,(X_1,X_2))$, where
$X \in \sols_\targetW^\targetk$ and $(X_1,X_2)$ is a consistent cut of $G[X]$ with $\anchor \in X_1$.
\end{enumerate}
The condition that for $X \in \sols_\targetW^\targetk$ the graph $G[X]$ is weakly connected, together
with the condition on the indegrees of vertices gives us that each $X \in \sols_\targetW^\targetk$ is of size $|V|-1$ and induces an spanning tree of $G$,
     rooted in $\anchor$, with all edges directed away from the root.

As in the case of \exactleaf, we use the superscript $\targetk$ in the definitions,
since we perform a similar trick in the Count part and 
we do computations for many values of $\targetk$.

\noindent {\bf{The Count part.}}
To use Algorithm \ref{alg:cutandcount} we need to formally prove that for any $\targetW$ and $\targetk$ we have $|\sols_\targetW^\targetk| \equiv |\objs_\targetW^\targetk|$.
Similarly as in the case of \steinertree
	we note that
  by Lemma~\ref{lem:evencancel} for each $X \in \cand_\targetW^\targetk$
  there exist $2^{\conncomp(G[X])-1}$
  consistent cuts of $G[X]$, and the claim follows
  (here $\conncomp(G[X])$ denotes the number of weakly connected components of $G[X]$).

To finish the proof we need to show now to compute $|\objs_\targetW^\targetk|$ modulo
$2$ in time $6^t n^{O(1)}$.
A straightforward dynamic programming algorithm would lead to $8^t n^{O(1)}$ time complexity,
thus again we need to be more careful.

Let us define the set $\objsx_\targetW^\objsxk$ to be a family of triples $(X,R,(Y_1,Y_2))$
such that
\begin{enumerate}
\item $X \in \bigcup_{\targetk=0}^{|V|-1} \cand_\targetW^\targetk$,
(i.e., we do not impose any constraint on the number of outdegree zero vertices in $G[X]$),
\item $R \subseteq V$ and $|R| = \objsxk$,
\item each vertex $v \in R$ has outdegree zero in $G[X]$,
\item $(Y_1,Y_2)$ is a consistent cut of $G(V \setminus R, X)$ (defined as in the previous subsection) with $\anchor \notin Y_2$.
\end{enumerate}
Note that, unlike the \exactleaf case, we do not need to require that the vertices from $R$ are not connected by edges from $X$,
as this is guaranteed by the outdegree condition. Moreover, we allow $\anchor \in R$.

Informally speaking, there are two differences between $\objs_\targetW^\targetk$ and $\objsx_\targetW^\objsxk$.
First, instead of requiring a prescribed number of vertices of outdegree zero,
we distinguish a fixed number of vertices that have to be of outdegree zero, and we do not care about the outdegrees of the other vertices.
Second, we consider only consistent cuts of $G(V \setminus R, X)$, not whole $G[X]$.

We first note that (again) the second difference is only apparent. Let $X \in \cand_\targetW^\targetk$ and $R \subseteq V$
be as in the definition of $\objsx_\targetW^\objsxk$, i.e., $|R| = \objsxk$, each $v \in R$ is of outdegree zero in $G[X]$.
If $(X,(X_1,X_2)) \in \objs_\targetW^\targetk$, then the cut $(Y_1,Y_2)$ where $Y_j = X_j \setminus R$ is consistent with $G(V \setminus R, X)$
and $(X,R,(Y_1,Y_2)) \in \objsx_\targetW^\objsxk$. In the other direction, observe that if $(X,R,(Y_1,Y_2)) \in \objsx_\targetW^\objsxk$, then
there exists exactly one cut $(X_1,X_2)$ of $G[X]$, such that $\anchor \in X_1$, $Y_1 \subseteq X_1$ and $Y_2 \subseteq X_2$, namely $X_1 = Y_1 \cup (N_{G[X]}(Y_1) \cap R) \cup \{\anchor\}$
and $X_2 = Y_2 \cup (N_{G[X]}(Y_2) \cap R)$ (in particular, if $\anchor \in R$, then $\anchor$ is isolated in $G[X]$, and can be put safely to $X_1$).
Thus, in the analysis that follows we can silently assume that $(Y_1,Y_2)$ is in fact a consistent cut of $G[X]$ with $\anchor \in Y_1$.

Let $(X,(X_1,X_2)) \in \objs_\targetW^\targetk$. Note that we have exactly $\binom{\targetk}{\objsxk}$ choices
of the set $R$, such that $(X,R,(X_1 \setminus R,X_2 \setminus R)) \in \objsx_\targetW^\objsxk$, as any choice
of $\objsxk$ vertices of outdegree zero in $G[X]$ can be used as $R$. Thus we have that modulo $2$
$$|\objsx_\targetW^\objsxk| \equiv
\sum_{\targetk=\objsxk}^{|V|-1} \binom{\targetk}{\objsxk} |\sols_\targetW^\targetk| \equiv
\sum_{\targetk=\objsxk}^{|V|-1} \binom{\targetk}{\objsxk} |\objs_\targetW^\targetk|.$$
As in the case of \exactleaf, operating over the field $\Z_2$,
we have obtained a linear operator that transforms a vector
$(|\objs_\targetW^\targetk|)_{\targetk=0}^{|V|-1}$ into a vector
$(|\objsx_\targetW^\objsxk|)_{\objsxk=0}^{|V|-1}$. 
Again the matrix of that operator can be computed in polynomial time and is easily inverted (as it is upper triangular
with ones on the diagonal). Thus we can compute (in $\Z_2$)
all values $(|\objs_\targetW^\targetk|)_{\targetk=0}^{|V|-1}$
knowing all values 
$(|\objsx_\targetW^\objsxk|)_{\objsxk=0}^{|V|-1}$.

To finish the proof we need to describe a procedure
$\overline{\countproc}(\weight,\targetW,\objsxk,\treedecomp)$ that,
given a nice tree decomposition $\treedecomp$, weight funtion $\weight$
and integers $\targetW$ and $\objsxk$
computes $|\objsx_\targetW^\objsxk|$ modulo $2$.
Now we can use dynamic programming on the tree decomposition.

Recall that in the definition of $\objsx_\targetW^\objsxk$ the cut $(Y_1,Y_2)$
was a consistent cut of only $G(V \setminus R, X)$. We make use of this fact
to reduce the size of the table in the dynamic programming, as we do not need to remember
side of the cut for vertices in $R$.

	We follow the notation
  from \steinertree{} example (see Lemma \ref{lem:stein}).
  For every bag $x \in \treedecomp$ of the tree decomposition,
  integers $0 \leq \iterxk \leq |V|$, $0 \leq \iterW \leq \maxw|A|$,
  and $s \in \{\oneone,\onetwo,\zero\}^{\bag{x}}$, $s_\textrm{in} \in \{\zero,\one\}^{\bag{x}}$
  (called the colourings) define
  \begin{align*}
    \cand_x(\iterxk,\iterW) &= \Big\{(X,R) \; \big{|}\;
      X \subseteq \subarcs{x}\ \wedge\ R \subseteq \subbags{x}\ \wedge\ |R| = \iterxk\ \wedge\
      \weight(X)=\iterW \\
     &\qquad\ \wedge\ (v \in \subbags{x} \setminus \bag{x} \Rightarrow \indeg_{G[X]}(v) = [v \neq \anchor])
        \ \wedge\ (v \in \bag{x} \Rightarrow \indeg_{G[X]}(v) \leq [v \neq \anchor]) \\
      &\qquad   \wedge\ (v \in R \Rightarrow \outdeg_{G[X]}(v) = 0)
        \Big\}\\
    \objs_x(\iterxk,\iterW) &= \Big\{(X,R,(Y_1,Y_2)) \big{|}\; 
      (X,R) \in \cand_x(\iterxk,\iterW)\ \wedge\ \anchor \notin Y_2 \\
        &\qquad \wedge\ (Y_1,Y_2)
      \textrm{ is a consistent cut of }G(\subbags{x} \setminus R, X)
			\Big\} \\
    A_x(\iterxk,\iterW,s,s_\textrm{in}) &= \Big| \Big\{(X,R,(Y_1,Y_2)) \in \objs_x(\iterxk,\iterW)
      \big{|}\; \\
      &\qquad (s(v) = \onej \Rightarrow v \in Y_j)\ \wedge\
      (s(v) = \zero \Rightarrow v \in R)\ \wedge\ (\forall_{v\in\bag{x}} s_\textrm{in}(v) = \indeg_{G[X]}(v))
         \Big\} \Big|
  \end{align*}
  Here $s(v) = \zero$ denotes that $v \in R$,
  whereas $s(v) = \onej$ denotes that $v \in Y_j$ (and thus $v \notin R$).
  The value $s_\textrm{in}(v)$ denotes the indegree of $v$ in $G[X]$.
  The accumulators $\iterxk$ and $\iterW$ keep track of the size of $R$ and the weight of $X$, respectively.
  Hence $A_x(\iterxk,\iterW,s,s_\textrm{in})$ reflects the number of
  partial objects from $\objsx$ 
  with fixed size of $R$, weight of $X$ and interface on vertices from $\bag{x}$.

  The algorithm computes $A_x(\iterxk,\iterW,s,s_\textrm{in})$
  for all bags $x \in T$ in a bottom-up fashion
  for all reasonable values of $\iterxk$, $\iterW$ and colourings $s$, $s_\textrm{in}$.
  We now give the recurrence for $A_x(\iterxk,\iterW,s,s_\textrm{in})$
  that is used by the dynamic programming
  algorithm.
  In order to simplify notation we denote by $v$ the vertex introduced and
  contained in an introduce bag, by $(u,v)$ the arc introduced in an
  introduce edge bag,
  and by $y,z$ for the left and right child of $x$ in $\treedecomp$ if present.
\begin{itemize}
\item \textbf{Leaf bag}:
  \[ A_x(0,0,\emptyset,\emptyset) = 1 \]
\item \textbf{Introduce vertex bag}:
  \begin{align*}
    A_x(\iterxk,\iterW,s[v \to \alpha], s_\textrm{in}[v \to \one]) &= 0 \\
    A_x(\iterxk,\iterW,d,s[v\to\zero], s_\textrm{in}[v\to\zero]) &= A_y(\iterxk-1,\iterW,s,s_\textrm{in}) \\
    A_x(\iterxk,\iterW,d,s[v\to\oneone], s_\textrm{in}[v\to\zero]) &= A_y(\iterxk,\iterW,s,s_\textrm{in}) \\
    A_x(\iterxk,\iterW,d,s[v\to\onetwo], s_\textrm{in}[v\to\zero]) &= [v \neq \anchor]A_y(\iterxk,\iterW,s,s_\textrm{in})
  \end{align*}
  The new vertex has indegree zero and $\anchor$ cannot be put into $Y_2$.
\item \textbf{Introduce edge bag}:
\begin{align*}
 A_x(\iterxk,\iterW,s,s_\textrm{in}) &= A_y(\iterxk,\iterW,s,s_\textrm{in}) + A_y(\iterxk,\iterW-\weight((u,v)),s,s_\textrm{in}[v\to \zero]) &\\
                                 &\quad\textrm{if }(s(u)=s(v)=\onej \vee (s(u) = \onej \wedge s(v) = \zero))\ \wedge\ v \neq \anchor\ \wedge\ s_\textrm{in}(v) = \one &\\
 A_x(\iterxk,\iterW,s,s_\textrm{in}) &= A_y(\iterxk,\iterW,s,s_\textrm{in}) &\textrm{otherwise}
  \end{align*}
  Here we consider adding the arc $(u,v)$ to $X$. First, we need that $v \neq \anchor$.
  Second, we need that $u,v \in Y_j$ or $u \in Y_j$ and $v \in R$. Moreover,
  we need to update the indegree of $v$ and the accumulator keeping the weight of $X$.
\item \textbf{Forget bag}:
\begin{align*}
  A_x(\iterxk,\iterW,s,s_\textrm{in}) &= \sum_{\alpha \in \{\zero,\oneone\}} A_y(\iterxk,\iterW,s[v \to\alpha],s_\textrm{in}[v\to\zero]) &\textrm{if }v = \anchor\\
  A_x(\iterxk,\iterW,s,s_\textrm{in}) &= \sum_{\alpha \in \{\zero,\oneone,\onetwo\}} A_y(\iterxk,\iterW,s[v\to\alpha],s_\textrm{in}[v\to\one])&\textrm{otherwise}
  \end{align*}
  If we forget $v=\anchor$, we require that its indegree is zero and $v \notin Y_2$.
  Otherwise, we require that the indegree of the forgotten vertex is one.

\item \textbf{Join bag}:
  Let us define for $T \subseteq \bag{x}$ a colouring $s_\textrm{in}[T]$ as $s_\textrm{in}[T](v) = \one$ if $v \in T$ and $s_\textrm{in}[T](v) = \zero$ otherwise. Then
  \begin{align*}
	A_x(\iterxk,\iterW,s,s_\textrm{in}) &= \sum_{\iterxk_1 + \iterxk_2 = \iterxk + |s^{-1}(\zero)|} \ \  \sum_{\iterW_1 + \iterW_2 = \iterW} \ \  
  \sum_{T_1,T_2 \subseteq \bag{x}} \\
    &\qquad\qquad
                [T_1 \cup T_2 = s_\textrm{in}^{-1}(\one)][T_1 \cap T_2 = \emptyset]
                A_y(\iterxk_1,\iterW_1,s,s_\textrm{in}[T_1]) A_z(\iterxk_2,\iterW_2,s,s_\textrm{in}[T_2])
  \end{align*}
  The colourings $s$ in children need to be the same, whereas the colourings $s_\textrm{in}$ in the children
  need to sum up to the colouring $s_\textrm{in}$ in the bag $x$, i.e., a vertex $v$ has indegree one
  only if it has indegree one in exactly one of the children bags.
	Since vertices coloured $\zero$ in $\bag{x}$ are accounted for in both 
	tables of the children, we add their contribution to the 
	accumulator $\iterxk$.

  To compute the recursion formula
  efficiently we need to use fast subset convolution.
  For accumulators $\iterxk,\iterW$ and a colouring $s$
  we define the following functions on subsets $\bag{x}$
  \begin{align*}
  f^{\iterxk,\iterW,s}(T) &= A_y(\iterxk,\iterW,s,s_\textrm{in}[T]), \\
  g^{\iterxk,\iterW,s}(T) &= A_z(\iterxk,\iterW,s,s_\textrm{in}[T]).
  \end{align*}
  Now note that
	$$
	A_x(\iterxk,\iterW,s,s_\textrm{in}) = \sum_{\iterxk_1 + \iterxk_2 = \iterxk + |s^{-1}(\zero)|} \ \  \sum_{\iterW_1 + \iterW_2 = \iterW}
         (f^{\iterxk_1,\iterW_1,s} \fsc g^{\iterxk_2,\iterW_2,s})(s_\textrm{in}^{-1}(\one))$$
  By Theorem \ref{thm:fsc},
  for fixed accumulators $\iterxk_1,\iterW_1,\iterxk_2,\iterW_2$ and a colouring $s$
  the term
     $$(f^{\iterxk_1,\iterW_1,s} \fsc g^{\iterxk_2,\iterW_2,s})(s_\textrm{in}^{-1}(\one))$$
   can be computed in time $2^t t^{O(1)}$
   at once for all colourings $s_\textrm{in}$.
   Thus, the total time consumed by the evaluation of $A_x$ is bounded by $6^t n^{O(1)}$.
\end{itemize}
It is easy to see that the above recurrence leads to a dynamic programming 
algorithm that
computes the parity of $|\objsx_\targetW^\objsxk|$ for all values of $\targetW$ and $\objsxk$
in $6^t |V|^{O(1)}$ time,
since $|\objsx_\targetW^\objsxk|=A_\rootv(\objsxk,\targetW,\emptyset,\emptyset)$.
Moreover, as we count the parities and not the numbers $A_x$ themselves, 
all arithmetical operations (in particular the ring operations
    in the fast subset convolution) can be done in constant time.
As discussed before, knowing in $\Z_2$ the values of $|\objsx_\targetW^\objsxk|$ for $0 \leq \objsxk \leq |V|-1$ we can compute all values of $|\objs_\targetW^\targetk|$ and $|\sols_\targetW^\targetk|$ modulo $2$.
Thus, the proof of Theorem~\ref{thm:app:exactoutbranching} is finished. 
\end{proof}

\subsection{\maxspantree{}}
\label{sec:app:mfdst}
In this subsection we solve a bit more general version of \maxspantree{}
where the tree in question needs to contain exactly the prescribed number
of vertices of full degree.

\defproblemu{\exactspantree}
{An undirected graph $G = (V,E)$ and an integer $\tarnum$.}
{Does there exist a spanning tree $T$ of $G$ for which there are exactly 
$k$ vertices satisfying $\deg_G(v) = \deg_T(v)$?}

\begin{theorem}\label{thm:app:mfdst}
 There exists a Monte-Carlo algorithm that given a tree decomposition
 of width $t$ solves the \exactspantree problem in $4^t |V|^{O(1)}$ time.
 The algorithm cannot give false positives and may give false negatives with probability at most $1/2$.
\end{theorem}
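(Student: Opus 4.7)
My plan is a direct application of the Cut\&Count technique, closely paralleling the \exactleaf{} algorithm in Appendix~\ref{sec:app:exact-k-leaf} but simpler because the edge set itself furnishes both the universe and the object we select. Fix an arbitrary vertex $\anchor \in V$; since a spanning tree must contain $\anchor$, no case analysis on the anchor is needed. Set $\univ = E$ and draw random weights $\weight \colon E \to \{1,\ldots,N\}$ with $N = 2|E|$. For every integer $W$ define $\cand_W$ to be the family of sets $X \subseteq E$ with $|X| = |V|-1$, $\weight(X) = W$, and having exactly $k$ vertices $v$ such that $\deg_{G[X]}(v) = \deg_G(v)$; define $\sols_W \subseteq \cand_W$ to be those $X$ for which the graph $(V, X)$ is connected (and hence a spanning tree with the required number of full-degree vertices); and define $\objs_W$ to be the set of pairs $(X,(V_1,V_2))$ such that $X \in \cand_W$ and $(V_1,V_2)$ is a consistent cut of $(V,X)$ with $\anchor \in V_1$. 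It is crucial that the cut is taken over the whole vertex set $V$, not only over $V(X)$, so that isolated vertices (inevitable whenever $X$ fails to be a spanning tree) are counted as connected components.

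The Count part is then Lemma~\ref{lem:evencancel} verbatim: for each $X \in \cand_W$ there are exactly $2^{\mathtt{cc}((V,X))-1}$ cuts $(V_1,V_2)$ with $\anchor \in V_1$ that are consistent with $(V,X)$, which is odd precisely when $\mathtt{cc}((V,X))=1$, i.e., exactly when $X \in \sols_W$. Hence $|\sols_W| \equiv |\objs_W| \pmod 2$ for every $W$, and it suffices to produce a $4^t |V|^{O(1)}$-time procedure that computes $|\objs_W| \bmod 2$ for all $W$, which we then feed into Algorithm~\ref{alg:cutandcount} via Corollary~\ref{cor:cutandcount}.

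I will compute $|\objs_W| \bmod 2$ by dynamic programming on the nice tree decomposition. For each bag $x$ I maintain a table $A_x(m, W', c, s)$ where $m$ counts edges of $X$ already introduced in $G_x$, $W'$ is the accumulated weight, $c$ is the number of already forgotten vertices $v$ satisfying $\deg_G(v) = \deg_{G[X]}(v)$, and $s \in \{\mathbf{L}, \mathbf{R}\} \times \{\mathbf{clean}, \mathbf{miss}\}$ assigned to each $v \in \bag{x}$ records the side of the cut and whether some already-introduced edge incident to $v$ has been skipped (four states per vertex, matching the claimed $4^t$). Transitions are routine: an introduce vertex bag produces a new $\mathbf{clean}$ vertex on either side (subject to $\anchor \in V_1$); an introduce edge bag $uv$ either adds $uv$ (only if the sides of $u$ and $v$ agree, updating $m$ and $W'$) or rejects it (setting both endpoints to $\mathbf{miss}$); a forget bag for $v$ sums over the four possible labels of $v$ and, only in the two $\mathbf{clean}$ sub-cases, increments the accumulator $c$, since $v$ will receive no further edges and hence $\mathbf{clean}$ at that moment certifies full degree.

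The main technical step is the join bag. For a vertex $v \in \bag{x}$ the cut-side of $v$ must coincide in both children, while the $\mathbf{miss}$-status at $x$ is the disjunction of the statuses at the two children, because the edge set of $G_x$ is the disjoint union of those of $G_y$ and $G_z$. Viewing the $\mathbf{miss}$ flags as a subset $M \subseteq \bag{x}$, the join relation is precisely $M = M_y \cup M_z$; thus, having fixed the side-of-the-cut assignment $\sigma \colon \bag{x} \to \{\mathbf{L},\mathbf{R}\}$, the update on the $\mathbf{miss}$ dimension is a covering product of two functions on $2^{\bag{x}}$ (with the convolutions in $W'$, $m$, $c$ handled by ordinary summation over the polynomially many accumulator splits). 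By Theorem~\ref{thm:fsc} each such covering product is computed in $2^{|\bag{x}|} |\bag{x}|^{O(1)}$ ring operations, and summing over the $2^{|\bag{x}|}$ choices of $\sigma$ yields $4^{|\bag{x}|} |V|^{O(1)}$ time per join bag; all ring operations are in $\Z_2$ and hence constant time. Reading $A_{\rootv}(|V|-1, W, k, \emptyset)$ for each $W$ gives $|\objs_W| \bmod 2$, and combining with Algorithm~\ref{alg:cutandcount} proves Theorem~\ref{thm:app:mfdst}.
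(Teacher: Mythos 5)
Your proposal is correct and follows the same route as the paper's own proof: identical universe $\univ = E$, the same definitions of $\cand_W$, $\sols_W$, $\objs_W$ with cuts taken over all of $V$ (which the paper already builds into its convention $G[X] = (V,X)$ for an edge set $X$), the same invocation of Lemma~\ref{lem:evencancel}, and the same DP table with a cut-side bit and a ``miss''-flag per bag vertex, with the join handled by a covering product over the miss flags. The only difference is notational.
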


\begin{proof}
We use the Cut\&Count technique.
As a universe we take the set of edges $\univ = E$.
As usual we assume that we are given a weight function $\weight: \univ \rightarrow \{1,...,N\}$, where $N = 2|\univ| = 2|E|$.
Let $\anchor$ be an arbitrary vertex.

\noindent {\bf{The Cut part.}}
For an integer $\targetW$ we define:
  \begin{enumerate}
  \item $\cand_\targetW$ to be the family 
  of solution candidates of weight $\targetW$, that is subsets of exactly $|V|-1$ edges $X \subseteq E$, $|X|=|V|-1$, $\weight(X)=\targetW$,
  such that there are exactly $k$ vertices $v$ satisfying $\deg_G(v) = \deg_{G[X]}(v)$,
  \item $\sols_\targetW$ to be the set of solutions,
  that is solution candidates $X \in \cand_\targetW$ such that the graph $G[X]$ is connected;
  \item $\objs_\targetW$ to be the family of pairs $(X, (X_1,X_2))$,
  where $X \in \cand_\targetW$, 
  $\anchor \in X_1$, and $(X_1,X_2)$ is a consistent cut of the graph $G[X]$.
  \end{enumerate}
Observe that for $X \in \cand_\targetW$ the graph $G[X]$ is connected iff $G[X]$ is a tree, since $|X|=|V|-1$.
Thus $\sols_\targetW$ is indeed a family of solutions of weight $\targetW$.

\noindent {\bf{The Count part.}} 
To use Algorithm \ref{alg:cutandcount} we need to formally prove that 
for any $\targetW$ we have $|\sols_\targetW| \equiv |\objs_\targetW|$.
Similarly as in the case of \steinertree we note that
by Lemma~\ref{lem:evencancel} for each $X \in \cand_\targetW$
there exist $2^{\conncomp(G[X])-1}$ consistent cuts of the graph $G[X]$,
and the claim follows.

To finish the proof we need to show how to compute $|\objs_\targetW|$ modulo
$2$ in time $4^t n^{O(1)}$ using dynamic programming.
In a state we store 
the number of vertices that are already forgotten and have all their incident edges chosen,
the number of already chosen edges,
the sum of weights of already chosen edges,
and moreover for each vertex of the bag we remember the side of the cut and
one bit of information whether there exists some already introduced edge
incident with that vertex that was not chosen.
Formal definition follows.

We follow the notation from \steinertree{} example (see Lemma \ref{lem:stein}).
For a bag $x \in \treedecomp$ of the tree decomposition,
integers $0 \leq \iterk \leq |V|$, $0 \leq \iterm < |V|$, $0 \leq \iterW \leq \maxw(|V|-1)$,
$s_{\cut} \in \{\one, \two\}^{\bag{x}}$ and  $s_{\deg} \in \{\zero, \one\}^{\bag{x}}$
(called the colouring) define
\begin{align*}
  \cand_x(\iterk, \iterm, \iterW) & = \Big{\{} X \subseteq \subedges{x} \; \big{|}\; |X| = \iterm\ \wedge\ \weight(X)=\iterW 
          \ \wedge\ |\{v \in \subbags{x} \setminus \bag{x}\ :\ \deg_{G_x}(v) = \deg_X(v) \}| = \iterk \Big{\}} \\
  \objs_x(\iterk, \iterm, \iterW) & = \Big{\{} (X,(X_1,X_2))  \big{|}\; X \in \cand_x(\iterk, \iterm, \iterW)\
 \wedge\ \anchor \in X_1\ \wedge\ (X_1,X_2) \textrm{ is a consistent cut of }(V_x,X) \Big{\}} \\
    A_x(\iterk, \iterm, \iterW, s_{\cut}, s_{\deg}) & = \Big{|} \Big{\{} (X,(X_1,X_2)) 
    \in \objs_x(\iterk, \iterm, \iterW) \big{|}  (v \in X_j \cap \bag{x} \Rightarrow s_{\cut}(v) = j) \\
     & \qquad \wedge\ (s_{\deg}(v) = \zero \Rightarrow \deg_{G_x}(v) = \deg_X(v))\ \wedge\ (s_{\deg}(v) = \one \Rightarrow \deg_{G_x}(v) > \deg_X(v)) \Big{\}} \Big{|}
\end{align*}

By $s_{\cut}(v) = j$ we denote $v \in X_j$,
whereas $s_{\deg}(v)$ is equal to one iff there exists
an edge in $\subedges{x} \setminus X$ that is incident with $v$.
Hence $A_x(\iterk,\iterm,\iterW,s_{\cut},s_{\deg})$ is the number of pairs from $\objs_x(\iterk, \iterm, \iterW)$ 
with a fixed interface on vertices from $\bag{x}$.

The algorithm computes $A_x(\iterk, \iterm, \iterW, s_{\cut}, s_{\deg})$ for all bags $x \in\treedecomp$ in a bottom-up fashion
for all reasonable values of $\iterk$, $\iterm$, $\iterW$, $s_{\cut}$ and $s_{\deg}$.
We now give the recurrence for $A_x(\iterk, \iterm, \iterW, s_{\cut}, s_{\deg})$ that
is used by the dynamic programming algorithm.
As usual $v$ denotes the vertex introduced and
contained in an introduce bag, $uv$ the edge introduced in an
introduce edge bag,
and $y,z$ the left and right child of $x$ in $\treedecomp$ if present.
\begin{itemize}
\item \textbf{Leaf bag}:
\[ A_x(0,0,0,\emptyset,\emptyset) = 1 \]
\item \textbf{Introduce vertex bag}:
	\begin{align*}
     A_x(\iterk, \iterm, \iterW, s_{\cut} [v\to \one], s_{\deg} [v\to \zero]) & = A_y(\iterk, \iterm, \iterW, s_{\cut}, s_{\deg})  \\
     A_x(\iterk, \iterm, \iterW, s_{\cut} [v\to \two], s_{\deg} [v\to \zero]) & = [v \not= \anchor]A_y(\iterk, \iterm, \iterW, s_{\cut}, s_{\deg})  \\
     A_x(\iterk, \iterm, \iterW, s_{\cut} [v\to \alpha], s_{\deg} [v\to \one]) & = 0
\end{align*}
We make sure that $\anchor$ belongs to $X_1$.
\item \textbf{Introduce edge bag}:
	\begin{align*}
     A_x(\iterk, \iterm, \iterW, s_{\cut}, s_{\deg})  & = [s_{\cut}(u) = s_{\cut}(v)] A_y(\iterk, \iterm-1, \iterW-\weight(uv), s_{\cut}, s_{\deg}) \\
                                                   & \qquad + \sum_{\alpha_u,\alpha_v \in \{\zero,\one\}} A_y(\iterk, \iterm, \iterW, s_{\cut}, s_{\deg}[u\to\alpha_u, v\to \alpha_v]) \\
     &\qquad \textrm{ if } s_{\deg}(u) = s_{\deg}(v) = \one\\
     A_x(\iterk, \iterm, \iterW, s_{\cut}, s_{\deg})  & = [s_{\cut}(u) = s_{\cut}(v)] A_y(\iterk, \iterm-1, \iterW-\weight(uv), s_{\cut}, s_{\deg}) \\
 & \qquad \textrm{ otherwise}
  \end{align*}
If $s_{\deg}(u) = s_{\deg}(v) = \one$ then we have an option of not taking the edge $uv$ to the set $X$. In this case, the previous values of $s_{\deg}(u)$ and $s_{\deg}(v)$ can be arbitrary.
\item \textbf{Forget bag}:

	\begin{align*}
     A_x(\iterk, \iterm, \iterW, s_{\cut}, s_{\deg}) & = \sum_{j \in \{\one, \two\}} \quad \sum_{\alpha \in \{\zero,\one\}} 
             A_y(\iterk-[\alpha = \zero], \iterm, \iterW, s_{\cut} [v\to j], s_{\deg} [v\to \alpha]) 
  \end{align*}
If the vertex $v$ had all incident edges chosen ($\alpha = \zero$) then we update the accumulator $\iterk$.

\item \textbf{Join bag}:

The only valid combinations to achieve the colouring $s_{\cut}$ is to have the 
same colouring in both children. 
However we have $s_{\deg}(v)=\one$ in $x$ if and only if $s_{\deg}(v)=\one$
in $y$ or in $z$. Hence we use a covering product.
We somewhat abuse the notation and identify a function $s_{\deg}$ with a subset
$s_{\deg}^{-1}(\one) \subseteq \bag{x}$.
Let us define
\begin{align*}
	f^{\iterk,\iterm,\iterW,s_{\cut}}(s_{\deg}) &= A_y(\iterk,\iterm,\iterW,s_{\cut},s_{\deg}) \\
	g^{\iterk,\iterm,\iterW,s_{\cut}}(s_{\deg}) &= A_z(\iterk,\iterm,\iterW,s_{\cut},s_{\deg}) \\
	h^{\iterk,\iterm,\iterW,s_{\cut}}(s_{\deg}) &= \sum_{\iterk_1 + \iterk_2 = \iterk}\ \ 
          \sum_{\iterm_1 + \iterm_2 = \iterm}\ \ 
          \sum_{\iterW_1 + \iterW_2 = \iterW}
            (f^{\iterk_1,\iterm_1,\iterW_1,s_{\cut}} \fsccov g^{\iterk_2,\iterm_2,\iterW_2,s_{\cut}})(s_{\deg})
\end{align*}
Consequently we have
\begin{align*}
  A_x(\iterk,\iterm,\iterW,s_{\cut},s_{\deg}) = h^{\iterk,\iterm,\iterW,s_{\cut}}(s_{\deg})
\end{align*}
\end{itemize}

It is easy to see that we can combine the 
above recurrence with dynamic programming. 
For each of the $2^t |V|^{O(1)}$ argument values of $\iterk$,$\iterm$,$\iterW$ and $s_{\cut}$
the covering product $\fsccov$ can be computed in $2^t |V|^{O(1)}$ by Theorem \ref{thm:fsc}.
Note that as we perform all calculations modulo $2$, we take only constant
time to perform any arithmetic operation.

Since $|\objs_\targetW| = A_\rootv(k, |V|-1, \targetW, \emptyset, \emptyset)$
the above recurrence leads to a dynamic programming algorithm that
computes the parity of $|\objs_\targetW|$ (and thus of $|\sols_\targetW|$ as well) for all reasonable values of $\targetW$
in $4^t |V|^{O(1)}$ time.
Consequently we finish the proof of Theorem~\ref{thm:app:mfdst}.
\end{proof}

\subsection{\gmtsp}
\label{sec:app:gmtsp}
\defproblemu{\gmtsp}
{An undirected graph $G = (V,E)$ and an integer $\tarnum$.}
{Does there exist a closed walk (possibly repeating edges and vertices) of length
at most $\tarnum$ that visits each vertex of the graph at least once?}

\begin{theorem}\label{thm:app:gmtsp}
 There exists a Monte-Carlo algorithm that given a tree decomposition
 of width $t$ solves the \gmtsp problem in $4^t |V|^{O(1)}$ time.
 The algorithm cannot give false positives and may give false negatives with probability at most $1/2$.
\end{theorem}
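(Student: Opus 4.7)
The plan is to apply the Cut\&Count technique in essentially the same pattern as in the previous subsections, using the well-known fact that a closed walk of length $\le \tarnum$ visiting every vertex exists if and only if there is an Eulerian connected multi-subgraph of $G$ on the full vertex set with at most $\tarnum$ edges (counted with multiplicity), and moreover we may always assume that every edge is used at most twice in such a multi-subgraph. So I will parameterize each edge $e \in E$ by a value in $\{0,1,2\}$ indicating how many times it is used.

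For the Cut part, I would take the universe $\univ = E \times \{1,2\}$ so that using edge $e$ once contributes the weight $\omega((e,1))$ and using it twice contributes $\omega((e,1)) + \omega((e,2))$; this way each distinct multiset corresponds to a distinct subset of $\univ$ and the Isolation Lemma applies. Fix an arbitrary anchor $\anchor \in V$. For an integer $\targetW$, let $\cand_\targetW$ be the family of multisets $\solset$ with multiplicities in $\{0,1,2\}$, with total multiplicity at most $\tarnum$, total weight $\targetW$, and such that every vertex of $G$ has even degree in $\solset$. Let $\sols_\targetW$ be those $\solset \in \cand_\targetW$ such that the graph $(V,\solset)$ (ignoring isolated vertices from the edge set view but still requiring connectivity on $V$) is connected, and let $\objs_\targetW$ be the pairs $(\solset,(X_1,X_2))$ with $\solset \in \cand_\targetW$ and $(X_1,X_2)$ a consistent cut of $(V,\solset)$ with $\anchor \in X_1$. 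By Lemma \ref{lem:evencancel} the contribution of disconnected candidates to $|\objs_\targetW|$ is even, so $|\sols_\targetW| \equiv |\objs_\targetW|$ modulo 2, and Corollary \ref{cor:cutandcount} reduces the problem to computing $|\objs_\targetW| \bmod 2$ for each $\targetW$.

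For the Count part I would run a dynamic program on a nice tree decomposition where the state at a bag $x$ records: an accumulator $\iterm$ counting multiplicities of edges introduced so far, an accumulator $\iterW$ for the total weight, and per vertex $v \in \bag{x}$ a pair consisting of (i) the side of the cut containing $v$ (two values) and (ii) the parity of the number of edges of $\solset$ currently incident to $v$ (two values). This gives $4$ states per vertex and therefore $4^t$ states per bag up to polynomial factors. The introduce vertex, introduce edge and forget bags are routine: the introduce edge bag for $uv$ enumerates over the multiplicity $\{0,1,2\}$ of $uv$ and the side assignment in the case of multiplicity one (multiplicity two does not change parity but does contribute two weights), checking the consistent-cut constraint; the forget bag only accepts states in which the forgotten vertex has even parity.

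The main obstacle and only non-trivial step is the join bag: the sides of the cut must agree across the two children (trivial pointwise matching), but the parity state of each vertex in the parent must equal the XOR of the parities of the two children, while the accumulators $\iterm,\iterW$ add. A direct enumeration would cost $16^t$, one factor of $4$ per child. To reach $4^t$ I will, for each fixed pair $(\iterm_1,\iterW_1)$ (resp.\ $(\iterm_2,\iterW_2)$) and each fixed side-of-cut assignment $\sigma \in \{1,2\}^{\bag{x}}$, view the restriction of the DP table to parity assignments as a function $2^{\bag{x}} \to \Z_2$ and compute the $\Z_2$ (xor) product via Theorem \ref{thm:fscxor} with $\fscconst = 2$; this takes $2^{|\bag{x}|}|\bag{x}|^{O(1)}$ time per choice of $\sigma$, giving $2^{|\bag{x}|} \cdot 2^{|\bag{x}|} = 4^t$ per join bag (times polynomial factors in $\iterm,\iterW$). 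Wrapping this in Algorithm \ref{alg:cutandcount} with $N = 2|\univ|$ then yields the claimed $4^t |V|^{O(1)}$ Monte-Carlo algorithm with one-sided error at most $1/2$.
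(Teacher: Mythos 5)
Your proposal matches the paper's proof almost exactly: same universe $E \times \{1,2\}$, same candidate set of degree-even $\{0,1,2\}$-weighted edge multisets, same consistent-cut construction with an anchor on the full vertex set, same four-state-per-vertex dynamic program (cut side plus incidence parity), and the same use of the Walsh--Hadamard/xor product at join bags to reach $4^t$. The only difference is the cosmetic detail that you encode multiplicity $2$ as $\{(e,1),(e,2)\}\subseteq\univ$ whereas the paper uses $\{(e,2)\}$; both are injective and work with the Isolation Lemma.
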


\begin{proof}
We use the Cut\&Count technique.
Observe that we may assume that $G$ is connected and $k \le 2(|V|-1)$ because taking twice all edges of any spanning tree gives a solution.

Since we want to distinguish the case when we take an edge once or twice to the solution,
as a universe we take the set $\univ = E \times \{1, 2\}$,
where we use $(e,1)$ if an edge is chosen once
and $(e,2)$ in case we use $e$ twice.
As usual we assume that we are given a weight function $\weight: \univ \rightarrow \{1,...,N\}$, where $N = 2|\univ| = 4|E|$.
Let $\anchor$ be an arbitrary vertex.

\noindent {\bf{The Cut part.}}
For integers $\iterk$ and $\targetW$ we define:
  \begin{enumerate}
  \item $\cand_\targetW^{\iterk}$ to be the family 
  of solution candidates of size $\iterk$ and weight $\targetW$, 
  that is functions $\phi \in \{0,1,2\}^E$ where $\sum_{e \in E} \phi(e) = \iterk$, $\sum_{e \in E, \phi(e) > 0} \weight((e,\phi(e)))=\targetW$,
  such that for each vertex $v \in V$ its degree is even, i.e., $|\{uv \in E : \phi(e)=1\}| \equiv 0$;
  \item $\sols_\targetW^{\iterk}$ to be the set of solutions,
  that is solution candidates $\phi \in \cand_\targetW^{\iterk}$ such that the graph $G[\phi^{-1}(\{1,2\})]$ is connected;
  \item $\objs_\targetW^{\iterk}$ to be the family of pairs $(\phi, (X_1,X_2))$,
  where $\phi \in \cand_\targetW^{\iterk}$, 
  $\anchor \in X_1$, and $(X_1,X_2)$ is a consistent cut of the graph $G[\phi^{-1}(\{1,2\})]$.
  \end{enumerate}
We want to check whether there exist numbers $\targetW$ and $\iterk \leq \targetk$ such that
$\sols_\targetW^\iterk \not= \emptyset$.

\noindent {\bf{The Count part.}} 
To use Algorithm \ref{alg:cutandcount} we need to formally prove that 
for any $\iterk$ and $\targetW$ we have $|\sols_\targetW^\iterk| \equiv |\objs_\targetW^\iterk|$.
Similarly as in the case of \steinertree we note that
by Lemma~\ref{lem:evencancel} for each $\phi \in \cand_\targetW^\iterk$
there exist $2^{\conncomp(G')-1}$ consistent cuts of the graph $G'=G[\phi^{-1}(\{1,2\})]$,
and the claim follows.

To finish the proof we need to show how to compute $|\objs_\targetW^\iterk|$ modulo
$2$ in time $4^t n^{O(1)}$ using dynamic programming.

We follow the notation from \steinertree{} example (see Lemma \ref{lem:stein}).
For a bag $x \in \treedecomp$ of the tree decomposition,
integers $0 \leq \iterk \leq k$, $0 \leq \iterW \leq k\maxw$,
$s_{\cut} \in \{\one, \two\}^{\bag{x}}$ and  $s_{\deg} \in \{\zero, \one\}^{\bag{x}}$
(called the colourings) define
\begin{align*}
  \cand_x(\iterk, \iterW) & = \Big{\{} \phi \in \{0,1,2\}^{\subedges{x}} \; \big{|}\; \sum_{e \in \subedges{x}} \phi(e) = \iterk\ \wedge\
  \sum_{e \in E_x, \phi(e) > 0} \weight((e,\phi(e)))=\iterW \\
    & \qquad \wedge\ \forall_{v \in \subbags{x} \setminus \bag{x}} |\{uv \in E_x:\phi(uv)=1\}| \mod 2 = 0 \Big{\}} \\
  \objs_x(\iterk, \iterW) & = \Big{\{} (\phi,(X_1,X_2))  \big{|}\; \phi \in \cand_x(\iterk, \iterW)
 \ \wedge\ \anchor \in X_1\ \wedge\ (X_1,X_2) \textrm{ is a consistent cut of }G_x[\phi^{-1}(\{1,2\})] \Big{\}} \\
    A_x(\iterk, \iterW, s_{\cut}, s_{\deg}) & = \Big{|} \Big{\{} (\phi,(X_1,X_2)) 
    \in \objs_x(\iterk, \iterW) \big{|} (s_{\cut}(v) = j \Rightarrow v \in X_j) \\
     & \qquad \wedge\ \forall_{v \in \bag{x}} s_{\deg}(v) \equiv |\{uv \in \subedges{x} : \phi(uv) = 1\}| \Big{\}} \Big{|}
\end{align*}

The accumulators $\iterk$ and $\iterW$ keep track of the number of 
edges chosen (with multiplicities) and the appropriate sum of weights.
In the sequence $s_\cut$ we store the information about the side of the cut of each vertex from $B_x$,
whereas $s_{\deg}$ is used to remember whether a vertex has an odd or even degree.
Hence $A_x(\iterk,\iterW,s_\cut,s_{\deg})$ is the number of pairs from $\objs_x(\iterk,\iterW)$ 
with a fixed interface on vertices from $\bag{x}$.

The algorithm computes $A_x(\iterk, \iterW, s_{\cut}, s_{\deg})$ for all bags $x \in\treedecomp$ in a bottom-up fashion
for all reasonable values of $\iterk$, $\iterW$, $s_{\cut}$ and $s_{\deg}$.
We now give the recurrence for $A_x(\iterk, \iterW, s_{\cut}, s_{\deg})$ that
is used by the dynamic programming algorithm.
As usual let $v$ stand for the vertex introduced and
contained in an introduce bag, $uv$ for the edge introduced in an
introduce edge bag,
and $y,z$ for the left and right child of $x$ in $\treedecomp$ if present.
\begin{itemize}
\item \textbf{Leaf bag}:
\[ A_x(0,0,\emptyset,\emptyset) = 1 \]
\item \textbf{Introduce vertex bag}:
	\begin{align*}
     A_x(\iterk, \iterW, s_{\cut}[v\to \one], s_{\deg}[v\to \zero]) & = A_y(\iterk, \iterW, s_{\cut}, s_{\deg})  \\
     A_x(\iterk, \iterW, s_{\cut}[v\to\two], s_{\deg} [v\to \zero]) & = [v \not= \anchor]A_y(\iterk, \iterW, s_{\cut}, s_{\deg})  \\
     A_x(\iterk, \iterW, s_{\cut} [v\to \alpha], s_{\deg} [v\to \one]) & = 0 
\end{align*}
We make sure that $\anchor$ belongs to $X_1$.
\item \textbf{Introduce edge bag}:
	\begin{align*}
     A_x(\iterk, \iterW, s_{\cut}, s_{\deg})  & = A_y(\iterk, \iterW, s_{\cut}, s_{\deg}) + \\
     & \qquad [s_{\cut}(u) = s_{\cut}(v)] A_y(\iterk-1, \iterW-\weight((uv,1)), s_{\cut}, s_{\deg}') + \\
     & \qquad [s_{\cut}(u) = s_{\cut}(v)] A_y(\iterk-2, \iterW-\weight((uv,2)), s_{\cut}, s_{\deg}) 
  \end{align*}
  where by $s_{\deg}'$ we denote $s_{\deg}$ with changed values for $u$ and $v$, formally $$s_{\deg}'=s_{\deg}[u\to 1-s_{\deg}(u),v\to 1-s_{\deg}(v)].$$

  We can either not take an edge or take it once or twice. 
\item \textbf{Forget bag}:

	\begin{align*}
     A_x(\iterk, \iterW, s_{\cut}, s_{\deg}) & = \sum_{j \in \{\one, \two\}}
             A_y(\iterk, \iterW, s_{\cut} [v\to j], s_{\deg} [v\to \zero]) 
  \end{align*}
  We simply check the parity of the vertex which we are about to forget,
  both sides of the cut are allowed.

\item \textbf{Join bag}:

The only valid combinations to achieve the colouring $s_{\cut}$ is to have the 
same colouring in both children. 
However for $s_{\deg}$ we have to calculate the xor product of $s_{\deg}(v)$
for $y$ and $z$.
We somewhat abuse the notation and identify a function $s_{\deg}$ with a subset
$s_{\deg}^{-1}(\one) \subseteq \bag{x}$.
Let us define
\begin{align*}
	f^{\iterk,\iterW,s_{\cut}}(s_{\deg}) &= A_y(\iterk,\iterW,s_{\cut},s_{\deg}) \\
	g^{\iterk,\iterW,s_{\cut}}(s_{\deg}) &= A_z(\iterk,\iterW,s_{\cut},s_{\deg}) \\
	h^{\iterk,\iterW,s_{\cut}}(s_{\deg}) &= \sum_{\iterk_1 + \iterk_2 = \iterk}
          \sum_{\iterW_1 + \iterW_2 = \iterW}
            (f^{\iterk_1,\iterW_1,s_{\cut}} \fscxor g^{\iterk_2,\iterW_2,s_{\cut}})(s_{\deg})
\end{align*}
Consequently we have
\begin{align*}
  A_x(\iterk,\iterW,s_{\cut},s_{\deg}) = h^{\iterk,\iterW,s_{\cut}}(s_{\deg})
\end{align*}
\end{itemize}

It is easy to see that we can combine the 
above recurrence with dynamic programming. 
For each of the $2^t |V|^{O(1)}$ argument values of $\iterk$,$\iterW$ and $s_{\cut}$
the xor product can be computed in $2^t |V|^{O(1)}$ by Theorem \ref{thm:fscxor}.
Note that as we perform all calculations modulo $2$, we take only constant
time to perform any arithmetic operation.

Since for each $\iterk$ we have $|\objs_\targetW^{\iterk}| = A_\rootv(\iterk, \targetW, \emptyset, \emptyset)$
the above recurrence leads to a dynamic programming algorithm that
computes the parity of $|\objs_\targetW^{\iterk}|$ (and thus of $|\sols_\targetW^{\iterk}|$ as well) for all reasonable values of $\targetW$ and $\iterk$
in $4^t |V|^{O(1)}$ time.
Consequently we finish the proof of Theorem~\ref{thm:app:gmtsp}.
\end{proof}

\section{Improvements in FVS, CVC and CFVS parameterized by the solution size}\label{sec:param-improv}

Our technique gives rise to an improvement of several parameterized complexity upper bounds for vertex deletion problems in which the remaining graph has to be of constant treewidth.
These problems are \fvs, \cvertexcover and \cfvs. 
The main idea behind the new results is the combination of the {\em iterative compression} technique, developed by Reed et al. \cite{reed:ic}, and the Cut\&Count technique. 

We begin with the \fvs problem, as it was exhaustively studied by the parameterized complexity community. 
Let us recall that previously best algorithm, due Cao, Chen and Liu, runs in $3.83^k n^{O(1)}$ time \cite{fvs:3.83k}.

\begin{theorem}[Theorem \ref{fvs_theorem}, restated]\label{thm:fvs:fpt}
There exists a Monte-Carlo algorithm for the \fvs problem in a graph with
$n$ vertices in $3^k n^{O(1)}$ time and polynomial space.
The algorithm cannot give false positives and may give false negatives with probability at most $1/2$.
\end{theorem}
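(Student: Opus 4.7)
The plan is to combine the iterative compression framework of Reed et al. with the Cut\&Count dynamic programming for \fvs{} (Theorem \ref{thm:app:fvs}), while exploiting a factored outer loop to keep the space polynomial.

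First, I would set up iterative compression in the standard way: order the vertices $v_1,\ldots,v_n$, and inductively maintain an FVS $Z_i$ of $G_i=G[\{v_1,\ldots,v_i\}]$ of size at most $k$. At step $i+1$ the set $W := Z_i\cup\{v_{i+1}\}$ is an FVS of $G_{i+1}=:H$ of size exactly $k+1$, and everything reduces to a \emph{compression subroutine}: given $H$ together with an FVS $W$ of size $k+1$, decide whether $H$ admits an FVS of size at most $k$. Iterative compression invokes this subroutine $O(n)$ times, so it is enough to implement it in $3^k n^{O(1)}$ time and polynomial space.

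The crucial structural observation is that $H-W$ is a forest, so a width-$1$ nice tree decomposition of $H-W$ (bags of size $2$) can be turned, in polynomial time, into a nice tree decomposition $\treedecomp$ of $H$ by inserting all of $W$ into every bag; this $\treedecomp$ has width $k+2$ and bag size $k+3$. A direct invocation of the Cut\&Count algorithm from Theorem \ref{thm:app:fvs} on $(H,\treedecomp)$ already yields the desired $3^{k+2}|V|^{O(1)}=O(3^k n^{O(1)})$ running time, but stores an exponential DP table, so further work is needed for the space bound.

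To obtain polynomial space I would exploit the fact that the $k+1$ vertices of $W$ appear in \emph{every} bag of $\treedecomp$ and hence carry a single, global colour in $\{\zero,\oneone,\onetwo\}$ throughout the DP. Accordingly, I would run an outer loop over the $3^{k+1}$ possible colourings $s_W\in\{\zero,\oneone,\onetwo\}^W$. For each fixed $s_W$, the residual DP on the forest backbone of $\treedecomp$ has only the two non-$W$ vertices of each bag as free coordinates, giving $3^2=9$ effective colourings per bag; together with the polynomially bounded accumulators $(\itern,\iterm,\itercc,\iterW)$, the per-bag table is of polynomial size, and the DP can be evaluated by a depth-first traversal of the forest that keeps only the tables along the current root-to-leaf path. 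Each $s_W$ yields a value $A_\rootv^{s_W}(\itern,\iterm,\itercc,\iterW,\emptyset)$, and we accumulate $\sum_{s_W} A_\rootv^{s_W}(\cdot)\bmod 2$ across the outer loop into a single-bit running parity per parameter tuple, so nothing exponential ever needs to be stored simultaneously.

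The total time per compression call is $3^{k+1}\cdot n^{O(1)}$, i.e.\ $O(3^k n^{O(1)})$, the space is polynomial (the outer loop reuses the same workspace), and combining with the $n$ compression calls and with the Isolation Lemma randomization inherited from Algorithm~\ref{alg:cutandcount} delivers the claimed Monte Carlo bound with one-sided error at most $1/2$. The main obstacle I expect is the bookkeeping required to justify that the outer iteration over $s_W$ exactly reproduces the DP of Theorem~\ref{thm:app:fvs}: one must check that the join, introduce-edge and forget recurrences factor through $s_W$, which they do because every such recurrence only constrains colours of vertices currently in the bag and hence respects the fixed global colouring of $W$.
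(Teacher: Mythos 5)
Your approach is essentially the paper's: iterative compression, a width-$(k+2)$ nice tree decomposition of $G_{i+1}$ in which every bag contains the previously found FVS $W$, and an outer loop over the $3^{|W|}$ colourings of $W$ to keep the DP work-space polynomial. However, two genuine gaps remain.

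First, you frame the compression subroutine as a pure \emph{decision} procedure (``decide whether $H$ admits an FVS of size at most $k$''), but the induction needs the actual set $Z_{i+1}$ to build the next $W$; a YES/NO oracle does not let you carry it forward. The paper closes this by invoking the \constrainedfvs{} algorithm (Theorem~\ref{thm:app:fvs}, which you cite but never use with a nonempty constraint set): it grows the answer vertex by vertex, at each step testing whether some FVS of size at most $k$ contains the committed set $K$ together with the current candidate vertex, and adding the vertex to $K$ when the test says yes. Without this, the iterative compression loop cannot be continued.

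Second, the error-probability bookkeeping does not hold up. Each Cut\&Count call has one-sided failure probability up to $1/2$, and iterative compression plus reconstruction makes $\Theta(n^2)$ such calls; a single false negative anywhere aborts the run with a wrong NO. Your claim that the overall error is still at most $1/2$ therefore does not follow. The paper repeats each call $n$ times to push the per-call failure probability down to $2^{-n}$, and then union-bounds over the $\Theta(n^2)$ repeated calls, yielding total error $O(n^2/2^n) < 1/2$ for large $n$. Your proposal needs the same amplification step.
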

\begin{proof}
Let $v_1,v_2,\ldots,v_n$ be an arbitrary ordering of the vertices of the given graph $G=(V,E)$. 
Let us denote $G_i=G[\{v_1,v_2,\ldots,v_i\}]$ for all $1\leq i\leq n$. 
Observe that if $G$ admits a feedback vertex set of size at most $k$, i.e. there is a set $A\subseteq V, |A|\leq k$ such that $G[V\setminus A]$ is a forest, then so do all the graphs $G_i$, because $G_i[\{v_1,v_2,\ldots,v_i\}\setminus A]$ is a forest as well and $|A\cap \{v_1,v_2,\ldots,v_i\}|\leq |A|\leq k$.

We construct feedback vertex sets $A_1,A_2,\ldots,A_n$ of size at most $k$ consecutively in $G_1,G_2,\ldots,G_n=G$. 
If at any step the algorithm finds out that the set we seek does not exist (with high probability), we answer NO. 
We begin with $A_1=\emptyset$, which is a feasible solution in graph $G_1$
(we ignore the trivial case $k = 0$). 
The idea of iterative compression is that when we are to construct the set 
$A_{i+1}$, we can use the previously constructed set $A_i$. 
Let $B_{i+1}=A_i\cup \{v_{i+1}\}$. 
Observe that $B_{i+1}$ is a feedback vertex set in $G_{i+1}$. 
If $|B_{i+1}|\leq k$, then we take $A_{i+1}=B_{i+1}$. 
Thus we are left with the case in which, given a feedback vertex set of size
$k+1$, we need to construct a feedback vertex set of size at most $k$ 
or determine that none such exists. 
Denote this given feedback vertex set by $B$.

As $B$ is a feedback vertex set, the graph induced by the rest of the vertices is a forest. 
Thus we can construct a tree decomposition of the graph $G_{i+1}$ of width at most $k+2$ by creating a tree decomposition of the forest of width $1$ and adding the whole set $B$ to each bag. 
To begin with, we test whether $G_{i+1}$ admits a feedback vertex set of size at most $k$. 
We apply (using the tree decomposition obtained above as the input) 
the dynamic programming algorithm described in Section 
\ref{sec:app:fvs}, running in $3^kn^{O(1)}$ time, 
which tests whether the graph admits a feedback vertex set of size at most $k$. 
Observe that this algorithm, as described in the proof of Theorem \ref{thm:app:fvs}, uses exponential space. 
However, in each step when computing $A_x(a,b,c,w,s)$ the algorithm refers only to values $A_y(a',b',c',w',s')$, where $s'=s$ on the intersection of the domains of $s$ and $s'$. 
In our case the intersection of every two bags of the tree decomposition contains $B$. 
Therefore we can reorder the computation in the following manner: 
for every evaluation $\overline{s}: B\to \{\mathbf{0},\mathbf{1}_1, \mathbf{1}_2\}$ we fix it as the ,,core'' evaluation for every bag in the decomposition and run the algorithm to compute all the values $A_x(a,b,c,w,s)$, where $s|_B=\overline{s}$. 
Such a computation takes polynomial time and space. 
As there are $3^{k+1}$ such possible evaluations $\overline{s}$, the algorithm runs in $3^kn^{O(1)}$ time and in polynomial space. 
We make $n$ independent runs of the algorithm in order to assure that the probability of a false negative is at most $\frac{1}{2^n}$.

Once we have done this, we already tested with high probability whether the desired feedback vertex set exists or not. 
If the answer is negative, we answer NO. 
Otherwise we need to explicitly construct the set $A_{i+1}$ in order to use it in the next step of the iterative compression. 
We make use of the algorithm for \constrainedfvs, given by Theorem \ref{thm:app:fvs}. 
The algorithm considers the vertices of $G_{i+1}$ one by one, building a set $K$ which at the end will be the constructed $A_{i+1}$. 
We begin with $K=\emptyset$ and preserve an invariant that at each step there is a feedback vertex set of size at most $k$ containing the set $K$. 
When considering the vertex $v$, we test in $3^kn^{O(1)}$ time whether the graph admits a constrained feedback vertex set of size at most $k$ with $S=K\cup\{v\}$, making $n$ independent runs of the algorithm given by Theorem \ref{thm:app:fvs} in order to reduce the probability of a false negative to at most $\frac{1}{2^n}$. 
If the answer is positive, we can safely add $v$ to $K$ as we know that there is a feedback vertex set of size at most $k$ containing $K\cup\{v\}$ (recall our algorithms do not return false positives). 
Otherwise we simply proceed to the next vertex. 
The computation terminates when $K$ is already a feedback vertex set or when we have exhausted all vertices. 
Observe that if $G_{i+1}$ admits a feedback vertex set of size at most $k$, this construction will terminate building a feedback vertex set $A_{i+1}$ of size at most $k$ unless there was an error in at least one of the tests. 
If we exhaust all vertices, we answer NO, as an error has occured.
Note that in each run of the algorithm for \constrainedfvs we can reorder the computation in the same way as in the previous paragraph to reduce space usage to polynomial.

Observe that the described algorithm at most $n^2+n$ times makes $n$ independent runs of the algorithm from Theorem \ref{thm:app:fvs} as a subroutine: 
in each of $n$ steps of the iterative compression at most $n+1$ times. 
Each of these groups of runs has a probability of a false negative bounded by $\frac{1}{2^n}$, thus the probability of a false negative is bounded by $\frac{n^2+n}{2^n}$, which is lower than $\frac{1}{2}$ for large enough $n$.
\end{proof}

Now we proceed to the algorithm for \cvertexcover. 
The previously best FPT algorithm is due to Binkele-Raible \cite{cvc-br}, and runs in $2.4882^kn^{O(1)}$ time complexity. 
The following algorithm is also an application of iterative compression, 
however we make use of the connectivity requirement in order to reduce the complexity from $3^k n^{O(1)}$ down to $2^k n^{O(1)}$.

\begin{theorem}\label{thm:cvc:fpt}
There exists a Monte-Carlo algorithm for the \cvertexcover problem in a graph with
$n$ vertices in $2^k n^{O(1)}$ time and polynomial space.
The algorithm cannot give false positives and may give false negatives with probability at most $1/2$.
\end{theorem}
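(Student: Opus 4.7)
The plan is to carry over the iterative-compression scaffolding used in the proof of Theorem \ref{thm:fvs:fpt} almost verbatim, substituting the \constrainedcvertexcover subroutine of Theorem \ref{thm:app:cvc} for the \constrainedfvs subroutine. Fix any ordering $v_1,\ldots,v_n$ of the vertices of $G$, let $G_i = G[\{v_1,\ldots,v_i\}]$, and maintain inductively a CVC $A_i$ of $G_i$ of size at most $k$. Given $A_i$, the set $B = A_i \cup \{v_{i+1}\}$ is a CVC of $G_{i+1}$ of size at most $k+1$; if $|B|\leq k$ we set $A_{i+1}=B$, otherwise the task is to ``compress'' $B$ to a CVC of size at most $k$. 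Exactly as in the proof of Theorem \ref{thm:fvs:fpt}, we will use the compression step first as a decision oracle to test whether a CVC of size $k$ exists, and then in a constructive loop that forces vertices into $\mustset$ one by one to actually extract such a CVC.

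The key structural observation is that $V(G_{i+1})\setminus B$ is an independent set, since $B$ is in particular a vertex cover. Hence $G_{i+1}$ admits a path decomposition of width $k+1$ whose bags are of the form $B\cup\{u\}$ with $u\in V(G_{i+1})\setminus B$. Plugging this decomposition into Theorem \ref{thm:app:cvc} directly yields a $3^{k+1}|V|^{O(1)}$-time algorithm for the compression step; the main content of the proof is to tighten this to $2^{k+1}|V|^{O(1)}$, which is where the connectivity requirement enters.

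The refinement exploits the following consequence of connectivity: for any pair $(A,C)$ counted by the Cut\&Count DP of Theorem \ref{thm:app:cvc}, every vertex $u \in V(G_{i+1})\setminus B$ that belongs to $A$ has all of its $A$-neighbors inside $B\cap A$, because $V\setminus B$ is independent. Thus the side of $u$ in the cut $C$ is forced by the sides that $C$ assigns to the $(B\cap A)$-neighbors of $u$. My plan is therefore to rearrange the DP as follows: enumerate the subset $B' = A\cap B$ of $B$ (which gives only $2^{|B|}=2^{k+1}$ choices instead of the naive $3^{|B|}$ colourings of $B$), check feasibility of this choice (the set $B\setminus B'$ must be independent, and the neighbours of $B\setminus B'$ in $V\setminus B$ must be forcible into $A$), and then compute the modular count of consistently cut subgraphs with fixed $A\cap B = B'$ in polynomial time per $B'$. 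The polynomial-time evaluation uses that contributions of the vertices $u\in V\setminus B$ factor over $u$ given the cut of $B'$, and the sum over cuts of $B'$ can be folded into the enumeration by a subset-convolution-style argument in the spirit of Theorem \ref{thm:fsc} and the join-bag manipulations of Appendix \ref{sec:app:exact-k-leaf}. Polynomial space is achieved with the same ``core evaluation on $B$'' trick as in the proof of Theorem \ref{thm:fvs:fpt}: we fix the chosen $B'$ up front and sweep the path decomposition one bag at a time, storing only the polynomially many current accumulator values.

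The main obstacle I anticipate is formalising this folding of cut-side assignments on both $B'$ and $V\setminus B$ into the $2^{|B|}$ iteration over $B'$ while simultaneously preserving the size and weight accumulators demanded by the Isolation Lemma; this is where the connectivity requirement has to be used carefully to make sure that non-connected candidate vertex covers still cancel modulo two. Once this refined compression subroutine is in place, the outer structure mimics Theorem \ref{thm:fvs:fpt} exactly: we perform $O(n)$ independent repetitions inside each of the at most $n^2+n$ subroutine invocations (the $n$ compression steps plus the constructive loop on top of each) to drive the total one-sided error below $1/2$, and we obtain the claimed $2^k|V|^{O(1)}$ running time in polynomial space.
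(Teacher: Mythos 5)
Your ``fix any ordering'' vertex-addition version of iterative compression is broken for \cvertexcover, and this is not a fixable technicality: unlike \fvs, \cvertexcover is not monotone under taking induced subgraphs. Take $G$ consisting of a vertex $v$ that is adjacent to every vertex of two disjoint paths $a_1\!-\!a_2\!-\!a_3\!-\!a_4\!-\!a_5$ and $b_1\!-\!b_2\!-\!b_3\!-\!b_4\!-\!b_5$, together with the extra edge $a_5b_5$. Then $\{v,a_2,a_4,a_5,b_2,b_4\}$ is a connected vertex cover of $G$ of size $6$, while $G-v$ is the path $a_1\!-\!\cdots\!-\!a_5\!-\!b_5\!-\!\cdots\!-\!b_1$ on ten vertices, whose minimum connected vertex cover (the internal vertices of that path) has size $8$. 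So with $v$ placed last in your ordering, the induction fails at $G_{10}$: no $A_{10}$ of size at most $6$ exists even though $G$ has one, and you give no argument that some clever ordering avoids this. This is exactly why the paper runs iterative compression by \emph{edge contraction} rather than vertex addition, using the easy fact that \cvertexcover is contraction-closed; that choice also hands you a compressed set $B$ with $|B|\le k+2$ whose induced subgraph has one component of size at least $|B|-2$, a property that the time analysis relies on.

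The $3^{|B|}\to 2^{|B|}$ step is also not established, as you yourself flag. Fixing the subset $B'=A\cap B$ does not fix the cut on $B'$: a feasible $B'$ (one with $B\setminus B'$ independent) can easily make $G[B']$ edgeless, so $G[B']$ can have $|B'|$ connected components and hence exponentially many consistent cuts, and the side chosen for each component of $G[B']$ interacts with which vertices $u\in V\setminus B$ may enter $A$ and on which side they must lie. The unspecified ``subset-convolution-style argument'' to fold all of that into a polynomial-time inner loop is the whole content of the improvement, and I do not see how to make it go through. The paper's route is structurally different: it enumerates entire core colourings $\overline{s}\colon B\to\{\mathbf{0},\mathbf{1}_1,\mathbf{1}_2\}$ (subset and cut side together) and prunes them using a rooted spanning tree of the big component of $G[B]$ --- the root and the at most two stray vertices get three choices, and every other tree vertex has only two choices compatible with its parent ($\mathbf{1}_1$ or $\mathbf{1}_2$ if the parent is $\mathbf{0}$, and otherwise $\mathbf{0}$ or the parent's colour), yielding $3^3\cdot 2^{k-1}=O(2^k)$ admissible colourings, each processed in polynomial time and space by one sweep of the path decomposition with the colouring on $B$ frozen. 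That spanning-tree pruning, combined with contraction-based compression, is what actually gives $2^k n^{O(1)}$.
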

\begin{proof}
Firstly observe that \cvertexcover problem is {\emph{contraction}}--closed.
This means that if a graph $H$ admits a connected vertex cover $A$ of size at most $k$, 
then $H'$ obtained from $H$ by contracting an edge of $H$ (and reducing possible multiedges to simple edges) also admits a connected vertex cover $A'$ of size at most $k$. 
Indeed, the contracted edge $uv$ needs to be covered by $A$, so $u\in A$ or $v\in A$. 
Thus we can construct $A'$ by removing $u$ and $v$ from $A$ and adding the vertex obtained from the contracted edge. 
It can be easily seen that $A'$ is a connected vertex cover of $H'$ of size at most $k$.

Therefore, we can consider a sequence of graphs $G_1,G_2,\ldots,G_{n}=G$ ($G$ is the connected graph given in the input), where $G_i$ is obtained from $G_{i+1}$ by contracting any edge and reducing possible multiedges to simple edges, and $G_1$ is a graph composed of a single vertex. 
The argument from the last paragraph ensures that we can proceed as in the proof of Theorem \ref{thm:fvs:fpt}, namely construct connected vertex covers for $G_1,G_2,\ldots,G_{n}$ consecutively, 
and the only thing we have to to show is how to construct a connected vertex cover of size $k$ in $G_{i+1}$ given a connected vertex cover $A_i$ of size $k$ in $G_i$, or determine that none exists.

Let $G_i$ be constructed from $G_{i+1}$ by contracting an edge $uv$. 
We construct $B$ from $A_i$ by removing the vertex obtained in the contraction (if it was contained in $A_i$) and inserting both $u$ and $v$. 
Observe that $B$ is of size at most $k+2$ and it is a vertex cover of $G_{i+1}$. 
As $V(G_{i+1})\setminus B$ is an independent set, then we can construct a path decomposition of $G_{i+1}$ of width at most $k+2$: for every vertex from $V(G_{i+1})\setminus B$ we introduce a bag, connect the bags in any order and then add the set $B$ to every bag.

Now we are going to test whether $G_{i+1}$ admits a connected vertex cover of size at most $k$. 
We could apply the algorithm from Theorem \ref{thm:app:cvc}. 
As in the proof of Theorem \ref{thm:fvs:fpt}, this dynamic programming algorithm during computation of $A_x(i,w,s)$ also refers only to values $A_y(i',w',s')$ for $s'$ such that $s=s'$ on the intersection of domains of $s$ and $s'$. 
Therefore, similarly as before, we would iterate through all possible evaluations $\overline{s}:B\to \{\mathbf{0},\mathbf{1}_1, \mathbf{1}_2\}$, each time computing all the values $A_x(i,W,s)$ such that $s|_B=\overline{s}$ in polynomial time, thus using only polynomial space in the whole algorithm. 
Unfortunately, the algorithm given by Theorem \ref{thm:app:cvc} runs in $3^kn^{O(1)}$ time.

We can, however, reduce the complexity by bounding the number of reasonable evaluations $\overline{s}:B\to \{\mathbf{0},\mathbf{1}_1, \mathbf{1}_2\}$ by $3^3\cdot 2^{k-1}$. 
$B$ induces in $G_{i+1}$ a graph with consisting of a single large connected
component (coming from $A_i$), and at most two additional vertices. 
Take any spanning tree of the large component and root it at some vertex $r$. 
We present the evaluation $\overline{s}$ in the following manner. 
For the root $r$ and the two additional vertices we choose for $\overline{s}$ any value from $\{\mathbf{0},\mathbf{1}_1, \mathbf{1}_2\}$, giving 
$3^3$ choices in total. 
Now consider the rest of the tree (containing all the remaining vertices from $B$) in a top--down manner. 
Observe that every vertex $v$ from the tree has only two possible evaluation, depending on the evaluation of its parent $u$:
\begin{itemize}
\item if $\overline{s}(u) = \mathbf{0}$, the two possible options are $\mathbf{1}_1, \mathbf{1}_2$, as otherwise the edge connecting $v$ with its parent would not be covered;
\item if $\overline{s}(u) = \mathbf{1}_j$, the two possible options are $\mathbf{0}$ and $\mathbf{1}_j$, as otherwise the evaluation $\overline{s}$ would not describe any consistent cut.
\end{itemize}
Thus each of $k+2$ elements of $B$ has only two options, except from the starting $3$, which have $3$ options each. 
This means we only need to consider $3^3\cdot 2^{k-1}$ possible ,,core'' evaluations $\overline{s}$, which yields an algorithm with running time $2^k n^{O(1)}$, using polynomial space. 
As previously, we make $n$ independent runs of the algorithm in order to reduce the probability of a false negative to at most $\frac{1}{2^n}$.

Once we have tested whether $G_{i+1}$ admits a connected vertex cover of size at most $k$, we can construct it explicitly similarly as in the proof of Theorem \ref{thm:fvs:fpt} using the algorithm for \constrainedcvertexcover. 
We consider vertices one by one, each time determining whether the vertex can be inserted into the constructed connected vertex cover by running the algorithm from Theorem \ref{thm:app:cvc} $n$ times. 
Observe that all these runs can be done in $2^kn^{O(1)}$ time and polynomial space complexity using the same technique as in the testing. 
Thus we succeed in constructing $A_{i+1}$ unless at least one of the tests returns a false negative.

The algorithm makes at most $n^2+n$ groups of $n$ independent runs of algorithm from Theorem \ref{thm:app:cvc}. Therefore the probability of a false negative is bounded by $\frac{n^2+n}{2^n}$ which is less than $\frac{1}{2}$ for large enough $n$.
\end{proof}

Finally, we use a similar technique to obtain an algorithm for \cfvs. 
The previously best FPT algorithm is due to Misra et al.~\cite{DBLP:conf/walcom/MisraPRSS10}, and runs in $46.2^kn^{O(1)}$ time complexity.

\begin{theorem}
There exists a Monte-Carlo algorithm solving the \cfvs problem in a graph with
$n$ vertices in $3^k n^{O(1)}$ time and polynomial space.
The algorithm cannot give false positives and may give false negatives with probability at most $1/2$.
\end{theorem}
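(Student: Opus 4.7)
The plan is to follow the iterative compression template of the proofs of Theorems \ref{thm:fvs:fpt} and \ref{thm:cvc:fpt}, employing the edge-contraction variant used for \cvertexcover: a short case analysis (on how many endpoints of the contracted edge lie in the CFVS) shows that \cfvs is closed under edge contraction. This lets us construct a chain $G_1, G_2, \ldots, G_n = G$ where $G_i$ is obtained from $G_{i+1}$ by contracting an edge ($G_1$ being a single vertex), and maintain CFVSes $A_i$ of each $G_i$ of size at most $k$, the nontrivial step being the compression one.

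For the compression step, given $A_i$ and the edge $uv$ uncontracted to produce $G_{i+1}$ (where $u, v \in V(G_{i+1})$ correspond to the merged vertex $w \in V(G_i)$), we form $B \subseteq V(G_{i+1})$ of size at most $k+2$ by replacing $w$ with $\{u,v\}$ in $A_i$ (if $w \in A_i$) or by adding $u, v$ to $A_i$ (otherwise). In both cases $B$ is a FVS of $G_{i+1}$ and $G_{i+1}[B]$ has at most two connected components (one containing the transported bulk of $A_i$, plus possibly a separate component $\{u,v\}$). Since $G_{i+1}[V(G_{i+1}) \setminus B]$ is a forest, a width-$1$ tree decomposition of this forest augmented by $B$ in every bag gives a tree decomposition of $G_{i+1}$ of width at most $k+3$. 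I would then invoke the \constrainedcfvs algorithm of Theorem \ref{thm:app:cfvs}, but restricted to extensions of a fixed core colouring $\overline{s} : B \to \{\zeroone, \zerotwo, \oneone, \onetwo\}$; this decouples the computation across different core colourings and keeps each run in polynomial time and polynomial space, because only the at most two bag vertices outside $B$ may vary their colour, and each DP cell refers only to cells agreeing with it on $B$.

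To bound the number of reasonable core colourings, observe that any edge inside $B$ forbids the colour pairs $\{\zeroone, \zerotwo\}$ (inconsistent cut of the forest side) and $\{\oneone, \onetwo\}$ (inconsistent cut of the solution side). Walking a spanning tree of each connected component of $G_{i+1}[B]$ thus gives $4$ choices for the root and only $3$ valid choices for each subsequent vertex given the colour of its parent, which combined with the at-most-two-component bound yields $O(3^k)$ reasonable core colourings. The explicit construction of $A_{i+1}$ then proceeds exactly as in the proofs of Theorems \ref{thm:fvs:fpt} and \ref{thm:cvc:fpt}: grow $\mustset$ vertex by vertex, each time invoking \constrainedcfvs to test whether the current choice is still extendable to a CFVS of size at most $k$. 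Repeating every invocation $n$ times drives the per-invocation false-negative probability below $1/2^n$, and a union bound over the $O(n^2)$ invocations caps the overall error probability below $1/2$. The main obstacle is precisely the combinatorial bound on reasonable core colourings: the argument must exploit both the cut consistency constraints on edges inside $B$ and the near-connectivity of $B$ to prune the naive $4^{k+2}$ state space to $O(3^k)$, which is what ultimately pins the base of the exponent at $3$ rather than $4$.
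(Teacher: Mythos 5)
Your proposal is correct and follows essentially the same route as the paper: iterative compression via edge contraction, a width-$(k+3)$ tree decomposition built from the FVS $B$ of size at most $k+2$, restriction of the \constrainedcfvs dynamic programming to extensions of a fixed core colouring $\overline{s}$ on $B$, and a spanning-tree argument exploiting consistency of both cuts to prune the $4^{|B|}$ naive count down to $O(3^k)$. The only cosmetic difference is how the combinatorics is phrased — you count roots for the at-most-two components of $G_{i+1}[B]$ (giving $4^2\cdot 3^k$), while the paper treats $B$ as one large connected component plus up to two additional unconstrained vertices (giving $4^3\cdot 3^{k-1}$); both yield the same $3^k n^{O(1)}$ bound.
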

\begin{proof}
Similarly as in the proof of Theorem \ref{thm:cvc:fpt}, the \cfvs problem is also {\emph{contraction}}--closed. Consider any graph $H$ and obtain $H'$ by contracting an edge $uv$ into a vertex $w$. 
Consider a connected feedback vertex set $A$ of a graph $H$ of size at most $k$ and construct a set $A'\subseteq V(H')$ as following:
\begin{itemize}
\item if $u,v\notin A$ then $A'=A$;
\item otherwise $A'=(A\cup \{w\})\setminus \{u,v\}$.
\end{itemize}
It can be easily seen that $A'$ is a connected feedback vertex set of $H'$ of size at most $k$.

This observation enables us to use iterative compression approach, similarly as in the proof of Theorem \ref{thm:cvc:fpt}. 
Namely we consider a sequence of graphs $G_1,G_2,\ldots,G_{n}=G$ ($G$ is the connected graph given in the input), where $G_i$ is obtained from $G_{i+1}$ by contracting any edge and reducing possible multiedges to simple edges. 
For every $G_i$ we try to construct a connected feedback vertex set $A_i$ in a consecutive manner and if at any step we fail, we can safely answer NO. 
Thus we need to show a way of constructing a connected feedback vertex set of size $k$ in $G_{i+1}$ given a connected feedback vertex set $A_i$ of size $k$ in $G_i$, or determining that none exists.

Let $G_i$ be constructed from $G_{i+1}$ by contracting an edge $uv$. 
We construct $B$ from $A_i$ by removing the vertex obtained in the contraction (if it was included in $A_i$) and inserting both $u$ and $v$. 
Observe that $B$ is a feedback vertex set of $G_{i+1}$ of size at most $k+2$ containing a connected component of size at least $|B|-2$. 
Therefore, we can construct a tree decomposition of graph $G_{i+1}$ of width $k+3$ by constructing the tree decomposition of width $1$ of the forest $G_{i+1}\setminus B$ and including $B$ into every bag.

Now we are going to test whether $G_{i+1}$ admits a connected feedback vertex set of size at most $k$. 
A straightforward application of the algorithm from Theorem \ref{thm:app:cfvs} would yield an algorithm with running time $4^kn^{O(1)}$. 
This algorithm once again has the property of referring only to previously computed values with the same evaluation on the intersection of the domains, so we can also apply the method already used in proofs of Theorems \ref{thm:fvs:fpt} and \ref{thm:cvc:fpt} to reduce the space usage to polynomial.

Once again, using the special structure of the set $B$ we can also reduce the time complexity down to $3^kn^{O(1)}$ by bounding the number of reasonable evaluations $\overline{s}:B\to \{\mathbf{0}_1, \mathbf{0}_2,\mathbf{1}_1, \mathbf{1}_2\}$ by $4^3 3^{k-1}$. 
$B$ is a graph consisting of a large connected component and at most two additional vertices. 
Take any spanning tree of the connected component and root it in a vertex $r$. 
Each reasonable evaluation $\overline{s}$ can be coded in the following manner: vertex $r$ and the two possible additional vertices have $4$ possibilities of the value in $\overline{s}$, but every other vertex in the tree has only three possibilities, depending on the value $\overline{s}(u)$, where
$u$ denotes the parent of $v$:
\begin{itemize}
\item if $\overline{s}(u) = \mathbf{0}_j$, the possibilities are $\mathbf{1}_1$, $\mathbf{1}_2$ and $\mathbf{0}_j$;
\item if $\overline{s}(u) = \mathbf{1}_j$, the possibilities are $\mathbf{0}_1$, $\mathbf{0}_2$ and $\mathbf{1}_j$;
\end{itemize}
as otherwise the cut could not be consistent. 
Thus every vertex from $B$ has only $3$ possibilities, apart from at most $3$, which have $4$ possibilities. 
So the number of reasonable evaluations $\overline{s}$ is bounded by $4^33^{k-1}$, thus the testing algorithm runs in time complexity $3^kn^{O(1)}$ and uses polynomial space. 
Again we make $n$ independent runs of the algorithm in order to reduce the probability of a false negative to at most $\frac{1}{2^n}$.

The idea of reconstructing the solution is the same as in the proofs of Theorems \ref{thm:fvs:fpt} and \ref{thm:cvc:fpt}. 
We consider vertices one by one iteratively constructing a connected feedback vertex set. 
At each step we determine whether the considered vertex can or cannot be taken as the next vertex of the so far built part of the solution, using the algorithm for \constrainedcfvs obtained in Theorem \ref{thm:app:cfvs}. 
If it can, we take it, otherwise we just proceed to the next vertex. 
At each step we make $n$ independent runs to reduce the probability of a false negative to at most $\frac{1}{2^n}$. 
If the graph admitted a connected feedback vertex set of size at most $k$, we will construct it in this manner unless at least one test gives a false negative. 
Again, using previous observations the computation in each of the runs can be reordered so that the running time is $3^kn^{O(1)}$ and the space usage is polynomial.

Again, the union bound proves that the probability of obtaining a false negative in any of the tests is bounded by $\frac{n^2+n}{2^n}$ which for large enough $n$ is lower than $\frac{1}{2}$, as we make at most $n^2+n$ groups of $n$ independent runs of the algorithm from Theorem \ref{thm:app:cfvs}.
\end{proof}

\section{Negative results under ETH}\label{sec:negatives:eth}

In this section we provide an evidence that the problems,
   where we want not to minimize, but to maximize
the number of connected components, are harder, in the sense, that they probably
do not admit algorithms running in time $2^{o(p \log p)} n^{O(1)}$, where
$p$ denotes the pathwidth of the input graph.
More precisely, we show that assuming ETH there do not exist
algorithms for \cyclepackingname{}, \maxcyclecovername{} and \maxccdsname{}
running in time $2^{o(p \log p)} n^{O(1)}$.

Let us recall formal problem definitions. The first two problems have undirected and directed
versions.

\defproblemu{\cyclepackingname{}}{A (directed or undirected) graph $G = (V,E)$ and an integer $\ell$}{Does $G$ contain $\ell$ vertex-disjoint cycles?}

\defproblemu{\maxcyclecovername{}}{A (directed or undirected) graph $G = (V,E)$ and an integer $\ell$}{Does $G$ contain a set of at least $\ell$ vertex-disjoint cycles such that each vertex of $G$ is contained in exactly one cycle?}

\defproblemu{\maxccdsname{}}{An undirected graph $G=(V,E)$ and integers $\ell$ and $r$}{Does $G$ contain a dominating set of size at most $\ell$ that induces {\bf{at least}} $r$ connected components?}

We prove the following theorem.
\begin{theorem}[Theorem \ref{thm:negative-main} restated]\label{thm:negative-main-restate}
Assuming ETH, there is no $2^{o(p \log p)} n^{O(1)}$ time algorithm for \cyclepackingname{}, \maxcyclecovername{} (both in the directed and undirected setting) nor for \maxccdsname{}. 
The parameter $p$ denotes the width of a given path decomposition of the input graph.
\end{theorem}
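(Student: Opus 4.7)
The plan is to follow the framework of Lokshtanov, Marx, and Saurabh, reducing from the $k\times k$ problems of Theorem \ref{thm:marx-hs}. The essential property that every reduction must maintain is that the produced graph admits a path decomposition of width $O(k)$; then any algorithm of running time $2^{o(p\log p)} n^{O(1)}$ for the target problem would yield a $2^{o(k\log k)} n^{O(1)}$ algorithm for \kknphittingsetname{} or \kkhittingsetname{}, contradicting ETH. Concretely, I would give three reductions and derive the other two cases from elementary gadget surgery.

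For \maxccdsname{}, I would start from \kknphittingsetname{}. For each row $i$ of $[k]\times[k]$ introduce a row gadget containing the cell vertices $(i,1),\ldots,(i,k)$ together with private ``forcing'' pendants that make any dominating set of bounded size pick exactly one cell per row (this is the standard selection gadget). For each input set $S_j$ create a pendant vertex $s_j$ whose neighbourhood is exactly the cells of $S_j$. The goal is a dominating set of size at most $\ell=k+O(1)$ inducing at least $r$ connected components, tuned so that the maximum is attained iff every $s_j$ is dominated, i.e.\ iff the chosen cells form a hitting set; each dominated $s_j$ together with the unique cell dominating it contributes a fresh component when we tune the graph so that the row gadgets' chosen cells are isolated from one another. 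The path decomposition processes rows one at a time with a constant-sized interface and touches each set-pendant independently, so its width is $O(k)$.

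For \cyclepackingname{}, I would reduce from \kkhittingsetname{}. The graph is built around $k$ row ``backbones'', each a long path whose local structure encodes the choice of a column for that row; cells $(i,c)$ sit along the $i$-th backbone. For every set $S_j$ add a checker gadget that forces a cycle to thread through cells which, combined over rows, form a valid transversal, and include row-switching gadgets which allow a cycle to be closed only if exactly one cell is selected per row. With an appropriate choice of the target cycle count, a packing of the maximum number of cycles is possible iff the selected cells constitute a permutation hitting set. The pathwidth is bounded by $O(k)$ because the backbones can be laid out in parallel and scanned column-by-column, keeping only $O(k)$ "active" vertices in every bag at once. This step is the main obstacle: one must simultaneously (a) force any near-maximum packing to pick one cell per row and per column, (b) forbid ``cheating'' cycles that avoid the checker gadgets, and (c) keep all information flow through interfaces of width $O(k)$ arranged linearly, which typically requires careful bookkeeping with auxiliary gadgets and a counting argument to determine the exact packing value.

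Once undirected \cyclepackingname{} is handled, the remaining cases follow by small local transformations that preserve pathwidth up to a constant factor. For directed \cyclepackingname{}, replace each undirected edge by a gadget enforcing a single direction of traversal (e.g.\ two antiparallel arcs guarded by triangle gadgets whose only vertex-disjoint cycle usage is consistent with one direction). For both versions of \maxcyclecovername{}, pad the packing instance with disjoint (directed) triangles to force any near-maximum cycle cover to saturate every vertex while preserving the exact correspondence with the packing solutions: the number of additional cycles contributed by the padding is fixed, so the optimum cycle-cover value encodes the optimum packing value. Each such transformation adds only constantly many vertices per edge or per gadget, so the pathwidth bound $O(k)$ is preserved and Theorem \ref{thm:negative-main-restate} follows.
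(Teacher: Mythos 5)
Your framework — reduce from $k \times k$ (Permutation) Hitting Set to get pathwidth $O(k)$, then derive directed \cyclepackingname{} and both versions of \maxcyclecovername{} from undirected \cyclepackingname{} by local surgery — is exactly the paper's plan, and your sketches for the derived cases (orientation gadgets to kill $2$-cycles; padding each vertex with a private triangle-like gadget so a cycle cover can ``excuse'' vertices and thereby simulate a packing) match what the paper actually does. So the chain structure is sound.

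The two primary reductions, however, have genuine gaps as sketched. For \maxccdsname{}, your component-counting mechanism does not parse: the set-pendants $s_j$ lie \emph{outside} the dominating set $D$, so they cannot ``together with the unique cell dominating them contribute a fresh component'' of $G[D]$, and with a budget of $\ell = k + O(1)$ and isolated chosen cells the component count would be $k$ regardless of whether the $s_j$'s are hit — you would only be enforcing plain \domset{}, which has a $3^p n^{O(1)}$ algorithm and hence no $2^{o(p\log p)}$ lower bound. The paper instead keeps \emph{both} row ports $p_i^L$ and column ports $p_j^R$ forced into $D$, connected through cell vertices $x_{i,j}^A$ (one per set $A$ ranging over both the rows and the input sets $S_s$), uses $\ell = 3k+m$ and $r = k$, and lets the $r = k$ requirement do the real work: exactly $k$ components is achievable iff all the selected cells for the various sets $A$ align on a single row-choice, which forces consistency between the row selection and the set-hitting selection. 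Your sketch does not encode that consistency check into the component count. For undirected \cyclepackingname{}, you explicitly defer the construction (``this step is the main obstacle'') and give only a high-level wishlist; the paper's actual gadgetry (the $r$-in-many gadget of Definition~\ref{def:$r$-in-many}, the parallel $L$/$R$ paths $\mathcal{P}_i^L$, $\mathcal{P}_i^R$ with row and column selector bipartite blocks $X_i^p$, $X_i^q$, and the per-set ``detour'' triangles on $x_{i,s}^Z, y_{i,s}^Z, z_{i,s}^C$) is substantial and is the heart of the proof, not an optional refinement. Without a concrete design that provably forces one cell per row \emph{and} per column while keeping the path decomposition of width $O(k)$, the reduction — and everything downstream — is not established.
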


We start our reductions from \kknphittingsetname{} and \kkhittingsetname{} problems.
As we discussed in Section \ref{sec:eth:intro}, the non-permutation version was introduced and analyzed
by Lokshtanov et al. \cite{marx:superexp}. We denote $[k] = \{1,2,\ldots,k\}$. In the set $[k] \times [k]$ {\em{a row}} is a set $\{i\} \times [k]$ and {\em{a column}} is
a set $[k] \times \{i\}$ (for some $i \in [k]$). We include formal definitions for sake
of completeness.

\defproblemu{\kknphittingsetname}{A family of sets $S_1, S_2 \ldots S_m \subseteq [k] \times [k]$,
  such that each set contains at most one element from each row of $[k] \times [k]$.}{
    Is there a set $S$ containing exactly one element from each row
      such that $S \cap S_i \neq \emptyset$ for any $1 \leq i \leq m$?}
\defproblemu{\kkhittingsetname}{A family of sets $S_1, S_2 \ldots S_m \subseteq [k] \times [k]$,
  such that each set contains at most one element from each row of $[k] \times [k]$.}{
    Is there a set $S$ containing exactly one element from each row and exactly one element from each column
      such that
      $S \cap S_i \neq \emptyset$ for any $1 \leq i \leq m$?}

\begin{theorem}[\cite{marx:superexp}, Theorem 2.4]\label{thm:marx-hs:app}
Assuming ETH, there is no $2^{o(k \log k)} n^{O(1)}$ time algorithm for \kknphittingsetname{}
nor for \kkhittingsetname{}.
\end{theorem}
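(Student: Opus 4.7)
The plan is to derive both lower bounds from ETH by reducing $3$-SAT on $n$ variables to \kknphittingsetname{} and \kkhittingsetname{} instances with parameter $k$ satisfying $k\log k = \Theta(n)$, that is, $k = \Theta(n/\log n)$. A hypothetical $2^{o(k\log k)} n^{O(1)}$ algorithm for either problem would then solve $3$-SAT in $2^{o(n)}$ time, contradicting ETH. I would start by invoking the Sparsification Lemma of Impagliazzo, Paturi, and Zane so that the input $3$-SAT formula may be assumed to have only $O(n)$ clauses; this prevents the eventual set count from blowing up beyond $n^{O(1)}$.

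Next I would partition the $n$ Boolean variables into $k$ groups of $\log k$ each (padding with dummy variables if needed). Row $i \in [k]$ of the universe $[k]\times[k]$ then represents group $i$, and its $k = 2^{\log k}$ elements enumerate the possible Boolean assignments to the variables in that group. A selection of exactly one element per row encodes a full Boolean assignment of the formula, so the task reduces to constructing a family of sets --- each containing at most one element from each row --- whose simultaneous hitting is equivalent to the encoded assignment satisfying the formula. The main technical hurdle is encoding the clauses under the at-most-one-per-row restriction: a direct attempt to express the clause $\ell_1\vee\ell_2\vee\ell_3$ whose variables live in three distinct groups $g_1,g_2,g_3$ as a single set containing the $k/2$ ``good'' (group, assignment) pairs per literal fails immediately, as it would place $k/2$ elements in each of three rows. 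The way around --- which is precisely the gadget construction of \cite{marx:superexp} --- is to add $O(k)$ auxiliary witness rows whose chosen values indicate which literal of each clause is to witness satisfaction, and to express the clause semantics via small sets (each with at most one element per row) linking every witness value to the corresponding consistent group assignment. Care is needed to amortize witness rows across clauses so that the total row count stays $\Theta(k)$; the $O(n)$ clause bound coming from the Sparsification Lemma is precisely what makes this possible.

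For the permutation variant \kkhittingsetname{}, the same construction will go through after padding with dummy rows and columns (keeping the overall matrix $\Theta(k) \times \Theta(k)$) and adding single-element sets together with small structural gadgets that force the selection to be a permutation. As noted in the remark immediately following the theorem statement, the construction of \cite{marx:superexp} already yields sets compatible with column uniqueness, so the bound for the permutation version transfers without substantive modification. The hardest part of the whole proof is the witness-row encoding of three-group clauses under the row restriction; once that gadget is in place, the remaining steps --- sparsification, grouping, and the permutation adaptation --- are routine bookkeeping.
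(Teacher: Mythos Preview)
The paper does not prove this theorem: it is quoted verbatim as Theorem~2.4 of \cite{marx:superexp} and used as a black box. The only thing the present paper adds is the remark, stated right after the theorem, that the construction of \cite{marx:superexp}---although phrased only for \kknphittingsetname{}---already yields the bound for \kkhittingsetname{} without modification. So there is no ``paper's own proof'' to compare your sketch against; what you have written is an attempted reconstruction of the argument in \cite{marx:superexp}, not of anything in this paper.

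As a reconstruction, your outline captures the right shape (sparsify, group variables so that $k\log k=\Theta(n)$, let rows index groups and columns index partial assignments, then encode clause satisfaction by sets), and you correctly identify the real obstacle: a clause touching three groups corresponds to a disjunction of $\Theta(k)$ elements spread across three rows, which cannot be written as a single admissible set. Your proposed fix---auxiliary witness rows amortized across the $O(n)=O(k\log k)$ clauses so that only $O(k)$ extra rows are needed---is exactly the delicate part, and your sketch leaves it at the level of a promise. If you want to turn this into a self-contained proof you will have to actually exhibit that amortization (or an alternative gadget); as written it is a plausible plan but not yet an argument. For the purposes of the present paper, however, none of this is needed: citing \cite{marx:superexp} and observing that the same construction gives the permutation version is all that is claimed.
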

Note that in \cite{marx:superexp} the statement of the above theorem only includes \kknphittingsetname{}.
However, the proof in \cite{marx:superexp} works for the permutation variant as well without any modifications.

We first prove the bound for \maxccdsname{} by quite simple reduction from \kknphittingsetname{}.
This is done in Section \ref{sec:eth:maxccds}.
Then, in Section \ref{sec:eth:ucycles} we prove the bound for undirected \cyclepackingname{},
  by quite involved reduction from
\kkhittingsetname{}. In Section \ref{sec:neg:undir2dir}
we provide a reduction to directed \cyclepackingname{} and in Section \ref{sec:neg:packing2cover}
we provide a reduction to \maxcyclecovername{} in both variants.

\subsection{\maxccdsname{}}\label{sec:eth:maxccds}

In this subsection we provide a reduction from \kknphittingsetname{} to \maxccdsname{}.
We are given an instance $(k, S_1, \ldots, S_m)$ of \kknphittingsetname{},
called the initial instance,
and we are to construct an equivalent instance $(G,\ell,r)$ of \maxccdsname{}.

We first set $\ell := 3k+m$ and $r := k$.

\subsubsection{Gadgets}

We introduce a few simple gadgets used repeatedly in the construction.
In all definitions $H=(V,E)$ is an undirected graph, and the parameters $\ell$ and $r$ are fixed.
\begin{definition}
 By adding a {\em{force gadget}} for vertex $v \in V$ we mean the following construction:
 we introduce $\ell+1$ new vertices of degree one, connected to $v$.
\end{definition}
\begin{lemma}
 If graph $G$ is constructed from graph $H=(V,E)$ by adding a force gadget to vertex $v \in V$,
 then $v$ is contained in each dominating set in $G$ of size at most $\ell$.
\end{lemma}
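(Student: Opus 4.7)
The plan is to argue by contradiction, exploiting the fact that the $\ell+1$ newly added vertices are pendants whose only neighbour is $v$. Suppose $D \subseteq V(G)$ is a dominating set of $G$ with $|D| \le \ell$ and $v \notin D$. Let $u_1,\ldots,u_{\ell+1}$ denote the vertices introduced by the force gadget for $v$. Each $u_i$ has degree exactly one in $G$, with unique neighbour $v$, so the only vertices that can dominate $u_i$ are $u_i$ itself and $v$. Since $v \notin D$, we must have $u_i \in D$ for every $i \in \{1,\ldots,\ell+1\}$.

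This immediately gives $|D| \ge \ell+1$, contradicting the assumption $|D| \le \ell$. Hence $v \in D$, as claimed.

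The step that deserves any real attention is simply the observation that the pendants cannot be dominated by anyone but themselves or $v$, which is immediate from the construction; there is no genuine obstacle in this lemma. The gadget is essentially the standard trick to \emph{force} a vertex into any small dominating set, and the proof above is the one-line argument that justifies its name.
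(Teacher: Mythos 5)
Your proof is correct and is essentially identical to the paper's argument: if $v\notin D$, the $\ell+1$ pendant vertices can only be dominated by themselves, forcing $|D|\ge\ell+1$. No substantive difference.
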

\begin{proof}
 If $D$ is a dominating set in $G$, and $v \notin D$, then all new vertices added in the force gadget
 need to be included in $D$. Thus $|D| \geq \ell+1$.
\end{proof}

\begin{definition}
  By adding a {\em{one-in-many gadget}} to vertex set $X \subseteq V$ we mean the following construction:
  we introducte $\ell+1$ new vertices of degree $|X|$, connected to all vertices in $X$.
\end{definition}
\begin{lemma}\label{lem:maxccds:gadget2}
 If graph $G$ is constructed from graph $H=(V,E)$ by adding a one-in-many gadget to vertex set $X \subseteq V$,
 then each dominating set in $G$ of size at most $\ell$ contains a vertex from $X$.
\end{lemma}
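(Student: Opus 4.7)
The plan is to argue by contradiction, mirroring the structure of the preceding lemma about the force gadget. Suppose $D$ is a dominating set in $G$ with $|D| \leq \ell$ and $D \cap X = \emptyset$. Let $u_1, \ldots, u_{\ell+1}$ denote the $\ell+1$ new vertices introduced by the one-in-many gadget; by construction, the open neighbourhood of each $u_i$ in $G$ is exactly $X$ (the new vertices are not adjacent to each other, nor to anything outside $X$).

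The key observation will be that since $D \cap X = \emptyset$, each $u_i$ cannot be dominated by any of its neighbours in $G$, so the only way for $u_i$ to be dominated is $u_i \in D$. This forces $\{u_1, \ldots, u_{\ell+1}\} \subseteq D$, which gives $|D| \geq \ell+1$, contradicting $|D| \leq \ell$.

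I expect no genuine obstacle here — the argument is essentially a two-line counting argument that parallels the proof already given for the force gadget. The only thing worth stating explicitly is that the new vertices form an independent set (so they cannot dominate each other), which is immediate from the definition of the gadget. The proof should fit in a short paragraph.
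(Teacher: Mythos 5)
Your proof is correct and takes exactly the same approach as the paper: if $D \cap X = \emptyset$, every new gadget vertex must belong to $D$, forcing $|D| \geq \ell+1$. Noting explicitly that the gadget vertices form an independent set is a slight refinement, but the substance is identical.
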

\begin{proof}
 If $D$ is a dominating set in $G$, and $X \cap D = \emptyset$, then all new vertices added in the one-in-many gadget
 need to be included in $D$. Thus $|D| \geq \ell+1$.
\end{proof}

We conclude with the pathwidth bound.
\begin{lemma}\label{lem:maxccds:gadget-pw}
 Let $G$ be a graph and let $G'$ be a graph constructed from $G$ by adding multiple force and one-in-many
 gadgets. Assume we are given a path decomposition of $G$ of width $p$ with the following property: for each one-in-many gadget,
 attached to vertex set $X$, there exists a bag in the path decomposition that contains $X$.
 Then, in polynomial time, we can construct a path decomposition of $G'$ of width at most $p+1$.
\end{lemma}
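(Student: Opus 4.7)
The plan is to transform the given path decomposition $\mathbb{P}$ of $G$ into one of $G'$ by carefully inserting one extra bag for each newly added vertex, each of size at most $p+2$.

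First I would handle the one-in-many gadgets. For a gadget attached to some set $X \subseteq V(G)$, by assumption there is a bag $B$ in $\mathbb{P}$ with $X \subseteq B$. For each of the $\ell+1$ new vertices $w_1,\ldots,w_{\ell+1}$ associated with this gadget, I will insert a bag $B \cup \{w_j\}$ immediately after $B$ (the $\ell+1$ new bags appearing consecutively in any order). Each inserted bag has size at most $|B|+1 \leq p+2$, so width $\leq p+1$. The edges $w_jx$ for $x \in X$ are covered because $\{w_j\} \cup X \subseteq B \cup \{w_j\}$, each $w_j$ appears in exactly one bag so its occurrences trivially form a contiguous interval, and the occurrences of any vertex of $B$ remain contiguous because all inserted bags between the original copy of $B$ and its next neighbour in $\mathbb{P}$ are supersets of $B$.

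Next I would handle the force gadgets in exactly the same way. For a force gadget attached to $v$, pick any bag $B$ of the current decomposition containing $v$ (such a bag exists since $v \in V(G)$). For each of the $\ell+1$ pendant vertices $w_j$, insert $B \cup \{w_j\}$ right after $B$. The verification is identical to the previous case: the bag sizes stay at most $p+2$, the edge $vw_j$ is covered, and contiguity is preserved.

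Since the gadgets are vertex-disjoint from each other (each gadget introduces fresh vertices that are used nowhere else), the insertions for different gadgets do not interfere: after processing a gadget, every pre-existing bag $B$ still appears as a bag in the new decomposition (possibly with some inserted supersets of it nearby), so the assumption ``there is a bag containing $X$'' remains valid for subsequent one-in-many gadgets. The total number of insertions is $O(|V(G')|)$, so the construction runs in polynomial time, and its width is at most $p+1$, completing the plan. There is no real obstacle here; the only thing requiring care is to insert $B \cup \{w_j\}$ rather than a small bag like $\{v,w_j\}$, so that the contiguity of the intervals of vertices in $B$ is automatically preserved.
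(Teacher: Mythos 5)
Your proposal is correct and is essentially the same as the paper's proof: both identify, for each new vertex $w$, a bag $B$ containing $N_{G'}(w)$ (which exists by hypothesis for one-in-many gadgets and trivially for force gadgets), insert $B\cup\{w\}$ immediately after $B$, and observe that all properties of a path decomposition are preserved with width increasing by at most one.
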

\begin{proof}
 Let $w$ be a vertex in $G'$, but not in $G$, i.e., a vertex added in one of the gadgets. By the assumptions of the lemma,
 there exists a bag $V_w$ in the path decomposition of $G$ that contains $N(w)$. For each such vertex $w$, we introduce a new bag $V_w' = V_w \cup \{w\}$
 and we insert it into the path decomposition after the bag $V_w$. If $V_w$ is multiplied for many vertices $w$, we insert all the new bags
 after $V_w$ in an arbitrary order.

 It is easy to see that the new path decomposition is a proper path decomposition of $G'$, as $V_w'$ covers all edges incident to $w$.
 Moreover, we increased the maximum size of bags by at most one, thus the width of the new decomposition is at most $p+1$.
\end{proof}

\subsubsection{Construction}

Let $S_i^{\texttt{row}} = \{i\} \times [k]$ be a set containing all elements in the $i$-th row
in the set $[k] \times [k]$. We denote $\mathcal{S} = \{S_s: 1 \leq s \leq m\} \cup \{S_i^{\texttt{row}}: 1 \leq i \leq k\}$. Note that for each $A \in \mathcal{S}$ we have $|A| \leq k$, as each set
$S_i$ contains at most one element from each row.

First let us define the graph $H$.
We start by introducing vertices $p_i^L$ for $1 \leq i \leq k$
and vertices $p_j^R$ for $1 \leq j \leq k$.
Then, for each set $A \in \mathcal{S}$ we introduce vertices $x_{i,j}^A$ for all $(i,j) \in A$
and edges $p_i^Lx_{i,j}^A$ and $p_j^Rx_{i,j}^A$. Let $X^A = \{x_{i,j}^A: (i,j) \in A\}$.

To construct graph $G$, we attach the following gadgets to graph $H$.
For each $1 \leq i \leq k$ and $1 \leq j \leq k$ we attach force gadgets to vertices $p_i^L$ and $p_j^R$.
Moreover, for each $A \in \mathcal{S}$ we attach one-in-many gadget to the set $X^A$.

We now provide a pathwidth bound on the graph $G$.
\begin{lemma}
 The pathwidth of $G$ is at most $3k$.
\end{lemma}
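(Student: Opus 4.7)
The plan is to first construct a path decomposition of the core graph $H$ that keeps all $2k$ ``pole'' vertices $\{p_i^L\} \cup \{p_j^R\}$ in every bag, and then iterates through the sets $A \in \mathcal{S}$, one at a time, adding the at most $k$ vertices of $X^A$ for each set. Once this is done for $H$, the gadgets can be appended using Lemma \ref{lem:maxccds:gadget-pw}, which costs only an additive $+1$ to the width.

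More precisely, I would order the sets in $\mathcal{S}$ arbitrarily as $A_1, A_2, \ldots, A_{|\mathcal{S}|}$ and produce a path decomposition of $H$ consisting of one bag $V_t := \{p_i^L : 1 \leq i \leq k\} \cup \{p_j^R : 1 \leq j \leq k\} \cup X^{A_t}$ for each $t$, connected in sequence. Since each set in $\mathcal{S}$ contains at most one element from each row, we have $|X^{A_t}| \leq k$, and hence $|V_t| \leq 3k$. It is immediate that every edge of $H$ — which is always of the form $p_i^L x_{i,j}^A$ or $p_j^R x_{i,j}^A$ — is contained in some $V_t$ (namely the one corresponding to $A$). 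The connectivity condition for the path decomposition holds because the pole vertices appear in all bags, and each $x_{i,j}^A$ appears only in the single bag $V_t$ with $A = A_t$. Thus $H$ admits a path decomposition of width at most $3k-1$.

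Now observe that every gadget added in the construction of $G$ attaches to a neighborhood that is already covered by a single bag of the above decomposition: a force gadget is attached to a single pole vertex $p_i^L$ or $p_j^R$ (present in every bag), and a one-in-many gadget for the set $X^{A_t}$ is attached to vertices that are all contained in $V_t$. Consequently, the hypothesis of Lemma \ref{lem:maxccds:gadget-pw} is satisfied, and applying it yields a path decomposition of $G$ of width at most $(3k-1) + 1 = 3k$, as required. The main obstacle, which is easily overcome once the structure is laid out this way, is simply recognizing that keeping the pole vertices permanently in every bag is cheap ($2k$ vertices) and allows the $X^{A_t}$-part of each bag to rotate freely as we traverse $\mathcal{S}$.
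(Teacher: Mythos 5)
Your proof is correct and follows essentially the same approach as the paper: both construct the path decomposition of $H$ by keeping all $2k$ pole vertices in every bag and rotating through the sets $X^A$ for $A \in \mathcal{S}$, then invoke Lemma~\ref{lem:maxccds:gadget-pw} to attach the gadgets at an additive cost of $+1$ to the width.
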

\begin{proof}
 First consider the following path decomposition of $H$.
 For each $A \in \mathcal{S}$ we create a bag
 $$V_A = \{p_i^L : 1 \leq i \leq k\} \cup \{p_j^R: 1 \leq j \leq k\} \cup \{x_{i,j}^A: (i,j) \in A\}.$$
 The path decomposition of $H$ consists of all bags $V_A$ for $A \in \mathcal{S}$ in an
 arbitrary order. Note that the above path decomposition is a proper path decomposition
 of $H$ of width at most $3k-1$ (as $|A| \leq k$ for each $A \in \mathcal{S}$)
 and it satisfies conditions for Lemma \ref{lem:maxccds:gadget-pw}.
\end{proof}

\subsubsection{From hitting set to dominating set}

\begin{lemma}
 If the initial \kknphittingsetname{} instance was a YES-instance, then there exists a dominating set $D$ in the graph $G$,
 such that $|D| = \ell$ and $D$ induces exactly $r$ connected components.
\end{lemma}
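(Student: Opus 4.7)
The plan is to exhibit an explicit dominating set $D$ witnessing both the cardinality bound $|D| = \ell = 3k+m$ and the component count $r = k$, by reading off $D$ from the hitting set.

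Let $S \subseteq [k] \times [k]$ be a witness for the initial instance: $S$ contains exactly one element $(i, \sigma(i))$ from each row $i$, and $S \cap S_s \neq \emptyset$ for every $1 \leq s \leq m$. I would define
\[
D = \{p_i^L : 1 \leq i \leq k\} \cup \{p_j^R : 1 \leq j \leq k\} \cup \{x^{S_i^{\texttt{row}}}_{i, \sigma(i)} : 1 \leq i \leq k\} \cup \{x^{S_s}_{i_s, j_s} : 1 \leq s \leq m\},
\]
where for each $s$ I pick an arbitrary $(i_s, j_s) \in S \cap S_s$ (so in particular $j_s = \sigma(i_s)$). A direct count gives $|D| = k + k + k + m = 3k+m = \ell$.

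Next I would verify that $D$ dominates $G$: the degree-one vertices from each force gadget at $p_i^L$ or $p_j^R$ are dominated since those $p$-vertices lie in $D$; for each $A \in \mathcal{S}$ the one-in-many gadget vertices attached to $X^A$ are dominated because $D$ contains some $x^A_{\cdot,\cdot} \in X^A$ (by construction for $A = S_i^{\texttt{row}}$, and via the choice $(i_s, j_s) \in S_s$ for $A = S_s$); and every $x^A_{i,j} \notin D$ is dominated by its neighbor $p_i^L \in D$.

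The main content is counting components in $G[D]$. In $H$ the $x$-vertices have only $p$-vertices as neighbors, and the one-in-many/force gadget vertices are not in $D$, so the only edges of $G[D]$ go between a chosen $x$-vertex and its two $p$-neighbors. Every $x$-vertex in $D$ has the form $x^A_{i,\sigma(i)}$ (for $A$ being either $S_i^{\texttt{row}}$ or some $S_s$ with $(i, \sigma(i)) \in S_s$), so it joins $p_i^L$ to $p_{\sigma(i)}^R$. Consequently the component structure of $G[D]$ is isomorphic to that of the bipartite graph $B$ on $\{L_1,\dots,L_k\} \cup \{R_1,\dots,R_k\}$ with edge set $\{L_i R_{\sigma(i)} : 1 \leq i \leq k\}$, possibly with some edges duplicated (which does not affect connectivity). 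In $B$ each nonempty fiber $\sigma^{-1}(j)$ together with $R_j$ forms one component, and each $j \notin \sigma([k])$ gives an isolated $R_j$, so the total is $|\sigma([k])| + (k - |\sigma([k])|) = k$ components. Hence $G[D]$ induces exactly $r=k$ connected components, which completes the proof. The only subtle step, and the one I would write out carefully, is this bipartite-component count; everything else is a routine verification against the gadget lemmas already established.
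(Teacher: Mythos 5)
Your proof is correct and follows essentially the same route as the paper: you construct the same dominating set $D$, perform the same domination check, and count components by grouping vertices according to the column index $j$. The paper writes this grouping out explicitly as a partition $D = D_1 \cup \dots \cup D_k$ with $D_j$ built around $p_j^R$, whereas you contract the degree-two $x$-vertices to reduce to the bipartite multigraph $B$ and count its components; both are the same observation in slightly different packaging, and your note that the count is $k$ regardless of whether $\sigma$ is surjective is exactly the subtle point the paper handles by allowing $D_j = \{p_j^R\}$ to be a singleton.
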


\begin{proof}
 Let $S$ be a solution to the initial \kknphittingsetname{} instance $(k,S_1, \ldots, S_m)$.
 For each $A \in \mathcal{S}$ fix an element $(i_A,j_A) \in S \cap A$.
 Recall that $S$ contains exactly one element from each row, thus $S \cap A \neq \emptyset$
  for all sets $A \in \mathcal{S}$.
 Let us define:
 $$D = \{p_i^L: 1 \leq i \leq k\} \cup \{p_j^R: 1 \leq j \leq k\} \cup \{x_{i_A,j_A}^A: A \in \mathcal{S}\}.$$
 
 First note that $|D| = 3k+m$, as there are $k$ vertices $p_i^L$, $k$ vertices $p_j^R$,
 and $|\mathcal{S}| = k+m$, since $\mathcal{S}$ consists of $m$ sets $S_s$ and $k$ sets
 $S_i^{\texttt{row}}$.

 Let us now check whether $D$ is a dominating set in $G$. Vertices $p_i^L$ and
 $p_j^R$ for $1 \leq i,j \leq k$ dominate all vertices of the graph $H$ and all vertices
 added in the attached force gadgets. Moreover, $D \cap X^A = \{x_{i_A,j_A}^A\}$
 for each $A \in \mathcal{S}$, thus $D$ dominates all vertices added in one-in-many
 gadgets attached to sets $X^A$.

We now prove that $G[D]$ contains exactly $r = k$ connected components.
Let us define for each $1 \leq j \leq k$:
$$D_j = \{p_j^R\} \cup \{p_i^L: (i,j) \in S\} \cup \{x_{i_A,j_A}^A: A \in \mathcal{S}, j_A = j\}.$$
Note that $D_j$ is a partition of $D$ into $k$ pairwise disjoint sets.
Moreover, observe that $G[D_j]$ is connected and, since $S$ contains exactly one element
from each row, no vertices from $D_j$ and $D_{j'}$ are adjacent, for $j \neq j'$.
This finishes the proof of the lemma.
\end{proof}

\subsubsection{From dominating set to hitting set}

\begin{lemma}
 If there exists a dominating set $D$ in the graph $G$, such that $|D| \leq \ell$ and $D$ induces at least $r$ connected components,
 then the initial \kknphittingsetname{} instance was a YES-instance.
\end{lemma}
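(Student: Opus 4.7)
The plan is to show, by analyzing $D$ structurally, that $D$ must essentially encode a hitting set. I proceed in three steps.

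First, I will pin down the structure of any dominating set $D$ with $|D| \leq \ell = 3k+m$. Each $p_i^L$ and $p_j^R$ carries a force gadget, so all $2k$ such vertices are forced into $D$. For each $A \in \mathcal{S}$, the one-in-many gadget attached to $X^A$ forces (by Lemma \ref{lem:maxccds:gadget2}) at least one vertex from $X^A$ to lie in $D$. Since the sets $X^A$ are pairwise disjoint and $|\mathcal{S}| = k+m$, this contributes at least $k+m$ further vertices, giving $|D| \geq 3k+m = \ell$. Thus $|D| = \ell$ exactly, and $D$ consists of all $p_i^L$, all $p_j^R$, and precisely one vertex $x_{i_A,j_A}^A$ for each $A \in \mathcal{S}$ (with $(i_A,j_A) \in A$). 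Moreover, vertices in force and one-in-many gadgets are not in $D$, so every connected component of $G[D]$ lies inside the subgraph induced on these $3k+m$ chosen vertices of $H$.

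Next, I will reduce counting components to a bipartite multigraph. Let $B'$ be the bipartite multigraph on $\{p_1^L,\ldots,p_k^L\} \cup \{p_1^R,\ldots,p_k^R\}$ with one edge $p_{i_A}^L p_{j_A}^R$ for each $A \in \mathcal{S}$. Since every $x$-vertex in $D$ has degree exactly two in $G[D]$, connecting its unique $L$- and $R$-neighbour, the graph $G[D]$ is (essentially) a subdivision of $B'$, and hence $\conncomp(G[D]) = \conncomp(B')$. So $\conncomp(B') \geq r = k$. I then split the edge set of $B'$ as $E^{\texttt{row}} \sqcup E^{\texttt{m}}$, where $E^{\texttt{row}}$ contains the $k$ edges from the row-set gadgets $A = S_i^{\texttt{row}}$ and $E^{\texttt{m}}$ contains the $m$ edges from the sets $A = S_s$. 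Let $B'_{\texttt{row}}$ be the submultigraph with edge set $E^{\texttt{row}}$.

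The key structural observation is that $B'_{\texttt{row}}$ has exactly $k$ connected components. Indeed, for $A = S_i^{\texttt{row}} \subseteq \{i\}\times[k]$ the chosen element has first coordinate $i$, so each $p_i^L$ has degree exactly $1$ in $B'_{\texttt{row}}$. Hence no two $R$-vertices can be connected through an $L$-vertex, so every component of $B'_{\texttt{row}}$ contains exactly one $R$-vertex (a star centred at some $p_j^R$, possibly isolated). This gives exactly $k$ components. Now $B'$ is obtained from $B'_{\texttt{row}}$ by adding the $m$ edges of $E^{\texttt{m}}$, and each such edge either preserves or decreases the component count by one. Since $\conncomp(B') \geq k$, every edge in $E^{\texttt{m}}$ must have both endpoints in the same component of $B'_{\texttt{row}}$.

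Finally, I will read off the hitting set. Define $\pi\colon[k]\to[k]$ by setting $\pi(i) = j$ for the unique $j$ with $(i,j)\in E^{\texttt{row}}$, and let $S = \{(i,\pi(i)) : 1 \le i \le k\} = E^{\texttt{row}}$; this contains exactly one element of each row. For any $1 \le s \le m$, the edge $p_{i_{S_s}}^L p_{j_{S_s}}^R \in E^{\texttt{m}}$ must lie inside a component of $B'_{\texttt{row}}$, i.e., inside the star centred at $p_{\pi(i_{S_s})}^R$; this forces $\pi(i_{S_s}) = j_{S_s}$. Hence $(i_{S_s}, j_{S_s}) \in S \cap S_s$, showing $S$ hits every $S_s$ and completing the proof. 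The only non-routine step is the observation that $B'_{\texttt{row}}$ splits into $k$ stars; everything else is accounting.
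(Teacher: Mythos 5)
Your proof is correct and follows essentially the same route as the paper's: both pin down the exact structure of $D$ using the force and one-in-many gadgets, and both exploit the $\geq r = k$ component bound to show that the clause-gadget choices must agree with the row-gadget choices. Your bipartite-multigraph $B'$ and the monotone component-count argument is a clean repackaging of the paper's sets $D_j$ (the components rooted at the $p_j^R$'s) being pairwise distinct and covering everything.
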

\begin{proof}
 By the properties of the force gadget, $D$ needs to include all forced vertices, i.e.,
 vertices $p_i^L$ and $p_j^R$ for $1 \leq i,j \leq k$.
  There are $2k$ forced vertices, thus we have $\ell-2k = k+m$ vertices left.

 By the properties of one-in-many gadgets, $D$ needs to include at least one vertex
 from each set $X^A$, $A \in \mathcal{S}$. But $|\mathcal{S}| = k+m$
 and sets $X^A$ are pairwise disjoint. Thus, $D$ consist of  all forced vertices
 and exactly one vertex from each set $X^A$, $A \in \mathcal{S}$.

 For each $1 \leq i \leq k$ let $x_{i,f(i)}^{S_i^\texttt{row}}$ be the unique vertex
 in $D \cap X^{S_i^\texttt{row}}$. Let $S = \{(i,f(i)): 1 \leq i \leq k\}$.
 We claim that $S$ is a solution to the initial \kknphittingsetname{} instance.
 It clearly contains exactly one element from each row.

 Let $D_j$ be the vertex set of the connected component of $G[D]$ that contains
 $p_j^R$. Note that $p_i^L \in D_j$ whenever $j = f(i)$, i.e., $(i,j) \in S$.
 This implies that $\bigcup_{j=1}^k D_j$ contains all vertices $p_i^L$. Moreover,
 as each vertex in $X^A$ for $A \in \mathcal{S}$ is adjacent to some vertex $p_j^R$,
 the sets $D_j$ are the only connected components of $G[D]$. As $G[D]$ contains at least
 $r=k$ connected components, $D_j \neq D_{j'}$ for $j \neq j'$.

 Let $1 \leq s \leq m$ and let us focus on set $S_s \in \mathcal{S}$.
 Let $x_{i,j}^{S_s}$ be the unique vertex in $D \cap X^{S_s}$.
 Note that $x_{i,j}^{S_s}$ connects $p_i^L \in D_{f(i)}$ with
 $p_j^R \in D_j$. As sets $D_j$ are pairwise distinct, this implies
 that $j = f(i)$ and $(i,j) \in S \cap S_s$. Thus the components
 of $G[D]$ are exactly the sets $D_j$, $1 \leq j \leq k$.
\end{proof}

\subsection{Undirected \cyclepackingname{}}\label{sec:eth:ucycles}

\subsubsection{Proof overview and preliminaries}

First note that for \cyclepackingname{}, we can assume that the input graph may be a multigraph,
i.e., it may contain multiple edges and loops. The following lemma summarizes this observation.
\begin{lemma}\label{lem:neg:multi}
Let $(G,\ell)$ be an instance of (directed or undirected) \cyclepackingname{},
    where $G$ may contain multiple edges and loops. Then we can construct in polynomial
    time an equivalent (directed or undirected, respectively) instance $(G',\ell)$,
    such that $G'$ does not contain multiple edges nor loops.
   Moreover, given a path decomposition of $G$ of width $p$, we can construct
   in polynomial time a path decomposition of $G'$ of width at most $p+2$.
\end{lemma}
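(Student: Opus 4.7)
\medskip

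\noindent\textbf{Proof plan.} The plan is to construct $G'$ by subdividing each edge (including loops and each copy of a multi-edge independently) of $G$ exactly twice, introducing two fresh subdivision vertices per edge. Concretely, for every edge $e = uv$ of $G$ (possibly with $u = v$), I replace $e$ by a path $u\,w_1^e\,w_2^e\,v$ in the undirected case, or by a directed path $u \to w_1^e \to w_2^e \to v$ in the directed case, where $w_1^e, w_2^e$ are brand new vertices not shared between different edges. The first task is to verify that $G'$ is simple: a loop $vv$ becomes a 3-cycle $v w_1^e w_2^e$ whose three edges are pairwise distinct and do not repeat vertices; two parallel copies of an edge $uv$ become two internally vertex-disjoint paths of length three, and so are no longer parallel; and all newly created edges are incident to at least one subdivision vertex, whose fresh status prevents any multiplicity or self-loop from appearing.

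\medskip

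Next I would establish a bijection between the (directed or undirected) cycles of $G$ and those of $G'$ that preserves vertex-disjointness. Every subdivision vertex has degree exactly two in the undirected case (and in-degree and out-degree each equal to one in the directed case), so any cycle in $G'$ that enters the internal path of a subdivided edge $e$ must traverse the entire path $u w_1^e w_2^e v$ (respectively, in the correct direction); collapsing each such subpath back to $e$ yields a well-defined cycle of $G$, and the reverse map subdivides each edge of a cycle of $G$ into its path in $G'$. Since subdivision vertices belonging to different edges are distinct, two cycles of $G$ share a vertex if and only if the corresponding cycles of $G'$ share an original vertex, and no two cycles of $G'$ can share a subdivision vertex unless they come from the same edge, which forces them to be the same cycle. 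Hence $G$ has $\ell$ vertex-disjoint cycles iff $G'$ does, so the two instances are equivalent. The handling of loops and of pairs of (anti-)parallel edges, which in the multigraph world are cycles of length one and two respectively, is the part that requires the most care, because one must check that after two subdivisions these really do turn into length-$3$ (directed) cycles that are simple, and that nothing new is created.

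\medskip

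Finally, I would lift the given path decomposition $\mathbb{P}$ of $G$ of width $p$ to a decomposition of $G'$. For every edge $e = uv$ of $G$, fix any bag $B_e$ of $\mathbb{P}$ that contains both $u$ and $v$ (which exists by the definition of path decomposition, even when $u=v$). Right after $B_e$ in the path, insert the three bags
\[
B_e \cup \{w_1^e\},\qquad B_e \cup \{w_1^e, w_2^e\},\qquad B_e \cup \{w_2^e\},
\]
and then resume the original path with $B_e$ again. Doing this sequentially for all edges yields a path decomposition of $G'$ in which every newly introduced bag is a superset of some $B_e$ and contains at most two additional vertices, so the width is at most $p+2$. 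The inserted bags preserve the connectivity property for all original vertices because each of them contains $B_e$, hence contains the intersection of $B_e$ with the adjacent original bags, while each subdivision vertex $w_j^e$ appears only within a contiguous window of the three inserted bags and is thus trivially connected. Every edge of $G'$ is covered: the three edges $u w_1^e$, $w_1^e w_2^e$, $w_2^e v$ (with appropriate orientations in the directed case) are each contained in one of the three inserted bags. This completes the construction, and the main technical point to double-check is that the same argument works uniformly for loops, since there $u = v$ and the single vertex $u=v$ remains in $B_e$ throughout the three inserted bags.
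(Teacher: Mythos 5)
Your construction (subdivide each edge, including each parallel copy and each loop, with two fresh vertices) and your pathwidth argument (pick a bag $B_e$ covering $e$, insert bags containing $B_e$ plus the new vertices right next to it) are exactly the paper's proof; the only cosmetic difference is that you insert three bags of size at most $|B_e|+2$ where the paper inserts a single bag $B_e\cup\{w_1^e,w_2^e\}$, which yields the same $p+2$ bound. The proposal is correct and essentially identical to the paper's argument.
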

\begin{proof}
To construct $G'$, we replace each edge $e \in E(G)$ with a path of length three,
   i.e., we insert vertices $u_e^1$ and $u_e^2$ in the middle of edge $e$.
Clearly $G$ is a simple graph and vertex-disjoint cycle families in $G$ and $G'$ naturally translates into each other.

We are left with the pathwidth bound. Assume we have a path decomposition of $G$ of width $p$.
For each edge $e \in E(G)$ we fix a bag $V_e$ that covers $e$. We introduce a new bag $V_e' = V_e \cup \{u_e^1, u_e^2\}$
and insert $V_e'$ near the bag $V_e$ in the path decomposition.
It is easy to see that the new decomposition is a proper path decomposition of $G'$ and its width is at most $p+2$.
\end{proof}

Let us introduce some extra notation. We say that a vertex is covered by a cycle (or a family of cycles)
if the vertex belongs to the cycle (or belongs to at least one cycle in the family).
A graph is covered by a cycle family if every its vertex is covered by the family.
By $(v_1, \ldots, v_r)$ we denote a path (or a cycle) consisting of vertices $v_1, v_2, \ldots, v_r$ in this order.

We now present an overview of the proof of Theorem \ref{thm:negative-main} for undirected \cyclepackingname{}.
We provide a construction that, given an instance $(k, S_1, S_2, \ldots, S_m)$ of \kkhittingsetname{} (called an {\em{initial instance}}),
produces in polynomial time an undirected graph $G$, an integer $\ell$ and a path decomposition of $G$ with the following properties:
\begin{enumerate}
\item The path decomposition of $G$ has width $O(k)$.\label{neg:prop:pw}
\item If the initial instance of \kkhittingsetname{} is a YES-instance, then there exists a family of $\ell$ vertex-disjoint cycles
 in $G$. In other words, $(G,\ell)$ is a YES-instance of undirected \cyclepackingname{}.\label{neg:prop:hs2cycle}
\item If there exist a family of $\ell$ vertex-disjoint cycles in $G$, then the initial \kkhittingsetname{} instance
is a YES-instance.\label{neg:prop:cycle2hs}
\end{enumerate}

In Section \ref{sec:neg:prelims} we describe the {\em{$r$-in-many gadget}}, a tool used widely in the construction. In Section \ref{sec:neg:construction} we give the construction of the graph $G$
and show the pathwidth bound, i.e., Point \ref{neg:prop:pw}.
Points \ref{neg:prop:hs2cycle} and \ref{neg:prop:cycle2hs} are proven in Sections \ref{sec:neg:hs2cycle} and \ref{sec:neg:cycle2hs} respectively.
Reductions to directed \cyclepackingname{} and to \maxcyclecovername{} are in Sections \ref{sec:neg:undir2dir} and \ref{sec:neg:packing2cover} respectively.

\subsubsection{$r$-in-many gadget}\label{sec:neg:prelims}

In this section we describe the {\em{$r$-in-many gadget}}, a tool used in further sections.
Informally speaking, the $r$-in-many gadget attached to vertex set $X$ ensures that at most $r$ vertices from $X$ are used in a solution (a family of cycles).
\begin{definition}\label{def:$r$-in-many}
Let $H$ be a multigraph and $X$ be an arbitrary subset of vertices of $H$.
By a {\em{$r$-in-many gadget attached to $X$}} ($1 \leq r < |X|$) we mean the following construction:
we introduce $(|X|-r)$ new vertices $\{u_i: 1 \leq i \leq |X|-r\}$ and 
for each $1 \leq i \leq |X|-r$ and $x \in X$ we add two edges $xu_i$.
In other words, we introduce $|X|-r$ vertices connected to the set $X$ via double edges.
The set $X$ is called an {\em{attaching point}} of the gadget.
A cycle of length two, constisting of two edges $xu_i$
for some $1 \leq i \leq |X|-r$ and $x \in X$, is called a {\em{gadget short cycle}}.
\end{definition}
\begin{definition}\label{def:gadget-ext}
Let $H$ be a multigraph, let $X_1, \ldots, X_d$ be pairwise disjoint subsets of vertices of $H$
and let $r_1, \ldots, r_d$ be integers satisfying $1 \leq r_i < |X_i|$ for $1 \leq i \leq d$.
Let $G$ be a multigraph constructed from $H$ by attaching to $X_i$ a $r_i$-in-many gadget, for all $1 \leq i \leq d$.
We say that $G$ is a {\em{gadget extension}} of $H$.
The maximum number of gadget short cycles that can be packed in $G$ is denoted by $\ell_G$, i.e., $\ell_G := \sum_{i=1}^d |X_i|-r_i$.
\end{definition}
\begin{definition}
Let $G$ be a gadget extension of $H$, and let $X_i$ and $r_i$ be as in Definition \ref{def:gadget-ext}.
If $\mathcal{C}_H$ is a family of vertex-disjoint cycles in $H$ satisfying the following property: for each $1 \leq i \leq d$
at most $r_i$ vertices from $X_i$ are covered by $\mathcal{C}_H$, then we say that $\mathcal{C}_H$ is {\em{gadget safe in $H$}}.
If $\mathcal{C}_G$ is a family of vertex-disjoint cycles in $G$ containing $\ell_G$ gadget short cycles,
then we say that $\mathcal{C}_G$ is {\em{gadget safe in $G$}}.
\end{definition}

\begin{figure}[htbp]
\begin{center}
\includegraphics{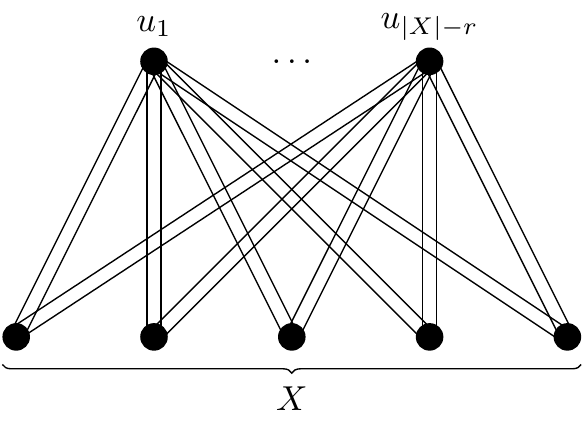}
\end{center}
\caption{The $r$-in-many gadget attached to set $X$.}
\label{fig:r-in-many}
\end{figure}

The following lemma shows how the $r$-in-many gadget is intended to be used.
\begin{lemma}\label{lem:r-in-many:use}
Let $G$ be a gadget extension of $H$, and let $X_i$ and $r_i$ be as in Definition \ref{def:gadget-ext}.
Let $\mathcal{C}_H$ be a family of cycles that is gadget safe in $H$.
Then $\mathcal{C}_H$ can be extended to gadget safe in $G$ family $\mathcal{C}_G$ of size $|\mathcal{C}_H|+ \ell_G$
\end{lemma}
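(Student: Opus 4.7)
The plan is to construct $\mathcal{C}_G$ by taking $\mathcal{C}_H$ and, for each gadget independently, greedily adding exactly $|X_i|-r_i$ gadget short cycles using the new vertices introduced by that gadget. The counting will then yield both the required size and the gadget safety in $G$.

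More precisely, I would first fix an index $i \in \{1,\ldots,d\}$ and look at the $r_i$-in-many gadget attached to $X_i$, whose new vertices are $u_1^{(i)}, \ldots, u_{|X_i|-r_i}^{(i)}$. Since $\mathcal{C}_H$ is gadget safe in $H$, at most $r_i$ vertices of $X_i$ lie on cycles of $\mathcal{C}_H$, so the set $Y_i \subseteq X_i$ of vertices of $X_i$ not covered by $\mathcal{C}_H$ has size at least $|X_i|-r_i$. Choose an injection $\phi_i : \{1,\ldots,|X_i|-r_i\} \to Y_i$ and, for each $j$, form the length-$2$ cycle $C_{i,j}$ consisting of the two parallel edges between $u_j^{(i)}$ and $\phi_i(j)$. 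Each $C_{i,j}$ is a gadget short cycle by Definition~\ref{def:$r$-in-many}.

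Next I would verify vertex-disjointness. The new vertices $u_j^{(i)}$ were introduced when attaching the gadget, so they do not appear in $H$ and hence not in any cycle of $\mathcal{C}_H$; moreover the vertices $u_j^{(i)}$ for different $(i,j)$ are pairwise distinct by construction. The $X$-endpoints $\phi_i(j)$ lie in $Y_i$, so they avoid cycles of $\mathcal{C}_H$; since $\phi_i$ is injective the endpoints are distinct within gadget $i$, and since the sets $X_1,\ldots,X_d$ are pairwise disjoint (Definition~\ref{def:gadget-ext}) the endpoints are distinct across gadgets. Hence the family
\[
  \mathcal{C}_G \;=\; \mathcal{C}_H \;\cup\; \bigcup_{i=1}^{d} \bigl\{C_{i,j} : 1 \le j \le |X_i|-r_i\bigr\}
\]
is a family of pairwise vertex-disjoint cycles in $G$.

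Finally I would do the bookkeeping: the number of newly added cycles is $\sum_{i=1}^d (|X_i|-r_i) = \ell_G$, so $|\mathcal{C}_G| = |\mathcal{C}_H| + \ell_G$, and all $\ell_G$ added cycles are gadget short cycles, so $\mathcal{C}_G$ is gadget safe in $G$. There is no real obstacle here; the only point requiring care is checking that the chosen $X_i$-endpoints remain disjoint from $\mathcal{C}_H$ and across gadgets, which is exactly where gadget safety of $\mathcal{C}_H$ and disjointness of the attaching points $X_i$ are used.
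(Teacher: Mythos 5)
Your proof is correct and follows the same construction as the paper's: pick $|X_i|-r_i$ uncovered vertices in each $X_i$, pair them with the fresh gadget vertices $u_j$, and add the resulting double-edge short cycles. You spell out the disjointness verification in a bit more detail than the paper does, but the argument is essentially identical.
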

\begin{proof}
For each $1 \leq i \leq d$, let $Y_i \subseteq X_i$ be a set of (arbitrarily chosen)
  $|X_i|-r_i$ vertices not covered by $\mathcal{C}_H$. 
  Assign $\mathcal{C}_G := \mathcal{C}_H$.
  For each $1 \leq i \leq d$ we add to $\mathcal{C}_G$
  a set of $|X_i|-r_i$ gadget short cycles, each consisting of one vertex in $Y_i$ and one vertex $u_j$ from the gadget attached to $X_i$.
\end{proof}

The next lemma shows that we can safetely assume that the $r$-in-many gadgets are
used as in the proof of Lemma \ref{lem:r-in-many:use}.
\begin{lemma}\label{lem:r-in-many:rev}
Let $G$ be a gadget extension of $H$, and let $X_i$ and $r_i$ be as in Definition \ref{def:gadget-ext}.
Let $\ell$ be the maximum possible cardinality of a family of vertex-disjoint cycles
in $G$. Then there exists a gadget safe in $G$ family $\mathcal{C}$ of size $\ell$.
Moreover, after removing from $\mathcal{C}$ all $\ell_G$ gadget short cycles,
  we obtain a gadget safe in $H$ family of cycles.
\end{lemma}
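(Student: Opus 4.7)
The plan is to start with an arbitrary maximum cycle packing $\mathcal{C}$ in $G$ of size $\ell$, and refine it via exchange arguments until it is gadget safe. The first key observation is that in any maximum packing, no cycle of length greater than two can traverse gadget vertices. Indeed, any cycle through some $u_j$ must alternate between gadget vertices $u_{j_1}, \ldots, u_{j_s}$ and distinct vertices $x_1, \ldots, x_s$ of the same $X_i$ (since the $u_j$'s have neighbors only in $X_i$); if $s \geq 2$, we could replace this cycle by the $s$ short cycles $(u_{j_t}, x_t)$, strictly increasing $|\mathcal{C}|$. Hence every cycle of $\mathcal{C}$ either is a gadget short cycle, or uses no gadget vertex at all.

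Next I would, among all maximum packings in $G$, choose one maximizing the total number of short cycles $\sum_i s_i$, where $s_i$ denotes the number of short cycles of $\mathcal{C}$ in gadget $i$. The goal is to show $s_i = |X_i| - r_i$ for every $i$, which gives exactly $\ell_G$ short cycles and makes $\mathcal{C}$ gadget safe in $G$. Suppose, for contradiction, that some $s_i < |X_i| - r_i$. Then at least one $u_j$ from gadget $i$ is unused, so by maximality every vertex of $X_i$ must be covered (otherwise we could add another short cycle). Since the short cycles cover only $s_i$ vertices of $X_i$, at least one vertex $x \in X_i$ is covered by a cycle $C$ of $\mathcal{C}$ that uses no gadget vertex; let $y \geq 1$ be the number of $X_i$-vertices appearing on $C$, and $u \geq 1$ the number of unused $u_j$'s in gadget $i$.

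Now perform the swap: delete $C$ and add $\min(u,y)$ new short cycles pairing freed $X_i$-vertices of $C$ with unused $u_j$'s. The new collection is still a packing of vertex-disjoint cycles (the freed $X_i$-vertices were only in $C$, and the $u_j$'s were unused). Its size is $|\mathcal{C}| - 1 + \min(u,y)$, so if $\min(u,y) \geq 2$ we contradict the maximality of $|\mathcal{C}|$, while if $\min(u,y) = 1$ the size is preserved but $\sum_i s_i$ strictly increases, contradicting the choice of $\mathcal{C}$. This forces $s_i = |X_i| - r_i$ for all $i$, giving the required $\ell_G$ short cycles.

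Finally, for the "moreover" clause, after deleting all $\ell_G$ short cycles from $\mathcal{C}$, the remaining family consists only of cycles avoiding all gadget vertices, hence lies in $H$. In each gadget the short cycles already covered $s_i = |X_i| - r_i$ distinct vertices of $X_i$, so at most $r_i$ vertices of $X_i$ remain available to be covered by the non-short cycles; that is exactly the condition of being gadget safe in $H$. I do not anticipate a substantial obstacle here: the only subtlety is making sure the swap step does not accidentally reduce $s_{i'}$ for some other gadget $i'$, but this is automatic because $C$ uses no gadget vertex and the freshly added cycles live entirely in gadget $i$.
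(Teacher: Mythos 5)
Your overall strategy coincides with the paper's: pick a maximum packing that additionally maximizes the number of gadget short cycles, and then use an exchange argument to show that this count must equal $\ell_G$. However, your opening structural claim is both too strong and incorrectly justified, and fixing it requires reordering your argument.

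The claim that ``in \emph{any} maximum packing, no cycle of length greater than two can traverse gadget vertices'' is false. Consider a cycle $C = (u_1, x_1, \dots, x_2, u_1)$ that enters gadget vertex $u_1$ from $x_1 \in X_i$, exits to $x_2 \in X_i$, and in between wanders arbitrarily through $H$. Replacing $C$ yields only \emph{one} short cycle, not more, so the packing size is preserved, not increased; such a $C$ can perfectly well sit inside a maximum packing. Your alternation argument also fails for the same reason: a cycle through $u_j$ enters and exits via $X_i$, but between two visits to $X_i$ it need not visit another gadget vertex --- it may detour into $H$. So the cycle does not decompose into $s$ parallel pairs $(u_{j_t}, x_t)$, and the case $s = 1$ (one gadget vertex, possibly several $X_i$-vertices and a long $H$-detour) is never handled.

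The fix, which is exactly what the paper does, is to first fix $\mathcal{C}$ to be a maximum packing with the maximum number of gadget short cycles, and only then observe: if any non-short cycle $C \in \mathcal{C}$ covers a gadget vertex $u$ of gadget $i$, then $C$ also covers some $x \in X_i$ (since $N(u) \subseteq X_i$), and swapping $C$ for the short cycle $(u,x)$ keeps $|\mathcal{C}|$ unchanged while strictly increasing the short-cycle count --- contradiction. This gives the structural fact you need \emph{for this particular $\mathcal{C}$}, which is all the ``moreover'' clause requires. Once this is established, your second-paragraph swap (delete $C$, add $\min(u,y)$ short cycles) is correct, though it is more elaborate than necessary: the paper gets the same contradiction by swapping a \emph{single} cycle for a \emph{single} short cycle, split into the cases of $u$ covered vs.\ uncovered. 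Your version works, it just does more than it has to.
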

\begin{proof}
Let $\mathcal{C}$ be a family of vertex-disjoint cycles in $G$
of size $\ell$ that maximizes the number of gadget short cycles. By contradiction,
assume that $\mathcal{C}$ is not gadget safe in $G$. That means it contains less than $\sum_{i=1}^d |X_i|-r_i$ gadget short cycles, 
i.e., there exists $1 \leq i \leq d$, such that less than $|X_i|-r_i$ gadget short cycles in the gadget
attached to $X_i$ are in $\mathcal{C}$.

Let $u$ be a vertex in the gadget attached to $X_i$ that does not lie on a gadget short cycle
in $\mathcal{C}$. If $u$ is covered by a cycle $C \in \mathcal{C}$, then there exists $x \in X_i$
also covered by $C$. We can replace $C$ with a gadget short cycle $(u,x)$, increasing the number
of gadget short cycles in $\mathcal{C}$, a contradiction.

Thus, $u$ is not covered in $\mathcal{C}$. Let $x \in X_i$ be a vertex that is not covered by a
gadget short cycle in $\mathcal{C}$ (there exists, as $r_i < |X_i|$ and sets $X_i$ are pairwise disjoint). If $x$ is not covered by
$\mathcal{C}$, we can add gadget short cycle $(u,x)$ to $\mathcal{C}$, increasing its size, a contradiction.
Otherwise, we can replace the cycle with $x$ with gadget short cycle $(u,x)$, a contradiction too.
Thus, $\mathcal{C}$ contains $\ell_G = \sum_{i=1}^d |X_i|-r_i$ gadget short cycles.
It is a straightforward corollary from the definitions that
after removing these $\ell_G$ cycles, we obtain a gadget safe in $H$ family of cycles.
\end{proof}

Finally, we show that attaching a $r$-in-many gadget may not influence much the pathwidth of the graph.
\begin{lemma}\label{lem:r-in-many:pw}
Let $G$ be a gadget extension of $H$, and let $X_i$ and $r_i$ be as in Definition \ref{def:gadget-ext}.
Assume that we are given a path decomposition of $H$ of width $p$, such that for each $1 \leq i \leq d$ there
exists a bag $V_i$ that contains the whole $X_i$. Then in polynomial time we can construct a path decomposition
of $G$ of width at most $p+1$.
\end{lemma}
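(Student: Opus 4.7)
The plan is to extend the given path decomposition of $H$ by inserting, for each new gadget vertex, a single new bag placed immediately after the designated bag $V_i$ that already contains the full attaching point $X_i$.

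First, I would fix for each $1 \le i \le d$ the bag $V_i$ of the given decomposition with $X_i \subseteq V_i$, as guaranteed by the hypothesis. For every new vertex $u_{i,j}$ (for $1 \le j \le |X_i| - r_i$) introduced by the $r_i$-in-many gadget attached to $X_i$, I would create a bag $V_{i,j} = V_i \cup \{u_{i,j}\}$. Then I would splice all bags $V_{i,1}, V_{i,2}, \ldots, V_{i,|X_i|-r_i}$ into the path, in any order, immediately after $V_i$ (and before the bag that originally followed $V_i$). Doing this for all gadgets yields the new path decomposition $\mathbb{T}'$.

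Next I would verify the three path decomposition axioms for $\mathbb{T}'$. Vertex coverage is clear: old vertices retain their old bags, and each $u_{i,j}$ lies in $V_{i,j}$. Edge coverage is also immediate: old edges of $H$ are already covered by old bags (which persist unchanged in $\mathbb{T}'$), while the only new edges incident to $u_{i,j}$ are the double edges to vertices of $X_i \subseteq V_i \subseteq V_{i,j}$. For the running intersection property, each $u_{i,j}$ appears in exactly one bag, so there is nothing to check for it. For an old vertex $v$, the set of old bags containing $v$ forms a contiguous sub-path by assumption; when we insert the new bags $V_{i,\cdot}$ right after $V_i$, each such new bag contains $v$ iff $v \in V_i$, and in that case $v$ is in $V_i$ and in every inserted bag, so the extended interval of $v$ remains contiguous.

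Finally, the width bound: every new bag has size $|V_i| + 1 \le (p+1) + 1$, so the maximum bag size of $\mathbb{T}'$ is at most $p+2$, giving width at most $p+1$. The construction is clearly polynomial, since the number of gadget vertices is at most $\sum_i (|X_i|-r_i) \le |V(G)|$. I do not anticipate a real obstacle here; the only subtle point is checking running intersection for the old vertices of $V_i$, which is handled uniformly by the observation that every inserted bag is a superset of $V_i$.
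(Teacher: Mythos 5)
Your proof is correct and matches the paper's argument essentially verbatim: both insert, for each gadget vertex $u_{i,j}$, a new bag $V_i \cup \{u_{i,j}\}$ immediately after a designated bag $V_i \supseteq X_i$, and both conclude the width increases by at most one. Your write-up is merely a bit more explicit in checking the three path-decomposition axioms, which the paper dispatches in one sentence.
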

\begin{proof}
Let $1 \leq i \leq d$ and let $V_i$ be a bag containing $X_i$.
We introduce bags $V_i^j$, $1 \leq j \leq |X_i|-r_i$, taking $V_i^j = V_i \cup \{u_j\}$.
We insert the newly created bags $V_i^j$ near the bag $V_i$ in the path decomposition.
As all bags $V_i^j$ contain $V_i$, this modification does not spoil the properties
of the path decomposition of $H$. Bag $V_i^j$ covers all edges incident to $u_j$.
Thus, the new path decomposition is a proper path decomposition of $G$ and has width at most $p+1$, 
  as $|V_i^j| = |V_i| + 1$.
\end{proof}

\subsubsection{Construction}\label{sec:neg:construction}

Let $(k, S_1, S_2, \ldots, S_m)$ be an instance of \kkhittingsetname{}.
W.l.o.g. we may assume that each set $S_i$ is nonempty.
We first construct a graph $H$ as follows:
\begin{enumerate}
\item The vertex set $V(H)$ consists of
  \begin{enumerate}
   \item vertices $p_i^Z, p_i^R, q_i^Z, q_i^R$ for $1 \leq i \leq k$;
   \item vertices $p_{i,j}^C, q_{i,j}^C$ for $1 \leq i,j \leq k$; for each $1 \leq i \leq k$ we denote $X_i^p = \{p_{i,j}^C: 1 \leq j \leq k\}$ and $X_i^q = \{q_{i,j}^C: 1 \leq j \leq k\}$;
   \item vertices $x_{i,s}^L, x_{i,s}^R, y_{i,s}^L, y_{i,s}^R$ for $1 \leq i \leq k$ and $1 \leq s \leq m$;
   \item and vertices $x_{i,s}^C, y_{i,s}^C, x_{i,s}^Z, y_{i,s}^Z, z_{i,s}^C$
     for $1 \leq s \leq m$ and $(i,j) \in S_s$ (recall that there is at most one element in each row in $S_s$); we denote $X_s^x = \{x_{i,s}^C: (i,j) \in S_s\}$,
     $X_s^y = \{y_{i,s}^C: (i,j) \in S_s\}$ and $X_s^z = \{z_{i,s}^C: (i,j) \in S_s\}$.
 \end{enumerate}
 The vertex set is partitioned into four parts $L$, $R$, $C$ and $Z$, according to the superscripts (the first three are acronyms for left, right and centre, the last one should be
     seen as an important separator between left and centre).
\item Vertices $p_i^Z$ and $p_j^R$ are connected into full bipartite graph with vertices $p_{i,j}^C$ inserted
into the middle of each edge, i.e., for all $1 \leq i,j \leq k$ we add edges $p_i^Zp_{i,j}^C$ and $p_{i,j}^Cp_j^R$.
  Similar construction is performed for vertices $q_i^Z$, $q_j^R$ and $q_{i,j}^C$,
  i.e., for all $1 \leq i,j \leq k$ we add edges $q_i^Zq_{i,j}^C$ and $q_{i,j}^Cq_j^R$.
\item For each $1 \leq i \leq k$, vertices $x_{i,s}^L$ and $y_{i,s}^L$ are arranged into path from $p_i^Z$ to $q_i^Z$,
  i.e., $x_{i,s}^Ly_{i,s}^L \in E$ for $1 \leq s \leq m$, $y_{i,s}^Lx_{i,s+1}^L \in E$ for $1 \leq s < m$
  and $q_i^Zy_{i,m}^L,p_i^Zx_{i,1}^L \in E$. By $\mathcal{P}_i^L$ we denote the path from $p_i^Z$ to $q_i^Z$
\item For each $1 \leq i \leq k$, vertices $x_{i,s}^R$ and $y_{i,s}^R$ are arranged into path from $p_i^R$ to $q_i^R$,
  i.e., $x_{i,s}^Ry_{i,s}^R$ for $1 \leq s \leq m$, $y_{i,s}^Rx_{i,s+1}^R \in E$ for $1 \leq s < m$
  and $p_i^Rx_{i,1}^R,q_i^Ry_{i,m}^R \in E$. By $\mathcal{P}_i^R$ we denote the path from $p_i^R$ to $q_i^R$.
\item For each $1 \leq s \leq m$, if $(i,j) \in S_s$, we add a path $(x_{i,s}^L,x_{i,s}^Z,x_{i,s}^C,x_{j,s}^R)$.
\item Similarly, for each $1 \leq s \leq m$, if $(i,j) \in S_s$, we add a path $(y_{i,s}^L,y_{i,s}^Z,y_{i,s}^C,y_{j,s}^R)$.
\item Moreover, for each $1 \leq s \leq m$ and $(i,j) \in S_s$ we add a cycle $(x_{i,s}^Z, z_{i,s}^C, y_{i,s}^Z)$.
\end{enumerate}
The graph $G$ is defined as a gadget extension of $H$ by attaching to $H$ the following $r$-in-many gadgets:
\begin{enumerate}
\item to vertex sets $X_i^p$ and $X_i^q$ for $1 \leq i \leq k$ we attach $1$-in-many gadgets;
\item to vertex sets $X_s^x$ and $X_s^y$ for $1 \leq s \leq m$ we attach $1$-in-many gadgets;
\item for $1 \leq s \leq m$ we attach a $(|X_s^z|-1)$-in-many gadget to $X_s^z$.
\end{enumerate}
  
\begin{figure}[htbp]
\begin{center}
  \includegraphics{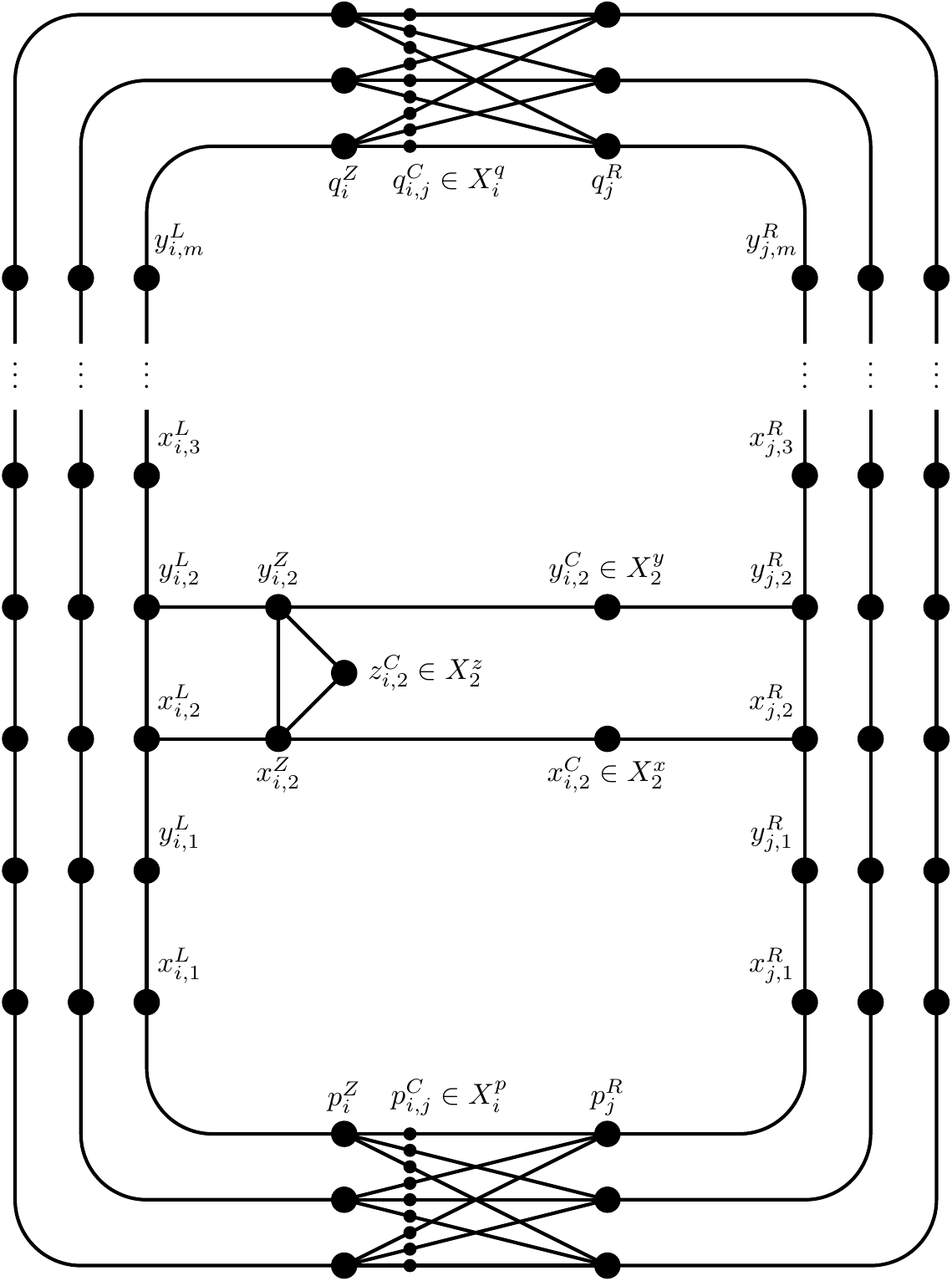}
\end{center}
\caption{The part of graph $H$ with the main frame and the part for element $(i,j) \in S_s$. Recall that
the gadget safe in $H$ family may cover at most one element of $X_i^p$, $X_i^q$, $X_s^x$ and $X_s^y$ and
cannot cover whole $X_s^z$.}
\label{fig:neg:constrution}
\end{figure}

Clearly, the above construction can be done in polynomial time.
Note that we can pack the following number of gadget short cycles in $G$:
\begin{align*}
\ell_G &:= \sum_{i=1}^k (|X_i^p|-1 + |X_i^q|-1) + \sum_{s=1}^m (|X_s^x| -1 + |X_s^y|-1 + |X_s^z| - (|X_s^z|-1)) \\
       &= 2k^2-2k-m + 2\sum_{s=1}^m |S_s|.
\end{align*}
We take $\ell := k + \sum_{s=1}^m |S_s| + \ell_G$, i.e., we ask for $\ell$ vertex-disjoint cycles in $G$.

The following lemma shows the pathwidth bound of $G$, i.e., proves Point \ref{neg:prop:pw}.
\begin{lemma}\label{lem:neg-dir-pw}
In polynomial time we can construct a path decomposition of $G$ of width $11k$.
\end{lemma}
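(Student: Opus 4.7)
The plan is to first build a path decomposition of the intermediate graph $H$ whose maximum bag size is at most $11k$, arranged so that the attaching set of every $r$-in-many gadget appears in some bag; then a single application of Lemma~\ref{lem:r-in-many:pw} promotes it to a decomposition of $G$ of width at most $11k$. Throughout the construction, the ``frame set'' $F=\{p_i^Z,p_i^R,q_i^Z,q_i^R:1\le i\le k\}$ of $4k$ vertices will sit in every bag, since these vertices are endpoints of many edges and paths that span the whole decomposition.

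First I handle the two bipartite subdivisions. Start with a block of bags each containing $F$, and introduce the vertices $p_{i,j}^C$ and $q_{i,j}^C$ one by one (adding their two incident edges at the time of introduction); during this block the bag size stays at $4k+1$. For each fixed $i$, at the moment when all $p_{i,j}^C$ for $j=1,\ldots,k$ have been introduced and none has yet been forgotten, I insert an extra ``checkpoint'' bag $F\cup X_i^p$ of size $5k$; a symmetric checkpoint is placed for $X_i^q$. This guarantees the precondition of Lemma~\ref{lem:r-in-many:pw} for these gadgets.

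Next, for $s=1,\ldots,m$ I build a ``slice'' that processes everything with superscript $s$. Throughout the slice I maintain $F$ and the current path vertices $P_s:=\{x_{i,s}^L,y_{i,s}^L,x_{i,s}^R,y_{i,s}^R:1\le i\le k\}$, which is another $4k$ vertices. The slice iterates through $S_s$: for each $(i,j)\in S_s$ I introduce $x_{i,s}^Z$ (edge to $x_{i,s}^L$), $y_{i,s}^Z$ (edge to $y_{i,s}^L$) and $z_{i,s}^C$, adding the triangle $(x_{i,s}^Z,z_{i,s}^C,y_{i,s}^Z)$; then I introduce $x_{i,s}^C$ (edges to $x_{i,s}^Z,x_{j,s}^R$) and $y_{i,s}^C$ (edges to $y_{i,s}^Z,y_{j,s}^R$); finally I forget $x_{i,s}^Z$ and $y_{i,s}^Z$, which are not needed anywhere else. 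The vertices $x_{i,s}^C,y_{i,s}^C,z_{i,s}^C$ stay in the bag because they belong to the gadget attaching sets $X_s^x,X_s^y,X_s^z$. After all elements of $S_s$ are processed, the resulting ``peak'' bag of the slice contains $F\cup P_s\cup X_s^x\cup X_s^y\cup X_s^z$ of size at most $4k+4k+3|S_s|\le 11k$, providing the checkpoint required by Lemma~\ref{lem:r-in-many:pw} for the three gadgets of slice $s$. Consecutive slices are glued by transition bags of size at most $8k+2$ that contain both endpoints of the path-edges $y_{i,s}^Lx_{i,s+1}^L$ and $y_{i,s}^Rx_{i,s+1}^R$.

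The construction runs in polynomial time, and after applying Lemma~\ref{lem:r-in-many:pw} the width of the final decomposition of $G$ is at most $11k$. The only delicate point, and the main obstacle, is the bookkeeping inside a single slice: one must verify that, with the ordering of introduce/forget operations described above, no transient bag ever exceeds $11k+1$ vertices, i.e.\ width $11k$. Concretely, while processing the $t$-th element of $S_s$, the bag contains $F\cup P_s$ (exactly $8k$ vertices), the C-vertices of the $t-1$ previously processed elements (at most $3(t-1)$ vertices), and at most the $5$ currently live new vertices $x_{i,s}^Z,y_{i,s}^Z,z_{i,s}^C,x_{i,s}^C,y_{i,s}^C$; since $t\le |S_s|\le k$, a careful accounting (forgetting the Z-vertices of the previous element before introducing those of the next, and noting that $x_{i,s}^C$ and $y_{i,s}^C$ are introduced one at a time after the Z-vertices) keeps the size bounded by $11k+1$, and the lemma follows.
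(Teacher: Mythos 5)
Your proof takes a more procedural route than the paper's. The paper simply exhibits, for each $s$, one bag
\[V_s=\{x_{i,s}^L,x_{i,s}^R,y_{i,s}^L,y_{i,s}^R,x_{i,s+1}^L,x_{i,s+1}^R:1\le i\le k\}\cup\{x_{i,s}^C,y_{i,s}^C,x_{i,s}^Z,y_{i,s}^Z,z_{i,s}^C:(i,j)\in S_s\},\]
of size $6k+5|S_s|\le 11k$, together with the $3k$-sized bags $V_i^p,V_i^q$ for the bipartite subdivisions, arranged in a single path — no transient introduce/forget operations inside a slice, and hence no interleaving argument is needed at all. Your progressive construction is a legitimate alternative and ultimately achieves the $O(k)$ bound that the theorem actually needs, but it is more work for the same result.

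The tightness you struggle with at the end stems from a wasteful choice: carrying the full frame $F=\{p_i^Z,p_i^R,q_i^Z,q_i^R\}$ of $4k$ vertices in \emph{every} bag. The $q$-vertices only have edges to $q_{i,j}^C$ and to $y_{i,m}^L,y_{i,m}^R$, so they never need to coexist in a bag with the $p$-vertices; the paper never keeps more than $2k$ frame vertices. Because your bags carry $4k$, the peak bag of a slice is already at $11k$, and the transient while processing the last element of $S_s$ climbs to $11k+2$ under the order you actually write (introduce $x^Z,y^Z,z^C,x^C,y^C$, then forget), giving width $11k+1$ for $G$ after Lemma~\ref{lem:r-in-many:pw}. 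The finer interleaving you gesture at (forget $x_{i,s}^Z$ between introducing $x_{i,s}^C$ and $y_{i,s}^C$) caps the transient at $11k+1$ and the final width at $11k$, but you should spell that out rather than leave it as an unverified remark; better still, drop $q_i^Z,q_i^R$ from the frame during the $p$-block and the slices and the arithmetic becomes comfortable. A small further inaccuracy: your ``first block'' bags do not stay at $4k+1$ as claimed — each checkpoint bag $F\cup X_i^p$ has $5k$ vertices — though this is still well under budget.
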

\begin{proof}
  By Lemma \ref{lem:r-in-many:pw}, it is sufficient to show
  a path decomposition of $H$ of width $11k-1$ such that each set $X_i^p$, $X_i^q$, $X_s^x$, $X_s^y$, $X_s^z$ is contained
  in some bag.

  The path decomposition consists of bags $V_i^p$ for $1 \leq i \leq k$, $V_s$ for $0 \leq s \leq m$
  and $V_i^q$ for $1 \leq i \leq k$, arranged in a path in this order. We define:
  \begin{enumerate}
    \item $V_i^p = \{p_j^Z, p_j^R, p_{i,j}^C : 1 \leq j \leq k\}$ for $1 \leq i \leq k$.
    \item $V_0 = \{p_i^Z, p_i^R, x_{i,1}^L, x_{i,1}^R: 1 \leq i \leq k\}$.
    \item $V_s = \{x_{i,s}^L, x_{i,s}^R, y_{i,s}^L, y_{i,s}^R, x_{i,s+1}^L, x_{i,s+1}^R: 1 \leq i \leq k\} \cup \{x_{i,s}^C, y_{i,s}^C, x_{i,s}^Z, y_{i,s}^Z, z_{i,s}^C: (i,j) \in S_s\}$
       for $1 \leq s < m$.
    \item $V_m = \{x_{i,m}^L, x_{i,m}^R, y_{i,m}^L, y_{i,m}^R, q_i^Z, q_i^R: 1 \leq i \leq k\} \cup \{x_{i,m}^C, y_{i,m}^C, x_{i,m}^Z, y_{i,m}^Z, z_{i,m}^C: (i,j) \in S_m\}$.
    \item $V_i^q = \{q_j^Z, q_j^R, q_{i,j}^C : 1 \leq j \leq k\}$ for $1 \leq i \leq k$.
  \end{enumerate}
  It is easy to see that this is a proper path decomposition of graph $H$.
  Moreover, $X_i^p \subset V_i^p$, $X_i^q \subset V_i^q$ and $X_s^x,X_s^y,X_s^z \subset V_s$.
  As for the size bound, note that $|V_i^p| = |V_i^q| = 3k$ for $1 \leq i \leq k$,
  $|V_0| = 4k$ and  $|V_s| \le 11k$ for $1 \leq s \leq m$.
\end{proof}
Let us note that the above bound is not optimal, but we need only $O(k)$ bound.

\subsubsection{From hitting set to cycle cover}\label{sec:neg:hs2cycle}

We prove Point \ref{neg:prop:hs2cycle} by the following lemma:
\begin{lemma}\label{lem:neg-dir-hs2cycles}
If the initial \kkhittingsetname{} instance is a YES-instance, then the graph $G$ contains $\ell$ vertex-disjoint cycles.
\end{lemma}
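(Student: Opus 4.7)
The plan is to exhibit an explicit gadget-safe family $\mathcal{F}$ of vertex-disjoint cycles in $H$ of size $k + \sum_s |S_s|$; invoking Lemma~\ref{lem:r-in-many:use} will then automatically extend it to a family of $\ell$ vertex-disjoint cycles in $G$.

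Given the permutation hitting set $S = \{(i, \pi(i)) : i \in [k]\}$, I first fix for each $s \in [m]$ a representative row $i_s^\star \in \{i : (i, \pi(i)) \in S_s\}$, which is nonempty because $S$ hits $S_s$. Set $T_i^\star = \{s : i_s^\star = i\}$; note $\sum_i |T_i^\star| = m$. The family $\mathcal{F}$ has two kinds of cycles. The first kind consists of \emph{row cycles}: for a row $i$, start from the ``base'' big cycle that traverses $\mathcal{P}_i^L$ from $p_i^Z$ to $q_i^Z$, crosses via $q_{i,\pi(i)}^C$ to $\mathcal{P}_{\pi(i)}^R$, goes backward to $p_{\pi(i)}^R$, and returns via $p_{i,\pi(i)}^C$ to $p_i^Z$. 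The crucial observation is that at every $s \in T_i^\star$ there are two internally vertex-disjoint paths between the two halves of the base cycle: the $x$-rung $x_{i,s}^L - x_{i,s}^Z - x_{i,s}^C - x_{\pi(i),s}^R$ and the $y$-rung $y_{i,s}^L - y_{i,s}^Z - y_{i,s}^C - y_{\pi(i),s}^R$. Cutting the base cycle along both rungs at $s$ breaks it into a ``top'' and a ``bottom'' piece, each still a simple cycle that also covers all four rung vertices $x_{i,s}^Z, x_{i,s}^C, y_{i,s}^Z, y_{i,s}^C$. Iterating this over the steps in $T_i^\star$ yields $|T_i^\star| + 1$ vertex-disjoint cycles in row $i$. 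Summed over $i$, this contributes $k + m$ row cycles.

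The second kind consists of triangles $(x_{i,s}^Z, z_{i,s}^C, y_{i,s}^Z)$, one for every $(i, j) \in S_s$ with $i \neq i_s^\star$. This adds $|S_s| - 1$ triangles per step and $\sum_s |S_s| - m$ in total, making $|\mathcal{F}| = k + \sum_s |S_s|$. Verifying that $\mathcal{F}$ is vertex-disjoint and gadget-safe is then routine: disjointness across rows uses that $\pi$ is a permutation, so the paths $\mathcal{P}_i^L, \mathcal{P}_{\pi(i)}^R$ and the bipartite middles $p_{i,\pi(i)}^C, q_{i,\pi(i)}^C$ are distinct for distinct $i$; inside a row, consecutive pieces meet only at a split step whose rung vertices are absorbed into their boundaries; the triangles live at $(i,s)$ with $s \notin T_i^\star$ and hence avoid the split rungs. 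The gadget counts match exactly: each $X_i^p$ and $X_i^q$ has one covered vertex, each $X_s^x$ and $X_s^y$ has one covered vertex (from the unique split of row $i_s^\star$ at step $s$), and each $X_s^z$ has exactly $|S_s| - 1$ covered vertices, matching the $r$-parameters of the $1$-in-many and $(|X_s^z|-1)$-in-many gadgets. Lemma~\ref{lem:r-in-many:use} then contributes $\ell_G$ short cycles, giving $\ell$ disjoint cycles in $G$.

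The main technical obstacle will be writing down the split row cycles explicitly. For $T_i^\star = \{s_1 < \ldots < s_r\}$ one must specify the ``top'' piece (from $p_i^Z$ down to $s_1$, closed via $p_{i,\pi(i)}^C$ and the $x$-rung at $s_1$), each ``middle'' piece between consecutive split steps $s_a < s_{a+1}$ (going from $y_{i,s_a}^L$ through the $y$-rung at $s_a$, along $\mathcal{P}_{\pi(i)}^R$ to the $x$-rung at $s_{a+1}$, and back up $\mathcal{P}_i^L$), and the ``bottom'' piece (from $y_{i,s_r}^L$ down to $q_i^Z$, closed via $q_{i,\pi(i)}^C$), and verify that each is a simple cycle. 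Once this construction is written down, the remaining accounting is immediate.
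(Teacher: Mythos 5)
Your proof takes essentially the same route as the paper's: fix a representative row $i_s^\star$ (the paper's $i_s$) hitting each $S_s$, use triangles $(x_{i,s}^Z,y_{i,s}^Z,z_{i,s}^C)$ for the non-representative elements, split the big cycle of row $i$ along the $x$- and $y$-rungs at each $s$ with $i_s^\star=i$ into $|T_i^\star|+1$ vertex-disjoint cycles, and close with Lemma~\ref{lem:r-in-many:use}. The explicit ``top/middle/bottom'' pieces you describe are exactly the cycles $C(i,0),\ldots,C(i,h(i))$ from the paper, and your gadget-safety bookkeeping matches the paper's as well.
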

\begin{proof}
Let $S = \{(i,f(i)): 1 \leq i \leq k\}$ be the solution to the \kkhittingsetname{} instance. Recall that $S$ contains exactly
one element from each row and exactly one element from each column, thus $f$ is a permutation of $[k]$.

By Lemma \ref{lem:r-in-many:use}, it is sufficient to show a family of cycles $\mathcal{C}$ in $H$ that is gadget safe in $H$
and is of size $k + \sum_{s=1}^m |S_s|$.

For each $1 \leq s \leq m$, fix an index $1 \leq i_s \leq k$, such that $(i_s, f(i_s)) \in S \cap S_s$.
Let
$$\mathcal{C}_1 = \{(x_{i,s}^Z, y_{i,s}^Z, z_{i,s}^C): 1 \leq s \leq m, 1 \leq i \leq k, i \neq i_s\}.$$
Note that $\mathcal{C}_1$ is a family of $\sum_{s=1}^m (|S_s|-1)$ vertex-disjoint cycles, thus we need to find $k+m$ more.

Fix $i$, $1\leq i \leq k$, and let $\{1 \leq s \leq m: i_s = i\} = \{s_1, s_2, \ldots, s_{h(i)}\}$ and $s_1 < s_2 < \ldots  < s_{h(i)}$. Consider the following family of $h(i)+1$ cycles $\{C(i,j): 0 \leq j \leq h(i)\}$:
\begin{enumerate}
\item $C(i,0)$ consists of the path $(p_i^Z, p_{i,f(i)}^C, p_{f(i)}^R)$, the subpath of $\mathcal{P}_{f(i)}^R$ from $p_{f(i)}^R$ to $x_{f(i),s_1}^R$,
  the path $(x_{f(i),s_1}^R, x_{i,s_1}^C, x_{i,s_1}^Z, x_{i,s_1}^L)$ and the subpath of $\mathcal{P}_i^L$ from $x_{i,s_1}^L$ to $p_i^Z$;
\item $C(i,j)$ for $1 \leq j < h(i)$ consists of the path $(y_{i,s_j}^L, y_{i,s_j}^Z, y_{i,s_j}^C,y_{f(i),s_j}^R)$, the subpath of $\mathcal{P}_{f(i)}^R$ from $y_{f(i),s_j}^R$ to $x_{f(i),s_{j+1}}^R$,
  the path $(x_{f(i),s_{j+1}}^R, x_{i,s_{j+1}}^C, x_{i,s_{j+1}}^Z, x_{i,s_{j+1}}^L)$ and the subpath of $\mathcal{P}_i^L$ from $x_{i,s_{j+1}}^L$ to $y_{i,s_j}^L$;
\item $C(i,h(i))$ consists of the path $(y_{i,s_{h(i)}}^L, y_{i,s_{h(i)}}^Z, y_{i,s_{h(i)}}^C,y_{f(i),s_{h(i)}}^R)$, the subpath of $\mathcal{P}_{f(i)}^R$ from $y_{f(i),s_j}^R$ to $q_{f(i)}^R$,
  the path $(q_{f(i)}^R, q_{i,f(i)}^C, q_i^Z)$ and the subpath of $\mathcal{P}_i^L$ from $q_i^Z$ to $y_{i,s_{h(i)}}^L$.
\end{enumerate}
Note that
$$\mathcal{C}_2 = \{C(i,j): 1 \leq i \leq k, 0 \leq j \leq h(i)\}$$
is a family of $k+m$ vertex-disjoint cycles in $H$ and they are disjoint with $\mathcal{C}_1$.
Moreover, $\mathcal{C} := \mathcal{C}_1 \cup \mathcal{C}_2$ does not cover:
\begin{enumerate}
\item $X_i^p \setminus \{p_{i,f(i)}^C\}$ and $X_i^q \setminus \{q_{i,f(i)}^C\}$ for $1 \leq i \leq k$;
\item $X_s^x \setminus \{x_{i_s,s}^C\}$ and $X_s^y \setminus \{y_{i_s,s}^C\}$ for $1 \leq s \leq m$;
\item $z_{i_s,s}^C \in X_s^z $ for $1 \leq s \leq m$.
\end{enumerate}
Thus $\mathcal{C}$ is gadget safe in $H$.
An example showing packing of three cycles for $(i,f(i)) \in S$ and $h(i) = 2$ can be found in Fig.~\ref{fig:hs2cycles}.
\end{proof}

\begin{figure}[htbp]
\begin{center}
  \includegraphics{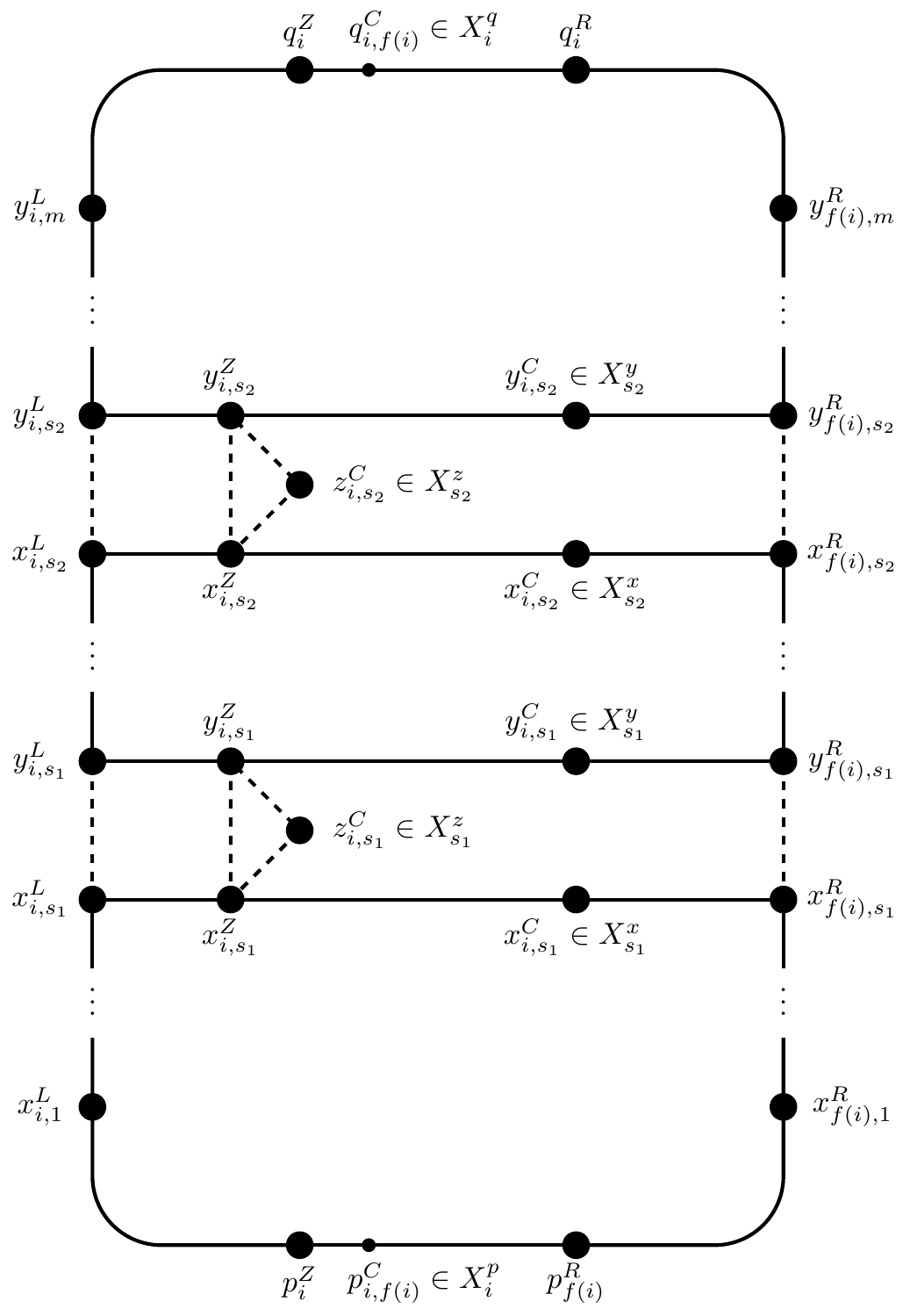}
\end{center}
\caption{An example how to pack three cycles for $(i,f(i)) \in S$ and $h(i) = 2$.}
\label{fig:hs2cycles}
\end{figure}

\subsubsection{From disjoint cycles to hitting set}\label{sec:neg:cycle2hs}

In this section we prove Point \ref{neg:prop:cycle2hs} by the following lemma:
\begin{lemma}\label{lem:neg-dir-cycles2hs}
If the graph $G$ contains at least $\ell$ vertex-disjoint cycles, then the initial \kkhittingsetname{} instance is a YES-instance.
\end{lemma}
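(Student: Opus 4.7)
The plan is to extract a permutation hitting set $S = \{(i, f(i)) : 1 \le i \le k\}$ directly from the cycle packing $\mathcal{C}$. First, by Lemma~\ref{lem:r-in-many:rev} we may assume $\mathcal{C}$ contains exactly $\ell_G$ gadget short cycles, and that the remaining subfamily $\mathcal{C}_H$ of $k + \sum_s |S_s|$ cycles is gadget safe in $H$. Call a cycle in $\mathcal{C}_H$ a \emph{triangle} if it has the form $(x_{i,s}^Z, z_{i,s}^C, y_{i,s}^Z)$, and \emph{big} otherwise. Contracting each row-path $\mathcal{P}_i^L$ (together with its attached $z$-gadgets) to a node $\widehat{P}_i$ and each column-path $\mathcal{P}_j^R$ to $\widehat{R}_j$ turns each big cycle into a closed walk in the bipartite multigraph $\Gamma$ whose edges correspond to the used crossing centres from $\bigcup_i X_i^p \cup \bigcup_i X_i^q \cup \bigcup_s X_s^x \cup \bigcup_s X_s^y$. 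A careful resource count---combining the gadget-safe bounds ($\leq \sum_s(|S_s|-1)$ covered $z$-vertices, $\leq 2k+2m$ used crossings) with a topological analysis of how cycles can thread the row-path $\mathcal{P}_i^L$ and its side branches---forces exactly $\sum_s(|S_s|-1)$ triangles, exactly $k+m$ big cycles each using exactly two crossings and no $z$-vertex, and each $1$-in-many gadget exactly full. In particular each big cycle induces a length-two closed walk in the bipartite multigraph $\Gamma$, so its two crossings are parallel edges between the same pair $(\widehat{P}_i, \widehat{R}_j)$.

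Second, let $p_{i,f(i)}^C$, $q_{i,g(i)}^C$, $x_{a(s),s}^C$, $y_{b(s),s}^C$ denote the uniquely used crossings of the respective gadgets. Since $x_{i,s}^C$ has degree two with neighbour $x_{i,s}^Z$, and the only $z$-vertex for index $s$ not covered by a triangle is $z_{i^*(s),s}^C$ (where $i^*(s) \in I_s$ denotes that index), any big cycle using $x_{i,s}^C$ must exit $x_{i,s}^Z$ via $x_{i,s}^L$ or $y_{i,s}^Z$; symmetric reasoning for $y_{i,s}^C$ yields $a(s) = b(s) = i^*(s)$. A parity argument in $\Gamma$ applied at each $\widehat{P}_i$ (the number of crossings of row $i$ directed toward any fixed $\widehat{R}_j$ must be even, since each cycle contributes $0$ or $2$) rules out $g(i)\ne f(i)$, so $g=f$. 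If $|f^{-1}(j)| \ge 2$ then $p_j^R$ would have to lie on a cycle using two $p^C$-crossings from distinct rows, so this cycle's two crossings would join $\widehat{R}_j$ to two different $\widehat{P}_\bullet$, contradicting the parallel-edge property; hence $f$ is a bijection on $[k]$.

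Third, to prove $\pi_s(i^*(s)) = f(i^*(s))$ for every $s$ (where $\pi_s(i)$ denotes the unique $j$ with $(i,j)\in S_s$), fix $j\in[k]$ and $i_0 = f^{-1}(j)$, and list the $\mathcal{P}_j^R$-vertices at which used crossings land, in path order: $p_j^R$, then $x_{j,s_r}^R, y_{j,s_r}^R$ for each $s_1 < \cdots < s_\ell$ enumerating $\{s' : \pi_{s'}(i^*(s')) = j\}$, and finally $q_j^R$. Each big cycle meeting $\mathcal{P}_j^R$ uses a sub-path between two landing points, and these sub-paths must be pairwise vertex-disjoint. Because vertex-disjoint intervals on a line are forced to be paired consecutively in position order, the pairing is uniquely determined. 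Combined with the parallel-edge property, each consecutive pair's two crossings must lie on the same $P$-side, which chains as $(p_j^R, x_{j,s_1}^R)\Rightarrow i^*(s_1)=i_0$, $(y_{j,s_r}^R, x_{j,s_{r+1}}^R)\Rightarrow i^*(s_r)=i^*(s_{r+1})$, and $(y_{j,s_\ell}^R, q_j^R)\Rightarrow i^*(s_\ell)=i_0$. Consequently $i^*(s') = i_0 = f^{-1}(\pi_{s'}(i^*(s')))$ for every $s'$ with $\pi_{s'}(i^*(s'))=j$, i.e., $\pi_{s'}(i^*(s'))=f(i^*(s'))$. Ranging over all $j$ gives this identity for every $s$, so the permutation $S = \{(i, f(i))\}$ contains $(i^*(s), f(i^*(s))) = (i^*(s), \pi_s(i^*(s))) \in S_s$ for every $s$ and is therefore a valid solution of the initial \kkhittingsetname{} instance.

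The main obstacle is the structural rigidity asserted in the first step---that every big cycle uses exactly two crossings and avoids all $z$-vertices, in particular that no ``internal'' cycle of $\mathcal{C}_H$ lies entirely within the extended row structure $\widehat{P}_i$---which requires careful case analysis combining the $1$-in-many gadget bounds with the path-topology of $\mathcal{P}_i^L$ and the tight resource counting. Given that, the argument of the third step is a clean non-crossing matching lemma on a line, and the rest reduces to bookkeeping with the parallel-edge property.
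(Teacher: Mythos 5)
Your strategy is genuinely different from the paper's.  The paper establishes the structure of the cycle packing by a layer-by-layer induction on $s=0,\ldots,m$, tracking cycles $C(i,s)$ across the separating edge sets $E_s$; you instead try to read off a permutation $f$ and special indices $i^*(s)$ globally, via a contraction to a bipartite multigraph $\Gamma$, a parity argument at each $\widehat{P}_i$, and a consecutive-interval matching on $\mathcal{P}_j^R$.  Steps 2 and 3 of your outline are clean, and the interval-matching argument in Step 3 is a pleasant replacement for the paper's induction; if Step 1 held, the rest would go through.

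The problem is that Step 1 is asserted, not proved, and the ``resource count'' you appeal to does not by itself force the rigidity you need.  In particular, nothing in the counting rules out a non-triangle cycle that is \emph{internal} to a single $\widehat{P}_i$ and uses \emph{zero} crossings.  The row subgraph together with its attached $z$-gadgets is not a tree: because $x_{i,s}^L$ and $y_{i,s}^L$ are consecutive on $\mathcal{P}_i^L$ while $x_{i,s}^Z$ and $y_{i,s}^Z$ are adjacent in the added triangle, both $(x_{i,s}^L, x_{i,s}^Z, y_{i,s}^Z, y_{i,s}^L)$ (a $4$-cycle, no $z$-vertex) and $(x_{i,s}^L, x_{i,s}^Z, z_{i,s}^C, y_{i,s}^Z, y_{i,s}^L)$ (a $5$-cycle) are cycles lying entirely inside $\widehat{P}_i$.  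Each uses the mandatory two $Z$-vertices but burns no crossing, so your chain ``$\#\text{triangles} \le \sum_s(|S_s|-1) \Rightarrow \#\text{big cycles} \ge k+m \Rightarrow$ each big cycle uses at most $2$ crossings $\Rightarrow$ equality'' does not close: the lower bound on crossings per non-triangle cycle is $0$, not $2$.  You would also still need to rule out non-triangle cycles that do use $z$-vertices, and to show that each $1$-in-many gadget contributes exactly one used crossing.  Ruling all of this out is exactly where the combinatorial work lies --- the paper does it via the induction along the cut sets $E_s$ --- and your write-up explicitly names it as ``the main obstacle'' without supplying the argument.  As written, therefore, the proof has a genuine gap at its foundational step, even though the downstream structure (parity, consecutive matching) is sound.
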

\begin{proof}
Let $\mathcal{C}_G$ be a family of vertex-disjoint cycles in $G$ with maximum possible number of cycles.
By the assumption, $|\mathcal{C}_G| \geq \ell$. By Lemma \ref{lem:r-in-many:rev} we can assume that
$\mathcal{C}_G$ is gadget safe in $G$ and let $\mathcal{C} \subset \mathcal{C}_G$ be the
gadget safe in $H$ family of size $|\mathcal{C}_G| - \ell_G \geq k+ \sum_{s=1}^m |S_s|$,
       i.e., $\mathcal{C}$ consists of those cycles in $\mathcal{C}_G$ that are not gadget short cycles.

We now analyze the family $\mathcal{C}$.
Informally speaking, we are going to show that $\mathcal{C}$ can be placed only in the way
as in the proof of Lemma \ref{lem:neg-dir-hs2cycles}.

First let us analyze subgraph $H[L \cup R \cup C]$. Note that this subgraph is a forest containing:
\begin{enumerate}
\item $k$ paths $\mathcal{P}_i^L$ for $1 \leq i \leq k$ without the endpoints, i.e., without $p_i^Z$ and $q_i^Z$;
\item $k$ trees consisting of paths $\mathcal{P}_j^R$ ($1 \leq j \leq k$) with attached leaves $p_{i,j}^C$, $q_{i,j}^C$ for $1 \leq i \leq k$
and $x_{i,s}^C$, $y_{i,s}^C$ for $1 \leq s \leq m$, $(i,j) \in S_s$.
\item $\sum_{s=1}^m |S_s|$ isolated vertices $z_{i,s}^C$, $1 \leq s \leq m$, $(i,j) \in S_s$.
\end{enumerate}

Consider now a subgraph of $H$ induced by $L \cup R \cup C \cup \{a\}$, where $a$ is an arbitrary vertex in $Z$. Note that this graph
is a forest. Indeed, each vertex in $Z$ has at most one edge incident to each connected component of $H[L \cup R \cup C]$:
\begin{enumerate}
\item for $1 \le i \le k$ the vertex $p_i^Z$ is adjacent to $x_{i,1}^L$ on path $\mathcal{P}_i^L$ and vertices $p_{i,j}^C$ for $1 \leq j \leq k$;
\item similarly for $1 \le i \le k$ the vertex $q_i^Z$ is adjacent to $y_{i,m}^L$ on path $\mathcal{P}_i^L$ and vertices $q_{i,j}^C$ for $1 \leq j \leq k$;
\item for $1 \leq s \leq m$, $(i,j) \in S_s$ the vertex $x_{i,s}^Z$ is adjacent to $z_{i,s}^C$, $x_{i,s}^L$ and $x_{i,s}^C$;
\item similarly for $1 \leq s \leq m$, $(i,j) \in S_s$ the vertex $y_{i,s}^Z$ is adjacent to $z_{i,s}^C$, $y_{i,s}^L$ and $y_{i,s}^C$.
\end{enumerate}
  Thus, each cycle from $\mathcal{C}$ contains at least two vertices from $Z$.
  But, $|Z| = 2k + 2\sum_{s=1}^m |S_s| = 2|\mathcal{C}|$.
  Thus, each cycle in $\mathcal{C}$ contains exactly two vertices from $Z$ and $\mathcal{C}$ covers $Z$.

Let $C_i \in \mathcal{C}$ be the cycle that covers $p_i^Z$.
The vertex $p_i^Z$ has neighbours $x_{i,1}^L$ and $X_i^p$. As we are allowed to choose only one
vertex from $X_i^p$, the cycle $C_i$ contains a path $(x_{i,1}^L, p_i^Z, p_{i,f(i)}^C, p_{f(i)}^R)$ for some $1 \leq f(i) \leq k$.
If the cycle $C_i$ contains the edge $p_{f(i)}^Rp_{j,f(i)}^C$ for $j \neq i$, it contains also $p_j^Z$ and $x_{j,1}^L$.
But $x_{j,1}^L$ and $x_{i,1}^L$ are in different connected components of $H[L \cup R \cup C]$, thus $C_i$ needs to contain a third vertex in $Z$,
a contradiction. Thus, $C_i$ contains the path $(x_{i,1}^L, p_i^Z, p_{i,f(i)}^C, p_{f(i)}^R, x_{f(i),1}^R)$. Note that this in particular
implies that $f$ is a permutation of $[k]$.

We claim that $S = \{(i,f(i)): 1 \leq i \leq k\}$ is a hitting set in the initial \kkhittingsetname{} instance. It clearly
contains exactly one element from each row and from each column. We now show that $S \cap S_s \neq \emptyset$ for each
$1 \leq s \leq m$.

Let
$$Z_s = \{p_i^Z : 1 \leq i \leq k\}  \cup \{x_{i,t}^Z, y_{i,t}^Z : 1 \leq t \leq s, (i,j)\in S_t\} \subset Z$$
for $0 \leq s \leq m$ and let
$$E_s = \{y_{i,s}^Lx_{i,s+1}^L : 1 \leq i \leq k\} \cup \{y_{j,s}^Rx_{j,s+1}^R: 1 \leq j \leq k\} \subset E(H)$$
for $1 \leq s < m$ and let
$$E_0 = \{p_i^Zx_{i,1}^L: 1 \leq i \leq k\} \cup \{p_j^Rx_{j,1}^R: 1\leq i \leq k\} \subset E(H).$$
Note that for $0 \leq s < m$ the set $E_s$ is a set of $2k$ edges that separate $Z_s$ from $Z \setminus Z_s$.

We now select cycles $C(i,s) \in \mathcal{C}$ for $1 \leq i \leq k$ and $0 \leq s \leq m$ with the following property:
for $1 \leq i \leq k$ and $1 \leq s < m$ the edges $y_{i,s}^Lx_{i,s+1}^L$ and $y_{f(i),s}^Rx_{f(i),s+1}^R$ lie on $C(i,s)$
and for $1 \leq i \leq k$ the edges $p_i^Lx_{i,1}^L$ and $p_{f(i)}^Rx_{f(i),1}^R$ lie on $C(i,0)$.
Note that cycles $C(i,0) = C_i$ satisfy the above property. We select cycles $C(i,s)$ by an induction on $s$
and in the $s$-th step of the induction we prove that there exists $1 \le i \le k$ such that $(i,f(i)) \in S_s$.

Let us fix $s$, $0 \leq s < m$. We show some more properties of cycles $\{C(i,s): 1\leq i \leq k\}$ that we use
in the induction step. Note that each cycle $C(i,s)$ contains two edges from $E_s$ and each edge from $E_s$
is contained in some $C(i,s)$. Moreover, each edge in $E_s$ is in different connected component of $H[L \cup R \cup C]$.
Thus, each cycle $C(i,s)$ contains: a subpath of $\mathcal{P}_i^L$, a subpath of the connected component of $H[L \cup R \cup C]$
containing $\mathcal{P}_{f(i)}^R$, a vertex in $Z_s$ and a vertex in $Z \setminus Z_s$.
This in particular implies that cycles $\{C(i,s): 1 \leq i \leq k\}$ are pairwise different.
As $E_s$ separates $Z_s$ from $Z \setminus Z_s$,
 each cycle in $\mathcal{C} \setminus \{C(i,s): 1 \leq i \leq k\}$ contains either two vertices in $Z_s$, or two vertices in $Z \setminus Z_s$.

We now perform an induction step. Let $1 \leq s \leq m$ and assume that we have selected cycles $C(i,s-1)$ for $1 \leq i \leq k$.

Let $z_{i,s}^C$ be a (arbitrarily chosen) vertex not covered by $\mathcal{C}$, where $(i,j) \in S_s$.
Let us focus on the vertex $x_{i,s}^Z$. It has three neighbours apart from $z_{i,s}^C$: vertices $x_{i,s}^L$, $x_{i,s}^C$ and $y_{i,s}^Z$.
The vertex $x_{i,s}^C$ has degree two, and the other neighbour is $x_{j,s}^R$. As $x_{i,s}^L \in C(i,s-1)$ and
$x_{j,s}^R \in C(f^{-1}(j),s-1)$, the vertex $x_{i,s}^Z$ lies on $C(i,s-1)$ or $C(f^{-1}(j),s-1)$.
Both $C(i,s-1)$ and $C(f^{-1}(j),s-1)$ are not allowed to cover two vertices from $Z \setminus Z_{s-1}$, thus $x_{i,s}^Z$ and
$y_{i,s}^Z$ lie on different cycles in $\mathcal{C}$. But this means that $C(i,s-1)$ or $C(f^{-1}(j),s-1)$ contains
the path $(x_{i,s}^L, x_{i,s}^Z, x_{i,s}^C, x_{j,s}^R)$, thus $C(i,s-1) = C(f^{-1}(j),s-1)$. Since
$\{C(i,s-1): 1 \leq i \leq k\}$ are pairwise different, $j = f(i)$ and $(i,f(i)) \in S_s$.

Now focus on vertex $y_{i,s}^Z$. The vertex $x_{i,s}^Z$ is used on cycle $C(i,s-1)$, $y_{i,s}^Z \notin C(i,s-1)$ and we assumed the vertex $z_{i,s}^C$ is not covered by $\mathcal{C}$.
Thus, $y_{i,s}^Z$ lies on a cycle $C$ with path $(y_{i,s}^L, y_{i,s}^Z, y_{i,s}^C, y_{f(i),s}^R)$. As $x_{i,s}^L, x_{f(i),s}^R \in C(i,s-1)$,
  and we are allowed to cover only one vertex from $X_s^y$,
  $C$ contains a path $(x_{i,s+1}^L, y_{i,s}^L, y_{i,s}^Z, y_{i,s}^C, y_{f(i),s}^R, y_{f(i),s+1}^R)$ (if $s < m$) or
  $(q_i^Z, y_{i,s}^L, y_{i,s}^Z, y_{i,s}^C, y_{f(i),s}^R, q_{f(i)}^R)$ (if $s = m$).

Now focus on vertices $x_{i',s}^Z$ for $i' \neq i$. As $x_{i,s}^C$ is covered by $\mathcal{C}$, the vertex $x_{i',s}^C$ cannot be
covered too. Thus, $x_{i',s}^Z$ and $y_{i',s}^Z$ lie on the same cycle, say $C^{i'}$. As $C(i',s-1)$ is not allowed to cover two vertices
from $Z \setminus Z_{s-1}$, the vertex $x_{i',s}^L$ does not lie on $C^{i'}$, thus $C^{i'} = (x_{i',s}^Z, y_{i',s}^Z, z_{i',s}^C)$.

As vertices $x_{i',s}^Z$ and $y_{i',s}^Z$ are covered by the cycle $(x_{i',s}^Z, z_{i',s}^C, y_{i',s}^Z)$, the cycle $C(i',s-1)$ contains
the path $(x_{i',s}^L, y_{i',s}^L, x_{i',s+1}^L)$ (if $s<m$) or $(x_{i',s}^L, y_{i',s}^L, q_{i'}^Z)$ (if $s=m$).
As vertices $x_{i,s}^C$ and $y_{i,s}^C$ are covered by cycles $C(i,s-1)$ and $C$, the cycle $C(i',s-1)$ contains
path $(x_{f(i'),s}^R, y_{f(i'),s}^R, x_{f(i'),s+1}^R)$ (if $s<m$) or $(x_{f(i'),s}^R, y_{f(i'),s}^R, q_{f(i')}^R)$ (if $s=m$).

Thus we can put $C(i,s) = C$ and $C(i',s) = C(i',s-1)$ for $i' \neq i$ and the induction step is performed.
As we maintain the induction step up to $s=m$, for each $1 \leq s \leq m$ we prove that $(i,f(i)) \in S_s$, thus the initial \kkhittingsetname{} instance
is a YES-instance.
\end{proof}

\subsection{From undirected to directed \cyclepackingname{}}\label{sec:neg:undir2dir}

In this section we provide a reduction from undirected to directed \cyclepackingname{},
proving Theorem \ref{thm:negative-main} for directed \cyclepackingname{}.
\begin{lemma}\label{lem:neg:undir2dir}
Let $(G,\ell)$ be an instance of undirected \cyclepackingname{}.
Then we can construct in polynomial time an equivalent instance $(G',\ell')$ of directed \cyclepackingname{}.
Moreover, given a path decomposition of $G$ of width $p$, in polynomial time we can construct
a path decomposition of $G'$ of width at most $p+3$.
\end{lemma}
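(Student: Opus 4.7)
The plan is to first invoke Lemma \ref{lem:neg:multi} so that we may assume the input graph $G$ is simple, at the cost of raising the pathwidth from $p$ to at most $p+2$. After this preprocessing we perform an edge-by-edge replacement: keep every vertex of $G$ as a vertex of $G'$, and for each undirected edge $e = uv$ of $G$ introduce one new ``gadget'' vertex $e^{\ast}$ together with the four arcs $u \to e^{\ast}$, $e^{\ast} \to u$, $v \to e^{\ast}$ and $e^{\ast} \to v$. Set $\ell' := \ell$. The intuition is that a directed cycle through $e^{\ast}$ must enter and leave $e^{\ast}$, and since simple directed cycles do not reuse vertices, it is forced to traverse exactly one of the two oriented paths $u \to e^{\ast} \to v$ or $v \to e^{\ast} \to u$, which plays the role of choosing an orientation of the undirected edge $e$.

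For the forward direction, I would take any family of vertex-disjoint cycles in $G$, fix an arbitrary orientation of each cycle $v_1 v_2 \cdots v_r v_1$, and replace it with the directed cycle $v_1 \to e^{\ast}_{12} \to v_2 \to e^{\ast}_{23} \to \cdots \to v_r \to e^{\ast}_{r1} \to v_1$; vertex-disjointness of the original family immediately yields vertex-disjointness of the resulting directed cycles, because distinct edges of $G$ contribute distinct gadget vertices. For the backward direction, I would observe that every arc of $G'$ joins a main vertex with a gadget vertex, so every simple directed cycle in $G'$ alternates between the two sides and has even length $2r$; projecting such a cycle onto its main vertices gives a closed walk $v_1 v_2 \cdots v_r v_1$ in $G$ whose consecutive pairs are joined by an edge (namely the one corresponding to the gadget vertex between them). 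This closed walk is in fact a simple cycle because the $v_i$ are pairwise distinct (the directed cycle is simple), and it has length $r \geq 3$ because $G$ has no loops (which would be required for $r = 1$) and no parallel edges (which would be required for $r = 2$). Projecting a vertex-disjoint family of directed cycles yields a vertex-disjoint family of undirected cycles of equal cardinality, so the two instances have the same answer.

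For the pathwidth bound, I would start from the path decomposition of $G$ of width at most $p+2$ produced by Lemma \ref{lem:neg:multi}. For each edge $e = uv$ of $G$, locate a bag containing both $u$ and $v$, and insert next to it a new bag obtained by adding $e^{\ast}$; this new bag covers all four arcs incident to $e^{\ast}$, and its size is larger by at most one, so the resulting path decomposition of $G'$ has width at most $p+3$, as required.

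The main thing to watch out for is the elimination of spurious short directed cycles: without the preprocessing, a loop at $v$ would give a directed $2$-cycle $v \to e^{\ast} \to v$, and parallel edges between $u$ and $v$ would give a directed $4$-cycle $u \to e^{\ast}_1 \to v \to e^{\ast}_2 \to u$ that does not correspond to any simple cycle in $G$. Both pathologies are ruled out exactly by the assumption that $G$ is simple, which is why the appeal to Lemma \ref{lem:neg:multi} is essential; apart from this subtlety the verification is routine.
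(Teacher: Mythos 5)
Your construction is flawed, and the flaw is precisely the one you identified but did not actually solve. By adding \emph{both} arcs $u \to e^{\ast}$ and $e^{\ast} \to u$, you create a directed $2$-cycle $u \to e^{\ast} \to u$ (and likewise $v \to e^{\ast} \to v$) for \emph{every} edge $e = uv$, not only for loops. Your claim that a simple directed cycle through $e^{\ast}$ ``is forced to traverse exactly one of the two oriented paths $u \to e^{\ast} \to v$ or $v \to e^{\ast} \to u$'' is false --- it may equally well leave the way it came in. Consequently the argument that ``$r \geq 3$ because $G$ has no loops (which would be required for $r=1$)'' is wrong: $r = 1$ just means a length-$2$ directed cycle $v_1 \to e^{\ast} \to v_1$, which exists for every edge incident to $v_1$, whether or not $e$ is a loop. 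A concrete counterexample: if $G$ is a single edge $uv$, then $G$ is acyclic so $(G,1)$ is a NO-instance, but your $G'$ contains the cycle $u \to e^{\ast} \to u$, so $(G',1)$ is a YES-instance with $\ell' = \ell = 1$.

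The paper avoids this by using an asymmetric gadget. Each undirected edge $e = uv$ is replaced by \emph{two} subdivision vertices $x_e^{uv}$, $x_e^{vu}$ and one extra vertex $z_e$, with arcs forming the $4$-cycle $u \to x_e^{uv} \to v \to x_e^{vu} \to u$ together with the two $2$-cycles $(x_e^{uv}, z_e)$ and $(x_e^{vu}, z_e)$. Because the arcs between $u,v$ and the $x$-vertices go in only one direction each, there is no spurious $2$-cycle through a main vertex; the only short cycles go through $z_e$, and at most one of them can appear in a vertex-disjoint packing (they share $z_e$). These forced short cycles are then paid for by setting $\ell' := \ell + |E(G)|$, and an exchange argument shows a maximum packing can be assumed to pick exactly one short cycle per edge, leaving the remaining $\ell$ cycles to project to a cycle packing of $G$. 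Your single-vertex edge gadget cannot distinguish ``traverse $e$'' from ``bounce back,'' and no choice of $\ell'$ fixes that, so the extra structure (in particular, directing the two paths through $e$ oppositely and burying the unavoidable $2$-cycles behind a private vertex $z_e$) is essential, not an optimization. Your pathwidth argument and the forward direction of the reduction are fine in outline, but they rest on a construction that does not yield an equivalent instance.
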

\begin{proof}
In many problems, a reduction from an undirected version to a directed one is performed
by simply changing each edge $uv$ into pair of arcs $(u,v)$ and $(v,u)$. However, in the
case of \cyclepackingname{}, such a reduction introduces many $2$-cycles $(u,v)$ that
do not have a counterpart in the original undirected graph. We circumvent this problem
by adding a directed version of $1$-in-many gadget.

To construct graph $G'$, for each edge $e = uv \in E(G)$ we introduce three extra vertices
$x_e^{uv}$, $x_e^{vu}$ and $z_e$, and we replace the edge $e$ with three cycles: $(u, x_e^{uv}, v, x_e^{vu})$,
  $(x_e^{uv}, z_e)$ and $(x_e^{vu}, z_e)$. In graph $G'$ we ask for $\ell' := \ell + |E(G)|$ vertex-disjoint cycles.
  The cycles $(x_e^{uv}, z_e)$ and $(x_e^{vu}, z_e)$ are called {\em{short cycles}}.

\begin{figure}[htbp]
\begin{center}
  \includegraphics{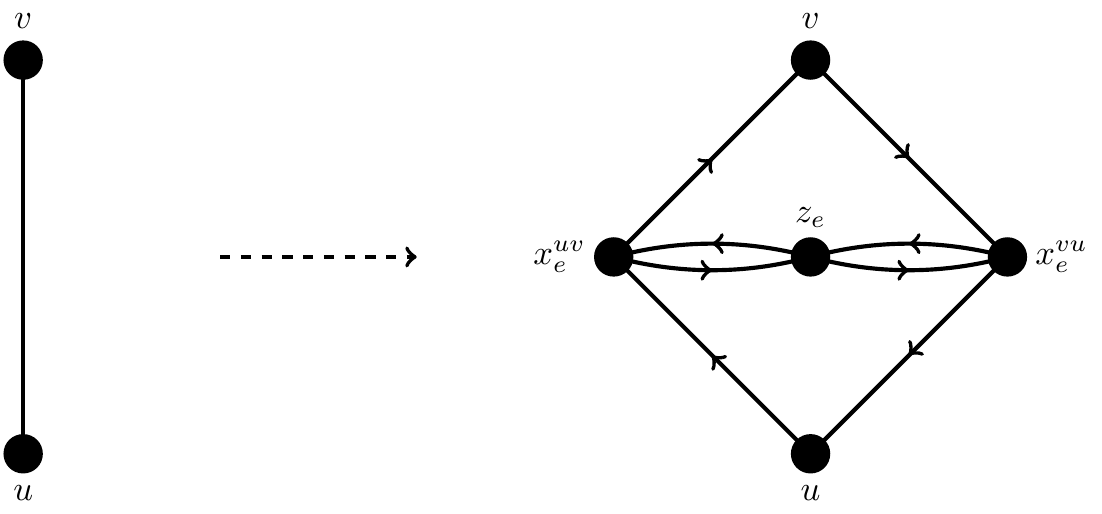}
\end{center}
\caption{The construction of the gadget replacing edge $uv$.}
\label{fig:undir2dir}
\end{figure}

First assume that we have a family $\mathcal{C}$ of $\ell$ vertex-disjoint cycles in $G$.
For each cycle $C \in \mathcal{C}$, we orient it in an arbitrary way, and translate it 
into a cycle $C'$ in $G'$ (i.e., if $C$ goes from $u$ to $v$ via edge $e=uv$, then in $G'$
    the cycle $C'$ uses arcs $(u, x_e^{uv})$ and $(x_e^{uv},v)$). In this way we create
a family $\mathcal{C'}$ of $\ell$ vertex-disjoint cycles in $G'$. This family has a property
that for each $e \in E(G)$ at least one vertex $x_e^{uv}$ and $x_e^{vu}$ is not covered. Thus
we can add a cycle $(x_e^{uv}, z_e)$ or $(x_e^{vu}, z_e)$ to $\mathcal{C}'$, obtaining a family
of $\ell'$ cycles.

In the other direction, let $\mathcal{C'}$ be a family of vertex-disjoint cycles in $G'$
that contains maximum possible number of short cycles
among families of vertex-disjoint cycles of maximum possible size. Assume $|\mathcal{C'}| \geq \ell'$.

We claim that $\mathcal{C'}$ contains $|E(G)|$ short cycles. As there are $|E(G)|$ vertices $z_e$,
it may not contain more. Assume that it contains less than $|E(G)|$ short cycles.
Let $e = uv$ be an edge, such that $z_e$ is not covered by a short cycle.
If $z_e$ is not covered by $\mathcal{C}'$, we can add the short cycle $(x_e^{uv},z_e)$ to $\mathcal{C}'$,
   possibly deleting a cycle covering $x_e^{uv}$. Otherwise, if $z_e$ is covered by a cycle $C$, then $x_e^{uv}$ or $x_e^{vu}$ (say $x_e^{uv}$)
   also belongs to $C$. But then we can replace $C$ with the cycle $(z_e, x_e^{uv})$.
   In both cases, we increase the number of short cycles in $\mathcal{C}'$ while not decreasing
   its size, a contradiction.

Let $\mathcal{C}$ be the other $\ell$ cycles in $\mathcal{C}'$ that are not short cycles.
Each such cycle $C'$ does not cover vertices $z_e$, thus if it covers $x_e^{uv}$, it contains
a subpath $(u,x_e^{uv}, v)$. Moreover, either $x_e^{uv}$ or $x_e^{vu}$ is covered by a short cycle in
$\mathcal{C}'$. Thus $C'$ translates into a cycle $C$ in $G$, by taking
an edge $e$ for each vertex $x_e^{uv}$ visited by $C'$. In this way we obtain
$\ell$ vertex-disjoint cycles in $G$.

We are left with the pathwidth bound. Assume we have a path decomposition of $G$
of width $p$. We construct a path decomposition of $G'$ in the following way.
For each $e \in E(G)$, we pick a bag $V_e$ that covers $e$. We create
a new bag $V_e' := V_e \cup \{x_e^{uv}, x_e^{vu}, z_e\}$ and insert it into the
path decomposition near $V_e$. It is easy to see that this is a proper path decomposition
of $G'$ and its width is at most $p+3$.
\end{proof}

\subsection{From \cyclepackingname{} to \maxcyclecovername{}}\label{sec:neg:packing2cover}

In this section we provide a reduction from \cyclepackingname{} to \maxcyclecovername{}
that proves Theorem \ref{thm:negative-main} for \maxcyclecovername{}, both in directed and undirected setting.
Formally, we prove the following lemma
\begin{lemma}\label{lem:neg:packing2cycle}
Let $(G,\ell)$ be an instance of (directed or undirected) \cyclepackingname{}.
Then we can construct in polynomial time an equivalent instance $(G', \ell')$ of (directed or undirected, respectively) \maxcyclecovername{}.
Moreover, given a path decomposition of $G$ of width $p$, in polynomial time we can construct a path decomposition of $G$ of width at most $p+3$.
\end{lemma}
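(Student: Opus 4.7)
The plan is to attach to each vertex $v \in V(G)$ a small gadget $H_v$ on three new vertices $u_v^1, u_v^2, u_v^3$ so that in every vertex-disjoint cycle cover of the modified graph $G'$ exactly one cycle per gadget is used, and this cycle either is the triangle on $\{u_v^1, u_v^2, u_v^3\}$ (in which case $v$ itself is free to appear on a packing cycle) or is a length-four cycle through $v$ and all three $u_v^i$ (in which case $v$ is already covered by its own gadget). In the undirected case I will add the edges $u_v^1u_v^2$, $u_v^2u_v^3$, $u_v^3u_v^1$, $vu_v^1$, $vu_v^2$; in the directed case I will add the arcs $(u_v^1, u_v^2), (u_v^2, u_v^3), (u_v^3, u_v^1), (v, u_v^1), (u_v^3, v)$. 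Setting $\ell' := \ell + |V(G)|$ will complete the construction.

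The heart of the argument is a local structural claim: any cycle in $G'$ that meets $V(H_v)$ is contained in $V(H_v) \cup \{v\}$, must pass through $u_v^3$, and is therefore either the triangle on $\{u_v^1, u_v^2, u_v^3\}$ or the length-four cycle described above. The point is that no gadget vertex has any incidence outside $V(H_v) \cup \{v\}$, and $u_v^3$ has (in the directed case) a unique in-neighbour $u_v^2$, so a cycle covering $u_v^3$ must leave via $u_v^1$ or via $v$, and each choice forces the rest of the cycle. In particular a triangle $(v, u_v^1, u_v^2)$ cannot occur in a valid cover since then $u_v^3$ would be orphaned. Granting this claim, from a packing $\mathcal{C}$ of size $\ell$ in $G$ I build a cycle cover of $G'$ by adding, per vertex $v$, the gadget triangle if $\mathcal{C}$ covers $v$ and the length-four cycle otherwise, producing $\ell + |V(G)|$ vertex-disjoint cycles covering $V(G')$; conversely, from any cycle cover of $G'$ of size at least $\ell'$, the claim strips off exactly $|V(G)|$ gadget cycles and leaves at least $\ell$ pairwise vertex-disjoint cycles entirely inside $G$.

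To obtain the path decomposition of $G'$, I will, for each $v$, pick a bag $V_v$ of the given decomposition that contains $v$ and insert beside it a new bag $V_v \cup \{u_v^1, u_v^2, u_v^3\}$; this introduces three new vertices to a bag of size at most $p+1$ and covers all edges/arcs of $H_v$ in that single bag, so the result is a valid path decomposition of width at most $p+3$. The main technical obstacle will be executing the structural claim cleanly in the directed setting, where one must verify by a short case analysis on the incoming and outgoing arcs at each gadget vertex that no spurious directed cycle (using, say, only two of the three $u_v^i$'s together with $v$) can occur---routine but important, since the whole counting argument hinges on exactly one cycle being contributed per gadget.
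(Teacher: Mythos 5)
Your construction is, up to relabeling of gadget vertices, identical to the paper's (the paper uses $a_v, b_v, c_v$ with arcs $(a_v,b_v),(b_v,c_v),(c_v,a_v),(c_v,v),(v,a_v)$, and in the undirected setting obtains the same gadget you do: a triangle attached to $v$ by two edges). The correctness argument via the degree-constrained bottleneck vertex (your $u_v^3$/$u_v^2$, the paper's $b_v$), the conversion of cycle families in both directions with $\ell'=\ell+|V(G)|$, and the path-decomposition modification are all the same as the paper's, so this is the same proof.
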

\begin{proof}
To construct $G'$, we take $G$ and for each $v \in V(G)$ we introduce three new vertices $a_v$, $b_v$ and $c_v$ and
five new arcs $(a_v,b_v)$, $(b_v,c_v)$, $(c_v, a_v)$, $(c_v, v)$ and $(v, a_v)$
(in the undirected setting, these arcs are edges without direction). We ask for a cycle cover with at least $\ell' := \ell + |V(G)|$ cycles.

\begin{figure}[htbp]
\begin{center}
  \includegraphics{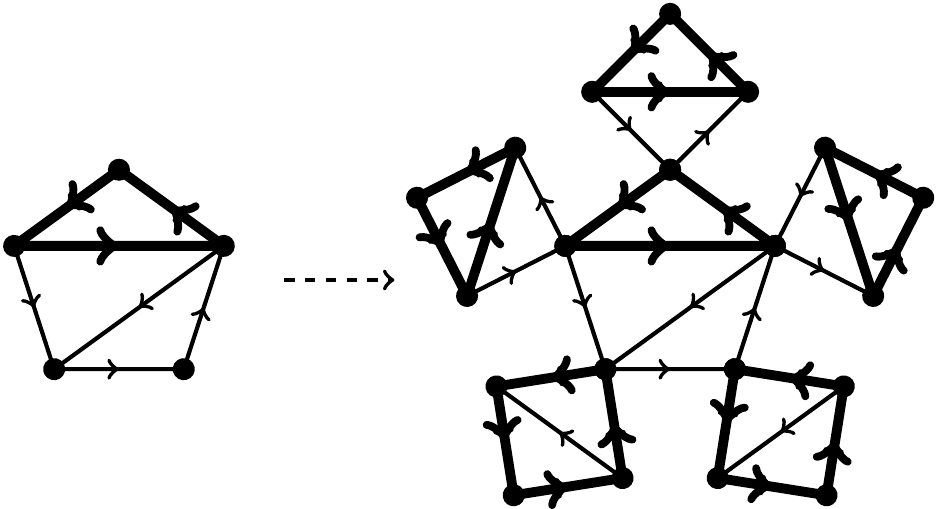}
\end{center}
\caption{An example of the reduction from \cyclepackingname{} to \maxcyclecovername{}
  in the directed setting, together with the conversion of cycle families.}
\label{fig:packing2cover}
\end{figure}

Let $\mathcal{C}$ be a set of $\ell$ vertex-disjoint cycles in $G$. To construct a cycle cover of size $\ell' = \ell + |V(G)|$ in $G'$, we take the cycles in $\mathcal{C}$
and for each vertex $v \in V(G)$: if $v$ is covered by $\mathcal{C}$, we take the cycle $(a_v, b_v, c_v)$, and otherwise we take the cycle $(a_v, b_v, c_v, v)$.

In the other direction, let $\mathcal{C}'$ be a cycle cover of $G'$ with at least $\ell'$ cycles. For each $v \in V(G)$ let $C_v$ be the cycle that covers $b_v$.
Note that $a_v, c_v \in C_v$. Thus $\mathcal{C}' \setminus \{C_v: v \in V(G)\}$ is a family of at least $\ell' - |V(G)| = \ell$ vertex-disjoint cycles in $G$.

We are left with the pathwidth bound. Assume we have a path decomposition of graph $G$ of width $p$. For each vertex $v \in V(G)$, we pick a single bag $V_v$
that contains $v$. We introduce a new bag $V_v' = V_v \cup \{a_v, b_v, c_v\}$ and insert it near $V_v$ in the path decomposition.
It is easy to see that this is a proper path decomposition of $G'$ and its width is at most $p+3$.
\end{proof}

\section{Negative results under SETH}\label{sec:negatives:seth}

\newcommand{\blocksize}{\eta} 
\newcommand{\blockvars}{\beta} 
\newcommand{\numcopies}{TODO} 
\newcommand{\numcopiessmall}{TODO} 
\newcommand{\iF}{t} 
\newcommand{\iv}{\ell} 
\newcommand{\ivup}{\alpha} 
\newcommand{\iB}{k} 
\newcommand{\iC}{i} 
\newcommand{\ib}{j} 
\newcommand{\groupg}{\mathbf{B}} 
\newcommand{\goalsize}{K} 
\newcommand{\terms}{T} 
\newcommand{\ivupseq}{\mathbf{A}} 

In this section we
 provide an evidence that our algorithms for \cvertexcover, \cdomset, \cfvs, \coct, \fvs, \steinertree and \exactleaf are probably optimal,
i.e., the exponential in treewith term has an optimal base of the exponent. In other words,
  we prove here Theorem \ref{thm:lower-seth-main}. The lower bounds are proven under the
assumption of the Strong Exponential Time Hypothesis, that is, we prove that a faster algorithm solving one of the considered problems
would imply a faster than exhaustive search algorithm for SAT.

We first note that the result for \exactleaf follows from the result for \cdomset,
as \exactleaf is not easier than \maxleaf, which is equivalent to \cdomset \cite{maxleaf-cds}.
Second, note that the lower bound for \steinertree follows from the lower bound for \cvertexcover
by the following simple reduction.
\begin{lemma}
Given a \cvertexcover instance $(G,k)$ together with a tree decomposition of $G$ of width $t$
(or path decomposition of $G$ of width $p$), one can construct in polynomial time an
equivalent instance $(G',T,k')$ of \steinertree together with a tree decomposition
of $G'$ of width $t$ (or path decomposition of $G$ of width at most $p+1$, respectively).
\end{lemma}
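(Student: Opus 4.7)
The plan is to use the standard edge-subdivision gadget: construct $G'$ from $G$ by subdividing every edge $uv \in E(G)$ with a new vertex $w_{uv}$, and set $T = \{w_{uv} : uv \in E(G)\}$ and $k' = k + |E(G)|$. I claim $(G,k)$ is a YES-instance of \cvertexcover{} iff $(G',T,k')$ is a YES-instance of \steinertree{}.

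For the forward direction, given a connected vertex cover $C \subseteq V(G)$ with $|C|=k$, I take $X = C \cup T$. Every edge $uv \in E(G)$ has at least one endpoint in $C$, so every terminal $w_{uv}$ is adjacent in $G'$ to some vertex of $C$; since $G[C]$ is connected, $G'[X]$ is connected, $T \subseteq X$ and $|X| = k + |E(G)| = k'$. For the backward direction, given a connected set $X \subseteq V(G')$ with $T \subseteq X$ and $|X| = k'$, let $C = X \cap V(G)$, so $|C| = k$. If some edge $uv \in E(G)$ had both endpoints outside $C$, the degree-$2$ terminal $w_{uv} \in X$ would be isolated in $G'[X]$ (both its $G'$-neighbours are $u,v \notin X$), contradicting connectedness (assuming $|T| \geq 2$, which we may assume by trivial special casing). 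Hence $C$ is a vertex cover of $G$. Connectedness in $G$ follows by taking any $G'[X]$-path between two vertices of $C$ and shortcutting through each intermediate subdivision vertex $w_{u'v'}$ via the edge $u'v' \in E(G)$.

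For the decomposition: given a tree decomposition $\treedecomp$ of $G$ of width $t$, for every $uv \in E(G)$ fix a bag $B_x$ of $\treedecomp$ containing both $u$ and $v$ (which exists by the definition of tree decomposition) and attach as a new leaf below $x$ a fresh bag $\{u,v,w_{uv}\}$ of size $3$. Each $w_{uv}$ appears in exactly one bag, the connectivity property is preserved, and every edge of $G'$ is covered by some bag. The new bags have size $3 \leq t+1$ (we may assume $t \geq 2$, else $G$ has only a constant number of edges and can be handled trivially), so the width remains $t$. For a path decomposition of width $p$, we cannot hang off new leaves, so instead I process edges sequentially: for each $uv \in E(G)$, find a bag $B$ on the path containing $\{u,v\}$ and insert, immediately after $B$, a new bag $B \cup \{w_{uv}\}$ of size $p+2$. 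Since every other bag still has size $\leq p+1$, and each $w_{uv}$ lives in only one bag, this yields a valid path decomposition of $G'$ of width $p+1$.

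There is no real obstacle here; the only point to be slightly careful about is the degenerate case where $|E(G)| \leq 1$ (for correctness) and where $t \leq 1$ (for the tree decomposition width bound), both of which are disposed of by noting that the original \cvertexcover{} instance is then trivially solvable in polynomial time.
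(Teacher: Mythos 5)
Your construction and decomposition modifications match the paper's proof exactly: subdivide every edge with a terminal, set $k' = k + |E(G)|$, hang a new leaf bag $\{u,v,w_{uv}\}$ off a node whose bag covers $uv$ for the tree decomposition, and insert a bag $B \cup \{w_{uv}\}$ immediately after a covering bag $B$ for the path decomposition. One small slip in your side remark: when $t \leq 1$ the graph $G$ is a forest, which can have up to $|V|-1$ edges rather than a constant number, so the degenerate case is trivial because $G'$ is then also a forest and admits a width-$\leq 1$ decomposition directly (which is what the paper does), not because $|E(G)| = O(1)$.
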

\begin{proof}
  Given the instance $(G,k)$, we construct the instance $(G',T,k')$ by subdiving
  each edge of $G$ with a terminal and setting $k' = |T| + k$
  (i.e., we replace each edge $e=uv$ in $G$ with a path $u,w_e,v$ and we take
   $T = \{w_e: e \in E(G)\}$).
	It is easy to see that $G$ contains a connected vertex cover
  of size at most $k$ if and only if $(G',T)$ contains a Steiner tree of 
	cardinality at most
  $k+|T| = k+|E(G)|$.

  Moreover, given a tree decomposition $\treedecomp$ of $G$ of width $t$, 
	we can construct a tree decomposition
  $\treedecomp'$ of $G'$ of width at most $t$ in the following manner. 
	If $G$ is a forest, $G'$ is a forest
  too and we construct the optimal tree decomposition of $G'$ in 
	polynomial time.
  Otherwise, $t \geq 2$ and for each $e \in E(G)$ we pick one fixed node 
	$x(e)$ whose bag covers $e$.
  We first set $\treedecomp' := \treedecomp$ and then, for each $uv=e \in E(G)$, 
	we attach a leaf node $y(e)$ to the node $x(e)$
  with the bag $\bag{y(e)}=\{u,v,w_e\}$.

  In the case of a path decomposition $\treedecomp$ of $G$ of width $p$,
	we can construct a path decomposition
  $\treedecomp'$ of $G'$ of width at most $p+1$ in the following manner. 
  For each $e \in E(G)$ we pick one fixed node 
	$x(e)$ whose bag covers $e$.
  We first set $\treedecomp' := \treedecomp$ and then, sequentially for each $uv=e \in E(G)$, 
	we create a new node $y(e)$ with the bag $\bag{y(e)}=\bag{x(e)} \cup \{w_e\}$
  and insert it immediately after the node $x(e)$ on the path decomposition.
\end{proof}

Thus we are left with the first five problems of Theorem \ref{thm:lower-seth-main}.
In the proofs we follow the same approach as Lokshtanov et al~\cite{treewidth-lower} in the lower bounds for problems without the connectivity requirement.
We mostly base on the \vertexcover and \domset lower bounds of \cite{treewidth-lower}, however our gadgets are adjusted to the considered problems.
For each problem we show a polynomial-time construction that, given a SAT instance with $n$ variables, constructs an equivalent instance of the considered problem,
together with a path decomposition of the underlying graph of width roughly $\log_3(2^n) = n/\log 3$
in the case of \cvertexcover and \fvs and of width roughly $\log_4(2^n) = n/2$
in the case of the other problems.
Thus, each of the following subsections consists of three parts. First, we give a construction procedure that, given a SAT formula $\Phi$, produces an instance of the considered problem.
Second, we prove that the constructed instance is equivalent to the formula $\Phi$. Finally, we show the claimed pathwidth bound.

Similarly as in \cite{treewidth-lower}, we prove pathwidth bounds for the constructed graphs using mixed search game.
Let us recall the definition from \cite{treewidth-lower}.
\begin{definition}[\cite{mixed-search,treewidth-lower}]
In a {\em{mixed search game}}, a graph $G$ is considered as a system of tunnels. Initially, all edges are
contaminated by a gas. An edge is cleared by placing searchers at both its end-points simultaneously
or by sliding a searcher along the edge. A cleared edge is re-contaminated if there is a path from
an uncleared edge to the cleared edge without any searchers on its vertices or edges. A search is a
sequence of operations that can be of the following types: (a) placement of a new searcher on a vertex;
(b) removal of a searcher from a vertex; (c) sliding a searcher on a vertex along an incident edge and
placing the searcher on the other end. A {\em{search strategy}} is winning if after its termination all edges are
cleared. The {\em{mixed search number}} of a graph G, denoted $\textrm{ms}(G)$, is the minimum number of searchers
required for a winning strategy of mixed searching on $G$.
\end{definition}
\begin{proposition}[\cite{mixed-search}]
  For a graph G, $\textrm{pw}(G) \leq  \textrm{ms}(G) \leq \textrm{pw}(G) + 1$.
\end{proposition}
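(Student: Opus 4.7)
The plan is to prove the two inequalities separately, each by an explicit construction transforming one object (a path decomposition / a winning mixed search strategy) into the other.

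For the upper bound $\mathrm{ms}(G) \leq \mathrm{pw}(G) + 1$, I would start from a nice path decomposition $B_1, B_2, \ldots, B_r$ of width $w = \mathrm{pw}(G)$, so every $|B_i| \leq w+1$, consecutive bags differ by exactly one vertex, and every edge has both endpoints in some bag. I then mimic the decomposition with a pure node-search: maintain the invariant that the set of searchers at step $i$ is exactly $B_i$. Going from $B_i$ to $B_{i+1}$ means either placing a new searcher (introduce) or removing one (forget), so the strategy uses only at most $w+1$ searchers at any time. Every edge $uv$ belongs to some $B_i$, hence is cleared at that step because both endpoints are occupied. Crucially, the standard tree-decomposition separator property guarantees that once an edge is cleared, it cannot be re-contaminated: any hypothetical path in $G$ from a still-contaminated edge to the cleared edge would have to pass through a vertex that lies in the intersection of two subtrees of the decomposition, and that vertex is currently occupied by a searcher.

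For the lower bound $\mathrm{pw}(G) \leq \mathrm{ms}(G)$, I would take a winning mixed search strategy using $k = \mathrm{ms}(G)$ searchers and read off from it a path decomposition of width at most $k$. Let $S_0 = \emptyset, S_1, \ldots, S_N = \emptyset$ be the sequence of searcher configurations. Define the bag at step $t$ to be $S_t$ if step $t$ is a placement or a removal, and $S_{t-1} \cup \{v\} = S_t \cup \{u\}$ if step $t$ is a slide of a searcher from $u$ to $v$ (so both endpoints of the slid edge lie in this bag). The resulting sequence of sets has bag size at most $k+1$, hence width at most $k$. I then verify the three defining properties of a path decomposition: (i) every vertex appears in some bag, because the initial position of any searcher that ever sits on $v$ is in a bag containing $v$ (and any isolated vertex can be handled by a trivial extra bag); (ii) every edge $uv$ appears in some bag, because the only two ways a mixed search clears an edge are having both endpoints simultaneously occupied (so both are in the corresponding $S_t$), or sliding along it (so both are in the slide-bag defined above); and (iii) the set of bags containing a fixed vertex $v$ is contiguous, which follows from the standard fact that one can assume the search strategy is \emph{monotone} (no re-contamination occurs), because the intervals during which $v$ carries a searcher collectively must ``surround'' the moments at which edges incident to $v$ are cleared, and the non-re-contamination property forbids any gap.

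The main obstacle is exactly this last point: the argument for contiguity and correctness rests on a monotonicity lemma saying that any winning mixed search strategy can be replaced by a monotone one using the same number of searchers. This is a nontrivial classical result of the same flavour as the LaPaugh / Bienstock--Seymour monotonicity theorems for node and edge search, and is the technical heart of the proof; once monotonicity is in hand, both the ``no re-contamination'' step in the upper bound and the ``contiguity of occupation intervals'' step in the lower bound become routine. A secondary subtlety is the off-by-one bookkeeping around slide operations, where one must be careful to charge the extra vertex in the slide-bag against the searcher that is about to leave $u$, so that the bag size never exceeds $k+1$; this is handled by always augmenting the \emph{pre-slide} configuration rather than the post-slide one, and by assuming without loss of generality that the strategy never performs two overlapping operations in a single step.
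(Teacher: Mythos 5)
The paper does not prove this proposition; it is imported verbatim from \cite{mixed-search} as a known black box, so there is no in-paper proof for your argument to be measured against. That said, your reconstruction is indeed the standard argument for relating mixed search number to pathwidth, and both directions are sound in outline.

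Your upper-bound direction is fully correct and essentially the classical node-search argument: a nice path decomposition of width $w$ yields a (monotone, by the separator property of $B_i$) node search using $w+1$ searchers, and node search is a special case of mixed search, so $\mathrm{ms}(G) \le \mathrm{ns}(G) \le \mathrm{pw}(G)+1$. Your lower-bound direction is also the right construction (bags are the searcher configurations, augmented on slides so both endpoints of the slid edge land in the same bag, giving width at most $k$), and you correctly locate the technical heart in the monotonicity theorem for mixed search (Bienstock--Seymour). One point worth being slightly more explicit about: monotonicity by itself asserts only that no cleared edge is recontaminated; to get the interval (contiguity) property you additionally need that an optimal monotone strategy never re-occupies a vertex it has already evacuated. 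This does follow from monotonicity by a short exchange argument --- when a searcher leaves $v$ without causing recontamination, all edges at $v$ are already uniformly cleared or uniformly contaminated, and in either case a later revisit of $v$ can be pruned --- but it is a separate step that a careful write-up should include, rather than an automatic consequence of the monotonicity lemma as your sketch suggests. Aside from that, the off-by-one bookkeeping around slides is handled correctly, and the overall structure is exactly what the cited reference does.
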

Moreover, in each case considered by us, the presented cleaning strategy easily yield a polynomial time
algorithm that constructs a path decomposition of $G$ of width not greater than the number of searchers used.

\renewcommand{\numcopies}{\ensuremath{m(2\blocksize n'+1)}\xspace}
\renewcommand{\numcopiessmall}{\ensuremath{(2\blocksize n'+1)}\xspace}

\subsection{\cvertexcover}

\begin{theorem}\label{thm:cvc-seth}
Assuming SETH, there cannot exist a constant $\eps>0$ and an algorithm that given an instance $(G=(V,E), k)$ together with a path decomposition of the graph $G$ of width $p$ solves the \cvertexcover problem in $(3-\eps)^p |V|^{O(1)}$ time.
\end{theorem}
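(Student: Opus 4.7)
The plan is to follow the framework of Lokshtanov, Marx and Saurabh \cite{treewidth-lower}, adapted to \cvertexcover. Given an instance of \sat{} on $n$ variables and $m$ clauses, I will construct an equivalent \cvertexcover{} instance $(G,k)$ together with a path decomposition of width $p \le \beta n' + O(m) + O(1)$, where the $n$ variables are partitioned into $n'$ groups of $\eta$ variables each and $\beta = \lceil \eta / \log_2 3 \rceil$, so $3^\beta \ge 2^\eta$. An algorithm running in $(3-\eps)^p |V|^{O(1)}$ time would then solve \sat{} in $((3-\eps)^{1/\log_2 3})^{n + O(\eta)} \mathrm{poly}(n,m) = (2 - \eps')^n \mathrm{poly}(n,m)$ for some $\eps' > 0$ when $\eta$ is chosen a large enough constant, contradicting SETH.

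The construction is a grid of $n'$ rows and roughly $m$ columns. For each row I build a long \emph{path gadget} whose intended behaviour in any optimum connected vertex cover encodes, at every column, one out of $3^\beta$ patterns that in turn encodes the current partial assignment of $\eta$ variables via a fixed injection $\{0,1\}^\eta \hookrightarrow \{\zero,\oneone,\onetwo\}^\beta$. The elementary brick of such a gadget is a short path $a\text{-}b\text{-}c$ on which any cover of size $2$ leaves exactly one of three possibilities: $b$ is out of the cover (so $a,c$ are both in), or $b$ is in the cover together with exactly one of $a,c$; using the Cut\&Count colouring $\{\zero,\oneone,\onetwo\}$ these correspond precisely to the three states of $b$. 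By chaining copies of this brick through auxiliary edges with forced terminals (pendant edges attached to a vertex $v$ force $v$ into any cover of bounded size), one can enforce that the state chosen in column $j$ is propagated unchanged to column $j+1$ — this is the same ``propagation by local equality'' principle as for \vertexcover{} in \cite{treewidth-lower}, only with three states per slot rather than two. Hence across $\beta$ parallel rows in one ``row-group'' we encode one of $2^\eta$ variable assignments.

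For each clause $C_s$ I attach a \emph{clause gadget} at column $s$ that is covered cheaply iff at least one literal of $C_s$ is set to true in the assignment encoded by the current column states. Technically this is a small tree of forced vertices connected to the relevant $\beta$ path-gadget vertices, designed so that covering it within a prescribed budget requires at least one of those vertices to be in a prescribed state from $\{\oneone,\onetwo\}$ (representing ``true literal''). The global budget $k$ is chosen as the sum of the local budgets over all gadgets; correctness in both directions (assignment $\Rightarrow$ cover, cover $\Rightarrow$ assignment) is then a local check inside each gadget. Connectivity of the cover is ensured by adding a light \emph{backbone}: a single path running through one forced vertex of every path gadget and every clause gadget, all of whose vertices are forced into the cover by pendant attachments. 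The backbone adds only $O(1)$ to the pathwidth and guarantees that all cover-vertices lie in one connected component, so we never have to worry that the intended solution fails the connectivity constraint.

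The pathwidth bound is proved via a mixed search strategy that sweeps the grid from left to right: at any time the searchers occupy the $\beta n'$ vertices of the current column in the $n'$ row-gadgets, a constant number of backbone vertices, and the $O(1)$ vertices of the clause gadget currently being cleaned. This yields $\textrm{ms}(G) \le \beta n' + O(1)$, hence $\textrm{pw}(G) \le \beta n' + O(1)$, and the search itself is constructive in polynomial time. The main obstacle I anticipate is not the variable gadget (the three-state ``middle vertex of a path'' brick is natural for vertex cover) but the \emph{clause gadget together with the connectivity requirement}: one must design the clause gadget so that (i) it cheaply detects any of the satisfying patterns in $\{\oneone,\onetwo\}^\beta$ without forcing a specific side of the cut (otherwise the reduction would distinguish the two sides and lose the factor $3$), and (ii) the gadget plus backbone keeps the pathwidth bound. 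This is exactly the place where the reduction differs from the \vertexcover{} lower bound of \cite{treewidth-lower}, and the rest of the argument essentially follows their template with $2$ replaced by $3$ in every count.
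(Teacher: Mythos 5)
Your high-level plan is correct (Lokshtanov--Marx--Saurabh framework, partition of variables into groups of constant size, a $3$-valued encoding per slot so that $3^\beta \ge 2^\eta$, pathwidth $\approx n/\log 3$, and the final information-theoretic calculation), and that is indeed the route the paper takes. However there are two concrete gaps that would make the proposed reduction fail.

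\textbf{The backbone destroys the lower bound.} You propose a path of pendant-forced vertices through every gadget so that ``we never have to worry that the intended solution fails the connectivity constraint.'' But once connectivity is free, you are effectively reducing \sat{} to \emph{plain} \vertexcover{} plus an inert constraint, and plain \vertexcover{} already has a $2^p|V|^{O(1)}$ algorithm; no reduction of that flavour can establish a $(3-\eps)^p$ lower bound. The factor $3$ (rather than $2$) for \cvertexcover{} comes precisely because the DP must distinguish, per bag vertex, ``not in cover'' and ``in cover on each side of the partial cut,'' and the cut information only carries content when connectivity is genuinely constrained. The paper's reduction exploits connectivity actively, in two places: (a) the forced vertex $x_{\iF,\iB}^S$ has no cover neighbour unless the matching $y_{\iF,\iB}^S$ is chosen, and the budget allows only one $y$ per gadget, so connectivity to the root records the gadget's $3^\blocksize$-valued state; (b) each forced clause vertex $c_{\iC,\ib}$ is adjacent only to $y$-vertices of satisfying sequences, so connectivity to the root \emph{is} the clause check. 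Your clause gadget, which detects satisfaction ``through the covering budget'' while the backbone supplies connectivity, cannot do this. Relatedly, your identification of the three $2$-covers $\{a,c\},\{a,b\},\{b,c\}$ of the path $a$-$b$-$c$ with the three Cut\&Count colours $\{\zero,\oneone,\onetwo\}$ of $b$ is not sound: $\oneone$ versus $\onetwo$ is a cut side, while $\{a,b\}$ versus $\{b,c\}$ is a choice of cover, and these are unrelated; in fact in both $\{a,b\}$ and $\{b,c\}$ the vertex $b$ is \emph{in} $X$ and on the same side of the cut as its covered neighbour.

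\textbf{Column count.} You say ``roughly $m$ columns,'' but propagation of the slot state along a row is not exact: the state can drift (monotonically) a bounded number of times, and there is no cheap gadget that enforces strict equality between adjacent columns. The paper uses $a=m(2\blocksize n'+1)$ columns: each of the $\blocksize n'$ slot-sequences can change its value at most twice, so by pigeonhole some contiguous block of $m$ columns sees no change, and only that block's clause gadgets need to be honest. With only $m$ columns the correctness direction ``cover $\Rightarrow$ assignment'' breaks, because you cannot extract a single well-defined assignment from a cover whose row states differ from column to column.
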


\paragraph{Construction}
Given $\eps > 0$ and an instance $\Phi$ of SAT with $n$ variables and $m$ clauses
we construct a graph $G$ as follows.
We first choose a constant integer $\blocksize$, which value depends on $\eps$ only.
The exact formula for $\blocksize$ is presented later.
We partition variables of $\Phi$ into groups $F_1,\ldots,F_{n'}$,
each of size at most $\blockvars = \lfloor \log 3^\blocksize \rfloor$,
hence $n'=\lceil n/\blockvars \rceil$.
Note that now $\blocksize n' \sim n/\log 3$, the pathwidth of $G$ will be roughly $\blocksize n'$.

First, we add to the graph $G$ two vertices $r$ and $r^\ast$, connected by an edge.
In the graph $G$ the vertex $r^\ast$ will be of degree one, thus any connected vertex cover of $G$
needs to include $r$. The vertex $r$ is called a {\em{root}}.

Second, we take $a = \numcopies$ and for each $1 \leq \iF \leq n'$
and $1 \leq \iv \leq \blocksize$ we create a path $\mathcal{P}_{\iF,\iv}$ consisting
of $2a$ vertices $v_{\iF,\iv,\iB}^\ivup$, $0 \leq \iB < a$ and $1 \leq \ivup \leq 2$,
   arranged in the following order:
   $$v_{\iF,\iv,0}^1, v_{\iF,\iv,0}^2,v_{\iF,\iv,1}^1,\ldots,v_{\iF,\iv,a-1}^1,v_{\iF,\iv,a-1}^2.$$
Furthermore we connect all vertices
$v_{\iF,\iv,\iB}^2$ ($0 \leq \iB < a$) and $v_{\iF,\iv,0}^1$ to the root $r$.
To simplify further notation we denote $v_{\iF,\iv,a}^1 = r$.
Let $\mathcal{V}$ be the set of all vertices on all paths $\mathcal{P}_{\iF,\iv}$.

We now provide a description of a group gadget $\groupg_{\iF,\iB}$,
which will enable us to encode $2^\blockvars$ possible assignments of one group
of $\blockvars$ variables.
Fix a block $F_\iF$, $1 \leq \iF \leq n'$, and a position $\iB$, $0 \leq \iB < a$.
For each $1 \leq \iv \leq \blocksize$ we create three vertices $h_{\iF,\iv,\iB}^\ivup$,
$1 \leq \ivup \leq 3$ that are pairwise adjacent and all are adjacent to the root $r$.
Moreover, we add edges $h_{\iF,\iv,\iB}^1v_{\iF,\iv,\iB}^1$, $h_{\iF,\iv,\iB}^2v_{\iF,\iv,\iB}^2$
and $h_{\iF,\iv,\iB}^3v_{\iF,\iv,\iB+1}^1$. 
Let $\mathcal{H}_{\iF,\iv,\iB} = \{h_{\iF,\iv,\iB}^\ivup: 1 \leq \ivup \leq 3\}$,
 $\mathcal{H}_{\iF,\iB} = \bigcup_{\iv=1}^\blocksize \mathcal{H}_{\iF,\iv,\iB}$.
 and $\mathcal{H} = \bigcup_{\iF=1}^{n'} \bigcup_{\iB=0}^{a-1} \mathcal{H}_{\iF,\iB}$.
Note that each (connected) vertex cover
in $G$ needs to include at least two out of three vertices from each set
$\mathcal{H}_{\iF,\iv,\iB}$.

In order to encode $2^\blockvars$ assignments we consider subsets of $\mathcal{H}_{\iF,\iB}$
that contain {\em{exactly one}} vertex out of each set $\mathcal{H}_{\iF,\iv,\iB}$.
For a sequence $S=(s_1,\ldots,s_\blocksize) \in \{1,2,3\}^\blocksize$ by $S(\mathcal{H}_{\iF,\iB})$
we denote
the set $\{h_{\iF,\iv,\iB}^{s_\iv} : 1 \le \iv \le \blocksize\}$.
For each sequence $S \in \{1,2,3\}^\blocksize$ we add three vertices
$x_{\iF,\iB}^S$, $x_{\iF,\iB}^{S\ast}$ and $y_{\iF,\iB}^S$,
where $x_{\iF,\iB}^S$ is also adjacent to all the vertices of $S(\mathcal{H}_{\iF,\iB})$
(recall that $\blocksize$ and $\blockvars$ are constants depending only on $\eps$).
We add edges $x_{\iF,\iB}^Sx_{\iF,\iB}^{S\ast}$, $x_{\iF,\iB}^Sy_{\iF,\iB}^S$ and $y_{\iF,\iB}^Sr$.
In the graph $G$ the vertices $x_{\iF,\iB}^{S\ast}$ are of degree one, thus any connected
vertex cover in $G$ needs to include all vertices $x_{\iF,\iB}^S$.
Let $\mathcal{Y}_{\iF,\iB} = \{y_{\iF,\iB}^S: S \in \{1,2,3\}^\blocksize\}$
for $1 \leq \iF \leq n'$, $0 \leq \iB < a$
and $\mathcal{Y} = \bigcup_{\iF=1}^{n'} \bigcup_{\iB=0}^{a-1} \mathcal{Y}_{\iF,\iB}$.

Additionally we add two adjacent vertices $z_{\iF,\iB}$ and $z_{\iF,\iB}^\ast$ and
connect $z_{\iF,\iB}$ to vertices $y_{\iF,\iB}^S$ for all $S \in \{1,2,3\}^\blocksize$.
Again, the vertex $z_{\iF,\iB}^\ast$ is of degree one in $G$ and forces $z_{\iF,\iB}$
to be included in any connected vertex cover of $G$.

\begin{figure}
\begin{center}
\includegraphics{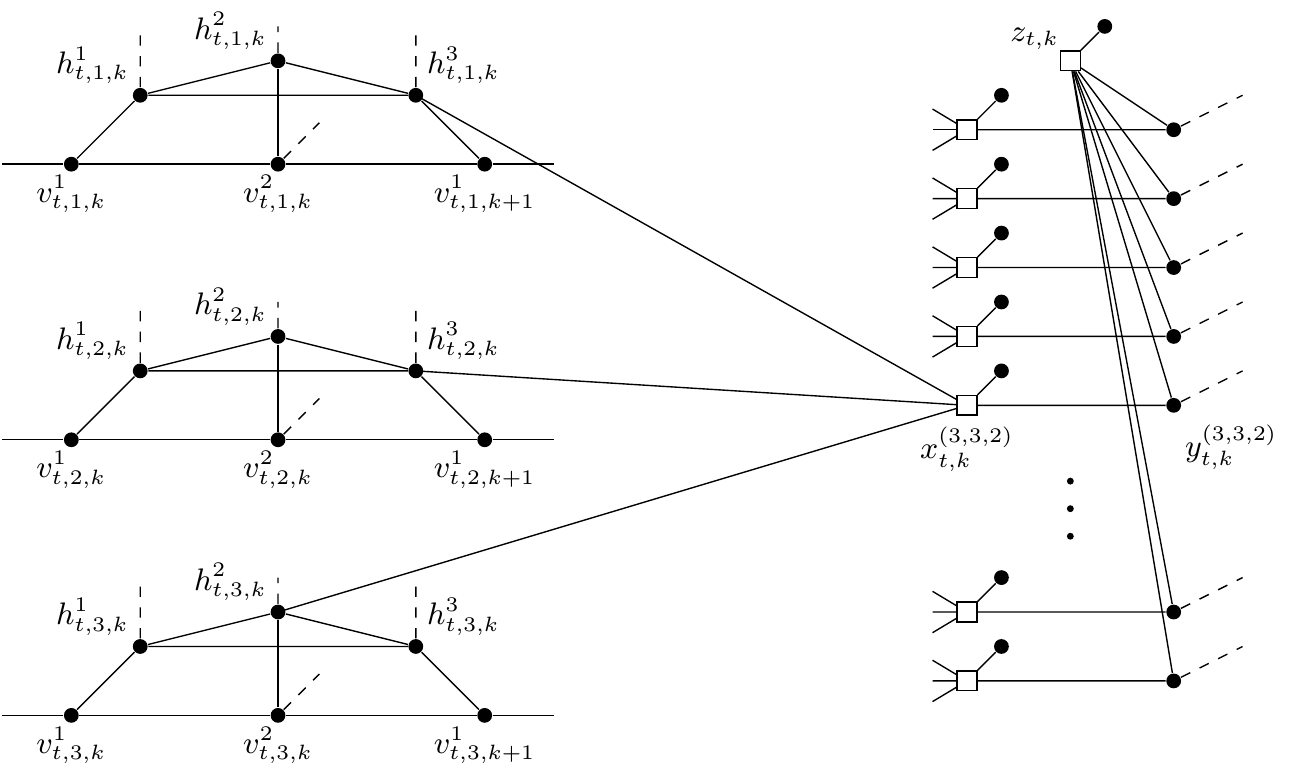}
\caption{Group gadget $\groupg_{\iF,\iB}$ for $\blocksize=3$. Dashed edges are connecting a vertex with the root $r$. Vertices that have a pendant neighbour and thus
  need to be included in any connected vertex cover of $G$
  are presented as squares.  }
  \label{fig:seth-cvc}
  \end{center}
  \end{figure}

The above step finishes the construction of the group gadgets needed to encode
an assignment and now we add vertices used
to check the satisfiability of the formula $\Phi$.
Observe that for a group of variables $F_\iF$
there are at most $2^{\blockvars}$ possible assignments
and there are $3^\blocksize \ge 2^{\blockvars}$ vertices $x_{\iF,\iB}^S$
for sequences $S$ from the set $\{1,2,3\}^\blocksize$ in each group gadget $\groupg_{\iF,\iB}$,
hence we can assign a {\em unique} sequence $S$ to each assignment.
Let $C_0,\ldots,C_{m-1}$ be the clauses of the formula $\Phi$.
For each clause $C_\iC$ we create $\numcopiessmall$ pairs of adjacent vertices
$c_{\iC,\ib}$ and $c_{\iC,\ib}^\ast$, one for each $0 \le \ib < \numcopiessmall$.
The vertex $c_{\iC,\ib}^\ast$ is of degree one in $G$ and therefore forces any connected vertex
cover of $G$ to include $c_{\iC,\ib}$.
The vertex $c_{\iC,\ib}$ can only be connected to
gadgets $\groupg_{\iF,m\ib+\iC}$ for $1 \le \iF \le n'$.
For each group of variables $F_\iF$ we consider all sequences
$S \in \{1,2,3\}^\blocksize$ that correspond 
to an assignment of $F_\iF$ satisfying the clause $C_\iC$ (i.e., one of the variables of $F_\iF$ is assigned a value such that $C_\iC$ is already satisfied).
For each such sequence $S$ and for each $0 \le \ib < \numcopiessmall$
we add an edge $y_{\iF,m\ib+\iC}^S c_{\iC,\ib}$.

We can view the whole construction as a matrix of group gadgets, where each row corresponds to some group of variables $F_\iF$
and each column is devoted to some clause in such a way that each clause gets $\numcopiessmall$ private columns (but not consecutive)
of the group gadget matrix, as in Figure \ref{fig:seth-cvcmatrix}.

\begin{figure}[htbp]
\begin{center}
\includegraphics{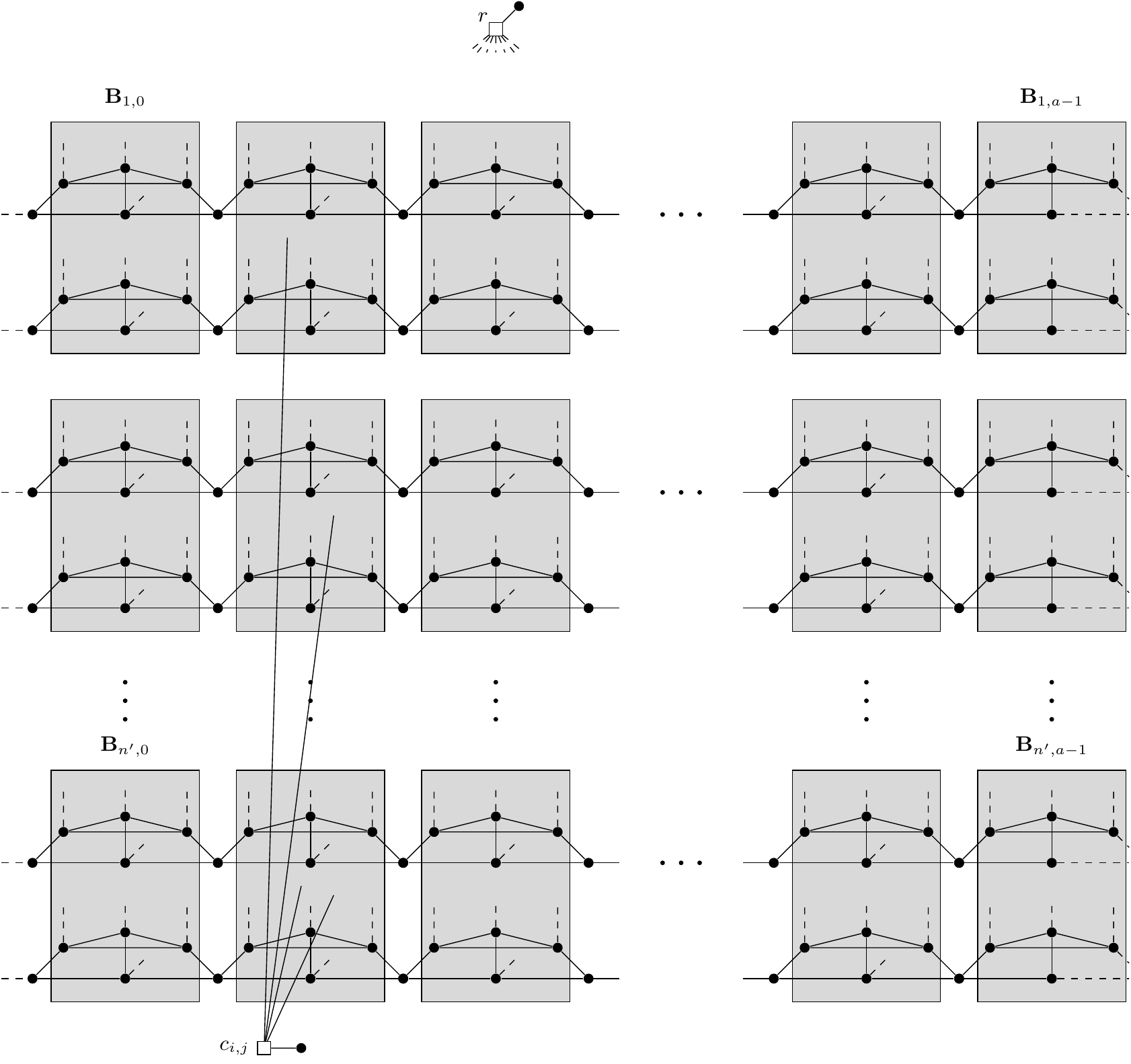}
\caption{Gray rectangles represent group gadgets.
    Dashed edges connect have
  the root vertex $r$ as one of the endpoints. 
  Vertices that have a pendant neighbour and thus
    need to be included in any connected vertex cover of $G$
    are presented as squares.
}
\label{fig:seth-cvcmatrix}
\end{center}
\end{figure}

Finally, let $\goalsize = \blocksize n' \cdot 3a + (3^\blocksize + 2)n'a + a + 1$ be the size
of the vertex cover we ask for.

\paragraph{Correctness}

\begin{lemma}\label{lem:cvc-seth:corr1}
If $\Phi$ has a satisfying assignment, then there exists a connected vertex cover in $G$ of size $\goalsize$.
\end{lemma}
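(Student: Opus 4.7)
The plan is to construct the connected vertex cover $C$ directly from a fixed satisfying assignment $\tau$ of $\Phi$. First I would restrict $\tau$ to each block $F_\iF$ and, via the fixed injection from assignments of $F_\iF$ into $\{1,2,3\}^\blocksize$, encode it as a sequence $S_\iF=(s_{\iF,1},\ldots,s_{\iF,\blocksize})$. The cover $C$ will be the union of four disjoint parts: (i) all forced vertices $r$, $x_{\iF,\iB}^S$, $z_{\iF,\iB}$, $c_{\iC,\ib}$ (they are forced by their pendant neighbours $r^\ast$, $x_{\iF,\iB}^{S\ast}$, $z_{\iF,\iB}^\ast$, $c_{\iC,\ib}^\ast$), contributing $1+3^\blocksize n'a+n'a+a$ vertices; (ii) inside every triangle $\mathcal{H}_{\iF,\iv,\iB}$ the two vertices $\mathcal{H}_{\iF,\iv,\iB}\setminus\{h_{\iF,\iv,\iB}^{s_{\iF,\iv}}\}$, contributing $2\blocksize n'a$ vertices; (iii) on every path $\mathcal{P}_{\iF,\iv}$ exactly $a$ vertices chosen according to $s_{\iF,\iv}$, namely all $v_{\iF,\iv,\iB}^1$ if $s_{\iF,\iv}=1$, all $v_{\iF,\iv,\iB}^2$ if $s_{\iF,\iv}=2$, and $v_{\iF,\iv,0}^2$ together with $v_{\iF,\iv,\iB}^1$ for $1\le \iB<a$ if $s_{\iF,\iv}=3$, contributing $\blocksize n'a$ vertices; (iv) for every gadget $\groupg_{\iF,\iB}$ the single vertex $y_{\iF,\iB}^{S_\iF}$, contributing $n'a$ vertices. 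Summing (i)--(iv) gives exactly $\goalsize=3\blocksize n'a+(3^\blocksize+2)n'a+a+1$.

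Next I would verify that $C$ covers every edge. Edges to pendants, edges inside the triangles $\mathcal{H}_{\iF,\iv,\iB}$, and edges to $r$ are trivially covered by (i)--(ii) and $r$ itself. Edges incident to $y$-vertices are all covered because their other endpoints ($r$, the corresponding $x$, the corresponding $z$, the corresponding $c$) are in $C$ by (i). The three ``bridge'' edges $h^1v^1$, $h^2v^2$, $h^3v^1_{\iB+1}$ are handled by (ii) except the one hitting the excluded $h$, which is covered by the path vertex picked by (iii); an elementary case analysis (routine for $s_{\iF,\iv}\in\{1,2\}$, and for $s_{\iF,\iv}=3$ using the inclusion of $v_{\iF,\iv,0}^2$ and the fact that $v_{\iF,\iv,a}^1=r$) also shows that (iii) covers every edge of $\mathcal{P}_{\iF,\iv}$.

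The main delicate point is connectivity, and it is exactly what determines the choice in (iv). The root $r$ is directly adjacent to every included $h$-vertex, to every included $v^2$-vertex, to $v_{\iF,\iv,0}^1$, and to every $y_{\iF,\iB}^{S_\iF}$; the remaining included path vertices $v_{\iF,\iv,\iB}^1$ with $\iB\ge 1$ attach to $r$ through the included $h$-vertex (either $h^3_{\iF,\iv,\iB-1}$ if $s_{\iF,\iv}\in\{1,2\}$, or $h^1_{\iF,\iv,\iB}$ if $s_{\iF,\iv}=3$). For $x_{\iF,\iB}^S$ with $S\neq S_\iF$, some coordinate $\iv$ satisfies $S_\iv\neq s_{\iF,\iv}$, so $x_{\iF,\iB}^S$ is adjacent to an in-cover vertex of $\mathcal{H}_{\iF,\iv,\iB}$. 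The remaining $x_{\iF,\iB}^{S_\iF}$ and the vertex $z_{\iF,\iB}$ reach $r$ precisely through $y_{\iF,\iB}^{S_\iF}$, which is added by (iv); this is why the budget of $n'a$ is tight and why (iv) is forced to pick exactly the sequence $S_\iF$ that corresponds to $\tau$.

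The only genuinely global argument is the connectivity of the clause vertices, and this is where the satisfiability of $\Phi$ enters. The vertex $c_{\iC,\ib}$ has, apart from its pendant $c^\ast_{\iC,\ib}$, only the neighbours $y_{\iF,m\ib+\iC}^S$ with $S$ encoding a partial assignment of $F_\iF$ satisfying $C_\iC$. Since $\tau$ satisfies $\Phi$, there exists a block $F_{\iF^\star}$ containing a literal of $C_\iC$ made true by $\tau$, hence $S_{\iF^\star}$ satisfies $C_\iC$; then $y_{\iF^\star,m\ib+\iC}^{S_{\iF^\star}}$ lies in $C$ by (iv), so $c_{\iC,\ib}$ is linked to $r$ via this $y$-vertex. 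The hard part of the proof is to lay out (iv) and (iii) simultaneously so that all three non-trivial connectivity obligations ($x_{\iF,\iB}^{S_\iF}$, $z_{\iF,\iB}$, and every $c_{\iC,\ib}$) are settled by the single family of chosen $y$-vertices while staying within the tight budget.
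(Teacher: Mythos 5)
Your construction coincides with the paper's proof: the same forced set $X_{\textrm{force}}$, the same $X_h$ (the two $h$-vertices complementary to $S_\iF(\iv)$), the same $X_y=\{y_{\iF,\iB}^{S_\iF}\}$, the same size accounting, and the same connectivity argument routing each $x_{\iF,\iB}^S$ through a surviving $h$-vertex and each clause vertex $c_{\iC,\ib}$ through the $y$-vertex of a satisfying block. The only deviation is cosmetic: for $s_{\iF,\iv}=3$ the paper takes all $v^1_{\iF,\iv,\iB}$, while you swap $v^1_{\iF,\iv,0}$ for $v^2_{\iF,\iv,0}$; a quick check confirms this alternative also covers $\mathcal{P}_{\iF,\iv}$ and connects to $r$, so both choices are valid.
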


\begin{proof}
Given a satisfying assignment $\phi$ of the formula $\Phi$ we construct a connected
vertex cover $X \subseteq V$ as follows. 
Let $X_{\textrm{force}}$ be the set of vertices 
that are forced to be in any connected vertex cover of $G$,
that is $r$, $c_{\iC,\ib}$ for $0 \leq \iC < m$, $0 \leq \ib < \numcopiessmall$,
$x_{\iF,\iB}^S$ for $1 \leq \iF \leq n'$,
$0 \leq \iB < \numcopies$, $S \in \{1,2,3\}^\blocksize$
and $z_{\iF,\iB}$ for $1 \leq \iF \leq n'$, $0 \leq \iB < \numcopies$.
Note that $|X_{\textrm{force}}| = 1 + a + (3^\blocksize+1)n'a$.

For each group of variables $F_\iF$ we consider the sequence $S_\iF \in \{1,2,3\}^\blocksize$
which corresponds to the restriction of the assignment $\phi$ to the variables of $F_\iF$.
Let
$$X_h = \bigcup_{\iF=1}^{n'} \bigcup_{\iB=0}^{a-1} \bigcup_{\iv=1}^\blocksize \{h_{\iF,\iv,\iB}^\ivup: \ivup \neq S_\iF(\iv)\}.$$
The set $X_h$ includes exactly two vertices out of each set $\mathcal{H}_{\iF,\iv,\iB}$, thus
$|X_h| = \blocksize n' \cdot 2a$. Moreover, we define the set $X_v$ to contain all vertices
$v_{\iF,\iv,\iB}^2$
if $S_\iF(\iv) = 2$, and all vertices $v_{\iF,\iv,\iB}^1$ otherwise
($1 \leq \iF \leq n'$, $1 \leq \iv \leq \blocksize$, $0 \leq \iB < a$). The set $X_v$ includes every other vertex on each path $\mathcal{P}_{\iF,\iv}$,
thus $|X_v| = \blocksize n' a$.

Finally, let us define the set $X_y$ to be the set of all vertices
$y_{\iF,\iB}^{S_\iF}$ for all $1 \leq \iF \leq n'$ and $0 \leq \iB < a$.
Let $X = X_{\textrm{force}} \cup X_h \cup X_v \cup X_y$. 
Note that as $|X_y| = n'a$ we have $|X| = \goalsize$. We now verify that $X$ is a
connected vertex cover of $G$.

First, we verify that $G \setminus X$ is an edgeless graph.
\begin{enumerate}
\item The vertices $r^\ast$, $c_{\iC,\ib}^\ast$, $x_{\iF,\iB}^{S\ast}$
and $z_{\iF,\iB}^\ast$ are isolated in $G \setminus X$, as their single neighbours in $G$
are included in $X_{\textrm{force}}$.
\item The vertices $v_{\iF,\iv,\iB}^\ivup$ that are not in $X$ are isolated in $G \setminus X$,
as $X_v$ contains every other vertex on each path $\mathcal{P}_{\iF,\iv}$
and we chose $\ivup$ in such a manner that the neighbours of $v_{\iF,\iv,\iB}^\ivup$
from $\mathcal{H}$ are in $X_h$.
\item The vertices $h_{\iF,\iv,\iB}^{S_\iF(\iv)}$
are isolated in $G \setminus X$, since their single neighours on paths $\mathcal{P}_{\iF,\iv}$
are in $X_v$ and all other neighbours of $h_{\iF,\iv,\iB}^{S_\iF(\iv)}$ are
in $X_{\textrm{force}}$ and in $X_h$.
\item Finally, the vertices $y_{\iF,\iB}^S$ are isolated in $G \setminus X$
since their neighbourhoods are contained in $X_{\textrm{force}}$.
\end{enumerate}

To finish the proof we need to verify that $G[X]$ is connected.
We ensure it by showing that in $G[X]$ each vertex in $X$ is connected to the root $r$.
\begin{enumerate}
\item The claim is obvious for $X_h$ and $X_y$, as they are contained in the neighbourhood of $r$.
\item If vertices $v_{\iF,\iv,\iB}^2$ belong to $X_v$, they are connected to $r$ by
direct edges. Otherwise, the vertices $v_{\iF,\iv,\iB}^1$
are connected via vertices $h_{\iF,\iv,\iB}^1$ or
$h_{\iF,\iv,\iB-1}^3$ (with the exception of vertices $v_{\iF,\iv,0}^1$ that are connected directly).
\item Each vertex $x_{\iF,\iB}^S$ for $S \neq S_\iF$ is connected to $r$
via any vertex $h_{\iF,\iv,\iB}^{S(\iv)}$ for which $S(\iv) \neq S_\iF(\iv)$.
\item Vertices $x_{\iF,\iB}^{S_\iF}$ and $z_{\iF,\iB}$ are connected to $r$ via $y_{\iF,\iB}^{S_\iF}$.
\item Finally, each vertex $c_{\iC,\iB}$ is connected to $r$ via any vertex $y_{\iF,\iB}^{S_\iF}$
for which the assignment $\phi$ on variables from $F_\iF$ satisfies the clause $C_\iC$.
\end{enumerate}
\end{proof}

\begin{lemma}\label{lem:cvc-seth:corr2}
If there exists a connected vertex cover $X$ of size at most $\goalsize$ in the graph $G$,
then $\Phi$ has a satisfying assignment.
\end{lemma}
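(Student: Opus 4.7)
The plan is to use the tight size budget $\goalsize$ to force the structure of $X$, then to extract a satisfying assignment of $\Phi$ from the interaction between the path gadgets and the clause gadgets.

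First I would perform a budget analysis. The vertices with pendant neighbours (namely $r$, each $c_{\iC,\ib}$, each $x_{\iF,\iB}^S$ and each $z_{\iF,\iB}$) must all lie in $X$, contributing $1+a+(3^\blocksize+1)n'a$ vertices. Each triangle $\mathcal{H}_{\iF,\iv,\iB}$ contributes at least two further vertices, each path $\mathcal{P}_{\iF,\iv}$ on $2a$ vertices contributes at least $a$, and for each $(\iF,\iB)$ the connectivity of $x_{\iF,\iB}^{S}$ to $r$ requires at least one additional non-pendant neighbour --- necessarily $y_{\iF,\iB}^{S}$ for the unique $S$ whose entire $h$-neighbourhood is disjoint from $X$. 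Comparing with $\goalsize=3\blocksize n'a+(3^\blocksize+2)n'a+a+1$, all these inequalities are equalities: exactly two vertices in each triangle (defining a unique missing index $s_{\iF,\iv,\iB}\in\{1,2,3\}$, and the sequence $S_{\iF,\iB}:=(s_{\iF,\iv,\iB})_\iv$), exactly $a$ vertices in each path (forcing a switch point $k^*_{\iF,\iv}\in\{0,\ldots,a\}$ with $v_{\iF,\iv,\iB}^{2}\in X$ iff $\iB<k^*_{\iF,\iv}$), and exactly one $y$-vertex per $(\iF,\iB)$, necessarily $y_{\iF,\iB}^{S_{\iF,\iB}}$.

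Second I would use the path structure to constrain the missing indices. Inspecting the three edges $h_{\iF,\iv,\iB}^{1}v_{\iF,\iv,\iB}^{1}$, $h_{\iF,\iv,\iB}^{2}v_{\iF,\iv,\iB}^{2}$ and $h_{\iF,\iv,\iB}^{3}v_{\iF,\iv,\iB+1}^{1}$ (using the convention $v_{\iF,\iv,a}^{1}=r$), the switch-point pattern forces $s_{\iF,\iv,\iB}=2$ for $\iB<k^*_{\iF,\iv}-1$, $s_{\iF,\iv,\iB}\in\{2,3\}$ for $\iB=k^*_{\iF,\iv}-1$, and $s_{\iF,\iv,\iB}\in\{1,3\}$ for $\iB\ge k^*_{\iF,\iv}$. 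The remaining $1\leftrightarrow 3$ freedom on the right of a switch point can be eliminated by a local exchange: swapping $h_{\iF,\iv,\iB}^{1}$ with $h_{\iF,\iv,\iB}^{3}$ inside $X$, together with the corresponding swap of $y_{\iF,\iB}^{S}$, preserves $|X|$, the vertex-cover property, and connectivity to $r$. After a suitable chain of such swaps, $S_{\iF,\iB}$ depends only on $\iF$ and on the position of $\iB$ relative to the switch-point vector $(k^*_{\iF,\iv})_\iv$.

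Third I would extract a satisfying assignment. Connectivity of $c_{\iC,\ib}$ to $r$ in $G[X]$ forces some block $\iF$ with $S_{\iF,m\ib+\iC}=\sigma_\iF(\psi)$ for an assignment $\psi$ of $F_\iF$ satisfying $C_\iC$, because the only non-pendant neighbour of $c_{\iC,\ib}$ of the form $y_{\iF,m\ib+\iC}^{S}$ that lies in $X$ is the one with $S=S_{\iF,m\ib+\iC}$, and by construction this edge exists only for $S$ that is the $\sigma_\iF$-image of a clause-satisfying assignment. Since there are only $\blocksize n'$ path switch points in total and each clause $C_\iC$ owns $2\blocksize n'+1$ distinct columns of the gadget matrix, pigeonhole gives a column $\iB_\iC$ allocated to $C_\iC$ that lies strictly outside every switch neighbourhood; at such a stable column $S_{\iF,\iB_\iC}$ coincides with the canonical sequence $S_\iF$ determined purely by $(k^*_{\iF,\iv})_\iv$, and $S_\iF=\sigma_\iF(\psi_\iF)$ for a single assignment $\psi_\iF$ of $F_\iF$. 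Setting $\phi:=\bigcup_\iF\psi_\iF$, the clause-witness at $\iB_\iC$ proves $\phi$ satisfies $C_\iC$ for every $\iC$. The main technical obstacle I anticipate is the canonicalisation step of paragraph two: each local $1\leftrightarrow 3$ swap could potentially destroy a single clause-witness, but the slack $2\blocksize n'+1\gg\blocksize n'$ in the number of allocated columns per clause is exactly what guarantees enough spare witnesses to survive the entire sequence of swaps.
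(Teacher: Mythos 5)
Your budget analysis and the cut-point characterization of $X\cap\mathcal{P}_{\iF,\iv}$ are sound and essentially coincide with the opening of the paper's proof. However, the second and third paragraphs contain a genuine gap, and the ``canonicalization'' step you introduce is both unnecessary and problematic.

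The core issue is the pigeonhole. You apply it per clause: each $C_\iC$ owns $2\blocksize n'+1$ columns and there are $\blocksize n'$ switch points, so some column $\iB_\iC$ owned by $C_\iC$ is ``stable.'' But stable columns for different clauses can sit on \emph{opposite sides} of a given switch point $k^*_{\iF,\iv}$, and then $S_{\iF,\iB_\iC}$ and $S_{\iF,\iB_{\iC'}}$ disagree in coordinate $\iv$. You assert that at a stable column ``$S_{\iF,\iB_\iC}$ coincides with the canonical sequence $S_\iF$ determined purely by $(k^*_{\iF,\iv})_\iv$,'' but $S_{\iF,\iB}$ unavoidably also depends on the position of $\iB$ relative to those switch points (it is $2$ to their left and in $\{1,3\}$ to their right), so there is no single position-independent $S_\iF$ to coincide with. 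Consequently the per-block assignments $\psi_\iF$ extracted from witnesses of different clauses need not be the same, and $\phi:=\bigcup_\iF\psi_\iF$ is not well-defined. The paper's pigeonhole is instead over \emph{bands} of $m$ consecutive columns (one column per clause): since each of the $\blocksize n'$ sequences $\hat S_{\iF,\iv}(\iB)=S_{\iF,\iB}(\iv)$ can change at most twice (the argument being purely that $1\to\{1,3\}$ and $3\to\{3\}$ by edge-covering and connectivity constraints on the introduced vertices, with no reference to a canonicalized solution), there are at most $2\blocksize n'$ transitions, and among the $2\blocksize n'+1$ bands one is transition-free. Inside that band every clause $C_\iC$ has exactly one witness column, all columns carry the \emph{same} sequence $S_{\iF,m\ib}$ per block, and the extracted assignment is therefore globally consistent.

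The swap-based canonicalization cannot rescue your version. After pushing all right-of-switch coordinates to $3$, the sequence $S_{\iF,\iB}$ at a column right of all switches becomes the constant $(3,\dots,3)$ --- a single fixed element of $\{1,2,3\}^\blocksize$ --- and there is no reason for this one sequence to lie in the image $\sigma_\iF(\cdot)$ of a clause-satisfying assignment for every clause. Beyond that, swapping $h^1_{\iF,\iv,\iB}\leftrightarrow h^3_{\iF,\iv,\iB}$ deletes the edge $h^1_{\iF,\iv,\iB}x^S_{\iF,\iB}$ from $G[X]$ for all $S$ with $S(\iv)=1$; each $x^S_{\iF,\iB}$ is forced into $X$, and connectivity of $x^S_{\iF,\iB}$ to $r$ must then be re-certified, which your argument does not do. Your own closing sentence signals awareness that ``each local $1\leftrightarrow 3$ swap could potentially destroy a single clause-witness,'' but the appeal to slack is not a proof: the number of swaps can also be as large as $\Theta(a)$, not $O(\blocksize n')$. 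The cleanest repair is to drop the canonicalization entirely and replace the per-column pigeonhole with the per-band pigeonhole sketched above, as the paper does.
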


\begin{proof}
As in the previous lemma, let $X_{\textrm{force}}$ be the set of vertices 
that are forced to be in any connected vertex cover of $G$,
that is $r$, $c_{\iC,\ib}$ for $1 \leq \iC \leq m$, $0 \leq \ib < \numcopiessmall$,
$x_{\iF,\iB}^S$ for $1 \leq \iF \leq n'$,
$0 \leq \iB < \numcopies$, $S \in \{1,2,3\}^\blocksize$
and $z_{\iF,\iB}$ for $1 \leq \iF \leq n'$, $0 \leq \iB < \numcopies$.
Note that $|X_{\textrm{force}}| = 1 + a + (3^\blocksize+1)n'a$ and $X_{\textrm{force}} \subseteq X$.

Let $X_v = X \cap \mathcal{V}$, $X_h = X \cap \mathcal{H}$ and $X_y = X \cap \mathcal{Y}$.
Note that
\begin{enumerate}
\item $X_v$ needs to include at least $a$ vertices from each path
$\mathcal{P}_{\iF,\iv}$, thus $|X_v| \geq \blocksize n' a$.
\item $X_h$ needs to include at least two vertices out of each set $\mathcal{H}_{\iF,\iv,\iB}$,
  thus $|X_h| \geq \blocksize n' \cdot 2a$.
\item $X_y$ needs to include at least one vertex $y_{\iF,\iB}^S$ for each $1\leq \iF \leq n'$
and $0 \leq \iB < a$ to ensure that the vertex $z_{\iF,\iB}$ is connected to the root $r$
in $G[X]$. Thus $|X_y| \geq n'a$.
\end{enumerate}
As $|X| \leq \goalsize$, we have $|X_v| = \blocksize n' a$,
   $|X_h| = \blocksize n' \cdot 2a$, $|X_y| = n'a$ and $|X| = \goalsize$.

As $|X_h| = \blocksize n' \cdot 2a$, for each $1 \leq \iF \leq n'$, $1 \leq \iv \leq \blocksize$
and $0 \leq \iB < a$ we have $|X_h \cap \mathcal{H}_{\iF,\iv,\iB}| = 2$.
This allows us to define a sequence $S_{\iF,\iB} \in \{1,2,3\}^\blocksize$ 
satisfying $h_{\iF,\iv,\iB}^{S_{\iF,\iB}(\iv)} \notin X_h$ for $1 \leq \iv \leq \blocksize$.

Note that the vertex $x_{\iF,\iB}^{S_{\iF,\iB}} \in X_{\textrm{force}}$ does not have
a neighbour in $X_h$, thus, to connect it to the root $r$, we need to have
$y_{\iF,\iB}^{S_{\iF,\iB}} \in X_y$. As $|X_y| = n'a$, we infer that
$|\mathcal{Y}_{\iF,\iB} \cap X| = 1$ for all $1 \leq \iF \leq n'$ and $0 \leq \iB < a$,
  i.e., $y_{\iF,\iB}^S \in X$ if and only if $S = S_{\iF,\iB}$.

We now show that for fixed $\iF$ the sequences $S_{\iF,\iB}$ cannot differ much for $0 \leq \iB < a$.
As $|X_v| = \blocksize n' a$, for each $1 \leq \iF \leq n'$, $1 \leq \iv \leq \blocksize$
and $0 \leq \iB \leq a$ we have that $|X_v \cap \{v_{\iF,\iv,\iB}^\ivup :1 \leq \ivup \leq 2\}| = 1$.
Let $\ivup(\iF,\iv,\iB)$ be such that  $v_{\iF,\iv,\iB}^{\ivup(\iF,\iv,\iB)} \in X_v$.
Now note that for $1 \leq \iF \leq n'$, $1 \leq \iv \leq \blocksize$ and $0 \leq \iB < a-1$:
\begin{enumerate}
\item If $S_{\iF,\iB}(\iv) = 3$, then $\ivup(\iF,\iv,\iB+1) = 1$, as otherwise
the edge $v_{\iF,\iv,\iB+1}^1h_{\iF,\iv,\iB}^3$ is not covered by $X$.
Moreover, $h_{\iF,\iv,\iB+1}^1 \in X_h$, as otherwise $v_{\iF,\iv,\iB+1}^1$ is 
isolated in $G[X]$, and $h_{\iF,\iv,\iB+1}^2 \in X_h$, as otherwise the
edge $v_{\iF,\iv,\iB+1}^2h_{\iF,\iv,\iB+1}^2$ is not covered by $X$.
Thus $S_{\iF,\iB+1}(\iv) = 3$ as well.
\item If $S_{\iF,\iB}(\iv) = 1$ then $\ivup(\iF,\iv,\iB) = 1$, as otherwise the 
    edge $v_{\iF,\iv,\iB}^1h_{\iF,\iv,\iB}^1$ is not covered by $X$.
    Thus $\ivup(\iF,\iv,\iB+1) = 1$, as otherwise the edge $v_{\iF,\iv,\iB}^2v_{\iF,\iv,\iB+1}^1$
    is not covered by $X$, and $S_{\iF,\iB+1}(\iv) \neq 2$,
    as otherwise the edge $v_{\iF,\iv,\iB+1}^2h_{\iF,\iv,\iB+1}^2$ is not covered by $X$.
\end{enumerate}

For fixed $1 \leq \iF \leq n'$ and $1 \leq \iv \leq \blocksize$ define the sequence
$\hat{S}_{\iF,\iv}(\iB) = S_{\iF,\iB}(\iv)$. From the above arguments
we infer that the sequence $\hat{S}_{\iF,\iv}(\iB)$ cannot change more than twice.
As $a = \numcopies$, we conclude that there exists an index $0 \leq \ib < \numcopiessmall$
such that for all $1 \leq \iF \leq n'$, $1 \leq \iv \leq \blocksize$ the sequence
$\hat{S}_{\iF,\iv}(m\ib + \iC)$ is constant for $0 \leq \iC < m$.

We create now an assignment $\phi$ by taking, for each group of variables $F_\iF$,
an assignment corresponding to the sequence $S_{\iF,m\ib}$.
Now we prove that $\phi$ satisfies $\Phi$. Take any clause $C_\iC$, $0 \leq \iC < m$,
and focus on the vertex $c_{\iC,\ib} \in X_{\textrm{force}}$. The vertex $c_{\iC,\ib}$
needs to be connected to the root $r$ in $G[X]$, thus for some $1 \leq \iF \leq n'$ and $S \in \{1,2,3\}^\blocksize$
we have $y_{\iF,m\ib+\iC}^S \in X$ and the assignment of $F_\iF$
that corresponds to $S$ satisfies $C_\iC$. However, we know that
$y_{\iF,m\ib+\iC}^S \in X$ implies that $S = S_{\iF,m\ib+\iC}$, and thus
$\phi$ satisfies $C_\iC$.
\end{proof}

\paragraph{Pathwidth bound}

\begin{lemma}\label{lem:cvc-seth:pathwidth}
Pathwidth of the graph $G$ is at most $\blocksize n'+O(3^\blocksize)$.
Moreover a path decomposition of such width can be found in polynomial time.
\end{lemma}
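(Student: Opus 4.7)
The plan is to exhibit a mixed search strategy on $G$ that uses $\blocksize n' + O(3^\blocksize)$ searchers, and then invoke the inequality $\textrm{pw}(G) \leq \textrm{ms}(G)$. The guiding intuition is that $G$ has an explicit ``matrix'' structure: rows are indexed by pairs $(\iF,\iv)$ with $1 \leq \iF \leq n'$ and $1 \leq \iv \leq \blocksize$, and columns are indexed by $0 \leq \iB < a$. The only structure that spans across columns are the paths $\mathcal{P}_{\iF,\iv}$ (through the edges of the path itself and through the gadget vertex $h_{\iF,\iv,\iB}^3$ which reaches $v_{\iF,\iv,\iB+1}^1$), and the single vertex $r$ (together with its pendant $r^\ast$) which is universal to a huge portion of $G$. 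Everything else is local to a single group gadget $\groupg_{\iF,\iB}$, with the single exception of the clause vertices $c_{\iC,\ib}$, but each such vertex has neighbours only inside one single column $m\ib+\iC$.

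First I would place two permanent searchers on $r$ and $r^\ast$ (immediately clearing the edge $rr^\ast$); the searcher on $r$ stays until the very end of the strategy, which costs $1$ throughout. Then, before the sweep starts, I place $\blocksize n'$ additional searchers on the vertices $v_{\iF,\iv,0}^1$ for all $(\iF,\iv)$ and clear the edges $rv_{\iF,\iv,0}^1$; call these the \emph{path tokens}. The invariant during the sweep is that, between processing column $\iB-1$ and column $\iB$, the path token for row $(\iF,\iv)$ sits on $v_{\iF,\iv,\iB}^1$, and every edge with both endpoints in the columns $<\iB$ (plus all gadget-internal edges of columns $<\iB$) is cleared.

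To process column $\iB$, I handle the group gadgets $\groupg_{\iF,\iB}$ one at a time in order of $\iF = 1,\ldots,n'$. For a fixed $\groupg_{\iF,\iB}$ I place $3\blocksize$ temporary searchers on the vertices of $\mathcal{H}_{\iF,\iB}$ (clearing all edges between them, to $r$, and to the path tokens $v_{\iF,\iv,\iB}^1$), then $3^\blocksize$ temporary searchers on $\mathcal{Y}_{\iF,\iB}$ together with $O(1)$ more for $z_{\iF,\iB}$, $z_{\iF,\iB}^\ast$, and the current $x_{\iF,\iB}^S, x_{\iF,\iB}^{S\ast}$ handled one $S$ at a time; this clears every edge incident to $\mathcal{Y}_{\iF,\iB}$, $\mathcal{H}_{\iF,\iB}$, the $x$-vertices, and $z_{\iF,\iB}$. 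Within the gadget I also slide the path token from $v_{\iF,\iv,\iB}^1$ along $\mathcal{P}_{\iF,\iv}$ up to $v_{\iF,\iv,\iB+1}^1$ (an intermediate searcher on $v_{\iF,\iv,\iB}^2$ suffices, reusing a constant number of searchers per row), which also clears the edge $h_{\iF,\iv,\iB}^3 v_{\iF,\iv,\iB+1}^1$. After the gadget is fully cleared I remove all its temporary searchers and move on to $\groupg_{\iF+1,\iB}$. Between gadgets of the same column, whenever $\iB = m\ib + \iC$ for some $0 \leq \ib < \numcopiessmall$ and $0 \leq \iC < m$, I additionally spend $O(1)$ searchers on the pair $c_{\iC,\ib}, c_{\iC,\ib}^\ast$ to clear all edges from $c_{\iC,\ib}$ to the $y$-vertices already guarded.

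The searchers alive at the peak are: the permanent searcher on $r$ (1), the $\blocksize n'$ path tokens, plus the temporary searchers active while processing one group gadget, which amount to $3\blocksize + 3^\blocksize + O(1)$. Hence $\textrm{ms}(G) \leq \blocksize n' + O(3^\blocksize)$, which gives the claimed pathwidth bound. The search strategy I described is completely uniform in the construction, so a corresponding path decomposition can be emitted in polynomial time by the standard translation from a mixed search strategy to a path decomposition. The only delicate point — and the one that makes the $O(3^\blocksize)$ additive term unavoidable in this presentation — is that within a single group gadget the $3^\blocksize$ vertices of $\mathcal{Y}_{\iF,\iB}$ are all simultaneously adjacent to $r$ and to $z_{\iF,\iB}$; any search strategy that clears the $z_{\iF,\iB} y_{\iF,\iB}^S$ edges while $z_{\iF,\iB}$ is still dirty must cover $\mathcal{Y}_{\iF,\iB}$ in bulk, which is exactly what the $3^\blocksize$ temporary searchers do.
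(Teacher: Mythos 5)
Your approach is the same as the paper's: exhibit a mixed search strategy that sweeps column by column, maintain one token per row $(\iF,\iv)$ on the entry vertex of the current column, temporarily fully cover one group gadget at a time with $O(3^\blocksize)$ extra searchers, and invoke $\mathrm{pw}(G)\le\mathrm{ms}(G)$. Your finer-grained accounting ($3\blocksize$ on $\mathcal{H}_{\iF,\iB}$, $3^\blocksize$ on $\mathcal{Y}_{\iF,\iB}$, $x$-vertices handled one $S$ at a time) is sound and a legitimate refinement of the paper's cruder ``place $O(3^\blocksize)$ searchers on all vertices of $\groupg_{\iF,\iB}$,'' since after clearing the edges $x_{\iF,\iB}^S x_{\iF,\iB}^{S\ast}$, $x_{\iF,\iB}^S y_{\iF,\iB}^S$ and $x_{\iF,\iB}^S h_{\iF,\iv,\iB}^{S(\iv)}$ every neighbour of $x_{\iF,\iB}^S$ still carries a searcher, so removing the two $x$-searchers causes no recontamination.

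The one point that needs tightening is the lifetime of the searcher on $c_{\iC,\ib}$. You write that you spend $O(1)$ searchers on $c_{\iC,\ib},c_{\iC,\ib}^\ast$ ``between gadgets of the same column''. But $c_{\iC,\ib}$ is adjacent to $y$-vertices in \emph{every} gadget $\groupg_{\iF,m\ib+\iC}$, $\iF=1,\dots,n'$. If the searcher on $c_{\iC,\ib}$ is removed after clearing the $y$-edges of one gadget $\iF$ and before those of gadget $\iF+1$, then at that moment $c_{\iC,\ib}$ has both a cleared incident edge (to a $y$-vertex of $\groupg_{\iF,m\ib+\iC}$, now unsearched) and a dirty incident edge (to a $y$-vertex of $\groupg_{\iF+1,m\ib+\iC}$), with no searcher on $c_{\iC,\ib}$; the cleared edge is immediately recontaminated. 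So the clause searcher must be placed at the start of the round and removed only after the last gadget of that column has been sealed --- exactly what the paper does. This does not change your searcher count, only the schedule, but it is a necessary fix to make the strategy correct.
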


\begin{proof}
We give a mixed search strategy to clean the graph with $\blocksize n'+O(3^\blocksize)$ searchers.
First we put a searcher in the vertex $r^\ast$ and slide it to the root $r$.
This searcher remains there till the end of the cleaning process.

For a gadget $\groupg_{\iF,\iB}$ we call the vertices 
$v_{\iF,\iv,\iB}^1$ and $v_{\iF,\iv,\iB+1}^1$, $1 \le \iv \le \blocksize$,
  as {\em entry vertices} and {\em exit vertices} respectively.
We search the graph in $a = \numcopies$ rounds.
At the beginning of round $\iB$ ($0 \leq \iB < a$)
there are searchers on the entry vertices of the gadget $\groupg_{\iF,\iB}$ for every
$1 \leq \iF \le n'$.
Let $0 \le \iC < m$ and $0 \le \ib < \numcopiessmall$ be integers such that $\iB = \iC + m\ib$.
We place a searcher on $c_{\iC,\ib}^\ast$ and slide it to $c_{\iC,\ib}$.
Then, for each $1 \le \iF \le n'$ in turn we:
\begin{itemize}
  \item put $O(3^\blocksize)$ searchers on all vertices of the group gadget $\groupg_{\iF,\iB}$,
  \item put $2\blocksize$ searchers on all vertices $v_{\iF,\iv,\iB}^2$ and $v_{\iF,\iv,\iB+1}^1$,
  $1 \leq \iv \leq \blocksize$,
  \item remove searchers from all vertices of the group gadget $\groupg_{\iF,\iB}$
  and $v_{\iF,\iv,\iB}^\ivup$ for $1 \leq \iv \leq \blocksize$ and $1 \leq \ivup \leq 2$.
\end{itemize}
The last step of the round is removing a searcher from the vertex $c_{\iC,\ib}$.
After the last round the whole graph $G$ is cleaned.
Since we reuse $O(3^\blocksize)$ searchers for cleaning group gadgets,
$\blocksize n'+O(3^\blocksize)$ searchers suffice to clean the graph.

Using the above graph cleaning process a path decomposition of width
$\blocksize n'+O(3^\blocksize)$ can be constructed in polynomial time.
\end{proof}

\begin{proof}[Proof of Theorem \ref{thm:cvc-seth}]
Suppose \cvertexcover can be solved in $(3-\eps)^p |V|^{O(1)}$
time provided that we are given a path decomposition of $G$ of width $p$.
Let $\lambda = \log_3(3-\eps) < 1$.
We choose $\blocksize$ large enough such that
$\frac{\log 3^\blocksize}{\lfloor \log 3^\blocksize \rfloor} < \frac{1}{\lambda}$.
Given an instance of SAT we construct an instance of \cvertexcover
using the above construction and the chosen value of $\blocksize$.
Next we solve \cvertexcover using the $3^{\lambda p} |V|^{O(1)}$ time algorithm.
Lemmata~\ref{lem:cvc-seth:corr1}, \ref{lem:cvc-seth:corr2} ensure correctness,
whereas Lemma~\ref{lem:cvc-seth:pathwidth}
implies that running time of our algorithm is $3^{\lambda \blocksize n'} |V|^{O(1)}$,
however we have 
$$3^{\lambda \blocksize n'}=2^{\lambda \blocksize n'\log 3}=2^{\lambda n'\log 3^\blocksize}\leq 2^C \cdot2^{\lambda n \log 3^\blocksize / \lfloor \log 3^\blocksize \rfloor } = 2^C \cdot 2^{\lambda' n}$$ 
for some $\lambda' < 1$ and $C=\lambda \log 3^\blocksize$. This concludes the proof.
\end{proof}

\renewcommand{\numcopies}{\ensuremath{m(n+1)}\xspace}
\renewcommand{\numcopiessmall}{\ensuremath{n+1}\xspace}

\subsection{\cdomset}

\begin{theorem}\label{thm:cds-seth}
Assuming SETH, there cannot exist a constant $\eps>0$ and an algorithm that given an instance $(G=(V,E),k)$ together 
with a path decomposition of the graph $G$ of width $p$ solves the \cdomset problem in $(4-\eps)^p |V|^{O(1)}$ time.
\end{theorem}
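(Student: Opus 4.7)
The plan is to adapt the construction used in the proof of Theorem~\ref{thm:cvc-seth} to the setting of \cdomset{}, exploiting the fact that the natural dynamic programming for \cdomset{} parameterized by treewidth keeps four states per bag vertex (in the solution on the left of the cut, in the solution on the right of the cut, outside the solution but already dominated, outside and not yet dominated). Accordingly, everywhere in the \cvertexcover{} proof where the alphabet $\{1,2,3\}^\blocksize$ encoded the three possible states per vertex, we will switch to a four-letter alphabet $\{1,2,3,4\}^\blocksize$. Fix $\eps>0$, choose a constant $\blocksize$ large enough that $\frac{\log 4^\blocksize}{\lfloor \log 4^\blocksize\rfloor}<\frac{1}{\log_4(4-\eps)}$ (with $4^\blocksize$ being a power of two we simply have $\blockvars=2\blocksize$), and partition the $n$ variables of the SAT formula $\Phi$ into $n'=\lceil n/\blockvars\rceil$ groups $F_1,\ldots,F_{n'}$.

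For each $1\leq\iF\leq n'$ and $1\leq\iv\leq\blocksize$ I will build a ``path'' $\mathcal{P}_{\iF,\iv}$ of length $a=\numcopies$ whose building blocks have exactly four locally-distinguishable extension modes, so that a minimum connected dominating set is forced at each block to choose exactly one of the four states. Stacking $\blocksize$ such paths in parallel yields a single group column $\groupg_{\iF,\iB}$ encoding one letter of $\{1,2,3,4\}^\blocksize$; since $4^\blocksize\geq 2^\blockvars$, we can associate injectively every assignment of the variables of $F_\iF$ with a sequence $S\in\{1,2,3,4\}^\blocksize$. As in the \cvertexcover{} construction we attach to each column a family of selector vertices $y^S_{\iF,\iB}$ together with forced pendants and a root $r$, now arranged so that the only cheap way to dominate the selector-gadget and keep the whole solution connected through $r$ is to pick precisely the selector $y^{S}_{\iF,\iB}$ whose $S$ matches the states chosen on the $\blocksize$ parallel paths at position $\iB$. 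Clause checking is performed exactly as in Theorem~\ref{thm:cvc-seth}: we organize the $a=m(n+1)$ columns into $m$ clause-classes of $n+1$ columns each (not consecutive), and for each clause $C_\iC$ we attach a vertex $c_{\iC,\ib}$ whose only possibility of being dominated (say by a forced pendant that dominates through $c_{\iC,\ib}$) is through a selector $y^S_{\iF,m\ib+\iC}$ whose $S$ encodes an assignment of $F_\iF$ already satisfying $C_\iC$.

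The forward direction ``$\Phi$ satisfiable $\Rightarrow$ CDS of size $\goalsize$'' is then routine: from a satisfying assignment pick in every column the matching sequence $S_\iF$, take the corresponding selector and the matching state on each path, and verify domination and connectivity. The reverse direction, which is the main obstacle, requires showing that (i)~a CDS of size exactly $\goalsize$ must spend at each column exactly the minimum number of vertices per gadget, hence encodes in every column one well-defined sequence $S_{\iF,\iB}$, and (ii)~the $\blockvars$-letter sequence encoded at column $\iB$ cannot differ from the one encoded at column $\iB+1$ without paying an extra vertex. Point (ii) is the delicate part of the design: the inter-column ``connector'' gadgets along each $\mathcal{P}_{\iF,\iv}$ must force state stability, but they also interact with the ``dominated/not-dominated'' bit, so some care is needed beyond the analogue in Theorem~\ref{thm:cvc-seth}. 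Granted (i) and (ii), there are at most $n$ switches per row along the $a$ columns, hence by pigeonhole one of the $n+1$ copies devoted to clause $C_\iC$ has the same encoding across all $m$ clause-classes, giving a stable assignment $\phi$; for every clause $C_\iC$ the corresponding vertex $c_{\iC,\ib}$ in that block is dominated in the CDS, which can only happen through a selector whose sequence encodes a partial assignment of $F_\iF$ satisfying $C_\iC$, so $\phi$ satisfies $\Phi$.

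The pathwidth bound is obtained by a mixed-search strategy analogous to Lemma~\ref{lem:cvc-seth:pathwidth}: sweep the columns from left to right, at each step maintaining $\blocksize n'$ searchers on the ``interface'' vertices of all parallel paths and an additional $O(4^\blocksize)$ searchers to temporarily clean the $4^\blocksize$ selectors of the current group column together with the clause vertex $c_{\iC,\ib}$. This yields $\textbf{pw}(G)\leq \blocksize n'+O(4^\blocksize)$. Plugging a hypothetical $(4-\eps)^p|V|^{O(1)}$ algorithm, with $\lambda=\log_4(4-\eps)<1$, into this bound gives a running time of $4^{\lambda\blocksize n'}|V|^{O(1)}=2^C\cdot 2^{\lambda' n}$ for some constant $\lambda'<1$ and constant $C$, by the same computation as at the end of the \cvertexcover{} proof. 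This contradicts SETH and concludes the argument.
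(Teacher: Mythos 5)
Your proposal identifies the right intuition --- the factor $4$ should come from the four Myhill--Nerode states of the natural treewidth DP for \cdomset{} --- but the core gadget it relies on is never constructed, and constructing it is not a routine extension of Theorem~\ref{thm:cvc-seth}. You acknowledge that ``the inter-column connector gadgets must force state stability, but they also interact with the `dominated/not-dominated' bit, so some care is needed''; this is in fact where the whole difficulty lives. The fourth \cdomset{} state, ``outside the solution and not yet dominated,'' is transient: it is a promise about the future of the tree decomposition, not a committed local choice. In the \cvertexcover{} gadget the triple $\mathcal{H}_{\iF,\iv,\iB}$ latches a choice among three states because all three VC states (out, in-left, in-right) are permanent; there is no analogous way to pay a fixed budget per column for ``choosing state $4$,'' since every out-vertex must eventually be dominated, and the connector gadgets you propose would have to coordinate domination promises with the consistent-cut information along each path. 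Your sketch does not supply such a gadget, nor is it evident one exists in the form you describe.

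The paper's proof takes a genuinely different route that avoids this problem entirely. It fixes $\blockvars=2$, so each group $F_\iF$ holds exactly two variables and $n'=n/2$, and it encodes the four assignments of $F_\iF$ not via a four-letter alphabet but as two \emph{independent binary choices} on two parallel channels per block: $v_{\iF,\iB}^1$ vs.\ $v_{\iF,\iB}^2$ for $x_\iF^1$ and $h_{\iF,\iB}^1$ vs.\ $h_{\iF,\iB}^2$ for $x_\iF^2$, with degree-$2$ guard vertices forcing exactly one of each pair into any minimum-size CDS. There are no selector vertices $y^S_{\iF,\iB}$ and no $4^\blocksize$ machinery; clause vertices are adjacent directly to the matching $v$- or $h$-vertex. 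Column-to-column stability is then a simple per-channel monotonicity argument (the guard $p^2_{\iF,\iB}$ and the connectivity requirement force that once a channel goes ``true'' it stays ``true''), and the sweep holds only the single interface vertex $v_{\iF,\iB}^1$ per block, giving $\pw(G)=n/2+O(1)$, i.e.\ $4^p\sim 2^n$ by a two-bits-per-unit-pathwidth encoding rather than a four-way gadget.
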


\paragraph{Construction}

Given $\eps > 0$ and an instance $\Phi$ of SAT with $n$ variables and $m$ clauses we construct a graph $G$ as follows.
We assume that the number of variables $n$ is even, otherwise we add a single dummy variable.
We partition variables of $\Phi$ into groups $F_1,\ldots,F_{n'}$, each of size two,
hence $n'= n/2$.
The pathwidth of $G$ will be roughly $n'$.

First, we add to the graph $G$ two vertices $r$ and $r^\ast$, connected by an edge.
In the graph $G$ the vertex $r^\ast$ is of degree one, thus any connected dominating set of $G$
needs to include $r$. The vertex $r$ is called a {\em{root}}.

Second, we take $a = \numcopies$ and for each $1 \leq \iF \leq n'$
we create a path $\mathcal{P}_\iF$ consisting
of $4a$ vertices $v_{\iF,\iB}^\ivup$ and $h_{\iF,\iB}^\ivup$,
$0 \leq \iB < a$ and $1 \leq \ivup \leq 2$.
On the path $\mathcal{P}_\iF$ the vertices are arranged in the following order:
$$v_{\iF,0}^1, h_{\iF,0}^1, v_{\iF,0}^2, h_{\iF,0}^2, v_{\iF,1}^1,\ldots, h_{\iF,a-1}^2.$$
Let $\mathcal{V}$ and $\mathcal{H}$ be the sets of all vertices $v_{\iF,\iB}^\ivup$ and
$h_{\iF,\iB}^\ivup$ ($1 \leq \iF \leq n'$, $0 \leq \iB < a$, $1 \leq \ivup \leq 2$), respectively.
We connect vertices $v_{\iF,0}^1$ and all vertices in $\mathcal{H}$ to the root $r$.
To simplify further notation we denote $v_{\iF,a}^1 = r$, note that $h_{\iF,a-1}^2v_{\iF,a}^1 \in E$.

Third, for each $1 \leq \iF \leq n'$ and $0 \leq \iB < a$
we introduce guard vertices $p_{\iF,\iB}^1$, $p_{\iF,\iB}^2$ and
$q_{\iF,\iB}$. Each guard vertex is of degree two in $G$, namely
$p_{\iF,\iB}^1$ is adjacent to $v_{\iF,\iB}^1$ and $v_{\iF,\iB}^2$,
  $p_{\iF,\iB}^2$ is adjacent to $v_{\iF,\iB}^2$ and $v_{\iF,\iB+1}^1$
  and $q_{\iF,\iB}$ is adjacent to $h_{\iF,\iB}^1$ and $h_{\iF,\iB}^2$.
  Thus, each guard vertex ensures that at least one of its neighbours is contained
  in any connected dominating set in $G$.

The intuition of the construction made so far is as follows. For each two-variable block $F_\iF$
we encode any assignment of the variables in $F_\iF$ as a choice whether to take
$v_{\iF,\iB}^1$ or $v_{\iF,\iB}^2$ and $h_{\iF,\iB}^1$ or $h_{\iF,\iB}^2$ to the connected
dominating set in $G$.

We have finished the part of the construction needed to encode an assignment
and now we add vertices used
to check the satisfiability of the formula $\Phi$.
Let $C_0,\ldots,C_{m-1}$ be the clauses of the formula $\Phi$.
For each clause $C_\iC$ we create $(\numcopiessmall)$
vertices $c_{\iC,\ib}$, one for each $0 \le \ib < \numcopiessmall$.
Consider a clause $C_\iC$ and a group of variables $F_\iF=\{x_\iF^1, x_\iF^2\}$.
If $x_\iF^1$ occurs positively in $C_\iC$ then we connect $c_{\iC,\ib}$ with
$v_{\iF,m\ib+\iC}^1$ 
and if $x_\iF^1$ occurs negatively in $C_\iC$ then we connect $c_{\iC,\ib}$ with
$v_{\iF,m\ib+\iC}^2$.
Similarly if $x_\iF^2$ occurs positively in $C_\iC$ then we connect $c_{\iC,\ib}$
with $h_{\iF,m\ib+\iC}^1$ 
and if $x_\iF^2$ occurs negatively in $C_\iC$ then we connect $c_{\iC,\ib}$ with $h_{\iF,m\ib+\iC}^2$.
Intuitively taking the vertex $v_{\iF,m\ib+\iC}^1$
into a connected dominating set corresponds to setting $x_\iF^1$ to true,
whereas taking the vertex $h_{\iF,m\ib+\iC}^1$ into a connected dominating set corresponds to setting $x_\iF^2$ to true.

We can view the whole construction as a matrix, where each row corresponds to some group of variables $F_\iF$
and each column is devoted to some clause in such a way that each clause gets $(\numcopiessmall)$ private columns (but not consecutive)
of the matrix.

\begin{figure}
\begin{center}
\includegraphics{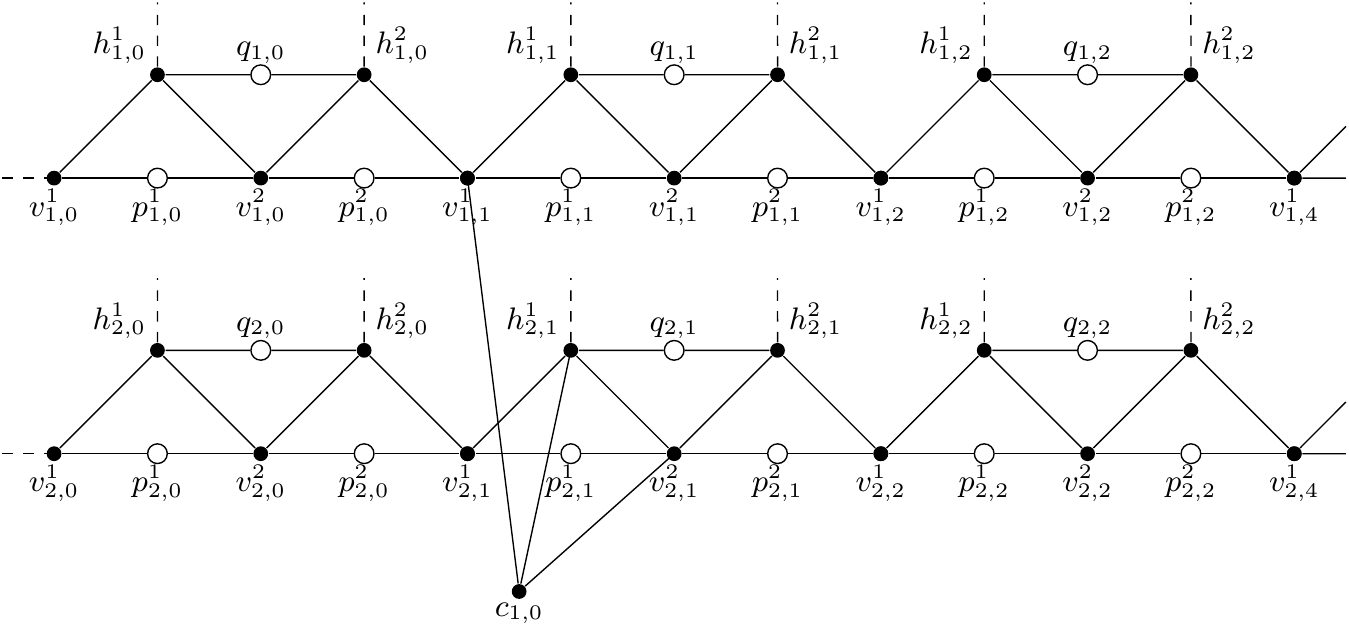}
\caption{Part of the construction for \cdomset.
  Dashed edges are connecting a vertex with the root $r$.
  Empty circles represent guard vertices.}
  \label{fig:seth-cds}
  \end{center}
  \end{figure}

Finally, let $\goalsize = 1 + n' \cdot 2a$ be the size of the connected dominating set we ask for.

\paragraph{Correctness}

\begin{lemma}\label{lem:cds-seth:corr1}
If $\Phi$ has a satisfying assignment, then there exists a connected dominating 
set $X$ in the graph $G$ of size $\goalsize$.
\end{lemma}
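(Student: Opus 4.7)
The plan is to construct $X$ explicitly from a satisfying assignment $\phi$ and then verify in turn its size, that it dominates $V(G)$, and that $G[X]$ is connected. Concretely, I would put $r$ into $X$, and then for every group $F_\iF = \{x_\iF^1, x_\iF^2\}$ and every block $0 \le \iB < a$, include $v_{\iF,\iB}^1$ if $\phi(x_\iF^1) = \mathrm{true}$ and $v_{\iF,\iB}^2$ otherwise, and similarly include $h_{\iF,\iB}^1$ if $\phi(x_\iF^2) = \mathrm{true}$ and $h_{\iF,\iB}^2$ otherwise. This contributes exactly one $v$-vertex and one $h$-vertex per $(\iF, \iB)$, so $|X| = 1 + 2n'a = \goalsize$.

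Next, I would run through the vertex classes to confirm that $X$ is a dominating set. The pendant $r^\ast$ is dominated by $r \in X$. For each guard vertex, by construction its two neighbours are exactly the two candidates $\{v_{\iF,\iB}^1, v_{\iF,\iB}^2\}$, $\{v_{\iF,\iB}^2, v_{\iF,\iB+1}^1\}$, or $\{h_{\iF,\iB}^1, h_{\iF,\iB}^2\}$; since the assignment $\phi$ is constant across blocks, in each case at least one of the two neighbours lies in $X$ (using $v_{\iF,a}^1 = r \in X$ for the last guard of each row). Every $h_{\iF,\iB}^\ivup \notin X$ is adjacent to $r \in X$. A vertex $v_{\iF,\iB}^2 \notin X$ (which happens only when $\phi(x_\iF^1) = \mathrm{true}$) is adjacent to both $h_{\iF,\iB}^1$ and $h_{\iF,\iB}^2$, one of which is in $X$. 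A vertex $v_{\iF,\iB}^1 \notin X$ is handled by cases on $\phi(x_\iF^2)$ and $\iB$: if $\phi(x_\iF^2) = \mathrm{true}$ then $h_{\iF,\iB}^1 \in X$; if $\phi(x_\iF^2) = \mathrm{false}$ and $\iB \ge 1$ then $h_{\iF,\iB-1}^2 \in X$; and if $\iB = 0$ then $v_{\iF,0}^1$ is adjacent to $r$. Finally, every clause vertex $c_{\iC,\ib}$ is dominated because $\phi$ satisfies $C_\iC$ via some literal on a variable of some block $F_\iF$, and the construction guarantees that the corresponding $v$- or $h$-vertex in block $m\ib + \iC$ is exactly the one selected into $X$.

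Then I would check connectivity by showing every vertex of $X$ is joined to the root. All selected $h$-vertices are directly adjacent to $r$. A selected $v_{\iF,\iB}^2$ (occurring when $\phi(x_\iF^1)=\mathrm{false}$) is adjacent to $h_{\iF,\iB}^1$ and $h_{\iF,\iB}^2$, one of which lies in $X$ and in turn is adjacent to $r$. A selected $v_{\iF,\iB}^1$ (occurring when $\phi(x_\iF^1) = \mathrm{true}$) reaches $r$ through $h_{\iF,\iB}^1 \in X$ if $\phi(x_\iF^2) = \mathrm{true}$; through $h_{\iF,\iB-1}^2 \in X$ if $\phi(x_\iF^2) = \mathrm{false}$ and $\iB \ge 1$; and directly via its edge to $r$ if $\iB = 0$. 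This uses in an essential way that the assignment $\phi$ is the same for every block $\iB$, so the choice of $h$-vertices is uniform along each row.

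The step that merits the most care is the domination/connectivity analysis for the $v_{\iF,\iB}^1$ vertices not in $X$ and, symmetrically, the connectivity check for selected $v_{\iF,\iB}^1$, since the neighbour used differs depending on the value of $\phi(x_\iF^2)$ and on whether $\iB = 0$; all other cases are immediate from pendant or guard adjacencies. No clever trick is required beyond tracking these cases; the heart of the argument is simply that making the same choice in every block ensures both the path-like pieces and the bridges to the root work out.
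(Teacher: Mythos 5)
Your proposal is correct and follows essentially the same approach as the paper's proof: you construct the same set $X$ from the satisfying assignment, verify $|X| = 1 + 2n'a$, check domination class by class (pendant $r^\ast$, the $\mathcal{H}$ vertices via $r$, the guard vertices via selected $v$- and $h$-vertices, the $\mathcal{V}$ vertices via the flanking $h$-neighbours or $r$, and clause vertices via the satisfying literal), and then show connectivity by routing every selected vertex to the root through a neighbouring $h$-vertex or a direct edge. Your case analysis for $v_{\iF,\iB}^1$ is a little more explicit than the paper's (which observes in one stroke that each $v$-vertex is flanked by one $h^1$- and one $h^2$-vertex so exactly one flanking neighbour is always chosen), but the argument is the same.
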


\begin{proof}
Given a satisfying assignment $\phi$ of the formula $\Phi$
we construct a connected dominating set $X$ as follows.
For each block $F_\iF = \{x_\iF^1, x_\iF^2\}$ and for
each $0 \leq \iB < a$ we include into $X$:
\begin{enumerate}
\item the vertex $v_{\iF,\iB}^1$ if $\phi(x_\iF^1)$ is true, and $v_{\iF,\iB}^2$ otherwise;
\item the vertex $h_{\iF,\iB}^1$ if $\phi(x_\iF^2)$ is true, and $h_{\iF,\iB}^2$ otherwise.
\end{enumerate}
Finally, we put $r$ into $X$. Note that $|X| = 1 + n' \cdot 2a = \goalsize$.
We now verify that $X$ is a connected dominating set in $G$.
First, we verify that $X$ dominates all vertices in $G$.
\begin{enumerate}
\item $r^\ast$ and all vertices in $\mathcal{H}$ are dominated by the root $r$.
\item All guards $p_{\iF,\iB}^1$, $p_{\iF,\iB}^2$ and $q_{\iF,\iB}^1$ are dominated
by $X \cap (\mathcal{H} \cup \mathcal{V})$
(with the possible exception of $p_{\iF,a-1}^2$ that is dominated by $r$).
\item All vertices in $\mathcal{V}$ are dominated by $X \cap \mathcal{H}$
(with the possible exception of $v_{\iF,0}^1$ that is dominated by $r$).
\item Finally, each clause vertex $c_{\iC,\ib}$ is dominated by any vertex $v_{\iF,m\ib+\iC}^\ivup$
or $h_{\iF,m\ib+\iC}^\ivup$ that corresponds to a variable that satisfies $C_\iC$ in the
assignment $\phi$.
\end{enumerate}

To finish the proof we need to ensure that $G[X]$ is connected.
We prove this by showing that each vertex in $X$ is connected to the root $r$ in $G[X]$.
This is obvious for vertices in $X \cap \mathcal{H}$, as $\mathcal{H} \subseteq N_G(r)$.
Moreover, for each $1 \leq \iF \leq n'$
and $0 \leq \iB < a$:
\begin{enumerate}
\item if $v_{\iF,\iB}^1 \in X$, then $v_{\iF,\iB}^1$ is connected to the root
via $h_{\iF,\iB-1}^2$ or $h_{\iF,\iB}^1$, with the exception of $v_{\iF,0}^1$,
    that is connected to $r$ directly;
\item if $v_{\iF,\iB}^2 \in X$, then $v_{\iF,\iB}^2$ is connected to the root
via $h_{\iF,\iB}^1$ or $h_{\iF,\iB}^2$.
\end{enumerate}
\end{proof}

\begin{lemma}\label{lem:cds-seth:corr2}
If there exists a connected dominating set $X$ of size at most $\goalsize$ in the graph $G$,
then $\Phi$ has a satisfying assignment.
\end{lemma}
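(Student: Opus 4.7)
The plan is to exploit the tight budget $\goalsize = 1 + 2n'a$ to force a rigid structure on $X$, then extract a satisfying assignment via a pigeonhole argument. First I would show that $r \in X$ (forced by the pendant $r^\ast$), and that for each pair $(\iF, \iB)$ the degree-two guard $p_{\iF,\iB}^1$ (resp.\ $q_{\iF,\iB}$) forces at least one vertex from $\{v_{\iF,\iB}^1, v_{\iF,\iB}^2\}$ (resp.\ $\{h_{\iF,\iB}^1, h_{\iF,\iB}^2\}$) to belong to $X$: if the guard itself lies in $X$ then, because it has degree two, it needs a neighbor in $X$ to avoid being isolated in $G[X]$; if the guard is not in $X$ it still requires a neighbor in $X$ for domination. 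These forced vertices across all $(\iF, \iB)$ are pairwise disjoint, so summing yields a lower bound of $2n'a$ and, together with $r$, this exhausts the budget. Hence equality holds everywhere: $X$ contains $r$, exactly one of $\{v_{\iF,\iB}^1, v_{\iF,\iB}^2\}$ (whose index we denote $s_{\iF,\iB} \in \{1,2\}$) and exactly one of $\{h_{\iF,\iB}^1, h_{\iF,\iB}^2\}$ (index $t_{\iF,\iB}$) for each $(\iF, \iB)$, while no guards or clause vertices lie in $X$.

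Second, I would derive two monotonicity constraints. The guard $p_{\iF,\iB}^2$ (for $0 \le \iB < a-1$) has neighborhood $\{v_{\iF,\iB}^2, v_{\iF,\iB+1}^1\}$, so one of these vertices must lie in $X$, giving $s_{\iF,\iB} = 2 \vee s_{\iF,\iB+1} = 1$. Equivalently, the transition $s_{\iF,\iB-1} = 1, s_{\iF,\iB} = 2$ is forbidden, so each sequence $s_{\iF,\cdot}$ has the form $2, 2, \ldots, 2, 1, 1, \ldots, 1$ and changes at most once. For $t$, the key observation is that $v_{\iF,\iB}^1$ for $\iB \ge 1$ is not adjacent to $r$, and its only neighbors outside the (absent) guards and clause vertices are $h_{\iF,\iB-1}^2$ and $h_{\iF,\iB}^1$. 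Thus whether $v_{\iF,\iB}^1 \in X$ (where connectivity to $r$ is required in $G[X]$) or $v_{\iF,\iB}^1 \notin X$ (where domination is required), at least one of these two $h$-vertices must belong to $X$, giving $t_{\iF,\iB-1} = 2 \vee t_{\iF,\iB} = 1$. Hence $t_{\iF,\cdot}$ also takes the form $2, \ldots, 2, 1, \ldots, 1$ and changes at most once.

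Third, interpreting $s_{\iF,\iB} = 1$ as $x_\iF^1 = \textrm{true}$ and $t_{\iF,\iB} = 1$ as $x_\iF^2 = \textrm{true}$, each position $\iB$ defines an assignment $\phi_\iB$ of the variables of $\Phi$. The total number of positions $\iB \in \{1, \ldots, a-1\}$ where $\phi_{\iB-1} \ne \phi_\iB$ is at most $\sum_{\iF=1}^{n'}(1+1) = 2n' = n$. A change at position $\iB$ spoils the window $W_\ib = \{m\ib, m\ib + 1, \ldots, m\ib + m - 1\}$ containing $\iB$ only when $\iB \ne m\ib$, so each change spoils at most one of the $\numcopiessmall = n+1$ windows $W_0, \ldots, W_n$. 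Since at most $n$ windows are spoiled, some window $W_{\ib^\ast}$ has constant assignment $\phi := \phi_{m\ib^\ast} = \cdots = \phi_{m\ib^\ast + m - 1}$. Finally, for every clause $C_\iC$ the vertex $c_{\iC, \ib^\ast}$ lies outside $X$ but must be dominated; its only potential dominators in $X$ are the $v$- or $h$-vertices at position $m\ib^\ast + \iC$ representing literals of $C_\iC$, so at least one literal of $C_\iC$ is satisfied by $\phi$. Hence $\phi$ satisfies $\Phi$.

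The main technical care is in the forced-vertex accounting: one must verify that the lower bounds coming from the $p^1$- and $q$-guards are realised on disjoint vertex sets so that the budget is genuinely tight, thereby excluding guards and clause vertices from $X$; the rest is essentially a bookkeeping exercise linking the monotone structure of $(s, t)$ to the windows $W_\ib$ dictated by the choice $a = m(n+1)$.
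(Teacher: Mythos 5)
Your proof is correct and follows essentially the same route as the paper's: force $r$ via $r^\ast$, use the degree-two guards $p_{\iF,\iB}^1$ and $q_{\iF,\iB}$ together with the tight budget $\goalsize = 1 + 2n'a$ to pin $X$ down to $r$ plus exactly one vertex from each $v$- and each $h$-pair (hence no guards or clause vertices), derive the two monotonicity constraints, apply pigeonhole over the $n+1$ windows determined by $a = m(n+1)$, and finish with domination of $c_{\iC,\ib^\ast}$. One small remark: for the $h$-monotonicity you correctly invoke $v_{\iF,\iB}^1$ with path-neighbours $h_{\iF,\iB-1}^2$ and $h_{\iF,\iB}^1$, whereas the paper's proof cites $v_{\iF,\iB}^2$ — a vertex that under the hypothesized scenario is already dominated by $h_{\iF,\iB}^1$, so the paper evidently means $v_{\iF,\iB+1}^1$; your choice sidesteps that slip.
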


\begin{proof}
First note that the vertex $r^\ast$ ensures that $r \in X$.
Moreover, the guard vertices $p_{\iF,\iB}^1$ and $q_{\iF,\iB}$
ensure that for each $1 \leq \iF \leq n'$ and $0 \leq \iB < a$ at least one vertex
$v_{\iF,\iB}^\ivup$ and at least one vertex $h_{\iF,\iB}^\ivup$ ($1 \leq \ivup \leq 2$)
  belongs to $X$. As $|X| \leq 1 + n' \cdot 2a$ and we have
  $n' \cdot 2a$ aforementioned guards with disjoint neighbourhoods,
for each $1 \leq \iF \leq n'$ and $0 \leq \iB < a$ exactly one vertex
$v_{\iF,\iB}^\ivup$ and exactly one vertex $h_{\iF,\iB}^\ivup$ belongs to $X$.
Moreover, $X \subseteq \{r\} \cup \mathcal{V} \cup \mathcal{H}$.

For each $0 \leq \iB < a$ we construct an assignment $\phi_\iB$ as follows.
For each block $F_\iF = \{x_\iF^1,x_\iF^2\}$ we define:
\begin{enumerate}
\item $\phi_\iB(x_\iF^1)$ to be true if $v_{\iF,\iB}^1 \in X$ and
false if $v_{\iF,\iB}^2 \in X$;
\item $\phi_\iB(x_\iF^2)$ to be true if $h_{\iF,\iB}^1 \in X$ and
false if $h_{\iF,\iB}^2 \in X$.
\end{enumerate}

We now show that the assignments $\phi_\iB$ cannot differ much for all indices $0 \leq \iB < a$.
Note that for each block $F_\iF = \{x_\iF^1, x_\iF^2\}$ and $0 \leq \iB < a-1$:
\begin{enumerate}
\item if $\phi_\iB(x_\iF^1)$ is true, then $\phi_{\iB+1}(x_\iF^1)$ is also true,
  as otherwise $v_{\iF,\iB}^2,v_{\iF,\iB+1}^1 \notin X$ and the guard $p_{\iF,\iB}^2$
  is not dominated by $X$;
\item if $\phi_\iB(x_\iF^2)$ is true, then $\phi_{\iB+1}(x_\iF^2)$ is also true,
  as otherwise $h_{\iF,\iB}^2,h_{\iF,\iB+1}^1 \notin X$ and the vertex $v_{\iF,\iB}^2$
  is either not dominated by $X$ (if $v_{\iF,\iB}^2 \notin X$) or
  isolated in $G[X]$ (if $v_{\iF,\iB}^2 \in X$).
\end{enumerate}
For each variable $x$ we define a sequence $\widehat{\phi}_x(\iB) = \phi_\iB(x)$, $0 \leq \iB < a$.
From the reasoning above we infer that for each variable $x$ the sequence
$\widehat{\phi}_x(\iB)$ can change its value at most once, from false to true.
Thus, as $a = \numcopies$, we conclude that there exists $0 \leq \ib < \numcopiessmall$
such that for all $0 \leq \iC < m$ the assignments $\phi_{m\ib + \iC}$ are equal.

We claim that the assigment $\phi = \phi_{m\ib}$ satisfies $\Phi$.
Consider a clause $C_\iC$ and focus on the vertex $c_{\iC,\ib}$.
It is not contained in $X$, thus one of its neighbour is contained in $X$.
As this neighbour corresponds to an assignment of one
variable that both satisfies $C_\iC$ (by the construction process)
and is consistent with $\phi_{m\ib+\iC} = \phi$ (by the definition of $\phi_{m\ib+\iC}$),
the assignment $\phi$ satisfies $C_\iC$ and the proof is finished.
\end{proof}

\paragraph{Pathwidth bound}

\begin{lemma}\label{lem:cds-seth:pathwidth}
Pathwidth of the graph $G$ is at most $n'+O(1)$.
Moreover a path decomposition of such width can found in polynomial time.
\end{lemma}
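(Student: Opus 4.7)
The plan is to mirror the mixed search strategy used for \cvertexcover (Lemma~\ref{lem:cvc-seth:pathwidth}), adapted to the simpler matrix structure of the \cdomset construction. Since for each block $F_\iF$ the path $\mathcal{P}_\iF$ is cut into $a$ short sections (one per column $\iB$), and since every row interacts with the current column only through the single clause vertex $c_{\iC,\ib}$ and the root $r$, a permanent investment of one searcher per row should suffice, plus $O(1)$ roaming searchers for the currently processed section.

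First I would place one searcher on $r^\ast$ and immediately slide it to $r$, keeping it there for the rest of the procedure; this takes care of all edges incident to the root, in particular the edges from $r$ to every $h_{\iF,\iB}^\ivup$ vertex and to the clause vertices via their attached path endpoints. Next I place, at the very beginning, $n'$ searchers on the entry vertices $v_{\iF,0}^1$, one for each block $F_\iF$. We maintain the invariant that at the start of round $\iB$ there is a searcher on $v_{\iF,\iB}^1$ for every $1 \le \iF \le n'$. In round $\iB$, let $\iC$ and $\ib$ be the unique integers with $\iB = \iC + m\ib$; I first place one extra searcher on $c_{\iC,\ib}$, which will remain there for the whole round and cover every edge between $c_{\iC,\ib}$ and the $n'$ row sections of column $\iB$. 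Then I iterate over rows $\iF = 1, \ldots, n'$ in order and, for the current row, temporarily place a constant number of extra searchers on the section vertices $h_{\iF,\iB}^1, v_{\iF,\iB}^2, h_{\iF,\iB}^2, v_{\iF,\iB+1}^1$ together with the three guards $p_{\iF,\iB}^1, p_{\iF,\iB}^2, q_{\iF,\iB}$; this clears every edge induced by the section (including those to the guards, to $c_{\iC,\ib}$, and to $r$), after which I can remove the searcher from $v_{\iF,\iB}^1$ and drop all the temporary searchers except the one on $v_{\iF,\iB+1}^1$, which restores the invariant for row $\iF$ with $\iB$ replaced by $\iB+1$. Once all $n'$ rows of the column are processed I remove the searcher from $c_{\iC,\ib}$ and proceed to round $\iB+1$. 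After the last round $\iB = a-1$, the invariant places searchers on $v_{\iF,a}^1 = r$ for every $\iF$, so all remaining edges are cleared and the process terminates.

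Counting searchers, at any moment we use the one fixed on $r$, the $n'$ entry-point searchers, the single clause-vertex searcher $c_{\iC,\ib}$, and a constant number of temporaries on the currently processed section, for a total of $n' + O(1)$. Hence $\mathrm{ms}(G) \le n' + O(1)$, which by the proposition relating mixed search number to pathwidth gives $\pw(G) \le n' + O(1)$. The standard construction that reads a path decomposition off a mixed search strategy (the same one implicitly invoked in Lemma~\ref{lem:cvc-seth:pathwidth}) produces this decomposition in polynomial time.

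I do not anticipate any real obstacle: the construction is deliberately designed so that the only long-range edges all pass through $r$ (handled once) or through the clause vertex of the active column (handled once per round), while everything else is local to a single row–column section. The only point requiring minor care is making sure the search strategy really does clear every guard edge before dropping the row-level searchers, which is automatic because the guards $p_{\iF,\iB}^1, p_{\iF,\iB}^2, q_{\iF,\iB}$ have both neighbours inside the section currently held by searchers.
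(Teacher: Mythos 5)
Your proof is correct and follows essentially the same mixed search strategy as the paper: one permanent searcher on the root, $n'$ searchers tracking the current column's entry vertices row by row, one searcher on the active clause vertex, and a constant number of temporaries sweeping each row–column section, giving $n'+O(1)$ searchers total. The paper's version places and removes the same seven temporary searchers per section, so the two arguments coincide in substance as well as in count.
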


\begin{proof}
We give a mixed search strategy to clean the graph with $n'+9$ searchers.
First we put a searcher in the vertex $r^\ast$ and slide it to the root $r$.
This searcher remains there till the end of the cleaning process.

We search the graph in $a = \numcopies$ rounds.
At the beginning of round $\iB$ ($0 \leq \iB < a$)
there are searchers on all vertices
$v_{\iF,\iB}^1$ for $1 \leq \iF \leq n'$.
Let $0 \le \iC < m$ and $0 \le \ib < \numcopiessmall$ be integers such that $\iB = \iC + m\ib$.
We place a searcher on $c_{\iC,\ib}$.
Then, for each $1 \le \iF \le n'$ in turn
we put $7$ searchers on vertices
$p_{\iF,\iB}^1$, $v_{\iF,\iB}^2$, $p_{\iF,\iB}^2$,
$v_{\iF,\iB+1}^1$, $h_{\iF,\iB}^1$, $h_{\iF,\iB}^2$ and $q_{\iF,\iB}$,
and then remove $7$ searchers from vertices
$v_{\iF,\iB}^1$, $p_{\iF,\iB}^1$, $v_{\iF,\iB}^2$, $p_{\iF,\iB}^2$,
$h_{\iF,\iB}^1$, $h_{\iF,\iB}^2$ and $q_{\iF,\iB}$.
The last step of the round is removing a searcher from the vertex $c_{\iC,\ib}$.
After the last round the whole graph $G$ is cleaned.
Since we reuse $8$ searchers in the cleaning process, $n' + 9$
searchers suffice to clean the graph.

Using the above graph cleaning process a path decomposition of width
$n'+O(1)$ can be constructed in polynomial time.
\end{proof}

\begin{proof}[Proof of Theorem \ref{thm:cds-seth}]
Suppose \cdomset can be solved in $(4-\eps)^p |V|^{O(1)}$ time
provided that we are given a path decomposition of $G$ of width $p$.
Given an instance of SAT we construct an instance of \cdomset
using the above construction and solve it
using the $(4-\eps)^p |V|^{O(1)}$ time algorithm.
Lemmata~\ref{lem:cds-seth:corr1}, \ref{lem:cds-seth:corr2} ensure correctness,
whereas Lemma~\ref{lem:cds-seth:pathwidth}
implies that running time of our algorithm is $(4-\eps)^{n/2} |V|^{O(1)}$,
however we have 
$(4-\eps)^{n/2} = (\sqrt{4-\eps})^n$ and $\sqrt{4-\eps} < 2$.
This concludes the proof.
\end{proof}

\renewcommand{\numcopies}{\ensuremath{m(n+1)}\xspace}
\renewcommand{\numcopiessmall}{\ensuremath{n+1}\xspace}

\subsection{\cfvs and \coct}

\begin{theorem}\label{thm:cfvs-seth}
Assuming SETH, there cannot exist a constant $\eps>0$ and an algorithm that given an instance $(G=(V,E),k)$ together 
with a path decomposition of the graph $G$ of width $p$ solves the \cfvs problem in $(4-\eps)^p |V|^{O(1)}$ time.
\end{theorem}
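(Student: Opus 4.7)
I would mimic the proof of Theorem \ref{thm:cds-seth} almost line for line, since \cfvs{} shares with \cdomset{} the target base $4$ and, accordingly, the same ``$n' = n/2$ variables per slot'' regime. Given a SAT instance $\Phi$ with $n$ variables (w.l.o.g.\ $n$ even) and $m$ clauses, I would partition the variables into $n'=n/2$ blocks $F_1,\dots,F_{n'}$ of size two, fix $a = \numcopies$, and construct a graph $G$ together with an integer $\goalsize$ and a path decomposition of width $n' + O(1)$ such that $\Phi$ is satisfiable iff $G$ has a connected feedback vertex set of size at most $\goalsize$. The four possible local assignments of a block $F_\iF$ at slot $\iB$ will be encoded by which one of four ``state vertices'' $v^1_{\iF,\iB}, v^2_{\iF,\iB}, h^1_{\iF,\iB}, h^2_{\iF,\iB}$ is \emph{omitted} from the CFVS.

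\textbf{Construction plan.} The main modification relative to the \cdomset{} reduction is that domination-forcing (pendants and the guard vertices $p^1_{\iF,\iB}, p^2_{\iF,\iB}, q_{\iF,\iB}$) must be replaced by cycle-forcing gadgets appropriate for feedback vertex sets. Concretely, I would: (i) introduce a root $r$ and force $r$ into any small CFVS by attaching enough (say $\goalsize+1$) internally-disjoint triangles through $r$; (ii) for each slot $\iB$ and block $\iF$ arrange the four state vertices on two parallel ``tracks'', where within each track the pair $\{v^1,v^2\}$ (resp.\ $\{h^1,h^2\}$) is joined by a gadget creating a short cycle that can only be broken by including at least one of the two, yielding $2\times 2=4$ valid local choices exactly as in \ref{thm:cds-seth}; (iii) glue consecutive slots together by short cycles attached to the backbone $\mathcal{P}_\iF$ so that a change of the chosen state across slots creates an extra unbreakable cycle unless further CFVS vertices are spent; and (iv) for each clause $C_\iC$ and each copy $\ib\in\{0,\ldots,n\}$ introduce a clause vertex $c_{\iC,\ib}$, force it into every CFVS by pendant triangles, and connect it only to those state vertices of slot $m\ib+\iC$ whose inclusion in the CFVS encodes a satisfying literal of $C_\iC$. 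Connectivity of $Y$ through the root then forces at least one such satisfying neighbour into $Y$.

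\textbf{Correctness and pathwidth.} Correctness would follow by proving the analogues of Lemmata \ref{lem:cds-seth:corr1} and \ref{lem:cds-seth:corr2}. For the forward direction, given a satisfying assignment $\phi$ one picks, in every slot and every block, the three of four state vertices dictated by $\phi$, plus the forced vertices; a direct check shows that $G[V\setminus Y]$ is a forest (all gadget cycles are broken) and that $G[Y]$ is connected via $r$ through the track vertices. For the reverse direction, an accounting argument analogous to the one for \cdomset{} shows that a CFVS of size $\goalsize$ must saturate each local cycle-breaking gadget with exactly three vertices, defining per-slot assignments $\phi_\iB$; propagation cycles between slots $\iB$ and $\iB+1$ enforce that each variable can ``flip'' its value at most once across the $a = \numcopies$ slots, so by pigeonhole there is a window of $m+1$ consecutive slots on which all $\phi_\iB$ agree, and the clause vertex $c_{\iC,\ib}$ in that window witnesses satisfaction of $C_\iC$. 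The pathwidth bound $n'+O(1)$ is proved by the same mixed-search strategy as in Lemma \ref{lem:cds-seth:pathwidth}: keep one searcher on $r$, one searcher per block tracking the active state vertex, and process slots from left to right reusing a constant pool of $O(1)$ additional searchers to sweep the local gadget, the adjacent backbone vertices, and the current clause vertex $c_{\iC,\ib}$.

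\textbf{Main obstacle.} The delicate step is the design of the local four-state gadget and the inter-slot propagation. Unlike the domination setting, any gadget edge we add to force a vertex into $Y$ also introduces a potential cycle in $G[V\setminus Y]$ whenever the ``wrong'' state vertex is chosen, and the extra pendant-triangle bundles used for forcing interact with the connectivity requirement (they must be attachable to $Y$ through $r$). I expect that getting the gadget to simultaneously (a) admit exactly four locally optimal choices, (b) leave $G[V\setminus Y]$ acyclic for each of them, (c) allow the chosen vertices to be connected to $r$ without opening a cycle-free ``shortcut'' that decouples neighbouring slots, and (d) fit into a path decomposition of width $n'+O(1)$, will be the only step requiring genuine new ideas; once the gadget is fixed, the propagation-plus-pigeonhole argument and the final SETH-style calculation $(4-\eps)^{n/2} = (\sqrt{4-\eps})^n = 2^{\lambda n}$ with $\lambda<1$ go through verbatim from the proof of Theorem \ref{thm:cds-seth}.
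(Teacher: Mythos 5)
Your plan is essentially the paper's plan — partition into blocks of two, repeat each slot $a = \numcopies$ times, force a root, encode the four local states as a $\{v^1,v^2\}\times\{h^1,h^2\}$ choice, use forcing gadgets and cycle-based propagation, then run the pigeonhole-and-$(\sqrt{4-\eps})^n$ finish — and you have correctly flagged the gadget design as the only step requiring real work. Let me fill in what the paper actually does there, since the choices are quite specific.

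For forcing, the paper replaces every pendant vertex and degree-$2$ guard of the \cdomset{} reduction by a single primitive, a \emph{pentagon edge} $vw$: three new degree-$2$ vertices form two internally disjoint $v$--$w$ paths of lengths $2$ and $3$, i.e.\ a $5$-cycle through $v$ and $w$. The $5$-cycle must be hit by any FVS (it is a cycle) and by any OCT (it is odd), and since the three new vertices all have degree $2$ one may always assume the solution hits $\{v,w\}$ instead. Pentagon edges $v^1_{\iF,\iB}v^2_{\iF,\iB}$ and $h^1_{\iF,\iB}h^2_{\iF,\iB}$ give the $2\times 2$ local choice, and a pentagon edge $v^2_{\iF,\iB}v^1_{\iF,\iB+1}$ gives monotone propagation on the $v$-track. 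The subtle part you were worried about, propagation on the $h$-track, is handled differently: there is no direct pentagon between $h^2_{\iF,\iB}$ and $h^1_{\iF,\iB+1}$. Instead the paper introduces two extra global vertices $r^1,r^2$ with $r^1r^2\in E$, connects every $h^{\ivup}$ to both $r$ and $r^{\ivup}$, and observes that if neither $h^2_{\iF,\iB}$ nor $h^1_{\iF,\iB+1}$ is in the solution then either $v^2_{\iF,\iB}\in Y$ becomes disconnected from $r$ (its only non-pentagon neighbours are those two $h$-vertices, and clause vertices are leaves in $G[Y]$), or $v^2_{\iF,\iB}\notin Y$ and then $v^2_{\iF,\iB},h^2_{\iF,\iB},r^2,r^1,h^1_{\iF,\iB+1}$ is a surviving $5$-cycle. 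Finally, clause vertices $c_{\iC,\ib}$ are forced by triangles, but they are attached to the satisfying state vertices through subdivision vertices (paths of length $2$), not direct edges as you propose; the intermediate vertex $c^{\gamma}_{\iC,\ib,\iF,\beta}$ is what the solution pays for to connect $c_{\iC,\ib}$ to the root, and the budget $\goalsize = 1+2a+n'\cdot 2a$ allows exactly one such intermediate per clause vertex, which drives the final argument. One small bonus in the paper that your plan misses: the same construction proves the \coct{} bound simultaneously, by showing satisfiability $\Rightarrow$ small CFVS and small COCT $\Rightarrow$ satisfiability, and using that every CFVS is a COCT.
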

\begin{theorem}\label{thm:coct-seth}
Assuming SETH, there cannot exist a constant $\eps>0$ and an algorithm that given an instance $(G=(V,E),k)$ together 
with a path decomposition of the graph $G$ of width $p$ solves the \coct problem in $(4-\eps)^p |V|^{O(1)}$ time.
\end{theorem}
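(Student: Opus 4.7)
My plan is to prove both theorems along the template of Theorem \ref{thm:cds-seth}: encode a SAT instance on $n$ variables using blocks of two variables per row, so that the construction has pathwidth roughly $n'=n/2$ and a faster-than-$(4-\varepsilon)^p$ algorithm would yield an algorithm for SAT running in time $(\sqrt{4-\varepsilon})^n \mathrm{poly}(n,m) = (2-\delta)^n \mathrm{poly}(n,m)$ for some $\delta>0$, contradicting SETH. The reason base $4$ is natural for both \cfvs{} and \coct{} is the four-state dynamic programming used in our algorithms (Section \ref{sec:app:cx}): for \cfvs{} the states are ``in $Y$ on left of the solution-cut'', ``in $Y$ on right of the solution-cut'', ``outside $Y$ on left of the forest-cut'', ``outside $Y$ on right of the forest-cut''; for \coct{} the four states are ``in the transversal on left of the cut'', ``in the transversal on right of the cut'', ``in the first color class'', ``in the second color class''. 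So for each block position we want a gadget offering exactly $2^2=4$ meaningful local choices, one per assignment of the two block variables.

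For \cfvs{} I would build an $n'\times a$ matrix of block gadgets, with $a=m(n+1)$ columns, and a central vertex $r$ made mandatory for the connected FVS by attaching a pendant with a large clique or by repetition, mirroring the role of $r$ in Theorem \ref{thm:cds-seth}. Each block gadget at row $\iF$, column $\iB$ will contain four candidate vertices $w_{\iF,\iB}^{\alpha,\beta}$ with $\alpha,\beta\in\{0,1\}$ and a small triangle (or a parallel-edge/cycle gadget, possibly via subdivisions) whose only way of becoming acyclic on a budget is to remove exactly one $w_{\iF,\iB}^{\alpha,\beta}$; the two coordinates $(\alpha,\beta)$ are the two bits of the assignment to $F_\iF$. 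Propagation gadgets between column $\iB$ and $\iB+1$ will create forbidden cycles unless the $(\alpha,\beta)$ chosen in the two columns agree, analogously to the guards $p_{\iF,\iB}^{1,2}$ and $q_{\iF,\iB}$ in the \cdomset{} construction. Clause vertices $c_{\iC,\ib}$ will be made mandatory (by repetition to $a=m(n+1)$ copies, giving by pigeonhole a column window where no propagation gadget has ``flipped''); $c_{\iC,\ib}$ is connected to the $w_{\iF,m\ib+\iC}^{\alpha,\beta}$ vertices whose local choice satisfies $C_\iC$, so that connecting $c_{\iC,\ib}$ to the forced root $r$ inside $G[Y]$ witnesses satisfaction. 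The chosen $w$ vertices together with the cut information realize a connected feedback vertex set of the prescribed size iff the assignment satisfies $\Phi$. The pathwidth bound is obtained by a mixed-search argument cleaning one column at a time while keeping one searcher per row, giving $n'+O(1)=n/2+O(1)$.

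For \coct{} the outline is identical, but the block gadget has to exploit bipartiteness rather than acyclicity. In each block gadget I would place two vertices $u_{\iF,\iB}^1,u_{\iF,\iB}^2$ forming an odd structure (for instance a triangle on $u^1,u^2$ and a forced vertex) so that one of them must be put into the transversal, encoding one bit of the assignment; a separate vertex pair inside the gadget would be free to be $2$-colored either way (encoding the second bit) while being pinned to a canonical choice via edges forcing propagation between consecutive columns. The guard/propagation gadgets here are odd cycles that become even iff both local $4$-state choices agree across columns. Clause checking proceeds exactly as for \cfvs{}: a mandatory clause vertex $c_{\iC,\ib}$ must be connected to $r$ through the transversal, and its neighborhood consists of precisely those gadget vertices whose state corresponds to an assignment of the incident variable satisfying $C_\iC$. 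Mandatoriness of $r$ and of clause vertices is arranged by attaching small odd-cycle/connectivity-forcing gadgets (e.g., pendant triangles made mandatory by iteration, as in Lemmata \ref{lem:cvc-seth:corr1} and \ref{lem:cvc-seth:corr2}).

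The main obstacle in both constructions is the simultaneous encoding of a two-bit block choice together with the extra binary information (side of the cut for vertices in $Y$, resp.\ bipartition class for vertices outside the transversal) without inflating the number of states beyond four per row, which is what makes the pathwidth bound $n/2+O(1)$ and hence yields the $(\sqrt{4-\varepsilon})^n$ consequence. Concretely, I expect to need to be careful so that in the \cfvs{} gadget the ``forest cut'' choice is forced by a propagation structure that does not itself generate spurious cycles, and in the \coct{} gadget the ``bipartition color'' choice is forced by small odd cycles that only become even under the intended color. Once these block and propagation gadgets are settled, correctness (Lemmata analogous to \ref{lem:cds-seth:corr1}--\ref{lem:cds-seth:corr2}) and the pathwidth bound (via the mixed-search strategy used in Lemma \ref{lem:cds-seth:pathwidth}) follow by the same arguments as in the \cdomset{} proof, and the final derivation contradicting SETH is identical to the closing paragraph of the proof of Theorem \ref{thm:cds-seth}.
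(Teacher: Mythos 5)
Your high-level plan captures the right template — blocks of two variables, a matrix of gadgets, $a = m(n+1)$ columns to get a stable window by pigeonhole, and a mixed-search pathwidth bound of $n/2 + O(1)$ — and you correctly diagnose why the base is $4$ in terms of the four DP states. But there are two substantive gaps.

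First, you propose two separate constructions, one for \cfvs{} and one for \coct{}. The paper proves both theorems with a \emph{single} construction $(G,K)$, exploiting the observation that every connected feedback vertex set is a connected odd cycle transversal: it shows that (i) if $\Phi$ is satisfiable then $G$ has a CFVS of size $K$, and (ii) if $G$ has a COCT of size $K$ then $\Phi$ is satisfiable. This one-sided argument immediately yields both lower bounds and spares you from having to design an odd-cycle-based gadget at all. Your proposal does not notice this route, and your \coct{} gadget (``an odd structure'' on $u^1,u^2$ plus a free-to-$2$-color pair) is only sketched; the hard part is exactly the one you defer — forcing the bipartition choice to propagate across columns without creating spurious parities. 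In the paper's construction this is handled by two global vertices $r^1,r^2$ (adjacent to every $h^1$ resp. $h^2$), which stay \emph{outside} the solution because the budget is tight; the propagation constraint $\phi_\iB(x_\iF^2)\!=\!\text{true}\Rightarrow\phi_{\iB+1}(x_\iF^2)\!=\!\text{true}$ is then enforced either by connectivity (if $v_{\iF,\iB}^2\in X$) or by a $5$-cycle $v_{\iF,\iB}^2,h_{\iF,\iB}^2,r^2,r^1,h_{\iF,\iB+1}^1$ surviving in $G\setminus X$ (if $v_{\iF,\iB}^2\notin X$). Your outline has no analogue of this mechanism, and it is not obvious how to construct one from the ingredients you name.

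Second, your block gadget for \cfvs{} is structurally different (four vertices $w^{\alpha,\beta}$, exactly one removed per column) from the paper's, which reuses the \cdomset{} layout: two pairs $\{v^1,v^2\}$ and $\{h^1,h^2\}$ per position, one chosen from each pair, enforced by ``pentagon edge'' gadgets (a $C_5$ on $v,w$ plus three degree-two guards) that force $v\in X$ or $w\in X$ for any CFVS/COCT, analogous to your ``triangle/parallel-edge/cycle gadget''. In principle a $1$-out-of-$4$ encoding can work, but it changes the propagation gadget entirely, and you have not specified how the remaining three vertices of each column avoid forming cycles or odd cycles among themselves and with neighboring columns, nor how the clause check threads through the chosen vertices. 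The paper instead lets the clause vertex $c_{\iC,\ib}$ (forced into $X$ by a pendant triangle) reach the root in $G[X]$ only if some neighbor two steps away in $\mathcal{V}\cup\mathcal{H}$ lies in $X$; since the intermediate vertices $c_{\iC,\ib,\iF,\beta}^\gamma$ have degree two, they cannot be in $X$ under the tight budget, so the path must pass through an ``assignment'' vertex. Your proposal does not give a comparable mechanism, and without one the converse direction (solution $\Rightarrow$ satisfying assignment) is not established. In short: your framework is sound and aligned with the paper's, but the gadget-level arguments — which carry the whole weight of correctness — are missing, and the cleanest fix (one construction for both problems via CFVS $\subseteq$ COCT) is an idea you should add.
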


In this section we prove Theorems \ref{thm:cfvs-seth} and \ref{thm:coct-seth} at once.
That is, we provide a single reduction that, given an instance $\Phi$ of SAT with $n$
variables and $m$ clauses, produces a graph $G$ together
with path decomposition of width roughly $n/2$ and integer $\goalsize$,
such that (1) if $\Phi$ is satisfiable then $G$ admits a connected feedback vertex set
of size at most $\goalsize$ (2) if $G$ admits a connected odd cycle transversal of size at most
$\goalsize$, then $\Phi$ is satisfiable. As any connected feedback vertex set
is a connected odd cycle transversal as well, this is sufficient to prove lower bounds
for \cfvs and \coct.

\paragraph{Construction}

Before we start, let us introduce one small gadget. By {\em{introducing a pentagon edge $vw$}}
we mean the following construction: we add three new vertices $u_{vw}^1,u_{vw}^2,u_{vw}^3$ and edges $vu_{vw}^1$, $u_{vw}^1w$, $vu_{vw}^2$, $u_{vw}^2u_{vw}^3$, $u_{vw}^3w$.
In the graph $G$ the vertices $u_{vw}^\alpha$ are of degree two and thus the created pentagon ensures that any connected feedback vertex set or connected
odd cycle transversal of $G$ includes $v$ or $w$. We call $u_{vw}^\alpha$ {\em{guard vertices}}.

Given $\eps > 0$ and an instance $\Phi$ of SAT with $n$ variables and $m$ clauses we construct a graph $G$ as follows.
We assume that the number of variables $n$ is even, otherwise we add a single dummy variable.
We partition variables of $\Phi$ into groups $F_1,\ldots,F_{n'}$, each of size two,
hence $n'= n/2$.
The pathwidth of $G$ will be roughly $n'$.

First, we add to the graph $G$ five vertices $r^1$, $r^2$, $r$, $r^\ast$ and $r^{\ast\ast}$
and edges $r^1r^2$, $rr^\ast$, $r^\ast r^{\ast\ast}$ and $r^{\ast\ast}r$.
In the graph $G$ the vertices $r^\ast$ and $r^{\ast\ast}$ are of degree two,
thus any connected feedback vertex set or connected odd cycle transversal of $G$
needs to include $r$. The vertex $r$ is called a {\em{root}}.

Second, we take $a = \numcopies$ and for each $1 \leq \iF \leq n'$
we create a path $\mathcal{P}_\iF$ consisting
of $4a$ vertices $v_{\iF,\iB}^\ivup$ and $h_{\iF,\iB}^\ivup$,
$0 \leq \iB < a$ and $1 \leq \ivup \leq 2$.
On the path $\mathcal{P}_\iF$ the vertices are arranged in the following order:
$$v_{\iF,0}^1, h_{\iF,0}^1, v_{\iF,0}^2, h_{\iF,0}^2, v_{\iF,1}^1,\ldots, h_{\iF,a-1}^2.$$
Let $\mathcal{V}$ and $\mathcal{H}$ be the sets of all vertices $v_{\iF,\iB}^\ivup$ and
$h_{\iF,\iB}^\ivup$ ($1 \leq \iF \leq n'$, $0 \leq \iB < a$, $1 \leq \ivup \leq 2$), respectively.
We connect vertices $v_{\iF,0}^1$ and all vertices in $\mathcal{H}$ to the root $r$.
Moreover, we connect all vertices $h_{\iF,\iB}^\ivup$ ($1 \leq \iF \leq n'$, $0 \leq \iB < a$, $1 \leq \ivup \leq 2$)
to the vertex $r^\ivup$.
To simplify further notation we denote $v_{\iF,a}^1 = r$, note that $h_{\iF,a-1}^2v_{\iF,a}^1 \in E$.

Third, for each $1 \leq \iF \leq n'$ and $0 \leq \iB < a$
we introduce pentagon edges $v_{\iF,\iB}^1v_{\iF,\iB}^2$, $v_{\iF,\iB}^2v_{\iF,\iB+1}^1$ and $h_{\iF,\iB}^1h_{\iF,\iB}^2$.

The intuition of the construction made so far is as follows. For each two-variable block $F_\iF$
we encode any assignment of the variables in $F_\iF$ as a choice whether to take
$v_{\iF,\iB}^1$ or $v_{\iF,\iB}^2$ and $h_{\iF,\iB}^1$ or $h_{\iF,\iB}^2$ to the connected
feedback vertex set or connected odd cycle transversal in $G$.

We have finished the part of the construction needed to encode an assignment
and now we add vertices used
to check the satisfiability of the formula $\Phi$.
Let $C_0,\ldots,C_{m-1}$ be the clauses of the formula $\Phi$.
For each clause $C_\iC$ we create $(\numcopiessmall)$
triples of vertices $c_{\iC,\ib}$, $c_{\iC,\ib}^\ast$, $c_{\iC,\ib}^{\ast\ast}$, one for each $0 \le \ib < \numcopiessmall$.
Each such triple is connected into a triangle. The vertices $c_{\iC,\ib}^\ast$ and $c_{\iC,\ib}^{\ast\ast}$ are of degree two
in $G$, thus they ensure that each vertex $c_{\iC,\ib}$ is contained in any connected feedback vertex set or connected odd cycle transversal in $G$.
Let $\mathcal{C}$ be the set of all vertices $c_{\iC,\ib}$, $|\mathcal{C}| = a$.

Consider a clause $C_\iC$ and a group of variables $F_\iF=\{x_\iF^1, x_\iF^2\}$.
If $x_\iF^1$ occurs positively in $C_\iC$ then we connect $c_{\iC,\ib}$ with
$v_{\iF,m\ib+\iC}^1$ via a path of length two, that is we add a vertex $c_{\iC,\ib,\iF,1}^v$
and edges $v_{\iF,m\ib+\iC}^1c_{\iC,\ib,\iF,1}^v$,$c_{\iC,\ib,\iF,1}^vc_{\iC,\ib}$.
If $x_\iF^1$ occurs negatively in $C_\iC$ then we connect $c_{\iC,\ib}$ with
$v_{\iF,m\ib+\iC}^2$ via a path of length two, that is we add a vertex $c_{\iC,\ib,\iF,2}^v$
and edges $v_{\iF,m\ib+\iC}^2c_{\iC,\ib,\iF,2}^v$,$c_{\iC,\ib,\iF,2}^vc_{\iC,\ib}$.
Similarly if $x_\iF^2$ occurs positively in $C_\iC$ then we connect $c_{\iC,\ib}$
with $h_{\iF,m\ib+\iC}^1$ via a path of length two, that is we add a vertex $c_{\iC,\ib,\iF,1}^h$
and edges $h_{\iF,m\ib+\iC}^1c_{\iC,\ib,\iF,1}^h$,$c_{\iC,\ib,\iF,1}^hc_{\iC,\ib}$.
If $x_\iF^2$ occurs negatively in $C_\iC$ then we connect $c_{\iC,\ib}$ with $h_{\iF,m\ib+\iC}^2$
via a path of length two, that is we add a vertex $c_{\iC,\ib,\iF,2}^h$
and edges $h_{\iF,m\ib+\iC}^2c_{\iC,\ib,\iF,2}^h$,$c_{\iC,\ib,\iF,2}^hc_{\iC,\ib}$.

Intuitively, taking the vertex $v_{\iF,m\ib+\iC}^1$
into a connected feedback vertex set or connected odd cycle transversal corresponds to setting $x_\iF^1$ to true,
whereas taking the vertex $h_{\iF,m\ib+\iC}^1$ corresponds to setting $x_\iF^2$ to true.

We can view the whole construction as a matrix, where each row corresponds to some group of variables $F_\iF$
and each column is devoted to some clause in such a way that each clause gets $(\numcopiessmall)$ private columns (but not consecutive)
of the matrix.

\begin{figure}
\begin{center}
\includegraphics{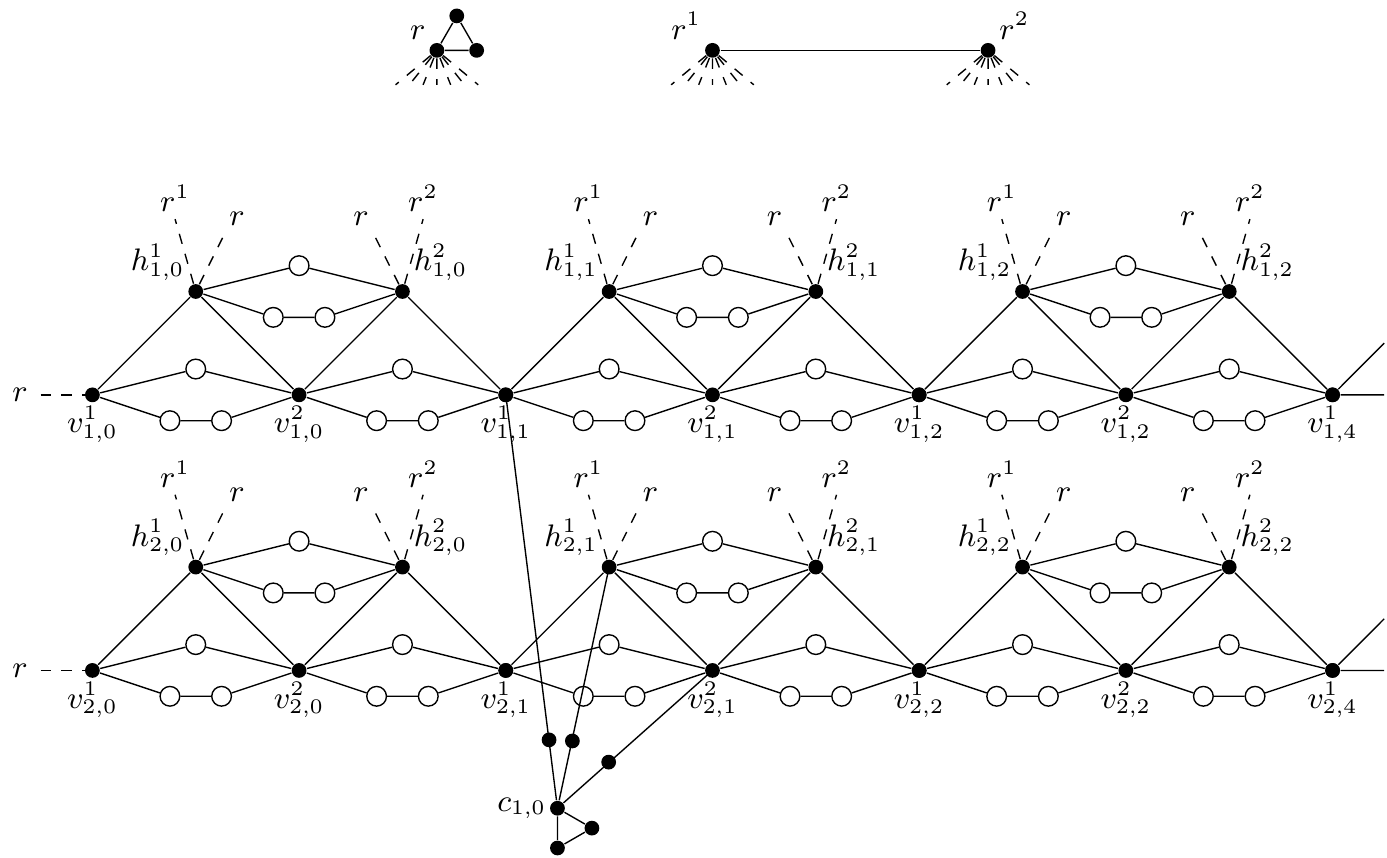}
\caption{Part of the construction for \cfvs and \coct.
  Dashed edges have one endpoint $r$, $r^1$ or $r^2$.
  Empty circles represent guard vertices in the pentagon edges.}
  \label{fig:seth-cfvs-coct}
  \end{center}
  \end{figure}

Finally, let $\goalsize = 1 + 2a + n' \cdot 2a$ be the size of the connected dominating set we ask for.

\paragraph{Correctness}

\begin{lemma}\label{lem:cfvs-coct-seth:corr1}
If $\Phi$ has a satisfying assignment, then there exists a connected feedback vertex set $X$ in the graph $G$ of size $\goalsize$.
\end{lemma}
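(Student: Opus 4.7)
The plan is to build an explicit connected feedback vertex set $X \subseteq V(G)$ from a satisfying assignment $\phi$ of $\Phi$, verify $|X| = \goalsize = 1 + 2a + 2an'$, and then check connectivity of $G[X]$ together with acyclicity of $G[V \setminus X]$. I would not include $r^1$ or $r^2$ in $X$.

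I would define $X$ to contain the root $r$, all $a$ forced clause vertices $c_{\iC,\ib}$, and, for every block $F_\iF = \{x_\iF^1, x_\iF^2\}$ and every $0 \le \iB < a$, the vertex $v_{\iF,\iB}^1$ if $\phi(x_\iF^1)$ is true and $v_{\iF,\iB}^2$ otherwise, and similarly $h_{\iF,\iB}^1$ or $h_{\iF,\iB}^2$ according to $\phi(x_\iF^2)$, contributing $2an'$ vertices. Since $\phi$ satisfies every clause $C_\iC$, for each $c_{\iC,\ib}$ I can fix a literal of $C_\iC$ that $\phi$ makes true; this literal corresponds to an auxiliary $c_{\iC,\ib,\iF,\ivup}^{v/h}$ whose other neighbor is the chosen $v_{\iF,m\ib+\iC}^\ivup$ or $h_{\iF,m\ib+\iC}^\ivup$ in $X$. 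Adding these $a$ auxiliaries yields $|X| = 1 + a + a + 2an' = \goalsize$.

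To check that $G[X]$ is connected I would observe that $r$ is adjacent in $G$ to every chosen $h$-vertex, so all chosen $h$-vertices meet $r$. Every chosen $v_{\iF,\iB}^\alpha$ has a chosen $h$-neighbor on the path $\mathcal{P}_\iF$: a short case analysis on whether $\phi(x_\iF^2)$ is true or false shows that, in either case, one of the two path-neighbors of the chosen $v$-vertex is a chosen $h$ (or, for $v_{\iF,0}^1$, $r$ itself). Finally, each $c_{\iC,\ib}$ reaches a chosen $v/h$-vertex via its selected auxiliary, and then reaches $r$. Hence $G[X]$ is connected.

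The main technical obstacle is showing that $G[V \setminus X]$ is a forest, which requires a careful survey of all cycles of $G$. The triangles on $\{r, r^\ast, r^{\ast\ast}\}$ and on $\{c_{\iC,\ib}, c_{\iC,\ib}^\ast, c_{\iC,\ib}^{\ast\ast}\}$ are broken by $r$ and the $c_{\iC,\ib}$'s in $X$. Each pentagon edge has at least one of its two ``real'' endpoints (a $v$ or $h$) in $X$, because the block-consistent choice selects one vertex out of each relevant pair; the remaining pentagon guards attach only as degree-one leaves or pendant length-two paths. Each auxiliary $c_{\iC,\ib,\iF,\ivup}^{v/h}$ has $c_{\iC,\ib} \in X$ as one neighbor, so after deletion it becomes either isolated (if the $v/h$ endpoint is in $X$) or a pendant at that $v/h$; no cycles arise. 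The only delicate remaining cycles could pass through $r^1$ or $r^2$. After removing $X$, the neighbours of $r^1$ in $V \setminus X$ are $r^2$ together with the vertices $h_{\iF,\iB}^1$ for blocks with $\phi(x_\iF^2)$ false, and symmetrically for $r^2$. A block-by-block case analysis on $(\phi(x_\iF^1), \phi(x_\iF^2))$ shows that in every case the surviving portion of $\mathcal{P}_\iF$ together with its pentagon pendants decomposes into disjoint small trees, so no two distinct surviving $h$-vertices of the same block are connected inside $\mathcal{P}_\iF$; and distinct paths $\mathcal{P}_\iF, \mathcal{P}_{\iF'}$ are vertex-disjoint. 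Therefore no second path from a neighbour of $r^1$ (resp.\ $r^2$) returns to $r^1$ (resp.\ $r^2$), so $r^1$ and $r^2$ are centres of tree-like components, and $G[V \setminus X]$ is a forest.
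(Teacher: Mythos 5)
Your proposal is correct and follows essentially the same approach as the paper: you build the same set $X$ (root, all clause vertices $c_{\iC,\ib}$, one $v$- and one $h$-vertex per block per column according to $\phi$, plus one auxiliary per clause vertex), verify the same size count, argue connectivity via paths to $r$, and handle the forest check by noting that pentagon guards become leaves, auxiliaries become leaves or isolated, and that no cycles can pass through $r^1$ or $r^2$ because each path $\mathcal{P}_\iF$ decomposes into small disconnected pieces after deletion. The paper phrases the $r^1/r^2$ case slightly more compactly ("no vertex in $\mathcal{H}\setminus X$ can reach both $r^1$ and $r^2$ in $G\setminus X$ without using the edge $r^1r^2$") but your block-by-block analysis proves the same fact.
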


\begin{proof}
Given a satisfying assignment $\phi$ of the formula $\Phi$
we construct a connected feedback vertex set $X$ as follows.
For each block $F_\iF = \{x_\iF^1, x_\iF^2\}$ and for
each $0 \leq \iB < a$ we include into $X$:
\begin{enumerate}
\item the vertex $v_{\iF,\iB}^1$ if $\phi(x_\iF^1)$ is true, and $v_{\iF,\iB}^2$ otherwise;
\item the vertex $h_{\iF,\iB}^1$ if $\phi(x_\iF^2)$ is true, and $h_{\iF,\iB}^2$ otherwise.
\end{enumerate}
Moreover, we put $r$ and all vertices in $\mathcal{C}$ into $X$.
Finally, for each clause $C_\iC$ 
let $x_\iF^\alpha \in F_\iF$ be any fixed variable satisfying $C_\iC$ in the assignment $\phi$.
For each $0 \le \ib < \numcopiessmall$
we add to the set $X$ exactly one neighbour of $c_{\iC,\ib}$, namely we add the vertex $c_{\iC,\ib,\iF,\beta}^\gamma$,
where $\gamma=v$ if $\alpha=1$ and $\gamma=h$ otherwise, whereas
$\beta=1$ if $\phi(x_\iF^\alpha)$ is true and $\beta=2$ otherwise.

Note that $|X| = 1 + 2a + n' \cdot 2a = \goalsize$.
We now verify that $X$ is a connected feedback vertex set in $G$.
First, we verify that $G \setminus X$ is a forest.
\begin{enumerate}
\item $r^\ast$, $r^{\ast\ast}$, $c_{\iC,\ib}^\ast$, $c_{\iC,\ib}^{\ast\ast},c_{\iC,\ib,\iF,\beta}^\gamma$ ($0 \leq \iC < m$, $0 \leq \ib < \numcopiessmall$, $1 \le \iF \le n'$, $1 \le \beta \le 2$, $\gamma \in \{v,h\}$) are either contained in $X$ or of degree one in $G \setminus X$.
\item Each guard vertex in $G \setminus X$ is either of degree at most one
  or is of degree two and has a leaf as a neighbour,
  as $X$ includes at least one endpoint of each pentagon edge.
\item In $G\setminus X$ the vertices from $\mathcal{V} \setminus X$ are connected to guard vertices, vertices $c_{\iC,\ib,\iF,\beta}^\gamma$,  and at most one vertex from $\mathcal{H} \setminus X$.
\item In $G\setminus X$ the vertices from $\mathcal{H} \setminus X$ are connected to guard vertices, vertices $c_{\iC,\ib,\iF,\beta}^\gamma$, at most one vertex from $\mathcal{V} \setminus X$, and exactly one vertex from the set $\{r^1,r^2\}$.
\item In $G\setminus X$ the vertices $r^1$ and $r^2$ are connected to each other and to some vertices in $\mathcal{H} \setminus X$, but no vertex in $\mathcal{H} \setminus X$ 
can reach both $r^1$ and $r^2$ in $G \setminus X$ without using the edge $r^1r^2$.
\end{enumerate}

To finish the proof we need to ensure that $G[X]$ is connected.
We prove this by showing that each vertex in $X$ is connected to the root $r$ in $G[X]$.
This is obvious for vertices in $X \cap \mathcal{H}$, as $\mathcal{H} \subseteq N_G(r)$.
For each $1 \leq \iF \leq n'$ and $0 \leq \iB < a$:
\begin{enumerate}
\item if $v_{\iF,\iB}^1 \in X$, then $v_{\iF,\iB}^1$ is connected to root
via $h_{\iF,\iB-1}^2$ or $h_{\iF,\iB}^1$, with the exception of $v_{\iF,0}^1$,
    that is connected to $r$ directly;
\item if $v_{\iF,\iB}^2 \in X$, then $v_{\iF,\iB}^2$ is connected to root
via $h_{\iF,\iB}^1$ or $h_{\iF,\iB}^2$.
\end{enumerate}
We are left with the vertices $c_{\iC,\ib}$ and their neighbours chosen to $X$ for $0 \leq \iC < m$, $0 \leq \ib < \numcopiessmall$.
However, by the definition of $X$ the only neighbour of $c_{i,j}$ chosen to $X$ connects it to a vertex $w \in \mathcal{V} \cup \mathcal{H}$ corresponding to a choice of the value $\phi(x)$ for some variable $x$. Therefore $w\in X$, so $c_{i,j}$ along with its only neighbour from $X$ are also connected to the root.
\end{proof}

\begin{lemma}\label{lem:cfvs-coct-seth:corr2}
If there exists a connected odd cycle transversal $X$ of size at most $\goalsize$ in the graph $G$,
then $\Phi$ has a satisfying assignment.
\end{lemma}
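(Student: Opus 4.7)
The plan is to mirror the structure of Lemma \ref{lem:cds-seth:corr2}, substituting the pentagon and connectivity constraints for the domination constraints, and exploiting budget-tightness to pin down the structure of $X$ precisely. Note that a connected feedback vertex set is automatically a connected odd cycle transversal, so it suffices to handle \coct{} here; every pentagon in the construction is a $5$-cycle, hence simultaneously an odd cycle and a cycle, so the argument applies uniformly to both problems.

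First I would enumerate the forced inclusions and use tight budget accounting. The triangle $\{r,r^\ast,r^{\ast\ast}\}$ must be hit, but since $r^\ast,r^{\ast\ast}$ both have degree $2$ with neighbourhoods contained in $\{r,r^\ast,r^{\ast\ast}\}$, the assumption that $G[X]$ is connected forces $r\in X$ (otherwise the vertices used to hit this triangle form an isolated component). The same argument applied to the triangles containing $c_{\iC,\ib}^\ast,c_{\iC,\ib}^{\ast\ast}$ forces every $c_{\iC,\ib}\in X$. Each pentagon contributes a $5$-cycle that must be hit; chaining the $v$-pentagons through their shared $v^2$-endpoints along $\mathcal{P}_\iF$ (with the final pentagon already pierced by $r=v_{\iF,a}^1$) requires at least $a$ vertices of $\{v_{\iF,\iB}^1,v_{\iF,\iB}^2\}_{\iB}$ per $\iF$, and the $a$ independent $h$-pentagons per $\iF$ require $a$ more. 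Together with the forced inclusions this yields $|X|\ge 1+a+2n'a=\goalsize-a$. Finally, each clause vertex $c_{\iC,\ib}\in X$ needs a path to $r$ in $G[X]$, and the only available routes go through the length-$2$ paths $c_{\iC,\ib}-c_{\iC,\ib,\iF,\beta}^\gamma-(v$ or $h)$, so at least one such connector must belong to $X$, using up exactly the remaining $a$ units of budget.

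Next I would exploit this tightness to conclude that $X$ contains no pentagon guard, neither $r^1$ nor $r^2$, exactly one of $\{v_{\iF,\iB}^1,v_{\iF,\iB}^2\}$ and one of $\{h_{\iF,\iB}^1,h_{\iF,\iB}^2\}$ per pair, and exactly one connector $c_{\iC,\ib,\iF,\beta}^\gamma$ per clause vertex, with its $v/h$-endpoint also in $X$. Defining $\phi_\iB(x_\iF^1)=\mathrm{true}\iff v_{\iF,\iB}^1\in X$ and $\phi_\iB(x_\iF^2)=\mathrm{true}\iff h_{\iF,\iB}^1\in X$, the pentagon $\{v_{\iF,\iB}^2,v_{\iF,\iB+1}^1\}$ forbids the transition $y_\iB=1,y_{\iB+1}=2$, so $\phi_\iB(x_\iF^1)$ is monotone non-decreasing in $\iB$ with at most one flip. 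For $\phi_\iB(x_\iF^2)$, I would invoke the connectivity of $v_{\iF,\iB+1}^1$ when it lies in $X$: its only $h$-neighbours are $h_{\iF,\iB}^2$ and $h_{\iF,\iB+1}^1$, and since guards and $r^i$ are excluded and the scarcely distributed connectors cannot repair more than $a$ positions, this forbids $\phi_\iB(x_\iF^2)=\mathrm{true}, \phi_{\iB+1}(x_\iF^2)=\mathrm{false}$ whenever $v_{\iF,\iB+1}^1\in X$, giving monotonicity of $\phi_\iB(x_\iF^2)$ throughout. The main obstacle here is handling the regime where $\phi_\iB(x_\iF^1)$ is identically false: there, the $v^1$-based connectivity argument does not directly apply, and I would need an additional counting argument (either exploiting the symmetry of the construction so that some other connector forces the $h$-choice, or strengthening the budget analysis to rule out $h$-flips in this regime) to obtain the at-most-one-flip bound.

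Once each of the $n$ variables has at most one flip in $\phi_\iB$ over $\iB\in[0,a)$, a standard pigeonhole argument using $a=\numcopies$ yields some $0\le\ib<\numcopiessmall$ such that $\phi_\iB$ is constant on the block $[m\ib, m\ib+m)$. I then set $\phi=\phi_{m\ib}$ and observe that for every clause $C_\iC$, the unique connector $c_{\iC,\ib,\iF,\beta}^\gamma\in X$ has its $v/h$-endpoint in $X$, which by construction encodes a literal that satisfies $C_\iC$ in the assignment $\phi_{m\ib+\iC}=\phi$. Therefore $\phi$ satisfies $\Phi$, concluding the proof.
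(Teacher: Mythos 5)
Your budget-tightness analysis in the first two paragraphs tracks the paper's proof faithfully, and your monotonicity argument for $\phi_\iB(x_\iF^1)$ (via the second pentagon edge $v_{\iF,\iB}^2 v_{\iF,\iB+1}^1$) is exactly right. However, the gap you flag yourself in the third paragraph is a genuine one, and neither of the two directions you suggest for closing it works. The budget is exactly tight (so there is no slack for a ``strengthened counting argument''), and the connectors are too sparse across the $a$ columns to force anything per-column by symmetry.

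The missing ingredient is the role of $r^1$ and $r^2$: they are inserted precisely so that an $h$-flip can be detected by an \emph{uncovered odd cycle} rather than by disconnection. Suppose $\phi_\iB(x_\iF^2)=\mathrm{true}$ and $\phi_{\iB+1}(x_\iF^2)=\mathrm{false}$, so $h_{\iF,\iB}^2, h_{\iF,\iB+1}^1 \notin X$. Split on whether $v_{\iF,\iB+1}^1 \in X$. If $v_{\iF,\iB+1}^1 \in X$, then its only path-neighbours $h_{\iF,\iB}^2, h_{\iF,\iB+1}^1$ are absent from $X$, its pentagon guards are absent, it is not adjacent to $r$ (as $\iB+1 \neq 0,a$), and any escape through a connector $c_{\iC,\ib,\iF,1}^v$ dead-ends at $c_{\iC,\ib}$, which has exactly one $X$-neighbour and is therefore a leaf of $G[X]$; hence $v_{\iF,\iB+1}^1$ cannot reach $r$, contradicting connectivity. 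If instead $v_{\iF,\iB+1}^1 \notin X$, then the vertices $v_{\iF,\iB+1}^1, h_{\iF,\iB}^2, r^2, r^1, h_{\iF,\iB+1}^1$ all lie in $G\setminus X$ (recall $r^1,r^2 \notin X$ by tightness) and form a $5$-cycle via the path edges $h_{\iF,\iB}^2 v_{\iF,\iB+1}^1$, $v_{\iF,\iB+1}^1 h_{\iF,\iB+1}^1$, the edges $h_{\iF,\iB}^2 r^2$, $h_{\iF,\iB+1}^1 r^1$, and $r^1 r^2$; an odd cycle in $G\setminus X$ contradicts $X$ being an odd cycle transversal. Your proof remains incomplete without this second, odd-cycle case, which is the one that actually uses the vertices $r^1, r^2$ you dismissed as merely ``excluded from $X$.''
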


\begin{proof}
First note that $r \in X$ and $\mathcal{C} \subseteq X$.
  Moreover, $X$ needs to contain at least one endpoint of each pentagon edge $h_{\iF,\iB}^1h_{\iF,\iB}^2$
  and $v_{\iF,\iB}^1v_{\iF,\iB}^2$ ($1 \leq \iF \leq n'$, $0 \leq \iB < a$) and these pentagon edges
  are pairwise disjoint.
  Furthermore, each vertex in $\mathcal{C}$ needs to have a neighbour in $X$, but $\mathcal{C}$
  is an independent set and the neighbourhoods of vertices from $\mathcal{C}$ are pairwise disjoint and disjoint from $\mathcal{H} \cup \mathcal{V} \cup \{r\}$.
So far we have one vertex $r$, $a$ vertices in $\mathcal{C}$, $n' \cdot 2a$ endpoints of pentagon edges
and $a$ neighbours of vertices from $\mathcal{C}$, thus, as $|X| \leq \goalsize = 1 + 2a +n' \cdot 2a$,
$X$ contains $r$, $\mathcal{C}$, exactly one endpoint of each pentagon edge, 
exactly one neighbour of each vertex from $\mathcal{C}$ and nothing more.
In particular, $r^1,r^2 \notin X$.

For each $0 \leq \iB < a$ we construct an assignment $\phi_\iB$ as follows.
For each block $F_\iF = \{x_\iF^1,x_\iF^2\}$ we define:
\begin{enumerate}
\item $\phi_\iB(x_\iF^1)$ to be true if $v_{\iF,\iB}^1 \in X$ and
false if $v_{\iF,\iB}^2 \in X$;
\item $\phi_\iB(x_\iF^2)$ to be true if $h_{\iF,\iB}^1 \in X$ and
false if $h_{\iF,\iB}^2 \in X$.
\end{enumerate}

We now show that the assignments $\phi_\iB$ cannot differ much for all indices $0 \leq \iB < a$.
Note that for each block $F_\iF = \{x_\iF^1, x_\iF^2\}$ and $0 \leq \iB < a-1$:
\begin{enumerate}
\item if $\phi_\iB(x_\iF^1)$ is true, then $\phi_{\iB+1}(x_\iF^1)$ is also true,
  as otherwise $X$ contains no endpoint of the pentagon edge $v_{\iF,\iB}^2v_{\iF,\iB+1}^1$.
\item if $\phi_\iB(x_\iF^2)$ is true, then $\phi_{\iB+1}(x_\iF^2)$ is also true,
  as otherwise $h_{\iF,\iB}^2,h_{\iF,\iB+1}^1 \notin X$ and
  either the vertex $v_{\iF,\iB}^2$ is not connected to the root in $G[X]$ (if $v_{\iF,\iB}^2 \in X$, since vertices from $\mathcal{C}$ are leaves in $G[X]$)
  or $G \setminus X$ contains a cycle of length five consisting of vertices $v_{\iF,\iB}^2$, $h_{\iF,\iB}^2$, $r^2$, $r^1$ and $h_{\iF,\iB+1}^1$ (if $v_{\iF,\iB}^2 \notin X$).
\end{enumerate}
For each variable $x$ we define a sequence $\widehat{\phi}_x(\iB) = \phi_\iB(x)$, $0 \leq \iB < a$.
From the reasoning above we infer that for each variable $x$ the sequence
$\widehat{\phi}_x(\iB)$ can change its value at most once, from false to true.
Thus, as $a = \numcopies$, we conclude that there exists $0 \leq \ib < \numcopiessmall$
such that for all $0 \leq \iC < m$ the assignments $\phi_{m\ib + \iC}$ are equal.

We claim that the assigment $\phi = \phi_{m\ib}$ satisfies $\Phi$.
Consider a clause $C_\iC$ and focus on the vertex $c_{\iC,\ib} \in X$.
As $G[X]$ is connected, there exists a vertex $x$ in the set $N(N(c_{\iC,\ib}))$
that belongs to $X \cap (\mathcal{V} \cup \mathcal{H})$.
This vertex $x$ corresponds to an assignment of one variable that both satisfies $C_\iC$ (by the construction process)
and is consistent with $\phi_{m\ib+\iC} = \phi$ (by the definition of $\phi_{m\ib+\iC}$).
Thus the assignment $\phi$ satisfies $C_\iC$ and the proof is finished.
\end{proof}

\paragraph{Pathwidth bound}

\begin{lemma}\label{lem:cfvs-coct-seth:pathwidth}
Pathwidth of the graph $G$ is at most $n'+O(1)$.
Moreover, a path decomposition of such width can found in polynomial time.
\end{lemma}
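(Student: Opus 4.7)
The plan is to adapt the mixed-search strategy from Lemma~\ref{lem:cds-seth:pathwidth}, adding a constant overhead to handle the features that distinguish this construction from the \cdomset one: the extra root vertices $r^1, r^2$, the pentagon edges replacing the guard pendants, and the length-two clause connectors $c_{\iC,\ib,\iF,\beta}^\gamma$. Once such a strategy is exhibited using $n' + O(1)$ searchers, the proposition $\mathrm{pw}(G) \le \mathrm{ms}(G)$ and the standard polynomial-time conversion of a search strategy into a path decomposition yield both conclusions simultaneously.

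First I would place three permanent searchers on $r$, $r^1$, $r^2$; two auxiliary searchers slide along $r^\ast$-$r^{\ast\ast}$-$r$ to clear those edges and are then removed. This keeps every edge incident to the root triple under control throughout the search. The graph is then cleaned in $a = \numcopies$ rounds. At the start of round $\iB$, there are searchers on $v_{\iF,\iB}^1$ for every $1 \le \iF \le n'$ (the ``frontier''), plus the three permanent ones; this contributes $n'+3$ searchers that stay throughout the round. Writing $\iB = m\ib + \iC$, I would introduce three fresh searchers to clean the triangle $c_{\iC,\ib}, c_{\iC,\ib}^\ast, c_{\iC,\ib}^{\ast\ast}$, then leave one searcher on $c_{\iC,\ib}$ for the remainder of the round (two are released).

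For each group $\iF$ in turn I would use a constant-size pool of reusable searchers to clear the sub-gadget at position $(\iF,\iB)$: place searchers on $v_{\iF,\iB}^2$, on the guard vertices of the pentagon edge $v_{\iF,\iB}^1 v_{\iF,\iB}^2$, on $h_{\iF,\iB}^1$ and $h_{\iF,\iB}^2$ together with the guards of $h_{\iF,\iB}^1 h_{\iF,\iB}^2$, then on $v_{\iF,\iB+1}^1$ and the guards of $v_{\iF,\iB}^2 v_{\iF,\iB+1}^1$; after each pentagon is fully surrounded its three guards can be released. The clause connectors $c_{\iC,\ib,\iF,\beta}^\gamma$ belong only to position $(\iF,\iB)$ and have both neighbours (one in $\{v_{\iF,\iB}^\ivup, h_{\iF,\iB}^\ivup\}$, the other being $c_{\iC,\ib}$) simultaneously occupied at this point, so each can be cleared with one extra searcher that is immediately removed. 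Finally I would withdraw the searchers from $v_{\iF,\iB}^1, v_{\iF,\iB}^2, h_{\iF,\iB}^1, h_{\iF,\iB}^2$, leaving only the new frontier searcher on $v_{\iF,\iB+1}^1$, and move on to group $\iF+1$. After all $n'$ groups are processed, the searcher on $c_{\iC,\ib}$ is removed and the next round begins. Once round $a-1$ finishes, the frontier searchers sit on $v_{\iF,a}^1 = r$ and the whole graph is clean.

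The main obstacle — really the only thing requiring care — is verifying no re-contamination: one must check that at the moment a frontier searcher is lifted from $v_{\iF,\iB}^1$, every edge leaving the already-cleaned portion of $\mathcal{P}_\iF$ either ends in a vertex still occupied (such as $v_{\iF,\iB+1}^1$, $r$, $r^1$, $r^2$, or the current $c_{\iC,\ib}$) or has already been traversed by a sliding move. Since the pentagon guards and clause connectors are each cleared while both of their neighbours carry searchers, and since the permanent searchers on $r, r^1, r^2$ block every ``long'' path back to an uncleared region, the verification is routine. The sub-gadget at a single position $(\iF,\iB)$ is processed with a pool of $O(1)$ searchers that is recycled between groups, so the total requirement is $n' + O(1)$; transcribing the sequence of searcher placements into bags yields the claimed path decomposition in polynomial time.
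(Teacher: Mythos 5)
Your proposal is correct and follows essentially the same mixed-search strategy as the paper's proof: three permanent searchers on $r$, $r^1$, $r^2$ (after clearing the $r$--$r^\ast$--$r^{\ast\ast}$ triangle), a moving frontier of $n'$ searchers on the vertices $v_{\iF,\iB}^1$, one searcher for the current clause vertex $c_{\iC,\ib}$ in each round, and a reusable $O(1)$-size pool to sweep each group gadget and its pentagon guards and clause connectors in turn. The only difference is presentational — you spell out the order of placements within a gadget more explicitly, whereas the paper lumps them together as a single $O(1)$ step — but the strategy, the bookkeeping, and the final count of $n'+O(1)$ searchers coincide.
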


\begin{proof}
We give a mixed search strategy to clean the graph with $n'+O(1)$ searchers.
First we put five searchers on the vertices $r^1$, $r^2$, $r$, $r^\ast$ and $r^{\ast\ast}$
and then remove the searchers from the vertices $r^\ast$ and $r^{\ast\ast}$.
The searchers on the vertices $r^1$, $r^2$ and $r$ remain till the end of the cleaning process.

We search the graph in $a = \numcopies$ rounds.
At the beginning of round $\iB$ ($0 \leq \iB < a$)
there are searchers on all vertices
$v_{\iF,\iB}^1$ for $1 \leq \iF \leq n'$.
Let $0 \le \iC < m$ and $0 \le \ib < \numcopiessmall$ be integers such that $\iB = \iC + m\ib$.
We first place three searchers on $c_{\iC,\ib}$, $c_{\iC,\ib}^\ast$ and $c_{\iC,\ib}^{\ast\ast}$
and afterwards we remove the searchers from $c_{\iC,\ib}^\ast$ and $c_{\iC,\ib}^{\ast\ast}$.

Then, for each $1 \le \iF \le n'$ in turn
we put $O(1)$ searchers on vertices
$v_{\iF,\iB}^2$, $v_{\iF,\iB+1}^1$, $h_{\iF,\iB}^1$, $h_{\iF,\iB}^2$, the guard
vertices of pentagon edges $v_{\iF,\iB}^1v_{\iF,\iB}^2$, $v_{\iF,\iB}^2v_{\iF,\iB+1}^1$,$h_{\iF,\iB}^1h_{\iF,\iB}^2$, and all vertices 
$c_{\iC,\ib,\iF,\beta}^\gamma$,
and then remove searchers from the vertices $v_{\iF,\iB}^1$, $v_{\iF,\iB}^2$, $h_{\iF,\iB}^1$, $h_{\iF,\iB}^2$, all aforementioned guard vertices
and vertices $c_{\iC,\ib,\iF,\beta}^\gamma$.
The last step of the round is removing a searcher from the vertex $c_{\iC,\ib}$.
After the last round the whole graph $G$ is cleaned.
Since we reuse searchers in the cleaning process, $n' + O(1)$
searchers suffice to clean the graph.

Using the above graph cleaning process a path decomposition of width
$n'+O(1)$ can be constructed in polynomial time.
\end{proof}

\begin{proof}[Proof of Theorems \ref{thm:cfvs-seth} and \ref{thm:coct-seth}]
Suppose \cfvs or \coct can be solved in $(4-\eps)^p |V|^{O(1)}$
provided that we are given a path decomposition of $G$ of width $p$.
Given an instance of SAT we construct an instance of \cfvs or \coct
using the above construction and solve it
using the $(4-\eps)^p |V|^{O(1)}$ time algorithm.
Lemmata~\ref{lem:cfvs-coct-seth:corr1}, \ref{lem:cfvs-coct-seth:corr2}, together with an observation that any connected feedback vertex set is also a connected odd cycle transversal in $G$,
ensure correctness, whereas Lemma~\ref{lem:cfvs-coct-seth:pathwidth}
implies that running time of our algorithm is $(4-\eps)^{n/2} |V|^{O(1)}$,
however we have 
$(4-\eps)^{n/2} = (\sqrt{4-\eps})^n$ and $\sqrt{4-\eps} < 2$.
This concludes the proof.
\end{proof}

\renewcommand{\numcopies}{\ensuremath{m(2\blocksize n'+1)}\xspace}
\renewcommand{\numcopiessmall}{\ensuremath{2\blocksize n'+1}\xspace}

\subsection{\fvs}

\begin{theorem}\label{thm:fvs-seth}
Assuming SETH, there cannot exist a constant $\eps>0$ and an algorithm that given an instance $(G=(V,E), k)$ together with a path decomposition of the graph $G$ of width $p$ solves the \fvs problem in $(3-\eps)^p |V|^{O(1)}$ time.
\end{theorem}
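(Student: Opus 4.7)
The plan is to mimic the reduction used to prove Theorem~\ref{thm:cvc-seth}, adapted so that vertex-cover constraints are replaced by cycle-breaking constraints. Given a SAT formula $\Phi$ with $n$ variables and $m$ clauses and a hypothetical $(3-\eps)^p|V|^{O(1)}$ algorithm for \fvs, I would fix a constant $\blocksize$ depending on $\eps$ (so that $\blocksize / \lfloor \log 3^\blocksize \rfloor$ is close enough to $1/\log 3$), set $\blockvars = \lfloor \log 3^\blocksize \rfloor$, partition the variables into $n' = \lceil n/\blockvars \rceil$ groups $F_\iF$, and construct an \fvs instance $(G,\goalsize)$ together with a path decomposition of $G$ of width at most $\blocksize n' + O(3^\blocksize)$. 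The algebraic manipulation that ends the proof of Theorem~\ref{thm:cvc-seth} would then convert the hypothetical algorithm into a $2^{\lambda n}|V|^{O(1)}$ algorithm for SAT with $\lambda<1$, contradicting SETH.

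The building blocks should be in direct analogy with the \cvertexcover construction, with edges forcing VC replaced by triangles forcing FVS. I would start with a root $r$ forced into every FVS of size $\leq \goalsize$ by attaching $\goalsize+1$ vertex-disjoint triangles through $r$, playing the role of the pendant $r^\ast$. For every $(\iF,\iv,\iB)$ I would place a triangle $\mathcal{H}_{\iF,\iv,\iB}=\{h^1,h^2,h^3\}$, each vertex also joined to $r$; the unique vertex of $\mathcal{H}_{\iF,\iv,\iB}$ not in the FVS encodes a local choice in $\{1,2,3\}$, so the $\blocksize$ triangles at a given position jointly encode a sequence $S \in \{1,2,3\}^\blocksize$, which in turn encodes one of the $2^\blockvars \leq 3^\blocksize$ assignments of $F_\iF$. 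The paths $\mathcal{P}_{\iF,\iv}$ of the \cvertexcover construction would be replaced by cycles $\mathcal{C}_{\iF,\iv}$ traversing all $a=\numcopies$ positions, whose entry and exit vertices at position $\iB$ are attached to $h^1_{\iF,\iv,\iB}$, $h^2_{\iF,\iv,\iB}$ and $h^3_{\iF,\iv,\iB}$ via small guard triangles, so that FVS is forced to pick exactly one vertex per pair of $\mathcal{C}_{\iF,\iv}$ in a phase compatible with the triangle choice. For each clause $C_\iC$ and each copy $\ib<\numcopiessmall$ I would introduce a vertex $c_{\iC,\ib}$ forced into the FVS by its own pendant triangle, and join $c_{\iC,\ib}$ via ``selector'' vertices $y^S_{\iF,m\ib+\iC}$ (each attached to the corresponding triple of triangle vertices so that $y^S$ is forced into the FVS exactly when $S$ is the chosen sequence at its position) to guarantee that at least one locally satisfying choice must be active in the FVS.

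The correctness proof will follow the template of Lemmas~\ref{lem:cvc-seth:corr1} and~\ref{lem:cvc-seth:corr2}. In the forward direction a satisfying assignment $\phi$ dictates which vertex of each $\mathcal{H}_{\iF,\iv,\iB}$ to put in the FVS, the consistent phase of each cycle $\mathcal{C}_{\iF,\iv}$, and one selector $y^{S_\iF}_{\iF,\iB}$ per position; a direct check verifies every triangle, every cycle and every force gadget is hit. In the reverse direction an FVS of size exactly $\goalsize$ must spend its budget tightly, giving exactly one vertex per triangle, per pair of $\mathcal{C}_{\iF,\iv}$ and one selector per position; this canonically defines an assignment $\phi_\iB$ at every position, and the same monotonicity argument producing the sequences $\widehat{\phi}_x(\iB)=\phi_\iB(x)$ as in Lemma~\ref{lem:cvc-seth:corr2}, together with pigeonholing over the $\numcopiessmall$ clause copies, yields an index $\ib$ at which all $\phi_{m\ib+\iC}$ coincide and satisfy every clause $C_\iC$ via the selector forced by $c_{\iC,\ib}$.

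The main obstacle will be designing the consistency cycles $\mathcal{C}_{\iF,\iv}$ so that (a) the FVS is forced to pick exactly one of each pair of cycle vertices in a perfectly alternating pattern, (b) the phase between consecutive positions can change only in a monotone fashion compatible with the pigeonhole argument, and (c) each $(\iF,\iv)$ contributes only a constant number of vertices in any bag of the path decomposition. Once these cycles are fixed, the pathwidth bound will follow from a left-to-right sweeping mixed-search strategy analogous to Lemma~\ref{lem:cvc-seth:pathwidth} (one searcher on the entry vertex of every $\mathcal{C}_{\iF,\iv}$ at all times, plus $O(3^\blocksize)$ searchers on the currently-processed group gadget), and the final SETH contradiction will be identical to the conclusion of the proof of Theorem~\ref{thm:cvc-seth}.
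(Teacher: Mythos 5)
Your high-level template---blocks of $\blockvars = \lfloor\log 3^\blocksize\rfloor$ variables, $\numcopies$ copies along a long path, a monotonicity argument plus pigeonholing to locate a consistent column, and the final $(3-\eps)^p \Rightarrow 2^{\lambda n}$ conversion---matches the paper's strategy, and the use of triangle gadgets (a degree-2 apex forcing one endpoint of an edge into the FVS) is exactly right. But there is a genuine gap in the middle of your construction, which you yourself flag as ``the main obstacle'': you do not actually exhibit the consistency structure, and the design you sketch for it has problems that are not merely cosmetic.

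First, you propose forcing $r$ into the FVS by $\goalsize+1$ disjoint triangles through $r$. The paper's construction goes in the opposite direction: $r$ is deliberately kept \emph{out} of the FVS (this falls out of tight budget counting, not a forcing gadget), and the monotonicity argument hinges on this---three consecutive uncovered path vertices together with two root-adjacent $h$-vertices and $r$ itself form a $6$-vertex, $6$-edge subgraph of $G\setminus X$, a contradiction precisely because $r\notin X$. If $r\in X$ all those short cycles through $r$ are already broken, so the $h$-to-$r$ edges become inert and you have destroyed the mechanism that gives ``at least one of any three consecutive path vertices must be chosen.'' Second, your consistency cycles $\mathcal{C}_{\iF,\iv}$ carry two vertices per position with one chosen per pair, so the cycle tracks a \emph{binary} phase, but the per-position choice is three-valued (one of three triangle vertices left out of the FVS). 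A binary phase cannot encode or monotonically constrain a three-valued sequence, so you cannot get the $3^\blocksize$ encoding with a pigeonholeable ``at most $O(1)$ changes'' property out of this structure. The paper instead puts three path vertices per position, uses a $p$-triple of triangle-connected vertices to force exactly one of them into the FVS, and then a whole cascade of $q$-triples, cycles $\mathcal{Q}^S$ and $\mathcal{X}_{\iF,\iB}$, triangle-edged $y$--$z$ pairs, and clause cycles $\mathcal{C}_{\iC,\ib}$ on the $z$'s to implement a tight-budget read-out of the chosen sequence $S_\iF$. This layering is where all the difficulty lives, and it is exactly what your sketch leaves unconstructed; without it the claim that ``the same monotonicity argument \ldots yields an index $\ib$'' is unsupported, because the CVC monotonicity argument rests on an edge-coverage constraint that has no direct FVS analogue.
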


\paragraph{Construction}
The construction here is a bit different than in the previous subsections,
as we do not have a constraint that the solution needs to induce a connected subgraph.
As one may notice, this connectivity constraint is used intensively in the previous subsections,
in particular, it gives quite an easy way to verify the correctness of the assignment
encoded in the solution. In the case of \fvs we need to do it in a different and
more complicated way. Parts of the construction here resembles the lower bound
proof for \oct by Lokshtanov, Marx and Saurabh \cite{treewidth-lower}.

Before we start, let us introduce one small gadget,
      used already in the proof of the lower bounds for \cfvs and \coct.
      By {\em{introducing a triangle edge $vw$}}
we mean the following construction: we add a new vertex $u_{vw}$ and edges $vw$, $vu_{vw}$, $wu_{vw}$.
We call $u_{vw}$ {\em{a guard vertex}} and in the graph $G$ its degree equals two.
Note that any feedback vertex set $X$ in $G$ needs to
intersect the triangle composed of vertices $\{v,w,u_{vw}\}$ and, moreover, if $u_{vw} \in X$
then $X \setminus \{u_{vw}\} \cup \{v\}$ is also a feedback vertex set in $G$ of not greater size.
Thus we may focus only on feedback vertex sets in $G$ that do not contain guard vertices.
Each such feedback vertex set needs to include at least one endpoint of each triangle edge.

Given $\eps > 0$ and an instance $\Phi$ of SAT with $n$ variables and $m$ clauses
we construct a graph $G$ as follows.
We first choose a constant integer $\blocksize$, which value depends on $\eps$ only.
The exact formula for $\blocksize$ is presented later.
We partition variables of $\Phi$ into groups $F_1,\ldots,F_{n'}$,
each of size at most $\blockvars = \lfloor \log 3^\blocksize \rfloor$,
hence $n'=\lceil n/\blockvars \rceil$.
Note that now $\blocksize n' \sim n/\log 3$, the pathwidth of $G$ will be roughly $\blocksize n'$.

First, we add to the graph $G$ a vertex $r$, called a {\em{root}}.

Second, we take $a = \numcopies$ and for each $1 \leq \iF \leq n'$
and $1 \leq \iv \leq \blocksize$ we create a path $\mathcal{P}_{\iF,\iv}$ consisting
of $3a$ vertices $v_{\iF,\iv,\iB}^\ivup$, $0 \leq \iB < a$ and $1 \leq \ivup \leq 3$,
   arranged in the following order:
   $$v_{\iF,\iv,0}^1, v_{\iF,\iv,0}^2,v_{\iF,\iv,0}^3,v_{\iF,\iv,1}^1,\ldots,v_{\iF,\iv,a-1}^1,v_{\iF,\iv,a-1}^2,v_{\iF,\iv,a-1}^3.$$
Let $\mathcal{V}_{\iF,\iv,\iB} = \{v_{\iF,\iv,\iB}^\ivup : 1 \leq \ivup \leq 3\}$,
$\mathcal{V}_{\iF,\iB}= \bigcup_{\iv=1}^\blocksize \mathcal{V}_{\iF,\iv,\iB}$
and $\mathcal{V} = \bigcup_{\iF=1}^{n'} \bigcup_{\iB=0}^{a-1} V_{\iF,\iB}$.

Third, for each two consecutive vertices $v_{\iF,\iv,\iB}^\ivup$, $v_{\iF,\iv,\iB'}^{\ivup'}$
on the path $\mathcal{P}_{\iF,\iv}$ we introduce vertices $h_{\iF,\iv,\iB}^{\ivup,1}$
and $h_{\iF,\iv,\iB}^{\ivup,2}$ connected by a triangle edge.
Furthermore, we add edges
$h_{\iF,\iv,\iB}^{\ivup,1}v_{\iF,\iv,\iB}^\ivup$, $h_{\iF,\iv,\iB}^{\ivup,1}r$, 
$h_{\iF,\iv,\iB}^{\ivup,2}v_{\iF,\iv,\iB'}^{\ivup'}$, $h_{\iF,\iv,\iB}^{\ivup,2}r$.
Let $\mathcal{H}$ be the set of all vertices $h_{\iF,\iv,\iB}^{\ivup,\gamma}$.

We now provide a description of a group gadget $\groupg_{\iF,\iB}$,
which will enable us to encode $2^\blockvars$ possible assignments of one group
of $\blockvars$ variables.
Fix a block $F_\iF$, $1 \leq \iF \leq n'$, and a position $\iB$, $0 \leq \iB < a$.
The group gadget $\groupg_{\iF,\iB}$ includes (already created) vertices
$v_{\iF,\iv,\iB}^\ivup$, $h_{\iF,\iv,\iB}^{\ivup,\gamma}$
($1 \leq \iv \leq \blocksize$, $1 \leq \ivup \leq 3$, $1 \leq \gamma \leq 2$)
and all guard vertices in the triangle edges between them.
Moreover, we perform the following construction.

For each $1 \leq \iv \leq \blocksize$ we introduce three vertices
$p_{\iF,\iv,\iB}^\ivup$ ($1 \leq \ivup \leq 3$),
  pairwise connected by triangle edges. Moreover,
for each $1 \leq \ivup \leq 3$ we connect $p_{\iF,\iv,\iB}^\ivup$ and $v_{\iF,\iv,\iB}^\ivup$
by a triangle edge.
Let $\mathcal{P}$ be the set of all vertices $p_{\iF,\iv,\iB}^\ivup$ in the whole graph $G$.

In order to encode $2^\blockvars$ assignments we consider subsets of $\mathcal{V}_{\iF,\iB}$
that contain {\em{exactly one}} vertex out of each set
$\mathcal{V}_{\iF,\iv,\iB}$.
For each sequence $S=(s_1,\ldots,s_\blocksize) \in \{1,2,3\}^\blocksize$ we perform the following construction.
First, for each $1 \leq \iv \leq \blocksize$ we introduce three vertices
$q_{\iF,\iv,\iB}^{S,\ivup}$ ($1 \leq \ivup \leq 3$), pairwise connected by triangle edges.
Second, we connect $q_{\iF,\iv,\iB}^{S,\ivup}$ with $v_{\iF,\iv,\iB}^\ivup$
($1 \leq \ivup \leq 3$) with a triangle edge.
Third, we introduce a vertex $x_{\iF,\iB}^S$ and connect all vertices $q_{\iF,\iv,\iB}^{S,s_\iv}$
($1 \leq \iv \leq \blocksize$) and the vertex $x_{\iF,\iB}^S$ into a cycle $\mathcal{Q}_{\iF,\iB}^S$.
Fourth, we connect all vertices $x_{\iF,\iB}^S$ for $S \in \{1,2,3\}^\blocksize$
into a cycle $\mathcal{X}_{\iF,\iB}$. 
Finally, for each $S \in \{1,2,3\}^\blocksize$ we introduce two new vertices
$y_{\iF,\iB}^S$ and $z_{\iF,\iB}^S$ and triangle edges $x_{\iF,\iB}^Sy_{\iF,\iB}^S$
and $y_{\iF,\iB}^Sz_{\iF,\iB}^S$.
Let $\mathcal{X}$, $\mathcal{Y}$ and $\mathcal{Z}$ be the sets of all vertices
$x_{\iF,\iB}^S$, $y_{\iF,\iB}^S$ and $z_{\iF,\iB}^S$, respectively.
This finishes the construction of the group gadget $\groupg_{\iF,\iB}$.

\begin{figure}
\begin{center}
\includegraphics{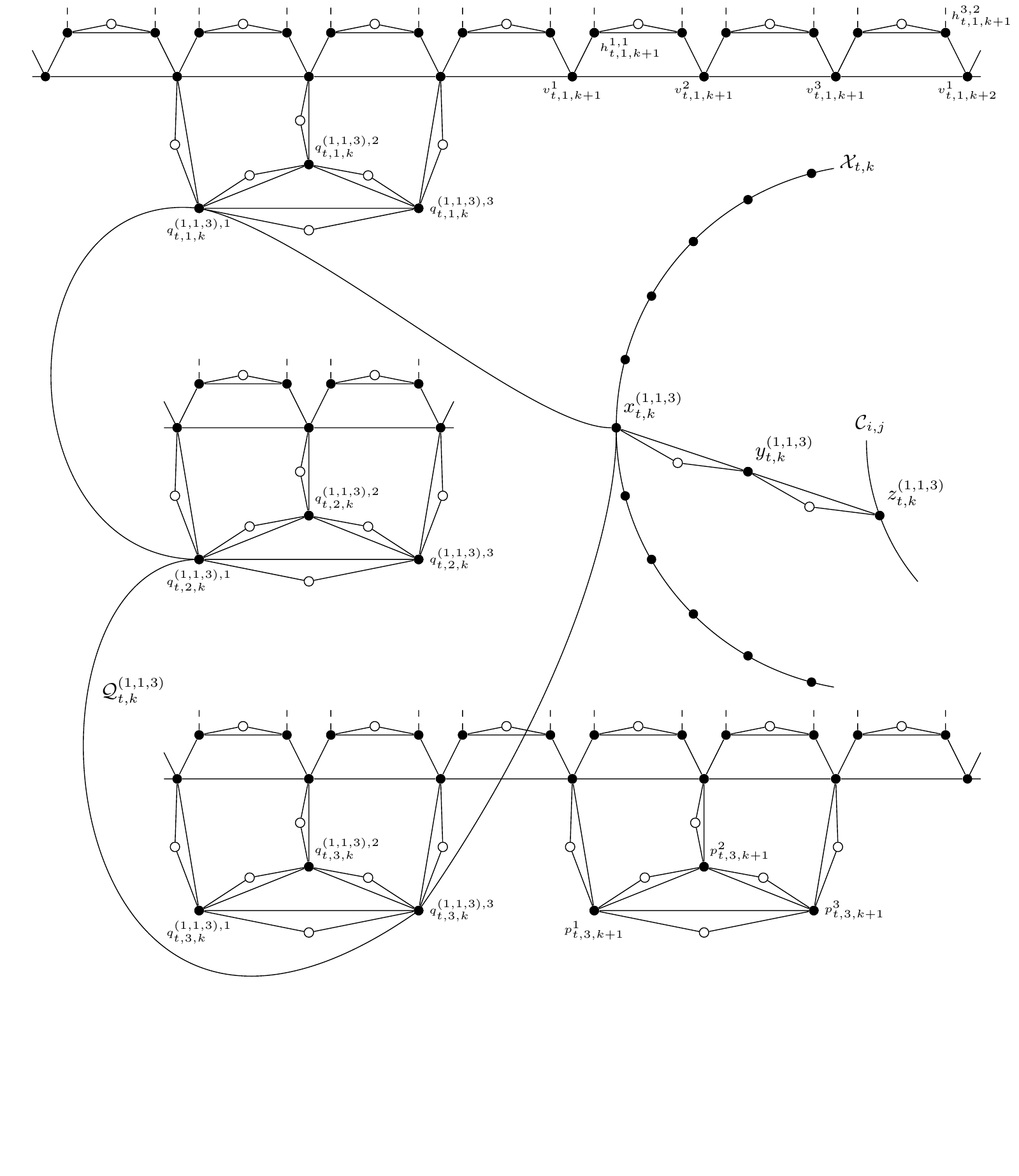}
\caption{Part of the construction around group gadget $\groupg_{\iF,\iB}$ for $\blocksize=3$.
  Dashed edges are connecting a vertex with the root $r$.
  Empty circles represent guard vertices.}
  \label{fig:seth-fvs}
  \end{center}
  \end{figure}

We add vertices used to check the satisfiability of the formula $\Phi$.
Observe that for a group of variables $F_\iF$
there are at most $2^{\blockvars}$ possible assignments
and there are $3^\blocksize \ge 2^{\blockvars}$ vertices $x_{\iF,\iB}^S$
for sequences $S$ from the set $\{1,2,3\}^\blocksize$ in each group gadget $\groupg_{\iF,\iB}$,
hence we can assign a {\em unique} sequence $S$ to each assignment.
Let $C_0,\ldots,C_{m-1}$ be the clauses of the formula $\Phi$.
For each clause $C_\iC$ and for each $0 \leq \ib < \numcopiessmall$
we perform the following construction that uses
gadgets $\groupg_{\iF,m\ib+\iC}$ for $1 \le \iF \le n'$.
For each group of variables $F_\iF$ we consider the set $\mathcal{S}_{\iF,\iC}$ of all sequences
$S \in \{1,2,3\}^\blocksize$ that correspond 
to an assignment of $F_\iF$ satisfying the clause $C_\iC$
(i.e., one of the variables of $F_\iF$ is assigned a value such that $C_\iC$ is already satisfied).
We connect all vertices $z_{\iF,m\ib+\iC}^S$ for $1 \leq \iF \leq n'$, $S \in \mathcal{S}_{\iF,\iC}$
into a cycle $\mathcal{C}_{\iC,\ib}$. The vertices on the cycle $\mathcal{C}_{\iC,\ib}$ are sorted
in the order of increasing value of $\iF$.

We can view the whole construction as a matrix of group gadgets, where each row corresponds to some group of variables $F_\iF$
and each column is devoted to some clause in such a way that each clause gets $(\numcopiessmall)$ private columns (but not consecutive)
of the group gadget matrix, as in Figure \ref{fig:seth-fvsmatrix}.

\begin{figure}[htbp]
\begin{center}
\includegraphics{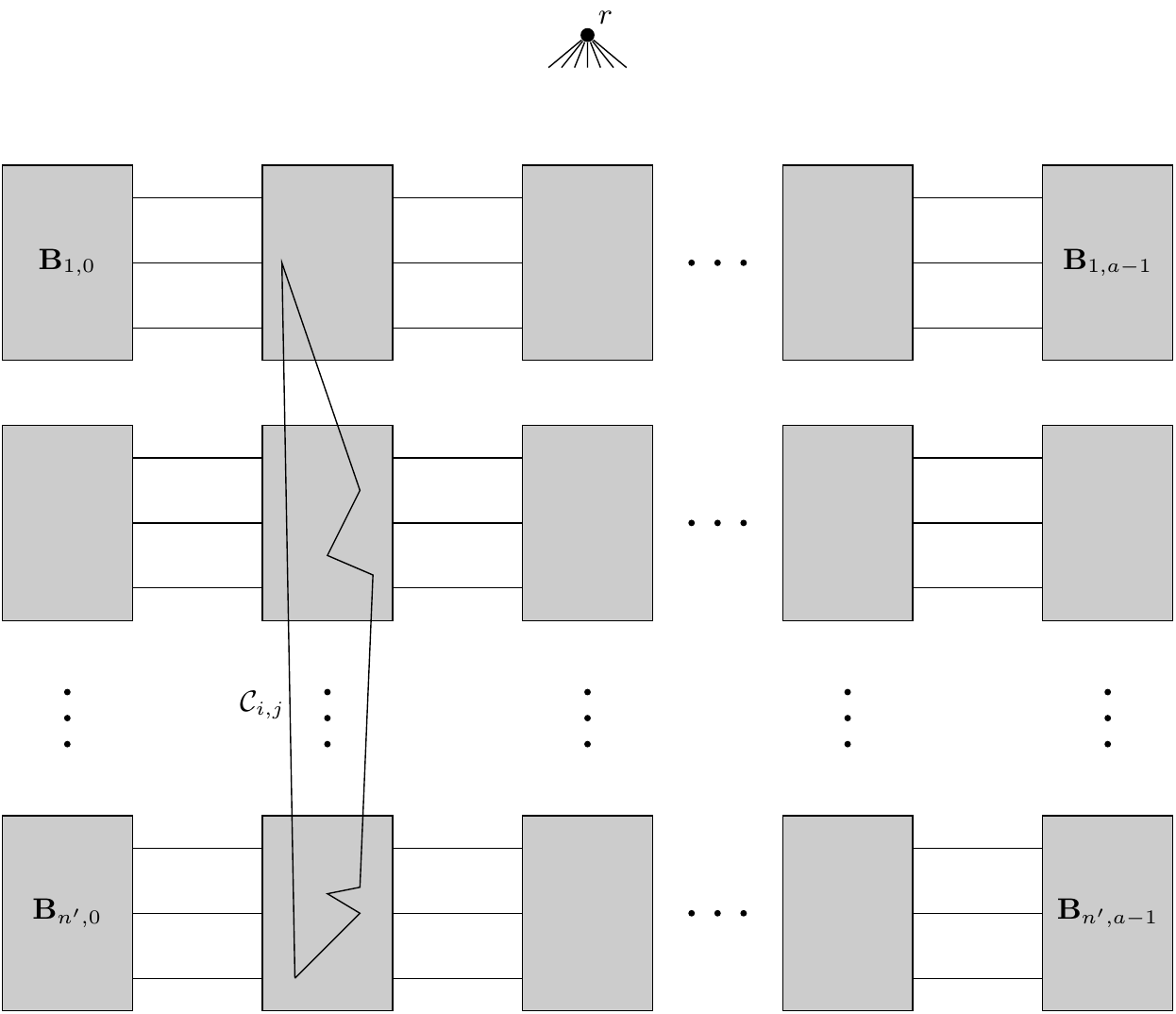}
\caption{The arrangement of the group gadgets in $G$.}
\label{fig:seth-fvsmatrix}
\end{center}
\end{figure}

Finally, we define the size of the feedback vertex set we are looking for
as $\goalsize = \goalsize_h + \goalsize_v + \goalsize_p + \goalsize_q + \goalsize_x + \goalsize_y$,
   where
   \begin{align*}
   \goalsize_h &= \blocksize n'(3a-1) & \goalsize_v &= \blocksize n' a & \goalsize_p &= \blocksize n' \cdot 2a \\
   \goalsize_q &= \blocksize n' \cdot 2a3^\blocksize & \goalsize_x &= n'a & \goalsize_y &= n'a3^\blocksize.
   \end{align*}

\paragraph{Correctness}

\begin{lemma}\label{lem:fvs-seth:corr1}
If $\Phi$ has a satisfying assignment, then there exists a feedback vertex set in $G$ of size $\goalsize$.
\end{lemma}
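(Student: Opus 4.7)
The plan is to read a ``winning coordinate'' $S_\iF \in \{1,2,3\}^\blocksize$ off $\phi$ in each variable block, via the fixed injection of assignments of $F_\iF$ into $\{1,2,3\}^\blocksize$ built into the construction, and to let $X$ be the disjoint union of six parts matching the six summands of $\goalsize$: the vertices $v_{\iF,\iv,\iB}^{S_\iF(\iv)}$ (giving $\goalsize_v$), the vertices $p_{\iF,\iv,\iB}^\ivup$ for $\ivup\ne S_\iF(\iv)$ ($\goalsize_p$), the vertices $q_{\iF,\iv,\iB}^{S,\ivup}$ for $\ivup\ne S_\iF(\iv)$ ($\goalsize_q$), the vertices $x_{\iF,\iB}^{S_\iF}$ ($\goalsize_x$), and the set $\{y_{\iF,\iB}^S:S\ne S_\iF\}\cup\{z_{\iF,\iB}^{S_\iF}\}$ ($\goalsize_y$). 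For $X_h$ I would include exactly one vertex of every $h$-pair, chosen so that the surviving $h$-vertex of the pair is the one adjacent to the removed $v$-endpoint whenever there is one (formally: include $h_{\iF,\iv,\iB}^{\ivup,2}$ if $v_{\iF,\iv,\iB}^\ivup\in X_v$, else include $h_{\iF,\iv,\iB}^{\ivup,1}$). This contributes $\goalsize_h$, so direct summation gives $|X|=\goalsize$.

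Destroying the cycles of $G$ locally is routine: every guard-triangle is broken because one of its two non-guard endpoints lies in $X$; the $4$- and $5$-cycles formed by each $h$-pair together with its path edge and the hub $r$ are broken by the included $h$-vertex; the cycle $\mathcal{X}_{\iF,\iB}$ is broken by $x_{\iF,\iB}^{S_\iF}$; for $\mathcal{Q}_{\iF,\iB}^S$ the case $S=S_\iF$ is broken by the same vertex, while for $S\ne S_\iF$ any coordinate $\iv^*$ with $S(\iv^*)\ne S_\iF(\iv^*)$ places $q_{\iF,\iv^*,\iB}^{S,S(\iv^*)}\in X_q$ on the cycle. The decisive cycles are $\mathcal{C}_{\iC,\ib}$, and this is the unique place where satisfiability of $\phi$ is invoked: the cycle is broken iff $z_{\iF,m\ib+\iC}^{S_\iF}\in X_{yz}$ lies on it for some $\iF$, which is equivalent to $S_\iF\in\mathcal{S}_{\iF,\iC}$, i.e.\ to $\phi|_{F_\iF}$ satisfying some literal of $C_\iC$; such an $\iF$ exists because $\phi\models C_\iC$.

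The main obstacle is establishing acyclicity of $G\setminus X$ globally, not merely cycle by cycle, since $r$ is a hub adjacent to every surviving $h$-vertex and could a priori close cycles through two distinct $v$-chunks; this is precisely what the $X_h$-rule was designed to prevent. Fixing a single path $\mathcal{P}_{\iF,\iv}$, removing the $v_{\iF,\iv,\iB}^{S_\iF(\iv)}$'s decomposes the surviving path vertices into two-vertex chunks if $S_\iF(\iv)\in\{1,3\}$, and into cross-block chunks $\{v_\iB^3,v_{\iB+1}^1\}$ together with two singleton boundary chunks $\{v_0^1\},\{v_{a-1}^3\}$ if $S_\iF(\iv)=2$. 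By the rule, the surviving $h$-vertex of every $h$-pair whose path edge has at least one removed endpoint is adjacent to that removed endpoint and is therefore a leaf of $r$ in $G\setminus X$, providing no link between any chunk and $r$; while the surviving $h$-vertex of the unique internal $h$-pair of each two-vertex chunk (both endpoints surviving) is adjacent to one of the chunk's vertices and yields the unique simple route from that chunk to $r$. Each two-vertex chunk together with its internal path edge, its attached $h$-vertex, and all the guard-leaves arising from the broken $v$-to-$p$ and $v$-to-$q$ triangle edges is therefore a subtree hanging off $r$; the singleton chunks in the $S_\iF(\iv)=2$ case form tree components disjoint from $r$; and the remaining isolated survivors $p_{\iF,\iv,\iB}^{S_\iF(\iv)}$, $q_{\iF,\iv,\iB}^{S,S_\iF(\iv)}$, $y_{\iF,\iB}^{S_\iF}$, together with the paths obtained by breaking $\mathcal{Q}$, $\mathcal{X}$, and $\mathcal{C}$ and their attached guards, form their own small trees. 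Thus $G\setminus X$ is a disjoint union of trees, which together with $|X|=\goalsize$ completes the proof.
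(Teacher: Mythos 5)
Your construction of $X$ is identical to the paper's (the same six parts $X_v,X_p,X_q,X_x,X_y\cup X_z,X_h$, including the same $X_h$-rule), the size bookkeeping matches, and your acyclicity argument hits the same points: guards become pendants, the $X_h$-rule confines each surviving two-vertex $v$-chunk to exactly one route into $r$, the broken $\mathcal{Q}$ and $\mathcal{X}$ cycles plus the isolated $p$'s, $q$'s, $y^{S_\iF}$'s form their own tree pieces, and satisfiability of $\phi$ is used exactly once to break every $\mathcal{C}_{\iC,\ib}$. The paper packages the global acyclicity argument slightly more formally, by cutting $G_0=G$-minus-guards along the separators $X_p\cup X_q\cup X_v$ and $X_v\cup X_q\cup X_x\cup X_y$ and proving that each of the three induced pieces $G_0[\{r\}\cup\mathcal{V}\cup\mathcal{H}]$, $G_0[\mathcal{Q}\cup\mathcal{X}]$, $G_0[\mathcal{Z}]$ is a forest, whereas you list the resulting tree components directly; this is a presentational rather than a substantive difference, so the two proofs are essentially the same.
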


\begin{proof}
Given a satisfying assignment $\phi$ of the formula $\Phi$ we construct a feedback vertex set $X \subseteq V$ as follows.

For each group of variables $F_\iF$ we consider the sequence $S_\iF \in \{1,2,3\}^\blocksize$
which corresponds to the restriction of the assignment $\phi$ to the variables of $F_\iF$.
Let
\begin{align*}
X_v &= \{v_{\iF,\iv,\iB}^{S_\iF(\iv)}: 1 \leq \iF \leq n', 1 \leq \iv \leq \blocksize, 0 \leq \iB < a\} \\
X_p &= \{p_{\iF,\iv,\iB}^\ivup: 1 \leq \iF \leq n', 1 \leq \iv \leq \blocksize, 0 \leq \iB < a, 1 \leq \ivup \leq 3, \ivup \neq S_\iF(\iv)\} \\
X_q &= \{q_{\iF,\iv,\iB}^{S,\ivup}: 1 \leq \iF \leq n', 1 \leq \iv \leq \blocksize, 0 \leq \iB < a, S \in \{1,2,3\}^\blocksize, 1 \leq \ivup \leq 3, \ivup \neq S_\iF(\iv)\} \\
X_x &= \{x_{\iF,\iB}^{S_\iF}: 1 \leq \iF \leq n', 0 \leq \iB < a\} \\
X_y &= \{y_{\iF,\iB}^S: 1 \leq \iF \leq n', 0 \leq \iB < a, S \in \{1,2,3\}^\blocksize, S \neq S_\iF\} \\
X_z &= \{z_{\iF,\iB}^{S_\iF}: 1 \leq \iF \leq n', 0 \leq \iB < a\}
\end{align*}
Moreover, we define the set $X_h \subseteq \mathcal{H}$ to contain, for each $1 \leq \iF \leq n'$,
$1 \leq \iv \leq \blocksize$, $0 \leq \iB < a$, $1 \leq \ivup \leq 3$ and
$(\iB,\ivup) \neq (a-1,3)$, the vertex $h_{\iF,\iv,\iB}^{\ivup,1}$ if $v_{\iF,\iv,\iB}^\ivup \notin X_v$ and the vertex $h_{\iF,\iv,\iB}^{\ivup,2}$ otherwise.
Note that $|X_v| = \goalsize_v$, $|X_p| = \goalsize_p$, $|X_q| = \goalsize_q$, $|X_x| = \goalsize_x$,
$|X_y| + |X_z| = \goalsize_y$ and $|X_h| = \goalsize_h$. Thus a set
$X = X_v \cup X_p \cup X_q \cup X_x \cup X_y \cup X_z \cup X_h$ is of size $\goalsize$.

To finish the proof we need to verify that $X$ is a feedback vertex set of $G$.
First, we ensure that $X$ includes at least one endpoint of every triangle edge in $G$.
\begin{enumerate}
\item $X$ includes one vertex from each pair $h_{\iF,\iv,\iB}^{\ivup,1}$ and $h_{\iF,\iv,\iB}^{\ivup,2}$.
\item $X$ includes two out of three vertices in each triple $p_{\iF,\iv,\iB}^\ivup$, $1 \leq \ivup \leq 3$, and in each triple $q_{\iF,\iv,\iB}^{S,\ivup}$, $1 \leq \ivup \leq 3$.
\item If $p_{\iF,\iv,\iB}^\ivup \notin X$ or $q_{\iF,\iv,\iB}^{S,\ivup} \notin X$,
  then $v_{\iF,\iv,\iB}^\ivup \in X$.
\item If $y_{\iF,\iB}^S \notin X$ then both $x_{\iF,\iB}^S$ and $z_{\iF,\iB}^S$ are in $X$.
\end{enumerate}
Let $G_0$ be the graph $G$ with deleted guard vertices.
We have just shown that each guard vertex in $G \setminus X$ is of degree zero or one.
Thus if $G_0 \setminus X$ is a forest, then $G \setminus X$ is a forest too.

Now note that $X_p \cup X_q \cup X_v$ separates $(\{r\} \cup \mathcal{V} \cup \mathcal{H}) \setminus X$ from
the rest of the graph $G_0$. To see this recall that
if $p_{\iF,\iv,\iB}^\ivup \notin X$ or $q_{\iF,\iv,\iB}^{S,\ivup} \notin X$,
then $v_{\iF,\iv,\iB}^\ivup \in X$.
Let $G_1 = G_0[\{r\} \cup \mathcal{V} \cup \mathcal{H}]$.
We now show that $G_1 \setminus X$ is a forest.
Note that in $G_1 \setminus X_h$ each cycle contains at least one vertex from $\mathcal{V}$.
Recall that the set $X_v$ contains every third vertex on each path $\mathcal{V}_{\iF,\iv}$. Let
$v_{\iF,\iv,\iB}^\ivup$ and $v_{\iF,\iv,\iB'}^{\ivup'}$ be any two consecutive vertices on
$\mathcal{V}_{\iF,\iv}$ that are not in $X_v$. It is straightforward to check that that in the set of at most $4$ neighbours of these two
vertices in $\mathcal{H}$, at most one is not in $X_h$. Thus,
the vertices in $\mathcal{V} \setminus X_v$ do not take part in any cycle in $G_1 \setminus X$,
    and $G_1 \setminus X$ is a forest.

Now note that $X_v \cup X_q \cup X_x \cup X_y$
separates $(\mathcal{Q} \cup \mathcal{X}) \setminus X$ from the rest of the graph $G_0$.
To see this recall that 
if $p_{\iF,\iv,\iB}^\ivup \notin X$ or $q_{\iF,\iv,\iB}^{S,\ivup} \notin X$,
then $v_{\iF,\iv,\iB} \in X$ and if $x_{\iF,\iB}^S \notin X$ then $y_{\iF,\iB}^S \in X$.
Let $G_2 = G_0[\mathcal{Q} \cup \mathcal{X}]$.
We now show that $G_2 \setminus X$ is a forest.
The vertices in $\mathcal{Q} \cup \mathcal{X}$ from different group gadgets
are not adjacent, thus we focus on a single group gadget $\groupg_{\iF,\iB}$.
Let $S \in \{1,2,3\}^\blocksize$.
Note that in $G_2\setminus X$ the vertices from $\mathcal{Q}_{\iF,\iB}^S\setminus X$ are of
degree at most two, except for the vertex $x_{\iF,\iB}^S$. If $S \neq S_\iF$,
then for any $1 \leq \iv \leq \blocksize$ such that $S(\iv) \neq S_\iF(\iv)$
we have $q_{\iF,\iv,\iB}^{S,S(\iv)} \in X_q$ and the cycle $\mathcal{Q}_{\iF,\iB}^S$
is intersected by $X$. On the other hand, if $S = S_\iF$, then $x_{\iF,\iB}^S \in X_x$.
Thus, the vertices $\mathcal{Q} \setminus X$ are not contained in any cycle in $G_2 \setminus X$.
Moreover, on each cycle $\mathcal{X}_{\iF,\iB}$ we have $x_{\iF,\iB}^{S_\iF} \in X_x$
and we infer that $G_2 \setminus X$ is a forest.

As for $\mathcal{Y}$, note that if $y_{\iF,\iB}^S \notin X$ then $S = S_\iF$
and $x_{\iF,\iB}^S, z_{\iF,\iB}^S \in X$, and $y_{\iF,\iB}^S$ is isolated in $G_0 \setminus X$.

We are left with $\mathcal{Z}$. The graph $G_3 = G_0[\mathcal{Z}]$ consists of $a$ cycles
$\mathcal{C}_{\iC,\ib}$, $0 \leq \iC < m$, $0 \leq \ib < \numcopiessmall$.
Consider a clause $C_\iC$ and an index $0 \leq \ib < \numcopiessmall$.
As $\phi$ satisfies $C_\iC$, there exists a block $F_\iF$, such that a variable from this block
satisfies $C_\iC$. Then $z_{\iF,m\iC+\ib}^{S_\iF}$ is both on the cycle $\mathcal{C}_{\iC,\ib}$
and in $X$, and $G_3 \setminus X$ is a forest, too.
\end{proof}

\begin{lemma}\label{lem:fvs-seth:corr2}
If there exists a feedback vertex set $X$ of size at most $\goalsize$ in the graph $G$,
then $\Phi$ has a satisfying assignment.
\end{lemma}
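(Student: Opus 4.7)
The plan is to reverse-engineer the structure of Lemma~\ref{lem:fvs-seth:corr1}. First, I may assume $X$ contains no guard vertex, since any guard $u_{vw}$ in $X$ may be replaced by an endpoint of its triangle edge without increasing $|X|$ (the only triangle containing $u_{vw}$ is still broken). Under this assumption, every triangle of the construction contains a non-guard vertex of $X$, and every triangle edge contributes at least one endpoint to $X$.

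Next I would set up lower bounds on $|X|$ partitioned by the disjoint vertex classes. Letting $\mathcal{Q}$ denote the set of all $q$-vertices, these are $|X \cap \mathcal{H}| \geq \goalsize_h$ (from the triangle edges inside the $h$-pairs), $|X \cap \mathcal{P}| + |X \cap \mathcal{V}| \geq \goalsize_p + \goalsize_v$ (from the pairwise triangle edges within each $p$-triple, which give $|X \cap \{p_{\iF,\iv,\iB}^\alpha\}_\alpha| \geq 2$, combined with the triangle edges $p_{\iF,\iv,\iB}^\alpha v_{\iF,\iv,\iB}^\alpha$, which give $|X \cap \{p_{\iF,\iv,\iB}^\alpha\}_\alpha| + |X \cap \{v_{\iF,\iv,\iB}^\alpha\}_\alpha| \geq 3$ per $(\iF, \iv, \iB)$), $|X \cap \mathcal{Q}| \geq \goalsize_q$ (from the pairwise triangle edges within each $q$-triple), $|X \cap \mathcal{X}| \geq \goalsize_x$ (from the cycles $\mathcal{X}_{\iF, \iB}$), and $|X \cap (\mathcal{Y} \cup \mathcal{Z})| \geq \goalsize_y$ (from the triangle edges $y_{\iF, \iB}^S z_{\iF, \iB}^S$). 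These bounds involve disjoint vertex sets and sum exactly to $\goalsize$, so $|X| \leq \goalsize$ forces each inequality to be tight and also forces $r \notin X$.

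Tightness extracts the remaining structure. Exactly two of three $p$-vertices lie in $X$ per triple, so defining $S_{\iF, \iB}(\iv) \in \{1,2,3\}$ to be the unique $\alpha$ with $p_{\iF, \iv, \iB}^\alpha \notin X$, the triangle edge $p_{\iF, \iv, \iB}^{S_{\iF, \iB}(\iv)} v_{\iF, \iv, \iB}^{S_{\iF, \iB}(\iv)}$ forces $v_{\iF, \iv, \iB}^{S_{\iF, \iB}(\iv)} \in X$, and tightness of $|X \cap \mathcal{V}| = \goalsize_v$ makes this the unique vertex of $X \cap \mathcal{V}_{\iF, \iv, \iB}$. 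Analogous reasoning on the $q$-triples yields $T^S_{\iF, \iB}$ with $v_{\iF, \iv, \iB}^{T^S_{\iF, \iB}(\iv)} \in X$, which together with the previous uniqueness forces $T^S_{\iF, \iB} = S_{\iF, \iB}$ for every $S$. The cycle $\mathcal{Q}_{\iF, \iB}^{S_{\iF, \iB}}$ therefore has no $q$-vertex in $X$, so $x_{\iF, \iB}^{S_{\iF, \iB}} \in X$, and tightness of $|X \cap \mathcal{X}|$ makes this the only $x$-vertex of $X$ inside $\groupg_{\iF, \iB}$. Finally, the triangle edges around $y$ and $z$ combined with tightness force $z_{\iF, \iB}^S \in X \Rightarrow S = S_{\iF, \iB}$; hence the cycle $\mathcal{C}_{\iC, \ib}$ meeting $X$ supplies some $\iF$ with $S_{\iF, m\ib + \iC} \in \mathcal{S}_{\iF, \iC}$, i.e., the partial assignment encoded by $S_{\iF, m\ib + \iC}$ on $F_{\iF}$ satisfies $C_\iC$.

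The hardest step will be propagation: showing that for each $(\iF, \iv)$ the sequence $(S_{\iF, \iB}(\iv))_{\iB = 0}^{a-1}$ changes value only $O(1)$ times. Since $r \notin X$ and exactly one vertex of each $h$-pair lies in $X$, every vertex of $\mathcal{H} \setminus X$ is a neighbor of $r$ in $G \setminus X$, so any two such vertices that are linked through the residual $v$-structure create a cycle with $r$. The forest requirement on $G \setminus X$ must rule out all these cycles, which forces delicate consistency conditions on which side of each $h$-pair is chosen and couples these choices across consecutive blocks. A careful case analysis of the short cycles through $r$ arising at a would-be transition $S_{\iF, \iB}(\iv) \neq S_{\iF, \iB+1}(\iv)$ shows that such a transition can occur only in one of boundedly many configurations along $\mathcal{P}_{\iF, \iv}$. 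Summing over all $\blocksize n'$ paths yields $O(\blocksize n')$ transitions in total; since $a = m(2\blocksize n' + 1)$, pigeonhole supplies an index $0 \leq \ib < 2\blocksize n' + 1$ such that $S_{\iF, m\ib + \iC} = S_{\iF, m\ib}$ for every $\iF$ and every $0 \leq \iC < m$. Taking $\phi$ to be the assignment jointly encoded by $(S_{\iF, m\ib})_{\iF = 1}^{n'}$, the clause cycle analysis from the previous paragraph then certifies that $\phi$ satisfies every $C_\iC$, completing the proof.
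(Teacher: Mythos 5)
Your overall strategy matches the paper's: no--guard normalization, component-wise lower bounds summing to $\goalsize$, tightness extraction, a per-path sequence encoding which $v$-vertex is chosen, bounding its transitions via cycles through $r$, pigeonhole on a position block, and reading a satisfying assignment off the $q$-, $x$-, $y$-, $z$-structure and the clause cycles. Two places need more than you have written.

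First, the assertion ``exactly two of three $p$-vertices lie in $X$ per triple'' is not a consequence of the tightness you established. The only inequality you have involving $\mathcal{P}$ is the \emph{joint} one, $|X \cap \mathcal{P}| + |X \cap \mathcal{V}| \geq \goalsize_p + \goalsize_v$, and its tightness is compatible with some $p$-triple lying entirely inside $X$ with no corresponding $v$-vertex. When that happens, your $S_{\iF,\iB}(\iv)$ is not well-defined, and your subsequent appeal to ``tightness of $|X \cap \mathcal{V}| = \goalsize_v$'' has no basis, since that individual equality was never derived. The paper fixes this by additionally re-choosing $X$ to minimize $|X \cap \mathcal{P}|$ among all guard-free feedback vertex sets of size at most $\goalsize$, and then exchanging a third $p$-vertex for the corresponding $v$-vertex whenever a whole triple lies in $X$, which strictly decreases $|X \cap \mathcal{P}|$. (One can also close this gap without the minimality assumption: from exactly two $q$-vertices per $q$-triple, the $q$--$v$ triangle edges force at least one $v$-vertex per $(\iF,\iv,\iB)$, so $|X\cap\mathcal V| \geq \goalsize_v$, and then the joint tightness splits. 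But you invoke the $q$-structure only \emph{after} having already assumed the $p$-triple conclusion, so as written the dependence is circular.)

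Second, the propagation step needs the specific combinatorial argument rather than an appeal to a ``careful case analysis.'' The paper proves: for any three consecutive vertices $v_A, v_B, v_C$ on $\bigcup_{\iB}\mathcal{V}_{\iF,\iv,\iB}$, at least one lies in $X$ --- otherwise the two $h$-vertices from the relevant pairs surviving in $G \setminus X$, together with $v_A, v_B, v_C$ and the root $r$, span six vertices and six edges in the forest $G \setminus X$, a contradiction. Combined with the fact that exactly one $v$-vertex per triple is in $X$, this makes the map $\iB \mapsto s_{\iF,\iv}(\iB)$ \emph{non-increasing} on $\{1,2,3\}$ and hence changing value at most twice per path. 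That bound of $2\blocksize n'$ total transitions is precisely what the parameter $a = m(2\blocksize n'+1)$ is calibrated against; without it the pigeonhole step does not go through. Your sketch correctly identifies the mechanism (cycles through $r$) but does not deliver the monotonicity that actually yields the bound.
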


\begin{proof}
As it is discussed at the begining of the construction process, we may assume that
no guard vertex is in $X$.
Let $X_h = X \cap \mathcal{H}$, similarly we define $X_v$, $X_p$, $X_q$, $X_x$, $X_y$ and $X_z$.
Furthermore, we assume that, among all feedback vertex sets of size at most $\goalsize$ in $G$
that do not contain any guard vertex, the set $X$ is such a one that $|X_p|$
is the smallest possible.

We now lower bound the sizes of the sets $X_h$, $X_v$, $X_p$, $X_q$, $X_x$, $X_y$ and $X_z$.
\begin{enumerate}
\item For each $1 \leq \iF \leq n'$, $0 \leq \iB < a$, $S \in \{1,2,3\}^\blocksize$,
  at least one of the vertices $y_{\iF,\iB}^S$ and $z_{\iF,\iB}^S$ is in $X$
  (as they are connected by a triangle edge), thus $|X_y| + |X_z| \geq n'a3^\blocksize = \goalsize_y$.
\item For each $1 \leq \iF \leq n'$ and $0 \leq \iB < a$ at least
one vertex of the set $X$ needs to hit the cycle $\mathcal{X}_{\iF,\iB}$, to $|X_x| \geq n'a = \goalsize_x$.
\item For each $1 \leq \iF \leq n'$, $0 \leq \iB < a$, $1 \leq \iv \leq \blocksize$
and $S \in \{1,2,3\}^\blocksize$ at least two vertices out of the triple
$\{q_{\iF,\iv,\iB}^{S,\ivup}:1 \leq \ivup \leq 3\}$ need to be included in $X$,
  thus $|X_q| \geq \blocksize n' \cdot 2a3^\blocksize = \goalsize_q$.
\item For each $1 \leq \iF \leq n'$, $0 \leq \iB < a$, $1 \leq \iv \leq \blocksize$
at least two vertices out of the triple $\{p_{\iF,\iv,\iB}^\ivup:1 \leq \ivup \leq 3\}$
need to be included in $X$. Moreover, if $p_{\iF,\iv,\iB}^\ivup \notin X$ for some
$1 \leq \ivup \leq 3$, then $v_{\iF,\iv,\iB}^\ivup \in X$, as otherwise the triangle
edge $v_{\iF,\iv,\iB}^\ivup p_{\iF,\iv,\iB}^\ivup$ is not covered by $X$.
Thus $|X_p| + |X_v| \geq \blocksize n' \cdot 3a = \goalsize_p + \goalsize_v$.
\item For each $1 \leq \iF \leq n'$, $0 \leq \iB < a$, $1 \leq \iv \leq \blocksize$,
  $1 \leq \ivup \leq 3$, such that $(\iB,\ivup) \neq (a-1,3)$,
  at least one endpoint of the triangle edge $h_{\iF,\iv,\iB}^{\ivup,1}h_{\iF,\iv,\iB}^{\ivup,2}$
  needs to be included in $X$. Thus $|X_h| \geq \blocksize n' (3a-1) = \goalsize_h$.
\end{enumerate}
As $|X| \leq \goalsize = \goalsize_h + \goalsize_v + \goalsize_p + \goalsize_q + \goalsize_x + \goalsize_y$, we infer that in all aforementioned inequalities we have equalities, and $r \notin X$.

Recall that we have assumed that $|X_p|$ is the smallest possible. Let
$1 \leq \iF \leq n'$, $0 \leq \iB < a$, $1 \leq \iv \leq \blocksize$ and focus on the triple
$\{p_{\iF,\iv,\iB}^\ivup:1 \leq \ivup \leq 3\}$. If it is wholy contained in $X$,
then  $X \setminus \{p_{\iF,\iv,\iB}^1\} \cup \{v_{\iF,\iv,\iB}^1\}$ is also a feedback vertex
set of $G$, of not greater size, not containing any guard vertex, and with smaller size of $|X_p|$.
Thus $X$ contains exactly two vertices out of each such triple,
$|X_p| = \goalsize_p$, $|X_v| = \goalsize_v$ and $X_v$ contains exactly one vertex
out of each triple $\{v_{\iF,\iv,\iB}^\ivup:1 \leq \ivup \leq 3\}$.

We strengthen the above observation by showing the following claim: for any $1 \leq \iF \leq n'$,
$1 \leq \iv \leq \blocksize$ and any three consecutive vertices $v_A$, $v_B$, $v_C$
on the path $\bigcup_{\iB=0}^{a-1} \mathcal{V}_{\iF,\iv,\iB}$, at least one of these vertices is in $X_v$.
By contradiction, assume that $v_A, v_B, v_C \notin X$. Recall that the graph $G$
contains vertices $h_A^1$, $h_A^2$, $h_B^1$, $h_B^2$, edges $rh_A^1$, $rh_A^2$, $rh_B^1$,
$rh_B^2$, $h_A^1v_A$, $h_A^2v_B$, $h_B^1v_B$, $h_B^2v_C$ and triangle edges
$h_A^1h_A^2$, $h_B^1h_B^2$. 
Note that $|\{h_A^1,h_A^2,h_B^1,h_B^2\} \setminus X| = 2$,
as $|X_h| = \goalsize_h$ and $X_h$ contains exactly one vertex from each pair connected by
a triangle edge. These two vertices in $\mathcal{H} \setminus X$, together with $v_A$, $v_B$,
$v_C$ and the root $r$ induce a subgraph of $G\setminus X$ with $6$ vertices and $6$ edges, a contradiction.

For $1 \leq \iF \leq n'$ and $1 \leq \iv \leq \blocksize$
let us define a sequence $s_{\iF,\iv}(\iB)$, $0 \leq \iB < a$,
such that $v_{\iF,\iv,\iB}^\ivup \in X_v$ iff $\ivup = s_{\iF,\iv}(\iB)$.
By the observation made in the previous paragraph we infer that the sequence $s_{\iF,\iv}$
cannot increase, thus its value can change at most twice.
As $a = \numcopies$, we infer that there exists an index $0 \leq \ib < \numcopiessmall$
such that for all $1 \leq \iF \leq n'$, $1 \leq \iv \leq \blocksize$
we have $s_{\iF,\iv}(m\ib)=s_{\iF,\iv}(m\ib+\iC)$ for all $0 \leq \iC < m$.
For each block $F_\iF$, let $S_\iF = (s_{\iF,\iv}(m\ib))_{\iv=1}^\blocksize \in \{1,2,3\}^\blocksize$
and let $\phi$ be an assignment that corresponds to the sequence $S_\iF$ for each block $F_\iF$.
We claim that $\phi$ satisfies $\Phi$.

Take any clause $C_\iC$, $0 \leq \iC < m$.
Take any block $F_\iF$. As $|X_q| = \goalsize_q$, the set $X_q$
includes exactly two vertices out of each triple
$\{q_{\iF,\iv,m\ib+\iC}^{S,\ivup}:1 \leq \ivup \leq 3\}$
 for $1 \leq \iv \leq \blocksize$, $S \in \{1,2,3\}^\blocksize$.
As $q_{\iF,\iv,m\ib+\iC}^{S,\ivup}$ is connected to $v_{\iF,\iv,m\ib+\iC}^\ivup$
by a triangle edge, we infer that 
$q_{\iF,\iv,m\ib+\iC}^{S,\ivup} \notin X$ iff $v_{\iF,\iv,m\ib+\iC}^\ivup \in X$,
which is equivalent to $S_\iF(\iv) = \ivup$.
Thus $x_{\iF,m\ib+\iC}^{S_\iF} \in X$, as otherwise the cycle $\mathcal{Q}_{\iF,m\ib+\iC}^{S_\iF}$
is disjoint with $X$. As $|X_x| = \goalsize_x$, the set $X_x$ contains exactly one vertex
out of each cycle $\mathcal{X}_{\iF,\iB}$, and we infer that $x_{\iF,m\ib+\iC}^S \notin X$
for $S \neq S_\iF$.
Recall that $x_{\iF,m\ib+\iC}^S$ and $y_{\iF,m\ib+\iC}^S$ are connected by a triangle edge,
thus $y_{\iF,m\ib+\iC}^S \in X$ for $S \neq S_\iF$.
As $|X_y|+|X_z| = \goalsize_y$, we know that the set $X$ contains exactly one endpoint
out of each triangle edge $y_{\iF,m\ib+\iC}^S z_{\iF,m\ib+\iC}^S$, and we infer
that if $z_{\iF,m\ib+\iC}^S \in X$ then $S = S_\iF$.
Finally, if $X$ is a feedback vertex set in $G$, $X$ hits the cycle $\mathcal{C}_{\iC,\ib}$,
thus there exists a block $F_\iF$ and a sequence $S \in \{1,2,3,\}^\blocksize$
such that $z_{\iF,m\ib+\iC}^S \in X$ and the assignment of the variables of the block $F_\iF$
that corresponds to $S$ satisfies $C_\iC$. However, we have proven that $z_{\iF,m\ib+\iC}^S \in X$
implies $S=S_\iF$, thus $\phi$ satisfies $C_\iC$ and the proof is finished.
\end{proof}

\paragraph{Pathwidth bound}

\begin{lemma}\label{lem:fvs-seth:pathwidth}
Pathwidth of the graph $G$ is at most $\blocksize n'+O(\blocksize 3^\blocksize)$.
Moreover a path decomposition of such width can found in polynomial time.
\end{lemma}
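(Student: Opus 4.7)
The plan is to give a mixed search strategy using $\blocksize n' + O(\blocksize 3^\blocksize)$ searchers; the bound on $\textrm{pw}(G)$ then follows from $\textrm{pw}(G) \le \textrm{ms}(G)$, and the resulting cleaning procedure can be converted to a path decomposition of the same width in polynomial time exactly as in the previous subsections.

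First, place a permanent searcher on the root $r$; observe that all edges touching the $\mathcal{H}$--vertices go through $r$, so once $r$ is guarded we never re-contaminate through the root. In addition, maintain throughout the whole search a persistent set of $\blocksize n'$ ``frontier'' searchers: one for each pair $(\iF,\iv)$ with $1 \le \iF \le n'$, $1 \le \iv \le \blocksize$, initially placed on $v_{\iF,\iv,0}^1$. Proceed in $a$ rounds indexed by $\iB = 0, 1, \ldots, a-1$. At the beginning of round $\iB$ the frontier searcher for $(\iF,\iv)$ sits on $v_{\iF,\iv,\iB}^1$, and the goal of the round is to clean all edges inside the column of group gadgets $\{\groupg_{\iF,\iB} : 1 \le \iF \le n'\}$, together with the associated clause cycle, and leave the frontier searchers advanced to $v_{\iF,\iv,\iB+1}^1$.

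For round $\iB$, let $\iC,\ib$ be the unique integers with $\iB = m\ib + \iC$. The clause cycle $\mathcal{C}_{\iC,\ib}$ is the only non-local object in the round: it runs through the $z$--vertices of several gadgets $\groupg_{\iF,\iB}$ in increasing order of $\iF$. To handle it, place one extra ``anchor'' searcher on an arbitrary starting vertex of $\mathcal{C}_{\iC,\ib}$ (this breaks the cycle so that from this point the uncleaned remainder is a path) and one extra ``trailing'' searcher that will be slid along the cycle as we advance. Then for $\iF = 1, 2, \ldots, n'$ in turn, use $O(\blocksize 3^\blocksize)$ fresh searchers to cover simultaneously all vertices of $\groupg_{\iF,\iB}$, namely $\{v_{\iF,\iv,\iB}^\ivup, v_{\iF,\iv,\iB+1}^1, h_{\iF,\iv,\iB}^{\ivup,\gamma}, p_{\iF,\iv,\iB}^\ivup\}$ and all guard vertices among them ($O(\blocksize)$ in total), plus the $3 \blocksize \cdot 3^\blocksize$ vertices $q_{\iF,\iv,\iB}^{S,\ivup}$ and the $3 \cdot 3^\blocksize$ vertices $x_{\iF,\iB}^S, y_{\iF,\iB}^S, z_{\iF,\iB}^S$ and their guards. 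Since this entire local set has size $O(\blocksize 3^\blocksize)$, placing one searcher on each clears every edge inside $\groupg_{\iF,\iB}$ as well as every edge from its $z$--vertices to the two neighbouring arcs of $\mathcal{C}_{\iC,\ib}$. Slide the trailing cycle-searcher forward along the cleaned portion of $\mathcal{C}_{\iC,\ib}$, advance the frontier searcher of each $(\iF,\iv)$ from $v_{\iF,\iv,\iB}^1$ to $v_{\iF,\iv,\iB+1}^1$ (both vertices are simultaneously covered at this point), and remove all the $O(\blocksize 3^\blocksize)$ temporary searchers before moving on to $\iF+1$. After all $n'$ gadgets in the column have been processed the trailing searcher meets the anchor and both can be removed.

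The total number of searchers in use at any moment is $1$ (for $r$) $+$ $\blocksize n'$ (frontier) $+ O(1)$ (anchor and trailing) $+ O(\blocksize 3^\blocksize)$ (local clean-up of one gadget), so $\textrm{ms}(G) \le \blocksize n' + O(\blocksize 3^\blocksize)$. The main obstacle was precisely the clause cycle $\mathcal{C}_{\iC,\ib}$, which spans many group gadgets and therefore could a priori force $\Omega(n' 3^\blocksize)$ extra searchers to be kept alive; the anchor-plus-trailing trick above, combined with the fact that the cycle visits gadgets in order of increasing $\iF$ and that within each gadget its vertices are consecutive on the cycle, keeps this overhead down to a constant on top of the per-gadget cost.
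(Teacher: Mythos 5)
Your proof takes essentially the same approach as the paper's: one permanent searcher on the root $r$, a persistent set of $\blocksize n'$ frontier searchers (the paper's ``entry vertices''), processing the group-gadget matrix column by column in $a$ rounds, cleaning each gadget with $O(\blocksize 3^\blocksize)$ temporary searchers, and handling the clause cycle $\mathcal{C}_{\iC,\ib}$ with $O(1)$ extra searchers by exploiting that it visits the gadgets in increasing order of $\iF$. One small caution on the phrasing: the anchor on $\mathcal{C}_{\iC,\ib}$ cannot literally be placed on an \emph{arbitrary} vertex --- it must sit at the break between the $\iF = n'$ and $\iF = 1$ arcs of the cycle (the paper explicitly chooses the ``last vertex'', the vertices being sorted by block number), so that the trailing searcher can advance in lock-step with the $\iF = 1, \ldots, n'$ processing order; otherwise, when the temporary searchers are removed from a gadget whose $z$-vertices have a still-dirty neighbour on both cycle sides, the gadget's interior would be re-contaminated through the unguarded $z$-vertices. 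Since you immediately say ``this breaks the cycle so that from this point the uncleaned remainder is a path'' and then traverse in increasing $\iF$, the intended placement is clear, but it is worth stating precisely.
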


\begin{proof}
We give a mixed search strategy to clean the graph with $\blocksize n'+O(\blocksize 3^\blocksize)$
searchers.
First we put a searcher in the root $r$.
This searcher remains there till the end of the cleaning process.

For a gadget $\groupg_{\iF,\iB}$ we call the vertices
$v_{\iF,\iv,\iB}^1$ for $1 \le \iv \le \blocksize$,
  as {\em entry vertices}.
We search the graph in $a = \numcopies$ rounds.
In the beginning of round $\iB$ ($0 \leq \iB < a$)
there are searchers on the entry vertices of the gadget $\groupg_{\iF,\iB}$ for every
$1 \leq \iF \le n'$.
Let $0 \le \iC < m$ and $0 \le \ib < \numcopiessmall$ be integers such that $\iB = \iC + m\ib$.
We place a searcher on the last vertex of the cycle $\mathcal{C}_{\iC,\ib}$
(recall that the vertices on $\mathcal{C}_{\iC,\ib}$ are sorted by the block number).
Then, for each $1 \le \iF \le n'$ in turn we:
\begin{itemize}
  \item put $O(\blocksize 3^\blocksize)$
    searchers on all vertices of the group gadget $\groupg_{\iF,\iB}$,
  \item put $\blocksize$ searchers on entry vertices of the group gadget $\groupg_{\iF,\iB+1}$
  (except for the last round),
  \item put a searcher in the first vertex after the vertices of $\groupg_{\iF,\iB} \cap \mathcal{Z}$ on
  the cycle $\mathcal{C}_{\iC,\ib}$,
  \item remove searchers from all vertices of the group gadget $\groupg_{\iF,\iB}$.
\end{itemize}
The last step of the round is removing the remaining searcher on the cycle $\mathcal{C}_{\iC,\ib}$.
After the last round the whole graph $G$ is cleaned.
Since we reuse $O(\blocksize 3^\blocksize)$ searchers for cleaning group gadgets,
$\blocksize n'+O(\blocksize 3^\blocksize)$ searchers suffice to clean the graph.

Using the above graph cleaning process a path decomposition of width
$\blocksize n'+O(\blocksize 3^\blocksize)$ can be constructed in polynomial time.
\end{proof}

\begin{proof}[Proof of Theorem \ref{thm:fvs-seth}]
Suppose \fvs can be solved in $(3-\eps)^p |V|^{O(1)}$ time
provided that we are given a path decomposition of $G$ of width $p$.
Let $\lambda = \log_3(3-\eps) < 1$.
We choose $\blocksize$ large enough such that
$\frac{\log 3^\blocksize}{\lfloor \log 3^\blocksize \rfloor} < \frac{1}{\lambda}$.
Given an instance of SAT we construct an instance of \fvs
using the above construction and the chosen value of $\blocksize$.
Next we solve \fvs using the $3^{\lambda p} |V|^{O(1)}$ time algorithm.
Lemmata~\ref{lem:fvs-seth:corr1}, \ref{lem:fvs-seth:corr2} ensure correctness,
whereas Lemma~\ref{lem:fvs-seth:pathwidth}
implies that running time of our algorithm is $3^{\lambda \blocksize n'} |V|^{O(1)}$,
however we have 
$$3^{\lambda \blocksize n'}=2^{\lambda \blocksize n'\log 3}=2^{\lambda n'\log 3^\blocksize}\leq 2^C \cdot2^{\lambda n \log 3^\blocksize / \lfloor \log 3^\blocksize \rfloor } = 2^C \cdot 2^{\lambda' n}$$ 
for some $\lambda' < 1$ and $C=\lambda \log 3^\blocksize$. This concludes the proof.
\end{proof}

\end{document}